\newif\ifshort
\newtheorem{counterexample}[theorem]{Counter example}%
\theoremstyle{acmplain}}%
\font\bboldfontten=bbold12 at 10pt
\font\bboldfonteight=bbold12 at 8pt
\font\bboldfontsix=bbold12 at 6pt
\renewcommand{\mathbb}[1]{\mathchoice{\mbox{\bboldfontten#1}}{\mbox{\bboldfontten#1}}{\mbox{\scriptsize\bboldfonteight#1}}{\mbox{\tiny\bboldfontsix#1}}}
\newcommand{\ltuple}[1]{\langle\nobreak#1,\allowbreak}
\newcommand{\mtuple}[1]{\:#1\nobreak,\allowbreak}
\newcommand{\rtuple}[1]{\:#1\nobreak\rangle}
\newcommand{\pair}[2]{\ltuple{#1}\allowbreak\rtuple{#2}}
\newcommand{\triple}[3]{\ltuple{#1}\mtuple{#2}\rtuple{#3}}
\newcommand{\quadruple}[4]{\ltuple{#1}\mtuple{#2}\mtuple{#3}\rtuple{#4}}
\newcommand{\quintuple}[5]{\ltuple{#1}\mtuple{#2}\mtuple{#3}\mtuple{#4}\rtuple{#5}}
\newcommand{\sextuple}[6]{\ltuple{#1}\mtuple{#2}\mtuple{#3}\mtuple{#4}\mtuple{#5}\rtuple{#6}}
\newcommand{\decatuple}[9]
{\ltuple{#1}\mtuple{#2}\mtuple{#3}\mtuple{#4}\mtuple{#5}\mtuple{#6}\mtuple{#7}\mtuple{#8}\mtuple{#9}\rtuple}
\newcommand{\undecatuple}[9]
{\ltuple{#1}\mtuple{#2}\mtuple{#3}\mtuple{#4}\mtuple{#5}\mtuple{#6}\mtuple{#7}\mtuple{#8}\mtuple{#9}\@undecatuple}
\newcommand{\@undecatuple}[2]{\mtuple{#1}\rtuple{#2}}
\newcommand{\dodecatuple}[9]
{\ltuple{#1}\mtuple{#2}\mtuple{#3}\mtuple{#4}\mtuple{#5}\mtuple{#6}\mtuple{#7}\mtuple{#8}\mtuple{#9}\@dodecatuple}
\newcommand{\@dodecatuple}[3]{\mtuple{#1}\mtuple{#2}\rtuple{#3}}
\newcommand{\tridecatuple}[9]
{\ltuple{#1}\mtuple{#2}\mtuple{#3}\mtuple{#4}\mtuple{#5}\mtuple{#6}\mtuple{#7}\mtuple{#8}\mtuple{#9}\@tridecatuple}
\newcommand{\@tridecatuple}[4]{\mtuple{#1}\mtuple{#2}\mtuple{#3}\rtuple{#4}}
\newcommand{\tetradecatuple}[9]
{\ltuple{#1}\mtuple{#2}\mtuple{#3}\mtuple{#4}\mtuple{#5}\mtuple{#6}\mtuple{#7}\mtuple{#8}\mtuple{#9}\@tetradecatuple}
\newcommand{\@tetradecatuple}[5]{\mtuple{#1}\mtuple{#2}\mtuple{#3}\mtuple{#4}\rtuple{#5}}
\newcommand{\pentadecatuple}[9]
{\ltuple{#1}\mtuple{#2}\mtuple{#3}\mtuple{#4}\mtuple{#5}\mtuple{#6}\mtuple{#7}\mtuple{#8}\mtuple{#9}\@pentadecatuple}
\newcommand{\@pentadecatuple}[6]{\mtuple{#1}\mtuple{#2}\mtuple{#3}\mtuple{#4}\mtuple{#5}\rtuple{#6}}
\newcommand{\hexadecatuple}[9]
{\ltuple{#1}\mtuple{#2}\mtuple{#3}\mtuple{#4}\mtuple{#5}\mtuple{#6}\mtuple{#7}\mtuple{#8}\mtuple{#9}\@hexadecatuple}
\newcommand{\@hexadecatuple}[7]{\mtuple{#1}\mtuple{#2}\mtuple{#3}\mtuple{#4}\mtuple{#5}\mtuple{#6}\rtuple{#7}}
\newcommand{\sqb}[1]{\ensuremath{\def\@paramsqb{#1}\ifx\@paramsqb\@empty\else\llbracket\@paramsqb\rrbracket\fi}}
\newcommand{\osqb}[1]{\ensuremath{\def\@paramsqb{#1}\ifx\@paramsqb\@empty\else\mathopen{\bar{\llbracket}}\@paramsqb\mathclose{\bar{\rrbracket}}\fi}}
\newcommand{\Lfp}[1]{\ensuremath{\textsf{\upshape lfp}^{\scriptscriptstyle\mskip2mu #1}\,}}
\newcommand{\Gfp}[1]{\ensuremath{\textsf{\upshape gfp}^{\scriptscriptstyle\mskip2mu #1}\,}}
\def\@LAMBDAoperator{\text{\boldmath$\lambda$}}%
\def\@LAMBDApoint{%
\mathchoice%
{\,\mbox{\relsize{2}\bf\raisebox{0.3ex}{.}}\,}%
{\,\mbox{\relsize{2}\bf\raisebox{0.3ex}{.}}\,}%
{\,\mbox{\relsize{1}\bf\raisebox{0.3ex}{.}}\,}%
{\,\mbox{\bf\raisebox{0.3ex}{.}}\,}%
}
\def\LAMBDA#1{\@ifnextchar[{{\@@LAMBDA@IN{#1}}}{{\@@LAMBDA{#1}}}}
\def\@@LAMBDA#1{\@LAMBDAoperator{#1}{\@LAMBDApoint}}
\def\@@LAMBDA@IN#1[#2]{\@LAMBDAoperator{#1}\,{\in}\,{#2}{\@LAMBDApoint}}
\def\LAMBDAoperator{{\ensuremath{\@LAMBDAoperator}}}
\DeclareRobustCommand{\cev}[1]{%
  \mathpalette\do@cev{#1}%
}
\newcommand{\do@cev}[2]{%
  \fix@cev{#1}{+}%
  \reflectbox{$\m@th#1\vec{\reflectbox{$\fix@cev{#1}{-}\m@th#1#2\fix@cev{#1}{+}$}}$}%
  \fix@cev{#1}{-}%
}
\newcommand{\fix@cev}[2]{%
  \ifx#1\displaystyle
    \mkern#23mu
  \else
    \ifx#1\textstyle
      \mkern#23mu
    \else
      \ifx#1\scriptstyle
        \mkern#22mu
      \else
        \mkern#22mu
      \fi
    \fi
  \fi
}
\DeclareRobustCommand{\Cev}[1]{\cev{\mskip2mu#1\mskip-2mu}\mskip2mu}
\newcommand{\functionto}{\ensuremath{\rightarrow}}
\newcommand{\increasingfunctionto}{\ensuremath{\!\stackrel{\raisebox{-0.75ex}[0pt][0pt]{\tiny$\mskip6mu\nearrow$}\:}{\longrightarrow}}\!}
\newcommand{\joinmorphismto}{\mathrel{\raisebox{0.85ex}{\rlap{\hskip1ex\tiny$\sqcup$}}{\longrightarrow}}}
\newcommand{\qef}{\hfill \ensuremath{\blacksquare}}
\newbox\bsqcup
\newcommand{\msqcup}[1]{\mathchoice{\setbox\bsqcup=\hbox{$\displaystyle\sqcup$}\mathbin{\rlap{\hbox to \wd\bsqcup{\hfill\raisebox{0.3ex}{{\relsize{-3}{$#1$}}}\hfill}}{\displaystyle\sqcup}}}{\setbox\bsqcup=\hbox{$\textstyle\sqcup$}\mathbin{\rlap{\hbox to \wd\bsqcup{\hfill\raisebox{0.3ex}{{\relsize{-3}{$#1$}}}\hfill}}{\textstyle\sqcup}}}{\setbox\bsqcup=\hbox{$\scriptstyle\sqcup$}\mathbin{\rlap{\hbox to \wd\bsqcup{\hfill\raisebox{0.3ex}{{\relsize{-4}{$#1$}}}\hfill}}{\scriptstyle\sqcup}}}{\setbox\bsqcup=\hbox{$\scriptscriptstyle\sqcup$}\mathbin{\rlap{\hbox to \wd\bsqcup{\hfill\raisebox{0.3ex}{{\relsize{-5}{$#1$}}}\hfill}}}{\scriptscriptstyle\sqcup}}}
\newcommand{\bigmsqcup}[1]{\mathop{\mathchoice{\setbox\bsqcup=\hbox{$\displaystyle\bigsqcup$}{\rlap{\hbox to \wd\bsqcup{\hfill\raisebox{0.3ex}{\relsize{-3}{$#1$}}\hfill}}{\displaystyle\bigsqcup}}}{\setbox\bsqcup=\hbox{$\textstyle\bigsqcup$}{\rlap{\hbox to \wd\bsqcup{\hfill\raisebox{0.3ex}{\relsize{-3}{$#1$}}\hfill}}{\textstyle\bigsqcup}}}{\setbox\bsqcup=\hbox{$\scriptstyle\bigsqcup$}{\rlap{\hbox to \wd\bsqcup{\hfill\raisebox{0.3ex}{\relsize{-4}{$#1$}}\hfill}}{\scriptstyle\bigsqcup}}}{\setbox\bsqcup=\hbox{$\scriptscriptstyle\bigsqcup$}{\rlap{\hbox to \wd\bsqcup{\hfill\raisebox{0.3ex}{\relsize{-5}}{$#1$}\hfill}}}{\scriptscriptstyle\bigsqcup}}}\limits}
\newcommand{\maccent}[2]{\smash{\mathchoice{\begin{tabular}[b]{@{}c@{}}\relsize{-3}{$#2$}\\[-1.6ex]$\displaystyle#1$\end{tabular}}{\begin{tabular}[b]{@{}c@{}}\relsize{-3}{$\textstyle#2$}\\[-1.6ex]$\textstyle#1$\end{tabular}}{\begin{tabular}[b]{@{}c@{}}\relsize{-4}{$#2$}\\[-1.875ex]$\scriptstyle#1$\end{tabular}}{\begin{tabular}[b]{@{}c@{}}\relsize{-5}{$#2$}\\[-2.1ex]$\scriptscriptstyle#1$\end{tabular}}}}
\newcounter{refappendix}
\newcommand{\proofinapxsymbol}{\rlap{\hskip0.8mm\raisebox{0.5pt}[0pt][0pt]{\hskip-0.25pt\textup{\textbf{\scriptsize A}}}}$\bigcirc$}
\newcommand{\proofinapx}{%
\addtocounter{refappendix}{1}%
\immediate\write\axp@proofsfile{\noexpand\hypertarget\string{apx\arabic{refappendix}\string}\string{\string~\string~\string}}%
\hyperlink{apx\arabic{refappendix}}{\ \proofinapxsymbol}}%
\newcommand{\proofinapx}{\relax}
\newcommand{\llbrace}{\mathopen{{\{}\mskip-4mu{|}}}
\newcommand{\rrbrace}{\mathopen{{|}\mskip-4mu{\}}}}
\newcommand{\postSemantics}{\textsf{\textup{post}}^\sharp}
\newcommand{\adhocHyperSemantics}{\overline{\textsf{\textup{Post}}}^\sharp}
\newcommand{\Hpost}{\textsf{\textup{Post}}^\sharp}
\newcommand{\hyperlogicup}[3]
{\overline{\llbrace}\,#1\,\overline{\rrbrace}\,\texttt{#2}\,\overline{\llbrace}\,#3\,\overline{\rrbrace}}
\newcommand{\hoarelogicup}[3]
{\:{\overline{\{}\,#1\overline{\}}\,\texttt{#2}\,\overline{\{}#3\overline{\}}}\:}
\newcommand{\hoarelogicdown}[3]
{\:{\underline{\{}\,#1\underline{\}}\,\texttt{#2}\,\underline{\{}#3\underline{\}}}\:}
\begin{document}
\title{Calculational Design of Hyperlogics by Abstract Interpretation}

\author{Patrick Cousot}
\email{pcousot@cims.nyu.edu}
\orcid{0003-0101-9953}
\author{Jeffery Wang}
\email{cw3736@nyu.edu}
\orcid{0009-0003-4885-8268}
\affiliation{%
  \institution{CS, Courant Institute of Mathematical Studies, NYU}
  \streetaddress{60 Vth Ave}
  \city{New York}
  \state{NY}
  \country{USA}
  \postcode{10011-8868}
}

\renewcommand{\shortauthors}{P.\ Cousot and J.\ Wang}

\begin{abstract}
We design various logics for proving hyper properties of iterative programs by application of 
abstract interpretation principles.

In part I, we design a generic, structural, fixpoint abstract interpreter parameterized by an algebraic abstract domain describing finite and infinite computations that can be instantiated for various operational, denotational, or relational program semantics. Considering  semantics as program properties, we define a \textsf{post} algebraic transformer for execution properties (e.g.\ sets of traces) and a \textsf{Post} algebraic transformer for semantic (hyper) properties (e.g.\ sets of sets of traces), we provide corresponding calculuses as instances of the generic abstract interpreter, and we derive under and over approximation hyperlogics.

In part II, we define exact and approximate semantic abstractions, and show that they preserve the mathematical structure of the algebraic semantics, the collecting semantics \textsf{post}, the hyper collecting semantics \textsf{Post}, and the hyperlogics.

Since proofs by sound and complete hyperlogics require an exact characterization of the program semantics within the proof, we consider in part III abstractions of the  (hyper) semantic properties that yield simplified proof rules. These abstractions include the join, the homomorphic, the elimination, the principal ideal, the order ideal, the frontier order ideal, and the chain limit algebraic abstractions, as well as their combinations, that lead to new algebraic generalizations of hyperlogics, including the ${\forall}{\exists}^\ast$, ${\forall}{\forall}^\ast$, and $\exists\forall^\ast$ hyperlogics.
\end{abstract}
\begin{CCSXML}
<ccs2012>
<concept>
<concept_id>10003752.10003790.10002990</concept_id>
<concept_desc>Theory of computation~Logic and verification</concept_desc>
<concept_significance>500</concept_significance>
</concept>
</ccs2012>
\end{CCSXML}

\ccsdesc[500]{Theory of computation~Logic and verification}
\keywords{abstract interpretation,
calculational design, 
completeness, 
correctness, 
hyperlogic, 
hyperproperty, 
incorrectness, 
nontermination, 
semantics, 
soundness, 
termination.
}

%
\maketitle

\nocite{DBLP:journals/pacmpl/AntonopoulosKLNNN23}

\section{Introduction}
Program (hyper) logics provide methods for reasoning about (sets of) program executions as defined by a semantics. For example, 
hyperproperties were defined by Michael Clarkson and Fred Schneider on execution traces \cite{DBLP:journals/jcs/ClarksonS10} but more recent proposals consider relational logics. We aim at designing program (hyper) logics independently of a specific program semantics, and, more precisely, independently of the formal representation of program executions used by these semantics. 

In \hyperlink{PARTI}{part I}, we recall elements of set and order theories (sect.\@ \ref{sect:Order:Theory}) and then define a structural fixpoint \emph{algebraic program semantics} (sect.\@ \ref{sec:definition:abstract:semantics}) which is an abstract interpreter parameterized by an \emph{algebraic abstract domain}
(sect.\@ \ref{sec:Algebraic-Abstract-Domain}) defined axiomatically. The abstract domain includes terminating and nonterminating executions and can be instantiated to various data and execution models such as the classic relational semantics
(sect.\@ \ref{sect:RelationalSemantics}) or the trace semantics corresponding to the original definition of hyperproperties \cite{DBLP:journals/jcs/ClarksonS10} (sect.\@ \ref{sect:Trace-Semantics} in the appendix) . Then in sect.\@ \ref{sec:Algebraic-Program-Execution-Properties}, we define an \emph{execution collecting semantics} (e.g.\ sets of traces i.e.\ trace properties) and introduce a sound and complete calculus \textsf{post} of execution properties. In sect.\@ \ref{sec:Calculus:Hyper:Properties}, we define a \emph{semantic collecting semantics} (e.g.\ sets of sets of traces i.e.\ hyperproperties) and introduce a structural, fixpoint, sound, and complete calculus \textsf{Post} of semantics properties. In sect.\@ \ref{sec:Abstract-Logic-Semantic-Properties}, we define \emph{upper and lower semantic logics} (e.g.\ a logic for trace hyperproperties) and derive over and under \emph{sound and complete proof systems} by calculational design.

In \hyperlink{PARTII}{part II}, we define the abstraction of the structural algebraic program semantics (sect.\@ \ref{sec:AbstractionAbstractSemantics}) and show that
it induces an abstraction of the algebraic execution collecting semantics (sect.\@ \ref{sec:AbstractionExecutionTransformer}), the algebraic semantic collecting semantics (sect.\@ \ref{sec:AbstractionSemanticTransformer}), and the algebraic upper and lower logics (sect.\@ \ref{sec:AbstractionAbstractLogics}). Such abstractions preserve the mathematical 
structure of the algebraic semantic, collecting semantics, and logics in the abstract. This shows that the algebraic semantics, collecting semantics, and logics can be instantiated to any one in the \emph{hierarchies of semantics} considered e.g.\ in \cite{DBLP:journals/jacm/AptP86,DBLP:journals/tcs/Cousot02,DBLP:journals/tcs/GiacobazziM05}.

Hyperlogics are under or over approximations of semantic properties that is sets of semantics. A program semantics satisfies a hyperproperty if and only if it appears \emph{exactly} in the hyperproperty.  It follows that proofs by semantic logics (for hyperproperties) require, for completeness, to describe the program semantics exactly in the proof. By analogy with Hoare logic, this would require the loop invariants to be the strongest, which is an extreme requirement. 

This is why, in part III, we  consider abstractions of semantic properties, which are less general, but otherwise offer adequate representations of semantic properties and/or allow for much simplified proof rules, closer to the tradition of classic program execution logics, and complete for well identified classes of \emph{abstract semantic properties}. The classic \emph{join abstraction} (sect.\@ \ref{sec:SemanticExecutionPropertyAbstraction}), \emph{homomorphic abstraction} (sect.\@ \ref{sec:HomomorphicSemanticAbstraction}), and \emph{intersection abstraction} (sect.\@ \ref{sec:ExecutionPropertyElimination})
yield simplified proof rules for hyperlogics. The \emph{principal ideal} (sect.\@ \ref{sec:Principal-Ideal-Abstraction}), \emph{order ideal} (sect.\@ \ref{sec:Order-Ideal-Abstraction}), \emph{frontiers} (sect.\@  \ref{sec:FrontierAbstraction}), \emph{chain limit} (sect.\@ \ref{ChainLimitAbstraction}), \emph{chain limit order ideal} (sect.\@ \ref{sec:ChainLimitOrderIdealAbstraction}) abstractions are more
specific to hyperproperties. They are compared in sect.\@ \ref{sect:ComparingAbstractions}. These abstraction generalize known hyperlogics for the algebraic semantics and allow us to provide new
sound and complete proof rules, including for ${\forall}{\exists}$ (sect.\@ \ref{sec:GeneralizedForallExists}), ${\forall}{\forall}$ (sect.\@ \ref{sec:generalized:forall:forall}), and $\exists\forall$ (sect.\@ \ref{GeneralizedExistsForall}) (hyper)properties.. This last case is based on conjunctive abstractions (i.e. conjunctions in logics or reduced products in static analysis) studied in sect.\@ \ref{sec:ConjunctiveAbstraction} of the appendix).

We finally briefly refer to the related works (already cited extensively in the text) in sect.\@ \ref{sect:RelatedWork} and summarize our contributions in the conclusion which also proposes future work (sect.\@ \ref{sect:Conclusion}). \ifshort When clickable, the symbol \proofinapxsymbol\ links to proofs and additional developments in the appendix. The paper together with its appendix is available in the auxiliary material.\fi

\begin{flushleft}
\bigskip
\hypertarget{PARTI}{\Large\textbf{\textsc{Part I:\ Algebraic Semantics, Execution Properties, Semantic}}}\\
{\Large\textbf{\textsc{\phantom{Part I:\ }(Hyper) Properties, Calculi, and Logics}}}
\end{flushleft}
\section{Elements of Set and Order Theories}\label{sect:Order:Theory}
\subsection{Partially Ordered Sets}
\begin{definition}[Properties of posets]\label{def:Properties:posets}
Let $\pair{\mathbb{L}}{\sqsubseteq}$ be a poset with  partially defined least upper bound (lub or join) $\sqcup$, greatest lower bound (glb or meet) $\sqcap$, infimum $\bot$, and supremum $\top$, if any.  \cite{DBLP:books/daglib/0023601}.
\begin{enumerate}[leftmargin=*,itemsep=3pt,label={\itshape \roman*}.,ref=\ref{def:Properties:posets}.{\roman*},labelsep=0.75em]
\item\label{def:Properties:semilattice} $\triple{L}{\sqsubseteq}{\sqcup}$ is a \emph{join semilattice} when the least upper bound (lub, join) $\sqcup S$ exists for any non-empty finite subset  $S\in\wp(L)\setminus\{\emptyset\}$ of $L$. If it exists, the infimum is $\bot=\sqcup\emptyset$. The dual 
is a \emph{meet semilattice} with greatest lower bound (glb, meet) $\sqcap$ and supremum $\top=\sqcup L$, if it exists. A \emph{lattice} is both a join and meet semilattice.  By \emph{limit} we mean either the join or the meet.

\item\label{def:Properties:chain-complete} A poset is \emph{increasing chain complete} if and only if every nonempty increasing chain of $L$ has a lub.  It is \emph{decreasing chain complete} if and only if every nonempty decreasing chain of $L$ has a glb\footnote{We do not respectively use the classic \emph{CPO} and \emph{dual CPO} for which chains are usually restricted to be of length $\omega$.}. It is \emph{chain complete} if both 
increasing and decreasing chain complete.

\item\label{def:Properties:complete-lattice} A poset is a \emph{complete lattice} if and only if any subset, including the empty set, has a lub (hence a glb and the infimum and supremum do exist). \end{enumerate}
\end{definition}
Observe that (\ref{def:Properties:semilattice}) and (\ref{def:Properties:chain-complete}) are independent (i.e.\ none implies the other). We often use them simultaneously. For example, in a \emph{increasing chain-complete join semilattice}, lubs exist for non-empty finite sets and non-empty increasing chains. 

\subsection{Ordinals}
We let $\mathbb{O}$  = $\{0,1,2,\ldots,$ $\omega,\omega+1,\omega+2,\ldots,$ $\omega\times 2,\omega\times 2+1,\omega\times 2+2,\ldots,$ $\omega\times 3,\ldots,$ $\omega\times\omega=\omega^2,\ldots,$ $\omega^\omega,\ldots,$ $\omega^{\omega^\omega},$ $\ldots,$ $\omega^{\left.\begin{array}{@{}c@{}}\scriptstyle\omega^{\udots^\omega}\end{array}\right\}\omega\mbox{\tiny\ times}},\ldots\}$ be the class of ordinals where $\omega$ is the first infinite limit ordinal \cite{Monk-Set-Theory}. $\pair{\mathbb{O}}{\leqslant}$  extends the order on the naturals $\pair{\mathbb{N}}{\leqslant}$ into the infinite. Ordinals yield typical examples of well-orderings (such that any two elements are comparable and any $<$-strictly decreasing chain is finite). Any well-ordering is order-isomorphic to an ordinal (called its rank e.g. $\omega$ for $\mathbb{N}$), \cite[th.\ 13.10 \& 13.11]{Monk-Set-Theory}. We use Von Neumann definition of ordinals \cite[ch.\ 2]{Monk-Set-Theory} with $0=\emptyset$, the successor is $\delta+1=\delta\cup\{\delta\}$, $<$ is $\in$, $\lambda=\bigcup_{\beta<\lambda}\beta$ for infinite  limit ordinals $\lambda$ (which are not a successor ordinal such as $\omega$, $\omega^2$, etc), and the corresponding transfinite induction \cite[Sec. 10]{Monk-Set-Theory}, $P(0)$, $\forall \delta\in\mathbb{O}\mathrel{.}P(\delta)\Rightarrow P(\delta+1)$, and for all limit ordinals $\lambda\in\mathbb{O}$, $(\forall \beta<\lambda\mathrel{.}P(\beta))\Rightarrow P(\lambda)$ implies $\forall\delta\in\mathbb{O}\mathrel{.}P(\delta)$.
 
\subsection{Functions on Partially Ordered Sets}

\begin{definition}[Properties of functions on posets]\label{def:Properties:functions:posets}
Let $\pair{L}{\sqsubseteq}$ be a poset and $f\in L\rightarrow L$. 
\begin{enumerate}[leftmargin=*,itemsep=3pt,label={\itshape \roman*}.,ref=\ref{def:Properties:functions:posets}.{\roman*},labelsep=0.75em]

\item\label{def:Properties:increasing} $f$ is \emph{increasing} (sometimes referred to as \emph{monotone} or \emph{isotone}) means that $\forall x,y\in L\mathrel{.}(x\sqsubseteq y)\Rightarrow (f(x)\sqsubseteq f(y))$. ``Increasing'' is order self-dual.  \emph{Decreasing} (or \emph{antitone}) is $\forall x,y\in L\mathrel{.}(x\sqsubseteq y)\Rightarrow (f(y)\sqsubseteq f(x))$;

\hskip1.5em For example, a sequence $\pair{X^\delta\in L}{\delta<\lambda}$ for ordinals $\delta,\lambda\in\mathbb{O}$ is an {increasing chain} means that $\forall \delta\leqslant\delta'<\lambda\mathrel{.}X^\delta\sqsubseteq X^{\delta'}$. A {decreasing chain} has $\forall \delta\leqslant\delta'<\lambda\mathrel{.}X^{\delta'}\sqsubseteq X^{\delta}$;

\item\label{def:Properties:finite-preserving} Function $f$ is \emph{existing finite join-preserving} (also written \emph{existing finite $\sqcup$-preserving}) if and only if for any non-empty finite set $S\in\wp_f(L)\setminus\{\emptyset\}$ such that $\sqcup S$ exists in $L$ then $\sqcup f(S)$
exists in $L$ and $f(\sqcup S)=\sqcup f(S)$  with $f(S)=\{f(x)\mid x\in S\}$, and dually for meets. $f$ is \emph{existing finite  limit-preserving} if and only if it is both existing finite join and meet preserving. ``Existing'' can be omitted in a lattice; 

\item\label{def:Properties:continuous}$f$ is \emph{upper-continuous} (or existing increasing chain join-preserving) if and only if
for any non-empty increasing chain $S\in\wp_f(L)$ such that $\sqcup S$ exists in $L$, then $\sqcup f(S)$ exists in $L$ such that
$f(\sqcup S)=\sqcup f(S)$. The dual is \emph{lower-continuous} for existing decreasing chain meet-preserving, and \emph{continuous} means both lower and upper continuous. By Scott-Kleene theorem, continuity ensures that functions reach fixpoints iteratively at $\omega$ \cite[th.\@ 15.36]{Cousot-PAI-2021}. This condition for \emph{convergence at $\omega$} is sufficient but not necessary e.g.\ \cite[th.\@ 15.21]{Cousot-PAI-2021};

\item\label{def:Properties:limit-preserving} $f$ is \emph{existing join-preserving} (also written \emph{existing $\sqcup$-preserving}) if and only if for any non-empty  set $S\in\wp(L)\setminus\{\emptyset\}$ such that $\sqcup S$ exists in $L$, then $\sqcup f(S)$ exists in $L$ such that
$f(\sqcup S)=\sqcup f(S)$, and dually for meets. $f$ is \emph{existing  limit-preserving} if and only if it is both existing join and meet preserving. ``Existing'' can be omitted in a complete lattice; 

\item\label{def:Properties:increasing:both:parameters} The definitions \ref{def:Properties:finite-preserving} to \ref{def:Properties:limit-preserving} are extended to $f\in (L\times L)\rightarrow L$ by $f$ has \emph{left limit property} if and only if $\forall y\in L\mathrel{.}\LAMBDA{x}f(x,y)$ has that limit property and $f$ has \emph{right limit property} whenever
$\forall x\in L\mathrel{.}\LAMBDA{y}f(x,y)$ has that limit property. $f$ has that the limit  property \emph{in both parameters} if and only if $f$ has both of the left and right limit properties;

\item\label{def:Properties:increasing:strict} When extending the definitions  \ref{def:Properties:finite-preserving} to \ref{def:Properties:increasing:both:parameters} to empty sets or chains, the function $f$ is then said to be \emph{lower strict}, dually \emph{upper strict}, and \emph{strict} for both cases.

\end{enumerate}
\end{definition}
Observe that \ref{def:Properties:increasing} $\Leftarrow$ \ref{def:Properties:finite-preserving} $\Leftarrow$ \ref{def:Properties:continuous}  $\Leftarrow$ \ref{def:Properties:limit-preserving}.

\subsection{Fixpoints}

Let $f\in \mathbb{L}\increasingfunctionto \mathbb{L}$ be an increasing function on a poset $\pair{\mathbb{L}}{\sqsubseteq}$. There are essentially two classic characterizations of the least fixpoint $\Lfp{\sqsubseteq}f$ of $f$ (we also use their order duals). 
\begin{proposition}[Fixpoint]\label{prop:Tarski}$\Lfp{\sqsubseteq}f=\bigsqcap\{x\mid f(x)\sqsubseteq x\}$ by \textup{\cite{Tarski-fixpoint}} on complete lattices which also holds on increasing chain complete posets \textup{\cite{Escardo03-TarskiDCPO}}.
\end{proposition}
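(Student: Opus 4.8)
The plan is to prove the two halves of the statement separately, handling the complete lattice by the standard Knaster--Tarski argument and the increasing chain complete case by transfinite iteration. For a complete lattice I would abbreviate the set of pre-fixpoints by $P=\{x\mid f(x)\sqsubseteq x\}$ and set $a=\bigsqcap P$, which exists because every subset of a complete lattice has a glb (Def.~\ref{def:Properties:complete-lattice}). The argument is then three short steps. (i) Show $a\in P$: for every $x\in P$ we have $a\sqsubseteq x$, so by monotonicity of $f$ (recall $f$ is increasing) $f(a)\sqsubseteq f(x)\sqsubseteq x$; thus $f(a)$ is a lower bound of $P$, whence $f(a)\sqsubseteq\bigsqcap P=a$. (ii) Show $a$ is a fixpoint: applying $f$ to $f(a)\sqsubseteq a$ gives $f(f(a))\sqsubseteq f(a)$, so $f(a)\in P$ and therefore $a=\bigsqcap P\sqsubseteq f(a)$; together with (i) this yields $f(a)=a$. (iii) Show minimality: any fixpoint $y$ satisfies $f(y)=y\sqsubseteq y$, so $y\in P$ and $a\sqsubseteq y$. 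Hence $a=\Lfp{\sqsubseteq}f$.

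For an increasing chain complete poset the glb $\bigsqcap P$ is no longer guaranteed to exist abstractly, so I would instead build the least fixpoint directly and only afterwards identify it with $\bigsqcap P$. Iterating $f$ transfinitely from the least element, $x^0=\bot$, $x^{\delta+1}=f(x^\delta)$, and $x^\lambda=\bigsqcup_{\beta<\lambda}x^\beta$ at limit ordinals $\lambda$, one checks by transfinite induction that the $x^\delta$ form an increasing chain, so each limit step is well defined by increasing chain completeness (Def.~\ref{def:Properties:chain-complete}). Because $\mathbb{L}$ is a set, the strictly increasing portion of this sequence cannot be a proper class, so the iteration stabilizes at some ordinal, producing a fixpoint $x^\ast=f(x^\ast)$. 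A second transfinite induction shows $x^\delta\sqsubseteq y$ for every pre-fixpoint $y\in P$ (base $\bot\sqsubseteq y$; successor $f(x^\delta)\sqsubseteq f(y)\sqsubseteq y$; limit by the universal property of the lub), whence $x^\ast$ is simultaneously a member of $P$ and a lower bound of $P$, i.e.\ $x^\ast=\bigsqcap P=\Lfp{\sqsubseteq}f$.

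The main obstacle is the chain complete case, specifically two points that the complete-lattice proof hides: justifying that the transfinite iteration terminates (rather than producing a strictly increasing class-length chain, which a set cannot accommodate) and securing a least element $\bot$ to seed the iteration. The termination is the delicate part and is exactly where the cited result of Escard\'o~\cite{Escardo03-TarskiDCPO} does the work; its payoff is that $\bigsqcap P$ need not be postulated but emerges, since the constructed least fixpoint is by minimality the greatest lower bound of the pre-fixpoints. The order-dual statements for greatest fixpoints then follow by dualizing $\sqsubseteq$.
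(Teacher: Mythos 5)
Your proof is correct, and since the paper offers no proof of this proposition (it is stated by citation to Tarski and to Escard\'o), there is nothing to diverge from: the complete-lattice half is the standard Knaster--Tarski argument, carried out exactly as one should, and the chain-complete half via transfinite iteration is the classical route. Two small points are worth flagging. First, the paper's Definition~\ref{def:Properties:chain-complete} only requires lubs of \emph{nonempty} increasing chains, so the least element $\bot$ you need to seed the iteration is genuinely an extra hypothesis (the paper itself adds ``with infimum $\bot$'' when it states the constructive version in Proposition~\ref{prop:Tarski:constructive}); you correctly identify this as an obstacle rather than sweeping it under the rug, which is the right call. Second, your termination argument (the iterates cannot form a strictly increasing chain of length exceeding the cardinality of the carrier) is the classical Zermelo-style proof and needs ordinals and replacement; the cited note of Escard\'o establishes the same result by the Pataraia-style argument, which is ordinal-free and constructive. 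Both yield the same conclusion, and your final identification of the constructed least fixpoint with $\bigsqcap\{x\mid f(x)\sqsubseteq x\}$ --- as an element of the set of pre-fixpoints that is also a lower bound of it, so that the glb exists a posteriori rather than by hypothesis --- is exactly the right way to read the formula in the non-lattice setting.
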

\begin{proposition}[Iteration to fixpoint]\label{prop:Tarski:constructive}
\hskip1em If\/ $\quadruple{\mathbb{L}}{\sqsubseteq}{\bot}{\sqcup}$ is a poset with infimum $\bot$ and partially defined join $\sqcup$ then the iterates $\pair{X^{\delta}}{\delta\in\mathbb{O}}$ of $f$ are partially defined as $X^{\delta+1}\triangleq f(X^{\delta})$, and $X^{\lambda}\triangleq\bigsqcup_{\beta<\lambda}X^{\beta}$ for limit ordinals $\lambda$ (hence $X^0=\bigsqcup\emptyset=\bot$ for limit ordinal $0$). They are well defined when $f$ is increasing (hence when it is finite join preserving, upper-continuous or existing join-preserving) and $\quadruple{\mathbb{L}}{\sqsubseteq}{\bot}{\sqcup}$ is an increasing chain complete poset (hence when it is a complete lattice) in which case they form an increasing chain (i.e.\ $\forall \beta<\delta\in\mathbb{O}\mathrel{.}{X^{\beta}}\sqsubseteq{X^{\delta}}$) ultimately stationary at the limit\/ $\exists\epsilon\mathrel{.}\forall\beta\geqslant\epsilon\mathrel{.}X^\beta=\Lfp{\sqsubseteq}f$ \textup{\cite{CousotCousot-PJM-82-1-1979}}. In case  $f$ is upper-continuous (hence when preserving existing joins), the iterates are stationary at $\epsilon=\omega$ so that the iterates may be restricted to $\mathbb{N}$ and $\Lfp{\sqsubseteq}f=\bigsqcup_{n\in\mathbb{N}}X^{n}$ \textup{\cite[page 305]{Tarski-fixpoint}}.
\end{proposition}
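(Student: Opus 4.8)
The plan is to run everything off the transfinite induction principle recalled above, establishing well-definedness of the iterates and the fact that they form an increasing chain \emph{simultaneously}, since neither can be had without the other: the join $\bigsqcup_{\beta<\lambda}X^\beta$ that defines $X^\lambda$ at a limit $\lambda$ exists (by increasing chain completeness, Def.~\ref{def:Properties:chain-complete}) precisely because $\langle X^\beta\rangle_{\beta<\lambda}$ is already known to be an increasing chain. Concretely, I would prove by transfinite induction on $\delta$ the combined statement $R(\delta)$: ``every $X^\beta$ with $\beta\leqslant\delta$ is well defined, and $X^\beta\sqsubseteq X^\gamma$ whenever $\beta\leqslant\gamma\leqslant\delta$.'' The base case $\delta=0$ is immediate from $X^0=\bot$; the limit case uses increasing chain completeness to produce the lub and the defining property of the lub to extend the order relation to the new index.

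The delicate point, and the step I expect to be the main obstacle, is the successor case: after invoking $R(\gamma)$ it remains only to show the single inequality $X^\gamma\sqsubseteq f(X^\gamma)=X^{\gamma+1}$, the rest following by transitivity. I would settle this by a secondary case analysis on $\gamma$. For $\gamma=0$ it reads $\bot\sqsubseteq f(\bot)$. For a successor $\gamma=\zeta+1$ it follows from $X^\zeta\sqsubseteq X^{\zeta+1}$ (supplied by $R(\gamma)$) and monotonicity of $f$, giving $X^{\zeta+1}=f(X^\zeta)\sqsubseteq f(X^{\zeta+1})$. For $\gamma$ a limit, monotonicity turns each $X^\beta\sqsubseteq X^\gamma$ ($\beta<\gamma$) into $X^{\beta+1}\sqsubseteq X^{\gamma+1}$; combining with $X^\beta\sqsubseteq X^{\beta+1}$ (again from $R(\gamma)$) shows that $X^{\gamma+1}$ is an upper bound of $\{X^\beta\mid\beta<\gamma\}$ and hence dominates their lub $X^\gamma$. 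This orchestration of the limit definition, monotonicity, and the induction hypothesis is where the real work sits.

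Once the iterates are known to form a well-defined increasing chain, stationarity is a cardinality argument: if no $\epsilon$ with $X^\epsilon=X^{\epsilon+1}$ existed, the chain would increase strictly at every successor, so $\delta\mapsto X^\delta$ would be injective, embedding the proper class $\mathbb{O}$ into the set $\mathbb{L}$, which is impossible. Thus some $X^\epsilon$ satisfies $X^\epsilon=f(X^\epsilon)$, and a short induction (successors by the fixpoint equation, limits by the lub) gives $X^\beta=X^\epsilon$ for all $\beta\geqslant\epsilon$. To identify $X^\epsilon$ as $\Lfp{\sqsubseteq}f$, I would take an arbitrary prefixpoint $y$ (i.e.\ $f(y)\sqsubseteq y$) and show $X^\delta\sqsubseteq y$ for all $\delta$ by transfinite induction (base $\bot\sqsubseteq y$; successor from monotonicity and $f(y)\sqsubseteq y$; limit since $y$ bounds the chain). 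Hence $X^\epsilon$ lies below every prefixpoint while being itself a fixpoint, which is exactly the characterization of the least fixpoint consistent with Proposition~\ref{prop:Tarski}.

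Finally, for the upper-continuous case I would show the iteration already closes at $\omega$: since $\langle X^n\rangle_{n\in\mathbb{N}}$ is a nonempty increasing chain, upper-continuity (Def.~\ref{def:Properties:continuous}) gives $f(X^\omega)=f(\bigsqcup_{n}X^n)=\bigsqcup_{n}f(X^n)=\bigsqcup_{n}X^{n+1}$, and adjoining the dominated element $X^0=\bot$ leaves the lub unchanged, so $\bigsqcup_n X^{n+1}=\bigsqcup_n X^n=X^\omega$. Therefore $X^{\omega+1}=X^\omega$, the iteration is stationary from $\omega$ by the argument just given, and $\Lfp{\sqsubseteq}f=\bigsqcup_{n\in\mathbb{N}}X^n$.
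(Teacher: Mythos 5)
Your proof is correct. The paper does not prove Proposition~\ref{prop:Tarski:constructive} itself --- it cites \cite{CousotCousot-PJM-82-1-1979} and \cite{Tarski-fixpoint} --- and your argument is a faithful reconstruction of the standard proof from those references: the simultaneous transfinite induction establishing well-definedness and the chain property (with the three-way case split on $\gamma$ for $X^\gamma\sqsubseteq X^{\gamma+1}$), the cardinality argument for stationarity, the prefixpoint induction identifying the limit as $\Lfp{\sqsubseteq}f$ consistently with Proposition~\ref{prop:Tarski}, and the upper-continuity computation closing the iteration at $\omega$.
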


\subsection{Galois Connections, Retractions, and Isomorphisms}\label{sec:Galois:Connections:Retractions}
Galois connections are used throughout the paper either to formalize correspondances between transformers or to formalize exact or approximate abstractions. 
%
Formally,
a Galois connection $\pair{C}{\sqsubseteq}\galois{\alpha}{\gamma}\pair{A}{\preceq}$ is a pair $\pair{\alpha}{\gamma}$ of functions between posets 
$\pair{C}{\sqsubseteq}$ and $\pair{A}{\preceq}$ satisfying $\forall x\in C\mathrel{.}\forall y\in A\mathrel{.}\alpha(x)\preceq y\Leftrightarrow x\sqsubseteq\gamma(y)$. We use a double headed arrow \rlap{$\mskip9.5mu{\rightarrow}$}$\longrightarrow$ to indicate surjection in Galois retractions and $\GaloiS{}{}$ for bijections. We use classic properties of Galois connections which proofs are found in \cite{DeneckeErneWismath-GC-03}.

\subsection{Closures}
We let $\mathbb{1}$ be the identity function. An upper closure operator $\rho$ on ${\mathbb{L}}$ is increasing, extensive and idempotent so
$\pair{\mathbb{L}}{\sqsubseteq}\galoiS{\rho}{\mathbb{1}}\pair{\rho(\mathbb{L})}{\sqsubseteq}$ where $\rho(X)\triangleq\{\rho(x)\mid x\in X\}$ is the post image (dually, a lower closure operator is reductive). It follows that $\rho$ preserves existing arbitrary joins so
if $\quadruple{\mathbb{L}}{\sqsubseteq}{\bot}{\sqcup}$ is an increasing chain complete poset (respectively complete lattice $\sextuple{\mathbb{L}}{\sqsubseteq}{\bot}{\top}{\sqcup}{\sqcap}$) then $\pair{\rho(\mathbb{L})}{\sqsubseteq}$ has the same structure with 
infimum $\rho(\bot)$, join $\LAMBDA{X}\rho(\sqcup X)$, meet $\sqcap$ and top $\top$, if any. In case of a complete lattice this is  Morgan Ward's \cite[th.\@ 4.1]{Ward42}. If $\rho_1$ and $\rho_2$ are upper closures on $\mathbb{L}$ then $\rho_1\comp\rho_2$ and $\rho_2\comp\rho_1$ are upper closure operators on $\mathbb{L}$ if and only if $\rho_1$ and $\rho_2$ are commuting (i.e.\ $\rho_1\comp\rho_2=\rho_2\comp\rho_1$) in which case
$\rho_1\comp\rho_2(\mathbb{L})=\rho_2\comp\rho_1(\mathbb{L})=\rho_1(\mathbb{L})\cap\rho_2(\mathbb{L})$ \cite[p. 525]{Ore-Combinations-1943}.

\section{Algebraic Semantics}\label{sect:Algebraic-Semantics}
We introduce the syntax and algebraic semantics of a simple iterative language based on an abstract domain that generalizes \cite[Ch.\ 21]{Cousot-PAI-2021} to include infinite program behaviors. The algebraic semantics  is reminiscent of \cite{DBLP:journals/cacm/IanovF58,DBLP:journals/jacm/GoguenTWW77,DBLP:conf/mfcs/CourcelleN78,DBLP:conf/ershov/Ershov79,DBLP:conf/ifip/Nivat80,DBLP:journals/toplas/BroyWP87,DBLP:journals/cacm/HoareHJMRSSSS87,DBLP:conf/mfcs/Guessarian78,DBLP:conf/birthday/Hoare13a,DBLP:journals/scp/HoareS14} and others. Such algebraic semantics are a basis for studying a hierarchy of program properties independently of the data manipulated by programs.

\subsection{Syntax}
We consider an imperative language $\mathbb{S}$ with assignments, sequential composition, conditionals, and conditional iteration with breaks. The syntax is $\texttt{S}\in\mathbb{S}\mathbin{{:}{:}{=}}\texttt{x = A}
\mid\texttt{x = [$a$,$b$]}
\mid\texttt{skip}
\mid\texttt{S;S}
\mid\texttt{if (B) S else S}
\mid\texttt{while (B) S}
\mid\texttt{break}
$. \texttt{A} is an arithmetic expression.  The nondeterministic assignment \texttt{x = [$a$, $b$]} with $a\in\mathbb{Z}\cup\{-\infty\}$ and $b\in\mathbb{Z}\cup\{\infty\}$, $-\infty-1=-\infty$, $\infty+1=\infty$ (or any, possibly unbounded, order isomorphic set). 
The Boolean expressions \texttt{B} include the negation $\neg\texttt{B}$. A \texttt{break} exits the closest enclosing loop (which existence is to be checked syntactically).

\subsection{Structural Definitions}\label{sec:Structural-Definitions}
Let $\lhd$ be the ``immediate strict syntactic component'' well-founded partial order on statements $\mathbb{S}$ such that $\texttt{S$_1$} \lhd \texttt{S$_1$;S$_2$}$,\ \ 
$\texttt{S$_2$} \lhd \texttt{S$_1$;S$_2$}$,\ \ 
$\texttt{S$_1$} \lhd \texttt{if (B) S$_1$ else S$_2$}$,\ \ 
$\texttt{S$_2$} \lhd \texttt{if (B) S$_1$ else S$_2$}$,\ \ 
$\texttt{S} \lhd \texttt{while (B) S}$, and is otherwise false. 

Given a nonempty set $\mathcal{V}$, the function $f\in\mathbb{S}\rightarrow\mathcal{V}$ has a structural definition if and only if
$f(\texttt{S})\in\mathcal{V}$ for basic commands (defined as minimal elements of $\lhd$) and, otherwise, is of the form
$f(\texttt{S})=F_{\texttt{\footnotesize S}}(\{\pair{\texttt{S}'}{f(\texttt{S}')}\mid\texttt{S}'\lhd\texttt{S}\})$
where $F_{\texttt{\footnotesize S}}\in \{\pair{\texttt{S}'}{v'}\mid\texttt{S}'\lhd\texttt{S}\wedge v'\in \mathcal{V}\} \rightarrow \mathcal{V}$ is a total function.
Denotational semantics, Hoare logic, predicate transformers, and the abstract semantics of sect.\@ \ref{sec:definition:abstract:semantics} all have structural definitions (called ``compositional'' in denotational semantics).

\subsection{Algebraic Computational Domain}\label{sec:Algebraic-Abstract-Domain}

We consider computational domains $\mathbb{D}^{\sharp}_{+}$ and $\mathbb{D}^{\sharp}_{\infty}$ to be abstract domains respectively abstracting the finite and infinite computations of statements and partially ordered by the respective computational orderings ${\sqsubseteq_{+}^{\sharp}}$ and ${\sqsubseteq_{\infty}^{\sharp}}$, as follows (${{\fatsemi}^{\sharp}}$ is polymorphic).
\begin{eqntabular}[fl]{r@{\hskip1ex}c@{\hskip1ex}l@{\quad}}
\mathbb{D}^{\sharp}_{+}&\triangleq&\undecatuple{\mathbb{L}^{\sharp}_{+}}{\sqsubseteq_{+}^{\sharp}}{\bot_{+}^{\sharp}}{\sqcup_{+}^{\sharp}}{\textsf{init}^{\sharp}}{\textsf{assign}^{\sharp}\sqb{\texttt{x},\texttt{A}}}{\textsf{rassign}^{\sharp}\sqb{\texttt{x},a,b}}{\textsf{test}^{\sharp}\sqb{\texttt{B}}}{\textsf{break}^{\sharp}}{\textsf{skip}^{\sharp}}{{\fatsemi}^{\sharp}}\label{eq:def:generic:finite:abstract-domain}
\\
\mathbb{D}^{\sharp}_{\infty}&\triangleq&\quintuple{\mathbb{L}^{\sharp}_{\infty}}{\sqsubseteq_{\infty}^{\sharp}}{\top_{\infty}^{\sharp}}{\sqcap_{\infty}^{\sharp}}{{\fatsemi}^{\sharp}}\label{eq:def:generic:infinite:abstract-domain}
\end{eqntabular}
\begin{example}\label{ex:bi-inductive-trace-seamntics}Bi-inductive definitions \cite{DBLP:conf/popl/CousotC92} are used in \cite{DBLP:journals/tcs/Cousot02} 
to define a trace semantics on states $\Sigma$ which can be isomorphically decomposed into the domain of finite traces $\quadruple{\mathbb{L}^{\sharp}_{+}}{\sqsubseteq_{+}^{\sharp}}{\bot_{+}^{\sharp}}{\sqcup_{+}^{\sharp}}$
= $\quadruple{\wp(\Sigma^{\ast})}{\subseteq}{\emptyset}{\cup}$ (where $\cup$ is the lub of increasing chains starting form $\emptyset$ for least fixpoints) and the domain of infinite traces $\quadruple{\mathbb{L}^{\sharp}_{\infty}}{\sqsubseteq_{\infty}^{\sharp}}{\top_{\infty}^{\sharp}}{\sqcap_{\infty}^{\sharp}}$ = $\quadruple{\wp(\Sigma^{\omega})}{\subseteq}{\Sigma^{\omega}}{\cap}$ (where $\cap$ is the glb of decreasing chains starting form ${\Sigma^{\omega}}$ for greatest fixpoints),
which abstractions yield a hierarchy of classic semantics, including Hoare logic. 

Our objective in \hyperlink{PARTI}{part I} is to study hyperlogics abstracting away from a particular semantics thus allowing for multiple instantiations (such as traces in sect.\@ \ref{sect:Trace-Semantics}) and, in \hyperlink{PARTII}{part II}, for multiple  abstractions (which include Hoare logic).

A single domain $\mathbb{D}^{\sharp}\triangleq\mathbb{D}^{\sharp}_{+}\cup\mathbb{D}^{\sharp}_{\infty}$ is used in denotational semantics 
\cite{ScottStrachey71-PRG6,DBLP:journals/siamcomp/Plotkin76} but this is not always possible e.g.\ when $\mathbb{D}^{\sharp}_{+}\cap\mathbb{D}^{\sharp}_{\infty}\neq\emptyset$. Moreover the separation into two different domains for finite and infinite executions allows e.g.\ for the use of input-output relations for finite behaviors and traces for infinite behaviors. (see also the discussion in remark \ref{rem:least-versus-greatest-in-semantics} in the appendix.)
\end{example}

\begin{definition}[Abstract domain well-definedness]\label{def:abstract:domain:well:def}We say that ${\mathbb{D}^{\sharp}}\triangleq\pair{\mathbb{D}^{\sharp}_{+}}{\mathbb{D}^{\sharp}_{\infty}}$ is a well-defined chain-complete lattice (respectively complete lattice) with increasing (respectively  finite limit-preserving, continuous, and existing limit-preserving) composition, if and only if 
\begin{enumerate}[leftmargin=*,itemsep=3pt,label={\Alph*}.,ref=\ref{def:abstract:domain:well:def}.{\Alph*},labelsep=0.75em]

\item \label{def:abstract:domain:well:def:finite:domain}The finitary calculational domain
$\quadruple{\mathbb{L}^{\sharp}_{+}}{\sqsubseteq_{+}^{\sharp}}{\bot_{+}^{\sharp}}{\sqcup_{+}^{\sharp}}$ is an increasing chain-complete join semilattice with infimum, (respectively $\sextuple{\mathbb{L}^{\sharp}_{+}}{\sqsubseteq_{+}^{\sharp}}{\bot_{+}^{\sharp}}{\top_{+}^{\sharp}}{\sqcup_{+}^{\sharp}}{\sqcap_{+}^{\sharp}}$ is a complete lattice);

\item  \label{def:abstract:domain:well:def:abstract:operators}
${\textsf{init}^{\sharp}}$, ${\textsf{break}^{\sharp}}$, ${\textsf{skip}^{\sharp}}$ $\in$ ${\mathbb{L}^{\sharp}_{+}}$, ${\textsf{assign}^{\sharp}\sqb{\texttt{x},\texttt{A}}}$, ${\textsf{rassign}^{\sharp}\sqb{\texttt{x},a,b}}$, ${\textsf{test}^{\sharp}\sqb{\texttt{B}}}$ $\in$ ${\mathbb{L}^{\sharp}_{+}}$ are well-defined in ${\mathbb{L}^{\sharp}_{+}}$;

\item \label{def:abstract:domain:well:def:infinite:domain}The infinitary calculational domain
$\quintuple{\mathbb{L}^{\sharp}_{\infty}}{\sqsubseteq_{\infty}^{\sharp}}{\top_{\infty}^{\sharp}}{\sqcup_{\infty}^{\sharp}}{\sqcap_{\infty}^{\sharp}}$ is a decreasing chain-complete join lattice with supremum
(respectively $\sextuple{\mathbb{L}^{\sharp}_{\infty}}{\sqsubseteq_{\infty}^{\sharp}}{\bot_{\infty}^{\sharp}}{\top_{\infty}^{\sharp}}{\sqcup_{\infty}^{\sharp}}{\sqcap_{\infty}^{\sharp}}$ is a complete lattice);

\item \label{def:abstract:domain:well:def:operators}
The sequential composition $\mathbin{{\fatsemi}^{\sharp}}\in \bigl({\mathbb{L}^{\sharp}_{+}}\times{\mathbb{L}^{\sharp}_{+}}\functionto{\mathbb{L}^{\sharp}_{+}}\bigr)\cup
\bigl(
(({\mathbb{L}^{\sharp}_{+}}\times{\mathbb{L}^{\sharp}_{\infty}})
\cup
({\mathbb{L}^{\sharp}_{\infty}}\times{\mathbb{L}^{\sharp}_{+}})
\cup
({\mathbb{L}^{\sharp}_{\infty}}\times{\mathbb{L}^{\sharp}_{\infty}})
)
\functionto{\mathbb{L}^{\sharp}_{\infty}}\bigr)$ is associative and satisfies the following conditions (where $\sextuple{\mathbb{L}^{\sharp}_{x}}{\sqsubseteq_{x}^{\sharp}}{\bot_{x}^{\sharp}}{\top_{x}^{\sharp}}{\sqcup_{x}^{\sharp}}{\sqcap_{x}^{\sharp}}$, $x\in\{+,\infty\}$
designates $\sextuple{\mathbb{L}^{\sharp}_{+}}{\sqsubseteq_{+}^{\sharp}}{\bot_{+}^{\sharp}}{\top_{+}^{\sharp}}{\sqcup_{+}^{\sharp}}{\sqcap_{+}^{\sharp}}$ when $x=+$ and $\sextuple{\mathbb{L}^{\sharp}_{\infty}}{\sqsubseteq_{\infty}^{\sharp}}{\bot_{\infty}^{\sharp}}{\top_{\infty}^{\sharp}}{\sqcup_{\infty}^{\sharp}}{\sqcap_{\infty}^{\sharp}}$ when $x=\infty$).
\vskip5pt%
\begin{enumerate}[leftmargin=*,itemsep=3pt,label={\alph*}.,ref=\ref{def:abstract:domain:well:def:operators}.{\alph*},labelsep=0.75em]

\item \label{def:abstract:domain:well:def:init:neutral}
$\forall S\in{\mathbb{L}^{\sharp}_{+}}\mathrel{.} S\mathbin{{\fatsemi}^{\sharp}}{\textsf{init}^{\sharp}}={\textsf{init}^{\sharp}}\mathbin{{\fatsemi}^{\sharp}}S=S$;

\item \label{def:abstract:domain:well:def:bot:absorbent}
$\forall S\in{\mathbb{L}^{\sharp}_{+}}\mathrel{.} S\mathbin{{\fatsemi}^{\sharp}}{\bot_{+}^{\sharp}}={\bot_{+}^{\sharp}}$ and $\forall S\in{\mathbb{L}^{\sharp}_{x}}\mathrel{.} {\bot_{+}^{\sharp}} \mathbin{{\fatsemi}^{\sharp}}S={\bot_{+}^{\sharp}}$ (same for ${\mathbb{L}^{\sharp}_{\infty}}$ when ${\bot_{\infty}^{\sharp}}$ exists);

\item \label{def:abstract:domain:well:def:oo:absorbent}
$\forall S\in{\mathbb{L}^{\sharp}_{\infty}}\mathrel{.} \forall S'\in{\mathbb{L}^{\sharp}_{x}}\mathrel{.}S\mathbin{{\fatsemi}^{\sharp}}S'=S$;


\item \label{def:abstract:domain:well:def:join:additive} In its left, right, or both parameters, the sequential composition $\mathbin{{\fatsemi}^{\sharp}}$ is either
\begin{enumerate}[leftmargin=*,itemsep=3pt,label={\roman*}.,ref=\ref{def:abstract:domain:well:def:join:additive}.{\roman*},labelsep=0.75em]

\item\label{def:abstract:domain:well:def:increasing}
{increasing} for ${\sqsubseteq_{+}^{\sharp}}$ and/or ${\sqsubseteq_{\infty}^{\sharp}}$;

\item\label{def:abstract:domain:well:finite-join-preserving}
{finite join preserving} for ${\sqcup_{+}^{\sharp}}$ and/or ${\sqcup_{\infty}^{\sharp}}$;

\item \label{def:abstract:domain:well:continuous} in addition to \ref{def:abstract:domain:well:finite-join-preserving}, is lower continuous for ${\sqcap_{+}^{\sharp}}$  and/or ${\sqcap_{\infty}^{\sharp}}$ and/or upper continuous for ${\sqcup_{+}^{\sharp}}$ and/or ${\sqcup_{\infty}^{\sharp}}$;

\item \label{def:abstract:domain:well:limit-preserving}
existing arbitrary ${\sqcup_{+}^{\sharp}}$-preserving and/or existing arbitrary ${\sqcap_{\infty}^{\sharp}}$-preserving.

\end{enumerate}
\end{enumerate}
\end{enumerate}
\end{definition}
\begin{remark}\label{rem:bi-inductive-definition}In case ${\mathbb{L}^{\sharp}_{+}}\cap{\mathbb{L}^{\sharp}_{\infty}}=\emptyset$, we can define $\mathbb{L}^{\sharp}\triangleq{\mathbb{L}^{\sharp}_{+}}\cup{\mathbb{L}^{\sharp}_{\infty}}$ with $X^{+}\triangleq X\cap{\mathbb{L}^{\sharp}_{+}}$, $X^{\infty}\triangleq X\cap{\mathbb{L}^{\sharp}_{\infty}}$, and
$X\mathrel{{\sqsubseteq}^{\sharp}} Y\triangleq X^{+}\mathrel{{\sqsubseteq}^{\sharp}_{+}} Y^{+}\wedge X^{\infty}\mathrel{{\sqsubseteq}^{\sharp}_{\infty}} Y^{\infty}$ which corresponds to the bi-inductive definitions \cite{DBLP:conf/popl/CousotC92} mentioned in example \ref{ex:bi-inductive-trace-seamntics}.
\end{remark}
\begin{remark}\label{rem:composition-properties}Hypotheses \ref{def:abstract:domain:well:def:abstract:operators}, \ref{def:abstract:domain:well:def:increasing} and \ref{def:abstract:domain:well:finite-join-preserving} determine the precision of the semantic of basic commands, composition, choices, conditionals, and iteration in the algebraic semantics. These hypotheses as well as \ref{def:abstract:domain:well:continuous} and \ref{def:abstract:domain:well:limit-preserving} determine whether fixpoint iterations should be infinite or transfinite (see  proposition \ref{prop:Tarski:constructive}).
\end{remark}

\subsection{Definition of the Algebraic Semantics}\label{sec:definition:abstract:semantics}
The algebraic semantics of statements $\texttt{S}\in\mathbb{S}$ is an abstract property of executions.
The basic commands \texttt{S} are assignment, random assignment, \texttt{break} out of the immediately enclosing loop, and \texttt{skip}, with the following $\sqb{\texttt{S}}_{e}^{\sharp}$ and break $\sqb{\texttt{S}}_{b}^{\sharp}$ finite/ending/terminating semantics in ${\mathbb{L}^{\sharp}_{+}}$ as well as infinite/nonterminating  $\sqb{\texttt{S}}_{\bot}^{\sharp}$ abstract semantics in ${\mathbb{L}^{\sharp}_{\infty}}$. 
\subsubsection{Basic Statements}
\bgroup\arraycolsep=0.475\arraycolsep\begin{eqntabular}[fl]{lcl@{\hskip1.75em}lcl@{\hskip1.75em}lcl}
\sqb{\texttt{x = A}}_{e}^{\sharp}&\triangleq&{\textsf{assign}^{\sharp}\sqb{\texttt{x},\texttt{A}}}
&
\sqb{\texttt{x = A}}_{b}^{\sharp}&\triangleq&{\bot_{+}^{\sharp}}
&
\sqb{\texttt{x = A}}_{\bot}^{\sharp}&\triangleq&{\bot_{\infty}^{\sharp}}
\nonumber\\
\sqb{\texttt{x = [$a$, $b$]}}_{e}^{\sharp}&\triangleq&{\textsf{rassign}^{\sharp}\sqb{\texttt{x},a,b}}
&
\sqb{\texttt{x = [$a$, $b$]}}_{b}^{\sharp}&\triangleq&{\bot_{+}^{\sharp}}
&
\sqb{\texttt{x = [$a$, $b$]}}_{\bot}^{\sharp}&\triangleq&{\bot_{\infty}^{\sharp}}
\nonumber\\
\sqb{\texttt{break}}_{e}^{\sharp}&\triangleq&{\bot_{+}^{\sharp}}
&
\sqb{\texttt{break}}_{b}^{\sharp}&\triangleq&{\textsf{break}^{\sharp}}
&
\sqb{\texttt{break}}_{\bot}^{\sharp}&\triangleq&{\bot_{\infty}^{\sharp}}
\label{eq:def:sem:abstract:basis}\\
\sqb{\texttt{skip}}_{e}^{\sharp}&\triangleq&{\textsf{skip}^{\sharp}}
&
\sqb{\texttt{skip}}_{b}^{\sharp}&\triangleq&{\bot_{+}^{\sharp}}
&
\sqb{\texttt{skip}}_{\bot}^{\sharp}&\triangleq& {\bot_{\infty}^{\sharp}}
\nonumber\\
\sqb{\texttt{B}}_{e}^{\sharp}&\triangleq&{\textsf{test}^{\sharp}}\sqb{\texttt{B}}
&
\sqb{\texttt{B}}_{b}^{\sharp}&\triangleq&{\bot_{+}^{\sharp}}
&
\sqb{\texttt{B}}_{\bot}^{\sharp}&\triangleq& {\bot_{\infty}^{\sharp}}
\nonumber
\end{eqntabular}\egroup
For the assignment \texttt{x = A}, the abstract semantics ${\textsf{assign}^{\sharp}\sqb{\texttt{x},\texttt{A}}}$ is specified by the abstract domain, and so, is well-defined by \ref{def:abstract:domain:well:def:abstract:operators}. $\sqb{\texttt{x = A}}_{b}^{\sharp}={\bot_{+}^{\sharp}}$ because the assignment cannot break. $\sqb{\texttt{x = A}}_{\bot}^{\sharp}={\bot_{\infty}^{\sharp}}$ since the assignment always terminates.
The algebraic semantics of the other primitives is similar, except for
the \texttt{break} statement. $\sqb{\texttt{break}}_{e}^{\sharp}={\bot_{+}^{\sharp}}$ since the \texttt{break} cannot continue in sequence. 
The semantics $\sqb{\texttt{break}}_{b}^{\sharp}$ of the \texttt{break} is given by the abstract domain primitive ${\textsf{break}^{\sharp}}$ which is finite and well-defined. $\sqb{\texttt{break}}_{\bot}^{\sharp}={\bot_{\infty}^{\sharp}}$ since a \texttt{break} always terminates.

\subsubsection{Structural Statements}
For the sequential composition and the conditional where $\sqb{\texttt{B;S}}_{x}^{\sharp} \triangleq {\textsf{test}^{\sharp}\sqb{\texttt{B}}}\mathbin{\fatsemi^{\sharp}}\sqb{\texttt{S}}_{x}^{\sharp}$, $x\in\{e,b,\bot\}$, we  define
\begin{eqntabular}[fl]{@{}lcl@{\quad}lcl}
\sqb{\texttt{S$_1$;S$_2$}}_{e}^{\sharp}&\triangleq&\sqb{\texttt{S$_1$}}_{e}^{\sharp}\mathbin{\,\fatsemi^\sharp\,}\sqb{\texttt{S$_2$}}_{e}^{\sharp}
&
\sqb{\texttt{if (B) S$_1$ else S$_2$}}_{e}^{\sharp}&\triangleq&\sqb{\texttt{B;S}_1}_{e}^{\sharp}\mathbin{\,\sqcup_{+}^{\sharp}\,}\sqb{\neg\texttt{B;S}_2}_{e}^{\sharp}
\nonumber
\\
\sqb{\texttt{S$_1$;S$_2$}}_{b}^{\sharp}&\triangleq&\sqb{\texttt{S$_1$}}_{b}^{\sharp}\mathbin{\,\sqcup_{+}^{\sharp}\,}(\sqb{\texttt{S$_1$}}_{e}^{\sharp}\mathbin{\fatsemi^{\sharp}}\sqb{\texttt{S$_2$}}_{b}^{\sharp})
&
\sqb{\texttt{if (B) S$_1$ else S$_2$}}_{b}^{\sharp}&\triangleq&\sqb{\texttt{B;S}_1}_{b}^{\sharp}\mathbin{\,\sqcup_{+}^{\sharp}\,}\sqb{\neg\texttt{B;S}_2}_{b}^{\sharp}
\label{eq:def:abstract:sem:seq}\label{eq:def:abstract:sem:if}
\\
\sqb{\texttt{S$_1$;S$_2$}}_{\bot}^{\sharp}&\triangleq&
\sqb{\texttt{S$_1$}}_{\bot}^{\sharp} \mathbin{\,\sqcup_{\infty}^{\sharp}\,}(\sqb{\texttt{S$_1$}}_{e}^{\sharp}\mathbin{\fatsemi^{\sharp}}\sqb{\texttt{S$_2$}}_{\bot}^{\sharp})
&
\sqb{\texttt{if (B) S$_1$ else S$_2$}}_{\bot}^{\sharp}&\triangleq&
\sqb{\texttt{B;S}_1}_{\bot}^{\sharp}\mathbin{\sqcup_{\infty}^{\sharp}}\sqb{\neg\texttt{B;S}_2}_{\bot}^{\sharp}
\nonumber\end{eqntabular}
The semantics of the composition and conditional are well-defined by \ref{def:abstract:domain:well:def:operators} for $\fatsemi^\sharp$
and \ref{def:abstract:domain:well:def:finite:domain} and \ref{def:abstract:domain:well:def:infinite:domain} which ensure the existence of the finite and infinite joins.

\texttt{S$_1$;S$_2$} terminates if \texttt{S$_1$} terminates and is followed by \texttt{S$_2$} that terminates. \texttt{S$_1$;S$_2$} breaks
(resp.\ nonterminates) if either \texttt{S$_1$} breaks (resp.\ nonterminates) or \texttt{S$_1$} terminates and is followed by \texttt{S$_2$} that 
breaks (resp.\ nonterminates). 

For a given execution of the conditional \texttt{if (B) S$_1$ else S$_2$} only one branch is taken, so the semantics of the other one will be empty by definition (\ref{eq:def:sem:abstract:basis}) of $\sqb{\texttt{B}}_{e}^{\sharp}$ that should return ${\bot_{+}^{\sharp}}$\footnote{unless the semantics of Boolean expressions is to be very exotic.} and \ref{def:abstract:domain:well:def:bot:absorbent}.
\begin{example}Assume that \texttt{S$_1$} never terminates in that $\sqb{\texttt{S$_1$}}_{\bot}^{\sharp}=\top^{\sharp}_{\infty}$ (sometimes named ``chaos'' modelling all possible nonterminating behaviors). Then, by (\ref{eq:def:abstract:sem:if}),
$\sqb{\texttt{S$_1$;S$_2$}}_{\bot}^{\sharp}$
$\triangleq$
$\sqb{\texttt{S$_1$}}_{\bot}^{\sharp} \mathbin{\,\sqcup_{\infty}^{\sharp}\,}(\sqb{\texttt{S$_1$}}_{e}^{\sharp}\mathbin{\fatsemi^{\sharp}}\sqb{\texttt{S$_2$}}_{\bot}^{\sharp})$
= 
$\top^{\sharp}_{\infty}\mathbin{\,\sqcup_{\infty}^{\sharp}\,}(\sqb{\texttt{S$_1$}}_{e}^{\sharp}\mathbin{\fatsemi^{\sharp}}\sqb{\texttt{S$_2$}}_{\bot}^{\sharp})$
=
$\top^{\sharp}_{\infty}$ meaning that \texttt{S$_1$;S$_2$} never terminates either in chaos.

For the conditional, assume \texttt{B} is always true and \texttt{S$_1$} never terminates in that $\sqb{\texttt{S$_1$}}_{\bot}^{\sharp}=\top^{\sharp}_{\infty}$. Then the false branch is never taken so that $\sqb{\neg\texttt{B;S}_2}_{\bot}^{\sharp}=\bot^{\sharp}_{\infty}$. It follows, by (\ref{eq:def:abstract:sem:if}), that $\sqb{\texttt{if (B) S$_1$ else S$_2$}}_{\bot}^{\sharp}$
$\triangleq$
$\sqb{\texttt{B;S}_1}_{\bot}^{\sharp}\mathbin{\sqcup_{\infty}^{\sharp}}\sqb{\neg\texttt{B;S}_2}_{\bot}^{\sharp}$
=
$\top^{\sharp}_{\infty}\mathbin{\sqcup_{\infty}^{\sharp}}\bot^{\sharp}_{\infty}$
=
$\top^{\sharp}_{\infty}$ so that the conditional \texttt{if (B) S$_1$ else S$_2$} never terminates.
\end{example}
\subsubsection{Iteration}\label{sec:abstract:Iteration}
For iteration \texttt{while (B) S}, we define the transformers
\begin{eqntabular}{L@{\qquad\qquad}rcl}
backward&{\Cev{F}_{e}^{\sharp}}&\triangleq&\LAMBDA{X\in{\mathbb{L}^{\sharp}_{+}}}{\textsf{init}^{\sharp}} \mathbin{\sqcup_{+}^{\sharp}} (\sqb{\texttt{B;S}}_{e}^{\sharp}\mathbin{{\fatsemi}^{\sharp}} X)\label{eq:natural-transformer-finite-backward}\\
forward&{\vec{F}_{e}^{\sharp}}&\triangleq&\LAMBDA{X\in{\mathbb{L}^{\sharp}_{+}}}{\textsf{init}^{\sharp}} \mathbin{\sqcup_{+}^{\sharp}} (X\mathbin{{\fatsemi}^{\sharp}} \sqb{\texttt{B;S}}_{e}^{\sharp})\label{eq:natural-transformer-finite-forward}\\
infinite&{F_{\bot}^{\sharp}}&\triangleq&\LAMBDA{X\in{\mathbb{L}^{\sharp}_{\infty}}}\sqb{\texttt{B;S}}_{e}^{\sharp}\mathbin{{\fatsemi}^{\sharp}} X\label{eq:trace-transformer-infinite}
\end{eqntabular}
\begin{lemma}[Finite fixpoints well-definedness]\label{lem:Fesharp-welldefined}\proofinapx\quad If\/ $\mathbb{D}^{\sharp}_{+}$ is a well-defined increasing chain complete join semilattice and $\mathbin{{\fatsemi}^{\sharp}}$ left satisfies any one of the \ref{def:abstract:domain:well:def:increasing}, \ref{def:abstract:domain:well:finite-join-preserving}, \ref{def:abstract:domain:well:continuous}, or
\ref{def:abstract:domain:well:limit-preserving} properties for $\mathbb{D}^{\sharp}_{+}$ then ${\Cev{F}_{e}^{\sharp}}$ satisfy the same property and its least fixpoint deso exist (and similarly for ${\vec{F}_{e}^{\sharp}}$ when $\mathbin{{\fatsemi}^{\sharp}}$ right satisfies any one of the properties listed in \ref{def:abstract:domain:well:def:join:additive}).
\end{lemma}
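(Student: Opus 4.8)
The map $\Cev{F}_{e}^{\sharp}$ of (\ref{eq:natural-transformer-finite-backward}) factors, on the finite domain $\mathbb{L}^{\sharp}_{+}$, as $\Cev{F}_{e}^{\sharp}=h\comp g$, where $g\triangleq\LAMBDA{X}\sqb{\texttt{B;S}}_{e}^{\sharp}\mathbin{\fatsemi^{\sharp}}X$ is obtained from $\fatsemi^{\sharp}$ by freezing the loop-body operand and $h\triangleq\LAMBDA{Y}\textsf{init}^{\sharp}\mathbin{\sqcup_{+}^{\sharp}}Y$ joins with the constant $\textsf{init}^{\sharp}$. The plan is to show that $g$ and $h$ each carry whichever of the properties \ref{def:abstract:domain:well:def:increasing}--\ref{def:abstract:domain:well:limit-preserving} is assumed, that the property survives the composition, and that its weakest consequence, increasingness, already secures the least fixpoint. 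Since $\sqb{\texttt{B;S}}_{e}^{\sharp}\in\mathbb{L}^{\sharp}_{+}$ and $X$ ranges over $\mathbb{L}^{\sharp}_{+}$, the computation never leaves the finite domain, so the polymorphic $\fatsemi^{\sharp}$ here acts as a total operation $\mathbb{L}^{\sharp}_{+}\times\mathbb{L}^{\sharp}_{+}\functionto\mathbb{L}^{\sharp}_{+}$ and no partiality intervenes; and since $\mathbb{D}^{\sharp}_{+}$ is only assumed to be a join semilattice, the sole relevant limit is the join, so ``continuous'' is read as upper-continuous for $\sqcup_{+}^{\sharp}$.

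For $g$, the conclusion is immediate from the argument-wise reading of the limit properties in Definition~\ref{def:Properties:increasing:both:parameters}: the hypothesis on $\fatsemi^{\sharp}$ states exactly that, with one operand held fixed, the residual unary map in the operand occupied by $X$ has the property in play. For $h$ I would dispatch the four cases uniformly from associativity, commutativity, and idempotency of $\sqcup_{+}^{\sharp}$: for every \emph{nonempty} pair, finite set, increasing chain, or arbitrary set $S$ whose join exists, one has $\sqcup_{+}^{\sharp}\{\textsf{init}^{\sharp}\mathbin{\sqcup_{+}^{\sharp}}s\mid s\in S\}=\textsf{init}^{\sharp}\mathbin{\sqcup_{+}^{\sharp}}(\sqcup_{+}^{\sharp}S)$, the surplus copies of $\textsf{init}^{\sharp}$ collapsing by idempotency precisely because $S\neq\emptyset$, while monotonicity of the binary join settles the increasing case. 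Every join written is binary or over a nonempty set or chain, hence available in the increasing chain-complete join semilattice, which also discharges the existence obligations of the ``existing'' variant (existence and equality being argued in tandem). Stability under composition is then the routine observation that a composite of two maps, each sending an existing (finite, increasing-chain, or arbitrary) join to the join of the images, does the same, and that a composite of increasing maps is increasing.

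For the fixpoint I would use the implication chain \ref{def:Properties:increasing}~$\Leftarrow$~\ref{def:Properties:finite-preserving}~$\Leftarrow$~\ref{def:Properties:continuous}~$\Leftarrow$~\ref{def:Properties:limit-preserving} recorded after Definition~\ref{def:Properties:functions:posets}: under any of the four hypotheses $\Cev{F}_{e}^{\sharp}$ is in particular increasing, and since $\mathbb{D}^{\sharp}_{+}$ is an increasing chain-complete poset with infimum $\bot_{+}^{\sharp}$, Proposition~\ref{prop:Tarski} (equivalently the transfinite iteration of Proposition~\ref{prop:Tarski:constructive}) furnishes $\Lfp{\sqsubseteq_{+}^{\sharp}}\Cev{F}_{e}^{\sharp}$, stationary at $\omega$ in the upper-continuous and existing-join cases. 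The forward transformer $\vec{F}_{e}^{\sharp}$ of (\ref{eq:natural-transformer-finite-forward}) is handled by the mirror argument, freezing the body operand on the right and using the hypothesis on $\fatsemi^{\sharp}$ in the left operand, now occupied by $X$. The one point needing care---fiddly rather than deep---is the existing-join case: in a merely chain-complete, non-complete semilattice each join must be shown to exist before it is equated, and the frozen and variable operand slots of $\fatsemi^{\sharp}$ must be aligned with the transformer at hand so that the correct one of the two symmetric operand hypotheses is invoked.
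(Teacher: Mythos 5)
Your proposal is correct and follows essentially the same route as the paper's own proof: both decompose $\Cev{F}_{e}^{\sharp}$ into the constant $\textsf{init}^{\sharp}$, the partially applied composition $\sqb{\texttt{B;S}}_{e}^{\sharp}\mathbin{\fatsemi^{\sharp}}(\cdot)$, and the join $\sqcup_{+}^{\sharp}$, argue that each factor and hence the composite inherits the assumed property, and then invoke increasingness plus increasing-chain completeness (Proposition~\ref{prop:Tarski:constructive}) for the least fixpoint, with the mirror argument for $\vec{F}_{e}^{\sharp}$. Your closing remark about aligning the frozen and variable operand slots is well taken, since in $\Cev{F}_{e}^{\sharp}$ the fixpoint variable occupies the \emph{right} operand of $\fatsemi^{\sharp}$ (and the left one in $\vec{F}_{e}^{\sharp}$), which is the convention the paper actually uses when it later invokes this lemma.
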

\begin{toappendix}
\begin{proof}[Proof of lemma \ref{lem:Fesharp-welldefined}]
By definition (\ref{eq:natural-transformer-finite-backward}), ${\Cev{F}_{e}^{\sharp}}$ is the composition of constants ${\textsf{init}^{\sharp}}$ and $\sqb{\texttt{B;S}}_{e}^{\sharp}\mathbin{{\fatsemi}^{\sharp}}$, the lub $\mathbin{\sqcup_{+}^{\sharp}}$ in a join semilattice (which satisfies all properties of definition \ref{def:Properties:functions:posets}), and sequential composition $\mathbin{{\fatsemi}^{\sharp}}$. Therefore, depending on which property
\ref{def:abstract:domain:well:def:increasing}, \ref{def:abstract:domain:well:finite-join-preserving}, \ref{def:abstract:domain:well:continuous}, or
\ref{def:abstract:domain:well:limit-preserving} does satisfy, ${\Cev{F}_{e}^{\sharp}}$ satisfies the same property. It follows by
\ref{def:abstract:domain:well:def:finite:domain} that the iterates of ${\Cev{F}_{e}^{\sharp}}$ do exist, so that, by proposition \ref{prop:Tarski:constructive}, $\Lfp{\sqsubseteq_{+}^{\sharp}}{\Cev{F}_{e}^{\sharp}}$ does exists. The same way $\Lfp{\sqsubseteq_{+}^{\sharp}}{\vec{F}_{e}^{\sharp}}$ does exist by (\ref{eq:natural-transformer-finite-forward}).
\end{proof}
\end{toappendix}
Let us show that $\Lfp{\sqsubseteq_{+}^{\sharp}}{\Cev{F}_{e}^{\sharp}}$ = $\Lfp{\sqsubseteq_{+}^{\sharp}}{\vec{F}_{e}^{\sharp}}$ inductively defines the set of finite executions reaching the entry of the iteration \texttt{while(B) S} after zero or more terminating body iterations. To see that, we define

\begin{eqntabular}[fl]{@{\qquad}Pt{0.85\textwidth}}
the powers $\pair{X^\delta}{\delta\in\mathbb{O}}$ of $X\in{\mathbb{L}^{\sharp}_{+}}$ are $X^0$ $\triangleq$ ${\textsf{\textup{init}}^{\sharp}}$, 
$X^{\delta+1}$ $\triangleq$ $X\mathbin{{\fatsemi}^{\sharp}}X^{\delta}$ for successor ordinals, and $X^\lambda\triangleq\mathop{\bigsqcup_{+}^{\sharp}}_{\beta<\lambda}X^\beta$ for limit ordinals. 
\label{eq:def:abstract:powers}
\end{eqntabular}
We now characterize the executions of iterations in terms of the fixpoints of the execution transformers \ref{eq:natural-transformer-finite-backward}---\ref{eq:natural-transformer-finite-forward}. We show that $\Lfp{\sqsubseteq_{+}^{\sharp}}{\Cev{F}_{e}^{\sharp}}$ = $\Lfp{\sqsubseteq_{+}^{\sharp}}{\vec{F}_{e}^{\sharp}}$ inductively characterize 0 or more finite iterations of the loop body for which the loop condition holds and the loop body terminates.
\begin{lemma}[Commutativity]\label{lem:abstract:powers:commute}\proofinapx\quad If\/ $\mathbb{D}^{\sharp}_{+}$ is a well-defined complete lattice (resp.\ increasing chain-complete poset) with 
right existing ${\sqcup_{+}^{\sharp}}$-preserving (resp.\ right upper continuous) composition ${{\fatsemi}^{\sharp}}$ and $X\in{\mathbb{L}^{\sharp}_{+}}$  then
$\forall \delta\in\mathbb{O}\mathrel{.}X\mathbin{{\fatsemi}^{\sharp}}X^{\delta} = X^{\delta}\mathbin{{\fatsemi}^{\sharp}}X$ (resp.\ if $\pair{X^{\delta}}{\delta\in\mathbb{O}}$ is an increasing chain).
\end{lemma}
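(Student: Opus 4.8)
The plan is to prove the identity $X\mathbin{\fatsemi^{\sharp}}X^{\delta}=X^{\delta}\mathbin{\fatsemi^{\sharp}}X$ by transfinite induction on $\delta\in\mathbb{O}$, reading the powers $X^{\delta}$ from their definition (\ref{eq:def:abstract:powers}). The two structural ingredients I would rely on are the associativity of $\mathbin{\fatsemi^{\sharp}}$ granted by \ref{def:abstract:domain:well:def:operators}, and the fact that a composition which right preserves existing joins (resp.\ is right upper continuous) is in particular increasing in its right parameter, by the implications among \ref{def:Properties:increasing}, \ref{def:Properties:continuous}, and \ref{def:Properties:limit-preserving}. In the poset case the limit $X^{\lambda}$ exists precisely because the powers are assumed to form an increasing chain and $\mathbb{D}^{\sharp}_{+}$ is increasing chain complete.

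The base and successor cases are clean and require no continuity. For $\delta=0$ we have $X^{0}=\textsf{init}^{\sharp}$, so both $X\mathbin{\fatsemi^{\sharp}}X^{0}$ and $X^{0}\mathbin{\fatsemi^{\sharp}}X$ equal $X$ by the two-sided neutrality of $\textsf{init}^{\sharp}$ in \ref{def:abstract:domain:well:def:init:neutral}. For a successor $\delta+1$, unfolding $X^{\delta+1}=X\mathbin{\fatsemi^{\sharp}}X^{\delta}$ gives $X\mathbin{\fatsemi^{\sharp}}X^{\delta+1}=X\mathbin{\fatsemi^{\sharp}}(X\mathbin{\fatsemi^{\sharp}}X^{\delta})=X\mathbin{\fatsemi^{\sharp}}(X^{\delta}\mathbin{\fatsemi^{\sharp}}X)$ by the induction hypothesis, which equals $(X\mathbin{\fatsemi^{\sharp}}X^{\delta})\mathbin{\fatsemi^{\sharp}}X=X^{\delta+1}\mathbin{\fatsemi^{\sharp}}X$ by associativity alone. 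Equivalently, associativity makes every finite power the unambiguous iterate $\underbrace{X\mathbin{\fatsemi^{\sharp}}\cdots\mathbin{\fatsemi^{\sharp}}X}_{\delta\text{ times}}$, so commutation there is immediate.

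The substance lies in the limit case. For a limit ordinal $\lambda$ we have $X^{\lambda}=\mathop{\bigsqcup_{+}^{\sharp}}_{\beta<\lambda}X^{\beta}$, and on the \emph{left factor} side I can compute directly: right $\sqcup_{+}^{\sharp}$-preservation (complete lattice case), respectively right upper continuity applied to the increasing chain $\pair{X^{\beta}}{\beta<\lambda}$ (poset case, which is exactly why that hypothesis is attached), yields $X\mathbin{\fatsemi^{\sharp}}X^{\lambda}=\mathop{\bigsqcup_{+}^{\sharp}}_{\beta<\lambda}(X\mathbin{\fatsemi^{\sharp}}X^{\beta})=\mathop{\bigsqcup_{+}^{\sharp}}_{\beta<\lambda}X^{\beta+1}$, using the definition $X^{\beta+1}=X\mathbin{\fatsemi^{\sharp}}X^{\beta}$. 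The remaining task is to reduce $X^{\lambda}\mathbin{\fatsemi^{\sharp}}X$ to the same value $\mathop{\bigsqcup_{+}^{\sharp}}_{\beta<\lambda}X^{\beta+1}$; since for $\beta<\lambda$ the induction hypothesis gives $X^{\beta}\mathbin{\fatsemi^{\sharp}}X=X^{\beta+1}$, this is what makes the two sides coincide.

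I expect the reverse-side reduction to be the main obstacle, because in $X^{\lambda}\mathbin{\fatsemi^{\sharp}}X$ the transfinite join sits in the \emph{left} parameter of $\mathbin{\fatsemi^{\sharp}}$, where the right-preservation hypothesis gives no distributivity. My plan to discharge it is to exploit the increasing-chain (resp.\ completeness) hypothesis together with the already established identity $X\mathbin{\fatsemi^{\sharp}}X^{\lambda}=\mathop{\bigsqcup_{+}^{\sharp}}_{\beta<\lambda}X^{\beta+1}$: cofinality of the successor-shifted family $\pair{X^{\beta+1}}{\beta<\lambda}$ in the chain $\pair{X^{\beta}}{\beta<\lambda}$ collapses that join to $X^{\lambda}$, and associativity rewrites $X^{\lambda}\mathbin{\fatsemi^{\sharp}}X=(X\mathbin{\fatsemi^{\sharp}}X^{\lambda})\mathbin{\fatsemi^{\sharp}}X=X\mathbin{\fatsemi^{\sharp}}(X^{\lambda}\mathbin{\fatsemi^{\sharp}}X)$, which together with monotonicity and the chain hypothesis sandwiches $X^{\lambda}\mathbin{\fatsemi^{\sharp}}X$ between $X^{\lambda}$ and itself. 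Once both $X\mathbin{\fatsemi^{\sharp}}X^{\lambda}$ and $X^{\lambda}\mathbin{\fatsemi^{\sharp}}X$ are shown equal to $X^{\lambda}$, the limit case closes and the transfinite induction completes the proof.
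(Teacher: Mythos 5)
Your base case, successor case, and the limit-case computation $X\mathbin{{\fatsemi}^{\sharp}}X^{\lambda}=\mathop{\bigsqcup_{+}^{\sharp}}_{\beta<\lambda}(X\mathbin{{\fatsemi}^{\sharp}}X^{\beta})=\mathop{\bigsqcup_{+}^{\sharp}}_{\beta<\lambda}X^{\beta+1}$ coincide exactly with the paper's proof, which --- notably --- stops at that last equality and never treats $X^{\lambda}\mathbin{{\fatsemi}^{\sharp}}X$ at all. Your diagnosis that the real difficulty is the reverse side, where the transfinite join sits in the \emph{left} argument of ${{\fatsemi}^{\sharp}}$ while the hypotheses only constrain the right one, is therefore sharper than the paper's own write-up.

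The gap is in the step you propose to close it. The ``sandwich'' needs $X^{\beta}\mathbin{{\fatsemi}^{\sharp}}X\mathrel{\sqsubseteq_{+}^{\sharp}}X^{\lambda}\mathbin{{\fatsemi}^{\sharp}}X$ for $\beta<\lambda$, i.e.\ monotonicity of ${{\fatsemi}^{\sharp}}$ in its left parameter; but the lemma only assumes right $\sqcup_{+}^{\sharp}$-preservation (resp.\ right upper continuity), and by \ref{def:abstract:domain:well:def:join:additive} left and right properties are independent, so left-monotonicity is simply not available. The identity $X^{\lambda}\mathbin{{\fatsemi}^{\sharp}}X=(X\mathbin{{\fatsemi}^{\sharp}}X^{\lambda})\mathbin{{\fatsemi}^{\sharp}}X=X\mathbin{{\fatsemi}^{\sharp}}(X^{\lambda}\mathbin{{\fatsemi}^{\sharp}}X)$ only exhibits $X^{\lambda}\mathbin{{\fatsemi}^{\sharp}}X$ as a fixed point of $Y\mapsto X\mathbin{{\fatsemi}^{\sharp}}Y$, which does not pin it down to $X^{\lambda}$. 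A secondary wrinkle: the cofinality collapse $\mathop{\bigsqcup_{+}^{\sharp}}_{\beta<\lambda}X^{\beta+1}=X^{\lambda}$ requires the powers to form an increasing chain, which is an explicit hypothesis only in the poset branch of the statement; in the complete-lattice branch even $X^{0}={\textsf{init}^{\sharp}}\mathrel{\sqsubseteq_{+}^{\sharp}}X^{1}=X$ is not guaranteed. So your attempt to go beyond the paper correctly identifies the missing obligation but does not discharge it under the stated hypotheses; some left-hand property of ${{\fatsemi}^{\sharp}}$ (or a restriction to instances, such as the trace semantics, where ${{\fatsemi}^{\sharp}}$ left-preserves joins) would be needed to make the limit case go through.
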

\begin{toappendix}
\begin{proof}[Proof of lemma \ref{lem:abstract:powers:commute}]The proof is by transfinite induction on $\delta$. 
\begin{itemize}[leftmargin=*,nosep]

\item For $\delta=0$, we have
$X\mathbin{{\fatsemi}^{\sharp}}X^{0}$
=
$X\mathbin{{\fatsemi}^{\sharp}}{\textsf{\textup{init}}^{\sharp}}$
=
${\textsf{\textup{init}}^{\sharp}}\mathbin{{\fatsemi}^{\sharp}} X$
=
$X^{0}\mathbin{{\fatsemi}^{\sharp}} X$ by definition \ref{def:abstract:domain:well:def:init:neutral} and definition (\ref{eq:def:abstract:powers}) of the powers.

\item If $X\mathbin{{\fatsemi}^{\sharp}}X^{\delta}$ = $X^{\delta}\mathbin{{\fatsemi}^{\sharp}}X$  by induction hypothesis, then
$X\mathbin{{\fatsemi}^{\sharp}}X^{\delta+1}$ 
=
$X\mathbin{{\fatsemi}^{\sharp}}(X\mathbin{{\fatsemi}^{\sharp}}X^{\delta})$ 
=
$X\mathbin{{\fatsemi}^{\sharp}}(X^{\delta}\mathbin{{\fatsemi}^{\sharp}}X)$ 
=
$(X\mathbin{{\fatsemi}^{\sharp}}X^{\delta})\mathbin{{\fatsemi}^{\sharp}}X$ 
= 
$X^{\delta+1}\mathbin{{\fatsemi}^{\sharp}}X$
by def.\ (\ref{eq:def:abstract:powers}) of the iterates, induction hypothesis, associativity \ref{def:abstract:domain:well:def:operators}, and (\ref{eq:def:abstract:powers}).

\item If $\lambda$ is a limit ordinal and $\forall\beta<\lambda\mathrel{.}X\mathbin{{\fatsemi}^{\sharp}}X^{\beta}$ = $X^{\beta}\mathbin{{\fatsemi}^{\sharp}}X$  by induction hypothesis, then
$X\mathbin{{\fatsemi}^{\sharp}}X^{\lambda}$
=
$X\mathbin{{\fatsemi}^{\sharp}}(\mathop{\bigsqcup_{+}^{\sharp}}_{\beta<\lambda}X^\beta)$
=
$\mathop{\bigsqcup_{+}^{\sharp}}_{\beta<\lambda} (X\mathbin{{\fatsemi}^{\sharp}}X^\beta)$
=
$\mathop{\bigsqcup_{+}^{\sharp}}_{\beta<\lambda} X^{\beta+1}$
by (\ref{eq:def:abstract:powers}), right existing ${\sqcup_{+}^{\sharp}}$-preserving ${{\fatsemi}^{\sharp}}$ \ref{def:abstract:domain:well:limit-preserving} (resp.\ right upper continuity when $\pair{X^{\delta}}{\delta\in\mathbb{O}}$ is an increasing chain \ref{def:abstract:domain:well:continuous}).\qed
\end{itemize}
\let\qed\relax
\end{proof}
\end{toappendix}
\begin{lemma}[Finite body iterations]\label{lem:abstract:Lfp-subseteq-F-e}\proofinapx\quad
If\/ $\mathbb{D}^{\sharp}_{+}$ is a well-defined increasing chain-complete join semilattice with right upper continuous composition ${{\fatsemi}^{\sharp}}$ then
$\Lfp{\sqsubseteq_{+}^{\sharp}}{\Cev{F}_{e}^{\sharp}}$ = $\smash{\mathop{\bigsqcup_{+}^{\sharp}}\limits_{\delta\in\mathbb{O}}(\sqb{\texttt{B;S}}_{e}^{\sharp})^{\delta}}$.
\end{lemma}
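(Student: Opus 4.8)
Throughout write $W \triangleq \sqb{\texttt{B;S}}_{e}^{\sharp}$, so that by (\ref{eq:natural-transformer-finite-backward}) $\Cev{F}_{e}^{\sharp} = \LAMBDA{X\in{\mathbb{L}^{\sharp}_{+}}}{\textsf{init}^{\sharp}} \mathbin{\sqcup_{+}^{\sharp}} (W \mathbin{{\fatsemi}^{\sharp}} X)$, and recall from (\ref{eq:def:abstract:powers}) that $W^{0}={\textsf{init}^{\sharp}}$, $W^{\delta+1}=W\mathbin{{\fatsemi}^{\sharp}}W^{\delta}$, and $W^{\lambda}=\mathop{\bigsqcup_{+}^{\sharp}}_{\beta<\lambda}W^{\beta}$ at limits. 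The plan is to reduce the whole statement to the ordinary Kleene iteration. Since $\mathbin{{\fatsemi}^{\sharp}}$ is right upper continuous and $\mathbin{\sqcup_{+}^{\sharp}}$ is upper continuous in each argument, the composite $\Cev{F}_{e}^{\sharp}$ is upper continuous; its least fixpoint exists by lemma \ref{lem:Fesharp-welldefined}, and by proposition \ref{prop:Tarski:constructive} the iterates $Y^{\delta}$ of $\Cev{F}_{e}^{\sharp}$ from $Y^{0}=\bot_{+}^{\sharp}$ are stationary at $\omega$, so that $\Lfp{\sqsubseteq_{+}^{\sharp}}\Cev{F}_{e}^{\sharp}=\mathop{\bigsqcup_{+}^{\sharp}}_{n\in\mathbb{N}}Y^{n}$. (The transfinite powers $W^{\lambda}$ and the class join in the statement are well defined because each reduces, via the increasing partial-sum chain $Z^{\delta}\triangleq\mathop{\bigsqcup_{+}^{\sharp}}_{\beta<\delta}W^{\beta}$ sharing the same upper bounds as $\{W^{\beta}\mid\beta<\delta\}$, to a chain lub guaranteed by increasing chain-completeness \ref{def:abstract:domain:well:def:finite:domain}.)

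The heart of the argument is the elementary induction on $n\in\mathbb{N}$ showing $Y^{n}=\mathop{\bigsqcup_{+}^{\sharp}}_{k<n}W^{k}$. For $n=0$ both sides are $\bot_{+}^{\sharp}$, and $Y^{1}={\textsf{init}^{\sharp}}\mathbin{\sqcup_{+}^{\sharp}}(W\mathbin{{\fatsemi}^{\sharp}}\bot_{+}^{\sharp})={\textsf{init}^{\sharp}}=W^{0}$ using the right absorption \ref{def:abstract:domain:well:def:bot:absorbent}. For the inductive step, $Y^{n+1}={\textsf{init}^{\sharp}}\mathbin{\sqcup_{+}^{\sharp}}(W\mathbin{{\fatsemi}^{\sharp}}\mathop{\bigsqcup_{+}^{\sharp}}_{k<n}W^{k})$; here the inner join is \emph{finite}, so right finite-join preservation (part of right upper continuity, \ref{def:abstract:domain:well:continuous}/\ref{def:abstract:domain:well:finite-join-preserving}) distributes $W\mathbin{{\fatsemi}^{\sharp}}(-)$ to give ${\textsf{init}^{\sharp}}\mathbin{\sqcup_{+}^{\sharp}}\mathop{\bigsqcup_{+}^{\sharp}}_{k<n}W^{k+1}=\mathop{\bigsqcup_{+}^{\sharp}}_{k<n+1}W^{k}$, using $W^{0}={\textsf{init}^{\sharp}}$ and $W^{k+1}=W\mathbin{{\fatsemi}^{\sharp}}W^{k}$. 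Passing to the lub over $\mathbb{N}$ then yields $\Lfp{\sqsubseteq_{+}^{\sharp}}\Cev{F}_{e}^{\sharp}=\mathop{\bigsqcup_{+}^{\sharp}}_{n\in\mathbb{N}}W^{n}=W^{\omega}$, the last equality being the limit clause of (\ref{eq:def:abstract:powers}).

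It remains to replace $W^{\omega}$ by the join over \emph{all} ordinals, i.e.\ to prove $\mathop{\bigsqcup_{+}^{\sharp}}_{\delta\in\mathbb{O}}W^{\delta}=W^{\omega}$. Because $W^{\omega}=\Lfp{\sqsubseteq_{+}^{\sharp}}\Cev{F}_{e}^{\sharp}$ is a fixpoint, ${\textsf{init}^{\sharp}}\mathbin{\sqcup_{+}^{\sharp}}(W\mathbin{{\fatsemi}^{\sharp}}W^{\omega})=W^{\omega}$, whence $W^{\omega+1}=W\mathbin{{\fatsemi}^{\sharp}}W^{\omega}\sqsubseteq_{+}^{\sharp}W^{\omega}$. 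A transfinite induction then gives $W^{\delta}\sqsubseteq_{+}^{\sharp}W^{\omega}$ for every $\delta$: the successor step uses right monotonicity of $\mathbin{{\fatsemi}^{\sharp}}$ (\ref{def:abstract:domain:well:def:increasing}) together with the induction hypothesis and $W^{\omega+1}\sqsubseteq_{+}^{\sharp}W^{\omega}$, and the limit step follows since $W^{\lambda}$ is the lub of powers already bounded by $W^{\omega}$. As $W^{\omega}$ is thus an upper bound of $\{W^{\delta}\mid\delta\in\mathbb{O}\}$ while $W^{\omega}=\mathop{\bigsqcup_{+}^{\sharp}}_{n\in\mathbb{N}}W^{n}\sqsubseteq_{+}^{\sharp}\mathop{\bigsqcup_{+}^{\sharp}}_{\delta\in\mathbb{O}}W^{\delta}$, the two joins coincide and the lemma follows.

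The step I expect to be delicate is the distribution of $\mathbin{{\fatsemi}^{\sharp}}$ across the join defining $W^{\omega}$: the powers $\pair{W^{\delta}}{\delta}$ are in general \emph{not} a chain, so right upper continuity cannot be applied to $\mathop{\bigsqcup_{+}^{\sharp}}_{n}W^{n}$ directly. Reducing to Kleene iteration at $\omega$ is precisely what circumvents this, since every distribution there is over a \emph{finite} join $Y^{n}$, where \ref{def:abstract:domain:well:finite-join-preserving} applies, while the extension to arbitrary ordinals is carried out purely by order comparison rather than by manipulating joins. Should one wish to avoid the continuity hypothesis and argue fully transfinitely, the same result is reachable by proving $Y^{\delta}=Z^{\delta}$ for the partial-sum chain $Z^{\delta}=\mathop{\bigsqcup_{+}^{\sharp}}_{\beta<\delta}W^{\beta}$ and distributing $\mathbin{{\fatsemi}^{\sharp}}$ along that chain via right upper continuity; that route additionally requires observing that the limit-indexed powers are subsumed by the non-limit ones, which is exactly the bookkeeping the $\omega$-route sidesteps.
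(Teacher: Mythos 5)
Your proof is correct, but it takes a genuinely different route from the paper's. The paper proves, by a single transfinite induction, that the $\delta$-th iterate of ${\Cev{F}_{e}^{\sharp}}$ equals the partial join $\mathop{\bigsqcup_{+}^{\sharp}}_{\beta<\delta}(\sqb{\texttt{B;S}}_{e}^{\sharp})^{\beta}$ for \emph{every} ordinal $\delta$, distributing $\mathbin{{\fatsemi}^{\sharp}}$ over that (possibly infinite, non-chain) join at each successor step by appeal to \ref{def:abstract:domain:well:def:join:additive}, and then concludes by proposition \ref{prop:Tarski:constructive}. You instead split the work in two: first you use upper continuity to stop the Kleene iteration at $\omega$, so that the identification of iterates with partial joins only ever requires distributing $\mathbin{{\fatsemi}^{\sharp}}$ over \emph{finite} joins; then you recover the join over all of $\mathbb{O}$ by pure order comparison, showing $W^{\delta}\sqsubseteq_{+}^{\sharp}W^{\omega}$ transfinitely from $W^{\omega+1}\sqsubseteq_{+}^{\sharp}W^{\omega}$ and right increasingness. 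Your diagnosis of the delicate point is apt: the powers $\pair{W^{\delta}}{\delta\in\mathbb{O}}$ need not form a chain, so right upper continuity alone does not directly license the distribution the paper performs at infinite successor stages (the paper implicitly leans on the convention of \ref{def:abstract:domain:well:continuous} that continuity subsumes finite join preservation, together with the partial-sum-chain bookkeeping you sketch at the end). What the paper's uniform transfinite induction buys is that it adapts unchanged to the non-continuous case, where the fixpoint is only reached transfinitely and one must assume preservation of arbitrary existing joins; what your version buys is that, under the lemma's actual hypothesis, every join manipulation is either finite or a comparison against an already-established upper bound, so no distribution over an infinite non-chain family is ever needed. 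Both arguments rest on the same absorption and neutrality axioms (\ref{def:abstract:domain:well:def:bot:absorbent}, \ref{def:abstract:domain:well:def:init:neutral}) for the base cases.
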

\begin{toappendix}
\begin{proof}[Proof of lemma \ref{lem:abstract:Lfp-subseteq-F-e}] By lemma \ref{lem:Fesharp-welldefined}, if\/ $\mathbb{D}^{\sharp}_{+}$ is a well-defined incraesing chain-complete join semilattice with right upper continuous composition then ${\Cev{F}_{e}^{\sharp}}$ in (\ref{eq:natural-transformer-finite-forward}) is upper continuous hence increasing since continuous functions are increasing and the
composition of increasing functions is increasing. It follows, by proposition \ref{prop:Tarski:constructive}, that the least fixpoint $\Lfp{\sqsubseteq_{+}^{\sharp}}{\Cev{F}_{e}^{\sharp}}$ exists and is the limit of the increasing iterates $\pair{X^\delta}{\delta\in\mathbb{O}}$ of ${\Cev{F}_{e}^{\sharp}}$ from the infimum ${\bot_{+}^{\sharp}}$ (which exists in a chain-complete lattice).

Let us prove that $X^{\delta}$ = ${\bigsqcup_{+}^{\sharp}}_{\beta<\delta}(\sqb{\texttt{B;S}}_{e}^{\sharp})^{\beta}$ by transfinite induction on $\delta$. 
\begin{itemize}[leftmargin=*,nosep]

\item For $\delta=0$, we have 
$X^{0}$ = ${\bot_{+}^{\sharp}}$ = ${\bigsqcup_{+}^{\sharp}}\emptyset$ = ${\bigsqcup_{+}^{\sharp}}_{\beta<0}(\sqb{\texttt{B;S}}_{e}^{\sharp})^{\beta}$ by definition of the iterates and the infimum.

\item Assume by induction hypothesis that $X^{\delta}$ = ${\bigsqcup_{+}^{\sharp}}_{\beta<\delta}(\sqb{\texttt{B;S}}_{e}^{\sharp})^{\beta}$. Then 
$X^{\delta+1}$
=
${\Cev{F}_{e}^{\sharp}}(X^{\delta})$ 
=
${\textsf{init}^{\sharp}} \mathbin{\sqcup_{+}^{\sharp}}(\sqb{\texttt{B;S}}_{e}^{\sharp}\mathbin{{\fatsemi}^{\sharp}} X^{\delta})$
=
${\textsf{init}^{\sharp}} \mathbin{\sqcup_{+}^{\sharp}}( \sqb{\texttt{B;S}}_{e}^{\sharp} \mathbin{{\fatsemi}^{\sharp}}({\bigsqcup_{+}^{\sharp}}_{\beta<\delta}(\sqb{\texttt{B;S}}_{e}^{\sharp})^{\beta}) )$
=
${\textsf{init}^{\sharp}} \mathbin{\sqcup_{+}^{\sharp}}{\bigsqcup_{+}^{\sharp}}_{\beta<\delta}((\sqb{\texttt{B;S}}_{e}^{\sharp}\mathbin{{\fatsemi}^{\sharp}}\sqb{\texttt{B;S}}_{e}^{\sharp})^{\beta} )$
=
$(\sqb{\texttt{B;S}}_{e}^{\sharp})^0\mathbin{\sqcup_{+}^{\sharp}}{\bigsqcup_{+}^{\sharp}}_{\beta<\delta}(\sqb{\texttt{B;S}}_{e}^{\sharp})^{\beta+1} )$
=
${\bigsqcup_{+}^{\sharp}}_{\beta<\delta+1}(\sqb{\texttt{B;S}}_{e}^{\sharp})^{\beta}$
by definition of iterates,
definition (\ref{eq:natural-transformer-finite-backward}) of ${\Cev{F}_{e}^{\sharp}}$,
induction hypothesis,
definition \ref{def:abstract:domain:well:def:join:additive},
definition of the powers,
grouping terms in the join.

\item Assume that $\lambda$ is a limit ordinal and that, by induction hypothesis, $\forall\delta<\lambda\mathrel{.}X^{\delta}$ = ${\bigsqcup_{+}^{\sharp}}_{\beta<\delta}(\sqb{\texttt{B;S}}_{e}^{\sharp})^{\beta}$. Then we have
$X^{\lambda}$
=
${\bigsqcup_{+}^{\sharp}}_{\delta<\lambda} X^{\delta}$
=
${\bigsqcup_{+}^{\sharp}}_{\delta<\lambda} {\bigsqcup_{+}^{\sharp}}_{\beta<\delta}(\sqb{\texttt{B;S}}_{e}^{\sharp})^{\beta}$
=
${\bigsqcup_{+}^{\sharp}}_{\beta<\lambda} (\sqb{\texttt{B;S}}_{e}^{\sharp})^{\beta}$
by definition of the iterates, induction hypothesis, and definition of the join ${\sqcup_{+}^{\sharp}}$ (which exists since the iterates are increasing.
\end{itemize}
We conclude by proposition \ref{prop:Tarski:constructive} that $\Lfp{\sqsubseteq_{+}^{\sharp}}{\Cev{F}_{e}^{\sharp}}$ = $\mathop{\bigsqcup_{+}^{\sharp}}\limits_{\delta\in\mathbb{O}}(\sqb{\texttt{B;S}}_{e}^{\sharp})^{\delta}$. 
\end{proof}
\end{toappendix}
\begin{lemma}[Forward versus backward]\label{lem:abstract:forward=backward}\proofinapx\quad
If\/ $\mathbb{D}^{\sharp}$ is a well-defined increasing chain-complete join semilattice with right upper continuous sequential composition $\mathbin{{\fatsemi}^{\sharp}}$ then
$\Lfp{\sqsubseteq_{+}^{\sharp}}{\Cev{F}_{e}^{\sharp}}$ = $\Lfp{\sqsubseteq_{+}^{\sharp}}{\vec{F}_{e}^{\sharp}}$.
\end{lemma}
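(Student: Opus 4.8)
The plan is to reduce \emph{both} least fixpoints to one and the same transfinite join of powers of $\sqb{\texttt{B;S}}_{e}^{\sharp}$, so that their equality falls out of lemma~\ref{lem:abstract:Lfp-subseteq-F-e} together with lemma~\ref{lem:abstract:powers:commute}. The whole tension of the statement is that $\Cev{F}_{e}^{\sharp}$ feeds its argument into the \emph{right} slot of $\mathbin{{\fatsemi}^{\sharp}}$ (it is the map $X\mapsto\textsf{init}^{\sharp}\mathbin{\sqcup_{+}^{\sharp}}(\sqb{\texttt{B;S}}_{e}^{\sharp}\mathbin{{\fatsemi}^{\sharp}}X)$), whereas $\vec{F}_{e}^{\sharp}$ feeds it into the \emph{left} slot; the commutativity lemma is exactly the device meant to bridge this asymmetry.

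First I would apply lemma~\ref{lem:abstract:Lfp-subseteq-F-e} verbatim to the backward transformer, obtaining $\Lfp{\sqsubseteq_{+}^{\sharp}}{\Cev{F}_{e}^{\sharp}}=\mathop{\bigsqcup_{+}^{\sharp}}\limits_{\delta\in\mathbb{O}}(\sqb{\texttt{B;S}}_{e}^{\sharp})^{\delta}$, where the powers $(\sqb{\texttt{B;S}}_{e}^{\sharp})^{\delta}$ are those of~(\ref{eq:def:abstract:powers}), i.e.\ built by composing \emph{on the left}. Next I would obtain the symmetric characterization for the forward transformer: running the argument dual to lemma~\ref{lem:abstract:Lfp-subseteq-F-e} gives $\Lfp{\sqsubseteq_{+}^{\sharp}}{\vec{F}_{e}^{\sharp}}=\mathop{\bigsqcup_{+}^{\sharp}}\limits_{\delta\in\mathbb{O}}Z^{\delta}$, where $Z^{\delta}$ are the \emph{right-composed} powers defined by $Z^{0}\triangleq\textsf{init}^{\sharp}$, $Z^{\delta+1}\triangleq Z^{\delta}\mathbin{{\fatsemi}^{\sharp}}\sqb{\texttt{B;S}}_{e}^{\sharp}$, and $Z^{\lambda}\triangleq\mathop{\bigsqcup_{+}^{\sharp}}\limits_{\beta<\lambda}Z^{\beta}$ at limits.

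Finally I would identify the two power families by a short transfinite induction on $\delta$: the base cases agree since $Z^{0}=\textsf{init}^{\sharp}=(\sqb{\texttt{B;S}}_{e}^{\sharp})^{0}$, the limit case is immediate from the induction hypothesis and the definition of the two limits, and the successor case is precisely lemma~\ref{lem:abstract:powers:commute}, namely $(\sqb{\texttt{B;S}}_{e}^{\sharp})^{\delta+1}=\sqb{\texttt{B;S}}_{e}^{\sharp}\mathbin{{\fatsemi}^{\sharp}}(\sqb{\texttt{B;S}}_{e}^{\sharp})^{\delta}=(\sqb{\texttt{B;S}}_{e}^{\sharp})^{\delta}\mathbin{{\fatsemi}^{\sharp}}\sqb{\texttt{B;S}}_{e}^{\sharp}=Z^{\delta+1}$, using the induction hypothesis $Z^{\delta}=(\sqb{\texttt{B;S}}_{e}^{\sharp})^{\delta}$. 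Hence $Z^{\delta}=(\sqb{\texttt{B;S}}_{e}^{\sharp})^{\delta}$ for all $\delta$, the two transfinite joins coincide, and therefore $\Lfp{\sqsubseteq_{+}^{\sharp}}{\Cev{F}_{e}^{\sharp}}=\Lfp{\sqsubseteq_{+}^{\sharp}}{\vec{F}_{e}^{\sharp}}$. I expect the main obstacle to be the second step, the forward characterization: lemma~\ref{lem:abstract:Lfp-subseteq-F-e} distributes $\mathbin{{\fatsemi}^{\sharp}}$ over the joins that arise in the \emph{right} slot, whereas the forward iteration presents those joins in the \emph{left} slot of $\mathbin{{\fatsemi}^{\sharp}}$, where the hypothesis grants no continuity. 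The real work is thus to deploy commutativity (lemma~\ref{lem:abstract:powers:commute}) so that every distribution of $\mathbin{{\fatsemi}^{\sharp}}$ over a join of iterates is converted to one in its right argument before the right-upper-continuity hypothesis is invoked; checking that this conversion is available at each successor and limit stage is the delicate point on which the proof turns.
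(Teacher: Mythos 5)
Your proposal is correct and follows essentially the same route as the paper's proof: the paper likewise reduces both fixpoints to $\mathop{\bigsqcup_{+}^{\sharp}}_{\delta\in\mathbb{O}}(\sqb{\texttt{B;S}}_{e}^{\sharp})^{\delta}$ by computing the transfinite iterates of ${\vec{F}_{e}^{\sharp}}$ exactly as in lemma~\ref{lem:abstract:Lfp-subseteq-F-e} and invoking lemma~\ref{lem:abstract:powers:commute} at the successor step, your only deviation being to factor the identification of the left- and right-composed power families into a separate transfinite induction rather than folding it into the iterate computation. The left-slot distribution worry you raise is genuine but is present identically in the paper's own argument (which justifies that step by appeal to definition~\ref{def:abstract:domain:well:def:join:additive}), so your commute-before-distributing remedy is, if anything, the more careful reading of the same proof.
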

\begin{toappendix}
\begin{proof}[Proof of lemma \ref{lem:abstract:forward=backward}]The proof is similar to that of lemma \ref{lem:abstract:Lfp-subseteq-F-e}. Let $\pair{X^{\delta}}{\delta\in\mathbb{O}}$ be the iterates
of ${\vec{F}_{e}^{\sharp}}$. For the basis, $X^{0}$ = ${\bot_{+}^{\sharp}}$. 
For the successor induction step,
$X^{\delta+1}$
=
${\vec{F}_{e}^{\sharp}}(X^{\delta})$ 
=
${\textsf{init}^{\sharp}} \mathbin{\sqcup_{+}^{\sharp}}(X^{\delta}\mathbin{{\fatsemi}^{\sharp}} \sqb{\texttt{B;S}}_{e}^{\sharp})$
=
${\textsf{init}^{\sharp}} \mathbin{\sqcup_{+}^{\sharp}}(({\bigsqcup_{+}^{\sharp}}_{\beta<\delta}(\sqb{\texttt{B;S}}_{e}^{\sharp})^{\beta}) \mathbin{{\fatsemi}^{\sharp}} \sqb{\texttt{B;S}}_{e}^{\sharp})$
=
${\textsf{init}^{\sharp}} \mathbin{\sqcup_{+}^{\sharp}}{\bigsqcup_{+}^{\sharp}}_{\beta<\delta}((\sqb{\texttt{B;S}}_{e}^{\sharp})^{\beta}\mathbin{{\fatsemi}^{\sharp}} \sqb{\texttt{B;S}}_{e}^{\sharp})$
=
${\textsf{init}^{\sharp}} \mathbin{\sqcup_{+}^{\sharp}}{\bigsqcup_{+}^{\sharp}}_{\beta<\delta}((\sqb{\texttt{B;S}}_{e}^{\sharp}\mathbin{{\fatsemi}^{\sharp}}\sqb{\texttt{B;S}}_{e}^{\sharp})^{\beta} )$
=
$(\sqb{\texttt{B;S}}_{e}^{\sharp})^0\mathbin{\sqcup_{+}^{\sharp}}{\bigsqcup_{+}^{\sharp}}_{\beta<\delta}(\sqb{\texttt{B;S}}_{e}^{\sharp})^{\beta+1} )$
${\bigsqcup_{+}^{\sharp}}_{\beta<\delta+1}(\sqb{\texttt{B;S}}_{e}^{\sharp})^{\beta}$
by definition of iterates,
definition (\ref{eq:natural-transformer-finite-forward}) of ${\vec{F}_{e}^{\sharp}}$,
induction hypothesis,
definition \ref{def:abstract:domain:well:def:join:additive},
lemma \ref{lem:abstract:powers:commute},
definition of the powers,
grouping terms in the join.
For the limit induction step, $X^{\lambda}$
=
${\bigsqcup_{+}^{\sharp}}_{\delta<\lambda} X^{\delta}$
=
${\bigsqcup_{+}^{\sharp}}_{\delta<\lambda} {\bigsqcup_{+}^{\sharp}}_{\beta<\delta}(\sqb{\texttt{B;S}}_{e}^{\sharp})^{\beta}$
=
${\bigsqcup_{+}^{\sharp}}_{\beta<\lambda} (\sqb{\texttt{B;S}}_{e}^{\sharp})^{\beta}$
by definition of the iterates, induction hypothesis, and definition of the join. We conclude  that $\Lfp{\sqsubseteq_{+}^{\sharp}}{\Cev{F}_{e}^{\sharp}}$ = $\smash{\mathop{\bigsqcup_{+}^{\sharp}}\limits_{\delta\in\mathbb{O}}(\sqb{\texttt{B;S}}_{e}^{\sharp})^{\delta}}$ = $\Lfp{\sqsubseteq_{+}^{\sharp}}{\Cev{F}_{e}^{\sharp}}$ by proposition \ref{prop:Tarski:constructive} and lemma \ref{lem:abstract:Lfp-subseteq-F-e}.
\end{proof}
\end{toappendix}
\begin{example}Assume that the test \texttt{B} of the iteration \texttt{while (B) S} is always false, that is ${\textsf{test}^{\sharp}}\sqb{\texttt{B}}={\bot_{\infty}^{\sharp}}$. Then, by (\ref{eq:natural-transformer-finite-backward}), (\ref{eq:natural-transformer-finite-forward}), (\ref{def:abstract:domain:well:def:bot:absorbent}), and def.\ lub, ${\Cev{F}_{e}^{\sharp}}$ = ${\vec{F}_{e}^{\sharp}}$ = $\LAMBDA{X\in{\mathbb{L}^{\sharp}_{+}}}{\textsf{init}^{\sharp}}$. It follows that $\Lfp{\sqsubseteq_{+}^{\sharp}}{\Cev{F}_{e}^{\sharp}}$ = $\Lfp{\sqsubseteq_{+}^{\sharp}}{\vec{F}_{e}^{\sharp}}$ = ${\textsf{init}^{\sharp}}$ meaning that the loop is never entered. The semantics of the loop after 0 or more iterations is therefore that after 0 iterations.
\end{example}
\begin{lemma}[Infinite fixpoint well-definedness]\label{lem:Fbotsharp-welldefined}\proofinapx\quad If\/ $\mathbb{D}^{\sharp}_{\infty}$ is a well-defined decreasing chain complete poset and $\mathbin{{\fatsemi}^{\sharp}}$ right satisfies any one of the \ref{def:abstract:domain:well:def:increasing}, \ref{def:abstract:domain:well:finite-join-preserving}, \ref{def:abstract:domain:well:continuous}, or
\ref{def:abstract:domain:well:limit-preserving} properties for $\mathbb{D}^{\sharp}_{\infty}$ then ${F_{\bot}^{\sharp}}$ satisfies the same property and $\Gfp{\sqsubseteq_{\infty}^{\sharp}}{F_{\bot}^{\sharp}}$ does exist.
\end{lemma}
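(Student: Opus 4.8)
The plan is to dualise the argument of Lemma~\ref{lem:Fesharp-welldefined}, noting that $F_{\bot}^{\sharp}$ is in fact simpler than $\Cev{F}_{e}^{\sharp}$ because it involves no join. First I would read off from definition~(\ref{eq:trace-transformer-infinite}) that $F_{\bot}^{\sharp}=\LAMBDA{X\in{\mathbb{L}^{\sharp}_{\infty}}}\sqb{\texttt{B;S}}_{e}^{\sharp}\mathbin{{\fatsemi}^{\sharp}}X$ is nothing but the right section of the sequential composition $\mathbin{{\fatsemi}^{\sharp}}$ at the fixed left argument $\sqb{\texttt{B;S}}_{e}^{\sharp}\in{\mathbb{L}^{\sharp}_{+}}$, which by the typing of $\mathbin{{\fatsemi}^{\sharp}}$ in \ref{def:abstract:domain:well:def:operators} indeed maps ${\mathbb{L}^{\sharp}_{\infty}}$ into ${\mathbb{L}^{\sharp}_{\infty}}$. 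By definition~\ref{def:Properties:increasing:both:parameters}, to say that $\mathbin{{\fatsemi}^{\sharp}}$ \emph{right} satisfies one of the properties is exactly to say that every such right section $\LAMBDA{X}Z\mathbin{{\fatsemi}^{\sharp}}X$ has it; instantiating $Z\triangleq\sqb{\texttt{B;S}}_{e}^{\sharp}$ yields at once that $F_{\bot}^{\sharp}$ satisfies the same property. This discharges the first conclusion with no computation.

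For the existence of the greatest fixpoint I would then only need that $F_{\bot}^{\sharp}$ is increasing. Each of the four listed conditions \ref{def:abstract:domain:well:def:increasing}, \ref{def:abstract:domain:well:finite-join-preserving}, \ref{def:abstract:domain:well:continuous}, \ref{def:abstract:domain:well:limit-preserving} entails increasingness by the implication chain \ref{def:Properties:increasing}~$\Leftarrow$~\ref{def:Properties:finite-preserving}~$\Leftarrow$~\ref{def:Properties:continuous}~$\Leftarrow$~\ref{def:Properties:limit-preserving} recorded after definition~\ref{def:Properties:functions:posets}, so $F_{\bot}^{\sharp}$ is increasing on $\pair{\mathbb{L}^{\sharp}_{\infty}}{\sqsubseteq_{\infty}^{\sharp}}$. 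By \ref{def:abstract:domain:well:def:infinite:domain}, $\mathbb{D}^{\sharp}_{\infty}$ is a decreasing chain-complete poset possessing the supremum $\top_{\infty}^{\sharp}$. I would close by invoking the order dual of proposition~\ref{prop:Tarski:constructive}: the dual iterates $Y^{0}\triangleq\top_{\infty}^{\sharp}$, $Y^{\delta+1}\triangleq F_{\bot}^{\sharp}(Y^{\delta})$, and $Y^{\lambda}\triangleq\bigsqcap_{\beta<\lambda}Y^{\beta}$ at limit ordinals are well-defined (the glbs of decreasing chains exist by decreasing chain-completeness), form a decreasing chain, and are ultimately stationary at $\Gfp{\sqsubseteq_{\infty}^{\sharp}}{F_{\bot}^{\sharp}}$, which therefore exists.

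There is essentially no obstacle, and in particular none of the power-commutation or join-regrouping of Lemmas~\ref{lem:abstract:powers:commute}--\ref{lem:abstract:Lfp-subseteq-F-e} is needed, since the infinite transformer is a bare right composition rather than a join of a neutral element with a composition. The only two points demanding care are (i) verifying that this right section is literally the object quantified over in definition~\ref{def:Properties:increasing:both:parameters}, so that the property is inherited verbatim rather than up to a side condition, and (ii) citing proposition~\ref{prop:Tarski:constructive} strictly in its order-dual reading---least replaced by greatest, $\bot$ by $\top_{\infty}^{\sharp}$, joins of increasing chains by glbs of decreasing chains, and increasing chain-completeness by decreasing chain-completeness as furnished by \ref{def:abstract:domain:well:def:infinite:domain}.
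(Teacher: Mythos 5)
Your proposal is correct and follows essentially the same route as the paper's proof: observe that $F_{\bot}^{\sharp}$ is the right section of $\mathbin{{\fatsemi}^{\sharp}}$ at the constant left argument $\sqb{\texttt{B;S}}_{e}^{\sharp}$, so it inherits verbatim whichever right property the composition has, and then invoke the order dual of proposition~\ref{prop:Tarski:constructive} on the decreasing chain-complete poset to obtain $\Gfp{\sqsubseteq_{\infty}^{\sharp}}{F_{\bot}^{\sharp}}$. Your added detail on the implication chain to increasingness and the explicit dual iterates from $\top_{\infty}^{\sharp}$ is merely a fuller spelling-out of what the paper leaves implicit.
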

\begin{toappendix}
\begin{proof}[Proof of lemma \ref{lem:Fbotsharp-welldefined}]
If $\mathbin{{\fatsemi}^{\sharp}}$ satisfies any one of the \ref{def:abstract:domain:well:def:increasing}, \ref{def:abstract:domain:well:finite-join-preserving}, \ref{def:abstract:domain:well:continuous}, or
\ref{def:abstract:domain:well:limit-preserving} properties for $\mathbb{D}^{\sharp}_{\infty}$ then, by (\ref{eq:trace-transformer-infinite}), 
${F_{\bot}^{\sharp}}=\LAMBDA{X\in{\mathbb{L}^{\sharp}_{\infty}}}\sqb{\texttt{B;S}}_{e}^{\sharp}\mathbin{{\fatsemi}^{\sharp}} X$ satisfies the 
same property since $\sqb{\texttt{B;S}}_{e}^{\sharp}$ is constant. By the dual of proposition \ref{prop:Tarski:constructive}, $\Gfp{\sqsubseteq_{\infty}^{\sharp}}{F_{\bot}^{\sharp}}$ exists in a decreasing chain complete poset.
\end{proof}
\end{toappendix}
We now show that $\Gfp{\sqsubseteq_{\infty}^{\sharp}}{F_{\bot}^{\sharp}}$ coinductively characterizes  the infinite executions of the iteration \texttt{while (B) S} after infinitely many terminating  iterations of the body \texttt{S} with condition \texttt{B} always true. 
\begin{lemma}[Infinite body iterations]\label{lem:abstract:Lfp-subseteq-F-oo}\proofinapx\quad If\/ $\mathbb{D}^{\sharp}$ is a well-defined decreasing chain-complete poset and $\mathbin{{\fatsemi}^{\sharp}}$ is right increasing for $\sqsubseteq_{\infty}^{\sharp}$ in \ref{def:abstract:domain:well:def:increasing} then
$\Gfp{\sqsubseteq_{\infty}^{\sharp}}{F_{\bot}^{\sharp}}$ = 
$\mathop{\bigsqcap_{\infty}^{\sharp}}_{\delta\in\mathbb{O}}((\sqb{\texttt{B;S}}_{e}^{\sharp})^{\delta} \mathbin{{\fatsemi}^{\sharp}} {\top_{\infty}^{\sharp}})$.
\end{lemma}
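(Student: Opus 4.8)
The plan is to dualise the proof of lemma~\ref{lem:abstract:Lfp-subseteq-F-e}, iterating \emph{downward} from the supremum of the infinitary domain instead of upward from its infimum. First I would invoke lemma~\ref{lem:Fbotsharp-welldefined} to know that $\Gfp{\sqsubseteq_{\infty}^{\sharp}}{F_{\bot}^{\sharp}}$ exists, and then the order dual of proposition~\ref{prop:Tarski:constructive} to present it as the limit of the transfinite iterates $\pair{Y^{\delta}}{\delta\in\mathbb{O}}$ of $F_{\bot}^{\sharp}$ from $\top_{\infty}^{\sharp}$: namely $Y^{0}=\top_{\infty}^{\sharp}$, $Y^{\delta+1}=\sqb{\texttt{B;S}}_{e}^{\sharp}\mathbin{{\fatsemi}^{\sharp}}Y^{\delta}$, and $Y^{\lambda}=\mathop{\bigsqcap_{\infty}^{\sharp}}_{\beta<\lambda}Y^{\beta}$ at limit ordinals, with $\Gfp{\sqsubseteq_{\infty}^{\sharp}}{F_{\bot}^{\sharp}}=\mathop{\bigsqcap_{\infty}^{\sharp}}_{\delta\in\mathbb{O}}Y^{\delta}$. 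Right increasingness of $\mathbin{{\fatsemi}^{\sharp}}$ (\ref{def:abstract:domain:well:def:increasing}) ensures that the $Y^{\delta}$ form a decreasing chain, so the required meets exist by decreasing chain-completeness (\ref{def:abstract:domain:well:def:infinite:domain}).

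The core is a transfinite induction relating the iterates to the terms $Z^{\delta}\triangleq(\sqb{\texttt{B;S}}_{e}^{\sharp})^{\delta}\mathbin{{\fatsemi}^{\sharp}}\top_{\infty}^{\sharp}$ of the claimed right-hand side. The base case reads $Y^{0}=\top_{\infty}^{\sharp}=\textsf{init}^{\sharp}\mathbin{{\fatsemi}^{\sharp}}\top_{\infty}^{\sharp}=Z^{0}$, by neutrality of $\textsf{init}^{\sharp}$ (\ref{def:abstract:domain:well:def:init:neutral}) and the convention $X^{0}=\textsf{init}^{\sharp}$ from (\ref{eq:def:abstract:powers}). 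The successor case is purely equational: from $Y^{\delta}=Z^{\delta}$ I would unfold $Y^{\delta+1}=\sqb{\texttt{B;S}}_{e}^{\sharp}\mathbin{{\fatsemi}^{\sharp}}((\sqb{\texttt{B;S}}_{e}^{\sharp})^{\delta}\mathbin{{\fatsemi}^{\sharp}}\top_{\infty}^{\sharp})$ and reassociate by \ref{def:abstract:domain:well:def:operators} to get $(\sqb{\texttt{B;S}}_{e}^{\sharp})^{\delta+1}\mathbin{{\fatsemi}^{\sharp}}\top_{\infty}^{\sharp}=Z^{\delta+1}$, exactly as the powers in (\ref{eq:def:abstract:powers}) prescribe. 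So up to the first limit everything matches the finite case verbatim.

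The hard part will be the limit ordinals, where the two sides are built by opposite operations: $Y^{\lambda}$ is a \emph{meet} $\mathop{\bigsqcap_{\infty}^{\sharp}}_{\beta<\lambda}Y^{\beta}$ of earlier iterates, whereas $Z^{\lambda}$ pushes the limit \emph{join} of finitary powers $\mathop{\bigsqcup_{+}^{\sharp}}_{\beta<\lambda}(\sqb{\texttt{B;S}}_{e}^{\sharp})^{\beta}$ through $\mathbin{{\fatsemi}^{\sharp}}$, and under increasingness alone these need not coincide term by term. Rather than force $Y^{\lambda}=Z^{\lambda}$, I would therefore switch to pinning down the glb by two inclusions. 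For $\Gfp{\sqsubseteq_{\infty}^{\sharp}}{F_{\bot}^{\sharp}}\sqsubseteq_{\infty}^{\sharp}\mathop{\bigsqcap_{\infty}^{\sharp}}_{\delta\in\mathbb{O}}Z^{\delta}$ I would show $\Gfp{\sqsubseteq_{\infty}^{\sharp}}{F_{\bot}^{\sharp}}\sqsubseteq_{\infty}^{\sharp}Z^{\delta}$ for each $\delta$ by transfinite induction, using the fixpoint equation $\Gfp{\sqsubseteq_{\infty}^{\sharp}}{F_{\bot}^{\sharp}}=\sqb{\texttt{B;S}}_{e}^{\sharp}\mathbin{{\fatsemi}^{\sharp}}\Gfp{\sqsubseteq_{\infty}^{\sharp}}{F_{\bot}^{\sharp}}$, associativity, and right increasingness at successors, and $\Gfp{\sqsubseteq_{\infty}^{\sharp}}{F_{\bot}^{\sharp}}\sqsubseteq_{\infty}^{\sharp}\top_{\infty}^{\sharp}$ at the base.

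For the reverse inclusion I would attempt a cofinality argument: the successor iterates $Y^{\gamma+1}=Z^{\gamma+1}$ are cofinal among the $Y^{\delta}$ and the limit iterates are by construction the meets of their predecessors, so I would try to conclude that $\{Z^{\delta}\}$ and $\{Y^{\delta}\}$ have the same greatest lower bound, namely $\Gfp{\sqsubseteq_{\infty}^{\sharp}}{F_{\bot}^{\sharp}}$; failing that, I would verify directly that $\mathop{\bigsqcap_{\infty}^{\sharp}}_{\delta}Z^{\delta}$ is a post-fixpoint of $F_{\bot}^{\sharp}$ and finish by the dual of proposition~\ref{prop:Tarski}. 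Combining the two inclusions yields the stated equality. I expect the genuine difficulty to be concentrated entirely in this limit-ordinal reconciliation of the join-defined powers with the meet-defined iterates — precisely the place where, with only increasingness available, the commutation of $\mathbin{{\fatsemi}^{\sharp}}$ with transfinite limits cannot be taken for granted; everything else is neutrality, associativity, and monotonicity bookkeeping inherited from the finite case.
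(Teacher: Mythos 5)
Your proposal takes essentially the same route as the paper's proof: both invoke lemma~\ref{lem:Fbotsharp-welldefined} and the dual of proposition~\ref{prop:Tarski:constructive} to present $\Gfp{\sqsubseteq_{\infty}^{\sharp}}{F_{\bot}^{\sharp}}$ as the limit of the transfinite iterates of $F_{\bot}^{\sharp}$ from $\top_{\infty}^{\sharp}$, and both identify the iterates with $(\sqb{\texttt{B;S}}_{e}^{\sharp})^{\delta}\mathbin{{\fatsemi}^{\sharp}}\top_{\infty}^{\sharp}$ by transfinite induction, the successor step being pure associativity. Where you differ is instructive: the paper's own proof carries out only the base and successor cases and then ``passes to the limit'', silently equating $\mathop{\bigsqcap_{\infty}^{\sharp}}_{\delta}X^{\delta}$ with $\mathop{\bigsqcap_{\infty}^{\sharp}}_{\delta}((\sqb{\texttt{B;S}}_{e}^{\sharp})^{\delta}\mathbin{{\fatsemi}^{\sharp}}\top_{\infty}^{\sharp})$ even though, as you correctly observe, the limit iterates are meets whereas the limit powers are joins of (\ref{eq:def:abstract:powers}) pushed through $\mathbin{{\fatsemi}^{\sharp}}$. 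Your cofinality argument settles one inclusion cleanly (the limit iterates are meets of their predecessors, hence redundant in the overall meet, and the zero and successor iterates coincide exactly with the corresponding powers, so $\mathop{\bigsqcap_{\infty}^{\sharp}}_{\delta}Z^{\delta}\sqsubseteq_{\infty}^{\sharp}\mathop{\bigsqcap_{\infty}^{\sharp}}_{\delta}Y^{\delta}$). For the converse inclusion $\Gfp{\sqsubseteq_{\infty}^{\sharp}}{F_{\bot}^{\sharp}}\sqsubseteq_{\infty}^{\sharp}Z^{\lambda}$ at a limit ordinal $\lambda$, be aware that right increasingness alone does not let you compare $(\mathop{\bigsqcup_{+}^{\sharp}}_{\beta<\lambda}(\sqb{\texttt{B;S}}_{e}^{\sharp})^{\beta})\mathbin{{\fatsemi}^{\sharp}}\top_{\infty}^{\sharp}$ with the terms $(\sqb{\texttt{B;S}}_{e}^{\sharp})^{\beta}\mathbin{{\fatsemi}^{\sharp}}\top_{\infty}^{\sharp}$; you need some monotonicity of $\mathbin{{\fatsemi}^{\sharp}}$ in its \emph{left} argument (available in the algebraic domain via \ref{def:abstract:domain:well:def:join:additive}, but not singled out in the lemma's stated hypotheses). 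That residual step is exactly the one the paper's proof leaves implicit, so your extra care exposes a real, if minor, looseness rather than creating a new gap.
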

\begin{toappendix}
\begin{proof}[Proof of lemma \ref{lem:abstract:Lfp-subseteq-F-oo}] $\mathbin{{\fatsemi}^{\sharp}}$ is increasing for $\sqsubseteq_{\infty}^{\sharp}$ so that, by lemma \ref{lem:Fbotsharp-welldefined}, ${F_{\bot}^{\sharp}}$ is 
is increasing for $\sqsubseteq_{\infty}^{\sharp}$. Since  $\mathbb{D}^{\sharp}$ is a decreasing chain-complete poset, the iterates
$\pair{X^{\delta}}{\delta\in\mathbb{O}}$  of ${F_{\bot}^{\sharp}}$ from the supremum ${\top_{\infty}^{\sharp}}$ are well-defined, so that, by the dual of proposition \ref{prop:Tarski:constructive}, $\Gfp{\sqsubseteq_{\infty}^{\sharp}}{F_{\bot}^{\sharp}}$ exists and is the limit of these iterates. These iterates are $X^0={\top_{\infty}^{\sharp}}$, 
$X^1=F_{\bot}^{\sharp}(X^0)$
=
$\sqb{\texttt{B;S}}_{e}^{\sharp}\mathbin{{\fatsemi}^{\sharp}} {\top_{\infty}^{\sharp}}$. 
Assume that
$X^{\delta}=(\sqb{\texttt{B;S}}_{e}^{\sharp})^{\delta}\mathbin{{\fatsemi}^{\sharp}} {\top_{\infty}^{\sharp}}$ by induction hypothesis so that
$X^{\delta+1}$
=
$\sqb{\texttt{B;S}}_{e}^{\sharp}\mathbin{{\fatsemi}^{\sharp}} X^{\delta}$
=
$\sqb{\texttt{B;S}}_{e}^{\sharp}\mathbin{{\fatsemi}^{\sharp}}(\sqb{\texttt{B;S}}_{e}^{\sharp})^{\delta}\mathbin{{\fatsemi}^{\sharp}} {\top_{\infty}^{\sharp}}$
= 
$(\sqb{\texttt{B;S}}_{e}^{\sharp})^{(\delta+1)-1}\mathbin{{\fatsemi}^{\sharp}} {\top_{\infty}^{\sharp}}$ by associativity, def.\ (\ref{eq:def:abstract:powers}) of the powers, and def.\ of the iterates in prop.\ \ref{prop:Tarski:constructive}. $X^{\delta+1}$ is of the form of the recurrence hypothesis proving that it holds for all iterates.
Passing to the limit, we have
$\Gfp{\subseteq}{F_{\bot}^{\sharp}}$
=
$\mathop{\bigsqcap_{\infty}^{\sharp}}_{\delta\in\mathbb{O}}X^{\delta}$
=
$\mathop{\bigsqcap_{\infty}^{\sharp}}_{\delta\in\mathbb{O}}((\sqb{\texttt{B;S}}_{e}^{\sharp})^{\delta}\mathbin{{\fatsemi}^{\sharp}} {\top_{\infty}^{\sharp}})$.
\end{proof}
\end{toappendix}
The abstract semantics of iteration is defined as
\begin{eqntabular}{l@{\hskip1ex}c@{\hskip1ex}l@{\qquad}l@{\hskip1ex}c@{\hskip1ex}l@{\qquad}}
\sqb{\texttt{while (B) S}}_{e}^{\sharp}&\triangleq&
(\Lfp{\sqsubseteq_{+}^{\sharp}}{\Cev{F}_{e}^{\sharp}})\mathbin{{\fatsemi}^{\sharp}}(\sqb{\neg\texttt{B}}_{e}^{\sharp}\mathbin{\sqcup_{e}^{\sharp}}\sqb{\texttt{B;S}}_{b}^{\sharp})\label{eq:sem:abstract:finite}
&
\sqb{\texttt{while (B) S}}_{b}^{\sharp}&\triangleq&\bot_{+}^{\sharp}\label{eq:sem:abstract:break}\\
\sqb{\texttt{while (B) S}}_{bi}^{\sharp}&\triangleq&(\Lfp{\sqsubseteq_{+}^{\sharp}}{\Cev{F}_{e}^{\sharp}})\mathbin{{\fatsemi}^{\sharp}}\sqb{\texttt{B;S}}_{\bot}^{\sharp}\label{eq:sem:abstract:body-infinite}
&
\sqb{\texttt{while (B) S}}_{li}^{\sharp}&\triangleq&\Gfp{\sqsubseteq_{\infty}^{\sharp}}{F_{\bot}^{\sharp}}\label{eq:sem:abstract:loop-infinite}
\\
\sqb{\texttt{while (B) S}}_{\bot}^{\sharp}&\triangleq&\sqb{\texttt{while (B) S}}_{bi}^{\sharp}\mathbin{\sqcup_{\infty}^{\sharp}}\sqb{\texttt{while (B) S}}_{li}^{\sharp}\label{eq:sem:abstract:loop-nontermination}
\end{eqntabular}
The least fixpoint $\Lfp{\sqsubseteq_{+}^{\sharp}}{\Cev{F}_{e}^{\sharp}}$ defines executions reaching the loop entry point after zero or finitely many iterations. Then (\ref{eq:sem:abstract:finite}) defines the finite executions of the loop when, after 0 or more iterations, the iteration condition \texttt{B} is false, or a break is executed in the body which exists the loop. By (\ref{eq:sem:abstract:break}) the \texttt{break} is from the closest enclosing loop (which existence must be checked syntactically). The loop nontermination in (\ref{eq:sem:abstract:loop-nontermination}) can happen either because, after zero or finitely many iterations, the next execution of the iteration body never terminates (\ref{eq:sem:abstract:body-infinite}), or results in (\ref{eq:sem:abstract:loop-infinite}) from infinitely many finite iterations, as defined by the greatest fixpoint $\Gfp{\sqsubseteq_{\infty}^{\sharp}}{F_{\bot}^{\sharp}}$, and obtained as the limit of iterations of ${F_{\bot}^{\sharp}}$ from $\top_{\infty}^{\sharp}$. These fixpoints in (\ref{eq:sem:abstract:finite}) and (\ref{eq:sem:abstract:loop-infinite}) do exist by lemmas \ref{lem:Fesharp-welldefined} and \ref{lem:Fbotsharp-welldefined}. 
\begin{theorem}\label{th:abstract:well:defined}\proofinapx\quad If\/ ${\mathbb{D}^{\sharp}}$ is well-defined then for all $\texttt{S}\in\mathbb{S}$, $\sqb{\texttt{S}}_{e}^{\sharp}$, $\sqb{\texttt{S}}_{b}^{\sharp}$, and $\sqb{\texttt{S}}_{\bot}^{\sharp}$ are well-defined.
\end{theorem}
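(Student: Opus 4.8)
The plan is to argue by structural induction on statements along the well-founded ``immediate strict syntactic component'' order $\lhd$ of sect.~\ref{sec:Structural-Definitions}, which is exactly the order on which the semantics is structurally (compositionally) defined. The induction hypothesis for a statement \texttt{S} is that $\sqb{\texttt{S}'}_{e}^{\sharp},\sqb{\texttt{S}'}_{b}^{\sharp}\in{\mathbb{L}^{\sharp}_{+}}$ and $\sqb{\texttt{S}'}_{\bot}^{\sharp}\in{\mathbb{L}^{\sharp}_{\infty}}$ are well defined for every immediate component $\texttt{S}'\lhd\texttt{S}$; the goal in each case is to check that the defining expressions (\ref{eq:def:sem:abstract:basis}), (\ref{eq:def:abstract:sem:seq}), and (\ref{eq:sem:abstract:finite})--(\ref{eq:sem:abstract:loop-nontermination}) denote elements of the claimed domains. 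Since every operation occurring there is a domain primitive, an infimum, a binary join, a sequential composition $\fatsemi^{\sharp}$, or a fixpoint, well-definedness reduces in each case to the corresponding clause of definition~\ref{def:abstract:domain:well:def}.

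For the base cases --- the basic commands \texttt{x = A}, \texttt{x = [$a$,$b$]}, \texttt{break}, \texttt{skip}, and the guard \texttt{B} --- the three components are, by (\ref{eq:def:sem:abstract:basis}), either one of the primitives ${\textsf{assign}^{\sharp}}$, ${\textsf{rassign}^{\sharp}}$, ${\textsf{break}^{\sharp}}$, ${\textsf{skip}^{\sharp}}$, ${\textsf{test}^{\sharp}}$, which lie in ${\mathbb{L}^{\sharp}_{+}}$ and are well defined by~\ref{def:abstract:domain:well:def:abstract:operators}, or the infima ${\bot_{+}^{\sharp}}\in{\mathbb{L}^{\sharp}_{+}}$ and ${\bot_{\infty}^{\sharp}}\in{\mathbb{L}^{\sharp}_{\infty}}$, which exist by~\ref{def:abstract:domain:well:def:finite:domain} and~\ref{def:abstract:domain:well:def:infinite:domain}. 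The two straightforward inductive cases, sequential composition and the conditional, then follow by the same bookkeeping: by the induction hypothesis each occurrence of a component semantics lies in the stated domain, each $\fatsemi^{\sharp}$ in (\ref{eq:def:abstract:sem:seq}) is applied to an argument pair whose types match a summand of the signature given in~\ref{def:abstract:domain:well:def:operators} (so the result lands in ${\mathbb{L}^{\sharp}_{+}}$ or ${\mathbb{L}^{\sharp}_{\infty}}$ as appropriate, e.g.\ $\sqb{\texttt{S$_1$}}_{e}^{\sharp}\mathbin{\fatsemi^{\sharp}}\sqb{\texttt{S$_2$}}_{\bot}^{\sharp}\in{\mathbb{L}^{\sharp}_{\infty}}$), and each binary join $\sqcup_{+}^{\sharp}$ or $\sqcup_{\infty}^{\sharp}$ of two such elements exists because the two carriers are (join) semilattices by~\ref{def:abstract:domain:well:def:finite:domain} and~\ref{def:abstract:domain:well:def:infinite:domain}.

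The crux --- and the step I expect to be the main obstacle --- is the iteration \texttt{while (B) S}, whose semantics (\ref{eq:sem:abstract:finite})--(\ref{eq:sem:abstract:loop-nontermination}) additionally invokes the least fixpoint $\Lfp{\sqsubseteq_{+}^{\sharp}}{\Cev{F}_{e}^{\sharp}}$ and the greatest fixpoint $\Gfp{\sqsubseteq_{\infty}^{\sharp}}{F_{\bot}^{\sharp}}$, whose mere existence is not guaranteed by the semilattice structure alone. Here I would first note that, by the induction hypothesis applied to the body \texttt{S} (recall $\texttt{S}\lhd\texttt{while (B) S}$), the composite guard-body semantics $\sqb{\texttt{B;S}}_{e}^{\sharp}={\textsf{test}^{\sharp}}\sqb{\texttt{B}}\mathbin{\fatsemi^{\sharp}}\sqb{\texttt{S}}_{e}^{\sharp}$ is a well-defined element of ${\mathbb{L}^{\sharp}_{+}}$; this is precisely the constant parameterizing the transformers (\ref{eq:natural-transformer-finite-backward})--(\ref{eq:trace-transformer-infinite}). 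The existence of the two fixpoints is then exactly what lemmas~\ref{lem:Fesharp-welldefined} and~\ref{lem:Fbotsharp-welldefined} provide: whichever flavor of well-definedness $\mathbb{D}^{\sharp}$ enjoys, the hypothesis of definition~\ref{def:abstract:domain:well:def} furnishes one of the composition properties~\ref{def:abstract:domain:well:def:increasing}--\ref{def:abstract:domain:well:limit-preserving} together with the relevant (increasing, resp.\ decreasing) chain-completeness, so that ${\Cev{F}_{e}^{\sharp}}$ inherits that property and $\Lfp{\sqsubseteq_{+}^{\sharp}}{\Cev{F}_{e}^{\sharp}}$ exists in ${\mathbb{L}^{\sharp}_{+}}$ (coinciding with $\Lfp{\sqsubseteq_{+}^{\sharp}}{\vec{F}_{e}^{\sharp}}$ by lemma~\ref{lem:abstract:forward=backward}), while ${F_{\bot}^{\sharp}}$ inherits it and $\Gfp{\sqsubseteq_{\infty}^{\sharp}}{F_{\bot}^{\sharp}}$ exists in ${\mathbb{L}^{\sharp}_{\infty}}$. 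With both fixpoints in hand, the remaining compositions $\fatsemi^{\sharp}$ and joins in (\ref{eq:sem:abstract:finite}), (\ref{eq:sem:abstract:body-infinite}), and (\ref{eq:sem:abstract:loop-nontermination}) are well defined by the same signature-and-semilattice reasoning as in the previous cases, and $\sqb{\texttt{while (B) S}}_{b}^{\sharp}={\bot_{+}^{\sharp}}\in{\mathbb{L}^{\sharp}_{+}}$ trivially, which closes the induction.
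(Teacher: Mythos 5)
Your proof is correct and follows essentially the same route as the paper's: structural induction on statements, with well-definedness of basic commands from \ref{def:abstract:domain:well:def:abstract:operators}, of compositions and joins from the domain axioms, and of the two iteration fixpoints from lemmas \ref{lem:Fesharp-welldefined} and \ref{lem:Fbotsharp-welldefined}. You simply spell out the case analysis that the paper's one-sentence proof leaves implicit.
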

\begin{toappendix}
\begin{proof}[Proof of theorem \ref{th:abstract:well:defined}]\proofinapx\quad The proof is by structural induction, observing that all operators hence their compositions are well-defined, including $\mathbin{\sqcup_{+}^{\sharp}}$, $\mathbin{\sqcup_{\infty}^{\sharp}}$, and $\mathbin{{\fatsemi}^{\sharp}}$. Lemmas \ref{lem:Fesharp-welldefined} and \ref{lem:Fbotsharp-welldefined} show that the transformers ${\Cev{F}_{e}^{\sharp}}$, ${\vec{F}_{e}^{\sharp}}$, ${F_{\bot}^{\sharp}}$ are increasingso that their fixpoints do exist.
\end{proof}
\end{toappendix}

\subsection{Algebraic Abstract Semantic Domain and Abstract Semantics}\label{sec:abstract:semantics}
The components of the abstract semantics can be recorded in a triple with named components, ordered componentwise by ${\sqsubseteq}^{\sharp}$, as follows
\bgroup\abovedisplayskip5pt\belowdisplayskip5pt\begin{eqntabular}{rcl}
{\mathbb{L}^{\sharp}}&\triangleq&(e:{\mathbb{L}^{\sharp}_{+}}\times\bot:{\mathbb{L}^{\sharp}_{\infty}}\times{br:\mathbb{L}^{\sharp}_{+}})\label{eq:def:Abstract-Semantic-Domain-Semantics}\\
\sqb{\texttt{S}}^\sharp&\triangleq&\triple{e:\sqb{\texttt{S}}_{e}^{\sharp}}{\bot:\sqb{\texttt{S}}_{\bot}^{\sharp}}{br:\sqb{\texttt{S}}_{b}^{\sharp}}\nonumber
\end{eqntabular}\egroup
If $T=\triple{e:F}{\bot:I}{br:B}\in{\mathbb{L}^{\sharp}}$, then we select the individual components of the Cartesian product $T$ using the field selectors $e$, $br$, and $\bot$, as follows 
\bgroup\abovedisplayskip=5pt\belowdisplayskip-3pt\begin{eqntabular}{@{\qquad\qquad}C}
$T_{+}=F$,\quad $T_{\infty}=I$,\quad and\quad $T_{br}=B$.
\label{eq:abstract:shorthands}
\end{eqntabular}\egroup
By convention, 
\bgroup\abovedisplayskip=5pt\belowdisplayskip5pt\begin{eqntabular}[fl]{@{\quad}p{0.9\textwidth}}
The shorthand $F$ denotes $\triple{e:F}{\bot:\bot^{\sharp}_{\infty}}{br:\bot^{\sharp}_{+}}$ and similarly for other unique nonempty components.
\label{eq:unique-nonempty-component-shorthand}
\end{eqntabular}\egroup
The abstract semantics $\sqb{\texttt{S}}^\sharp\in{\mathbb{L}^{\sharp}}$ records three components $\sqb{\texttt{S}}_{e}^{\sharp}$, $\sqb{\texttt{S}}_{\bot}^{\sharp}$, and $\sqb{\texttt{S}}_{b}^{\sharp}$ of the definition of the algebraic semantics of statements $\texttt{S}$ in sect.\@ \ref{sec:definition:abstract:semantics}. 
\begin{lemma}\label{lem:abstract-domain-sharp}\proofinapx\quad If\/ ${\mathbb{D}^{\sharp}}$ is a well-defined chain-complete join semilattice (respectively complete lattice) with sequential composition $\mathbin{{\fatsemi}^{\sharp}}$ satisfying any one of the hypotheses \ref{def:abstract:domain:well:def:join:additive} then $\pair{{\mathbb{L}^{\sharp}}}{{\sqsubseteq^{\sharp}}}$ has the same structure, componentwise.
\end{lemma}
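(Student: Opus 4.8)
The plan is to treat $\pair{\mathbb{L}^{\sharp}}{\sqsubseteq^{\sharp}}$ purely as the Cartesian product $\mathbb{L}^{\sharp}_{+}\times\mathbb{L}^{\sharp}_{\infty}\times\mathbb{L}^{\sharp}_{+}$ of its three named components ($e$, $\bot$, $br$), ordered componentwise by (\ref{eq:abstract:shorthands}), and to reduce the claim to the structure of the factors guaranteed by well-definedness. Concretely, the $e$- and $br$-components are copies of the finitary domain $\mathbb{L}^{\sharp}_{+}$, which is an increasing chain-complete join semilattice with infimum $\bot^{\sharp}_{+}$ (resp.\ a complete lattice) by \ref{def:abstract:domain:well:def:finite:domain}, and the $\bot$-component is the infinitary domain $\mathbb{L}^{\sharp}_{\infty}$, which is a decreasing chain-complete join lattice with supremum $\top^{\sharp}_{\infty}$ (resp.\ a complete lattice) by \ref{def:abstract:domain:well:def:infinite:domain}. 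Since $\sqsubseteq^{\sharp}$ places no constraint relating distinct components, the hypotheses \ref{def:abstract:domain:well:def:join:additive} on $\mathbin{\fatsemi^{\sharp}}$ play no role in this order-theoretic argument: they are inherited here only because $\mathbb{D}^{\sharp}$ is assumed well-defined, and $\mathbin{\fatsemi^{\sharp}}$ is needed for the well-definedness of the semantics (theorem \ref{th:abstract:well:defined}) but does not enter the definition of $\sqsubseteq^{\sharp}$.

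The key ingredient I would isolate first is the standard componentwise-limit principle for product posets: for any $S\subseteq\mathbb{L}^{\sharp}$, if the projections $S_{+}=\{T_{+}\mid T\in S\}$, $S_{\infty}$, $S_{br}$ each admit a join in their respective factor, then $\bigsqcup^{\sharp}S$ exists and equals $\triple{e:\bigsqcup_{+}^{\sharp}S_{+}}{\bot:\bigsqcup_{\infty}^{\sharp}S_{\infty}}{br:\bigsqcup_{+}^{\sharp}S_{br}}$, and dually for meets. This follows directly from the universal property of the join: the coordinatewise tuple of factor-joins is an upper bound of $S$ because each projection is dominated in its factor, and it is the least such because any upper bound of $S$ projects to an upper bound of each $S_{i}$ and hence dominates the corresponding factor-join in each coordinate. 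The dual statement for glbs is proved symmetrically.

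With this principle in hand, each structural clause transfers componentwise. Finite nonempty joins exist in every factor (the $e$- and $br$-factors are join semilattices and the $\bot$-factor is a lattice, by \ref{def:abstract:domain:well:def:finite:domain} and \ref{def:abstract:domain:well:def:infinite:domain}), so $\mathbb{L}^{\sharp}$ is a join semilattice with joins computed coordinatewise. For chains I would observe that, because $\sqsubseteq^{\sharp}$ is componentwise, each projection of an increasing (resp.\ decreasing) chain is again an increasing (resp.\ decreasing) chain; hence lubs of increasing chains exist in the $+$-factors and glbs of decreasing chains exist in the $\infty$-factor, and both lift to the product by the principle above. The factor extrema needed downstream---the infimum $\bot^{\sharp}_{+}$ of each $+$-factor and the supremum $\top^{\sharp}_{\infty}$ of the $\infty$-factor---appear as the corresponding coordinates. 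In the ``respectively complete lattice'' case every factor admits arbitrary (including empty) joins and meets, so by the componentwise principle $\mathbb{L}^{\sharp}$ admits all joins and meets and is a complete lattice by \ref{def:Properties:complete-lattice}, with global infimum and supremum the tuples of the factor infima and suprema.

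The only point requiring care---and the reason the conclusion is phrased ``componentwise'' rather than as a single uniform completeness property---is the heterogeneity of the factors: the $+$-components guarantee only increasing chain-completeness while the $\infty$-component guarantees only decreasing chain-completeness, so in the non-complete-lattice case one cannot claim that $\mathbb{L}^{\sharp}$ is simultaneously increasing and decreasing chain complete, nor that it has a single global least or greatest element. Accordingly I would state the conclusion as: each projection inherits exactly the completeness and semilattice structure of its factor, realized in $\mathbb{L}^{\sharp}$ through coordinatewise limits. This is precisely what is required for the least fixpoints of ${\Cev{F}_{e}^{\sharp}}$ and ${\vec{F}_{e}^{\sharp}}$ in the $+$-coordinates and the greatest fixpoint of ${F_{\bot}^{\sharp}}$ in the $\infty$-coordinate to exist, so that $\sqb{\texttt{S}}^{\sharp}\in\mathbb{L}^{\sharp}$ is well defined.
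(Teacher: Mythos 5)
Your proof is correct and follows essentially the same route as the paper, which simply cites the standard fact that a Cartesian product of complete lattices (respectively of chain-complete join semilattices) inherits that structure componentwise \cite[pp.~33, 55]{DBLP:books/daglib/0023601}; you have merely spelled out the universal-property argument behind that citation. Your observation that the hypotheses on $\mathbin{\fatsemi^{\sharp}}$ play no role in the order-theoretic conclusion is also consistent with the paper's proof, which likewise never invokes them.
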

\begin{toappendix}
\begin{proof}[Proof of lemma \ref{lem:abstract-domain-sharp}]
Lemma \ref{lem:abstract-domain-sharp} follows from the fact that the Cartesian product of complete lattices (respectively,  a chain-complete join semilattice) is a complete lattice \cite[p.\ 33]{DBLP:books/daglib/0023601} (resp., is a chain-complete join semilattice \cite[p.\ 55]{DBLP:books/daglib/0023601}). 
\end{proof}
\end{toappendix}
All semantic definitions are extended componentwise. For ${\,\fatsemi^\sharp\,}\in\; {\mathbb{L}^{\sharp}}\times{\mathbb{L}^{\sharp}}\functionto{\mathbb{L}^{\sharp}}$, we define
\bgroup\abovedisplayskip5pt\belowdisplayskip0pt\begin{eqntabular}[fl]{l}
\langle{ok{\mskip1mu:\mskip1mu}\langle{e{\mskip1mu:\mskip1mu}F_1},{\bot{\mskip1mu:\mskip1mu}I_1}\rangle},{b{\mskip1mu:\mskip1mu}B_1}\rangle
\mathbin{\fatsemi^{\sharp}}
\langle{ok{\mskip1mu:\mskip1mu}\langle{e{\mskip1mu:\mskip1mu}F_2},{\bot{\mskip1mu:\mskip1mu}I_2}\rangle},{b{\mskip1mu:\mskip1mu}B_2}\rangle
\triangleq
\langle{ok{\mskip1mu:\mskip1mu}\langle{e{\mskip1mu:\mskip1mu}F_1{\,\fatsemi^\sharp\,}F_2},{\bot{\mskip1mu:\mskip1mu}I_{1} {\,\sqcup_{\infty}^{\sharp}\,}(F_1{\,\fatsemi^{\sharp}\,}I_2)
}\rangle},{b{\mskip1mu:\mskip1mu}B_1{\,\sqcup_{+}^{\sharp}\,}(F_1{\,\fatsemi^{\sharp}\,}B_2)}\rangle\nonumber\\
\textrm{so that, by (\ref{eq:def:abstract:sem:seq}), }\sqb{\texttt{S$_1$;S$_2$}}^{\sharp}=\sqb{\texttt{S$_1$}}^{\sharp}\mathbin{\fatsemi^{\sharp}}\sqb{\texttt{S$_2$}}^{\sharp}.\label{eq:def:abstract:sem:group:seq}
\end{eqntabular}\egroup
\begin{remark}The semantic domain of our algebraic semantics is much more refined than traditional ones such as 
\cite{DBLP:conf/birthday/Hoare13a} where, the computational and logical ordering are subset inclusion and, following the denotational semantics \cite{ScottStrachey71-PRG6} approach, ``Nontermination has to be represented by a fictitious “state at infinity” that can be “reached” only by a non-terminating program. Also, if the fictitious state is in the image of a state, then that image is universal.'' \cite{DBLP:journals/cacm/HoareHJMRSSSS87}.  This can be achieved by instantiation e.g. to a trace semantics followed by an abstraction (mapping infinite traces to the ``fictitious “state at infinity”'').

Moreover, we do not specify the algebraic semantics by ``laws'' (or axioms) but in structural fixpoint form, which is known to be equivalent, according to the generalization \cite{DBLP:conf/cav/CousotC95} of Peter Aczel correspondance \cite{Aczel:1977:inductive-definitions} between deductive/proof systems and fixpoint definitions. The ``laws'' for basic statements are the definitions (\ref{eq:def:sem:abstract:basis}). The other ``laws'' for structured statements and iteration are theorems following from the definition \ref{def:abstract:domain:well:def} of an abstract domain and fixpoint induction principles \cite{DBLP:conf/lopstr/Cousot19} following from propositions \ref{prop:Tarski} and
\ref{prop:Tarski:constructive}.
\end{remark}
All semantics in \cite{DBLP:journals/jacm/AptP86,DBLP:journals/tcs/Cousot02,DBLP:journals/tcs/GiacobazziM05} can be instantiated to the
algebraic abstract semantics of sect.\@ \ref{sec:abstract:semantics}. There are obviously others, such as symbolic execution \cite{DBLP:journals/cacm/King76} (extended to infinite
behaviors). For semantics defined by transformations such as compilation, the transformation is an instance of the
algebraic abstract semantics, but the semantics of the transformed program is not, because of a different syntax, although it can certainly be also defined in an algebraic style.

The original definition of hyperproperties \cite{DBLP:journals/jcs/ClarksonS10} was relative to a trace (or path) semantics $\sqb{\texttt{S}}^{\pi}$ which, as shown in the appendix \proofinapx, is an instance of the algebraic abstract semantics $\sqb{\texttt{S}}^{\sharp}$ where the  domain $\mathbb{D}^{\sharp}_{+}$ is the complete lattice $\mathbb{D}^{\pi}_{+}$ of sets of finite traces and the  domain $\mathbb{D}^{\sharp}_{\infty}$ is the complete lattice $\mathbb{D}^{\pi}_{\infty}$ of sets of infinite traces where traces account for the successive values taken by variables during execution, as recorded in states. All operators preserve arbitrary joins. For lower continuity, see counterexample \ref{cex:not:lower:continuous} for infinite traces and the following lower continuity proof for finite traces.

\begin{toappendix}
\section{Trace Semantics}\label{sect:Trace-Semantics}
\subsection{The Trace Semantics Domain}

\subsubsection{States}
States $\sigma\in\Sigma\triangleq\mathbb{X}\rightarrow \mathbb{V}$ (also called environments) map variables  $\texttt{x}\in\mathbb{X}$ to their values $\sigma(\texttt{x})$ in $\mathbb{V}$ including integers, $\mathbb{Z}\subseteq \mathbb{V}$. 

\subsubsection{Finite Traces}
We let $\pi=\pi_0\pi_1\ldots\pi_{n-1}\in\Sigma^n\triangleq\interval[open right]{0}{n}\rightarrow\Sigma$ be the nonempty finite traces of length $|\pi|=n$, $n\geqslant1$ over states $\pi_i\in\Sigma$, $i\in\interval[open right]{0}{n}$, $\Sigma^+\triangleq\bigcup_{n\geqslant1}\Sigma^n$. 
The empty trace $\epsilon$ is in $\Sigma^0=\{\epsilon\}$. $\Sigma^\ast\triangleq\Sigma^+\cup\Sigma^0$ is the set of possibly empty traces. A set of finite  traces defines a property of finite executions in extension.
\begin{eqntabular}{rcl}
\sextuple{\mathbb{L}^{\pi}_{+}}{\sqsubseteq_{+}^{\pi}}{\bot_{+}^{\pi}}{\top_{+}^{\pi}}{\sqcup_{+}^{\pi}}{\sqcap_{+}^{\pi}}
&\triangleq&
\sextuple{\wp(\Sigma^{+})}{\subseteq}{\emptyset}{\Sigma^{+}}{\cup}{\cap}
\label{eq:def:finite-traces-domain}
\end{eqntabular}

\subsubsection{Infinite Traces}
The infinite traces  $\pi=\pi_0\pi_1\ldots\pi_{n}\ldots\in\Sigma^\infty\triangleq\interval[open right]{0}{\infty}\rightarrow\Sigma$ have  length $|\pi|=\infty$ over states $\pi_i\in\Sigma$, $i\in\interval[open right]{0}{\infty}$. We let
$\Sigma^{+\infty}\triangleq\Sigma^+\cup\Sigma^\infty$ and $\Sigma^{\ast\infty}\triangleq\Sigma^\ast\cup\Sigma^\infty$. 

A trace $\sigma\pi\in\Sigma^{+\infty}$ has first state $\sigma\in\Sigma$. A trace of the form $\pi\sigma$ is necessarily finite with last state $\sigma$ and $\pi\in\Sigma^{\ast}$. If $0\leqslant i\leqslant j<n$ and $\pi\in\Sigma^n$ then $\pi_{\interval{i}{j}}\triangleq\pi_i\pi_{i+1}\ldots\pi_{j}$ is the subtrace of $\pi$ stating at $i$ and ending at $j$.
 A set of infinite traces defines a property of nonterminating executions in extension.
\begin{eqntabular}{rcl}
\sextuple{\mathbb{L}^{\pi}_{\infty}}{\sqsubseteq_{\infty}^{\pi}}{\bot_{\infty}^{\pi}}{\top_{\infty}
^{\pi}}{\sqcap_{\infty}^{\sharp}}{\sqcup_{\infty}^{\sharp}}
&\triangleq&
\sextuple{\wp(\Sigma^{\infty})}{\subseteq}{\emptyset}{\Sigma^{\infty}}{\cap}{\cup}
\label{eq:def:infinite-traces-domain}
\end{eqntabular}
Notice that $\Gfp{\sqsubseteq_{\infty}^{\pi}}{F_{\bot}^{\pi}}=\Gfp{\subseteq}{F_{\bot}^{\pi}}$ so that infinite execution traces are defined co-inductively.

\subsubsection{Traces Operators}
\begin{eqntabular}[fl]{r@{\ \ }c@{\ \ }l@{\ \ }r@{\ \ }c@{\ \ }l}
{\textsf{init}^{\pi}}&\triangleq&\Sigma^1
&
{\textsf{test}^{\pi}\sqb{\texttt{B}}}&\triangleq&\{\sigma\mid\sigma\in\mathcal{B}\sqb{\texttt{B}}\}
\nonumber\\
{\textsf{assign}^{\pi}\sqb{\texttt{x},\texttt{A}}}&\triangleq&\{{\sigma}{\sigma[\texttt{x}\leftarrow\mathcal{A}\sqb{\texttt{A}}\sigma]}\in\Sigma^2\mid \sigma\in\Sigma\}
&
{\textsf{break}^{\pi}}&\triangleq&\{\sigma\,\textsf{break-to}(\sigma)\mid\sigma\in\Sigma\}
\label{eq:trace-break}\\
{\textsf{rassign}^{\pi}\sqb{\texttt{x},a,b}}&\triangleq&\{{\sigma}{\sigma[\texttt{x}\leftarrow i]}\in\Sigma^2\mid a-1 <i< b+1\}
&
{\textsf{skip}^{\pi}}&\triangleq&\{\sigma\sigma\mid\sigma\in\Sigma\}\nonumber
\end{eqntabular}
See \cite[page 43]{Cousot-PAI-2021} for a definition of \textsf{break-to} (exiting the enclosing loop with variables unchanged). We deliberately leave unspecified the syntax and semantics of arithmetic expressions $\mathcal{A}\sqb{\texttt{A}}\in\Sigma\rightarrow \mathbb{V}$ and Boolean expressions  $\mathcal{B}\sqb{\texttt{B}}\in\wp(\Sigma)\simeq\Sigma\rightarrow \{\textsf{\upshape true},\textsf{\upshape false}\}$. The only assumption on expressions is the absence of side effects.

We let ${\fatsemi}^{\pi}$ be the concatenation of sets of traces  $T\in\wp(\Sigma^{\ast\infty})$ and $T'\in\wp(\Sigma^{\ast\infty})$ such that
 \begin{eqntabular*}{rcl}
T\mathrel{{\fatsemi}^{\pi}}T'&\triangleq&\{\pi'\in T'\mid\epsilon\in T\}\cup \{\pi\in T\mid \epsilon\in T'\}\cup{}
(T\cap\Sigma^{\infty}) \cup\{\pi\sigma\pi'\mid\pi\sigma\in T\wedge \sigma\pi'\in T'\}
\end{eqntabular*}
The powers of a set $T\in\wp(\Sigma^{\ast\infty})$ of traces are $\{\epsilon\}^n=\{\epsilon\}$ and otherwise  $T^0=\Sigma^1$and $T^{n+1}$ = $T^{n}\fatsemi T$ = $T\fatsemi T^{n}$ for all $n\geqslant0$. We denote
$T^\infty\in\wp(\Sigma^\infty)$ the set of infinite traces obtained by concatenation of traces of $T$. Notice that $\fatsemi$ is right increasing but not right lower continuous on infinite traces $\wp(\Sigma^{\infty})$.
\begin{counterexample}\label{cex:not:lower:continuous}Let $r=\{\sigma_1, \sigma_1\sigma_2,\ldots, \sigma_1\ldots\sigma_n,\ldots\}$ be the prefix closure of the infinite trace $\sigma_1\sigma_2\sigma_3\ldots$. Define $X_i=\{\sigma_i\sigma_{i+1}\sigma_{i+2}\ldots,\sigma_{i+1}\sigma_{i+2}\ldots,\sigma_{i+2}\ldots,, \ldots\}$ be the suffix closure of the infinite trace $\sigma_i\sigma_{i+1}\sigma_{i+2}\sigma_{i+3}\ldots$ so that $\pair{X_i}{i\in\mathbb{N}}$ is a decreasing chain. Then $r\,\mathbin{{\fatsemi}^{\pi}}\,\bigcap_{i\in\mathbb{N}}X_i$ = $r\,\mathbin{{\fatsemi}^{\pi}}\,\emptyset$ = $\emptyset$, while
$\bigcap_{i\in\mathbb{N}}(r\,\mathbin{{\fatsemi}^{\pi}}\,X_i)$ = $\bigcap_{i\in\mathbb{N}}\{\sigma_1\sigma_{2}\sigma_{3}\ldots\}$
= $\{\sigma_1\sigma_{2}\sigma_{3}\ldots\}$. 
\end{counterexample}
However, $\fatsemi$ is right lower continuous on finite traces $\wp(\Sigma^{+})$.
\begin{proof}[Proof of  right lower continuity of $\fatsemi$ for finite traces]
\newcommand{\traceconcat}{\mathbin{\wideparen{\raisebox{1ex}[0pt][0pt]{.}}}}
Let $\pair{X^i\in \wp(\Sigma^{+})}{i\in\mathbb{N}}$ be a $\subseteq$-decreasing chain of sets of finite traces. We must prove that
$r\,\mathbin{{\fatsemi}^{\pi}}\,(\bigcap_{i\in\mathbb{N}}X^i)$ = $\bigcap_{i\in\mathbb{N}}(r\,\mathbin{{\fatsemi}^{\pi}}\,X^i)$. The inclusion $\subseteq$ is trivial. Conversely, let $\pi\in\bigcap_{i\in\mathbb{N}}(r\,\mathbin{{\fatsemi}^{\pi}}\,X^i)\subseteq\wp(\Sigma^{+})$ then there exists $\bar{\pi}_0\in X^0$, $\bar{\pi}_1\in X^1$, \ldots,
$\bar{\pi}_i\in X^i$, \ldots\ and ${\pi}_0$, ${\pi}_1$, \ldots,
${\pi}_i$, \ldots\ $\in$ $r$ such that $\bar{\pi}_0\mathrel{{\leqslant}_s}\bar{\pi}_1\mathrel{{\leqslant}_s} \ldots \mathrel{{\leqslant}_s}\bar{\pi}_i\mathrel{{\leqslant}_s}\ldots$ and $\pi$ = ${\pi}_0\traceconcat\bar{\pi}_0$ = ${\pi}_1\traceconcat\bar{\pi}_1$ = \ldots\ = ${\pi}_i\traceconcat\bar{\pi}_i$ = \ldots\ where
$\mathrel{{\leqslant}_s}$ is the suffix ordering on traces and ${}\traceconcat{}$ is trace concatenation. The length
of the $\pair{\bar{\pi}_i}{i\in\mathbb{N}}$ is ultimately stationary at some $k\in\mathbb{N}$. This means that there exists $\bar{\pi}_k$ such that $\forall i\in\mathbb{N}\mathrel{.}\bar{\pi}_k\in X^i$. As a result,
$\pi$ = ${\pi}_k\traceconcat\bar{\pi}_k$ $\in$ $r\,\mathbin{{\fatsemi}^{\pi}}\,(\bigcap_{i\in\mathbb{N}}X_i)$.
\end{proof}

\subsection{Structural Trace Semantics}\label{sec:Structural:trace:semantics}

$\Lfp{\sqsubseteq_{+}^{\pi}}{F_{e}^{\pi}}=\Lfp{\subseteq}{F_{e}^{\pi}}$ is the set of finite traces reaching the entry of the iteration \texttt{while (B) S} after zero or more terminating body iterations . 
\begin{lemma}\label{lem:Lfp-subseteq-F-e}
$\Lfp{\subseteq}{F_{e}^{\pi}}$ = $\bigcup_{n\in\mathbb{N}}(\sqb{\texttt{B$\,\mathbin{{\fatsemi}^{\pi}}\,$S}}_{e}^{\pi})^n$. 
\end{lemma}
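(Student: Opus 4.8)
The plan is to specialise the abstract iteration analysis of Lemma~\ref{lem:abstract:Lfp-subseteq-F-e} to the finite-trace domain and to observe that here the transfinite union collapses already at $\omega$. Write $g\triangleq\sqb{\texttt{B;S}}_{e}^{\pi}$ for the guarded-body semantics appearing in the statement, so that $F_{e}^{\pi}=\LAMBDA{X}{\textsf{init}^{\pi}}\cup(g\mathbin{{\fatsemi}^{\pi}}X)$ is the trace instance of the transformer (\ref{eq:natural-transformer-finite-backward}) (and symmetrically for the forward transformer (\ref{eq:natural-transformer-finite-forward}), by Lemma~\ref{lem:abstract:forward=backward}), with powers $g^{0}={\textsf{init}^{\pi}}=\Sigma^{1}$ and $g^{k+1}=g\mathbin{{\fatsemi}^{\pi}}g^{k}$. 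By (\ref{eq:def:finite-traces-domain}) the carrier $\quadruple{\wp(\Sigma^{+})}{\subseteq}{\emptyset}{\cup}$ is a complete lattice, so the least fixpoint exists; the task is to identify it with $\bigcup_{n\in\mathbb{N}}g^{n}$.

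First I would establish that $F_{e}^{\pi}$ is upper-continuous. This reduces to right union-preservation of concatenation on $\wp(\Sigma^{+})$. Since the sets involved contain no empty trace and no infinite trace, on $\wp(\Sigma^{+})$ the definition of $\mathbin{{\fatsemi}^{\pi}}$ reduces to the single clause $T\mathbin{{\fatsemi}^{\pi}}T'=\{\pi\sigma\pi'\mid\pi\sigma\in T\wedge\sigma\pi'\in T'\}$, which is purely existential in $T'$; hence $T\mathbin{{\fatsemi}^{\pi}}\bigcup_{i}X_{i}=\bigcup_{i}(T\mathbin{{\fatsemi}^{\pi}}X_{i})$ for every family $\pair{X_{i}}{i}$, i.e.\ $\mathbin{{\fatsemi}^{\pi}}$ preserves arbitrary unions in its right argument. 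This is the easy, join-side counterpart of the lower-continuity fact proved separately just above, and unlike that fact it needs no length-stationarity argument. Consequently $F_{e}^{\pi}=\LAMBDA{X}{\textsf{init}^{\pi}}\cup(g\mathbin{{\fatsemi}^{\pi}}X)$ preserves joins of nonempty increasing chains, so by Proposition~\ref{prop:Tarski:constructive} its iterates $\pair{X^{n}}{n\in\mathbb{N}}$ from $X^{0}=\emptyset$ are stationary at $\omega$ and $\Lfp{\subseteq}{F_{e}^{\pi}}=\bigcup_{n\in\mathbb{N}}X^{n}$.

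Next I would compute the iterates by ordinary induction on $n\in\mathbb{N}$, showing $X^{n}=\bigcup_{k<n}g^{k}$. The base case is $X^{0}=\emptyset=\bigcup_{k<0}g^{k}$. For the step, using the definition of $F_{e}^{\pi}$, the induction hypothesis, right union-preservation of $\mathbin{{\fatsemi}^{\pi}}$, the power recurrence $g\mathbin{{\fatsemi}^{\pi}}g^{k}=g^{k+1}$, and ${\textsf{init}^{\pi}}=g^{0}$, one obtains $X^{n+1}={\textsf{init}^{\pi}}\cup(g\mathbin{{\fatsemi}^{\pi}}\bigcup_{k<n}g^{k})=g^{0}\cup\bigcup_{k<n}g^{k+1}=\bigcup_{k<n+1}g^{k}$. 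This is exactly the successor computation in the proof of Lemma~\ref{lem:abstract:Lfp-subseteq-F-e}, but carried out over $\mathbb{N}$ rather than $\mathbb{O}$. Combining the two displays then yields $\Lfp{\subseteq}{F_{e}^{\pi}}=\bigcup_{n\in\mathbb{N}}X^{n}=\bigcup_{n\in\mathbb{N}}\bigcup_{k<n}g^{k}=\bigcup_{k\in\mathbb{N}}g^{k}$, which is the claim.

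I expect no substantial obstacle: the entire content is that the transfinite join $\bigsqcup_{\delta\in\mathbb{O}}g^{\delta}$ of Lemma~\ref{lem:abstract:Lfp-subseteq-F-e} already stabilises at $\omega$, and the only fact needed for this is that $\mathbin{{\fatsemi}^{\pi}}$ distributes over arbitrary unions on the right. The one place to stay alert is to invoke union-preservation (upper-continuity) and \emph{not} the harder, separately proved lower continuity, and to keep the power indexing consistent with $g^{0}={\textsf{init}^{\pi}}=\Sigma^{1}$ so that $X^{1}$ already contains the zero-iteration traces and the successor step lines up as written.
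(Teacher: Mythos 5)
Your proof is correct and takes essentially the same route as the paper, which simply instantiates Lemma~\ref{lem:abstract:Lfp-subseteq-F-e} for $\mathbb{D}^{\pi}_{+}$ (the iterate computation you carry out over $\mathbb{N}$ is exactly the successor step of that lemma's transfinite induction). The only added value in your write-up is that you make explicit the collapse of the transfinite join $\bigsqcup_{\delta\in\mathbb{O}}g^{\delta}$ to $\bigcup_{n\in\mathbb{N}}g^{n}$ via right union-preservation of $\mathbin{{\fatsemi}^{\pi}}$ on $\wp(\Sigma^{+})$, a point the paper's one-line instantiation leaves implicit (it is covered by the remark that all trace operators preserve arbitrary joins).
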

\begin{proof}[Proof of lemma \ref{lem:Lfp-subseteq-F-e}] An instance of lemma \ref{lem:abstract:Lfp-subseteq-F-e} for $\mathbb{D}^{\pi}_{+}$.
\end{proof}
$\Gfp{\sqsubseteq_{\infty}^{\pi}}{F_{\bot}^{\pi}}=\Gfp{\subseteq}{F_{\bot}^{\pi}}$ is the set of infinite traces of the iteration \texttt{while (B) S} after infinitely many terminating body iterations . 
\begin{lemma}\label{lem:Gfp-subseteq-F-e}
$\Gfp{\subseteq}{F_{\bot}^{\pi}}$ = $(\sqb{\texttt{B$\,\mathbin{{\fatsemi}^{\pi}}\,$S}}_{e}^{\pi})^\infty$. 
\end{lemma}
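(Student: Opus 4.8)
The plan is to characterize $\Gfp{\subseteq}{F_{\bot}^{\pi}}$ directly as a greatest fixpoint. Write $T \triangleq \sqb{\texttt{B;S}}_{e}^{\pi} \subseteq \wp(\Sigma^{+})$, so that $F_{\bot}^{\pi} = \LAMBDA{X \in \wp(\Sigma^{\infty})}T \mathbin{{\fatsemi}^{\pi}} X$ by (\ref{eq:trace-transformer-infinite}). Since every element of $T$ is a finite trace and every element of $X$ is an infinite trace, the only active clause of $\mathbin{{\fatsemi}^{\pi}}$ is the gluing clause $\{\pi\sigma\pi' \mid \pi\sigma \in T \wedge \sigma\pi' \in X\}$: the two $\epsilon$-clauses vanish because $\epsilon \notin T$ and $\epsilon \notin X$, and the clause $T \cap \Sigma^{\infty}$ vanishes because $T$ holds only finite traces. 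I would then prove the two inclusions separately, using the order-dual of the Tarski characterization of Proposition \ref{prop:Tarski} on the complete lattice $\wp(\Sigma^{\infty})$, namely $\Gfp{\subseteq}{F_{\bot}^{\pi}} = \bigcup\{X \mid X \subseteq T \mathbin{{\fatsemi}^{\pi}} X\}$.

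For $T^{\infty} \subseteq \Gfp{\subseteq}{F_{\bot}^{\pi}}$, I would first verify that $T^{\infty}$ is a fixpoint, that is $T \mathbin{{\fatsemi}^{\pi}} T^{\infty} = T^{\infty}$. The inclusion $\subseteq$ holds because prepending one further pass $\rho \in T$ to an infinite $T$-concatenation yields again an infinite $T$-concatenation, and $\supseteq$ holds because every $\pi = \rho_0 \rho_1 \rho_2 \cdots \in T^{\infty}$ splits as $\rho_0 \mathbin{{\fatsemi}^{\pi}}(\rho_1 \rho_2 \cdots)$ with $\rho_1 \rho_2 \cdots \in T^{\infty}$. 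Being a fixpoint, $T^{\infty}$ is in particular a postfixpoint, whence $T^{\infty} \subseteq \Gfp{\subseteq}{F_{\bot}^{\pi}}$ by the dual Tarski formula.

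For $\Gfp{\subseteq}{F_{\bot}^{\pi}} \subseteq T^{\infty}$ — the crux — I would take an arbitrary postfixpoint $X \subseteq T \mathbin{{\fatsemi}^{\pi}} X$ and show $X \subseteq T^{\infty}$. Given an infinite trace $\pi \in X$, one unfolding of the postfixpoint inclusion gives $\pi = \rho_0 \mathbin{{\fatsemi}^{\pi}} \tau_1$ with $\rho_0 \in T$ and $\tau_1 \in X$; iterating by dependent choice produces finite passes $\rho_0, \rho_1, \ldots \in T$ and infinite residuals $\tau_1, \tau_2, \ldots \in X$ with $\tau_i = \rho_i \mathbin{{\fatsemi}^{\pi}} \tau_{i+1}$, so that the successive glue points $k_0 < k_1 < \cdots$, where $\tau_{i+1}$ is the suffix of $\pi$ starting at position $k_i$, yield $\pi = \rho_0 \rho_1 \rho_2 \cdots$. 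The main obstacle is to guarantee that this decomposition \emph{exhausts} $\pi$, i.e.\ that $k_i \to \infty$: this is exactly the statement that each pass $\rho_i$ consumes at least one fresh state (has length $\geqslant 2$), which I would justify from the trace operators, since a loop pass $\textsf{test}^{\pi}\sqb{\texttt{B}} \mathbin{{\fatsemi}^{\pi}} \sqb{\texttt{S}}_{e}^{\pi}$ starts with the test state and then runs the body. When this non-stuttering property holds the positions strictly increase, $k_i \to \infty$, and $\pi \in T^{\infty}$; I would flag the degenerate stuttering case (a body whose finite semantics contains single states, e.g.\ an inner never-entered loop) as the one place the argument needs care, and where $T^{\infty}$ must be read as the set of genuine infinite limits.

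Alternatively, one may obtain the result as an instance of Lemma \ref{lem:abstract:Lfp-subseteq-F-oo}, whose hypothesis — right increasingness of $\mathbin{{\fatsemi}^{\pi}}$ for $\sqsubseteq_{\infty}^{\pi}$ — holds for traces, giving $\Gfp{\subseteq}{F_{\bot}^{\pi}} = \bigcap_{\delta \in \mathbb{O}}(T^{\delta} \mathbin{{\fatsemi}^{\pi}} \Sigma^{\infty})$; it then remains to show this intersection equals $T^{\infty}$. Here only the finite powers carry constraints, since membership in $\bigcap_{n \in \mathbb{N}}(T^{n} \mathbin{{\fatsemi}^{\pi}} \Sigma^{\infty})$ says that for every $n$ the trace extends a valid $n$-pass prefix, whereas every infinite-ordinal power $T^{\delta}$ contains $T^{0} = \Sigma^{1}$, so $T^{\delta} \mathbin{{\fatsemi}^{\pi}} \Sigma^{\infty} = \Sigma^{\infty}$ does not tighten the intersection. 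Reassembling the nested finite prefixes into a single infinite decomposition is again the identical progress/exhaustion obstacle, so the two routes meet at the same technical point.
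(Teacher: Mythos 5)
Your proposal is correct in substance, and your ``alternative'' route is in fact the paper's proof verbatim: the paper proves the lemma as an instance of Lemma \ref{lem:abstract:Lfp-subseteq-F-oo}, obtaining $\Gfp{\subseteq}{F_{\bot}^{\pi}}=\bigcap_{n\in\mathbb{N}}((\sqb{\texttt{B$\,\mathbin{{\fatsemi}^{\pi}}\,$S}}_{e}^{\pi})^{n}\mathbin{{\fatsemi}^{\pi}}\Sigma^{\infty})$, and then identifies this intersection with $T^{\infty}$ by a one-sentence argument (every trace of $T^{\infty}$ lies in each $T^{n}\mathbin{{\fatsemi}^{\pi}}\Sigma^{\infty}$; any trace not in $T^{\infty}$ eventually fails to decompose and so drops out of some iterate). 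Your primary route --- dual Tarski on the complete lattice $\wp(\Sigma^{\infty})$, showing $T^{\infty}$ is a fixpoint and that every postfixpoint is contained in $T^{\infty}$ by repeated unfolding --- is a genuinely different and, in one respect, cleaner argument: unfolding a single postfixpoint $X\subseteq T\mathbin{{\fatsemi}^{\pi}}X$ by dependent choice automatically yields a \emph{coherent} nested decomposition, whereas the intersection route has to reconcile a priori unrelated $n$-fold decompositions for different $n$ (a K\"onig-style compactness step when $T$ is not prefix-free), a point neither the paper nor your second route makes explicit.

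The one place where you are more careful than the paper is the exhaustion/progress issue, and your concern is legitimate rather than a defect of your proof: if $\sqb{\texttt{B$\,\mathbin{{\fatsemi}^{\pi}}\,$S}}_{e}^{\pi}$ contains single-state traces (which it can, e.g.\ when \texttt{S} is an inner loop that is never entered, since $\Lfp{\subseteq}{F_{e}^{\pi}}\supseteq\textsf{init}^{\pi}=\Sigma^{1}$ and tests are single states), then the glue points need not tend to infinity, $\bigcap_{n}(T^{n}\mathbin{{\fatsemi}^{\pi}}\Sigma^{\infty})$ contains every infinite extension of such a stuttering state, and the identity with $T^{\infty}$ only holds if $T^{\infty}$ is read as that limit set rather than as ``traces that are genuinely infinite concatenations of elements of $T$''. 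The paper's informal definition of $T^{\infty}$ and its one-line converse argument silently assume exactly the non-stuttering property you isolate, so flagging it is the right call; just be aware that your length-$\geqslant 2$ justification does not hold for all bodies \texttt{S}, so the honest resolution is the definitional reading of $T^{\infty}$ that you propose, not a proof that stuttering cannot occur.
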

\begin{proof}[Proof of lemma \ref{lem:Gfp-subseteq-F-e}] An instance of lemma \ref{lem:abstract:Lfp-subseteq-F-oo} for $\mathbb{D}^{\pi}_{\infty}$. Moreover, $\mathop{\bigsqcap_{\infty}^{\sharp}}_{n\in\mathbb{N}}((\sqb{\texttt{B;S}}_{e}^{\sharp})^n \mathbin{{\fatsemi}^{\sharp}} {\bot_{\infty}^{\sharp}})$ becomes 
$\bigcap_{n\in\mathbb{N}}((\sqb{\texttt{B;S}}_{e}^{\pi})^n \mathbin{{\fatsemi}^{\pi}} \Sigma^{\infty})$
=
$(\sqb{\texttt{B$\,\mathbin{{\fatsemi}^{\pi}}\,$S}}_{e}^{\pi})^\infty$
since all traces in $(\sqb{\texttt{B$\,\mathbin{{\fatsemi}^{\pi}}\,$S}}_{e}^{\pi})^\infty$ belong to $(\sqb{\texttt{B$\,\mathbin{{\fatsemi}^{\pi}}\,$S}}_{e}^{\pi})^{n}\mathbin{{\fatsemi}^{\pi}} \Sigma^{\infty}$, $n\geqslant 0$ while
any trace not of that form must be $\pi\pi'\pi''$ with $\pi\in(\sqb{\texttt{B$\,\mathbin{{\fatsemi}^{\pi}}\,$S}}_{e}^{\pi})^{n}$, $\pi'\not\in \sqb{\texttt{B$\,\mathbin{{\fatsemi}^{\pi}}\,$S}}_{e}^{\pi}$, and $\pi''\in\Sigma^\infty$  for some $n\in\mathbb{N}$ and
so does not belong to $X^{n+2}$ hence not to the intersection.
\end{proof}
\begin{example}\label{ex:semantics:S}Consider $\texttt{S} \triangleq \texttt{while (x!=2) if (x==1) then break else x=x+2}$. It's trace semantics is 
\arraycolsep0.5\arraycolsep
\begin{eqntabular}{rcl}
\sqb{\texttt{S}}_{e}^{\pi}&=&\{\texttt{x}:-2k;\texttt{x}:-2k+2;\ldots;\texttt{x}:0;\texttt{x}:2\mid k\geqslant -1\}{}\cup{}\{\texttt{x}:-2k+1;\texttt{x}:-2k+3;\ldots;\texttt{x}:1\mid k\geqslant 0\}\nonumber\\
\sqb{\texttt{S}}_{b}^{\pi}&=&\emptyset\label{eq:ex:semantics:S}\\
\sqb{\texttt{S}}_{\bot}^{\pi}&=&\{\texttt{x}:n;\ldots;\texttt{x}:n+2k;\ldots\mid n>2\}\ .\renumber{\mbox{\qef}}
\end{eqntabular}
\let\qef\relax
\end{example}
\begin{proof}[Proof of (\ref{eq:ex:semantics:S})]
We have 
$\sqb{\texttt{(x!=2)}\mathbin{{\fatsemi}^{\pi}}\texttt{if (x==1) then break else x=x+2}}$ = $\pair{ok:\{\texttt{x}:n;\texttt{x}:n+2\mid n\notin\{1,2\}\}}{br:\{\texttt{x}:1\}}$ so that
${F_{e}^{\pi}}(X)=\{\texttt{x}:n\mid n\in\mathbb{Z}\}\cup\{\texttt{x}:n;\texttt{x}:n+2;\pi\mid
n\notin\{1,2\}\wedge \texttt{x}:n+2;\pi\in X^+\}$ for the finite traces reaching the loop head. 

The iterates are ${F_{e}^{\pi}}^0=\emptyset$, ${F_{e}^{\pi}}^1=\{\texttt{x}:n\mid n\in\mathbb{Z}\}$, ${F_{e}^{\pi}}^2=\{\texttt{x}:n\mid n\in\mathbb{Z}\}\cup\{\texttt{x}:n;\texttt{x}:n+2\mid n\notin\{1,2\}\}$, ${F_{e}^{\pi}}^3=\{\texttt{x}:n\mid n\in\mathbb{Z}\}\cup\{\texttt{x}:n;\texttt{x}:n+2\mid n\notin\{1,2\}\}\cup\{\texttt{x}:n;\texttt{x}:n+2;\texttt{x}:n+4\mid n\notin\{-1,0,1,2\}\}$,
so that ${F_{e}^{\pi}}^k=\{\texttt{x}:n\mid n\in\mathbb{Z}\}\cup\bigcup_{j=1}^{k-1}\{\texttt{x}:n;\ldots;\texttt{x}:n+2j;\mid n\notin\interval{3-2j}{2}\}$ by induction hypothesis. For the induction step
\begin{calculus}[=\ \ ]
\formula{{F_{e}^{\pi}}^{k+1}}\\
=
\formula{{F_{e}^{\pi}}({F_{e}^{\pi}}^{k})}\\
=
\formulaexplanation{\{\texttt{x}:n\mid n\in\mathbb{Z}\}\cup\{\texttt{x}:n;\texttt{x}:n+2;\pi\mid
n\notin\{1,2\}\wedge \texttt{x}:n+2;\pi\in {F_{e}^{\pi}}^{k}\}}{def.\ $F_{e}^{\pi}$}\\
=
\formulaexplanation{\{\texttt{x}:n\mid n\in\mathbb{Z}\}\cup\{\texttt{x}:n;\texttt{x}:n+2;\pi\mid
n\notin\{1,2\}\wedge \texttt{x}:n+2;\pi\in \{\texttt{x}:n\mid n\in\mathbb{Z}\}\cup\bigcup_{j=1}^{k-1}\{\texttt{x}:n;\ldots;\texttt{x}:n+2j\mid n\notin\interval{3-2j}{2}\}\}}{induction hypothesis}\\
=
\formulaexplanation{\{\texttt{x}:n\mid n\in\mathbb{Z}\}\cup\{\texttt{x}:n;\texttt{x}:n+2;\pi\mid
n\notin\{1,2\}\wedge \texttt{x}:n+2;\pi\in \{\texttt{x}:n\mid n\in\mathbb{Z}\}\}
\cup\bigcup_{j=1}^{k-1}\{\texttt{x}:n;\texttt{x}:n+2;\pi\mid\texttt{x}:n+2;\pi\in 
\{\texttt{x}:n;\ldots;\texttt{x}:n+2j\mid n\notin\interval{3-2j}{2}\}\}}{def.\ $\cup$}\\
=
\formulaexplanation{\{\texttt{x}:n\mid n\in\mathbb{Z}\}\cup\{\texttt{x}:n;\texttt{x}:n+2\mid
n\notin\{1,2\}\}
\cup\bigcup_{j=1}^{k-1}\{\texttt{x}:n;\texttt{x}:n+2;\pi\mid\texttt{x}:n+2;\pi\in 
\{\texttt{x}:n+2;\ldots;\texttt{x}:n+2+2j\mid n+2\notin\interval{3-2j}{2}\}\}}{simplification and renaming}\\
=
\formula{\{\texttt{x}:n\mid n\in\mathbb{Z}\}\cup\{\texttt{x}:n;\texttt{x}:n+2\mid
n\notin\{1,2\}\}
\cup\bigcup_{j=1}^{k-1}\{\texttt{x}:n;\texttt{x}:n+2;\texttt{x}:n+4;\ldots;\texttt{x}:n+2(j+1)\mid n\notin\interval{1-2j}{0}\}}\\[-0.75ex]\rightexplanation{def.\ $\in$}\\
=
\formulaexplanation{\{\texttt{x}:n\mid n\in\mathbb{Z}\}\cup\{\texttt{x}:n;\texttt{x}:n+2\mid
n\notin\{1,2\}\}
\cup\bigcup_{j'=2}^{k}\{\texttt{x}:n;\texttt{x}:n+2;\texttt{x}:n+4;\ldots;\texttt{x}:n+2j'\mid n\notin\interval{1-2(j'-1)}{0}\}}{def.\ $j=j'-1$ so $j'=j+1$ }\\
=
\formula{\{\texttt{x}:n\mid n\in\mathbb{Z}\}
\cup\bigcup_{j'=1}^{k}\{\texttt{x}:n;\texttt{x}:n+2;\ldots;\texttt{x}:n+2j'\mid n\notin\interval{3-2j'}{2}\}}\\[-0.75ex]\rightexplanation{incorporating the term $\{\texttt{x}:n;\texttt{x}:n+2\mid
n\notin\{1,2\}\}$ in the join for $j'=1$}
\end{calculus}

\smallskip

\noindent
This shows that all iterates of ${F_{e}^{\pi}}$ have the form ${F_{e}^{\pi}}^{k}$. Since ${F_{e}^{\pi}}$ preserves joins, we have, by Tarski's fixpoint iteration theorem \cite[page 305]{Tarski-fixpoint}, that 
\begin{calculus}[=\ \ ]
\formula{\Lfp{\subseteq}{F_{e}^{\pi}}}\\
=
\formula{\bigcup_{k\in\mathbb{N}}{F_{e}^{\pi}}^{k}}\\
=
\formula{\bigcup_{k\in\mathbb{N}}\Bigl(\{\texttt{x}:n\mid n\in\mathbb{Z}\}\cup\bigcup_{j=1}^{k-1}\{\texttt{x}:n;\ldots;\texttt{x}:n+2j;\mid n\notin\interval{3-2j}{2}\}\Bigr)}\\
=
\formula{\{\texttt{x}:n\mid n\in\mathbb{Z}\}\cup\bigcup_{j\geqslant1}\{\texttt{x}:n;\ldots;\texttt{x}:n+2j;\mid n\notin\interval{3-2j}{2}\}}\\
=
\formula{\bigcup_{j\in\mathbb{N}}\{\texttt{x}:n;\ldots;\texttt{x}:n+2j;\mid n\notin\interval{3-2j}{2}\}}\\
\explanation{since for $j=0$, we have $n\notin\interval{3-2j}{2}$ which is $n\notin\interval{3}{2}$ that is 
$n\notin\emptyset$ or $n\in\mathbb{Z}$ with $\texttt{x}:n;\ldots;\texttt{x}:n+2j$ = $\texttt{x}:n;\ldots;\texttt{x}:n$ = $\texttt{x}:n$}
\end{calculus}

\smallskip

For the infinite traces, we have
\begin{calculus}[=\ \ ]
\formula{F^{\bot}(X),\quad X\in\wp(\Sigma^{+\infty})}\\
=
\formula{\sqb{\texttt{B$\,\mathbin{{\fatsemi}^{\pi}}\,$S}}_{e}^{\pi}\mathbin{{\fatsemi}^{\pi}} X^{\infty}}\\
=
\formula{\{\texttt{x}:n;\texttt{x}:n+2;\pi\mid
n\notin\{1,2\}\wedge \texttt{x}:n+2;\pi\in X^{\infty}\}}
\end{calculus}

\smallskip

\noindent The iterates of $F^{\bot}$ are 
${F^{\bot}}^0=\Sigma^{\infty}$, 
${F^{\bot}}^1$ =
$\{\texttt{x}:n;\texttt{x}:n+2;\pi\mid
n\notin\{1,2\}\wedge \texttt{x}:n+2;\pi\in \Sigma^{\infty}\}$ =
$\{\texttt{x}:n;\texttt{x}:n+2;\pi\mid
n\notin\{1,2\}\wedge\pi\in \Sigma^{\infty}\}$, 
 ${F^{\bot}}^2$ 
 =
 $\{\texttt{x}:n;\texttt{x}:n+2;\pi\mid
n\notin\{1,2\}\wedge \texttt{x}:n+2;\pi\in \{\texttt{x}:n;\texttt{x}:n+2;\pi\mid
n\notin\{1,2\}\wedge\pi\in \Sigma^{\infty}\}\}$
=
$\{\texttt{x}:n;\texttt{x}:n+2;\pi\mid
n\notin\{1,2\}\wedge \texttt{x}:n+2;\pi\in \{\texttt{x}:n+2;\texttt{x}:n+4;\pi'\mid
n+2\notin\{1,2\}\wedge\pi'\in \Sigma^{\infty}\}\}$
=
$\{\texttt{x}:n;\texttt{x}:n+2;\texttt{x}:n+4;\pi'\mid
n\notin\{-1,0,1,2\}\wedge\pi'\in \Sigma^{\infty}\}$ which leads to the induction hypothesis
${F^{\bot}}^k$ 
=
$\{\texttt{x}:n;\ldots;\texttt{x}:n+2k;\pi\mid n\notin\interval{3-2k}{2}\wedge\pi\in \Sigma^{\infty}\}$. For the induction step,
\begin{calculus}[=\ \ ]
\formula{{F^{\bot}}^{k+1}}\\
=
\formula{{F^{\bot}}({F^{\bot}}^{k})}\\
=
\formula{\{\texttt{x}:n;\texttt{x}:n+2;\pi\mid
n\notin\{1,2\}\wedge \texttt{x}:n+2;\pi\in \{\texttt{x}:n;\ldots;\texttt{x}:n+2k;\pi\mid n\notin\interval{3-2k}{2}\wedge\pi\in \Sigma^{\infty}\}\}}\\
=
\formula{\{\texttt{x}:n;\texttt{x}:n+2;\pi\mid
n\notin\{1,2\}\wedge \texttt{x}:n+2;\pi\in \{\texttt{x}:n+2;\ldots;\texttt{x}:n+2+2k;\pi'\mid n+2\notin\interval{1-2k}{0}\wedge\pi'\in \Sigma^{\infty}\}\}}\\
=
\formula{\{\texttt{x}:n;\texttt{x}:n+2;\ldots;\texttt{x}:n+2+2k;\pi'\mid
n\notin\{1,2\}\wedge  n\notin\interval{1-2k}{0}\wedge\pi'\in \Sigma^{\infty}\}}\\
=
\formula{\{\texttt{x}:n;\ldots;\texttt{x}:n+2(k+1);\pi'\mid
n\notin\interval{3-2(k+1)}{2}\wedge\pi'\in \Sigma^{\infty}\}}
\end{calculus}

\smallskip

This shows that all iterates of ${F^{\bot}}$ have the form ${F^{\bot}}^{k}$. Since ${F^{\bot}}$ preserves meets, we have, by the dual of Tarski's fixpoint iteration theorem \cite[page 305]{Tarski-fixpoint}, that \
\begin{calculus}[=\ \ ]
\formula{\Gfp{\subseteq}{F^{\bot}}}\\
=
\formula{\bigcap_{k\in\mathbb{N}}{F^{\bot}}^{k}}\\
=
\formula{\bigcap_{k\in\mathbb{N}}\{\texttt{x}:n;\ldots;\texttt{x}:n+2k;\pi\mid n\notin\interval{3-2k}{2}\wedge\pi\in \Sigma^{\infty}\}}\\
=
\formula{\{\texttt{x}:n;\ldots;\texttt{x}:n+2k;\ldots\mid n>2\}}
\end{calculus}

\smallskip

\noindent since all infinite traces of the form $\texttt{x}:n;\ldots;\texttt{x}:n+2k;\ldots$ with $n>2$ belong to all iterates ${F^{\bot}}^{k}$ hence to their intersection while, conversely, all other traces start with $\texttt{x}:n;\ldots$ and $n\leqslant 2$ so do not belong to the ${F^{\bot}}^{k}$, $k\geqslant 1$ so don't belong to their intersection, or else, start with $n>2$, but have the form $\texttt{x}:n;\ldots; \texttt{x}:n+2k+1;\ldots$ and so do not belong to ${F^{\bot}}^{k}$, hence to the intersection.

\smallskip

The trace semantics of $\texttt{S} \triangleq \texttt{while (x!=2) if (x==1) then break else x=x+2}$
is therefore
\begin{calculus}[=\ \ ]
\hyphen{5}\formula{\sqb{\texttt{S}}_{e}^{\pi}}\\
$\triangleq$
\formulaexplanation{\Lfp{\subseteq}{F_{e}^{\pi}}\mathbin{{\fatsemi}^{\pi}}(\sqb{\neg\texttt{(x!=2) }}\cup\sqb{\texttt{(x!=2)$\,\mathbin{{\fatsemi}^{\pi}}\,$if (x==1) then break else x=x+2}}_{b}^{\pi})}{by (\ref{eq:sem:abstract:finite})}\\
=
\formula{\Lfp{\subseteq}{F_{e}^{\pi}}\mathbin{{\fatsemi}^{\pi}}(\{\texttt{x}:2; \texttt{x}:2\}\cup\{\texttt{x}:1\})}\\[-0.75ex]
\explanation{$\sqb{\neg\texttt{(x!=2)}}=\{\texttt{x}:2; \texttt{x}:2\}$ and $\sqb{\texttt{(x!=2)}\mathbin{{\fatsemi}^{\pi}}\texttt{if (x==1) then break else x=x+2}}$ = $\pair{ok:\{\texttt{x}:n;\texttt{x}:n+2\mid n\notin\{1,2\}\}}{br:\{\texttt{x}:1\}}$}\\
=
\formula{\bigcup_{j\in\mathbb{N}}\{\texttt{x}:n;\ldots;\texttt{x}:n+2j;\mid n\notin\interval{3-2j}{2}\}\mathbin{{\fatsemi}^{\pi}}(\{\texttt{x}:2; \texttt{x}:2,\texttt{x}:1\})}\\
=
\formula{\{\texttt{x}:-2k;\texttt{x}:-2k+2;\ldots;\texttt{x}:0;\texttt{x}:2\mid k\geqslant -1\}\cup\{\texttt{x}:-2k+1;\texttt{x}:-2k+3;\ldots;\texttt{x}:1\mid k\geqslant 0\}}
\end{calculus}

\smallskip

\noindent since, by definition of $\mathbin{{\fatsemi}^{\pi}}$, we have only two possible cases.
\begin{itemize}
\item Either $n+2j=2$, $j\in\mathbb{N}$, $n\notin\interval{3-2j}{2}$ so $n=-2k$ with $j=1+k\geqslant 0$ that is $k\geqslant -1$ which implies  $n\notin\interval{3-2j}{2}=\interval{3-(2-n)}{2}=\interval{n+1}{2}$;
\item Or $n+2j=1$, $j\in\mathbb{N}$, $n\notin\interval{3-2j}{2}$ so $n=-2k+1$ with $j=k\geqslant 0$ which implies
$n\notin\interval{3-2j}{2}$ since $n = -2k+1 < 3-2k$.
\end{itemize}

\begin{calculus}[=\ \ ]
\hyphen{5}\formulaexplanation{\sqb{\texttt{S}}_{b}^{\pi}\colsep{\triangleq}
\emptyset}{by (\ref{eq:trace-break})}\\

\hyphen{5}\formula{\sqb{\texttt{S}}_{\bot}^{\pi}}\\
$\triangleq$
\formulaexplanation{\Lfp{\subseteq}{F_{e}^{\pi}}\mathbin{{\fatsemi}^{\pi}}\sqb{\texttt{(x!=2)$\,\mathbin{{\fatsemi}^{\pi}}\,$if (x==1) then break else x=x+2}}_{\bot}^{\pi}\cup\Gfp{\subseteq}{F_{\bot}^{\pi}}}{by (\ref{eq:sem:abstract:loop-nontermination})}\\\
=
\formulaexplanation{\Gfp{\subseteq}{F_{\bot}^{\pi}}}{\texttt{(x!=2)$\,\mathbin{{\fatsemi}^{\pi}}\,$if (x==1) then break else x=x+2} always terminates}\\
=
\lastformula{\{\texttt{x}:n;\ldots;\texttt{x}:n+2k;\ldots\mid n>2\}}{\mbox{\qed}}
\end{calculus}
\let\qed\relax
\end{proof}

\begin{remark}\label{rem:least-versus-greatest-in-semantics}We follow \cite{DBLP:conf/popl/CousotC92} by using least fixpoints for finite traces and greatest fixpoints for infinite traces. We could, equivalently, definite finite traces by a greatest fixpoint as in \cite{DBLP:journals/iandc/LeroyG09}, since the least and greatest fixpoints are equal
$\Lfp{\subseteq}{F_{e}^{\pi}}=\Gfp{\subseteq}{F_{e}^{\pi}}$, which would look more uniform. However, the induction principles for least and greatest fixpoints are not the same. This would require proofs relative to finite executions to be done coinductively instead of the usual inductive reasonings by induction on the length of traces. A related problem is
that the abstraction theorems for least and greatest fixpoints are not the same \cite[Chapter 18]{Cousot-PAI-2021}. The abstraction of a least fixpoint is, in general, more precise than that of a greatest one. So if finite traces had been defined by a greatest fixpoint, it would be necessary to prove that it is equal to the least fixpoint before applying the appropriate abstractions. Then the greatest fixpoint characterization of the finite traces becomes useless. Least and greatest fixpoints can also be merged using the bi-inductive order of  \cite{DBLP:conf/popl/CousotC92} (which abstractions yield Egli-Milner and Scott order \cite{DBLP:journals/tcs/Cousot02}).
\end{remark}

\subsection{Bi-inductive Trace Semantics}\label{sec:Biinductive-Trace-Semantic-Domain}
The  trace semantics instantiation of (\ref{eq:def:Abstract-Semantic-Domain-Semantics}) is
\bgroup\abovedisplayskip3pt\belowdisplayskip3pt\begin{eqntabular}{rcl}
\sqb{\texttt{S}}^{\pi}&\triangleq&\triple{e:\sqb{\texttt{S}}_{e}^{\pi}}{\bot:\sqb{\texttt{S}}_{\bot}^{\pi}}{br:\sqb{\texttt{S}}_{b}^{\pi}}
\end{eqntabular}\egroup
belonging to the Cartesian product $:(e:\wp(\Sigma^{+})\times\bot:\wp(\Sigma^{\infty})\times{br:\wp(\Sigma^{+})})$
with named selectors $e$, $\bot$, and $br$. Since $\sqb{\texttt{S}}_{e}^{\pi}$ and $\sqb{\texttt{S}}_{\bot}^{\pi}$ are disjoint they can be put together as follows.
\bgroup\belowdisplayskip3pt\begin{eqntabular}{rcl}
\sqb{\texttt{S}}^{\pi}&\triangleq&\pair{ok:\sqb{\texttt{S}}_{e}^{\pi}\cup\sqb{\texttt{S}}_{\bot}^{\pi}}{br:\sqb{\texttt{S}}_{b}^{\pi}}
\end{eqntabular}\egroup
belonging to the Cartesian product
$ok:\wp(\Sigma^{+\infty})\times br:\wp(\Sigma^{+})$ with named selectors $ok$ and $br$. We can recover
$\sqb{\texttt{S}}_{e}^{\pi}=(\sqb{\texttt{S}}_{ok}^{\pi})\cap\Sigma^+$ and $\sqb{\texttt{S}}_{\bot}^{\pi}=(\sqb{\texttt{S}}_{ok}^{\pi})\cap\Sigma^\infty$.
Moreover, if $T=\pair{ok:Q}{br:B}\in ok:\wp(\Sigma^{+\infty})\times br:\wp(\Sigma^{+})$, then we define the shorthands 
\begin{eqntabular}{C}
$T_{ok}=Q$,\quad $T_{+}=Q\cap\Sigma^+$,\quad $T_{\infty}=Q\cap\Sigma^\infty$,\quad and\quad $T_{br}=B$.
\label{eq:shorthands}
\end{eqntabular}
Then the pairwise order on $(e:\wp(\Sigma^{+})\times\bot:\wp(\Sigma^{\infty}))$ becomes the computational ordering of \cite{DBLP:conf/popl/CousotC92,DBLP:journals/iandc/CousotC09} defined on $\sqb{\texttt{S}}_{e}^{\pi}\cup\sqb{\texttt{S}}_{\bot}^{\pi}$ as $X\sqsubseteq Y\triangleq (X\cap\Sigma^+\subseteq Y\cap\Sigma^+)\wedge (X\cap\Sigma^\infty\supseteq Y\cap\Sigma^\infty)$.
\end{toappendix}

Notice that the algebraic semantics can be instantiated to semantics of probabilistic and quantum programs. In this cases the hyperlogics developed in this paper, which differentiate between computational and approximation orders, apply to probabilistic programs \cite{DBLP:journals/entcs/RandZ15,DBLP:journals/ijfcs/HartogV02} and to quantum programs \cite{DBLP:journals/toplas/Ying11,DBLP:journals/pacmpl/YanJY22,DBLP:journals/iandc/FengL23}

\section{Structural Fixpoint Natural Relational Semantics}\label{sect:RelationalSemantics}

The structural fixpoint natural relational semantics of \cite[sect.\@ II.1]{DBLP:journals/pacmpl/Cousot24} is an instance
of the algebraic semantics of sect.\@ \ref{sect:Algebraic-Semantics}. Given states $\Sigma$, $\bot\not\in\Sigma$ denoting
nontermination, and $\Sigma_\bot\triangleq\Sigma\cup\{\bot\}$, the finitary domain 
${\mathbb{L}^{\varrho}_{+}}\triangleq\pair{\wp(\Sigma\times\Sigma)}{\subseteq}$ in \ref{def:abstract:domain:well:def:finite:domain}
and the infinitary domain ${\mathbb{L}^{\varrho}_{\infty}}\triangleq\pair{\wp(\Sigma\times\{\bot\})}{\subseteq}$ in \ref{def:abstract:domain:well:def:infinite:domain} 
are both complete lattices for set inclusion $\subseteq$ so ${\bot_{+}^{\varrho}}=\emptyset$. We let \mbox{$\mathbb{1}$ be the identity function}. The primitives \ref{def:abstract:domain:well:def:abstract:operators} are well-defined.
\bgroup\belowdisplayskip0pt\begin{eqntabular}{rcl@{\qquad}rcl@{\qquad}}
{\textsf{assign}^{\varrho}\sqb{\texttt{x},\texttt{A}}}&\triangleq& \{\pair{\sigma}{\sigma[\texttt{x}\leftarrow\mathcal{A}\sqb{\texttt{A}}\sigma]}\mid \sigma\in\Sigma\}
&
{\textsf{init}^{\varrho}}&\triangleq& \mathbb{1}\nonumber\\
{\textsf{rassign}^{\varrho}\sqb{\texttt{x},a,b}}&\triangleq& \{\pair{\sigma}{\sigma[\texttt{x}\leftarrow i]}\mid \sigma\in\Sigma\wedge a-1 <i< b+1\}
&
{\textsf{break}^{\varrho}}&\triangleq& \mathbb{1}\label{eq:relational-semantics-primitives}\stepcounter{equation}\renumber{\raisebox{-0.66em}[0pt][0pt]{(\ref{eq:relational-semantics-primitives})}}\\
{\textsf{test}^{\varrho}\sqb{\texttt{B}}}&\triangleq& \{\pair{\sigma}{\sigma}\mid\sigma\in\mathcal{B}\sqb{\texttt{B}}\}
&
{\textsf{skip}^{\varrho}}&\triangleq& \mathbb{1}\nonumber\\
r\mathbin{{\fatsemi}^{\varrho}}r'&\triangleq&\rlap{$\{\pair{x}{\bot}\mid \pair{x}{\bot}\in r\}\cup\{\pair{x}{y}\mid \exists z\in\Sigma\mathrel{.}\pair{x}{z}\in r\wedge \pair{z}{y}\in r'\}$}
\nonumber
\end{eqntabular}%
\abovedisplayskip-3pt%
\begin{eqntabular}[fl]{p{0.925\textwidth}@{\qquad}}
$\mathbin{{\fatsemi}^{\varrho}}$ left preserves arbitrary joins $\cup$ on $\wp(\Sigma\times\Sigma_{\bot})$.
$\mathbin{{\fatsemi}^{\varrho}}$ right preserves non empty joins $\cup$ on $\wp(\Sigma\times\Sigma_{\bot})$. $\mathbin{{\fatsemi}^{\varrho}}$ is right increasing (but not necessarily lower continuous for the finitary and infinitary domains)\proofinapx.\label{eq:fatsemi-varrho-additive}
\end{eqntabular}\egroup
\begin{toappendix}
\begin{proof}[Proof of (\ref{eq:fatsemi-varrho-additive})]\

\hyphen{5}\quad Let $\pair{X_i}{i\in\Delta}$ be a possibly empty family of elements of ${\wp(\Sigma\times\Sigma_{\bot})}$.
\begin{calculus}
\formula{(\bigcup_{i\in\Delta}X_i)\mathbin{{\fatsemi}^{\varrho}}r'}\\
=
\formulaexplanation{((\bigcup_{i\in\Delta}X_i\cap\wp(\Sigma\times\Sigma))\cup (\bigcup_{i\in\Delta}X_i\cap\wp(\Sigma\times\{\bot\})))\mathbin{{\fatsemi}^{\varrho}}r'}{def.\ $\wp(\Sigma\times\Sigma_{\bot})$}\\
=
\formulaexplanation{\{\pair{x}{\bot}\mid \pair{x}{\bot}\in (\bigcup_{i\in\Delta}X_i\cap\wp(\Sigma\times\{\bot\}))\}\cup\{\pair{x}{y}\mid \exists z\in\Sigma\mathrel{.}\pair{x}{z}\in (\bigcup_{i\in\Delta}X_i\cap\wp(\Sigma\times\Sigma))\wedge \pair{z}{y}\in r'\}}{def.\ $\mathbin{{\fatsemi}^{\varrho}}$, 
$\forall x\in\Sigma\mathrel{.}\pair{x}{\bot}\not\in\Sigma\times\Sigma$, and $\forall z\in\Sigma\mathrel{.}\pair{x}{z}\not\in \Sigma\times\{\bot\}$ since $\bot\not\in\Sigma$}\\[0.75ex]
=
\formula{\bigcup_{i\in\Delta}(\{\pair{x}{\bot}\mid \pair{x}{\bot}\in (X_i\cap\wp(\Sigma\times\{\bot\}))\}\cup\{\pair{x}{y}\mid \exists z\in\Sigma\mathrel{.}\pair{x}{z}\in (X_i\cap\wp(\Sigma\times\Sigma))\wedge \pair{z}{y}\in r'\})}\\[-1.5ex]\rightexplanation{def.\ $\cup$}\\
=
\formulaexplanation{\bigcup_{i\in\Delta}(\{\pair{x}{\bot}\mid \pair{x}{\bot}\in (X_i\cap\wp(\Sigma\times\Sigma))\cup(X_i\cap\wp(\Sigma\times\{\bot\}))\}\cup\{\pair{x}{y}\mid \exists z\in\Sigma\mathrel{.}\pair{x}{z}\in  (X_i\cap\wp(\Sigma\times\Sigma))\cup (\bigcup_{i\in\Delta}X_i\cap\wp(\Sigma\times\{\bot\}))\wedge \pair{z}{y}\in r'\})}{$\bot\not\in\Sigma$}\\
=
\formulaexplanation{\bigcup_{i\in\Delta}(\{\pair{x}{\bot}\mid \pair{x}{\bot}\in X_i\}\cup\{\pair{x}{y}\mid \exists z\in\Sigma\mathrel{.}\pair{x}{z}\in  X_i\wedge \pair{z}{y}\in r'\})}{def.\ $\wp(\Sigma\times\Sigma_{\bot})$}\\
=
\formulaexplanation{\bigcup_{i\in\Delta}(X_i\mathbin{{\fatsemi}^{\varrho}}r')}{def.\ $\mathbin{{\fatsemi}^{\varrho}}$, Q.E.D.}
\end{calculus}

\smallskip

\noindent Notice that if $\Delta=\emptyset$ then $\emptyset\mathbin{{\fatsemi}^{\varrho}}r'=\emptyset$.

\medskip

\hyphen{5}\quad Let $\pair{X_i}{i\in\Delta}$ be a nonempty family of elements of ${\wp(\Sigma\times\Sigma_{\bot})}\setminus\{\emptyset\}$.
\begin{calculus}
\formula{r\mathbin{{\fatsemi}^{\varrho}}(\bigcup_{i\in\Delta}X_i)}\\
=
\formulaexplanation{\{\pair{x}{\bot}\mid \pair{x}{\bot}\in r\}\cup\{\pair{x}{y}\mid \exists z\in\Sigma\mathrel{.}\pair{x}{z}\in r\wedge \pair{z}{y}\in (\bigcup_{i\in\Delta}X_i)\}}{def.\ $\mathbin{{\fatsemi}^{\varrho}}$}\\
=
\formulaexplanation{\bigcup_{i\in\Delta}(\{\pair{x}{\bot}\mid \pair{x}{\bot}\in r\}\cup\{\pair{x}{y}\mid \exists z\in\Sigma\mathrel{.}\pair{x}{z}\in r\wedge \pair{z}{y}\in X_i\})}{def.\ $\cup$}\\

=
\formulaexplanation{\bigcup_{i\in\Delta}(r\mathbin{{\fatsemi}^{\varrho}}X_i)}{def.\ $\mathbin{{\fatsemi}^{\varrho}}$, Q.E.D.}
\end{calculus}
\smallskip

\noindent If $\Delta=\emptyset$ then $r\mathbin{{\fatsemi}^{\varrho}}(\bigcup_{i\in\Delta}X_i)=r\mathbin{{\fatsemi}^{\varrho}}\emptyset=\{\pair{x}{\bot}\mid \pair{x}{\bot}\in r\}$ which, in general is not empty, while
$\bigcup_{i\in\Delta}(r\mathbin{{\fatsemi}^{\varrho}}X_i)=\emptyset$.

\medskip

\hyphen{5}\quad The following counter example shows that if $\pair{X^i\in\wp(\Sigma\times\Sigma)}{i\in\mathbb{N}}$ is a decreasing
chain and $r\in\wp(\Sigma\times\Sigma)$, we may have $r\mathbin{{\fatsemi}^{\varrho}}(\bigcap_{i\in\mathbb{N}}X^i)$ $\neq$
$\bigcap_{i\in\mathbb{N}}(r\mathbin{{\fatsemi}^{\varrho}}X^i)$. 

Take $r\triangleq\{\bar{\sigma}\}\times\Sigma$ and $X^i=\{\pair{\sigma_j}{\bar{\sigma}}\mid j\geqslant i\}$ (that is $X^0=\{\pair{\sigma_0}{\bar{\sigma}},\pair{\sigma_1}{\bar{\sigma}},
\pair{\sigma_2}{\bar{\sigma}},\ldots\}$, $X^1=\{\pair{\sigma_1}{\bar{\sigma}},
\pair{\sigma_2}{\bar{\sigma}},\ldots\}$, $X^2=\{
\pair{\sigma_2}{\bar{\sigma}},\ldots\}$, etc). Then $r\mathbin{{\fatsemi}^{\varrho}}(\bigcap_{i\in\mathbb{N}}X^i)=r\mathbin{{\fatsemi}^{\varrho}}\emptyset=\emptyset$ while
$\bigcap_{i\in\mathbb{N}}(r\mathbin{{\fatsemi}^{\varrho}}X^i)=\bigcap_{i\in\mathbb{N}}\{\pair{\bar{\sigma}}{\bar{\sigma}}\}=\{\pair{\bar{\sigma}}{\bar{\sigma}}\}$.
\end{proof}
\end{toappendix}
\begin{example}\label{ex:example1-relational}Define $\texttt{S}_1\triangleq\texttt{while (y!=0) y=y-1;}$ with relational semantics
\begin{eqntabular*}{rcl}
\sqb{\texttt{S}_1}^\varrho &=&\triple{e:{\{\pair{\sigma}{\sigma[\texttt{y}\leftarrow 0]}
\mid\sigma(\texttt{y})\geqslant 0\}}}{\bot:\{\pair{\sigma}{\bot}\mid  \sigma(\texttt{y})<0\}}{br:\emptyset}
\end{eqntabular*}
meaning that $\texttt{S}_1$ terminates with $\texttt{y}=0$ when \texttt{y} is initially positive and otherwise does not terminate. 

 Define $\texttt{S}_2$ $\triangleq$ \texttt{y=[-oo,oo]; S}$_1$ with relational semantics
 \begin{eqntabular*}{rcl}
\sqb{\texttt{S}_2}^\varrho
&=&
\triple{e:{\{\pair{\sigma}{\sigma[\texttt{y}\leftarrow 0]}
\mid \sigma\in\Sigma\}}}{\bot:\{\pair{\sigma}{\bot}\mid  \sigma\in\Sigma\}}{br:\emptyset}
\end{eqntabular*}
meaning that either \texttt{S}$_2$
terminates with \texttt{y}=0 or does not terminate \proofinapx.
\end{example}
\begin{toappendix}
\begin{proof}[Proof of example \ref{ex:example1-relational}]
\begin{calculus}
\hyphen{5}\formula{\sqb{\texttt{y!=0;y=y-1;}}_{e}^{\varrho}}\\
=
\formulaexplanation{\sqb{\texttt{y!=0}}_{e}^{\varrho}\mathbin{\,\fatsemi^\varrho\,}\sqb{\texttt{y=y-1;}}_{e}^{\varrho}}{(\ref{eq:def:abstract:sem:seq})}\\
=
\formulaexplanation{\{\pair{\sigma}{\sigma}\mid\sigma(\texttt{y})\neq0\}\mathbin{\,\fatsemi^\varrho\,}\{\pair{\sigma}{\sigma[\texttt{y}\leftarrow\sigma(\texttt{y})-1]}\mid \sigma\in\Sigma\}}{(\ref{eq:def:sem:abstract:basis}) and (\ref{eq:relational-semantics-primitives})}\\
=
\formulaexplanation{\{\pair{\sigma}{\sigma[\texttt{y}\leftarrow\sigma(\texttt{y})-1]}\mid \sigma(\texttt{y})\neq0\}}{def.\ (\ref{eq:relational-semantics-primitives}) of $\fatsemi^\varrho$}\\[1ex]

\hyphen{5}\formulaexplanation{\Cev{F}_{e}^{\varrho}}{for $\texttt{S}_1$ = \texttt{while (y!=0) y=y-1;}}\\
$\triangleq$
\formulaexplanation{\LAMBDA{X\in\wp(\Sigma\times\Sigma)}{\textsf{init}^{\varrho}} \mathbin{\sqcup_{+}^{\varrho}} (\sqb{\texttt{y!=0;y=y-1;}}_{e}^{\varrho}\mathbin{{\fatsemi}^{\varrho}} X)}{(\ref{eq:natural-transformer-finite-backward})}\\
=
\formulaexplanation{\LAMBDA{X\in\wp(\Sigma\times\Sigma)}\{\pair{\sigma}{\sigma}\mid\sigma\in\Sigma\} \cup (\{\pair{\sigma}{\sigma[\texttt{y}\leftarrow\sigma(\texttt{y})-1]}\mid \sigma(\texttt{y})\neq0\}\mathbin{{\fatsemi}^{\varrho}} X)}{(\ref{eq:relational-semantics-primitives})}\\
=
\formulaexplanation{\LAMBDA{X\in\wp(\Sigma\times\Sigma)}\{\pair{\sigma}{\sigma}\mid\sigma\in\Sigma\} \cup \{\pair{\sigma}{\sigma'}\mid
\sigma(\texttt{y})\neq0\wedge\pair{\sigma[\texttt{y}\leftarrow\sigma(\texttt{y})-1]}{\sigma'}\in X\}}{(\ref{eq:relational-semantics-primitives})}\\[-0.75ex]
\end{calculus}

\hyphen{5}\ \ By (\ref{eq:fatsemi-varrho-additive}) and (\ref{eq:natural-transformer-finite-backward}), ${\Cev{F}_{e}^{\varrho}}$  for
$\texttt{S}_1$ = \texttt{while (y!=0) y=y-1;}  preserves
nonempty joins $\cup$ so that the infinite  iterates $\pair{X^i}{i\leqslant\omega}$ of $\Lfp{\subseteq}{\Cev{F}_{e}^{\varrho}}$ are as follows
\begin{calculus}[$X^{n+1}$\ =\ ]
$X^0$ = \formula{\emptyset}\\
$X^1$ = \formula{\{\pair{\sigma}{\sigma}\mid\sigma\in\Sigma\} }\\
$X^2$ = \formulaexplanation{\{\pair{\sigma}{\sigma}\mid\sigma\in\Sigma\} \cup \{\pair{\sigma}{\sigma'}\mid
\sigma(\texttt{y})\neq0\wedge\pair{\sigma[\texttt{y}\leftarrow\sigma(\texttt{y})-1]}{\sigma'}\in X^1\}}{def.\ iterates}\\
\phantom{$X^2$} = \formulaexplanation{\{\pair{\sigma}{\sigma}\mid\sigma\in\Sigma\} \cup \{\pair{\sigma}{\sigma[\texttt{y}\leftarrow\sigma(\texttt{y})-1]}\mid
\sigma(\texttt{y})\neq0\}}{def. $X^1$}\\

$X^3$ = \formulaexplanation{\{\pair{\sigma}{\sigma}\mid\sigma\in\Sigma\} \cup \{\pair{\sigma}{\sigma'}\mid
\sigma(\texttt{y})\neq0\wedge\pair{\sigma[\texttt{y}\leftarrow\sigma(\texttt{y})-1]}{\sigma'}\in X^2\}}{def.\ iterates}\\

\phantom{$X^3$} = \formulaexplanation{\{\pair{\sigma}{\sigma}\mid\sigma\in\Sigma\} \cup \{\pair{\sigma}{\sigma'}\mid
\sigma(\texttt{y})\neq0\wedge\pair{\sigma[\texttt{y}\leftarrow\sigma(\texttt{y})-1]}{\sigma'}\in (\{\pair{\sigma'}{\sigma'}\mid\sigma'\in\Sigma\} \cup \{\pair{\sigma'}{\sigma'[\texttt{y}\leftarrow\sigma'(\texttt{y})-1]}\mid
\sigma'(\texttt{y})\neq0\})\}}{def. $X^2$}\\

\phantom{$X^3$} = \formulaexplanation{\{\pair{\sigma}{\sigma}\mid\sigma\in\Sigma\} 
\cup \{\pair{\sigma}{\sigma[\texttt{y}\leftarrow\sigma(\texttt{y})-1]}\mid\sigma(\texttt{y})\neq0\}
\cup 
\{\pair{\sigma}{\sigma''}\mid
\sigma(\texttt{y})\neq0\wedge\pair{\sigma[\texttt{y}\leftarrow\sigma(\texttt{y})-1]}{\sigma''}\in \{\pair{\sigma'}{\sigma'[\texttt{y}\leftarrow\sigma'(\texttt{y})-1]}\mid
\sigma'(\texttt{y})\neq0\}\}
}{def.\ $\cup$}\\

\phantom{$X^3$} = \formulaexplanation{\{\pair{\sigma}{\sigma}\mid\sigma\in\Sigma\} 
\cup \{\pair{\sigma}{\sigma[\texttt{y}\leftarrow\sigma(\texttt{y})-1]}\mid\sigma(\texttt{y})\neq0\}
\cup 
\{\pair{\sigma}{\sigma''}\mid
\sigma(\texttt{y})\neq0\wedge\pair{\sigma[\texttt{y}\leftarrow\sigma(\texttt{y})-1]}{\sigma''}\in \{\pair{{\sigma[\texttt{y}\leftarrow\sigma(\texttt{y})-1]}}{{\sigma[\texttt{y}\leftarrow\sigma(\texttt{y})-1]}[\texttt{y}\leftarrow{\sigma[\texttt{y}\leftarrow\sigma(\texttt{y})-1]}(\texttt{y})-1]}\mid
{\sigma[\texttt{y}\leftarrow\sigma(\texttt{y})-1]}(\texttt{y})\neq0\}\}
}{def.\ $\in$}\\

\phantom{$X^3$} = \formulaexplanation{\{\pair{\sigma}{\sigma}\mid\sigma\in\Sigma\} 
\cup \{\pair{\sigma}{\sigma[\texttt{y}\leftarrow\sigma(\texttt{y})-1]}\mid\sigma(\texttt{y})\neq0\}
\cup 
\{\pair{\sigma}{{\sigma[\texttt{y}\leftarrow\sigma(\texttt{y})-1]}[\texttt{y}\leftarrow{\sigma[\texttt{y}\leftarrow\sigma(\texttt{y})-1]}(\texttt{y})-1]}\mid
\sigma(\texttt{y})\neq0\wedge
{\sigma[\texttt{y}\leftarrow\sigma(\texttt{y})-1]}(\texttt{y})\neq0\}
}{def.\ $\in$}\\

\phantom{$X^3$} = \formulaexplanation{\{\pair{\sigma}{\sigma}\mid\sigma\in\Sigma\} 
\cup \{\pair{\sigma}{\sigma[\texttt{y}\leftarrow\sigma(\texttt{y})-1]}\mid\sigma(\texttt{y})\neq0\}
\cup 
\{\pair{\sigma}{{\sigma[\texttt{y}\leftarrow\sigma(\texttt{y})-2]}}\mid
\sigma(\texttt{y})\neq0\wedge\sigma(\texttt{y})\neq1\}
}{simplification}\\
{$X^n$} = \formulaexplanation{\{\pair{\sigma}{\sigma}\mid\sigma\in\Sigma\} 
\cup \bigcup_{i=1}^{n-1}\{\pair{\sigma}{{\sigma[\texttt{y}\leftarrow\sigma(\texttt{y})-i]}}\mid
\bigwedge_{j=0}^{i-1}\sigma(\texttt{y})\neq j\}}{induction hypothesis}\\

$X^{n+1}$ = \formulaexplanation{\{\pair{\sigma}{\sigma}\mid\sigma\in\Sigma\} \cup \{\pair{\sigma}{\sigma'}\mid
\sigma(\texttt{y})\neq0\wedge\pair{\sigma[\texttt{y}\leftarrow\sigma(\texttt{y})-1]}{\sigma'}\in X^{n}\}}{def.\ iterates}\\

\phantom{$X^{n+1}$} = \formulaexplanation{\{\pair{\sigma}{\sigma}\mid\sigma\in\Sigma\} \cup \{\pair{\sigma}{\sigma'}\mid
\sigma(\texttt{y})\neq0\wedge\pair{\sigma[\texttt{y}\leftarrow\sigma(\texttt{y})-1]}{\sigma'}\in (\{\pair{\sigma}{\sigma}\mid\sigma\in\Sigma\} 
\cup \bigcup_{i=1}^{n-1}\{\pair{\sigma}{{\sigma[\texttt{y}\leftarrow\sigma(\texttt{y})-i]}}\mid
\bigwedge_{j=0}^{i-1}\sigma(\texttt{y})\neq j\})\}}{def.\ $X^n$}\\

\phantom{$X^{n+1}$} = \formulaexplanation{\{\pair{\sigma}{\sigma}\mid\sigma\in\Sigma\} \cup \{\pair{\sigma}{\sigma[\texttt{y}\leftarrow\sigma(\texttt{y})-1]}\mid
\sigma(\texttt{y})\neq0\}
\cup
\{\pair{\sigma}{\sigma'}\mid
\sigma(\texttt{y})\neq0\wedge\pair{\sigma[\texttt{y}\leftarrow\sigma(\texttt{y})-1]}{\sigma'}\in \bigcup_{i=1}^{n-1}\{\pair{\sigma''}{{\sigma''[\texttt{y}\leftarrow\sigma''(\texttt{y})-i]}}\mid
\bigwedge_{j=0}^{i-1}\sigma''(\texttt{y})\neq j\}\}
}{def.\ $\cup$, renaming}\\

\phantom{$X^{n+1}$} = \formulaexplanation{\{\pair{\sigma}{\sigma}\mid\sigma\in\Sigma\} \cup \{\pair{\sigma}{\sigma[\texttt{y}\leftarrow\sigma(\texttt{y})-1]}\mid
\sigma(\texttt{y})\neq0\}
\cup
\bigcup_{i=1}^{n-1}\{\pair{\sigma}{\sigma'}\mid
\sigma(\texttt{y})\neq0\wedge\pair{\sigma[\texttt{y}\leftarrow\sigma(\texttt{y})-1]}{\sigma'}\in \{\pair{\sigma''}{{\sigma''[\texttt{y}\leftarrow\sigma''(\texttt{y})-i]}}\mid
\bigwedge_{j=0}^{i-1}\sigma''(\texttt{y})\neq j\}\}
}{def.\ $\cup$}\\

\phantom{$X^{n+1}$} = \formula{\{\pair{\sigma}{\sigma}\mid\sigma\in\Sigma\} \cup \{\pair{\sigma}{\sigma[\texttt{y}\leftarrow\sigma(\texttt{y})-1]}\mid
\sigma(\texttt{y})\neq0\}
\cup
\bigcup_{i=1}^{n-1}\{\pair{\sigma}{\sigma[\texttt{y}\leftarrow\sigma(\texttt{y})-(i+1)]}\mid
\sigma(\texttt{y})\neq0\wedge
\bigwedge_{j=0}^{i-1}{\sigma[\texttt{y}\leftarrow\sigma(\texttt{y})-1]}(\texttt{y})\neq j\}
}\\
\explanation{def.\ $\in$ so $\sigma''={\sigma[\texttt{y}\leftarrow\sigma(\texttt{y})-1]}$ and $\sigma'={{\sigma''[\texttt{y}\leftarrow\sigma''(\texttt{y})-i]}}={{{\sigma[\texttt{y}\leftarrow\sigma(\texttt{y})-1]}[\texttt{y}\leftarrow{\sigma[\texttt{y}\leftarrow\sigma(\texttt{y})-1]}(\texttt{y})-i]}}={\sigma[\texttt{y}\leftarrow\sigma(\texttt{y})-(i+1)]}$}\\

\phantom{$X^{n+1}$} = \formulaexplanation{\{\pair{\sigma}{\sigma}\mid\sigma\in\Sigma\} \cup \{\pair{\sigma}{\sigma[\texttt{y}\leftarrow\sigma(\texttt{y})-1]}\mid
\sigma(\texttt{y})\neq0\}
\cup
\bigcup_{i=1}^{n-1}\{\pair{\sigma}{\sigma[\texttt{y}\leftarrow\sigma(\texttt{y})-(i+1)]}\mid
\sigma(\texttt{y})\neq0\wedge
\bigwedge_{j=0}^{i-1}{\sigma(\texttt{y})}\neq j+1\}
}{simplification}\\

\phantom{$X^{n+1}$} = \formula{\{\pair{\sigma}{\sigma}\mid\sigma\in\Sigma\} \cup \{\pair{\sigma}{\sigma[\texttt{y}\leftarrow\sigma(\texttt{y})-1]}\mid
\sigma(\texttt{y})\neq0\}
\cup
\bigcup_{i=1}^{n-1}\{\pair{\sigma}{\sigma[\texttt{y}\leftarrow\sigma(\texttt{y})-(i+1)]}\mid
\bigwedge_{j=0}^{i}{\sigma(\texttt{y})}\neq j\}
}\\
\rightexplanation{change of dummy variable and incorporation of $\sigma(\texttt{y})\neq0$ in the conjunction for $j=0$}\\

\phantom{$X^{n+1}$} = \formula{\{\pair{\sigma}{\sigma}\mid\sigma\in\Sigma\} \
\cup
\bigcup_{i=0}^{n-1}\{\pair{\sigma}{\sigma[\texttt{y}\leftarrow\sigma(\texttt{y})-(i+1)]}\mid
\bigwedge_{j=0}^{i}{\sigma(\texttt{y})}\neq j\}\}
}\\
\rightexplanation{incorporation of $\{\pair{\sigma}{\sigma[\texttt{y}\leftarrow\sigma(\texttt{y})-1]}\mid \sigma(\texttt{y})\neq0\}$ in the union for $i=0$}\\

\phantom{$X^{n+1}$} = \formula{\{\pair{\sigma}{\sigma}\mid\sigma\in\Sigma\} \
\cup
\bigcup_{i=1}^{(n+1)-1}\{\pair{\sigma}{\sigma[\texttt{y}\leftarrow\sigma(\texttt{y})-i]}\mid
\bigwedge_{j=0}^{i-1}{\sigma(\texttt{y})}\neq j\}
}\\[-0.75ex]\rightexplanation{change of dummy variables}\\[-0.5ex]
\end{calculus}
\hyphen{5}\ \ By recurrence, $X^n=\{\pair{\sigma}{\sigma}\mid\sigma\in\Sigma\} 
\cup \bigcup_{i=1}^{n-1}\{\pair{\sigma}{{\sigma[\texttt{y}\leftarrow\sigma(\texttt{y})-i]}}\mid
\bigwedge_{j=0}^{i-1}\sigma(\texttt{y})\neq j\}$, so that the least fixpoint of ${\Cev{F}_{e}^{\varrho}}$ for
$\texttt{S}_1$ = \texttt{while (y!=0) y=y-1;}  is
\begin{calculus}
\formula{\Lfp{\subseteq}{\Cev{F}_{e}^{\varrho}}}\\
=
\formulaexplanation{\bigcup_{n\in\mathbb{N}}X^n}{def.\ iterates}\\
=
\formula{\{\pair{\sigma}{\sigma}\mid\sigma\in\Sigma\} \
\cup
\bigcup_{n\in\mathbb{N}}\bigcup_{i=1}^{n}\{\pair{\sigma}{\sigma[\texttt{y}\leftarrow\sigma(\texttt{y})-i]}\mid
\bigwedge_{j=0}^{i-1}{\sigma(\texttt{y})}\neq j\}
}\\
=
\formula{\{\pair{\sigma}{\sigma}\mid\sigma\in\Sigma\} \
\cup
\bigcup_{i>0}\{\pair{\sigma}{\sigma[\texttt{y}\leftarrow\sigma(\texttt{y})-i]}\mid
{\sigma(\texttt{y})}\not\in\interval{0}{i-1}\}
}\\[-0.5ex]
\end{calculus}
\hyphen{5}\quad It follows that for $\texttt{S}_1\triangleq\texttt{while (y!=0) y=y-1;}$, we have
\begin{calculus}
\formula{\sqb{\texttt{S}_1}_{e}^{\varrho}}\\
=
\formulaexplanation{
\Lfp{\subseteq}{\Cev{F}_{e}^{\varrho}}\mathbin{{\fatsemi}^{\varrho}}(\sqb{\neg\texttt{B}}_{e}^{\varrho}\cup\sqb{\texttt{B;S}}_{b}^{\varrho})}{by (\ref{eq:sem:abstract:finite}) with \texttt{B} = \texttt{(y!=0)}, $\neg$\texttt{B} = \texttt{(y=0)}, and \texttt{S} =  \texttt{y=y-1;}}\\
=
\formula{(\{\pair{\sigma}{\sigma}\mid\sigma\in\Sigma\} \
\cup
\bigcup_{i>0}\{\pair{\sigma}{\sigma[\texttt{y}\leftarrow\sigma(\texttt{y})-i]}\mid
{\sigma(\texttt{y})}\not\in\interval{0}{i-1}\}
)\mathbin{{\fatsemi}^{\varrho}}(\{\pair{\sigma}{\sigma}\mid\sigma(\texttt{y})=0\}\cup\emptyset)}\\
=
\formulaexplanation{\{\pair{\sigma}{\sigma}\mid\sigma(\texttt{y})=0\} \
\cup
\bigcup_{i>0}\{\pair{\sigma}{\sigma[\texttt{y}\leftarrow\sigma(\texttt{y})-i]}\mid
{\sigma(\texttt{y})}\not\in\interval{0}{i-1}\wedge{\sigma[\texttt{y}\leftarrow\sigma(\texttt{y})-i]}(\texttt{y})=0\}}{(\ref{eq:relational-semantics-primitives})}\\
=
\formula{\{\pair{\sigma}{\sigma}\mid\sigma(\texttt{y})=0\} \
\cup
\bigcup_{i>0}\{\pair{\sigma}{\sigma[\texttt{y}\leftarrow\sigma(\texttt{y})-i]}\mid
{\sigma(\texttt{y})}\not\in\interval{0}{i-1}\wedge\sigma(\texttt{y})=i\}}\\\rightexplanation{function application}\\
=
\formula{\{\pair{\sigma}{\sigma[\texttt{y}\leftarrow 0]}\mid\sigma(\texttt{y})=0\} \
\cup
\bigcup_{i>0}\{\pair{\sigma}{\sigma[\texttt{y}\leftarrow 0]}\mid
\sigma(\texttt{y})=i\}}\\\rightexplanation{substitution $\sigma(\texttt{y})=i$}\\
=
\formulaexplanation{\{\pair{\sigma}{\sigma[\texttt{y}\leftarrow 0]}\mid
\sigma(\texttt{y})\geqslant0\}}{joining cases}
\end{calculus}

\medskip

\hyphen{5}\quad It follows that for  $\texttt{S}_2$ = \texttt{y=[-oo,oo]; S}$_1$, we have
\begin{calculus}
\formula{\sqb{\texttt{S}_2}_{e}^{\varrho}}\\
=
\formulaexplanation{\sqb{\texttt{y=[-oo,oo];}}_{e}^{\varrho}\mathbin{{\fatsemi}^{\varrho}}\sqb{\texttt{S}_1}_{e}^{\varrho}}{(\ref{eq:def:abstract:sem:seq})}\\
=
\formulaexplanation{\{\pair{\sigma}{\sigma[\texttt{y}\leftarrow n]}\mid n\in\mathbb{N}\}\mathbin{{\fatsemi}^{\varrho}}\{\pair{\sigma}{\sigma[\texttt{y}\leftarrow 0]}\mid
\sigma(\texttt{y})\geqslant0\}}{(\ref{eq:def:sem:abstract:basis}) and as previously shown}\\
=
\formulaexplanation{\{\pair{\sigma}{\sigma[\texttt{y}\leftarrow n][\texttt{y}\leftarrow 0]}
\mid n\in\mathbb{N}\wedge
\sigma[\texttt{y}\leftarrow n](\texttt{y})\geqslant0\}}{(\ref{eq:relational-semantics-primitives})}\\
=
\formulaexplanation{\{\pair{\sigma}{\sigma[\texttt{y}\leftarrow 0]}
\mid \sigma\in\Sigma\}}{simplification}
\end{calculus}

\medskip

\hyphen{5}\quad By (\ref{eq:trace-transformer-infinite}), we have
\begin{calculus}
\formulaexplanation{{F_{\bot}^{\varrho}}}{for $\texttt{S}_1$ = \texttt{while (y!=0) y=y-1;}}\\
=
\formula{\LAMBDA{X\in{\mathbb{L}^{\varrho}_{\infty}}}{\sqb{\texttt{y!=0;y=y-1;}}_{e}^{\varrho}}\mathbin{{\fatsemi}^{\varrho}} X}\\
=
\formula{\LAMBDA{X\in\wp(\Sigma\times\{\bot\})}\{\pair{\sigma}{\sigma[\texttt{y}\leftarrow\sigma(\texttt{y})-1]}\mid \sigma(\texttt{y})\neq0\}\mathbin{{\fatsemi}^{\varrho}} X}\\
=
\formulaexplanation{\{\pair{x}{y}\mid \exists z\in\Sigma\mathrel{.}\pair{x}{z}\in \{\pair{\sigma}{\sigma[\texttt{y}\leftarrow\sigma(\texttt{y})-1]}\mid \sigma(\texttt{y})\neq0\}\wedge \pair{z}{y}\in X\}}{(\ref{eq:relational-semantics-primitives})}\\
=
\formulaexplanation{\{\pair{\sigma}{y}\mid \sigma(\texttt{y})\neq0\wedge \pair{\sigma[\texttt{y}\leftarrow\sigma(\texttt{y})-1]}{y}\in X\}}{def.\ $\in$}\\
=
\formulaexplanation{\{\pair{\sigma}{\bot}\mid \sigma(\texttt{y})\neq0\wedge \pair{\sigma[\texttt{y}\leftarrow\sigma(\texttt{y})-1]}{\bot}\in X\}}{$X\in\wp(\Sigma\times\{\bot\})$}
\end{calculus}

\medskip

\hyphen{5}\ \ By (\ref{eq:fatsemi-varrho-additive}) and (\ref{eq:natural-transformer-finite-backward}), ${{F_{\bot}^{\varrho}}}$ for $\texttt{S}_1$ = \texttt{while (y!=0) y=y-1;} converges at $\omega$ so that the infinite  iterates $\pair{X^i}{i\leqslant\omega}$ of $\sqb{\texttt{S}_1}_{li}^{\varrho}$
=
$\Gfp{\subseteq}{{F_{\bot}^{\varrho}}}$ are as follows
\begin{calculus}[$X^{n+1}$\ =\ ]
$X^0$ = \formula{\Sigma\times\{\bot\}}\\

$X^1$ = \formulaexplanation{\{\pair{\sigma}{\bot}\mid \sigma(\texttt{y})\neq0\wedge \pair{\sigma[\texttt{y}\leftarrow\sigma(\texttt{y})-1]}{\bot}\in \Sigma\times\{\bot\}\}}{def.\ iterates}\\
\phantom{$X^1$} = \formulaexplanation{\{\pair{\sigma}{\bot}\mid \sigma(\texttt{y})\neq0\}}{simplification}\\

$X^2$ = \formulaexplanation{\{\pair{\sigma}{\bot}\mid \sigma(\texttt{y})\neq0\wedge \pair{\sigma[\texttt{y}\leftarrow\sigma(\texttt{y})-1]}{\bot}\in \{\pair{\sigma'}{\bot}\mid \sigma'(\texttt{y})\neq0\}\}}{def.\ iterates}\\
\phantom{$X^2$} = \formulaexplanation{\{\pair{\sigma}{\bot}\mid \sigma(\texttt{y})\neq0\wedge {\sigma[\texttt{y}\leftarrow\sigma(\texttt{y})-1]}(\texttt{y})\neq0\}}{def.\ $\in$}\\
\phantom{$X^2$} = \formulaexplanation{\{\pair{\sigma}{\bot}\mid \sigma(\texttt{y})\neq0\wedge \sigma(\texttt{y})\neq1\}}{function application}\\

$X^n$ = \formulaexplanation{\{\pair{\sigma}{\bot}\mid \bigwedge_{i=0}^{n-1}\sigma(\texttt{y})\neq i\}}{induction hypothesis}\\[-1ex]

$X^{n+1}$ = \formulaexplanation{\{\pair{\sigma}{\bot}\mid \sigma(\texttt{y})\neq0\wedge \pair{\sigma[\texttt{y}\leftarrow\sigma(\texttt{y})-1]}{\bot}\in \{\pair{\sigma}{\bot}\mid \bigwedge_{i=0}^{n-1}\sigma(\texttt{y})\neq i\}\}}{def.\ iterates}\\
\phantom{$X^{n+1}$} = \formulaexplanation{\{\pair{\sigma}{\bot}\mid \sigma(\texttt{y})\neq0\wedge \bigwedge_{i=0}^{n-1}{\sigma[\texttt{y}\leftarrow\sigma(\texttt{y})-1]}(\texttt{y})\neq i\}}{def.\ $\in$}\\
\phantom{$X^{n+1}$} = \formulaexplanation{\{\pair{\sigma}{\bot}\mid \sigma(\texttt{y})\neq0\wedge \bigwedge_{i=0}^{n-1}\sigma(\texttt{y})\neq i+1\}}
{function application}\\
\phantom{$X^{n+1}$} = \formulaexplanation{\{\pair{\sigma}{\bot}\mid \sigma(\texttt{y})\neq0\wedge \bigwedge_{j=1}^{n}\sigma(\texttt{y})\neq j\}}
{change of dummy variables $j=i+1$}\\
\phantom{$X^{n+1}$} = \formulaexplanation{\{\pair{\sigma}{\bot}\mid \bigwedge_{i=0}^{(n+1)-1}\sigma(\texttt{y})\neq i\}}
{grouping terms and renaming}\\[-0.5ex]
\end{calculus}
\hyphen{5}\ \ By recurrence, $X^n=\{\pair{\sigma}{\bot}\mid \bigwedge_{i=0}^{n-1}\sigma(\texttt{y})\neq i\}$, so that,  
by convergence at $\omega$, the greatest fixpoint is
\begin{calculus}
\formula{\sqb{\texttt{S}_1}_{li}^{\varrho}\colsep{=}\Gfp{\subseteq}{F_{\bot}^{\varrho}}}\\
=
\formulaexplanation{\bigcap_{n\in\mathbb{N}}X^n}{def.\ iterates}\\
=
\formula{\bigcap_{n\in\mathbb{N}}\{\pair{\sigma}{\bot}\mid \bigwedge_{i=0}^{n-1}\sigma(\texttt{y})\neq i\}}\\
=
\formulaexplanation{\{\pair{\sigma}{\bot}\mid \sigma(\texttt{y})<0\}}{$\Sigma=\{\texttt{x},\texttt{y}\}\rightarrow\mathbb{Z}$}\\[-1ex]
\end{calculus}
\hyphen{5}\ \ Obviously $\sqb{\texttt{(y!=0); y=y-1;}}_{\bot}^{\varrho}=\emptyset$ since the body always terminates, so that, by (\ref{eq:sem:abstract:body-infinite}), we have
 $\sqb{\texttt{while (y!=0) y=y-1;}}_{bi}^{\varrho}$
 $\triangleq$
$\Lfp{\sqsubseteq_{+}^{\varrho}}{\Cev{F}_{e}^{\varrho}}\mathbin{{\fatsemi}^{\varrho}}\sqb{\texttt{(y!=0); y=y-1;}}_{\bot}^{\varrho}$ 
=
$\Lfp{\subseteq}{\Cev{F}_{e}^{\varrho}}\mathbin{{\fatsemi}^{\varrho}}\emptyset$
=
$\emptyset$. By (\ref{eq:sem:abstract:loop-nontermination}), we have
$\sqb{\texttt{while (y!=0) y=y-1;}}_{\bot}^{\varrho}$
$\triangleq$
$\sqb{\texttt{while (y!=0) y=y-1;}}_{bi}^{\varrho}\cup\sqb{\texttt{while (y!=0) y=y-1;}}_{li}^{\varrho}$
=
$\{\pair{\sigma}{\bot}\mid \sigma(\texttt{y})<0\}$. 

\smallskip

\hyphen{5}\ \ It follows that
\begin{calculus}
\formula{\sqb{\texttt{S}_2}_{\bot}^{\varrho}}\\
=
\formula{\sqb{\texttt{y=[-oo,oo]; S$_1$}}_{\bot}^{\varrho}}\\
=
\formulaexplanation{\sqb{\texttt{y=[-oo,oo];}}_{\bot}^{\varrho} \cup(\sqb{\texttt{y=[-oo,oo];}}_{e}^{\varrho}\mathbin{\fatsemi^{\varrho}}\sqb{\texttt{S}_1}_{\bot}^{\varrho})}{(\ref{eq:def:abstract:sem:seq})}\\
=
\formulaexplanation{\emptyset \cup(\{\pair{\sigma}{\sigma[\texttt{y}\leftarrow i]}\mid \sigma\in\Sigma\wedge i\in\mathbb{N}\}\mathbin{\fatsemi^{\varrho}}\{\pair{\sigma}{\bot}\mid \sigma(\texttt{y})<0\})}{(\ref{eq:def:abstract:sem:seq}), (\ref{eq:def:sem:abstract:basis}) and (\ref{eq:relational-semantics-primitives})}\\
=
\formulaexplanation{\{\pair{x}{\bot}\mid \pair{x}{\bot}\in \{\pair{\sigma}{\sigma[\texttt{y}\leftarrow i]}\mid \sigma\in\Sigma\wedge i\in\mathbb{N}\}\}\cup\{\pair{x}{y}\mid \exists z\in\Sigma\mathrel{.}\pair{x}{z}\in \{\pair{\sigma}{\sigma[\texttt{y}\leftarrow i]}\mid \sigma\in\Sigma\wedge i\in\mathbb{N}\}\wedge \pair{z}{y}\in \{\pair{\sigma'}{\bot}\mid \sigma'(\texttt{y})<0\}\}}{(\ref{eq:relational-semantics-primitives})}\\
=
\formulaexplanation{\{\pair{\sigma}{\bot}\mid \exists i\mathrel{.}\sigma\in\Sigma\wedge i\in\mathbb{N}\wedge {\sigma[\texttt{y}\leftarrow i]}(\texttt{y})<0\}}{def.\ $\in$}\\
=
\formulaexplanation{\{\pair{\sigma}{\bot}\mid  \exists i\mathrel{.}\sigma\in\Sigma\wedge i\in\mathbb{N}\wedge i<0\}}{function application}\\
=
\formulaexplanation{\{\pair{\sigma}{\bot}\mid  \sigma\in\Sigma\}}{simplification}\\[-0.75ex]
\end{calculus}
\hyphen{5}\ \ By (\ref {eq:def:Abstract-Semantic-Domain-Semantics}) and (\ref{eq:sem:abstract:body-infinite}), we get
$\sqb{\texttt{S}_1}^\varrho$ = $\triple{e:{\{\pair{\sigma}{\sigma[\texttt{y}\leftarrow 0]}
\mid\sigma(\texttt{y})\geqslant 0\}}}{\bot:\{\pair{\sigma}{\bot}\mid  \sigma(\texttt{y})<0\}}{br:\emptyset}$ 
and
$\sqb{\texttt{S}_2}^\varrho$
$\triangleq$
$\triple{e:{\{\pair{\sigma}{\sigma[\texttt{y}\leftarrow 0]}
\mid \sigma\in\Sigma\}}}{\bot:\{\pair{\sigma}{\bot}\mid  \sigma\in\Sigma\}}{br:\emptyset}$.
\end{proof}
\end{toappendix}
\begin{example}\label{ex:examplee-relational}
Define \texttt{S}$_3$ $\triangleq$ \texttt{while (x!=0) \{ S$_2$ x=x-1; \}} with relational semantics
\begin{eqntabular*}{r@{\ }c@{\ }l}
\sqb{\texttt{S}_3}^\sharp
&=&
\triple{e:{\{\pair{\sigma}{\sigma}\mid\sigma(\texttt{x})=0\}
\cup
\{\pair{\sigma}{{\sigma[\texttt{y}\leftarrow 0][\texttt{x}\leftarrow 0]}}\mid
\sigma(\texttt{x})>0\}}}{\bot:{\{\pair{\sigma}{\bot}\mid  \sigma(\texttt{x})\neq 0\}}}{br:\emptyset}
\end{eqntabular*}
meaning that \texttt{S}$_3$ terminates because either the loop is not entered or it is entered with $\texttt{x}>0$ and \texttt{S}$_2$ terminates at each iteration setting $\texttt{y}$ to $0$. \texttt{S}$_3$ does not terminate when the loop is entered and either its body does not terminate or $\texttt{x}<0$.

Define \texttt{S}$_4$ $\triangleq$ \texttt{x=[-oo,oo]; S}$_3$ with relational semantics
\begin{eqntabular*}{r@{\ \ }c@{\ \ }l}
\sqb{\texttt{S}_4}^\sharp
&=&
\triple{e:{\{\pair{\sigma}{\sigma[\texttt{x}\leftarrow 0]}\mid \sigma\in\Sigma\}
\cup
\{\pair{\sigma}{{\sigma[\texttt{y}\leftarrow 0][\texttt{x}\leftarrow 0]}}\mid
\sigma\in\Sigma\}}}{\bot:{\{\pair{\sigma}{\bot}\mid\sigma\in\Sigma\}}}{br:\emptyset}
\end{eqntabular*}
meaning  either 
termination with \texttt{x}=0 (when  \texttt{x} is randomly assigned $0$) or with \texttt{x}=0 and \texttt{y}=0 (when  \texttt{x} is randomly assigned a positive number while \texttt{x} is randomly assigned a positive number or zero) or nontermination (when \texttt{x} is randomly assigned a negative number or \texttt{x} is randomly assigned a positive number and \texttt{y} are randomly assigned a negative number). \proofinapx. In this example, the fixpoint iterations are infinite
but would be transfinite for a transition semantics (corresponding to the lexicographic ordering for the nested loops) \cite{DBLP:journals/tcs/Cousot02}.
\end{example}
\begin{toappendix}
\begin{proof}[Proof of example \ref{ex:examplee-relational}]
\begin{calculus}
\hyphen{5}\formula{\sqb{\texttt{x!=0; S$_2$ x=x-1;}}_{e}^{\varrho}}\\
=
\formulaexplanation{\sqb{\texttt{x!=0}}_{e}^{\varrho}\mathbin{\,\fatsemi^\varrho\,}\sqb{\texttt{S}_2}_{e}^{\varrho}\mathbin{\,\fatsemi^\varrho\,}\sqb{\texttt{x=x-1;}}_{e}^{\varrho}}{(\ref{eq:def:abstract:sem:seq})}\\
=
\formulaexplanation{\{\pair{\sigma}{\sigma}\mid\sigma(\texttt{x})\neq0\}
\mathbin{\,\fatsemi^\varrho\,}
{\{\pair{\sigma}{\sigma[\texttt{y}\leftarrow 0]}\mid \sigma\in\Sigma\}}
\mathbin{\,\fatsemi^\varrho\,}
\{\pair{\sigma}{\sigma[\texttt{x}\leftarrow\sigma(\texttt{x})-1]}\mid \sigma\in\Sigma\}}{(\ref{eq:def:sem:abstract:basis}) and (\ref{eq:relational-semantics-primitives})}\\
=
\formulaexplanation{\{\pair{\sigma}{\sigma[\texttt{y}\leftarrow 0][\texttt{x}\leftarrow\sigma[\texttt{y}\leftarrow 0](\texttt{x})-1]}\mid \sigma(\texttt{x})\neq0\}}{def.\ (\ref{eq:relational-semantics-primitives}) of $\fatsemi^\varrho$}\\
=
\formulaexplanation{\{\pair{\sigma}{\sigma[\texttt{y}\leftarrow 0][\texttt{x}\leftarrow\sigma(\texttt{x})-1]}\mid \sigma(\texttt{x})\neq0\}}{$\texttt{x}\neq\texttt{y}$}\\[1ex]

\hyphen{5}\formulaexplanation{\Cev{F}_{e}^{\varrho}}{for \texttt{S}$_3$ = \texttt{while (x!=0) \{ S$_2$ x=x-1; \}}}\\
$\triangleq$
\formulaexplanation{\LAMBDA{X\in\wp(\Sigma\times\Sigma)}{\textsf{init}^{\varrho}} \mathbin{\sqcup_{+}^{\varrho}} (\sqb{\texttt{x!=0; S$_2$ x=x-1;}}_{e}^{\varrho}\mathbin{{\fatsemi}^{\varrho}} X)}{(\ref{eq:natural-transformer-finite-backward})}\\
=
\formulaexplanation{\LAMBDA{X\in\wp(\Sigma\times\Sigma)}\{\pair{\sigma}{\sigma}\mid\sigma\in\Sigma\} \cup (\{\pair{\sigma}{\sigma[\texttt{y}\leftarrow 0][\texttt{x}\leftarrow\sigma(\texttt{x})-1]}\mid \sigma(\texttt{x})\neq0\}\mathbin{{\fatsemi}^{\varrho}} X)}{(\ref{eq:relational-semantics-primitives})}\\
=
\formulaexplanation{\LAMBDA{X\in\wp(\Sigma\times\Sigma)}\{\pair{\sigma}{\sigma}\mid\sigma\in\Sigma\} \cup \{\pair{\sigma}{\sigma'}\mid
\sigma(\texttt{x})\neq0\wedge\pair{\sigma[\texttt{y}\leftarrow 0][\texttt{x}\leftarrow\sigma(\texttt{x})-1]}{\sigma'}\in X\}}{(\ref{eq:relational-semantics-primitives})}\\[-0.75ex]
\end{calculus}

\hyphen{5}\ \ By (\ref{eq:fatsemi-varrho-additive}) and (\ref{eq:natural-transformer-finite-backward}), ${\Cev{F}_{e}^{\varrho}}$  for \texttt{S}$_3$  = 
\texttt{while (x!=0) \{ S$_2$ x=x-1; \}}  preserves
nonempty joins $\cup$ so that the infinite  iterates $\pair{X^i}{i\leqslant\omega}$ of $\Lfp{\subseteq}{\Cev{F}_{e}^{\varrho}}$ are as follows
\begin{calculus}[$X^{n+1}$\ =\ ]
$X^0$ = \formula{\emptyset}\\
$X^1$ = \formula{\{\pair{\sigma}{\sigma}\mid\sigma\in\Sigma\} }\\
$X^2$ = \formulaexplanation{\{\pair{\sigma}{\sigma}\mid\sigma\in\Sigma\} \cup \{\pair{\sigma}{\sigma'}\mid
\sigma(\texttt{x})\neq0\wedge\pair{\sigma[\texttt{y}\leftarrow 0][\texttt{x}\leftarrow\sigma(\texttt{x})-1]}{\sigma'}\in X^1\}}{def.\ iterates}\\
\phantom{$X^2$} = \formulaexplanation{\{\pair{\sigma}{\sigma}\mid\sigma\in\Sigma\} \cup \{\pair{\sigma}{\sigma[\texttt{y}\leftarrow 0][\texttt{x}\leftarrow\sigma(\texttt{x})-1]}\mid
\sigma(\texttt{x})\neq0\}}{def. $X^1$}\\
{$X^n$} = \formulaexplanation{\{\pair{\sigma}{\sigma}\mid\sigma\in\Sigma\} 
\cup \bigcup_{i=1}^{n-1}\{\pair{\sigma}{{\sigma[\texttt{y}\leftarrow 0][\texttt{x}\leftarrow\sigma(\texttt{x})-i]}}\mid
\bigwedge_{j=0}^{i-1}\sigma(\texttt{x})\neq j\}}{induction hypothesis}\\

$X^{n+1}$ = \formulaexplanation{\{\pair{\sigma}{\sigma}\mid\sigma\in\Sigma\} \cup \{\pair{\sigma}{\sigma'}\mid
\sigma(\texttt{x})\neq0\wedge\pair{\sigma[\texttt{y}\leftarrow 0][\texttt{x}\leftarrow\sigma(\texttt{x})-1]}{\sigma'}\in X^{n}\}}{def.\ iterates}\\
\phantom{$X^{n+1}$} = 
\formulaexplanation{\{\pair{\sigma}{\sigma}\mid\sigma\in\Sigma\} \cup \{\pair{\sigma}{\sigma'}\mid
\sigma(\texttt{x})\neq0\wedge\pair{\sigma[\texttt{y}\leftarrow 0][\texttt{x}\leftarrow\sigma(\texttt{x})-1]}{\sigma'}\in (\{\pair{\sigma}{\sigma}\mid\sigma\in\Sigma\} 
\cup \bigcup_{i=1}^{n-1}\{\pair{\sigma}{{\sigma[\texttt{y}\leftarrow 0][\texttt{x}\leftarrow\sigma(\texttt{x})-i]}}\mid
\bigwedge_{j=0}^{i-1}\sigma(\texttt{x})\neq j\})\}}{def.\ $X^{n}$}\\

\phantom{$X^{n+1}$} = 
\formulaexplanation{\{\pair{\sigma}{\sigma}\mid\sigma\in\Sigma\} \cup 
\{\pair{\sigma}{\sigma'}\mid
\sigma(\texttt{x})\neq0\wedge\pair{\sigma[\texttt{y}\leftarrow 0][\texttt{x}\leftarrow\sigma(\texttt{x})-1]}{\sigma'}\in \{\pair{\sigma''}{\sigma''}\mid\sigma''\in\Sigma\} \}
\cup
 \bigcup_{i=1}^{n-1}\{\pair{\sigma}{\sigma'}\mid
\sigma(\texttt{x})\neq0\wedge\pair{\sigma[\texttt{y}\leftarrow 0][\texttt{x}\leftarrow\sigma(\texttt{x})-1]}{\sigma'}\in\{\pair{\sigma''}{{\sigma''[\texttt{y}\leftarrow 0][\texttt{x}\leftarrow\sigma''(\texttt{x})-i]}}\mid
\bigwedge_{j=0}^{i-1}\sigma''(\texttt{x})\neq j\})\}}{def.\ $\in$ and $\cup$, renaming}\\

\phantom{$X^{n+1}$} = 
\formulaexplanation{\{\pair{\sigma}{\sigma}\mid\sigma\in\Sigma\} 
\cup 
\{\pair{\sigma}{\sigma[\texttt{y}\leftarrow 0][\texttt{x}\leftarrow\sigma(\texttt{x})-1]}\mid
\sigma(\texttt{x})\neq0 \}
\cup
 \bigcup_{i=1}^{n-1}\{\pair{\sigma}{\sigma'}\mid\exists \sigma''\mathrel{.}
 \sigma''={\sigma[\texttt{y}\leftarrow 0][\texttt{x}\leftarrow\sigma(\texttt{x})-1]}
 \wedge
 {{\sigma''[\texttt{y}\leftarrow 0][\texttt{x}\leftarrow\sigma''(\texttt{x})-i]}}=\sigma'
\wedge
\bigwedge_{j=0}^{i-1}\sigma''(\texttt{x})\neq j)\}}{def.\ $\in$}\\

\phantom{$X^{n+1}$} = 
\formula{\{\pair{\sigma}{\sigma}\mid\sigma\in\Sigma\} 
\cup 
\{\pair{\sigma}{\sigma[\texttt{y}\leftarrow 0][\texttt{x}\leftarrow\sigma(\texttt{x})-1]}\mid
\sigma(\texttt{x})\neq0 \}
\cup
 \bigcup_{i=1}^{n-1}\{\pair{\sigma}{\sigma'}\mid\exists \sigma''\mathrel{.}
 \sigma''={\sigma[\texttt{y}\leftarrow 0][\texttt{x}\leftarrow\sigma(\texttt{x})-1]}
 \wedge
 {{\sigma[\texttt{y}\leftarrow 0][\texttt{x}\leftarrow\sigma(\texttt{x})-(i+1)]}}=\sigma'
\wedge
\bigwedge_{j=0}^{i-1}\sigma(\texttt{x})\neq (j+1))\}}
\\\explanation{function application with $\sigma''(\texttt{x})=\sigma(\texttt{x})-1$
and ${{\sigma''[\texttt{y}\leftarrow 0][\texttt{x}\leftarrow\sigma(\texttt{x})-(i+1)]}}={{\sigma[\texttt{y}\leftarrow 0][\texttt{x}\leftarrow\sigma(\texttt{x})-(i+1)]}}$}\\

\phantom{$X^{n+1}$} = 
\formulaexplanation{\{\pair{\sigma}{\sigma}\mid\sigma\in\Sigma\} 
\cup 
\{\pair{\sigma}{\sigma[\texttt{y}\leftarrow 0][\texttt{x}\leftarrow\sigma(\texttt{x})-1]}\mid
\sigma(\texttt{x})\neq0 \}
\cup
 \bigcup_{i=1}^{n-1}\{\pair{\sigma}{{\sigma[\texttt{y}\leftarrow 0][\texttt{x}\leftarrow\sigma(\texttt{x})-(i+1)]}}\mid
\bigwedge_{j=0}^{i-1}\sigma(\texttt{x})\neq (j+1)j)\}}{simplification}\\

\phantom{$X^{n+1}$} = 
\formulaexplanation{\{\pair{\sigma}{\sigma}\mid\sigma\in\Sigma\} 
\cup 
\{\pair{\sigma}{\sigma[\texttt{y}\leftarrow 0][\texttt{x}\leftarrow\sigma(\texttt{x})-1]}\mid
\sigma(\texttt{x})\neq0 \}
\cup
 \bigcup_{i'=2}^{n}\{\pair{\sigma}{{\sigma[\texttt{y}\leftarrow 0][\texttt{x}\leftarrow\sigma(\texttt{x})-i']}}\mid
\bigwedge_{j=0}^{i'-2}\sigma(\texttt{x})\neq (j+1))\}}{change of variable $i'=i+1$}\\

\phantom{$X^{n+1}$} = 
\formulaexplanation{\{\pair{\sigma}{\sigma}\mid\sigma\in\Sigma\} 
\cup 
\{\pair{\sigma}{\sigma[\texttt{y}\leftarrow 0][\texttt{x}\leftarrow\sigma(\texttt{x})-1]}\mid
\sigma(\texttt{x})\neq0 \}
\cup
 \bigcup_{i'=2}^{n}\{\pair{\sigma}{{\sigma[\texttt{y}\leftarrow 0][\texttt{x}\leftarrow\sigma(\texttt{x})-i']}}\mid
\bigwedge_{j'=1}^{i'-1}\sigma(\texttt{x})\neq j')\}}{change of variable $j'=j+1$}\\

\phantom{$X^{n+1}$} = 
\formula{\{\pair{\sigma}{\sigma}\mid\sigma\in\Sigma\} 
\cup 
 \bigcup_{i=1}^{(n+1)-1}\{\pair{\sigma}{\sigma[\texttt{y}\leftarrow 0][\texttt{x}\leftarrow\sigma(\texttt{x})-i]}\mid
\bigwedge_{j=0}^{i-1}\sigma(\texttt{x})\neq j\})\}}\\\rightexplanation{grouping terms for $i=1$}\\[-1.5em]
\end{calculus}
which is the induction hypothesis for $n+1$.

\medskip

\hyphen{5}\ \ By recurrence, $X^n=\{\pair{\sigma}{\sigma}\mid\sigma\in\Sigma\} 
\cup \bigcup_{i=1}^{n-1}\{\pair{\sigma}{{\sigma[\texttt{y}\leftarrow 0][\texttt{x}\leftarrow\sigma(\texttt{x})-i]}}\mid
\bigwedge_{j=0}^{i-1}\sigma(\texttt{x})\neq j\}$, so that the least fixpoint of ${\Cev{F}_{e}^{\varrho}}$ for \texttt{S}$_3$  = 
\texttt{while (x!=0) \{ S$_2$ x=x-1; \}}  is
\begin{calculus}
\formulaexplanation{\Lfp{\subseteq}{\Cev{F}_{e}^{\varrho}}}{for \texttt{S}$_3$  = \texttt{while (x!=0) \{ S$_2$ x=x-1; \}}}\\
=
\formulaexplanation{\bigcup_{n\in\mathbb{N}}X^n}{def.\ iterates}\\
=
\formulaexplanation{\{\pair{\sigma}{\sigma}\mid\sigma\in\Sigma\} 
\cup \bigcup_{n\in\mathbb{N}}\bigcup_{i=1}^{n-1}\{\pair{\sigma}{{\sigma[\texttt{y}\leftarrow 0][\texttt{x}\leftarrow\sigma(\texttt{x})-i]}}\mid
\bigwedge_{j=0}^{i-1}\sigma(\texttt{x})\neq j\}}{def.\ $\cup$}\\
=
\formula{\{\pair{\sigma}{\sigma}\mid\sigma\in\Sigma\} 
\cup\bigcup_{i>0}\{\pair{\sigma}{{\sigma[\texttt{y}\leftarrow 0][\texttt{x}\leftarrow\sigma(\texttt{x})-i]}}\mid
\sigma(\texttt{x})\not\in\interval{0}{i-1}\}}
\end{calculus}
\hyphen{5}\quad It follows that for $\texttt{S}_3\triangleq\texttt{while (x!=0) \{ S$_2$ x=x-1; \}}$, we have
\begin{calculus}
\formula{\sqb{\texttt{S}_3}_{e}^{\varrho}}\\
=
\formulaexplanation{
\Lfp{\subseteq}{\Cev{F}_{e}^{\varrho}}\mathbin{{\fatsemi}^{\varrho}}(\sqb{\neg\texttt{B}}_{e}^{\varrho}\cup\sqb{\texttt{B;S}}_{b}^{\varrho})}{by (\ref{eq:sem:abstract:finite}) with \texttt{B} = \texttt{(x!=0)}, $\neg$\texttt{B} = \texttt{(x=0)}, and \texttt{S} =  S$_2$ \texttt{x=x-1;}}\\
=
\formula{(\{\pair{\sigma}{\sigma}\mid\sigma\in\Sigma\} 
\cup\bigcup_{i>0}\{\pair{\sigma}{{\sigma[\texttt{y}\leftarrow 0][\texttt{x}\leftarrow\sigma(\texttt{x})-i]}}\mid
\sigma(\texttt{x})\not\in\interval{0}{i-1}\})\mathbin{{\fatsemi}^{\varrho}}(\{\pair{\sigma}{\sigma}\mid\sigma(\texttt{x})=0\}\cup\emptyset)}\\
=
\formulaexplanation{(\{\pair{\sigma}{\sigma}\mid\sigma\in\Sigma\} \mathbin{{\fatsemi}^{\varrho}}\{\pair{\sigma}{\sigma}\mid\sigma(\texttt{x})=0\})
\cup
(\bigcup_{i>0}\{\pair{\sigma}{{\sigma[\texttt{y}\leftarrow 0][\texttt{x}\leftarrow\sigma(\texttt{x})-i]}}\mid
\sigma(\texttt{x})\not\in\interval{0}{i-1}\}\mathbin{{\fatsemi}^{\varrho}}(\{\pair{\sigma}{\sigma}\mid\sigma(\texttt{x})=0\}))}{by (\ref{eq:fatsemi-varrho-additive}), $\mathbin{{\fatsemi}^{\varrho}}$ left preserves joins}\\[0.75ex]
=
\formula{\{\pair{\sigma}{\sigma}\mid\sigma(\texttt{x})=0\}
\cup
(\bigcup_{i>0}\{\pair{\sigma}{{\sigma[\texttt{y}\leftarrow 0][\texttt{x}\leftarrow\sigma(\texttt{x})-i]}}\mid
\sigma(\texttt{x})\not\in\interval{0}{i-1}\wedge\sigma(\texttt{x})-i=0\})}\\[-1.5ex]\rightexplanation{def.\ (\ref{eq:relational-semantics-primitives}) of $\mathbin{{\fatsemi}^{\varrho}}$}\\
=
\formulaexplanation{\{\pair{\sigma}{\sigma}\mid\sigma(\texttt{x})=0\}
\cup
\{\pair{\sigma}{{\sigma[\texttt{y}\leftarrow 0][\texttt{x}\leftarrow 0]}}\mid
\sigma(\texttt{x})>0\}}{simplification}
\end{calculus}
\medskip

\hyphen{5}\quad Then, for  $\texttt{S}_4$ = \texttt{x=[-oo,oo]; S}$_3$, we have

\begin{calculus}
\formula{\sqb{\texttt{S}_4}_{e}^{\varrho}}\\
 =
\formulaexplanation{\sqb{\texttt{x=[-oo,oo];}}_{e}^{\varrho}\mathbin{{\fatsemi}^{\varrho}}\sqb{\texttt{S}_3}_{e}^{\varrho}}{(\ref{eq:def:abstract:sem:seq})}\\
=
\formula{\{\pair{\sigma}{\sigma[\texttt{x}\leftarrow n]}\mid n\in\mathbb{N}\}\mathbin{{\fatsemi}^{\varrho}}(\{\pair{\sigma}{\sigma}\mid\sigma(\texttt{x})=0\}
\cup
\{\pair{\sigma}{{\sigma[\texttt{y}\leftarrow 0][\texttt{x}\leftarrow 0]}}\mid
\sigma(\texttt{x})>0\})}\\\rightexplanation{(\ref{eq:def:sem:abstract:basis}) and as previously shown}\\
=
\formulaexplanation{\{\pair{\sigma}{\sigma[\texttt{x}\leftarrow n]}\mid n\in\mathbb{N}\}\mathbin{{\fatsemi}^{\varrho}}\{\pair{\sigma}{\sigma}\mid\sigma(\texttt{x})=0\}
\cup
\{\pair{\sigma}{\sigma[\texttt{x}\leftarrow n]}\mid n\in\mathbb{N}\}\mathbin{{\fatsemi}^{\varrho}}\{\pair{\sigma}{{\sigma[\texttt{y}\leftarrow 0][\texttt{x}\leftarrow 0]}}\mid
\sigma(\texttt{x})>0\}}{by (\ref{eq:fatsemi-varrho-additive}), $\mathbin{{\fatsemi}^{\varrho}}$ left preserves joins}\\[0.75ex]
=
\formula{\{\pair{\sigma}{\sigma[\texttt{x}\leftarrow n]}\mid\sigma[\texttt{x}\leftarrow n](\texttt{x})=0\}
\cup
\{\pair{\sigma}{{\sigma[\texttt{x}\leftarrow n][\texttt{y}\leftarrow 0][\texttt{x}\leftarrow 0]}}\mid
\sigma[\texttt{x}\leftarrow n](\texttt{x})>0\}}\\[-0.5ex]
\rightexplanation{def.\ (\ref{eq:relational-semantics-primitives}) of $\fatsemi^\varrho$}\\
=
\formulaexplanation{\{\pair{\sigma}{\sigma[\texttt{x}\leftarrow n]}\mid n=0\}
\cup
\{\pair{\sigma}{{\sigma[\texttt{y}\leftarrow 0][\texttt{x}\leftarrow 0]}}\mid
n>0\}}{function application}\\
=
\formulaexplanation{\{\pair{\sigma}{\sigma[\texttt{x}\leftarrow 0]}\mid \sigma\in\Sigma\}
\cup
\{\pair{\sigma}{{\sigma[\texttt{y}\leftarrow 0][\texttt{x}\leftarrow 0]}}\mid
\sigma\in\Sigma\}}{simplification}
\end{calculus}

\medskip

\hyphen{5}\quad The iteration  \texttt{S}$_3$ = \texttt{while (x!=0) \{ S$_2$ x=x-1; \}} may iterate for ever. To show this, we have, by (\ref{eq:trace-transformer-infinite}), that
\begin{calculus}
\formulaexplanation{{F_{\bot}^{\varrho}}}{for $\texttt{S}_3$ = \texttt{while (x!=0) \{ S$_2$ x=x-1; \}}}\\
=
\formula{\LAMBDA{X\in{\mathbb{L}^{\varrho}_{\infty}}}{\sqb{\texttt{x!=0;S$_2$ x=x-1;}}_{e}^{\varrho}}\mathbin{{\fatsemi}^{\varrho}} X}\\
=
\formula{\LAMBDA{X\in{\mathbb{L}^{\varrho}_{\infty}}}
\{\pair{\sigma}{\sigma}\mid\sigma(\texttt{x})\neq 0\}
\mathbin{{\fatsemi}^{\varrho}}
{\{\pair{\sigma}{\sigma[\texttt{y}\leftarrow 0]}
\mid \sigma\in\Sigma\}}
\mathbin{{\fatsemi}^{\varrho}}
\{\pair{\sigma}{\sigma[\texttt{x}\leftarrow \sigma(\texttt{x})-1]}
\mid \sigma\in\Sigma\}
\mathbin{{\fatsemi}^{\varrho}} X
}\\[-0.5ex]\rightexplanation{(\ref{eq:def:abstract:sem:seq}), (\ref{eq:def:sem:abstract:basis}), def.\ $\sqb{\texttt{S$_2$}}_{e}^{\varrho}$ in ex.\ \ref{ex:example1-relational}}\\
=
\formulaexplanation{\LAMBDA{X\in{\mathbb{L}^{\varrho}_{\infty}}}
\{\pair{\sigma}{\sigma[\texttt{y}\leftarrow 0][\texttt{x}\leftarrow \sigma[\texttt{y}\leftarrow 0](\texttt{x})-1]}
\mid \sigma(\texttt{x})\neq 0\}
\mathbin{{\fatsemi}^{\varrho}} X
}{def.\ (\ref{eq:relational-semantics-primitives}) of $\fatsemi^\varrho$}\\
=
\formulaexplanation{\LAMBDA{X\in{\mathbb{L}^{\varrho}_{\infty}}}
\{\pair{\sigma}{\sigma[\texttt{y}\leftarrow 0][\texttt{x}\leftarrow \sigma(\texttt{x})-1]}
\mid \sigma(\texttt{x})\neq 0\}
\mathbin{{\fatsemi}^{\varrho}} X
}{simplification since \texttt{x} $\neq$ \texttt{y}}
\end{calculus}

\medskip

\hyphen{5}\ \ By (\ref{eq:fatsemi-varrho-additive}) and (\ref{eq:natural-transformer-finite-backward}), ${{F_{\bot}^{\varrho}}}$ for $\texttt{S}_3$ = \texttt{while (x!=0) \{ S$_2$ x=x-1; \}} converges at $\omega$ so that the infinite  iterates $\pair{X^i}{i\leqslant\omega}$ of $\sqb{\texttt{S}_3}_{li}^{\varrho}$
=
$\Gfp{\subseteq}{{F_{\bot}^{\varrho}}}$ are as follows
\begin{calculus}[$X^{n+1}$\ =\ ]
$X^0$ = \formula{\Sigma\times\{\bot\}}\\

$X^1$ = \formulaexplanation{\{\pair{\sigma}{\sigma[\texttt{y}\leftarrow 0][\texttt{x}\leftarrow \sigma(\texttt{x})-1]}
\mid \sigma(\texttt{x})\neq 0\}
\mathbin{{\fatsemi}^{\varrho}} X^0}{def.\ iterates and ${{F_{\bot}^{\varrho}}}$}\\
\phantom{$X^1$} = \formulaexplanation{\{\pair{\sigma}{\bot}\mid \sigma(\texttt{x})\neq0\}}{def.\ (\ref{eq:relational-semantics-primitives}) of $\fatsemi^\varrho$ and $X^0$}\\

$X^2$ = \formulaexplanation{\{\pair{\sigma}{\sigma[\texttt{y}\leftarrow 0][\texttt{x}\leftarrow \sigma(\texttt{x})-1]}
\mid \sigma(\texttt{x})\neq 0\}
\mathbin{{\fatsemi}^{\varrho}} X^1}{def.\ iterates and ${{F_{\bot}^{\varrho}}}$}\\

\phantom{$X^2$} = \formulaexplanation{\{\pair{\sigma}{\bot}
\mid \sigma(\texttt{x})\neq 0\wedge{\sigma[\texttt{y}\leftarrow 0][\texttt{x}\leftarrow \sigma(\texttt{x})-1]}(\texttt{x})\neq 0\}}{def.\ (\ref{eq:relational-semantics-primitives}) of $\fatsemi^\varrho$ and $X^1$}\\

\phantom{$X^2$} = \formulaexplanation{\{\pair{\sigma}{\bot}
\mid \sigma(\texttt{x})\neq 0\wedge\sigma(\texttt{x})\neq 1\}}{function application and simplification}\\

$X^n$ = \formulaexplanation{\{\pair{\sigma}{\bot}\mid \bigwedge_{i=0}^{n-1}\sigma(\texttt{x})\neq i\}}{induction hypothesis}\\

$X^{n+1}$ = \formulaexplanation{\{\pair{\sigma}{\sigma[\texttt{y}\leftarrow 0][\texttt{x}\leftarrow \sigma(\texttt{x})-1]}
\mid \sigma(\texttt{x})\neq 0\}
\mathbin{{\fatsemi}^{\varrho}} X^n}{def.\ iterates and ${{F_{\bot}^{\varrho}}}$}\\

\phantom{$X^{n+1}$} = \formulaexplanation{\{\pair{\sigma}{\bot}
\mid \sigma(\texttt{x})\neq 0\wedge \bigwedge_{i=0}^{n-1}{\sigma[\texttt{y}\leftarrow 0][\texttt{x}\leftarrow \sigma(\texttt{x})-1]}(\texttt{x})\neq i\}}{def.\ (\ref{eq:relational-semantics-primitives}) of $\fatsemi^\varrho$ and $X^n$}\\

\phantom{$X^{n+1}$} = \formulaexplanation{\{\pair{\sigma}{\bot}\mid \bigwedge_{i=0}^{(n+1)-1}\sigma(\texttt{x})\neq i\}}{simplification}\\[-0.5ex]
\end{calculus}

\hyphen{5}\ \ By recurrence, $X^n=\{\pair{\sigma}{\bot}\mid \bigwedge_{i=0}^{n-1}\sigma(\texttt{x})\neq i\}$, so that,
by convergence at $\omega$, the greatest fixpoint is
\begin{calculus}
\formula{\sqb{\texttt{S}_3}_{li}^{\varrho}\colsep{=}\Gfp{\subseteq}{F_{\bot}^{\varrho}}}\\
=
\formulaexplanation{\bigcap_{n\in\mathbb{N}}X^n}{def.\ iterates}\\
=
\formula{\bigcap_{n\in\mathbb{N}}\{\pair{\sigma}{\bot}\mid \bigwedge_{i=0}^{n-1}\sigma(\texttt{x})\neq i\}}\\
=
\formulaexplanation{\{\pair{\sigma}{\bot}\mid \sigma(\texttt{x})<0\}}{$\Sigma=\{\texttt{x},\texttt{y}\}\rightarrow\mathbb{Z}$}\\[-1ex]
\end{calculus}

\medskip

\hyphen{5}\quad The iteration  \texttt{S}$_3$ = \texttt{while (x!=0) \{ S$_2$ x=x-1; \}} may also not terminate because of the nontermination of S$_2$ in its body. The loop body \texttt{S$_2$ x=x-1;} may not terminate, as follows.
\begin{calculus}
\formula{\sqb{\texttt{S$_2$ x=x-1;}}_{\bot}^{\varrho}}\\
=
\formulaexplanation{\sqb{\texttt{S$_2$}}_{\bot}^{\varrho} \cup(\sqb{\texttt{S$_2$}}_{e}^{\varrho}\mathbin{\fatsemi^{\varrho}}\sqb{\texttt{x=x-1;}}_{\bot}^{\varrho})}{(\ref{eq:def:abstract:sem:seq})}\\
=
\formulaexplanation{\sqb{\texttt{S$_2$}}_{\bot}^{\varrho}}{def.\ (\ref{eq:relational-semantics-primitives}) of $\fatsemi^\varrho$ and (\ref{eq:def:sem:abstract:basis}) so that $\sqb{\texttt{x=x-1;}}_{\bot}^{\varrho}=\emptyset$}\\
=
\formulaexplanation{\{\pair{\sigma}{\bot}\mid  \sigma\in\Sigma\}}{by example \ref{ex:example1-relational}}
\end{calculus}

\medskip

\noindent This implies that
\begin{calculus}
\formula{\sqb{\texttt{x!=0; S$_2$ x=x-1;}}_{\bot}^{\varrho}}\\
=
\formula{\{\pair{x}{\bot}\mid \pair{x}{\bot}\in {\sqb{\texttt{x!=0;}}_{\bot}^{\varrho}}\}\cup\{\pair{x}{y}\mid \exists z\in\Sigma\mathrel{.}\pair{x}{z}\in {\sqb{\texttt{x!=0;}}_{e}^{\varrho}}\wedge \pair{z}{y}\in {\sqb{\texttt{S$_2$ x=x-1;}}_{\bot}^{\varrho}}\}}\\[-0.5ex]
\rightexplanation{(\ref{eq:def:abstract:sem:seq}) and def.\  of (\ref{eq:relational-semantics-primitives})}\\
=
\formulaexplanation{\{\pair{\sigma}{\bot}\mid  \sigma(\texttt{x})\neq 0\}}{(\ref{eq:def:sem:abstract:basis})}
\end{calculus}

\medskip

\noindent It follows that
\begin{calculus}
\formula{\sqb{\texttt{S}_3}_{bi}^{\sharp}}\\
$\triangleq$
\formulaexplanation{(\Lfp{\sqsubseteq_{+}^{\sharp}}{\Cev{F}_{e}^{\sharp}})\mathbin{{\fatsemi}^{\sharp}}\sqb{\texttt{B;S}}_{\bot}}{(\ref{eq:sem:abstract:body-infinite}) with \texttt{B} = \texttt{x!=0} and \texttt{S} = \texttt{S$_2$ x=x-1;}}\\
=
\formula{({\{\pair{\sigma}{\sigma}\mid\sigma\in\Sigma\} 
\cup\bigcup_{i>0}\{\pair{\sigma}{{\sigma[\texttt{y}\leftarrow 0][\texttt{x}\leftarrow\sigma(\texttt{x})-i]}}\mid
\sigma(\texttt{x})\not\in\interval{0}{i-1}\}}) \mathbin{{\fatsemi}^{\sharp}}\{\pair{\sigma}{\bot}\mid  \sigma(\texttt{x})\neq 0\}}\\[-1ex]
\rightexplanation{previous evaluation of $\Lfp{\sqsubseteq_{+}^{\sharp}}{\Cev{F}_{e}^{\sharp}}$ and ${\sqb{\texttt{x!=0; S$_2$ x=x-1;}}_{\bot}^{\varrho}}$}\\
=
\formulaexplanation{({\{\pair{\sigma}{\sigma}\mid\sigma\in\Sigma\} \mathbin{{\fatsemi}^{\sharp}}\{\pair{\sigma}{\bot}\mid  \sigma(\texttt{x})\neq 0\})
\cup
\bigcup_{i>0}(\{\pair{\sigma}{{\sigma[\texttt{y}\leftarrow 0][\texttt{x}\leftarrow\sigma(\texttt{x})-i]}}\mid
\sigma(\texttt{x})\not\in\interval{0}{i-1}\}}\mathbin{{\fatsemi}^{\sharp}}\{\pair{\sigma}{\bot}\mid  \sigma(\texttt{x})\neq 0\})}{$\mathbin{{\fatsemi}^{\varrho}}$ left preserves joins $\cup$ on $\wp(\Sigma\times\Sigma_{\bot})$}\\[0.75ex]
=
\formula{\{\pair{\sigma}{\bot}\mid  \sigma(\texttt{x})\neq 0\}\cup \bigcup_{i>0}(\{\pair{\sigma}{\bot}\mid
\sigma(\texttt{x})\not\in\interval{0}{i-1}\wedge {\sigma[\texttt{y}\leftarrow 0][\texttt{x}\leftarrow\sigma(\texttt{x})-i]}(\texttt{x})\neq 0\})}\\[-0.75ex]

\rightexplanation{def.\ (\ref{eq:relational-semantics-primitives}) of $\fatsemi^\varrho$}\\
=
\formulaexplanation{\{\pair{\sigma}{\bot}\mid  \sigma(\texttt{x})\neq 0\}\cup \bigcup_{i>0}(\{\pair{\sigma}{\bot}\mid
\sigma(\texttt{x})\not\in\interval{0}{i-1}\wedge \sigma(\texttt{x})\neq i\})}{function application}\\
=
\formulaexplanation{\{\pair{\sigma}{\bot}\mid  \sigma(\texttt{x})\neq 0\}}{simplification}
\end{calculus}

\medskip

\hyphen{5}\quad The nonterminating behavior $\sqb{\texttt{S}_3}_{\bot}^{\sharp}$ of the iteration  \texttt{S}$_3$ = \texttt{while (x!=0) \{ S$_2$ x=x-1; \}} is defined, by (\ref{eq:sem:abstract:loop-nontermination}), to be either due to the nontermination $\sqb{\texttt{S}_3}_{bi}^{\sharp}$ of its body or infinite iteration $\sqb{\texttt{S}_3}_{li}^{\sharp}$.
\begin{calculus}
\formula{\sqb{\texttt{S}_3}_{\bot}^{\sharp}}\\
$\triangleq$
\formulaexplanation{\sqb{\texttt{S}_3}_{bi}^{\sharp}\cup\sqb{\texttt{S}_3}_{li}^{\sharp}}{(\ref{eq:sem:abstract:loop-nontermination}) and ${\mathbin{\sqcup_{\infty}^{\sharp}}}={\cup}$}\\
=
\formulaexplanation{{\{\pair{\sigma}{\bot}\mid  \sigma(\texttt{x})\neq 0\}}\cup{\{\pair{\sigma}{\bot}\mid \sigma(\texttt{x})<0\}}}{as previously shown}\\
=
\formulaexplanation{{\{\pair{\sigma}{\bot}\mid  \sigma(\texttt{x})\neq 0\}}}{simplification}
\end{calculus}

\medskip

\hyphen{5}\quad The nonterminating behavior $\sqb{\texttt{S}_4}_{\bot}^{\sharp}$ of the iteration  \texttt{S}$_4$ = $\triangleq$ \texttt{x=[-oo,oo]; S}$_3$ is now
\begin{calculus}
\formula{\sqb{\texttt{S}_4}_{\bot}^{\sharp}}\\
=
\formulaexplanation{\sqb{\texttt{x=[-oo,oo];}}_{\bot}^{\sharp}\cup(\sqb{\texttt{x=[-oo,oo];}}_{e}^{\sharp}\mathbin{\fatsemi^{\sharp}}\sqb{\texttt{S$_3$}}_{\bot}^{\sharp})}{(\ref{eq:def:abstract:sem:seq})}\\
=
\formulaexplanation{\{\pair{\sigma}{\sigma[\texttt{x}\leftarrow n]}\mid n\in\mathbb{N}\}\mathbin{\fatsemi^{\sharp}}{\{\pair{\sigma}{\bot}\mid  \sigma(\texttt{x})\neq 0\}}}{(\ref{eq:def:sem:abstract:basis})}\\
=
\formulaexplanation{\{\pair{x}{y}\mid \exists z\in\Sigma\mathrel{.}\pair{x}{z}\in \{\pair{\sigma}{\sigma[\texttt{x}\leftarrow n]}\mid n\in\mathbb{N}\}\wedge \pair{z}{y}\in \{\pair{\sigma'}{\bot}\mid  \sigma'(\texttt{x})\neq 0\}\}}{def.\ (\ref{eq:relational-semantics-primitives}) of $\fatsemi^\varrho$}\\
=
\formulaexplanation{\{\pair{\sigma}{\bot}\mid \exists n\in\mathbb{N}\mathrel{.} {\sigma[\texttt{x}\leftarrow n]}(\texttt{x})\neq 0\}}{$z={\sigma[\texttt{x}\leftarrow n]}$}\\
=
\formulaexplanation{\{\pair{\sigma}{\bot}\mid\sigma\in\Sigma\}}{simplification}
\end{calculus}

\medskip

Grouping all cases together according to (\ref{eq:def:Abstract-Semantic-Domain-Semantics}), we get
$\sqb{\texttt{S}_3}^\sharp$
=
$\triple{e:\sqb{\texttt{S}_3}_{e}^{\sharp}}{\bot:\sqb{\texttt{S}_3}_{\bot}^{\sharp}}{br:\sqb{\texttt{S}_3}_{b}^{\sharp}}$
=
$\triple{e:{\{\pair{\sigma}{\sigma}\mid\sigma(\texttt{x})=0\}
\cup
\{\pair{\sigma}{{\sigma[\texttt{y}\leftarrow 0][\texttt{x}\leftarrow 0]}}\mid
\sigma(\texttt{x})>0\}}}{\bot:{\{\pair{\sigma}{\bot}\mid  \sigma(\texttt{x})\neq 0\}}}{br:\emptyset}$
and
$\sqb{\texttt{S}_4}^\sharp$
=
$\triple{e:\sqb{\texttt{S}_4}_{e}^{\sharp}}{\bot:\sqb{\texttt{S}_4}_{\bot}^{\sharp}}{br:\sqb{\texttt{S}_4}_{b}^{\sharp}}$
=
$\triple{e:{\{\pair{\sigma}{\sigma[\texttt{x}\leftarrow 0]}\mid \sigma\in\Sigma\}
\cup
\{\pair{\sigma}{{\sigma[\texttt{y}\leftarrow 0][\texttt{x}\leftarrow 0]}}\mid
\sigma\in\Sigma\}}}{\bot:{\{\pair{\sigma}{\bot}\mid\sigma\in\Sigma\}}}{br:\emptyset}$, proving example \ref{ex:examplee-relational}.
\end{proof}
\end{toappendix}

\section{Algebraic Program Execution Properties}\label{sec:Algebraic-Program-Execution-Properties}

\subsection{Algebraic Execution Properties}\label{sect:Execution-Properties}

Traditionally, logics involve two formal languages, one to express programs and another one to express properties of the program executions. The syntax and semantics of these programming and logic languages are considered to be different. Therefore, 
in addition to the program syntax and semantics, this traditional  approach requires to define the syntax and semantics of the logic expressing program properties.

A semantics $\sqb{\texttt{S}}^\sharp\in{\mathbb{L}^{\sharp}}$ in (\ref{eq:def:Abstract-Semantic-Domain-Semantics}) is an abstraction of a property of the executions of the statement ${\texttt{S}}$. Therefore ${\mathbb{L}^{\sharp}}$ will be the domain
of execution properties whether used to describe the semantics or logic properties of programs executions. This will avoid us  the necessary traditional distinction between programs semantics and program properties. 

This idea follows \cite{DBLP:journals/scp/Hehner90,DBLP:books/daglib/0070829,DBLP:journals/scp/Hehner99}'s slogan that ``Programs are predicates'' and define properties of program executions as programs (which semantics is already defined). It is also found in Dexter Kozen's Kleene algebra with tests \cite{DBLP:journals/toplas/Kozen97,DBLP:journals/tocl/Kozen00,DBLP:journals/corr/abs-2312-09662}. Therefore, from an abstract point of view, program execution specification and verification need nothing more than programs and an associated calculus $\textsf{post}^\sharp$ on programs.
\subsection{The Algebraic Program Execution Property Transformer}\label{sect:Transformer-Program-Execution-Properties}
Let us define the transformer $\textsf{post}^\sharp\in{\mathbb{L}^{\sharp}}\increasingfunctionto{\mathbb{L}^{\sharp}}\increasingfunctionto{\mathbb{L}^{\sharp}}$ such that
\begin{eqntabular}{rcl}
\textsf{post}^\sharp(S)P&\triangleq&P\mathbin{\fatsemi^{\sharp}}S
\label{eq:def:abstract:transformer:post}
\end{eqntabular}
where $S$ is a semantics in ${\mathbb{L}^{\sharp}}$ as defined by  (\ref{eq:def:Abstract-Semantic-Domain-Semantics}) and $\mathbin{\fatsemi^{\sharp}}$ is defined by (\ref{eq:def:abstract:sem:group:seq}). If $P$ is a precondition when at \texttt{S} then $\textsf{post}^\sharp\sqb{\texttt{S}}^{\sharp}P$ is the postcondition after \texttt{S} (including when breaking out of \texttt{S}).

For example, using the shorthand (\ref{eq:unique-nonempty-component-shorthand}), $\textsf{post}^\sharp(S){\textsf{init}^{\sharp}}=S$ by \ref{def:abstract:domain:well:def:init:neutral} and $\textsf{post}^\sharp(S){P}=P$ for all $P\in{\mathbb{L}^{\sharp}_{\infty}}$ by \ref{def:abstract:domain:well:def:oo:absorbent}.

In definition (\ref{eq:def:abstract:transformer:post}) of ``predicate transformers'' the meaning of ``predicates'' about programs executions is abstracted away as programs specifying executions. Further abstractions will yield the classic understanding of ``predicates'', ``abstract property'', etc. The classic Galois connections $\textsf{post}$--$\widetilde{\textsf{pre}}$ \cite[(12.22)]{Cousot-PAI-2021} and  $\textsf{post}$--$\textsf{post}^{-1}$  \cite[(12.6)]{Cousot-PAI-2021} are still valid with this different definition of \textsf{post}.

The following lemmas show that the $\textsf{\textup{post}}$ transformer inherits the properties of sequential composition. It applies e.g.\ to $\pair{\mathbb{L}^{\sharp}_{+}}{\sqsubseteq_{+}^{\sharp}}$ in \ref{def:abstract:domain:well:def:finite:domain}, 
$\pair{\mathbb{L}^{\sharp}_{\infty}}{\sqsubseteq_{\infty}^{\sharp}}$ in \ref{def:abstract:domain:well:def:infinite:domain},
or $\pair{\mathbb{L}^{\sharp}}{{\sqsubseteq}^{\sharp}}$ in (\ref{eq:def:Abstract-Semantic-Domain-Semantics}).
\begin{lemma}\label{lem:post-S}\proofinapx\quad 
Let $\triple{L}{\sqsubseteq}{\sqcup}$ be a poset with partially defined join $\sqcup$. Let
$\fatsemi$ be the sequential composition on $L$. If\/ ${\fatsemi}$ left-satisfies
any one of the properties of definition \ref{def:Properties:functions:posets} or their dual then for all 
$S\in{\mathbb{L}}$, ${\textsf{\textup{post}}(S)}$ satisfies the same property.
\end{lemma}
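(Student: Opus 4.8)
The plan is to prove Lemma~\ref{lem:post-S} by a direct reduction to the definitions of the relevant properties from Definition~\ref{def:Properties:functions:posets}. The key observation is that $\textsf{\textup{post}}(S)$ is simply the partially applied map $\LAMBDA{P}P\mathbin{\fatsemi}S$ by definition~(\ref{eq:def:abstract:transformer:post}); that is, $\textsf{\textup{post}}(S)$ is obtained from $\fatsemi$ by fixing the \emph{right} argument to the constant $S$ and varying the \emph{left} argument. Each of the properties listed in Definition~\ref{def:Properties:functions:posets} (increasing, existing finite join-preserving, upper/lower continuous, existing join-preserving) is defined pointwise in terms of how a unary function behaves on subsets $S'\subseteq L$ whose relevant limit exists. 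So the entire proof amounts to transporting the hypothesis that $\fatsemi$ \emph{left}-satisfies a given property through this partial application.

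The steps I would carry out, one property at a time, are as follows. First, for monotonicity (\ref{def:Properties:increasing}): if $\fatsemi$ is left-increasing, then $P\sqsubseteq Q$ gives $P\mathbin{\fatsemi}S\sqsubseteq Q\mathbin{\fatsemi}S$, i.e.\ $\textsf{\textup{post}}(S)(P)\sqsubseteq\textsf{\textup{post}}(S)(Q)$, which is exactly monotonicity of $\textsf{\textup{post}}(S)$. Second, for the join-preservation properties (\ref{def:Properties:finite-preserving}, \ref{def:Properties:continuous}, \ref{def:Properties:limit-preserving}): given a set (resp.\ finite set, increasing chain) $T\subseteq L$ for which $\bigsqcup T$ exists, the left-limit property of $\fatsemi$ with the right argument fixed at $S$ yields that $\bigsqcup\{P\mathbin{\fatsemi}S\mid P\in T\}$ exists and equals $(\bigsqcup T)\mathbin{\fatsemi}S$; unfolding $\textsf{\textup{post}}(S)$ rewrites this as $\bigsqcup\textsf{\textup{post}}(S)(T)=\textsf{\textup{post}}(S)(\bigsqcup T)$, the required identity. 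The dual (meet-preserving / lower-continuous) cases are handled symmetrically by the order-dual of the same argument, and the strictness variants of~\ref{def:Properties:increasing:strict} follow by additionally allowing the empty set in these equations. Since each property of $\textsf{\textup{post}}(S)$ thus holds precisely when $\fatsemi$ left-satisfies it, the lemma follows.

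The only technical subtlety—and the one place where care is needed rather than a real obstacle—is the matching of ``left'' in the hypothesis with the correct parameter of $\fatsemi$. By definition~(\ref{eq:def:abstract:transformer:post}) we have $\textsf{\textup{post}}(S)P=P\mathbin{\fatsemi}S$, so $P$ occupies the \emph{left} slot of $\fatsemi$ and $S$ the right; hence the left-limit property of $\fatsemi$ (Definition~\ref{def:Properties:increasing:both:parameters}), which quantifies over all fixed right arguments and asserts the limit property of $\LAMBDA{x}f(x,y)$, applies directly with $y\mathrel{{:}{=}}S$. I would make this correspondence explicit at the outset to avoid any confusion between the left/right conventions. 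No fixpoint machinery or appeal to the well-definedness lemmas is needed: the result is entirely a pointwise consequence of the algebraic structure of~(\ref{eq:def:abstract:transformer:post}), and the proof is uniform across all the cases of Definition~\ref{def:Properties:functions:posets}.
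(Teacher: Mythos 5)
Your proposal is correct and follows essentially the same route as the paper's proof: unfold $\textsf{post}(S)P = P\mathbin{\fatsemi}S$, apply the left-preservation hypothesis of $\fatsemi$ to the family in question, and fold the definition back, treating all cases of Definition~\ref{def:Properties:functions:posets} uniformly (the paper subsumes the monotonicity case into the same join computation by taking a two-element family, but this is only a cosmetic difference). Your explicit remark that $P$ occupies the left argument slot of $\fatsemi$, so the left-limit hypothesis applies with the right argument fixed at $S$, is exactly the point the paper's proof relies on.
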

\begin{toappendix}
\begin{proof}[Proof of lemma \ref{lem:post-S}] Let $\pair{P_i}{i\in\Delta}$ be a family of elements of $L$ such that $\Delta=\{0,1\}$ with $P_0\mathrel{\sqsubseteq} P_1$ for the left-increasingness hypothesis (def.\ \ref{def:Properties:increasing}), $\Delta$ is finite
for the existing finite $\sqcup$ left preserving hypothesis (def.\ \ref{def:Properties:finite-preserving}), $\Delta\in\mathbb{O}$ and $\pair{P_i}{i\in\Delta}$ is an increasing chain for the left upper-continuity hypothesis (def.\ \ref{def:Properties:continuous}), $\Delta$ is an arbitrary set for the existing join left preservation property (def.\ \ref{def:Properties:limit-preserving}), possibly empty in case of left strictness (def.\ \ref{def:Properties:increasing:strict}). The proof is similar in all of these cases, as follows
\begin{calculus}[$\Leftrightarrow$\ \ ]
\formula{\textsf{post}(S)(\bigsqcup\nolimits_{i\in\Delta}P_i)}\\
$\Leftrightarrow$
\formulaexplanation{(\bigsqcup\nolimits_{i\in\Delta}P_i)\mathbin{\fatsemi}S}{def.\ (\ref{eq:def:abstract:transformer:post}) of $\textsf{post}$}\\
$\Leftrightarrow$
\formulaexplanation{\bigsqcup\nolimits_{i\in\Delta}(P_i\mathbin{\fatsemi}S)}{by the left preservation hypothesis for $\mathbin{\fatsemi}$}\\
$\Leftrightarrow$
\lastformulaexplanation{\bigsqcup\nolimits_{i\in\Delta}\textsf{post}(S)P_i}{def.\ (\ref{eq:def:abstract:transformer:post}) of $\textsf{post}$}{\mbox{\qed}}
\end{calculus}\let\qed\relax
\end{proof}
\end{toappendix}
The following Galois connection shows the equivalence of forward/deductive and backward/ab\-duc\-tive reasonings on the program semantics.
\begin{lemma}\label{lem:GC:post-S}\proofinapx\quad
If $\triple{L}{\sqsubseteq}{\sqcup}$ is a poset and the sequential composition $\;{\fatsemi}\;$ is
existing $\sqcup$ left preserving then we have the Galois connection
\begin{eqntabular}{C}
$\forall S\in{\mathbb{L}}\mathrel{.}\pair{\mathbb{L}}{\sqsubseteq}\galois{\textsf{\textup{post}}(S)}{\widetilde{\textsf{\textup{pre}}}(S)}\pair{\mathbb{L}}{\sqsubseteq}$\quad where\quad
$\widetilde{\textsf{\textup{pre}}}(S)Q\,\triangleq\,\bigsqcup\{P\in{\mathbb{L}}\mid {\textsf{\textup{post}}(S)}P\mathrel{\sqsubseteq}Q\}\bigr)$.
\label{eq:GC:post-S}
\end{eqntabular}
\end{lemma}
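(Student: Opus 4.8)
The plan is to prove the adjunction directly from the defining property of a Galois connection, namely that for all $P,Q\in\mathbb{L}$ we have $\textsf{\textup{post}}(S)P\sqsubseteq Q\Leftrightarrow P\sqsubseteq\widetilde{\textsf{\textup{pre}}}(S)Q$, and to use the hypothesis that $\fatsemi$ is existing-$\sqcup$ left preserving to guarantee that the join defining $\widetilde{\textsf{\textup{pre}}}(S)Q$ actually exists in $\mathbb{L}$. First I would observe that $\widetilde{\textsf{\textup{pre}}}(S)Q\triangleq\bigsqcup\{P\mid\textsf{\textup{post}}(S)P\sqsubseteq Q\}$ is well defined: by lemma \ref{lem:post-S}, the left-preservation hypothesis on $\fatsemi$ transfers to $\textsf{\textup{post}}(S)$, so $\textsf{\textup{post}}(S)$ is existing-$\sqcup$ preserving, and in particular this join exists whenever we need it.

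The core is the two implications of the equivalence. For the ($\Leftarrow$) direction, suppose $P\sqsubseteq\widetilde{\textsf{\textup{pre}}}(S)Q$. Applying the increasing map $\textsf{\textup{post}}(S)$ (increasing by lemma \ref{lem:post-S}) gives $\textsf{\textup{post}}(S)P\sqsubseteq\textsf{\textup{post}}(S)\widetilde{\textsf{\textup{pre}}}(S)Q$, and since $\textsf{\textup{post}}(S)$ preserves the join used to define $\widetilde{\textsf{\textup{pre}}}(S)Q$, we have
\[
\textsf{\textup{post}}(S)\widetilde{\textsf{\textup{pre}}}(S)Q=\textsf{\textup{post}}(S)\Bigl(\bigsqcup\{P'\mid\textsf{\textup{post}}(S)P'\sqsubseteq Q\}\Bigr)=\bigsqcup\{\textsf{\textup{post}}(S)P'\mid\textsf{\textup{post}}(S)P'\sqsubseteq Q\}\sqsubseteq Q,
\]
because every element of the set being joined is $\sqsubseteq Q$, so their least upper bound is too. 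Chaining the two inequalities yields $\textsf{\textup{post}}(S)P\sqsubseteq Q$. For the ($\Rightarrow$) direction, suppose $\textsf{\textup{post}}(S)P\sqsubseteq Q$. Then $P$ is a member of the set $\{P'\mid\textsf{\textup{post}}(S)P'\sqsubseteq Q\}$, and hence $P\sqsubseteq\bigsqcup\{P'\mid\textsf{\textup{post}}(S)P'\sqsubseteq Q\}=\widetilde{\textsf{\textup{pre}}}(S)Q$ since any element of a set is below the set's join.

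The main obstacle, and the only place the hypothesis does real work, is the existence of the join $\bigsqcup\{P'\mid\textsf{\textup{post}}(S)P'\sqsubseteq Q\}$ over an \emph{arbitrary} (possibly infinite) subset of $\mathbb{L}$, since $\mathbb{L}$ is merely a poset with partially defined join rather than a complete lattice. This is exactly why the statement requires $\fatsemi$ to be existing-$\sqcup$ left preserving: that hypothesis, via lemma \ref{lem:post-S}, makes $\textsf{\textup{post}}(S)$ join-preserving, and the ($\Leftarrow$) step above needs $\textsf{\textup{post}}(S)$ to commute with precisely that join. I would therefore make sure to invoke lemma \ref{lem:post-S} explicitly for both the increasingness (the weakest consequence, used freely) and the join-preservation (the crucial consequence, used in the displayed computation), rather than assuming completeness of $\mathbb{L}$. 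No transfinite induction or fixpoint machinery is needed; the argument is the standard ``adjoint from a join-preserving map'' construction, specialized to the partially-defined-join setting.
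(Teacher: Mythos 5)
Your proof is correct and is essentially the paper's proof unfolded: the paper invokes lemma~\ref{lem:post-S} to get that $\textsf{post}(S)$ preserves existing joins and then cites the standard result that such a map is the lower adjoint of a Galois connection with $\widetilde{\textsf{pre}}(S)$ as its unique upper adjoint, which is precisely the adjunction equivalence you verify directly (membership for $\Rightarrow$, monotonicity plus join preservation for $\Leftarrow$). One caveat: your claim that existing-$\sqcup$ left preservation \emph{guarantees} that $\bigsqcup\{P\mid\textsf{post}(S)P\sqsubseteq Q\}$ exists is a non sequitur --- ``existing join preserving'' only says that joins which already exist are preserved, it does not create them --- so the existence of that join remains an implicit assumption (harmless in the paper's applications, where the domains are complete or chain-complete lattices), and the paper's one-line proof leaves it equally implicit.
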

\begin{toappendix}
\begin{proof}[Proof of lemma \ref{eq:GC:post-S}]
By lemma \ref{lem:post-S}, $\textsf{post}(S)$ preserves existing joins. It is the therefore the lower adjoint of a Galois connection \cite[exercise 11.39]{Cousot-PAI-2021}.  $\widetilde{\textsf{\textup{pre}}}(S)$ is its unique upper adjoint \cite[exercise 11.39]{Cousot-PAI-2021}.
\end{proof}
\end{toappendix}
\begin{lemma}\label{lem:post}\proofinapx\quad Let $\triple{L}{\sqsubseteq}{\sqcup}$ be a poset with partially defined join $\sqcup$. Let
$\fatsemi$ be the sequential composition on $L$. If\/ ${\fatsemi}$ right-satisfies
any one of the properties of definition \ref{def:Properties:functions:posets} or their dual then 
${\textsf{\textup{post}}}$ satisfies the same property.
\end{lemma}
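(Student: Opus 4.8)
The plan is to read the statement with the correct codomain: since $\textsf{\textup{post}}$ is a map $L \to (L \to L)$, the properties of definition~\ref{def:Properties:functions:posets} must be interpreted with target the function poset $\pair{L \to L}{\dot{\sqsubseteq}}$ ordered pointwise, where $f \mathrel{\dot{\sqsubseteq}} g$ holds exactly when $f(P) \sqsubseteq g(P)$ for all $P \in L$. The only structural fact I need about this function poset is that its limits are computed argumentwise: if a family $\pair{g_i}{i \in \Delta}$ of elements of $L \to L$ is such that $\bigsqcup\nolimits_{i \in \Delta} g_i(P)$ exists in $L$ for every $P$, then $\bigsqcup\nolimits_{i \in \Delta} g_i$ exists in $\pair{L \to L}{\dot{\sqsubseteq}}$ and equals $\LAMBDA{P} \bigsqcup\nolimits_{i \in \Delta} g_i(P)$, and dually for meets.

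With this reduction the argument is the pointwise analogue of the proof of lemma~\ref{lem:post-S}, the difference being that the quantity now varying is the right operand $S$ of $\fatsemi$, so that the invoked hypothesis is the right-preservation of $\fatsemi$ rather than its left-preservation. Fix a family $\pair{S_i}{i \in \Delta}$ of the shape demanded by whichever property of definition~\ref{def:Properties:functions:posets} (or its dual) is assumed --- two comparable elements for \ref{def:Properties:increasing}, a finite set for \ref{def:Properties:finite-preserving}, an increasing chain for \ref{def:Properties:continuous}, an arbitrary set for \ref{def:Properties:limit-preserving}, possibly empty for the strict variants of \ref{def:Properties:increasing:strict} --- whose join $\bigsqcup\nolimits_{i \in \Delta} S_i$ exists in $L$. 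Then for every $P \in L$,
\[
\textsf{\textup{post}}\Bigl(\bigsqcup\nolimits_{i \in \Delta} S_i\Bigr)(P) = P \mathbin{\fatsemi} \Bigl(\bigsqcup\nolimits_{i \in \Delta} S_i\Bigr) = \bigsqcup\nolimits_{i \in \Delta} (P \mathbin{\fatsemi} S_i) = \bigsqcup\nolimits_{i \in \Delta} \textsf{\textup{post}}(S_i)(P),
\]
by the definition~(\ref{eq:def:abstract:transformer:post}) of $\textsf{\textup{post}}$, the assumed right-preservation of $\fatsemi$, and once more the definition of $\textsf{\textup{post}}$. The bare increasing case \ref{def:Properties:increasing} is even shorter: from $S \sqsubseteq S'$ and right-increasingness of $\fatsemi$ one gets $P \mathbin{\fatsemi} S \sqsubseteq P \mathbin{\fatsemi} S'$ for all $P$, i.e.\ $\textsf{\textup{post}}(S) \mathrel{\dot{\sqsubseteq}} \textsf{\textup{post}}(S')$.

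Because the displayed identity holds for every $P$, the argumentwise description of joins in $L \to L$ yields $\textsf{\textup{post}}(\bigsqcup\nolimits_{i \in \Delta} S_i) = \bigsqcup\nolimits_{i \in \Delta} \textsf{\textup{post}}(S_i)$ as elements of the function poset, which is precisely the claimed preservation property of $\textsf{\textup{post}}$; the meet and lower-continuity cases follow by the order dual. The one point deserving care --- and the nearest thing here to an obstacle --- is the existence clause of the argumentwise limit: one must record that right-preservation of $\fatsemi$ already guarantees, for each fixed $P$, that $\bigsqcup\nolimits_{i \in \Delta}(P \mathbin{\fatsemi} S_i)$ exists in $L$ whenever $\bigsqcup\nolimits_{i \in \Delta} S_i$ does, so that the pointwise join is a genuine element of $L \to L$ and not merely a formal expression.
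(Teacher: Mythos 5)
Your proof is correct and follows essentially the same route as the paper's: both reduce the claim to the pointwise computation $\textsf{\textup{post}}(\bigsqcup_{i\in\Delta}S_i)P = P\mathbin{\fatsemi}(\bigsqcup_{i\in\Delta}S_i) = \bigsqcup_{i\in\Delta}(P\mathbin{\fatsemi}S_i) = \bigsqcup_{i\in\Delta}\textsf{\textup{post}}(S_i)P$ via the right-preservation hypothesis, then invoke the pointwise definition of the join on $L\to L$. Your explicit remark about the existence of the argumentwise join is a useful touch of care that the paper leaves implicit, but it is not a different argument.
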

\begin{toappendix}
\begin{proof}[Proof of lemma \ref{lem:post}]Let $\pair{P_i}{i\in\Delta}$ be a family of elements of $L$ such that $\Delta=\{0,1\}$ with $P_0\mathrel{\sqsubseteq} P_1$ for the right-increasingness hypothesis (def.\ \ref{def:Properties:increasing}), $\Delta$ is finite
for the existing finite $\sqcup$ right preserving hypothesis (def.\ \ref{def:Properties:finite-preserving}), $\Delta\in\mathbb{O}$ and $\pair{P_i}{i\in\Delta}$ is an increasing chain for the right upper-continuity hypothesis (def.\ \ref{def:Properties:continuous}), $\Delta$ is an arbitrary set for the existing join right preservation property (def.\ \ref{def:Properties:limit-preserving}), possibly empty in case of right strictness (def.\ \ref{def:Properties:increasing:strict}). The proof is similar in all of these cases, as follows
\begin{calculus}[=\ \ ]
\formula{\textsf{post}(\bigsqcup\nolimits_{i\in\Delta}S_i)}\\
=
\formulaexplanation{\LAMBDA{P}\textsf{post}(\bigsqcup\nolimits_{i\in\Delta}S_i)P}{function application}\\
=
\formulaexplanation{\LAMBDA{P}P\mathbin{\fatsemi}(\bigsqcup\nolimits_{i\in\Delta}S_i)}{def.\ (\ref{eq:def:abstract:transformer:post}) of $\textsf{post}$}\\
=
\formulaexplanation{\LAMBDA{P}\bigsqcup\nolimits_{i\in\Delta}(P\mathbin{\fatsemi}S_i)}{by the right preservation hypothesis for $\mathbin{\fatsemi}$}\\
=
\formulaexplanation{\LAMBDA{P}\bigsqcup\nolimits_{i\in\Delta}\textsf{post}(S_i)P}{def.\ (\ref{eq:def:abstract:transformer:post}) of $\textsf{post}$}\\
=
\lastformulaexplanation{\mathop{\dot{\bigsqcup}}\nolimits_{i\in\Delta}\textsf{post}(S_i)}{pointwise def.\ of ${\dot{\bigsqcup}}$}{\mbox{\qed}}
\end{calculus}\let\qed\relax
\end{proof}
\end{toappendix}
The following Galois connection formalizes Dijkstra's program inversion \cite{DBLP:conf/pc/Dijkstra78g}.
\begin{lemma}\label{lem:GC:post}\proofinapx\quad
If $\triple{L}{\sqsubseteq}{\sqcup}$ is a poset and the sequential composition $\;{\fatsemi}\;$ is
existing $\sqcup$ right preserving then we have the following Galois connection \textup{(}${\mathbb{L}\joinmorphismto\mathbb{L}}$ is the set of existing join preserving operators on $\mathbb{L}$ and $\dot{\sqsubseteq}$ is the pointwise extension of\/ $\sqsubseteq$\textup{)}
\abovedisplayskip0pt\belowdisplayskip0pt\begin{eqntabular}{C}
$\pair{\mathbb{L}}{{\sqsubseteq}}\galois{\textsf{\textup{post}}}{\ulstrut{\textsf{\textup{post}}}^{-1}}\pair{\mathbb{L}\joinmorphismto\mathbb{L}}{{\dot{\sqsubseteq}}}$\quad where\quad ${{\textsf{\textup{post}}}^{-1}}(T)=\mathop{\dot{\bigsqcup}}\{S\in\mathbb{L}\mid
{\textsf{\textup{post}}}(S)\mathrel{\dot{\sqsubseteq}}T\}$.
\label{eq:GC:post}
\end{eqntabular}
\end{lemma}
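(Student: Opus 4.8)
The plan is to mirror the proof of Lemma~\ref{lem:GC:post-S}, interchanging the roles of the two arguments of $\fatsemi$: there $\textsf{\textup{post}}(S)$ was exhibited as a lower adjoint in its precondition argument, whereas here I would exhibit the map $\textsf{\textup{post}}$ itself, regarded as a function of the semantics $S$, as the lower adjoint of a Galois connection into the poset $\pair{\mathbb{L}\joinmorphismto\mathbb{L}}{\dot{\sqsubseteq}}$ of operators ordered pointwise. The upper adjoint is then forced to be the standard ``best under\-approximation'' map, which is exactly the displayed ${\textsf{\textup{post}}}^{-1}$.

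Concretely, I would first note that $\pair{\mathbb{L}\joinmorphismto\mathbb{L}}{\dot{\sqsubseteq}}$ is a poset, with joins computed pointwise when they exist, and that the image of $\textsf{\textup{post}}$ lands in the existing-join-preserving operators by Lemma~\ref{lem:post-S} (which uses left preservation of $\fatsemi$, available in the intended instances such as (\ref{eq:fatsemi-varrho-additive})). The decisive step is Lemma~\ref{lem:post}: because $\fatsemi$ is existing $\sqcup$ right preserving by hypothesis, $\textsf{\textup{post}}$ is existing join preserving, i.e.\ $\textsf{\textup{post}}(\bigsqcup_{i\in\Delta}S_i)=\mathop{\dot{\bigsqcup}}_{i\in\Delta}\textsf{\textup{post}}(S_i)$ whenever the join on the left exists in $\mathbb{L}$, the right-hand join being pointwise. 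An existing-join-preserving map between posets is the lower adjoint of a unique Galois connection, whose upper adjoint is $T\mapsto\bigsqcup\{S\in\mathbb{L}\mid \textsf{\textup{post}}(S)\mathrel{\dot{\sqsubseteq}}T\}$ \cite[ex.\ 11.39]{Cousot-PAI-2021}; this is precisely ${\textsf{\textup{post}}}^{-1}$, giving the stated connection. Should a self-contained check be preferred, the equivalence $\textsf{\textup{post}}(S)\mathrel{\dot{\sqsubseteq}}T\Leftrightarrow S\sqsubseteq{\textsf{\textup{post}}}^{-1}(T)$ is verified directly: the forward direction is membership below the defining join, and the backward direction pushes $\textsf{\textup{post}}$ through that join using its join preservation, together with the fact that $T$ dominates every $\textsf{\textup{post}}(S')$ in the indexing set.

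The main obstacle is existence of the supremum $\bigsqcup\{S\in\mathbb{L}\mid \textsf{\textup{post}}(S)\mathrel{\dot{\sqsubseteq}}T\}$: since $\pair{L}{\sqsubseteq}$ is only assumed to be a poset with partially defined join, this set need not have a least upper bound a priori, so ${\textsf{\textup{post}}}^{-1}(T)$ is not obviously well-defined. This is exactly why the hypothesis and the cited characterization are phrased for \emph{existing} joins rather than for complete lattices: existing-join preservation guarantees that whenever the relevant join does exist it is transported correctly, which is what makes ${\textsf{\textup{post}}}^{-1}$ an upper adjoint in this weaker order-theoretic setting, just as in Lemma~\ref{lem:GC:post-S}.
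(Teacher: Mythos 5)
Your proposal is correct and follows essentially the same route as the paper: invoke Lemma~\ref{lem:post} to get that $\textsf{\textup{post}}$ preserves existing joins in its semantics argument, then apply the standard characterization of existing-join-preserving maps as lower adjoints with the canonical upper adjoint, which is ${\textsf{\textup{post}}}^{-1}$. Your additional remarks (that the codomain restriction to $\mathbb{L}\joinmorphismto\mathbb{L}$ rests on Lemma~\ref{lem:post-S}, and the discussion of partially defined joins) are sensible elaborations the paper leaves implicit, but the core argument is identical.
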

\begin{toappendix}
\begin{proof}[Proof of lemma \ref{lem:GC:post}]By lemma \ref{lem:post}, $\textsf{post}$ preserves existing joins. It is the therefore the lower adjoint of a Galois connection \cite[exercise 11.39]{Cousot-PAI-2021}.  ${\textsf{\textup{post}}}$ is its unique upper adjoint \cite[exercise 11.39]{Cousot-PAI-2021}.
\end{proof}
\end{toappendix}
\subsection{A Calculus of Algebraic Program Execution Properties}\label{sect:Calculus-Program-Execution-Properties}
We derive the sound and complete $\textsf{post}^\sharp$ calculus by calculational design, as follows.
\begin{theorem}[Program execution property calculus]\label{th:Program:execution:properties:calculus}\proofinapx\quad
If\/ $\mathbb{D}^{\sharp}$ is a well-defined increasing and decreasing chain-complete join semilattice with right upper continuous sequential composition $\mathbin{{\fatsemi}^{\sharp}}$ then
\belowdisplayskip0pt\abovedisplayskip0pt\jot=4pt
\begin{eqntabular}{rcl}
\textsf{\textup{post}}^\sharp\sqb{\texttt{x = A}}^{\sharp} P
&=&
\triple{e:P_{+}\mathbin{\fatsemi^{\sharp}}{\textsf{\textup{assign}}^{\sharp}\sqb{\texttt{x},\texttt{A}}}}{\bot:P_{\infty}}{br:P_{br}}
\label{eq:post:abstract:assignment}\\
\textsf{\textup{post}}^\sharp\sqb{\texttt{x = [$a$, $b$]}}^{\sharp} P
&=&
\triple{e:P_{+}\mathbin{\fatsemi^{\sharp}}{\textsf{\textup{rassign}}^{\sharp}\sqb{\texttt{x},a,b}}}{\bot:P_{\infty}}{br:P_{br}}
\label{eq:post:abstract:random:assignment}\\
\textsf{\textup{post}}^\sharp\sqb{\texttt{skip}}^{\sharp}P
&=&
\triple{e:P_{+}\mathbin{\fatsemi^{\sharp}}{\textsf{\textup{skip}}^{\sharp}}}{\bot:P_{\infty}}{br:P_{br}}
\label{eq:post:abstract:skip}\\
\textsf{\textup{post}}^\sharp\sqb{\texttt{B}}^{\sharp} P
&=&
\triple{e:P_{+}\mathbin{\fatsemi^{\sharp}}{\textsf{\textup{test}}^{\sharp}}\sqb{\texttt{B}}}{\bot:P_{\infty}}{br:P_{br}}
\label{eq:post:abstract:B}\\
\textsf{\textup{post}}^\sharp\sqb{\texttt{break}}^{\sharp} P
&=&
\triple{e:{\bot_{+}^{\sharp}}}{\bot:P_{\infty}}{br:P_{br}\mathbin{\sqcup_{+}^{\sharp}}(P_e\mathbin{\fatsemi^{\sharp}}{\textsf{\textup{break}}^{\sharp}})}
\label{eq:post:abstract:break}\\
{\textsf{\textup{post}}^\sharp\sqb{\texttt{S}_1\texttt{;}\texttt{S}_2}^{\sharp} P}
&=&
{\textsf{\textup{post}}^\sharp\sqb{\texttt{S$_2$}}^{\sharp}(\textsf{\textup{post}}^\sharp\sqb{\texttt{S$_1$}}^{\sharp} P)}
\label{eq:post:abstract:seq}\\
\textsf{\textup{post}}^\sharp\sqb{\texttt{if (B) S$_1$ else S$_2$}}^{\sharp} P
&=&
\textsf{\textup{post}}^\sharp\sqb{\texttt{B;S}_1}^{\sharp}P\mathbin{\,\sqcup^{\sharp}\,}\textsf{\textup{post}}^\sharp\sqb{\neg\texttt{B;S}_2}^{\sharp}P
\label{eq:post:abstract:if}\\
{\vec{F}_{pe}^{\sharp}}&\triangleq&\LAMBDA{P}\LAMBDA{X}{\textsf{\textup{post}}^\sharp({\textsf{\textup{init}}^{\sharp}})P \mathbin{\sqcup_{+}^{\sharp}} \textsf{\textup{post}}^\sharp(\sqb{\texttt{B;S}}_{e}^{\sharp})(X)}
\label{eq:def:F-pe-sharp}\\
{F_{p\bot}^{\sharp}}&\triangleq&\LAMBDA{X}\textsf{\textup{post}}^\sharp (X)
(\sqb{\texttt{B;S}}_{e}^{\sharp})\label{eq:def:F-p-bot-sharp}\\
\textsf{\textup{post}}^\sharp\sqb{\texttt{while (B) S}}^{\sharp}P&=&
\langle ok:\langle{e:{\textsf{\textup{post}}^\sharp(\sqb{\neg\texttt{B}}_{e}^{\sharp}\mathbin{\sqcup_{e}^{\sharp}}\sqb{\texttt{B;S}}_{b}^{\sharp})(\Lfp{{\sqsubseteq}_{+}^{\sharp}}({\vec{F}_{pe}^{\sharp}}(P)))}},\,\label{eq:post:abstract:while}\\[-0.25ex]
&&\phantom{\langle ok:\langle}{\bot:{\textsf{\textup{post}}^\sharp(\sqb{\texttt{B;S}}_{\bot}^{\sharp})(\Lfp{{\sqsubseteq}_{+}^{\sharp}}({\vec{F}_{pe}^{\sharp}}(P)))}\mathbin{{\sqcup}_{\infty}^{\sharp}}{}}\nonumber\\[-0.5ex]
&&\renumber{$\textsf{\textup{post}}^\sharp(\Gfp{{\sqsubseteq}_{\infty}^{\sharp}}{F_{p\bot}^{\sharp}})P$ $\rangle,$\qquad}\\[-0.75ex]
&&\phantom{\langle}{br:P_{br}}\rangle
\nonumber
\end{eqntabular}
is sound and complete.
\end{theorem}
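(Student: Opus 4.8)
The plan is to prove each equation of the calculus by unfolding the definition $\textsf{post}^\sharp(S)P \triangleq P \mathbin{\fatsemi^\sharp} S$ from (\ref{eq:def:abstract:transformer:post}) against the componentwise definition of $\mathbin{\fatsemi^\sharp}$ given in (\ref{eq:def:abstract:sem:group:seq}), and then to substitute the algebraic semantics $\sqb{\texttt{S}}^\sharp$ of each statement as computed in sect.\ \ref{sec:definition:abstract:semantics}. The basic statements (\ref{eq:post:abstract:assignment})--(\ref{eq:post:abstract:break}) are the easy cases: for each, $\sqb{\texttt{S}}^\sharp$ has a single nonempty component (by (\ref{eq:def:sem:abstract:basis}) and the shorthand (\ref{eq:unique-nonempty-component-shorthand})), so expanding $P \mathbin{\fatsemi^\sharp} \sqb{\texttt{S}}^\sharp$ in the three named fields $e$, $\bot$, $br$ and using the absorbent/neutral laws \ref{def:abstract:domain:well:def:bot:absorbent} and \ref{def:abstract:domain:well:def:init:neutral} immediately collapses the cross terms. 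For instance, in (\ref{eq:post:abstract:break}) only the $br$ component of $\sqb{\texttt{break}}^\sharp$ equals ${\textsf{break}^\sharp}$, so the $e$ field becomes $\bot_+^\sharp$, the $\bot$ field carries $P_\infty$ unchanged (by \ref{def:abstract:domain:well:def:oo:absorbent}), and the $br$ field accumulates $P_{br} \mathbin{\sqcup_+^\sharp}(P_e \mathbin{\fatsemi^\sharp}{\textsf{break}^\sharp})$.

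First I would dispatch the two structural non-loop cases. The sequential composition (\ref{eq:post:abstract:seq}) follows from associativity of $\mathbin{\fatsemi^\sharp}$ (\ref{def:abstract:domain:well:def:operators}) together with $\sqb{\texttt{S$_1$;S$_2$}}^\sharp = \sqb{\texttt{S$_1$}}^\sharp \mathbin{\fatsemi^\sharp} \sqb{\texttt{S$_2$}}^\sharp$ from (\ref{eq:def:abstract:sem:group:seq}): unfolding both sides gives $P \mathbin{\fatsemi^\sharp}(\sqb{\texttt{S$_1$}}^\sharp \mathbin{\fatsemi^\sharp} \sqb{\texttt{S$_2$}}^\sharp) = (P \mathbin{\fatsemi^\sharp}\sqb{\texttt{S$_1$}}^\sharp)\mathbin{\fatsemi^\sharp}\sqb{\texttt{S$_2$}}^\sharp$, which is exactly function composition of the transformers. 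The conditional (\ref{eq:post:abstract:if}) follows from the fact that $\mathbin{\fatsemi^\sharp}$ is right join-preserving (\ref{def:abstract:domain:well:finite-join-preserving}, as guaranteed by the right upper continuity hypothesis together with finite join preservation), so $\textsf{post}^\sharp$ distributes over the $\sqcup^\sharp$ appearing in the definition (\ref{eq:def:abstract:sem:if}) of $\sqb{\texttt{if (B) S$_1$ else S$_2$}}^\sharp$; this is an instance of lemma \ref{lem:post-S}, which already establishes that $\textsf{post}^\sharp(S)$ inherits the left-preservation properties of $\mathbin{\fatsemi^\sharp}$.

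The iteration case (\ref{eq:post:abstract:while}) is where the real work lies, and I expect it to be the main obstacle. The goal is to push $\textsf{post}^\sharp$ through the fixpoint definitions (\ref{eq:sem:abstract:finite})--(\ref{eq:sem:abstract:loop-nontermination}) of $\sqb{\texttt{while (B) S}}^\sharp$. The key tool is a fixpoint transfer argument: I would show that $\textsf{post}^\sharp(\Lfp{\sqsubseteq_+^\sharp}{\Cev{F}_e^\sharp})P = \Lfp{\sqsubseteq_+^\sharp}({\vec{F}_{pe}^\sharp}(P))$ by verifying that applying $\textsf{post}^\sharp(\cdot)P$ to the transformer ${\Cev{F}_e^\sharp}$ (or its forward-equal ${\vec{F}_e^\sharp}$ via lemma \ref{lem:abstract:forward=backward}) yields ${\vec{F}_{pe}^\sharp}(P)$ as defined in (\ref{eq:def:F-pe-sharp}), and that the least fixpoints correspond. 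Concretely, since $\textsf{post}^\sharp(S)P = P \mathbin{\fatsemi^\sharp} S$ and $\mathbin{\fatsemi^\sharp}$ is right upper continuous, $\textsf{post}^\sharp$ commutes with the upper-continuous fixpoint iteration of proposition \ref{prop:Tarski:constructive}, so the iterates transfer term by term; the commutation identity $P \mathbin{\fatsemi^\sharp}({\textsf{init}^\sharp}\mathbin{\sqcup_+^\sharp}(\sqb{\texttt{B;S}}_e^\sharp\mathbin{\fatsemi^\sharp}X)) = \textsf{post}^\sharp({\textsf{init}^\sharp})P \mathbin{\sqcup_+^\sharp}\textsf{post}^\sharp(\sqb{\texttt{B;S}}_e^\sharp)(P\mathbin{\fatsemi^\sharp}X)$ uses right join preservation and associativity. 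The greatest-fixpoint component $\textsf{post}^\sharp(\Gfp{\sqsubseteq_\infty^\sharp}{F_\bot^\sharp})P$ requires the dual care, but note that here the infinite iterates of $F_{p\bot}^\sharp$ in (\ref{eq:def:F-p-bot-sharp}) act on the semantics argument rather than the precondition, so the transfer is more delicate and one must be careful that the greatest fixpoint is preserved only up to the right-increasing hypothesis of lemma \ref{lem:abstract:Lfp-subseteq-F-oo}; I would route the argument through the explicit iterate characterization $\Gfp{\sqsubseteq_\infty^\sharp}{F_\bot^\sharp} = \mathop{\bigsqcap_\infty^\sharp}_{\delta\in\mathbb{O}}((\sqb{\texttt{B;S}}_e^\sharp)^\delta \mathbin{\fatsemi^\sharp}{\top_\infty^\sharp})$ rather than attempt a direct fixpoint-fusion. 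Finally, soundness and completeness amount to the observation that every step is an equality (not merely an inequality), which holds precisely because $\textsf{post}^\sharp(S)$ preserves existing joins by lemma \ref{lem:post-S} and hence the Galois connection (\ref{eq:GC:post-S}) makes the derived rules exact.
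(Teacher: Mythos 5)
Your proposal is correct and follows essentially the same route as the paper's proof: case analysis on the syntax, unfolding $\textsf{post}^\sharp(S)P = P \mathbin{\fatsemi^{\sharp}} S$ against the componentwise composition for the basic statements, associativity for sequencing, join preservation (lemma \ref{lem:post-S}) for the conditional, and a commutation-plus-fixpoint-transfer argument for the loop, which the paper isolates as lemmas \ref{lem:fixpoint:post:+} and \ref{lem:fixpoint:post:bot}. Your proposed detour through the explicit iterate characterization of lemma \ref{lem:abstract:Lfp-subseteq-F-oo} for the greatest-fixpoint component is only a presentational variant of the paper's iterate-by-iterate commutation argument, not a genuinely different method.
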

\begin{toappendix}
To prove theorem \ref{th:Program:execution:properties:calculus}, we need preliminary lemmas.
\begin{lemma}\label{lem:fixpoint:post:+}If\/ $\mathbb{D}^{\sharp}_{+}$ is a well-defined increasing chain-complete join semilattice with sequential composition ${{\fatsemi}^{\sharp}}$ that is existing $\sqcup$ right preserving and upper continuous in both arguments then
$\textsf{\textup{post}}^\sharp(\Lfp{\sqsubseteq_{+}^{\sharp}}{\vec{F}_{e}^{\sharp}})P$ $=$ $\Lfp{{\sqsubseteq}_{+}^{\sharp}}({\vec{F}_{pe}^{\sharp}}(P))$.
\end{lemma}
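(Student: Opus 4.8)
The plan is to prove this as a fixpoint fusion (commutation) identity. Writing $h\triangleq\LAMBDA{X}\textsf{\textup{post}}^{\sharp}(X)P=\LAMBDA{X}P\mathbin{\fatsemi^{\sharp}}X$ by definition (\ref{eq:def:abstract:transformer:post}), the claim is exactly $h(\Lfp{\sqsubseteq_{+}^{\sharp}}{\vec{F}_{e}^{\sharp}})=\Lfp{\sqsubseteq_{+}^{\sharp}}({\vec{F}_{pe}^{\sharp}}(P))$, so it suffices to show that $h$ transports the iterates of ${\vec{F}_{e}^{\sharp}}$ onto those of ${\vec{F}_{pe}^{\sharp}}(P)$. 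First I would simplify the target transformer: unfolding (\ref{eq:def:F-pe-sharp}) and using the neutrality $P\mathbin{\fatsemi^{\sharp}}{\textsf{init}^{\sharp}}=P$ (\ref{def:abstract:domain:well:def:init:neutral}) gives ${\vec{F}_{pe}^{\sharp}}(P)=\LAMBDA{X}P\mathbin{\sqcup_{+}^{\sharp}}(X\mathbin{\fatsemi^{\sharp}}\sqb{\texttt{B;S}}_{e}^{\sharp})$, which has exactly the shape of ${\vec{F}_{e}^{\sharp}}$ in (\ref{eq:natural-transformer-finite-forward}) but with the constant ${\textsf{init}^{\sharp}}$ replaced by $P$. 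Consequently the same reasoning as in lemma \ref{lem:Fesharp-welldefined} shows that ${\vec{F}_{pe}^{\sharp}}(P)$ is upper continuous, hence increasing, and admits a least fixpoint reached by its transfinite iterates $\pair{Y^{\delta}}{\delta\in\mathbb{O}}$ from $\bot_{+}^{\sharp}$ (proposition \ref{prop:Tarski:constructive}).

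The key step is the commutation $h\comp{\vec{F}_{e}^{\sharp}}={\vec{F}_{pe}^{\sharp}}(P)\comp h$. For any $X$,
\[ h({\vec{F}_{e}^{\sharp}}(X))=P\mathbin{\fatsemi^{\sharp}}({\textsf{init}^{\sharp}}\mathbin{\sqcup_{+}^{\sharp}}(X\mathbin{\fatsemi^{\sharp}}\sqb{\texttt{B;S}}_{e}^{\sharp}))=(P\mathbin{\fatsemi^{\sharp}}{\textsf{init}^{\sharp}})\mathbin{\sqcup_{+}^{\sharp}}(P\mathbin{\fatsemi^{\sharp}}(X\mathbin{\fatsemi^{\sharp}}\sqb{\texttt{B;S}}_{e}^{\sharp})) \]
using existing $\sqcup_{+}^{\sharp}$ right preservation of $\mathbin{\fatsemi^{\sharp}}$ (\ref{def:abstract:domain:well:limit-preserving}); then by neutrality (\ref{def:abstract:domain:well:def:init:neutral}) and associativity (\ref{def:abstract:domain:well:def:operators}) this equals $P\mathbin{\sqcup_{+}^{\sharp}}((P\mathbin{\fatsemi^{\sharp}}X)\mathbin{\fatsemi^{\sharp}}\sqb{\texttt{B;S}}_{e}^{\sharp})=P\mathbin{\sqcup_{+}^{\sharp}}(h(X)\mathbin{\fatsemi^{\sharp}}\sqb{\texttt{B;S}}_{e}^{\sharp})={\vec{F}_{pe}^{\sharp}}(P)(h(X))$. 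The crucial observation making this work is that $\textsf{\textup{post}}$ acts by \emph{right} composition while the loop transformer builds its constant and body by \emph{left} application, so associativity lets the fixed $P$ slide through the iteration.

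With the commutation in hand I would run a transfinite induction proving $h(X^{\delta})=Y^{\delta}$ for the iterates $\pair{X^{\delta}}{\delta\in\mathbb{O}}$ of ${\vec{F}_{e}^{\sharp}}$ from $\bot_{+}^{\sharp}$ (these are increasing and reach $\Lfp{\sqsubseteq_{+}^{\sharp}}{\vec{F}_{e}^{\sharp}}$ by lemma \ref{lem:Fesharp-welldefined} and proposition \ref{prop:Tarski:constructive}). The base case $h(\bot_{+}^{\sharp})=P\mathbin{\fatsemi^{\sharp}}\bot_{+}^{\sharp}=\bot_{+}^{\sharp}=Y^{0}$ uses the absorption law (\ref{def:abstract:domain:well:def:bot:absorbent}); the successor case is immediate from the commutation and the induction hypothesis; the limit case $h(X^{\lambda})=h\bigl({\bigsqcup_{+}^{\sharp}}_{\beta<\lambda}X^{\beta}\bigr)={\bigsqcup_{+}^{\sharp}}_{\beta<\lambda}h(X^{\beta})={\bigsqcup_{+}^{\sharp}}_{\beta<\lambda}Y^{\beta}=Y^{\lambda}$ uses upper continuity of $h$ (right upper continuity of $\mathbin{\fatsemi^{\sharp}}$) applied to the increasing chain of iterates. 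Choosing an ordinal $\epsilon$ at which both sequences stabilize then yields $\textsf{\textup{post}}^{\sharp}(\Lfp{\sqsubseteq_{+}^{\sharp}}{\vec{F}_{e}^{\sharp}})P=h(X^{\epsilon})=Y^{\epsilon}=\Lfp{\sqsubseteq_{+}^{\sharp}}({\vec{F}_{pe}^{\sharp}}(P))$.

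The main obstacle is the commutation identity, since it is the only place the algebraic axioms interact nontrivially; everything else is a routine Kleene-style transfer. I expect no real difficulty beyond bookkeeping, provided one is careful that it is the \emph{right} (not left) $\sqcup$-preservation and right upper continuity of $\mathbin{\fatsemi^{\sharp}}$ that are invoked, matching the ``in both arguments'' hypothesis. An alternative, equally viable route would bypass the induction by invoking the explicit power characterization $\Lfp{\sqsubseteq_{+}^{\sharp}}{\vec{F}_{e}^{\sharp}}={\bigsqcup_{+}^{\sharp}}_{\delta\in\mathbb{O}}(\sqb{\texttt{B;S}}_{e}^{\sharp})^{\delta}$ of lemma \ref{lem:abstract:Lfp-subseteq-F-e}, distributing $P\mathbin{\fatsemi^{\sharp}}(\cdot)$ through the join by right continuity, and recognising the result as the analogous series for $\Lfp{\sqsubseteq_{+}^{\sharp}}({\vec{F}_{pe}^{\sharp}}(P))$.
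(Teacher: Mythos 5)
Your proposal is correct and follows essentially the same route as the paper's proof: establish the commutation $\textsf{post}^{\sharp}\comp{\vec{F}_{e}^{\sharp}}={\vec{F}_{pe}^{\sharp}}(\cdot)\comp\textsf{post}^{\sharp}$ via right join-preservation, neutrality of ${\textsf{init}^{\sharp}}$ and associativity, check strictness at ${\bot_{+}^{\sharp}}$, and transfer the iterates using upper continuity. The only cosmetic difference is that the paper phrases the argument at the level of the function $\LAMBDA{P}{\vec{F}_{pe}^{\sharp}}(P)$ and concludes by citing the exact fixpoint-abstraction theorem, whereas you fix $P$ and carry out the transfinite induction explicitly.
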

\begin{proof}[Proof of \ref{lem:fixpoint:post:+}]By lemma \ref{lem:Fesharp-welldefined}, ${\vec{F}_{e}^{\sharp}}$ is increasing 
so that the transfinite iterates $\pair{X^\delta}{\delta\in\mathbb{O}}$ of
${\vec{F}_{e}^{\sharp}}$ from ${\bot_{+}^{\sharp}}$ from an
increasing chain which is ultimately stationary at rank $\epsilon$ so that $\Lfp{\sqsubseteq_{+}^{\sharp}}{\vec{F}_{e}^{\sharp}}=X^{\epsilon}$ \cite{CousotCousot-PJM-82-1-1979}.

\smallskip

\hyphen{6} We have $\textsf{post}^\sharp(X^0)$ = $\textsf{post}^\sharp({\bot_{+}^{\sharp}})$ = $\LAMBDA{P}P\mathbin{\fatsemi^{\sharp}}{\bot_{+}^{\sharp}}$ = $\LAMBDA{P}{\bot_{+}^{\sharp}}$ by def.\ (\ref{eq:def:abstract:transformer:post}) of $\textsf{post}^\sharp$ and $\forall S\in{\mathbb{L}^{\sharp}_{+}}\mathrel{.} S\mathbin{{\fatsemi}^{\sharp}}{\bot_{+}^{\sharp}}={\bot_{+}^{\sharp}}$ in definition \ref{def:abstract:domain:well:def:init:neutral}.

\smallskip

\hyphen{6} Let us prove commutation of ${\vec{F}_{e}^{\sharp}}$ and $\LAMBDA{P}{\vec{F}_{pe}^{\sharp}}(P)$ for the abstraction $\textsf{post}^\sharp$ of the iterates.
\begin{calculus}[=\ \ ]
\formula{\textsf{post}^\sharp({\vec{F}_{e}^{\sharp}}(X^{\delta}))}\\
=
\formulaexplanation{\LAMBDA{P}\textsf{post}^\sharp({\vec{F}_{e}^{\sharp}}(X^{\delta}))P}{def.\ function application}\\
=
\formulaexplanation{\LAMBDA{P}\textsf{post}^\sharp({\textsf{init}^{\sharp}} \mathbin{\sqcup_{+}^{\sharp}} (X^{\delta}\mathbin{{\fatsemi}^{\sharp}} \sqb{\texttt{B;S}}_{e}^{\sharp}))P}{def.\ (\ref{eq:natural-transformer-finite-forward}) of ${\vec{F}_{e}^{\sharp}}$}\\
=
\formula{\LAMBDA{P}\textsf{post}^\sharp({\textsf{init}^{\sharp}})P \mathbin{\sqcup_{+}^{\sharp}} \textsf{post}^\sharp(\sqb{\texttt{B;S}}_{e}^{\sharp})(\textsf{post}^\sharp(X^{\delta})P)}\\[-0.5ex]
\rightexplanation{$\textsf{\textup{post}}^\sharp$ is existing join preserving by hypothesis on $\mathbin{{\fatsemi}^{\sharp}}$ and lemma \ref{lem:post-S}}\\
=
\formulaexplanation{\LAMBDA{P}{\vec{F}_{pe}^{\sharp}}(P)(\textsf{post}^\sharp(X^{\delta}))}{def.\ (\ref{eq:def:F-pe-sharp}) of ${\vec{F}_{pe}^{\sharp}}$}
\end{calculus}

\medskip

\noindent We conclude by continuity and \cite[th.\@ 18.26]{Cousot-PAI-2021}.
\end{proof}
\noindent Note that if $\textsf{\textup{post}}^\sharp$ is simply increasing, we have an over approximation.
\begin{lemma}\label{lem:fixpoint:post:bot}
If\/ $\mathbb{D}^{\sharp}$ is well-defined decreasing chain-complete lattice and the sequential composition ${{\fatsemi}^{\sharp}}$ is right lower continuous then $\textsf{\textup{post}}^\sharp({\Gfp{{\sqsubseteq}_{\infty}^{\sharp}}{F_{\bot}^{\sharp}}})$
= $\textsf{\textup{post}}^\sharp(\Gfp{{\sqsubseteq}_{\infty}^{\sharp}}{F_{p\bot}^{\sharp}})$.
\end{lemma}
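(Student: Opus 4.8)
The plan is to show that the two greatest fixpoints appearing on either side are actually the \emph{same} element of $\mathbb{L}^{\sharp}$, after which the equality of the two $\textsf{post}^\sharp$ images is immediate. First I would unfold the right-hand transformer: by the definition (\ref{eq:def:abstract:transformer:post}) of $\textsf{post}^\sharp$ and (\ref{eq:def:F-p-bot-sharp}), for every $X$ we have $F_{p\bot}^{\sharp}(X) = \textsf{post}^\sharp(X)(\sqb{\texttt{B;S}}_{e}^{\sharp}) = \sqb{\texttt{B;S}}_{e}^{\sharp} \mathbin{{\fatsemi}^{\sharp}} X$, which is syntactically the body of $F_{\bot}^{\sharp}$ in (\ref{eq:trace-transformer-infinite}). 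The only discrepancy is the domain: $F_{\bot}^{\sharp}$ acts on the infinitary component $\mathbb{L}^{\sharp}_{\infty}$, whereas $F_{p\bot}^{\sharp}$ is written on the full triple $\mathbb{L}^{\sharp}$.

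Next I would pin down the action of $F_{p\bot}^{\sharp}$ on a triple whose only nonempty slot is the infinite one, i.e.\ an $I$ abbreviated through the shorthand (\ref{eq:unique-nonempty-component-shorthand}) as $\triple{e:\bot_{+}^{\sharp}}{\bot:I}{br:\bot_{+}^{\sharp}}$. Expanding the grouped composition (\ref{eq:def:abstract:sem:group:seq}) of $\sqb{\texttt{B;S}}_{e}^{\sharp}$ with this triple and simplifying the finite and break slots using the absorbency law $\sqb{\texttt{B;S}}_{e}^{\sharp}\mathbin{{\fatsemi}^{\sharp}}\bot_{+}^{\sharp}=\bot_{+}^{\sharp}$ of \ref{def:abstract:domain:well:def:bot:absorbent} (together with $\bot_{\infty}^{\sharp}$ being neutral for $\sqcup_{\infty}^{\sharp}$) yields $F_{p\bot}^{\sharp}(\triple{e:\bot_{+}^{\sharp}}{\bot:I}{br:\bot_{+}^{\sharp}}) = \triple{e:\bot_{+}^{\sharp}}{\bot:\sqb{\texttt{B;S}}_{e}^{\sharp}\mathbin{{\fatsemi}^{\sharp}}I}{br:\bot_{+}^{\sharp}}$. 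Thus $F_{p\bot}^{\sharp}$ preserves the class of such embedded triples and acts on the infinite slot exactly as $F_{\bot}^{\sharp}$.

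With this in hand I would compare the two fixpoints by transfinite induction on their iterates. By lemma \ref{lem:post} the right lower continuity of $\mathbin{{\fatsemi}^{\sharp}}$ is inherited by $\textsf{post}^\sharp$ in its semantics argument, so $F_{p\bot}^{\sharp}$ is lower continuous, and likewise $F_{\bot}^{\sharp}$; hence by lemma \ref{lem:Fbotsharp-welldefined} and the dual of proposition \ref{prop:Tarski:constructive} both greatest fixpoints are limits of the decreasing iterates from the top $\top_{\infty}^{\sharp}$. Seeding the $F_{p\bot}^{\sharp}$-iteration at $\triple{e:\bot_{+}^{\sharp}}{\bot:\top_{\infty}^{\sharp}}{br:\bot_{+}^{\sharp}}$, the base case is immediate, the successor case is the computation of the previous paragraph, and at a limit ordinal the meet is computed componentwise so the finite and break slots stay $\bot_{+}^{\sharp}$ while the infinite slot is the corresponding meet of $F_{\bot}^{\sharp}$-iterates. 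Consequently the $\delta$-th iterate of $F_{p\bot}^{\sharp}$ is the embedding of the $\delta$-th iterate of $F_{\bot}^{\sharp}$ for every $\delta\in\mathbb{O}$, and passing to the limit shows $\Gfp{\sqsubseteq_{\infty}^{\sharp}}{F_{p\bot}^{\sharp}}$ is the embedding of $\Gfp{\sqsubseteq_{\infty}^{\sharp}}{F_{\bot}^{\sharp}}$; under the shorthand (\ref{eq:unique-nonempty-component-shorthand}) these are identified, whence $\textsf{post}^\sharp$ of the two coincides.

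The main obstacle is precisely this domain mismatch between the infinitary $F_{\bot}^{\sharp}$ and the triple-valued $F_{p\bot}^{\sharp}$: the real content is checking that $F_{p\bot}^{\sharp}$ never manufactures spurious finite or break behaviour anywhere along the transfinite iteration, which is the only place the absorbency axiom \ref{def:abstract:domain:well:def:bot:absorbent} and the componentwise treatment of meets at limit ordinals are genuinely used. (Note that the full \emph{Tarski} greatest fixpoint of $F_{p\bot}^{\sharp}$ over the product would in general carry a nontrivial finite slot, so it is essential that $\Gfp{\sqsubseteq_{\infty}^{\sharp}}$ here denotes the limit of the iteration in the infinite direction, seeded from $\top_{\infty}^{\sharp}$, rather than the greatest fixpoint for the product order.) Once the two greatest fixpoints are identified as the same element of $\mathbb{L}^{\sharp}$, the stated equality $\textsf{post}^\sharp(\Gfp{\sqsubseteq_{\infty}^{\sharp}}{F_{\bot}^{\sharp}}) = \textsf{post}^\sharp(\Gfp{\sqsubseteq_{\infty}^{\sharp}}{F_{p\bot}^{\sharp}})$ follows at once.
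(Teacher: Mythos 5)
Your proof is correct, but it takes a genuinely different route from the paper's. You read $F_{p\bot}^{\sharp}$ as an operator on the semantic domain, observe that by (\ref{eq:def:abstract:transformer:post}) its body $\sqb{\texttt{B;S}}_{e}^{\sharp}\mathbin{{\fatsemi}^{\sharp}}X$ coincides with that of $F_{\bot}^{\sharp}$ in (\ref{eq:trace-transformer-infinite}), and then identify the two greatest fixpoints as the \emph{same element} of $\mathbb{L}^{\sharp}$ by a transfinite induction showing that the $e$ and $br$ slots never leave $\bot_{+}^{\sharp}$ (using \ref{def:abstract:domain:well:def:bot:absorbent} and the componentwise meet at limits); the displayed equality then holds because the arguments of $\textsf{post}^\sharp$ coincide. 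The paper instead proves a \emph{commutation} property on the iterates, $\textsf{post}^\sharp(F_{\bot}^{\sharp}(X^\delta))=F_{p\bot}^{\sharp}(\textsf{post}^\sharp(X^\delta))$ --- a short equational computation using only the definition of $\textsf{post}^\sharp$ and the associativity of $\mathbin{{\fatsemi}^{\sharp}}$, in which $F_{p\bot}^{\sharp}$ is applied to the \emph{transformers} $\textsf{post}^\sharp(X^\delta)$ rather than to semantics --- then notes that right lower continuity of $\mathbin{{\fatsemi}^{\sharp}}$ makes $\textsf{post}^\sharp$ lower continuous (lemma \ref{lem:GC:post}) and concludes by the dual of the fixpoint-transfer theorem of \cite{Cousot-PAI-2021}. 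Your reading renders the lemma nearly tautological and shifts all the work into the (routine but correct) check that the embedding of $\mathbb{L}^{\sharp}_{\infty}$ into $\mathbb{L}^{\sharp}$ is preserved along the iteration; the continuity hypothesis is used only to guarantee convergence of the iterates, not for the identification itself. The paper's commutation argument is the one that carries weight downstream: it is what legitimates re-expressing the loop's nontermination component as a greatest fixpoint of an operator phrased purely in the $\textsf{post}^\sharp$ calculus, it is the pattern that is lifted to $\breve{F}_{p\bot}^{\sharp}$ in theorem \ref{th:upper-abstract-hyper-properties}, and it would survive replacing the identity-up-to-embedding by a nontrivial abstraction, which your argument would not. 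Your closing observation --- that the Tarski greatest fixpoint of $F_{p\bot}^{\sharp}$ over the full product order would carry a nontrivial finite slot, so $\Gfp{{\sqsubseteq}_{\infty}^{\sharp}}$ must denote the limit of the iteration seeded at $\top_{\infty}^{\sharp}$ --- is a genuine subtlety that the paper's proof sidesteps by never leaving the infinitary component.
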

\begin{proof}[Proof of (\ref{lem:fixpoint:post:bot})]Let us prove commutation for the iterates  $\pair{X^\delta}{\delta\in\mathbb{O}}$  of ${\Gfp{\dot{\sqsubseteq}_{\infty}^{\sharp}}{F_{\bot}^{\sharp}}}$.
\begin{calculus}[=\ \ ]
\formula{\textsf{post}^\sharp({F_{\bot}^{\sharp}}(X^\delta))}\\
=
\formulaexplanation{\LAMBDA{P}\textsf{post}^\sharp({F_{\bot}^{\sharp}}(X^\delta))P}{function application}\\
=
\formulaexplanation{\LAMBDA{P}\textsf{post}^\sharp(\sqb{\texttt{B;S}}_{e}^{\sharp}\mathbin{{\fatsemi}^{\sharp}} X^\delta)P}{def.\ (\ref{eq:trace-transformer-infinite}) of ${F_{\bot}^{\sharp}}$}\\
=
\formulaexplanation{\LAMBDA{P}P\mathbin{\fatsemi^{\sharp}}(\sqb{\texttt{B;S}}_{e}^{\sharp}\mathbin{{\fatsemi}^{\sharp}} X^\delta)}{def.\ (\ref{eq:def:abstract:transformer:post}) of $\textsf{post}^\sharp$}\\
=
\formulaexplanation{\LAMBDA{P}(P\mathbin{\fatsemi^{\sharp}}\sqb{\texttt{B;S}}_{e}^{\sharp})\mathbin{{\fatsemi}^{\sharp}} X^\delta}{$\mathbin{{\fatsemi}^{\sharp}}$ associative by definition \ref{def:abstract:domain:well:def:operators}}\\
=
\formulaexplanation{\LAMBDA{P}\textsf{post}^\sharp(X^\delta)(P\mathbin{\fatsemi^{\sharp}}\sqb{\texttt{B;S}}_{e}^{\sharp})}
{def.\ (\ref{eq:def:abstract:transformer:post}) of $\textsf{post}^\sharp$}\\
=
\formulaexplanation{\LAMBDA{P}\textsf{post}^\sharp(X^\delta)(\textsf{post}^\sharp(\sqb{\texttt{B;S}}_{e}^{\sharp})P)}
{def.\ (\ref{eq:def:abstract:transformer:post}) of $\textsf{post}^\sharp$}\\
=
\formulaexplanation{\LAMBDA{P}{F_{p\bot}^{\sharp}}(\textsf{post}^\sharp(X^\delta))P}{def.\ (\ref{eq:def:F-p-bot-sharp}) of ${F_{p\bot}^{\sharp}}$}\\
=
\formulaexplanation{{F_{p\bot}^{\sharp}}(\textsf{post}^\sharp(X^\delta))}{function application}
\end{calculus}

\medskip

\noindent By hypothesis, the sequential composition ${{\fatsemi}^{\sharp}}$ is right lower continuous, so that by lemma
\ref{lem:GC:post}, $\textsf{post}^\sharp$ is lower continuous. By commutativity, we conclude by the dual of \cite[th.\@ 18.26]{Cousot-PAI-2021}.
\end{proof}
\begin{proof}[Proof of theorem \ref{th:Program:execution:properties:calculus}]
The proof is by structural induction on the statement syntax.\par
\begin{calculus}[=\ \ ]
\hyphen{5}\formula{\textsf{post}^\sharp\sqb{\texttt{x = A}}^{\sharp} P}\\
=
\formulaexplanation{P\mathbin{\fatsemi^{\sharp}}\sqb{\texttt{x = A}}^\sharp}{def.\ (\ref{eq:def:abstract:transformer:post}) of $\textsf{post}^\sharp$}\\
=
\formulaexplanation{P\mathbin{\fatsemi^{\sharp}}\triple{e:{\textsf{assign}^{\sharp}\sqb{\texttt{x},\texttt{A}}}}{\bot:{\bot_{\infty}^{\sharp}}}{br:{\bot_{+}^{\sharp}}}}{(\ref{eq:def:Abstract-Semantic-Domain-Semantics}) and (\ref{eq:def:sem:abstract:basis})}\\
=
\formulaexplanation{\triple{e:P_{+}\mathbin{\fatsemi^{\sharp}}{\textsf{assign}^{\sharp}\sqb{\texttt{x},\texttt{A}}}}{\bot:P_{\infty} \mathbin{{\sqcup}^{\sharp}_{\infty}}(P_{+}\mathbin{\fatsemi^{\sharp}}{\bot_{\infty}^{\sharp}})}{br:P_{br}\mathbin{\sqcup_{+}^{\sharp}}(P_{+}\mathbin{\fatsemi^{\sharp}}{\bot_{+}^{\sharp}})}}{def.\ (\ref{eq:def:abstract:sem:group:seq}) of $\mathbin{\fatsemi^{\sharp}}$}\\
=
\formula{\triple{e:P_{+}\mathbin{\fatsemi^{\sharp}}{\textsf{assign}^{\sharp}\sqb{\texttt{x},\texttt{A}}}}{\bot:P_{\infty} \mathbin{{\sqcup}^{\sharp}_{\infty}}{\bot_{\infty}^{\sharp}}}{br:P_{br}\mathbin{\sqcup_{+}^{\sharp}}{\bot_{+}^{\sharp}}}}\\
\rightexplanation{${\bot_{\infty}^{\sharp}}$ and ${\bot_{+}^{\sharp}}$  absorbent for $\mathbin{\fatsemi^{\sharp}}$ by definition \ref{def:abstract:domain:well:def:oo:absorbent} }\\
=
\formulaexplanation{\triple{e:P_{+}\mathbin{\fatsemi^{\sharp}}{\textsf{assign}^{\sharp}\sqb{\texttt{x},\texttt{A}}}}{\bot:P_{\infty}}{br:P_{br}}}{def.\ lub}\\[1ex]

\hyphen{5}\discussion{The $\textsf{post}^\sharp$ transformers (\ref{eq:post:abstract:random:assignment}) for \texttt{x = [$a$, $b$]}, (\ref{eq:post:abstract:skip}) for \texttt{x = skip}, and (\ref{eq:post:abstract:B}) for \texttt{B} are similar.}\\[1em]

\hyphen{5}\formula{\textsf{post}^\sharp\sqb{\texttt{break}}^{\sharp} P}\\
=
\formulaexplanation{P\mathbin{\fatsemi^{\sharp}}\sqb{\texttt{break}}^\sharp}{def.\ (\ref{eq:def:abstract:transformer:post}) of $\textsf{post}^\sharp$}\\
=
\formulaexplanation{P\mathbin{\fatsemi^{\sharp}}
\triple{e:{\bot_{+}^{\sharp}}}{\bot:{\bot_{\infty}^{\sharp}}}{br:{\textsf{break}^{\sharp}}}}{(\ref{eq:def:Abstract-Semantic-Domain-Semantics}) and (\ref{eq:def:sem:abstract:basis})}\\
=
\formulaexplanation{
\triple{e:P_{+}\mathbin{\fatsemi^{\sharp}}{\bot_{+}^{\sharp}}}{\bot:P_{\infty}\mathbin{\fatsemi^{\sharp}}{\bot_{\infty}^{\sharp}}}{br:P_{br}\mathbin{\sqcup_{+}^{\sharp}}(P_{e}\mathbin{\fatsemi^{\sharp}}{\textsf{break}^{\sharp}})}}
{def.\ (\ref{eq:def:abstract:sem:group:seq}) of $\mathbin{\fatsemi^{\sharp}}$}\\
=
\formulaexplanation{
\triple{e:{\bot_{+}^{\sharp}}}{\bot:P_{\infty}}{br:P_{br}\mathbin{\sqcup_{+}^{\sharp}}(P_{e}\mathbin{\fatsemi^{\sharp}}{\textsf{break}^{\sharp}})}}{definitions \ref{def:abstract:domain:well:def:oo:absorbent} and  \ref{def:abstract:domain:well:def:init:neutral}}\\[1ex]

\hyphen{5}\formula{\textsf{post}^\sharp\sqb{\texttt{S}_1\texttt{;}\texttt{S}_2}^{\sharp} P}\\
=
\formulaexplanation{P\mathbin{\fatsemi^{\sharp}}(\sqb{\texttt{S}_1\texttt{;}\texttt{S}_2}^\sharp)}{def.\ (\ref{eq:def:abstract:transformer:post}) of $\textsf{post}^\sharp$}\\
=
\formulaexplanation{P\mathbin{\fatsemi^{\sharp}}(\sqb{\texttt{S}_1}^\sharp\mathbin{\fatsemi^{\sharp}}\sqb{\texttt{S}_2}^\sharp)}{def.\ (\ref{eq:def:abstract:sem:group:seq}) of $\mathbin{\fatsemi^{\sharp}}$}\\
=
\formulaexplanation{(P\mathbin{\fatsemi^{\sharp}}\sqb{\texttt{S}_1}^\sharp)\mathbin{\fatsemi^{\sharp}}\sqb{\texttt{S}_2}^\sharp}{$\mathbin{\fatsemi^{\sharp}}$ associative by definition \ref{def:abstract:domain:well:def}\ref{def:abstract:domain:well:def:operators}}\\
=
\formulaexplanation{\textsf{post}^\sharp\sqb{\texttt{S}_2}^{\sharp}(P\mathbin{\fatsemi^{\sharp}}\sqb{\texttt{S}_1}^\sharp)}{def.\ (\ref{eq:def:abstract:transformer:post}) of $\textsf{post}^\sharp\sqb{\texttt{S}_2}^{\sharp}Q\triangleq Q\mathbin{\fatsemi^{\sharp}}\sqb{\texttt{S}_2}^\sharp$}\\
=
\formulaexplanation{\textsf{post}^\sharp\sqb{\texttt{S}_2}^{\sharp}(\textsf{post}^\sharp\sqb{\texttt{S}_1}^{\sharp}P)}{def.\ (\ref{eq:def:abstract:transformer:post}) of $\textsf{post}^\sharp\sqb{\texttt{S}_1}^{\sharp}P\triangleq P\mathbin{\fatsemi^{\sharp}}\sqb{\texttt{S}_1}^\sharp$}\\[1ex]

\hyphen{5}\formula{\textsf{post}^\sharp\sqb{\texttt{if (B) S$_1$ else S$_2$}}^{\sharp} P}\\
=
\formulaexplanation{P\mathbin{\fatsemi^{\sharp}}\sqb{\texttt{if (B) S$_1$ else S$_2$}}^\sharp}{def.\ (\ref{eq:def:abstract:transformer:post}) of $\textsf{post}^\sharp$}\\
=
\formulaexplanation{P\mathbin{\fatsemi^{\sharp}}(\sqb{\texttt{B;S}_1}^{\sharp}\mathbin{\,\sqcup^{\sharp}\,}\sqb{\neg\texttt{B;S}_2}^{\sharp})}{(\ref{eq:def:abstract:sem:if}) and (\ref{eq:def:Abstract-Semantic-Domain-Semantics})}\\
=
\formula{(P\mathbin{\fatsemi^{\sharp}}\sqb{\texttt{B;S}_1}^{\sharp})\mathbin{\,\sqcup^{\sharp}\,}(P\mathbin{\fatsemi^{\sharp}}\sqb{\neg\texttt{B;S}_2}^{\sharp})}\\[-0.5ex]\rightexplanation{binary (hence finite) join preservation is definition \ref{def:abstract:domain:well:def:join:additive}, lemma \ref{lem:GC:post-S}, and (\ref{eq:def:Abstract-Semantic-Domain-Semantics})}\\
=
\formulaexplanation{\textsf{post}^\sharp\sqb{\texttt{B;S}_1}^{\sharp}P\mathbin{\,\sqcup^{\sharp}\,}\textsf{post}^\sharp\sqb{\neg\texttt{B;S}_2}^{\sharp}P}{def.\ (\ref{eq:def:abstract:transformer:post}) of $\textsf{post}^\sharp$}\\[1ex]

\hyphen{5}\ \ For $\textsf{post}^\sharp\sqb{\texttt{while (B) S}}^\sharp P$, we proceed by cases.\\[1ex]

\hyphen{6} \formula{\textsf{post}^\sharp\sqb{\texttt{while (B) S}}_{e}^{\sharp} P}\\
=
\formulaexplanation{\textsf{post}^\sharp(\Lfp{\sqsubseteq_{+}^{\sharp}}{\Cev{F}_{e}^{\sharp}}\mathbin{{\fatsemi}^{\sharp}}(\sqb{\neg\texttt{B}}_{e}^{\sharp}\mathbin{\sqcup_{e}^{\sharp}}\sqb{\texttt{B;S}}_{b}^{\sharp})P)}{(\ref{eq:sem:abstract:finite})}\\
=
\formulaexplanation{\textsf{post}^\sharp(\sqb{\neg\texttt{B}}_{e}^{\sharp}\mathbin{\sqcup_{e}^{\sharp}}\sqb{\texttt{B;S}}_{b}^{\sharp})(\textsf{post}^\sharp(\Lfp{\sqsubseteq_{+}^{\sharp}}{\Cev{F}_{e}^{\sharp}}\mathbin{{\fatsemi}^{\sharp}})P)}
{(\ref{eq:post:abstract:seq})}\\
=
\numberedformulaexplanation{\textsf{post}^\sharp(\sqb{\neg\texttt{B}}_{e}^{\sharp}\mathbin{\sqcup_{e}^{\sharp}}\sqb{\texttt{B;S}}_{b}^{\sharp})(\Lfp{{\sqsubseteq}_{+}^{\sharp}}({\vec{F}_{pe}^{\sharp}}(P)))}{lemma \ref{lem:fixpoint:post:+}\label{val:post:+:while}}\\[1ex]

\hyphen{6}\ \ Similarly, for case (\ref{eq:sem:abstract:body-infinite}), we get\\
\formula{\textsf{post}^\sharp\sqb{\texttt{while (B) S}}_{bi}^{\sharp}\,P}\\
=
\formula{\textsf{post}^\sharp(\sqb{\texttt{B;S}}_{\bot}^{\sharp})(\Lfp{{\sqsubseteq}_{+}^{\sharp}}({\vec{F}_{pe}^{\sharp}}(P)))}\\[1ex]

\hyphen{6} \formula{\textsf{post}^\sharp\sqb{\texttt{while (B) S}}_{b}^{\sharp}P}\\
=
\formulaexplanation{P\mathbin{\fatsemi^{\sharp}}\sqb{\texttt{while (B) S}}_{b}^{\sharp}}{def.\ (\ref{eq:def:abstract:transformer:post}) of $\textsf{post}^\sharp$}\\
=
\formulaexplanation{P\mathbin{\fatsemi^{\sharp}}\bot_{+}^{\sharp}}{(\ref{eq:sem:abstract:break})}\\
=
\formulaexplanation{\bot_{+}^{\sharp}}{$\bot_{+}^{\sharp}$ absorbent for $\mathbin{\fatsemi^{\sharp}}$ in definition \ref{def:abstract:domain:well:def:init:neutral}}\\[1ex]

\hyphen{6} 
\formula{\textsf{post}^\sharp\sqb{\texttt{while (B) S}}_{li}^{\sharp}}\\
=
\formulaexplanation{\textsf{post}^\sharp(\Gfp{\sqsubseteq_{\infty}^{\sharp}}{F_{\bot}^{\sharp}})}{(\ref{eq:sem:abstract:loop-infinite})}\\
=
\numberedformulaexplanation{\textsf{\textup{post}}^\sharp(\Gfp{{\sqsubseteq}_{\infty}^{\sharp}}{F_{p\bot}^{\sharp}})}{lemma \ref{lem:fixpoint:post:bot}}\label{val:post:oo:while}\\[1ex]

\hyphen{6} \formula{\textsf{post}^\sharp(\sqb{\texttt{while (B) S}}_{\bot}^{\sharp})}\\
=
\formulaexplanation{\textsf{post}^\sharp(\sqb{\texttt{while (B) S}}_{bi}^{\sharp}\mathbin{\sqcup_{\infty}^{\sharp}}\sqb{\texttt{while (B) S}}_{li}^{\sharp})}{(\ref{eq:sem:abstract:loop-nontermination})}\\
=
\formulaexplanation{\textsf{post}^\sharp(\sqb{\texttt{while (B) S}}_{bi}^{\sharp})\mathbin{\dot{\sqcup}_{\infty}^{\sharp}}\textsf{post}^\sharp(\sqb{\texttt{while (B) S}}_{li}^{\sharp})}{binary (hence finite) join preservation and (\ref{lem:GC:post-S})}\\
=
\formulaexplanation{\LAMBDA{P}\textsf{post}^\sharp(\sqb{\texttt{while (B) S}}_{bi}^{\sharp})P\mathbin{{\sqcup}_{\infty}^{\sharp}}\textsf{post}^\sharp(\sqb{\texttt{while (B) S}}_{li}^{\sharp})P}{pointwise def.\ $\mathbin{\dot{\sqcap}_{\infty}}$}\\
=
\formulaexplanation{\LAMBDA{P}{\textsf{post}^\sharp(\sqb{\texttt{B;S}}_{\bot}^{\sharp})(\Lfp{{\sqsubseteq}_{+}^{\sharp}}{\vec{F}_{pe}^{\sharp}}(P))}
\mathbin{{\sqcup}_{\infty}^{\sharp}}
\textsf{\textup{post}}^\sharp(\Gfp{{\sqsubseteq}_{\infty}^{\sharp}}{F_{p\bot}^{\sharp}})P}{as proved in (\ref{val:post:+:while}) and (\ref{val:post:oo:while})}\\[1ex]

\hyphen{6}\ \ Grouping all cases together, we get\\
\formula{\textsf{post}^\sharp\sqb{\texttt{while (B) S}}^{\sharp} P}\\
=
\formulaexplanation{P\mathbin{\fatsemi^{\sharp}}\sqb{\texttt{while (B) S}}^{\sharp}}{def.\ (\ref{eq:def:abstract:transformer:post}) of $\textsf{post}^\sharp$}\\
=
\formulaexplanation{P\mathbin{\fatsemi^{\sharp}}\triple{e:\sqb{\texttt{while (B) S}}_{e}^{\sharp}}{\bot:\sqb{\texttt{while (B) S}}_{\bot}^{\sharp}}{br:\sqb{\texttt{while (B) S}}_{b}^{\sharp}}}{(\ref{eq:def:Abstract-Semantic-Domain-Semantics})}\\
=
\formulaexplanation{\triple{e:P_{ok}^{+}\mathbin{\fatsemi^{\sharp}}\sqb{\texttt{while (B) S}}_{e}^{\sharp}}{\bot:P_{ok}^{\infty}\mathbin{{\sqcup}^{\sharp}_{\infty}}P_{ok}^{+}\mathbin{\fatsemi^{\sharp}}\sqb{\texttt{while (B) S}}_{\bot}^{\sharp}}{br:P_{br}\mathbin{{\sqcup}^{\sharp}_{+}}P_{ok}^{+}\mathbin{\fatsemi^{\sharp}}\sqb{\texttt{while (B) S}}_{b}^{\sharp}}}{def.\ (\ref{eq:def:abstract:sem:group:seq}) of $\mathbin{\fatsemi^{\sharp}}$}\\
=
\formula{\triple{e:\textsf{post}^\sharp\sqb{\texttt{while (B) S}}_{e}^{\sharp}P}{\bot:\textsf{post}^\sharp\sqb{\texttt{while (B) S}}_{\bot}^{\sharp}P}{br:P_{br}}}\\[-0.5ex]\rightexplanation{def.\ (\ref{eq:def:abstract:transformer:post}) of $\textsf{post}^\sharp$, $\sqb{\texttt{while (B) S}}_{b}^{\sharp}\triangleq\bot_{+}^{\sharp}$ by (\ref{eq:sem:abstract:break}), $P_{ok}^{+}\mathbin{\fatsemi^{\sharp}}\bot_{+}^{\sharp}=\bot_{+}^{\sharp}$ by \ref{def:abstract:domain:well:def:bot:absorbent}, and $\bot_{+}^{\sharp}$ infimum by \ref{def:abstract:domain:well:def:finite:domain}}\\
=
\lastformulaexplanation{\triple{e:{\textsf{post}^\sharp(\sqb{\neg\texttt{B}}_{e}^{\sharp}\mathbin{\sqcup_{e}^{\sharp}}\sqb{\texttt{B;S}}_{b}^{\sharp})(\Lfp{{\sqsubseteq}_{+}^{\sharp}}{\vec{F}_{pe}^{\sharp}}(P))}}{\bot:{\textsf{post}^\sharp(\sqb{\texttt{B;S}}_{\bot}^{\sharp})(\Lfp{{\sqsubseteq}_{+}^{\sharp}}{\vec{F}_{pe}^{\sharp}}(P))}\mathbin{{\sqcap}_{\infty}^{\sharp}}{\textsf{\textup{post}}^\sharp(\Gfp{{\sqsubseteq}_{\infty}^{\sharp}}{F_{p\bot}^{\sharp}})P}}{br:P_{br}}}{as previously proved for each case, proving (\ref{eq:post:abstract:while}).}{\mbox{\qed}}
\end{calculus}
\let\qed\relax
\end{proof}
\end{toappendix}
\begin{remark}\label{rem:psot:instance:abstract:semantics}By defining the appropriate primitives, the \textsf{post} program execution calculus (\ref{eq:post:abstract:assignment}) --- (\ref{eq:post:abstract:while}) of theorem \ref{th:Program:execution:properties:calculus} is an instance of the generic abstract semantics (\ref{eq:def:Abstract-Semantic-Domain-Semantics}).
\end{remark}
\begin{example}[Finitary powerset deterministic calculational domain]\label{ex:powerset-deterministic-domain-post}In \cite{DBLP:conf/popl/AssafNSTT17}, the while language is deterministic and has no \texttt{break}s so the random assignment and \texttt{break}s
 have to be eliminated in (\ref{eq:def:sem:abstract:basis}). The denotational semantics is $\sqb{\texttt{S}}\in (\Sigma\times\Sigma)_{\bot}\functionto (\Sigma\times\Sigma)_{\bot}$ where $(\Sigma\times\Sigma)_{\bot}$ is the domain of relations between states extended by $\bot$ to denote nontermination with Scott flat ordering $\sqsubseteq$. 

Anticipating on the abstractions of \hyperlink{PARTII}{part II}, this is an abstraction \cite[sect.\@ 8.2]{DBLP:journals/tcs/Cousot02} of the trace semantics of sect.\@ \ref{sect:Trace-Semantics}. Then a semantic abstraction \ref{def:exact:abstraction} gets rid of nontermination 
\cite[sect.\@ 8.1.6]{DBLP:journals/tcs/Cousot02} and another one  \cite[sect.\@ 9.1]{DBLP:journals/tcs/Cousot02} abstracts relations to transformers to yield the collecting semantics \cite[p.\ 876]{DBLP:conf/popl/AssafNSTT17}. 

Skipping these abstractions of the trace semantics, we can directly instantiate the generic abstract semantics of sect.\@ \ref{sect:Algebraic-Semantics} to a finitary relational semantics such as $\sqb{S}^e$ in \cite{DBLP:journals/pacmpl/Cousot24}. Then $\textsf{post}^\sharp$
in (\ref{eq:def:abstract:transformer:post}) becomes $\textsf{post}^\sharp(S)P$ $=$ $\{\pair{\sigma}{\sigma''}\mid\exists\sigma'\in\Sigma\mathrel{.}\pair{\sigma}{\sigma'}\in P\wedge\pair{\sigma'}{\sigma''}\in S\}$,  
which is a specification of the collecting semantics postulated in \cite[p.\ 876]{DBLP:conf/popl/AssafNSTT17}.
$\textsf{post}^\sharp(S)$ preserves arbitrary unions so, in absence of \texttt{break}s and ignoring nontermination, together with $\sqb{\texttt{B}}_{e}^{\sharp}\comp\sqb{\texttt{B}}_{e}^{\sharp}=\sqb{\texttt{B}}_{e}^{\sharp}$, $\sqb{\texttt{B}}_{e}^{\sharp}\comp\sqb{\neg\texttt{B}}_{e}^{\sharp}=\emptyset$, and$\sqb{\texttt{skip}}_{e}^{\sharp}={\textsf{init}^{\sharp}}$ by \ref{def:abstract:domain:well:def:init:neutral}, (\ref{eq:post:abstract:while}) in theorem \ref{th:Program:execution:properties:calculus} simplifies to
\begin{eqntabular*}{rcl}
\textsf{\textup{post}}^\sharp\sqb{\texttt{while (B) S}}^{\sharp} P
&=&\textsf{\textup{post}}^\sharp(\sqb{\neg\texttt{B}}_{e}^{\sharp})(\Lfp{\subseteq}\LAMBDA{X}P \cup \textsf{\textup{post}}^\sharp(\sqb{\texttt{\texttt{if (B) S else skip}}}_{e}^{\sharp})(X)
\end{eqntabular*}
which is precisely the data-independent abstraction of the collecting semantics of \cite[p.\ 876]{DBLP:conf/popl/AssafNSTT17}.
\end{example}
%
\subsection{Algebraic Logics of Program Execution Properties}\label{sec:Algebraic-Logics-Program-Execution-Properties}
By defining $\overline{\{}\,P\,\overline{\}}\,\texttt{S}\,\overline{\{}\,Q\,\overline{\}}\triangleq(\pair{P}{Q}\in{\maccent{\alpha}{\filledtriangleup}}(\sqb{\texttt{S}}^\sharp))$ with   ${\maccent{\alpha}{\filledtriangleup}}(S)$ $\triangleq$ $\{\pair{P}{Q}\mid\textsf{post}^\sharp(S) P\mathrel{{\sqsubseteq^{\sharp}}} Q\}$ and dually 
$\underline{\{}\,P\,\underline{\}}\,\texttt{S}\,\underline{\{}\,Q\,\underline{\}}\triangleq(\pair{P}{Q}\in\maccent{\alpha}{\filledtriangledown}(\sqb{\texttt{S}}^\sharp ))$ with
$\maccent{\alpha}{\filledtriangledown}(S)$ $\triangleq$ $\{\pair{P}{Q}\mid Q \mathrel{{\sqsubseteq^{\sharp}}} \textsf{post}^\sharp(S) P\}$, we respectively get the abstract version \cite[chapter 26]{Cousot-PAI-2021} of Hoare logic \cite{DBLP:journals/cacm/Hoare69} and that of reverse/incorrectness logic \cite{DBLP:conf/sefm/VriesK11,DBLP:journals/pacmpl/OHearn20} (extended to loops breaks and nontermination \cite{DBLP:journals/acta/MannaP74,DBLP:journals/pacmpl/Cousot24}). This is now classic and will be used but not be further detailled.

\section{A Calculus of Algebraic Program Semantic (Hyper) Properties}\label{sec:Calculus:Hyper:Properties}
We now study proof methods for semantic properties, that is properties of the semantics, that we define in extension. This is called hyperproperties when the semantics is a set of traces \cite{DBLP:journals/jcs/ClarksonS10,DBLP:conf/post/ClarksonFKMRS14}, and by extension, for their abstractions, in particular to relational semantics.

\subsection{Algebraic Semantic (Hyper) Properties}\label{sect:HyperProperties}

Defined in extension, program semantic properties are in $\wp({\mathbb{L}^{\sharp}})$. 
\begin{example}[Algebraic noninterference]\label{ex:Algebraic-noninterference} Noninterference \cite{DBLP:conf/mfcs/GoguenM77}, can be generalized
to semantic (hyper) properties of algebraic semantics, as follows.  The precondition $R_i\in\wp({\mathbb{L}^{\sharp}_{+}}\times{\mathbb{L}^{\sharp}_{+}})$ is a relation between prelude executions extended to $\mathbb{L}^{\sharp}$ by (\ref{eq:unique-nonempty-component-shorthand}). The postcondition $R_f\in\wp({\mathbb{L}^{\sharp}}\times{\mathbb{L}^{\sharp}})$ is a relation between terminated or infinite executions. Then algebraic noninterference is
$\textsf{ANI}\triangleq\{\mathcal{P}\in\wp(\mathbb{L}^{\sharp})\mid\forall S_1,S_2\in \mathcal{P}\mathrel{.}\forall P_1,P_2\in {\mathbb{L}^{\sharp}_{+}}\mathrel{.}\pair{P_1}{P_2}\in R_i\implies
\pair{\textsf{post}^\sharp(S_1)P_1}{\textsf{post}^\sharp(S_2)P_2}\in R_f\}$. An instance is algebraic abstract noninterference
$\textsf{AANI}\triangleq\{\mathcal{P}\in\wp(\mathbb{L}^{\sharp})\mid\forall S_1,S_2\in \mathcal{P}\mathrel{.}\forall P_1,P_2\in {\mathbb{L}^{\sharp}_{+}}\mathrel{.}\alpha_1({P_1})=\alpha_1({P_2})\implies
\alpha_2({\textsf{post}^\sharp(S_1)P_1})=\alpha_2({\textsf{post}^\sharp(S_2)P_2})\}$ for abstractions
$\alpha_1\in\mathbb{L}^{\sharp}\rightarrow A_1$ and $\alpha_2\in\mathbb{L}^{\sharp}\rightarrow A_2$ with special case
$\alpha_1=\alpha_2$ to characterize abstract domain completeness in abstract interpretation \cite{DBLP:journals/tissec/GiacobazziM18,DBLP:conf/sas/MastroeniP23,DBLP:journals/toplas/GiacobazziMP24}.
After \cite{DBLP:journals/jcs/ClarksonS10}, the generalized algebraic noninterference is
$\textsf{GANI}\triangleq\{\mathcal{P}\in\wp(\mathbb{L}^{\sharp})\mid\forall S_1,S_2\in \mathcal{P}\mathrel{.}\exists\bar{S}\in\mathcal{P}\mathrel{.}\forall P_1,P_2\in {\mathbb{L}^{\sharp}_{+}}\mathrel{.}\forall\bar{P}\in\bar{S}\mathrel{.}\pair{\bar{P}}{P_1}\in R_i\implies
\pair{\textsf{post}^\sharp(S_1)\bar{P}}{\textsf{post}^\sharp(S_2)P_2}\in R_f\}$.
\end{example}
\subsection{The Algebraic Program Semantic (Hyper) Properties Transformer}\label{sect:Transformer-HyperProperties}

When considering semantic properties in extension, the traditional view of transformers is that they now belong to  $\wp({\mathbb{L}^{\sharp}})\functionto\wp({\mathbb{L}^{\sharp}})$ with
\begin{eqntabular}{rcl}
\textsf{Post}^\sharp&\in&{\mathbb{L}^{\sharp}}\functionto\wp({\mathbb{L}^{\sharp}})\increasingfunctionto\wp({\mathbb{L}^{\sharp}})\nonumber\\
\textsf{Post}^\sharp(S)\mathcal{P}&\triangleq&\{\textsf{post}^\sharp(S)P\mid P\in\mathcal{P}\}
\label{eq:def:Post}
\end{eqntabular}
\cite{DBLP:conf/popl/AssafNSTT17,DBLP:conf/sas/MastroeniP18,DBLP:journals/afp/Dardinier23a,DBLP:conf/pldi/DardinierM24} are all instances of this definition.  The advantage is that logical implication is the traditional $\subseteq$.
But the classic  structural definition (see sect.\@ \ref{sec:Structural-Definitions}) of the transformer $\textsf{Post}^\sharp$ fails (unless restrictions are placed on the considered hyperproperties). For the conditional
\begin{calculus}[=\ \ ]
\formula{\textsf{Post}^\sharp\sqb{\texttt{if (B) S$_1$ else S$_2$}}^{\sharp}\mathcal{P}}\\
=
\unnumberedformulaexplanation{\{\textsf{post}^\sharp\sqb{\texttt{if (B) S$_1$ else S$_2$}}^{\sharp}P\mid P\in\mathcal{P}\}}{def.\ (\ref{eq:def:Post}) of $\textsf{Post}^\sharp(S)$}\\
=
\numberedformulaexplanation{\{\textsf{post}^\sharp\sqb{\texttt{B;S}_1}^{\sharp}P\mathbin{\,\sqcup^{\sharp}\,}\textsf{post}^\sharp\sqb{\neg\texttt{B;S}_2}^{\sharp}P\mid  P\in\mathcal{P}\}}{(\ref{eq:post:abstract:if})}\label{cal:Post:if:exact}\\
$\subseteq$
\numberedformulaexplanation{\{\textsf{post}^\sharp\sqb{\texttt{B;S}_1}^{\sharp}P_1\mathbin{\,\sqcup^{\sharp}\,}\textsf{post}^\sharp\sqb{\neg\texttt{B;S}_2}^{\sharp}P_2\mid P_1\in\mathcal{P}\wedge P_2\in\mathcal{P}\}}{def.\ $\subseteq$}\label{cal:Post:if:approximate}\\
=
\unnumberedformulaexplanation{\{Q_1\mathbin{\sqcup^{\sharp}}Q_2\mid Q_1{\mskip2mu\in\mskip2mu}\{\textsf{post}^\sharp\sqb{\texttt{B;S}_1}^{\sharp}P_1\mid P_1{\mskip2mu\in\mskip2mu}\mathcal{P}\}\wedge Q_2{\mskip2mu\in\mskip2mu}\{\textsf{post}^\sharp\sqb{\neg\texttt{B;S}_2}^{\sharp}P_2 \mid P_2{\mskip2mu\in\mskip2mu}\mathcal{P}\}\}}{def.\ $\in$}\\
=
\unnumberedformulaexplanation{\{Q_1\mathbin{\sqcup^{\sharp}}Q_2\mid Q_1{\mskip2mu\in\mskip2mu}\textsf{Post}^\sharp\sqb{\texttt{B;S}_1}^{\sharp}\mathcal{P}
\wedge 
Q_2{\mskip2mu\in\mskip2mu}\textsf{Post}^\sharp\sqb{\neg\texttt{B;S}_2}^{\sharp}\mathcal{P}\}}{def.\ (\ref{eq:def:Post}) of $\textsf{Post}^\sharp(S)$}
\end{calculus}

\medskip

\noindent The problem is that in (\ref{cal:Post:if:exact}) the two possible executions of the conditional are tight together, whereas, by necessity of traditional independent structural induction on both branches of the conditional, this link is lost in (\ref{cal:Post:if:approximate}). So the hypercollecting semantics of \cite[p.\ 877]{DBLP:conf/popl/AssafNSTT17} is incomplete (the inclusion (\ref{cal:Post:if:approximate}) may be strict).
\ifshort\else

\fi
A solution to preserve structurality is to observe that
\begin{eqntabular}{rcl}
\{\textsf{post}^\sharp(S)P\} &=& \textsf{Post}^\sharp(S)\{P\}
\label{eq:Post::post}
\end{eqntabular}
so that the calculation goes on at (\ref{cal:Post:if:exact})
\begin{calculus}[=\ \ ]
=
\formulaexplanation{\{Q_1\mathbin{\,\sqcup^{\sharp}\,}Q_2\mid  
Q_1\in\{\textsf{post}^\sharp\sqb{\texttt{B;S}_1}^{\sharp}P\}
\wedge
Q_2\in\{\textsf{post}^\sharp\sqb{\neg\texttt{B;S}_2}^{\sharp}P\}
\wedge
P\in\mathcal{P}\}}{def.\ singleton and $\in$}\\
=
\formulaexplanation{\{Q_1\mathbin{\,\sqcup^{\sharp}\,}Q_2\mid  
Q_1\in\textsf{Post}^\sharp\sqb{\texttt{B;S}_1}^{\sharp}\{P\}
\wedge
Q_2\in\textsf{Post}^\sharp\sqb{\neg\texttt{B;S}_2}^{\sharp}\{P\} 
\wedge
P\in\mathcal{P}\}}{def.\ (\ref{eq:def:Post}) of $\textsf{Post}^\sharp(S)$}
\end{calculus}
\noindent\uLstrut so that $\textsf{Post}^\sharp\sqb{\texttt{if (B) S$_1$ else S$_2$}}^{\sharp}$ is exactly defined structurally as a function of the components $\textsf{Post}^\sharp\sqb{\texttt{B;S}_1}^{\sharp}$ and $\textsf{Post}^\sharp\sqb{\neg\texttt{B;S}_2}^{\sharp}$.

Of course, this element wise reasoning may be considered inelegant. Its necessity becomes more clear when considering the trace semantics of sect.\@ \ref{sect:Trace-Semantics}. When reasoning on paths e.g.\ in an iteration statement, the same paths must be considered consistently at each iteration. This requirement may be lifted after abstraction, for example with invariants which forget about computation history.
\ifshort\else

\fi
For backward reasonings, we define \textsf{Pre} such that for all $S\in {\mathbb{L}^{\sharp}}$, we have\proofinapx\par
\noindent\begin{minipage}{0.46\textwidth}
\bgroup\abovedisplayskip3pt\belowdisplayskip0pt\begin{eqntabular}{rcl}
\textsf{Pre}(S)\mathcal{Q}&\triangleq&\{P\mid \textsf{post}^\sharp(S)P\in\mathcal{Q}\}
\label{eq:def:Pre}
\end{eqntabular}\egroup
\end{minipage}%
\hfill
\begin{minipage}{0.4\textwidth}
\bgroup\abovedisplayskip3pt\belowdisplayskip0pt\begin{eqntabular}{c@{\qquad}}
\pair{\wp({\mathbb{L}^{\sharp}})}{\subseteq}\galois{\textsf{Post}^\sharp(S)}{{\textsf{Pre}}(S)}\pair{\wp({\mathbb{L}^{\sharp}})}{\subseteq}\label{eq:GC:abstract:transformers}
\end{eqntabular}\egroup
\end{minipage}\par
\begin{toappendix}
\begin{proof}[Proof of (\ref{eq:GC:abstract:transformers})]
\begin{calculus}[$\Leftrightarrow$\ \ ]
\formula{\textsf{Post}^\sharp(S)\mathcal{P}\subseteq \mathcal{Q}}\\
$\Leftrightarrow$
\formulaexplanation{\{\textsf{post}^\sharp(S)P\mid P\in\mathcal{P}\}\subseteq \mathcal{Q}}{def.\ (\ref{eq:def:Post}) of $\textsf{Post}^\sharp$}\\
$\Leftrightarrow$
\formulaexplanation{\forall P\in\mathcal{P}\mathrel{.}\textsf{post}^\sharp(S)P\in\mathcal{Q}}{def.\ $\subseteq$}\\
$\Leftrightarrow$
\formulaexplanation{\mathcal{P}\subseteq\{P\mid\textsf{post}^\sharp(S)P\in\mathcal{Q}\}}{def.\ $\subseteq$}\\
$\Leftrightarrow$
\lastformulaexplanation{\mathcal{P}\subseteq\textsf{Pre}(S)\mathcal{Q}}{def.\ \ref{eq:GC:abstract:transformers}) of $\textsf{Pre}$}{\mbox{\qed}}
\end{calculus}
\let\qed\relax
\end{proof}
\end{toappendix}
If ${\mathbb{D}^{\sharp}}$ is a well-defined chain-complete lattice with right finite $\bigmsqcup{x}$ preservation composition $\mathbin{{\fatsemi}^{\sharp}}$ then we have ($\bigmsqcup{x}$, $x\in\{{+},{\infty}\}$, stands for ${\sqcup_{+}^{\sharp}}$ in definition \ref{def:abstract:domain:well:def:finite:domain} when $x={+}$ and for ${\sqcup_{\infty}^{\sharp}}$ in  definition \ref{def:abstract:domain:well:def:infinite:domain} when $x={\infty}$)\proofinapx
\bgroup\abovedisplayskip3pt\belowdisplayskip0pt\begin{eqntabular}{L@{\qquad}rcl}
&\textsf{Post}^\sharp(S_1\msqcup{x}S_2)\mathcal{P}&=&(\textsf{Post}^\sharp(S_1)\mathbin{\dot{\msqcup{x}}}\textsf{Post}^\sharp(S_2))
\mathcal{P}\label{eq:Post:abstract:dot-sqcup-sharp}\\
where&(S_1
\mathbin{\dot{\msqcup{x}}}
S_2)\mathcal{P}&\triangleq&\{Q_1\msqcup{x}Q_2\mid  
Q_1\in S_1\{P\}
\wedge
Q_2\in S_2\{P\} 
\wedge
P\in\mathcal{P}\}\nonumber
\end{eqntabular}\egroup
\begin{toappendix}
\begin{proof}[Proof of (\ref{eq:Post:abstract:dot-sqcup-sharp})]
\begin{calculus}[=\ \ ]
\formula{\textsf{Post}^\sharp(S_1\msqcup{x}S_2)\mathcal{P}}\\
=
\formulaexplanation{\{\textsf{post}^\sharp(S_1\msqcup{x}S_2)P\mid P\in\mathcal{P}\}}{(\ref{eq:Post::post})}\\
=
\formulaexplanation{\{P\mathbin{\fatsemi^{\sharp}}(S_1\msqcup{x}S_2)\mid P\in\mathcal{P}\}}{def.\ (\ref{eq:def:abstract:transformer:post}) of $\textsf{post}^\sharp$}\\
=
\formulaexplanation{\{(P\mathbin{\fatsemi^{\sharp}}S_1)\msqcup{x}(P\mathbin{\fatsemi^{\sharp}}S_2)\mid P\in\mathcal{P}\}}{right finite $\bigmsqcup{x}$ preservation in definition \ref{def:abstract:domain:well:def:join:additive}}\\
=
\formulaexplanation{\{\textsf{post}^\sharp(S_1)P\msqcup{x}\textsf{post}^\sharp(S_2)P\mid P\in\mathcal{P}\}}{def.\ (\ref{eq:def:abstract:transformer:post}) of $\textsf{post}^\sharp$}\\
=
\formulaexplanation{\{Q_1\msqcup{x}Q_2\mid Q_1\in \{\textsf{post}^\sharp(S_1)P\}\wedge Q_2\in\{\textsf{post}^\sharp(S_2)P\}\wedge P\in\mathcal{P}\}}{def.\ $\in$ and singleton}\\
=
\formulaexplanation{\{Q_1\msqcup{x}Q_2\mid Q_1\in \textsf{Post}^\sharp(S_1)\{P\}\wedge Q_2\in\textsf{Post}^\sharp(S_2)\{P\}\wedge P\in\mathcal{P}\}}{(\ref{eq:Post::post})}\\
=
\lastformulaexplanation{\textsf{Post}^\sharp(S_1)\mathbin{\dot{\msqcup{x}}}\textsf{Post}^\sharp(S_2)
\mathcal{P}}{def.\ $\mathbin{\dot{\msqcup{x}}}$ in (\ref{eq:Post:abstract:dot-sqcup-sharp})}{\mbox{\qed}}
\end{calculus}
\let\qed\relax
\end{proof}
\end{toappendix}
\begin{remark}\label{rem:Post-non-preservation}Contrary to join preservation lemma \ref{lem:post-S} for \textsf{post}, \textsf{Post} may not preserve existing joins and meets so that, in general, $\bigsqcup\nolimits\limits_{i\in\Delta}{\textsf{Post}^\sharp(S_i)}\neq{\textsf{Post}^\sharp(\bigsqcup\nolimits\limits_{i\in\Delta}S_i)}$ and dually. 
For example, let $\mathcal{P}$ be a semantic property. By (\ref{eq:def:Post}), $\mathop{{\bigsqcup}^{\sharp}_{+}}\limits_{n\in\mathbb{N}}{\textsf{\textup{Post}}^\sharp((\sqb{\texttt{B}\fatsemi\texttt{S}}^{\sharp})^n)\mathcal{P}}$ = $\mathop{{\bigsqcup}^{\sharp}_{+}}\limits_{n\in\mathbb{N}}\{\textsf{post}^\sharp((\sqb{\texttt{B}\fatsemi\texttt{S}}^{\sharp})^n)P\mid P\in\mathcal{P}\}$ is the set of finite executions, for every precondition $P\in\mathcal{P}$, reaching the entry of the iteration \texttt{while(B) S} after exactly $n$ terminating body iterations, for all $n\in\mathbb{N}$. On the contrary ${\textsf{\textup{Post}}^\sharp(\mathop{{\bigsqcup}^{\sharp}_{+}}\limits_{n\in\mathbb{N}}(\sqb{\texttt{B}\fatsemi\texttt{S}}^{\sharp})^n)\mathcal{P}}$ = $\{\textsf{post}^\sharp(\mathop{{\bigsqcup}^{\sharp}_{+}}\limits_{n\in\mathbb{N}}(\sqb{\texttt{B}\fatsemi\texttt{S}}^{\sharp})^n)P\mid P\in\mathcal{P}\}$ =  $\{\mathop{{\bigsqcup}^{\sharp}_{+}}\limits_{n\in\mathbb{N}}\textsf{post}^\sharp((\sqb{\texttt{B}\fatsemi\texttt{S}}^{\sharp})^n)P\mid P\in\mathcal{P}\}$
is the set of finite executions, for every precondition $P\in\mathcal{P}$, reaching the entry of the iteration \texttt{while(B) S} after any number of terminating body iterations.
\end{remark}
\subsection{A Calculus of Algebraic Semantic (Hyper) Properties}\label{sect:Calculus-HyperProperties}
In the calculational design of the $\textsf{Post}^\sharp$, we will need the following trivial proposition.
\begin{proposition}[Singleton fixpoint]\label{prop:prop:singletonization}
There is an obvious isomorphism between a poset $\quadruple{L}{\sqsubseteq}{\bot}{\sqcup}$ and its singletons
$\quadruple{\breve{L}}{\breve{\sqsubseteq}}{\breve{\bot}}{\breve{\sqcup}}$ with ${\breve{L}}\triangleq{\{\{x\}\mid x\in L\}}$, $\{x\}\mathbin{\breve{\sqsubseteq}}\{y\}\triangleq x\sqsubseteq y$, ${\breve{\bot}}\triangleq\{\bot\}$, $\{x\}{\breve{\sqcup}}\{y\}\triangleq \{x \sqcup y\}$, so that, for a increasing chain complete poset we have $\{\Lfp{\sqsubseteq}F\}=\{\bigsqcup_{\delta\in \mathbb{O}}F^{\delta}\}=\breve{\bigsqcup}_{\delta\in \mathbb{O}}\{F^{\delta}\}=\Lfp{\breve{\sqsubseteq}}\breve{F}$ where $\pair{F^{\delta}}{\delta\in\mathbb{O}}$ are the transfinite iterates of $F$ from $\bot$ and $\breve{F}(\{x\})\triangleq\{F(x)\}$. Dually for greatest fixpoints.
\end{proposition}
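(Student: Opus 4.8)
The plan is to verify that the singleton map $\eta\triangleq\LAMBDA{x}\{x\}$ is an order isomorphism between $\quadruple{L}{\sqsubseteq}{\bot}{\sqcup}$ and $\quadruple{\breve{L}}{\breve{\sqsubseteq}}{\breve{\bot}}{\breve{\sqcup}}$ that additionally preserves the infimum and existing joins, and then to transport the constructive fixpoint characterization of proposition \ref{prop:Tarski:constructive} along $\eta$. The equalities in the statement are then read off componentwise from this transport.

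First I would check that $\eta$ is a bijection: it is surjective by the very definition $\breve{L}=\{\{x\}\mid x\in L\}$, and injective since $\{x\}=\{y\}$ forces $x=y$. By the defining clause $\{x\}\mathbin{\breve{\sqsubseteq}}\{y\}\triangleq x\sqsubseteq y$, the relation $\breve{\sqsubseteq}$ is exactly the $\eta$-image of $\sqsubseteq$, hence inherits reflexivity, antisymmetry and transitivity, so $\breve{\sqsubseteq}$ is a partial order and $\eta$ is an order isomorphism. Preservation of the infimum is $\eta(\bot)=\{\bot\}=\breve{\bot}$, and preservation of binary joins is $\eta(x\sqcup y)=\{x\sqcup y\}=\{x\}\mathbin{\breve{\sqcup}}\{y\}$, both immediate from the definitions. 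Since any order isomorphism transports existing lubs, I would extend $\breve{\sqcup}$ to arbitrary families by $\breve{\bigsqcup}_{i}\{x_i\}\triangleq\{\bigsqcup_i x_i\}$ and observe that this is indeed the lub in $\breve{L}$ whenever $\bigsqcup_i x_i$ exists in $L$; consequently $\breve{L}$ is an increasing chain complete poset exactly when $L$ is, and $\eta$ preserves increasing-chain lubs. This also shows $\breve{F}=\eta\comp F\comp\eta^{-1}$, i.e.\ $\breve{F}$ is the conjugate of $F$ under $\eta$, so $\breve{F}$ is increasing on $\breve{L}$ whenever $F$ is increasing on $L$.

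The key step is a transfinite induction showing that the iterates $\pair{\breve{F}^{\delta}}{\delta\in\mathbb{O}}$ of $\breve{F}$ from $\breve{\bot}$ satisfy $\breve{F}^{\delta}=\{F^{\delta}\}$, where $\pair{F^{\delta}}{\delta\in\mathbb{O}}$ are the iterates of $F$ from $\bot$. The base case is $\breve{F}^{0}=\breve{\bot}=\{\bot\}=\{F^{0}\}$; the successor case is $\breve{F}^{\delta+1}=\breve{F}(\breve{F}^{\delta})=\breve{F}(\{F^{\delta}\})=\{F(F^{\delta})\}=\{F^{\delta+1}\}$ by the definitions of $\breve{F}$ and of the iterates; and the limit case is $\breve{F}^{\lambda}=\breve{\bigsqcup}_{\beta<\lambda}\breve{F}^{\beta}=\breve{\bigsqcup}_{\beta<\lambda}\{F^{\beta}\}=\{\bigsqcup_{\beta<\lambda}F^{\beta}\}=\{F^{\lambda}\}$, using the join-preservation of $\eta$ just established.

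Finally I would assemble the displayed chain of equalities. Proposition \ref{prop:Tarski:constructive} applied in $L$ gives $\Lfp{\sqsubseteq}F=\bigsqcup_{\delta\in\mathbb{O}}F^{\delta}$, whence $\{\Lfp{\sqsubseteq}F\}=\{\bigsqcup_{\delta\in\mathbb{O}}F^{\delta}\}=\breve{\bigsqcup}_{\delta\in\mathbb{O}}\{F^{\delta}\}=\breve{\bigsqcup}_{\delta\in\mathbb{O}}\breve{F}^{\delta}$; and proposition \ref{prop:Tarski:constructive} applied in $\breve{L}$ (legitimate because $\breve{L}$ is increasing chain complete and $\breve{F}$ increasing) gives $\breve{\bigsqcup}_{\delta\in\mathbb{O}}\breve{F}^{\delta}=\Lfp{\breve{\sqsubseteq}}\breve{F}$. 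The greatest-fixpoint statement follows by the order dual of the same argument. The only point requiring attention — rather than a genuine obstacle, since the result is elementary — is making the limit step rigorous: one must confirm that the binary $\breve{\sqcup}$ really does extend to the transfinite joins needed at limit ordinals and that these coincide with $\{\bigsqcup\cdots\}$, which is precisely what the lub-transport property of the order isomorphism $\eta$ secures.
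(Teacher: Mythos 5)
Your proof is correct and is exactly the routine transport-along-an-isomorphism argument that the paper deliberately omits, labelling the proposition ``trivial'' and ``obvious''; the verification of the order isomorphism, the transfinite induction $\breve{F}^{\delta}=\{F^{\delta}\}$, and the two applications of proposition~\ref{prop:Tarski:constructive} are all sound. Nothing is missing, and your remark that the limit step rests on lub-transport by the isomorphism is the only point worth making explicit.
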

We derive the sound and complete $\textsf{Post}^\sharp$ calculus by calculational design, as follows.

\smallskip

\begin{theorem}[Program semantic (hyper) property calculus]\label{th:upper-abstract-hyper-properties}\proofinapx\quad
If\/ $\mathbb{D}^{\sharp}$ is a well-defined increasing and decreasing chain-complete join semilattice with right upper continuous sequential composition $\mathbin{{\fatsemi}^{\sharp}}$ then
\arraycolsep0.5\arraycolsep\jot3pt
\begin{eqntabular}[fl]{rcl}
\textsf{\textup{Post}}^\sharp\sqb{\texttt{x = A}}^{\sharp}\mathcal{P}
&=&
\{\triple{e:P_{+}\mathbin{\fatsemi^{\sharp}}{\textsf{\textup{assign}}^{\sharp}\sqb{\texttt{x},\texttt{A}}}}{\bot:P_{\infty}}{br:P_{br}}\mid P\in\mathcal{P}\}
\label{eq:Post:abstract:assignment}\\[0.75ex]
\textsf{\textup{Post}}^\sharp\sqb{\texttt{x = [$a$, $b$]}}^{\sharp}\mathcal{P}
&=&
\{\triple{e:P_{+}\mathbin{\fatsemi^{\sharp}}{\textsf{\textup{rassign}}^{\sharp}\sqb{\texttt{x},a,b}}}{\bot:P_{\infty}}{br:P_{br}}\mid P\in\mathcal{P}\}
\label{eq:Post:abstract:random:assignment}\\[0.75ex]
\textsf{\textup{Post}}^\sharp\sqb{\texttt{skip}}\mathcal{P}
&=&
\{\triple{e:P_{+}\mathbin{\fatsemi^{\sharp}}{\textsf{\textup{skip}}^{\sharp}}}{\bot:P_{\infty}}{br:P_{br}}\mid P\in\mathcal{P}\}
\label{eq:Post:abstract:skip}\\[0.75ex]
\textsf{\textup{Post}}^\sharp\sqb{\texttt{B}}^{\sharp}\mathcal{P}
&=&
\{\triple{e:P_{+}\mathbin{\fatsemi^{\sharp}}{\textsf{\textup{test}}^{\sharp}}\sqb{\texttt{B}}}{\bot:P_{\infty}}{br:P_{br}}\mid P\in\mathcal{P}\}
\label{eq:Post:abstract:B}\\[0.75ex]
\textsf{\textup{Post}}^\sharp\sqb{\texttt{break}}^{\sharp}\mathcal{P}
&=&
\{\triple{e:{\bot_{+}^{\sharp}}}{\bot:P_{\infty}}{br:P_{br}\mathbin{{\sqcup}^{\sharp}_{+}}(P_{e}\mathbin{\fatsemi^{\sharp}}{\textsf{\textup{break}}^{\sharp}})}\mid P\in\mathcal{P}\}
\label{eq:Post:abstract:break}\\[0.75ex]
{\textsf{\textup{Post}}^\sharp\sqb{\texttt{S}_1\texttt{;}\texttt{S}_2}^{\sharp}\mathcal{P}}
&=&
{\textsf{\textup{Post}}^\sharp\sqb{\texttt{S$_2$}}^{\sharp}(\textsf{\textup{Post}}^\sharp\sqb{\texttt{S$_1$}}^{\sharp}\mathcal{P})}
\label{eq:Post:abstract:seq}\\[0.75ex]
{\textsf{\textup{Post}}^\sharp\sqb{\texttt{if(B) S$_1$ else S$_2$}}^{\sharp}\mathcal{P}}
&=&
{(\textsf{\textup{Post}}^\sharp\sqb{\texttt{B;S}_1}^{\sharp}
\mathbin{\,\dot{\sqcup}^{\sharp}\,}
\textsf{\textup{Post}}^\sharp\sqb{\neg\texttt{B;S}_2}^{\sharp})\mathcal{P}}
\label{eq:Post:abstract:if}\\[0.75ex]
{\breve{\vec{F}}_{pe}^{\sharp}}&\triangleq&\LAMBDA{P}\LAMBDA{X}{\textsf{\textup{Post}}^\sharp({\textsf{\textup{init}}^{\sharp}})\{P\} \mathbin{\breve{\sqcup}_{+}^{\sharp}} \textsf{\textup{Post}}^\sharp(\sqb{\texttt{B;S}}_{e}^{\sharp})(X)}
\label{eq:def:F-pe-sharp-breve}
\\
{\breve{F}_{p\bot}^{\sharp}}&\triangleq&
\LAMBDA{X}{\bigcup\{\textsf{\textup{Post}}^\sharp(S)(\sqb{\texttt{B;S}}_{e}^{\sharp})\mid S\in X\}}
\label{eq:def:F-p-bot-sharp-breve}
\\
{\textsf{\textup{Post}}^\sharp\sqb{\texttt{while(B) S}}^{\sharp}\mathcal{P}}
&=&
\{\triple{e:Q_e}{\bot:Q_{\bot\ell}\mathbin{{\sqcup}_{\infty}^{\sharp}}{}
Q_{\bot b}}
{br:P_{br}}\mid {}\label{eq:Post:abstract:while}\\
&&\quad Q_e\in\textsf{\textup{Post}}^\sharp(\sqb{\neg\texttt{B}}_{e}^{\sharp}\mathbin{\sqcup_{e}^{\sharp}}\sqb{\texttt{B;S}}_{b}^{\sharp})(\Lfp{\breve{\sqsubseteq}_{+}^{\sharp}}{\breve{\vec{F}}_{pe}^{\sharp}}(P))\wedge{}\nonumber\\
&&\qquad Q_{\bot\ell}\in \textsf{\textup{Post}}^\sharp(\sqb{\texttt{B;S}}_{\bot}^{\sharp})(\Lfp{\breve{\sqsubseteq}_{+}^{\sharp}}({\breve{\vec{F}}_{pe}^{\sharp}}(P)))\wedge{}\nonumber\\
&&\qquad\quad
\exists Q_{\bot b}\mathrel{.}
Q_{\bot b}\in\textsf{\textup{Post}}^\sharp(Q_{p\bot})\{P\}\wedge Q_{p\bot}\in\Gfp{{\breve{\sqsubseteq}}_{\infty}^{\sharp}}{\breve{F}_{p\bot}^{\sharp}}\wedge
P\in\mathcal{P}\}
\nonumber
\end{eqntabular}
\textup{(}where $S_1
\mathbin{\dot{\msqcup{x}}}
S_2$ is defined in \textup{(\ref{eq:Post:abstract:dot-sqcup-sharp}))} is sound and complete.
\end{theorem}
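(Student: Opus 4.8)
The plan is to prove (\ref{eq:Post:abstract:assignment})--(\ref{eq:Post:abstract:while}) by structural induction on $\texttt{S}$, in each case reducing the equation for $\textsf{Post}^\sharp$ to the already-established equation for $\textsf{post}^\sharp$ from Theorem \ref{th:Program:execution:properties:calculus}. The two workhorses are the defining equation (\ref{eq:def:Post}), $\textsf{Post}^\sharp(S)\mathcal{P}=\{\textsf{post}^\sharp(S)P\mid P\in\mathcal{P}\}$, and the singleton identity (\ref{eq:Post::post}), $\{\textsf{post}^\sharp(S)P\}=\textsf{Post}^\sharp(S)\{P\}$. For the basic statements (\ref{eq:Post:abstract:assignment})--(\ref{eq:Post:abstract:break}) I would apply the matching $\textsf{post}^\sharp$-rule inside the set-comprehension $\{\cdot\mid P\in\mathcal{P}\}$; nothing beyond Theorem \ref{th:Program:execution:properties:calculus} is needed. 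Sequential composition (\ref{eq:Post:abstract:seq}) follows because the direct image commutes with function composition, applying (\ref{eq:post:abstract:seq}) to each $P\in\mathcal{P}$. The conditional (\ref{eq:Post:abstract:if}) is the calculation already displayed before the theorem: starting from $\{\textsf{post}^\sharp\sqb{\texttt{B;S}_1}^{\sharp}P\mathbin{\sqcup^\sharp}\textsf{post}^\sharp\sqb{\neg\texttt{B;S}_2}^{\sharp}P\mid P\in\mathcal{P}\}$ one keeps both branches tied to the common $P$ and re-expresses each summand through (\ref{eq:Post::post}), producing the $\dot{\sqcup}^\sharp$ operator of (\ref{eq:Post:abstract:dot-sqcup-sharp}).

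The iteration case (\ref{eq:Post:abstract:while}) carries all the difficulty, and the obstacle is exactly Remark \ref{rem:Post-non-preservation}: $\textsf{Post}^\sharp$ preserves neither joins nor meets, so it cannot simply be pushed through the least/greatest fixpoints appearing in the $\textsf{post}^\sharp$-characterization of the loop. The remedy is Proposition \ref{prop:prop:singletonization}. I would first check, using (\ref{eq:Post::post}) together with the definitions (\ref{eq:def:F-pe-sharp-breve}) of $\breve{\vec{F}}_{pe}^{\sharp}$ and (\ref{eq:def:F-p-bot-sharp-breve}) of $\breve{F}_{p\bot}^{\sharp}$, that on singletons these are the singletonizations of $\vec{F}_{pe}^{\sharp}(P)$ and $F_{p\bot}^{\sharp}$, i.e. $\breve{\vec{F}}_{pe}^{\sharp}(P)(\{X\})=\{\vec{F}_{pe}^{\sharp}(P)(X)\}$ and $\breve{F}_{p\bot}^{\sharp}(\{X\})=\{F_{p\bot}^{\sharp}(X)\}$. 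Proposition \ref{prop:prop:singletonization} and its dual then give $\Lfp{\breve{\sqsubseteq}_{+}^{\sharp}}(\breve{\vec{F}}_{pe}^{\sharp}(P))=\{\Lfp{\sqsubseteq_{+}^{\sharp}}(\vec{F}_{pe}^{\sharp}(P))\}$ and $\Gfp{\breve{\sqsubseteq}_{\infty}^{\sharp}}\breve{F}_{p\bot}^{\sharp}=\{\Gfp{\sqsubseteq_{\infty}^{\sharp}}F_{p\bot}^{\sharp}\}$. Since both are singletons, applying $\textsf{Post}^\sharp(\cdot)$ and invoking (\ref{eq:Post::post}) once more reproduces precisely the three scalar components of $\textsf{post}^\sharp\sqb{\texttt{while (B) S}}^{\sharp}P$ computed in Theorem \ref{th:Program:execution:properties:calculus} (via lemmas \ref{lem:fixpoint:post:+} and \ref{lem:fixpoint:post:bot}); gathering these over $P\in\mathcal{P}$ and recombining the nonterminating part with $\mathbin{\sqcup_{\infty}^{\sharp}}$ yields the right-hand side of (\ref{eq:Post:abstract:while}).

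Soundness and completeness are then immediate: every step of the induction is an equality rather than an approximation, so the derived rules compute $\textsf{Post}^\sharp$ exactly and hence neither weaken nor strengthen the semantic (hyper) properties they characterize. The hypotheses on $\mathbb{D}^\sharp$ (increasing and decreasing chain-completeness, right upper continuity of $\mathbin{\fatsemi^\sharp}$) are inherited from Theorem \ref{th:Program:execution:properties:calculus} and guarantee that the fixpoints being singletonized exist. I expect the only delicate points to be in the \texttt{while} case: confirming that the lifted transformers (\ref{eq:def:F-pe-sharp-breve}) and (\ref{eq:def:F-p-bot-sharp-breve}) restrict correctly to the singleton poset so that Proposition \ref{prop:prop:singletonization} applies in both fixpoint directions, and bookkeeping the existential witnesses $Q_{p\bot}$ and $Q_{\bot b}$ so that the nested comprehension in (\ref{eq:Post:abstract:while}) collapses to the intended singleton for each $P$.
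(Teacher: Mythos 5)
Your proposal is correct and follows essentially the same route as the paper's proof: structural induction reducing each case to Theorem \ref{th:Program:execution:properties:calculus} via (\ref{eq:def:Post}) and (\ref{eq:Post::post}), with the \texttt{while} case handled by verifying the singleton commutations $\breve{\vec{F}}_{pe}^{\sharp}(P)\{X\}=\{\vec{F}_{pe}^{\sharp}(P)X\}$ and $\breve{F}_{p\bot}^{\sharp}(\{X\})=\{F_{p\bot}^{\sharp}(X)\}$ and then invoking Proposition \ref{prop:prop:singletonization} and its dual to transfer the least and greatest fixpoints to the singleton poset. The delicate points you flag are exactly the ones the paper's proof establishes as its two preliminary results, so nothing is missing.
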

\begin{toappendix}
\begin{proof}[Proof of theorem \ref{th:upper-abstract-hyper-properties}]
\begin{calculus}[=\ \ ]We need two preliminary results.\\
\hyphen{5}\formula{{\breve{\vec{F}}_{pe}^{\sharp}}(P)\{X\}}\\
=
\formulaexplanation{{\textsf{\textup{Post}}^\sharp({\textsf{\textup{init}}^{\sharp}})\{P\} \mathbin{\breve{\sqcup}_{+}^{\sharp}} \textsf{\textup{Post}}^\sharp(\sqb{\texttt{B;S}}_{e}^{\sharp})\{X\}}}{(\ref{eq:def:F-pe-sharp-breve})}\\
=
\formulaexplanation{{\{\textsf{post}^\sharp({\textsf{\textup{init}}^{\sharp}})P'\mid P'\in\{P\}}\}\mathbin{\breve{\sqcup}_{+}^{\sharp}} \{\textsf{\textup{post}}^\sharp(\sqb{\texttt{B;S}}_{e}^{\sharp})X\}}{(\ref{eq:def:Post}) and (\ref{eq:Post::post})}\\
=
\formulaexplanation{\{\textsf{post}^\sharp({\textsf{\textup{init}}^{\sharp}})P\}\mathbin{\breve{\sqcup}_{+}^{\sharp}} \{\textsf{\textup{post}}^\sharp(\sqb{\texttt{B;S}}_{e}^{\sharp})X\}}{def.\ $\in$}\\
=
\formulaexplanation{\{\textsf{post}^\sharp({\textsf{\textup{init}}^{\sharp}})P\mathbin{{\sqcup}_{+}^{\sharp}}\textsf{\textup{post}}^\sharp(\sqb{\texttt{B;S}}_{e}^{\sharp})X\}}{def.\ $\mathbin{\breve{\sqcup}_{+}^{\sharp}}$ in proposition \ref{prop:prop:singletonization}}\\
=
\numberedformulaexplanation{\{{\vec{F}_{pe}^{\sharp}}(P)X\}}{(\ref{eq:def:F-pe-sharp})}
\label{eq:def:F-pe-sharp-breve-commutation}\\[1ex]
\hyphen{5}\formula{{\breve{\vec{F}}_{p\bot}^{\sharp}}(\{X\})}\\
=
\formulaexplanation{\bigcup\{\textsf{\textup{Post}}^\sharp(S)(\sqb{\texttt{B;S}}_{e}^{\sharp})\mid S\in \{X\}\}}{def.\ (\ref{eq:def:F-p-bot-sharp-breve}) of ${\breve{\vec{F}}_{p\bot}^{\sharp}}$}\\
=
\formulaexplanation{\bigcup\{\textsf{\textup{Post}}^\sharp(X)(\sqb{\texttt{B;S}}_{e}^{\sharp})\}}{def.\ $\in$}\\
=
\formulaexplanation{\bigcup\{\{\textsf{\textup{post}}^\sharp(X)(\sqb{\texttt{B;S}}_{e}^{\sharp})\}\}}{(\ref{eq:Post::post})}\\
=
\formulaexplanation{\{\textsf{\textup{post}}^\sharp (X)
(\sqb{\texttt{B;S}}_{e}^{\sharp})\}}{def.\ $\bigcup$}\\
=
\numberedformulaexplanation{\{{\vec{F}_{p\bot}^{\sharp}}(X)\}}{(\ref{eq:def:F-p-bot-sharp})}\label{eq:def:F-p-bot-sharp-breve-commutation}
\end{calculus}

\smallskip

\noindent   The proof is by structural induction on the statement syntax.\par
\begin{calculus}[=\ \ ]
\hyphen{5}\formula{\textsf{Post}^\sharp\sqb{\texttt{x = A}}^{\sharp}\mathcal{P}}\\
=
\formulaexplanation{\{\textsf{post}^\sharp\sqb{\texttt{x = A}}^{\sharp}P\mid P\in\mathcal{P}\}}{def.\ (\ref{eq:def:Post}) of $\textsf{Post}^\sharp$}\\
=
\formulaexplanation{\{P\mathbin{\fatsemi^{\sharp}}\sqb{\texttt{x = A}}^\sharp\mid P\in\mathcal{P}\}}{def.\ (\ref{eq:def:Post}) of $\textsf{post}^\sharp$}\\
=
\formulaexplanation{\{P\mathbin{\fatsemi^{\sharp}}\triple{e:{\textsf{assign}^{\sharp}\sqb{\texttt{x},\texttt{A}}}}{\bot:{\bot_{\infty}^{\sharp}}}{br:P_{br}}\mid P\in\mathcal{P}\}}{(\ref{eq:def:Abstract-Semantic-Domain-Semantics}) and (\ref{eq:def:sem:abstract:basis})}\\
=
\formulaexplanation{\{\triple{e:P_{+}\mathbin{\fatsemi^{\sharp}}{\textsf{assign}^{\sharp}\sqb{\texttt{x},\texttt{A}}}}{\bot:P_{\infty}\mathbin{\fatsemi^{\sharp}}{\bot_{\infty}^{\sharp}}}{br:P_{br}}\mid P\in\mathcal{P}\}}
{def.\ (\ref{eq:def:abstract:sem:group:seq}) of $\mathbin{\fatsemi^{\sharp}}$}\\
=
\formula{\{\triple{e:P_{+}\mathbin{\fatsemi^{\sharp}}{\textsf{assign}^{\sharp}\sqb{\texttt{x},\texttt{A}}}}{\bot:P_{\infty}\mathbin{\fatsemi^{\sharp}}{\bot_{\infty}^{\sharp}}}{br:P_{br}}\mid P\in\mathcal{P}\}}\\[-0.5ex]
\rightexplanation{${\bot_{+}^{\sharp}}$ absorbent by definition \ref{def:abstract:domain:well:def:init:neutral}}\\
=
\formulaexplanation{\{\triple{e:P_{+}\mathbin{\fatsemi^{\sharp}}{\textsf{assign}^{\sharp}\sqb{\texttt{x},\texttt{A}}}}{\bot:P_{\infty}}{br:P_{br}}\mid P\in\mathcal{P}\}}{$P_{\infty}$ absorbent by definition \ref{def:abstract:domain:well:def:oo:absorbent}}\\[1ex]

\hyphen{5}\discussion{The $\textsf{Post}^\sharp$ characterizations (\ref{eq:Post:abstract:random:assignment}) for \texttt{x = [$a$, $b$]}, (\ref{eq:Post:abstract:skip}) for \texttt{x = skip}, and (\ref{eq:Post:abstract:B}) for \texttt{B} are similar.}\\[1em]

\hyphen{5}\formula{\textsf{Post}^\sharp\sqb{\texttt{break}}^{\sharp}\mathcal{P}}\\
=
\formulaexplanation{\{\textsf{post}^\sharp\sqb{\texttt{break}}^{\sharp}P\mid P\in\mathcal{P}\}}{def.\ (\ref{eq:def:Post}) of $\textsf{Post}^\sharp$}\\
=
\formulaexplanation{\{P\mathbin{\fatsemi^{\sharp}}\sqb{\texttt{break}}^\sharp\mid P\in\mathcal{P}\}}{def.\ (\ref{eq:def:Post}) of $\textsf{post}^\sharp$}\\
=
\formulaexplanation{\{P\mathbin{\fatsemi^{\sharp}}
\triple{e:{\bot_{+}^{\sharp}}}{\bot:{\bot_{\infty}^{\sharp}}}{br:{\textsf{break}^{\sharp}}}\mid P\in\mathcal{P}\}}{(\ref{eq:def:Abstract-Semantic-Domain-Semantics}) and (\ref{eq:def:sem:abstract:basis})}\\
=
\formulaexplanation{\{
\triple{e:P_{+}\mathbin{\fatsemi^{\sharp}}{\bot_{+}^{\sharp}}}{\bot:P_{\infty}\mathbin{\fatsemi^{\sharp}}{\bot_{\infty}^{\sharp}}}{br:P_{br}\mathbin{{\sqcup}^{\sharp}_{+}}(P_{e}\mathbin{\fatsemi^{\sharp}}{\textsf{\textup{break}}^{\sharp}})}\mid P\in\mathcal{P}\}}
{def.\ (\ref{eq:def:abstract:sem:group:seq}) of $\mathbin{\fatsemi^{\sharp}}$}\\
=
\formulaexplanation{\{
\triple{e:{\bot_{+}^{\sharp}}}{\bot:P_{\infty}}{br:P_{br}\mathbin{{\sqcup}^{\sharp}_{+}}(P_{e}\mathbin{\fatsemi^{\sharp}}{\textsf{\textup{break}}^{\sharp}})}\mid P\in\mathcal{P}\}}{definitions \ref{def:abstract:domain:well:def:oo:absorbent} and  \ref{def:abstract:domain:well:def:init:neutral}}\\[1ex]

\hyphen{5}\formula{\textsf{Post}^\sharp\sqb{\texttt{S}_1\texttt{;}\texttt{S}_2}^{\sharp}\mathcal{P}}\\
=
\formulaexplanation{\{\textsf{post}^\sharp\sqb{\texttt{S}_1\texttt{;}\texttt{S}_2}^{\sharp}P\mid P\in\mathcal{P}\}}{def.\ (\ref{eq:def:Post}) of $\textsf{Post}^\sharp$}\\
=
\formulaexplanation{\{P\mathbin{\fatsemi^{\sharp}}(\sqb{\texttt{S}_1\texttt{;}\texttt{S}_2}^\sharp)\mid P\in\mathcal{P}\}}{def.\ (\ref{eq:def:Post}) of $\textsf{post}^\sharp$}\\
=
\formulaexplanation{\{P\mathbin{\fatsemi^{\sharp}}(\sqb{\texttt{S}_1}^\sharp\mathbin{\fatsemi^{\sharp}}\sqb{\texttt{S}_2}^\sharp)\mid P\in\mathcal{P}\}}{def.\ (\ref{eq:def:abstract:sem:group:seq}) of $\mathbin{\fatsemi^{\sharp}}$}\\
=
\formulaexplanation{\{(P\mathbin{\fatsemi^{\sharp}}\sqb{\texttt{S}_1}^\sharp)\mathbin{\fatsemi^{\sharp}}\sqb{\texttt{S}_2}^\sharp\mid P\in\mathcal{P}\}}{$\mathbin{\fatsemi^{\sharp}}$ associative by definition \ref{def:abstract:domain:well:def}\ref{def:abstract:domain:well:def:operators}}\\
=
\formulaexplanation{\{\textsf{post}^\sharp\sqb{\texttt{S}_2}^{\sharp}(P\mathbin{\fatsemi^{\sharp}}\sqb{\texttt{S}_1}^\sharp)\mid P\in\mathcal{P}\}}{def.\ (\ref{eq:def:Post}) of $\textsf{post}^\sharp\sqb{\texttt{S}_2}^{\sharp}Q\triangleq Q\mathbin{\fatsemi^{\sharp}}\sqb{\texttt{S}_2}^\sharp$}\\
=
\formulaexplanation{\{\textsf{post}^\sharp\sqb{\texttt{S}_2}^{\sharp}(\textsf{post}^\sharp\sqb{\texttt{S}_1}^{\sharp}P)\mid P\in\mathcal{P}\}}{def.\ (\ref{eq:def:Post}) of $\textsf{post}^\sharp\sqb{\texttt{S}_1}^{\sharp}P\triangleq P\mathbin{\fatsemi^{\sharp}}\sqb{\texttt{S}_1}^\sharp$}\\
=
\formulaexplanation{\{\textsf{post}^\sharp\sqb{\texttt{S}_2}^{\sharp}Q\mid Q\in\{\textsf{post}^\sharp\sqb{\texttt{S}_1}^{\sharp}P\mid P\in\mathcal{P}\}\}}{def.\ $\in$}\\
=
\formulaexplanation{\textsf{Post}^\sharp\sqb{\texttt{S$_2$}}^{\sharp}(\textsf{Post}^\sharp\sqb{\texttt{S$_1$}}^{\sharp}\mathcal{P})}{def.\ (\ref{eq:def:Post}) of $\textsf{Post}^\sharp$}\\[1ex]

\hyphen{5}\formula{\textsf{Post}^\sharp\sqb{\texttt{if (B) S$_1$ else S$_2$}}^{\sharp}\mathcal{P}}\\
=
\formulaexplanation{\{Q_1\mathbin{\,\sqcup^{\sharp}\,}Q_2\mid  
Q_1\in\textsf{Post}^\sharp\sqb{\texttt{B;S}_1}^{\sharp}\{P\}
\wedge
Q_2\in\textsf{Post}^\sharp\sqb{\neg\texttt{B;S}_2}^{\sharp}\{P\} 
\wedge
P\in\mathcal{P}\}}{as shown above}\\
=
\formulaexplanation{(\textsf{Post}^\sharp\sqb{\texttt{B;S}_1}^{\sharp}
\mathbin{\,\dot{\sqcup}^{\sharp}\,}
\textsf{Post}^\sharp\sqb{\neg\texttt{B;S}_2}^{\sharp})\mathcal{P}}{by def.\ (\ref{eq:Post:abstract:dot-sqcup-sharp}) of $\dot{\sqcup}^{\sharp}$}\\[1ex]

\hyphen{5}\formula{\textsf{Post}^\sharp\sqb{\texttt{while (B) S}}^{\sharp}\mathcal{P}}\\
=
\formulaexplanation{\{\textsf{post}^\sharp\sqb{\texttt{while (B) S}}^{\sharp}P\mid P\in\mathcal{P}\}}{def.\ (\ref{eq:def:Post}) of $\textsf{Post}^\sharp$}\\
=
\formulaexplanation{\{ \langle ok:\langle{e:{\textsf{post}^\sharp(\sqb{\neg\texttt{B}}_{e}^{\sharp}\mathbin{\sqcup_{e}^{\sharp}}\sqb{\texttt{B;S}}_{b}^{\sharp})(\Lfp{{\sqsubseteq}_{+}^{\sharp}}({\vec{F}_{pe}^{\sharp}}(P)))}},\,\bot:{\textsf{post}^\sharp(\sqb{\texttt{B;S}}_{\bot}^{\sharp})(\Lfp{{\sqsubseteq}_{+}^{\sharp}}({\vec{F}_{pe}^{\sharp}}(P)))}
\mathbin{{\sqcup}_{\infty}^{\sharp}}{}
\textsf{\textup{post}}^\sharp(\Gfp{{\sqsubseteq}_{\infty}^{\sharp}}{F_{p\bot}^{\sharp}})P\rangle,
{br:P_{br}}\rangle\mid P\in\mathcal{P}\}}{(\ref{eq:post:abstract:while})}\\
=
\formula{\{\triple{e:Q_e}{\bot:Q_{\bot\ell}\mathbin{{\sqcup}_{\infty}^{\sharp}}{}
Q_{\bot b}}
{br:P_{br}}\mid 
Q_e\in\{{\textsf{post}^\sharp(\sqb{\neg\texttt{B}}_{e}^{\sharp}\mathbin{\sqcup_{e}^{\sharp}}\sqb{\texttt{B;S}}_{b}^{\sharp})(\Lfp{{\sqsubseteq}_{+}^{\sharp}}({\vec{F}_{pe}^{\sharp}}(P)))}\}\wedge
Q_{\bot\ell}\in\{\textsf{post}^\sharp(\sqb{\texttt{B;S}}_{\bot}^{\sharp})(\Lfp{{\sqsubseteq}_{+}^{\sharp}}({\vec{F}_{pe}^{\sharp}}(P)))\}\wedge\exists Q_{p\bot}\mathrel{.}
Q_{\bot b}\in\{\textsf{\textup{post}}^\sharp(Q_{p\bot})P\}\wedge Q_{p\bot}\in\{\Gfp{{\sqsubseteq}_{\infty}^{\sharp}}{F_{p\bot}^{\sharp}}\}\wedge
P\in\mathcal{P}\}}\\
\rightexplanation{def.\ singleton and $\in$}\\[1ex]
=
\formula{\{\triple{e:Q_e}{\bot:Q_{\bot\ell}\mathbin{{\sqcup}_{\infty}^{\sharp}}{}
Q_{\bot b}}
{br:P_{br}}\mid 
Q_e\in{\textsf{Post}^\sharp(\sqb{\neg\texttt{B}}_{e}^{\sharp}\mathbin{\sqcup_{e}^{\sharp}}\sqb{\texttt{B;S}}_{b}^{\sharp})\{\Lfp{{\sqsubseteq}_{+}^{\sharp}}({\vec{F}_{pe}^{\sharp}}(P)})\}\wedge
Q_{\bot\ell}\in \textsf{Post}^\sharp(\sqb{\texttt{B;S}}_{\bot}^{\sharp})\{\Lfp{{\sqsubseteq}_{+}^{\sharp}}({\vec{F}_{pe}^{\sharp}}(P))\}\wedge\exists Q_{\bot b}\mathrel{.}
Q_{\bot b}\in\textsf{Post}^\sharp(Q_{p\bot})\{P\}\wedge Q_{p\bot}\in\{\Gfp{{\sqsubseteq}_{\infty}^{\sharp}}{F_{p\bot}^{\sharp}}\}\wedge
P\in\mathcal{P}\}}\\[-0.5ex]\rightexplanation{(\ref{eq:Post::post})}\\[1ex]
=
\formula{\{\triple{e:Q_e}{\bot:Q_{\bot\ell}\mathbin{{\sqcup}_{\infty}^{\sharp}}{}
Q_{\bot b}}
{br:P_{br}}\mid 
Q_e\in\textsf{Post}^\sharp(\sqb{\neg\texttt{B}}_{e}^{\sharp}\mathbin{\sqcup_{e}^{\sharp}}\sqb{\texttt{B;S}}_{b}^{\sharp})(\Lfp{\breve{\sqsubseteq}_{+}^{\sharp}}{\breve{\vec{F}}_{pe}^{\sharp}}(P))\wedge
Q_{\bot\ell}\in \textsf{Post}^\sharp(\sqb{\texttt{B;S}}_{\bot}^{\sharp})(\Lfp{\breve{\sqsubseteq}_{+}^{\sharp}}{\breve{\vec{F}}_{pe}^{\sharp}}(P))\wedge\exists Q_{\bot b}\mathrel{.}
Q_{\bot b}\in\textsf{Post}^\sharp(Q_{p\bot})\{P\}\wedge Q_{p\bot}\in\{\Gfp{{\sqsubseteq}_{\infty}^{\sharp}}{F_{p\bot}^{\sharp}}\}\wedge
P\in\mathcal{P}\}}\\\rightexplanation{since $\{\Lfp{{\sqsubseteq}_{+}^{\sharp}}{\vec{F}_{pe}^{\sharp}}(P)\}
=
\Lfp{\breve{\sqsubseteq}_{+}^{\sharp}}{\breve{\vec{F}}_{pe}^{\sharp}}(P)
$ by (\ref{eq:def:F-pe-sharp-breve-commutation}) and proposition \ref{prop:prop:singletonization}}\\[1ex]
=
\formula{\{\triple{e:Q_e}{\bot:Q_{\bot\ell}\mathbin{{\sqcup}_{\infty}^{\sharp}}{}
Q_{\bot b}}
{br:P_{br}}\mid 
Q_e\in\textsf{Post}^\sharp(\sqb{\neg\texttt{B}}_{e}^{\sharp}\mathbin{\sqcup_{e}^{\sharp}}\sqb{\texttt{B;S}}_{b}^{\sharp})(\Lfp{\breve{\sqsubseteq}_{+}^{\sharp}}{\breve{\vec{F}}_{pe}^{\sharp}}(P))\wedge
Q_{\bot\ell}\in \textsf{Post}^\sharp(\sqb{\texttt{B;S}}_{\bot}^{\sharp})(\Lfp{\breve{\sqsubseteq}_{+}^{\sharp}}{\breve{\vec{F}}_{pe}^{\sharp}}(P))\wedge\exists Q_{\bot b}\mathrel{.}
Q_{\bot b}\in\textsf{Post}^\sharp(Q_{p\bot})\{P\}\wedge Q_{p\bot}\in\Gfp{{\breve{\sqsubseteq}}_{\infty}^{\sharp}}{\breve{F}_{p\bot}^{\sharp}}\wedge
P\in\mathcal{P}\}}\\
\lastrightexplanation{since $\{\Gfp{{\sqsubseteq}_{\infty}^{\sharp}}{F_{p\bot}^{\sharp}}\}=
\Gfp{{\breve{\sqsubseteq}}_{\infty}^{\sharp}}{\breve{F}_{p\bot}^{\sharp}}$ by (\ref{eq:def:F-p-bot-sharp-breve-commutation}) and proposition \ref{prop:prop:singletonization}}{\mbox{\qed}}
\end{calculus}
\let\qed\relax
\end{proof}
\end{toappendix}
\begin{example}[Finitary powerset calculational domain]\label{ex:powerset-deterministic-domain-post-continued}Continuing example \ref{ex:powerset-deterministic-domain-post} ignoring \texttt{break}s and nontermination, the hypercollecting semantics of \cite[p.\ 877]{DBLP:conf/popl/AssafNSTT17} is\par
\begin{eqntabular}{rl}
&{\textsf{\textup{Post}}^\sharp(\sqb{\neg\texttt{B}}_{e}^{\sharp})(\Lfp{\subseteq}\LAMBDA{X}{\mathcal{P} \cup \textsf{\textup{Post}}^\sharp(\sqb{\texttt{\texttt{if (B) S else skip}}}_{e}^{\sharp})(X)})}\label{eq:AssafNSTT17-unsound}\\
=&
{\{\textsf{\textup{Post}}^\sharp(\sqb{\neg\texttt{B}}_{e}^{\sharp})(\textsf{\textup{Post}}^\sharp(\sqb{\texttt{if (B) S else skip}}_{e}^{\sharp})^n\mathcal{P})\mid n\in\mathbb{N}\}}\nonumber\\
=&
{\{\textsf{\textup{Post}}^\sharp(\sqb{\neg\texttt{B}}_{e}^{\sharp})(\textsf{\textup{Post}}^\sharp(\sqb{\texttt{if (B) S else skip}}_{e}^{\sharp})^n\{P\}) \mid n\in\mathbb{N}\wedge P\in\mathcal{P}\}}\nonumber\\
\neq&
{\bigcup\{\textsf{\textup{Post}}^\sharp(\sqb{\neg\texttt{B}}_{e}^{\sharp})(\Lfp{{\subseteq}}{\breve{\vec{F}}_{pe}^{\sharp}}(P))\mid P\in\mathcal{P}\}}\nonumber
\end{eqntabular}
\noindent\ustrut By remark \ref{rem:Post-non-preservation}, this is different from (\ref{eq:Post:abstract:while}) (even when ignoring nontermination and \texttt{break}s) so that \cite[p.\ 877]{DBLP:conf/popl/AssafNSTT17} is incomplete and cannot be used as a hypercollecting semantics for general hyperproperties, as further discussed in sect.\@ \ref{sec:LogicRuleChainLimitOrderIdealAbstractSemanticProperties}. Moreover (\ref{eq:AssafNSTT17-unsound}) is unsound, invalidating  \cite[th.\@ 1]{DBLP:conf/popl/AssafNSTT17}. This will be fixed by the weak hypercollecting semantics defined in (\ref{eq:def:overline:Post:while}).
\end{example}

\section{Abstract Logic of Semantic (Hyper) Properties}\label{sec:Abstract-Logic-Semantic-Properties}
\subsection{Definition of the Upper and Lower Abstract Logics}
The upper (respectively lower) logic $\overline{\textsf{L}}^\sharp$ (resp. $\underline{\textsf{L}}^\sharp$) maps the semantics $S$ of a statement into a pair of a precondition and postcondition that is $\overline{\textsf{L}}^\sharp,\underline{\textsf{L}}^\sharp\in {{\mathbb{L}^{\sharp}}\functionto(\wp({\mathbb{L}^{\sharp}})\times\wp({\mathbb{L}^{\sharp}}))}$ ordered pointwise by $\dot{\subseteq}$ (the larger the precondition, the larger is the postcondition). We have
\begin{eqntabular}{rcl}
\overline{\textsf{L}}^\sharp(S)&\triangleq&\{\pair{\mathcal{P}}{\mathcal{Q}}\mid\textsf{Post}^\sharp(S)\mathcal{P}\subseteq\mathcal{Q}\}
\label{eq:def:upper-logic}
\end{eqntabular}
where $\pair{\mathcal{P}}{\mathcal{Q}}\in\overline{\textsf{L}}^\sharp\sqb{\texttt{S}}^{\sharp}$ is traditionally written $\overline{\llbrace}\,\mathcal{P}\,\overline{\rrbrace}\,\texttt{S}\,\overline{\llbrace}\,\mathcal{Q}\,\overline{\rrbrace}$. The $\subseteq$-dual holds for the lower abstract logic.
As was the case in sect.\@ \ref{sec:Algebraic-Logics-Program-Execution-Properties} for execution properties, this is an abstraction ${\maccent{\alpha}{\filledtriangleup}}(\textsf{P})\triangleq\LAMBDA{S}\{\pair{\mathcal{P}}{\mathcal{Q}}\mid \textsf{P}(S)\mathcal{P}\subseteq\mathcal{Q}\}$
\bgroup\belowdisplayskip0pt\begin{eqntabular}{c}
\pair{{\mathbb{L}^{\sharp}}\functionto\wp({\mathbb{L}^{\sharp}})\increasingfunctionto\wp({\mathbb{L}^{\sharp}})}{\ddot{\subseteq}
}\galoiS{\ustrut{\maccent{\alpha}{\filledtriangleup}}}{\lstrut{\maccent{\gamma}{\filledtriangleup}}}\pair{{\mathbb{L}^{\sharp}}\functionto(\wp({\mathbb{L}^{\sharp}})\times\wp({\mathbb{L}^{\sharp}}))}{\dot{\subseteq}}
\end{eqntabular}\egroup
where $\overline{\textsf{L}}^\sharp(S)={\maccent{\alpha}{\filledtriangleup}}(\textsf{Post}^\sharp)S$.

Defining the upper and lower logic triples
\bgroup\arraycolsep0.65\arraycolsep
\begin{eqntabular}[fl]{rclclcl}
\overline{\llbrace}\,\mathcal{P}\,\overline{\rrbrace}\,\texttt{S}\,\overline{\llbrace}\,\mathcal{Q}\,\overline{\rrbrace}
&\triangleq&
\pair{\mathcal{P}}{\mathcal{Q}}\in\overline{\textsf{L}}^\sharp\sqb{\texttt{S}}^{\sharp}
&=&
\textsf{Post}^\sharp\sqb{\texttt{S}}^\sharp\mathcal{P}\subseteq \mathcal{Q}
&=&
\forall  P\in\mathcal{P}\!\mathrel{.}\textsf{post}^\sharp\sqb{\texttt{S}}^\sharp P\in\mathcal{Q}
\label{eq:def:abstract:logical:triples}\\
\underline{\llbrace}\,\mathcal{P}\,\underline{\rrbrace}\,\texttt{S}\,\underline{\llbrace}\,\mathcal{Q}\,\underline{\rrbrace}
&\triangleq&
\pair{\mathcal{P}}{\mathcal{Q}}\in\underline{\textsf{L}}^\sharp\sqb{\texttt{S}}^{\sharp}
&=&
\mathcal{Q}\subseteq\textsf{Post}^\sharp\sqb{\texttt{S}}^\sharp\mathcal{P}
&=&
\rlap{$\forall Q\in\mathcal{Q}\mathrel{.}\exists P\in\mathcal{P}\!\mathrel{.}\textsf{post}^\sharp\sqb{\texttt{S}}^\sharp P=Q$}\nonumber
\end{eqntabular}\egroup
(where for symmetry, we can write $\overline{\llbrace}\,\mathcal{P}\,\overline{\rrbrace}\,\texttt{S}\,\overline{\llbrace}\,\mathcal{Q}\,\overline{\rrbrace}\triangleq\forall  P\in\mathcal{P}\!\mathrel{.}\exists Q\in\mathcal{Q}\mathrel{.}
\textsf{post}^\sharp(S)P=Q$.)
We get generalizations of Hoare logic \cite{DBLP:journals/cacm/Hoare69} and incorrectness logic \cite{DBLP:conf/sefm/VriesK11,DBLP:journals/pacmpl/OHearn20} from execution to semantic properties. 
\begin{example}[Finitary powerset nondeterministic calculational domain]\label{ex:powerset-nondeterministic-domain-post}In \cite{DBLP:journals/afp/Dardinier23a,DBLP:conf/pldi/DardinierM24}, the relational semantics is identical to that of \cite{DBLP:conf/popl/AssafNSTT17} in example \ref{ex:powerset-deterministic-domain-post} but for a nondeterministic language. Nontermination is abstracted away. The extended semantics \cite[Definition 4]{DBLP:journals/afp/Dardinier23a,DBLP:conf/pldi/DardinierM24} is $\textsf{post}^\sharp(S)P$ $=$ $\{\pair{\sigma}{\sigma''}\mid\exists\sigma'\in\Sigma\mathrel{.}\pair{\sigma}{\sigma'}\in P\wedge\pair{\sigma'}{\sigma''}\in S\}$, the same as in example \ref{ex:powerset-deterministic-domain-post}. Hyper-triples $\overline{\llbrace}\,\mathcal{P}\,\overline{\rrbrace}\,\texttt{S}\,\overline{\llbrace}\,\mathcal{Q}\,\overline{\rrbrace}$ are defined in \cite[Definition 5]{DBLP:journals/afp/Dardinier23a,DBLP:conf/pldi/DardinierM24} to be the powerset instance of (\ref{eq:def:abstract:logical:triples}), the same instance used in example \ref{ex:powerset-deterministic-domain-post}.
\end{example}

The upper and lower abstract logics can always be expressed in terms of singleton (although the equivalent formula is not part of the logic).
\begin{lemma}\label{lem:join:preserving:logics}\abovedisplayskip0pt\proofinapx
\abovedisplayskip-1em\begin{eqntabular}{@{\qquad}rcl}
{\overline{\llbrace}\,\mathcal{P}\,\overline{\rrbrace}\,\texttt{S}\,\overline{\llbrace}\,\mathcal{Q}\,\overline{\rrbrace}}
&\Leftrightarrow&
\forall P\in\mathcal{P}\mathrel{.}\exists Q\in\mathcal{Q}\mathrel{.} {\overline{\llbrace}\,\{P\}\,\overline{\rrbrace}\,\texttt{S}\,\overline{\llbrace}\,\{Q\}\,\overline{\rrbrace}}\renumber{\textup{(a)}}\\
{\underline{\llbrace}\,\mathcal{P}\,\underline{\rrbrace}\,\texttt{S}\,\underline{\llbrace}\,\mathcal{Q}\,\underline{\rrbrace}}
&\Leftrightarrow&
\forall Q\in\mathcal{Q}\mathrel{.}\exists  P\in\mathcal{P}\mathrel{.}{\underline{\llbrace}\,\{P\}\,\underline{\rrbrace}\,\texttt{S}\,\underline{\llbrace}\,\{Q\}\,\underline{\rrbrace}}\renumber{\textup{(b)}}
\end{eqntabular}
\end{lemma}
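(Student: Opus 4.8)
The plan is to prove the two equivalences directly from the definitions of the upper and lower logical triples in (\ref{eq:def:abstract:logical:triples}), where each triple unfolds into an explicit first-order statement about the singleton images under $\textsf{post}^\sharp$. The key observation is that $\overline{\llbrace}\,\{P\}\,\overline{\rrbrace}\,\texttt{S}\,\overline{\llbrace}\,\{Q\}\,\overline{\rrbrace}$ reduces, by the last equality in (\ref{eq:def:abstract:logical:triples}), to the single membership condition $\textsf{post}^\sharp\sqb{\texttt{S}}^\sharp P\in\{Q\}$, i.e.\ $\textsf{post}^\sharp\sqb{\texttt{S}}^\sharp P=Q$. So the singleton triples are nothing but the atomic predicates out of which the general triples are built, and the whole statement is essentially a matter of carefully shuffling quantifiers.

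For part (a), I would start from the unfolded form ${\overline{\llbrace}\,\mathcal{P}\,\overline{\rrbrace}\,\texttt{S}\,\overline{\llbrace}\,\mathcal{Q}\,\overline{\rrbrace}}\Leftrightarrow\forall P\in\mathcal{P}\mathrel{.}\exists Q\in\mathcal{Q}\mathrel{.}\textsf{post}^\sharp(S)P=Q$, which is exactly the symmetric rewriting already noted in the parenthetical remark after (\ref{eq:def:abstract:logical:triples}) (itself justified because $\textsf{post}^\sharp(S)P\in\mathcal{Q}$ iff there is some $Q\in\mathcal{Q}$ equal to it). Then I would simply replace the innermost equality $\textsf{post}^\sharp(S)P=Q$ by its equivalent singleton-triple phrasing ${\overline{\llbrace}\,\{P\}\,\overline{\rrbrace}\,\texttt{S}\,\overline{\llbrace}\,\{Q\}\,\overline{\rrbrace}}$, obtaining the right-hand side of (a). I would present this as a short chain of ``$\Leftrightarrow$'' steps, each annotated by the definition (\ref{eq:def:abstract:transformer:post})/(\ref{eq:def:Post}) of $\textsf{post}^\sharp$ and $\textsf{Post}^\sharp$ being used.

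Part (b) is strictly dual and I would derive it the same way, starting from $\mathcal{Q}\subseteq\textsf{Post}^\sharp\sqb{\texttt{S}}^\sharp\mathcal{P}$. Unfolding $\textsf{Post}^\sharp$ via (\ref{eq:def:Post}) gives $\mathcal{Q}\subseteq\{\textsf{post}^\sharp(S)P\mid P\in\mathcal{P}\}$, which by definition of $\subseteq$ reads $\forall Q\in\mathcal{Q}\mathrel{.}\exists P\in\mathcal{P}\mathrel{.}\textsf{post}^\sharp(S)P=Q$; replacing the inner equality by the singleton lower triple ${\underline{\llbrace}\,\{P\}\,\underline{\rrbrace}\,\texttt{S}\,\underline{\llbrace}\,\{Q\}\,\underline{\rrbrace}}$ yields (b). The only genuine content beyond definition-chasing is verifying that the singleton triple really does collapse to the equality $\textsf{post}^\sharp(S)P=Q$: for the upper triple this is $\textsf{post}^\sharp(S)P\in\{Q\}$ and for the lower triple $\{Q\}\subseteq\{\textsf{post}^\sharp(S)P\}$, both of which are equivalent to $\textsf{post}^\sharp(S)P=Q$ since singletons are equal iff their elements are.

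I do not anticipate any real obstacle here; the lemma is flagged as routine and the proof is purely a quantifier-reorganization exercise. The one point requiring a little care is to keep the existential witness $Q$ (respectively $P$) correctly scoped so that the logical equivalence is genuinely an ``iff'' rather than only one direction — in particular, noting that in the upper case the existential over $Q\in\mathcal{Q}$ is harmless precisely because membership in $\mathcal{Q}$ is what $\textsf{Post}^\sharp(S)\mathcal{P}\subseteq\mathcal{Q}$ demands pointwise, so no nonemptiness or choice hypotheses on $\mathcal{P}$ or $\mathcal{Q}$ are needed.
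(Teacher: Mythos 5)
Your proof is correct and follows essentially the same route as the paper's: unfold the triples via (\ref{eq:def:abstract:logical:triples}) and (\ref{eq:def:Post}), rewrite $\textsf{post}^\sharp(S)P\in\mathcal{Q}$ (resp.\ $\mathcal{Q}\subseteq\{\textsf{post}^\sharp(S)P\mid P\in\mathcal{P}\}$) as a quantified equality, and observe that this equality is exactly the singleton triple. Your explicit check that both singleton triples collapse to $\textsf{post}^\sharp(S)P=Q$ matches the paper's chain of equivalences (and its separate lemma that the two singleton triples coincide), so nothing is missing.
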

\begin{toappendix}
\begin{proof}[Proof of lemma \ref{lem:join:preserving:logics}]
\begin{calculus}[=\ \ ]
\hyphen{5}\formula{\overline{\llbrace}\,\mathcal{P}\,\overline{\rrbrace}\,\texttt{S}\,\overline{\llbrace}\,\mathcal{Q}\,\overline{\rrbrace}}\\
=
\formulaexplanation{\textsf{Post}^\sharp\sqb{\texttt{S}}^\sharp\mathcal{P}\subseteq \mathcal{Q}}{def.\ (\ref{eq:def:abstract:logical:triples}) of the logic triples}\\
=
\formulaexplanation{\{\textsf{post}^\sharp(S)P\mid P\in\mathcal{P}\}\subseteq \mathcal{Q}}{def.\ (\ref{eq:def:Post}) of $\textsf{Post}^\sharp$}\\
=
\formulaexplanation{\forall P\in\mathcal{P}\mathrel{.}\textsf{post}^\sharp(S)P\in\mathcal{Q}}{def.\ $\subseteq$}\\
=
\formulaexplanation{\forall P\in\mathcal{P}\mathrel{.}\exists Q\in\mathcal{Q}\mathrel{.} \textsf{post}^\sharp(S)P=Q}{def.\ $\exists$}\\
=
\formulaexplanation{\forall P\in\mathcal{P}\mathrel{.}\exists Q\in\mathcal{Q}\mathrel{.} \{\textsf{post}^\sharp(S)P\}\subseteq \{Q\}}{def.\ $\subseteq$}\\
=
\formulaexplanation{\forall P\in\mathcal{P}\mathrel{.}\exists Q\in\mathcal{Q}\mathrel{.} \{\textsf{post}^\sharp(S)P'\mid P'\in\{P\}\}\subseteq \{Q\}}{def.\ $\in$}\\
=
\formulaexplanation{\forall P\in\mathcal{P}\mathrel{.}\exists Q\in\mathcal{Q}\mathrel{.} \textsf{Post}^\sharp\sqb{\texttt{S}}^\sharp\{P\}\subseteq \{Q\}}{def.\ (\ref{eq:def:Post}) of $\textsf{Post}^\sharp$}\\
=
\formulaexplanation{\forall P\in\mathcal{P}\mathrel{.}\exists Q\in\mathcal{Q}\mathrel{.} {\overline{\llbrace}\,\{P\}\,\overline{\rrbrace}\,\texttt{S}\,\overline{\llbrace}\,\{Q\}\,\overline{\rrbrace}}}{def.\ (\ref{eq:def:abstract:logical:triples}) of the logic triples}\\[1ex]
\hyphen{5}\lastdiscussion{(b) is the $\subseteq$-dual of (a).}{\mbox{\qed}}
\end{calculus}
\let\qed\relax
\end{proof}
\end{toappendix}
\begin{corollary}\label{cor:underapproximation:one:element}\proofinapx\quad
$({\exists P\in\mathcal{P}\mathrel{.}\underline{\llbrace}\,\{P\}\,\underline{\rrbrace}\,\texttt{S}\,\underline{\llbrace}\,\{Q\}\,\underline{\rrbrace}})$
$\Leftrightarrow$
${\underline{\llbrace}\,\mathcal{P}\,\underline{\rrbrace}\,\texttt{S}\,\underline{\llbrace}\,\{Q\}\,\underline{\rrbrace}}$.
\end{corollary}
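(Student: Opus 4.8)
The plan is to derive this immediately from part (b) of Lemma~\ref{lem:join:preserving:logics}, specialized to a singleton postcondition. That lemma gives the general equivalence
\[
{\underline{\llbrace}\,\mathcal{P}\,\underline{\rrbrace}\,\texttt{S}\,\underline{\llbrace}\,\mathcal{Q}\,\underline{\rrbrace}}
\Leftrightarrow
\forall Q\in\mathcal{Q}\mathrel{.}\exists P\in\mathcal{P}\mathrel{.}{\underline{\llbrace}\,\{P\}\,\underline{\rrbrace}\,\texttt{S}\,\underline{\llbrace}\,\{Q\}\,\underline{\rrbrace}},
\]
so the strategy is simply to take $\mathcal{Q}$ to be the singleton $\{Q\}$ and then collapse the outer universal quantifier.

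Concretely, I would first instantiate Lemma~\ref{lem:join:preserving:logics}(b) with $\mathcal{Q}\triangleq\{Q\}$, which rewrites ${\underline{\llbrace}\,\mathcal{P}\,\underline{\rrbrace}\,\texttt{S}\,\underline{\llbrace}\,\{Q\}\,\underline{\rrbrace}}$ as $\forall Q'\in\{Q\}\mathrel{.}\exists P\in\mathcal{P}\mathrel{.}{\underline{\llbrace}\,\{P\}\,\underline{\rrbrace}\,\texttt{S}\,\underline{\llbrace}\,\{Q'\}\,\underline{\rrbrace}}$. Second, I would observe that quantifying a predicate universally over a one-element set $\{Q\}$ is logically equivalent to evaluating that predicate at the single member $Q'=Q$; hence the bound $Q'$ can be replaced by $Q$ and the $\forall Q'\in\{Q\}$ prefix discarded. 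What remains is exactly $\exists P\in\mathcal{P}\mathrel{.}{\underline{\llbrace}\,\{P\}\,\underline{\rrbrace}\,\texttt{S}\,\underline{\llbrace}\,\{Q\}\,\underline{\rrbrace}}$, which is the left-hand side of the corollary, so both directions of the biconditional follow at once.

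There is no real obstacle here: the only content beyond citing the lemma is the elementary observation that a universal quantifier over a singleton is vacuous in the sense of reducing to a single instance, so the proof is a one-line specialization. If one prefers to avoid invoking the lemma, the same result can be obtained directly by unfolding the definition (\ref{eq:def:abstract:logical:triples}) of the lower triple: ${\underline{\llbrace}\,\mathcal{P}\,\underline{\rrbrace}\,\texttt{S}\,\underline{\llbrace}\,\{Q\}\,\underline{\rrbrace}}$ means $\{Q\}\subseteq\textsf{Post}^\sharp\sqb{\texttt{S}}^\sharp\mathcal{P}$, i.e.\ $Q\in\{\textsf{post}^\sharp(S)P\mid P\in\mathcal{P}\}$, which by definition of set membership is precisely $\exists P\in\mathcal{P}\mathrel{.}\textsf{post}^\sharp(S)P=Q$, and this is $\exists P\in\mathcal{P}\mathrel{.}{\underline{\llbrace}\,\{P\}\,\underline{\rrbrace}\,\texttt{S}\,\underline{\llbrace}\,\{Q\}\,\underline{\rrbrace}}$. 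I would present the short version via the lemma as the primary argument, since it keeps the corollary visibly as a direct consequence of the singleton characterization.
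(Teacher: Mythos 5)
Your proposal is correct and follows exactly the paper's own proof: instantiate Lemma~\ref{lem:join:preserving:logics}(b) at $\mathcal{Q}=\{Q\}$ and collapse the universal quantifier over the singleton. The alternative direct unfolding you sketch is also valid but unnecessary, since the lemma-based argument is the one the paper uses.
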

\begin{toappendix}
\begin{proof}[Proof of corollary \ref{cor:underapproximation:one:element}]
\begin{calculus}[$\Leftrightarrow$\ \ ]
\formula{\underline{\llbrace}\,\mathcal{P}\,\underline{\rrbrace}\,\texttt{S}\,\underline{\llbrace}\,\{Q\}\,\underline{\rrbrace}}\\
$\Leftrightarrow$
\formulaexplanation{\forall Q'\in\{Q\}\mathrel{.}\exists  P\in\mathcal{P}\mathrel{.}{\underline{\llbrace}\,\{P\}\,\underline{\rrbrace}\,\texttt{S}\,\underline{\llbrace}\,\{Q'\}\,\underline{\rrbrace}}}{lemma \ref{lem:join:preserving:logics}.b}\\
$\Leftrightarrow$
\lastformulaexplanation{\exists P\in\mathcal{P}\mathrel{.}\underline{\llbrace}\,\{P\}\,\underline{\rrbrace}\,\texttt{S}\,\underline{\llbrace}\,\{Q\}\,\underline{\rrbrace}}{def. $\in$}{\mbox{\qed}}
\end{calculus}
\let\qed\relax
\end{proof}
\end{toappendix}
\noindent For singletons, the two logics are equivalent.
\begin{lemma}\label{singleton:under=over}\proofinapx\quad For all $P,Q\in{\mathbb{L}^{\sharp}}$,
$\overline{\llbrace}\,\{P\}\,\overline{\rrbrace}\,\texttt{S}\,\overline{\llbrace}\,\{Q\}\,\overline{\rrbrace}$
=
$\underline{\llbrace}\,\{P\}\,\underline{\rrbrace}\,\texttt{S}\,\underline{\llbrace}\,\{Q\}\,\underline{\rrbrace}$.
\end{lemma}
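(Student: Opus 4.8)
The plan is to unfold both triples on the singleton arguments $\{P\}$ and $\{Q\}$ and to observe that they collapse to one and the same equality. The single ingredient that does the work is the identity (\ref{eq:Post::post}), namely $\textsf{Post}^\sharp(S)\{P\}=\{\textsf{post}^\sharp(S)P\}$: the transformer $\textsf{Post}^\sharp$ sends the singleton $\{P\}$ to the singleton $\{\textsf{post}^\sharp\sqb{\texttt{S}}^\sharp P\}$. First I would record this fact, specialised to $S=\sqb{\texttt{S}}^\sharp$.

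Next I would expand the two definitions (\ref{eq:def:abstract:logical:triples}). By the definition of the upper triple, $\overline{\llbrace}\,\{P\}\,\overline{\rrbrace}\,\texttt{S}\,\overline{\llbrace}\,\{Q\}\,\overline{\rrbrace}$ holds iff $\textsf{Post}^\sharp\sqb{\texttt{S}}^\sharp\{P\}\subseteq\{Q\}$, which by the identity above is $\{\textsf{post}^\sharp\sqb{\texttt{S}}^\sharp P\}\subseteq\{Q\}$. Dually, by the definition of the lower triple, $\underline{\llbrace}\,\{P\}\,\underline{\rrbrace}\,\texttt{S}\,\underline{\llbrace}\,\{Q\}\,\underline{\rrbrace}$ holds iff $\{Q\}\subseteq\textsf{Post}^\sharp\sqb{\texttt{S}}^\sharp\{P\}$, i.e.\ $\{Q\}\subseteq\{\textsf{post}^\sharp\sqb{\texttt{S}}^\sharp P\}$. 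Since an inclusion between two singletons holds in either direction exactly when their unique elements coincide, both conditions are equivalent to the single equation $\textsf{post}^\sharp\sqb{\texttt{S}}^\sharp P=Q$, and therefore to each other, giving the claimed identity of the two triples.

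There is no genuine obstacle here: the statement is essentially a one-line consequence of (\ref{eq:Post::post}). The only conceptual point worth highlighting in the write-up is \emph{why} this works, namely that $\textsf{Post}^\sharp$ preserves singletons; it is precisely the collapse of $\mathcal{Q}\subseteq\textsf{Post}^\sharp\mathcal{P}$ and $\textsf{Post}^\sharp\mathcal{P}\subseteq\mathcal{Q}$ to equality of single elements that removes the over-/under-approximation asymmetry which, for non-singleton $\mathcal{P},\mathcal{Q}$, genuinely distinguishes the two logics (contrast the general characterisations in lemma~\ref{lem:join:preserving:logics}). I would therefore keep the proof to the two displayed reductions above and the remark that singleton-inclusion is symmetric.
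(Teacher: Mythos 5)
Your proof is correct and follows essentially the same route as the paper's: both unfold the two triples via (\ref{eq:def:Post}) (equivalently (\ref{eq:Post::post})) to reduce each side to an inclusion between the singletons $\{\textsf{post}^\sharp\sqb{\texttt{S}}^\sharp P\}$ and $\{Q\}$, which in either direction is the equality $\textsf{post}^\sharp\sqb{\texttt{S}}^\sharp P=Q$. No gaps.
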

\begin{toappendix}
\begin{proof}[Proof of lemma \ref{singleton:under=over}]
\begin{calculus}[=\ \ ]
\formula{\overline{\llbrace}\,\{P\}\,\overline{\rrbrace}\,\texttt{S}\,\overline{\llbrace}\,\{Q\}\,\overline{\rrbrace}}\\
=
\formulaexplanation{\textsf{Post}^\sharp\sqb{\texttt{S}}^\sharp\{P\}\subseteq \{Q\}}{def.\ (\ref{eq:def:abstract:logical:triples}) of logic triples}\\
=
\formulaexplanation{\{\textsf{post}^\sharp(S)P'\mid P'\in\{P\}\}\subseteq \{Q\}}{def.\ (\ref{eq:def:Post}) of $\textsf{Post}^\sharp$}\\
=
\formulaexplanation{\{\textsf{post}^\sharp(S)P\}\subseteq \{Q\}}{def.\ $\in$}\\
=
\formulaexplanation{\textsf{post}^\sharp(S)P=Q}{def.\ $\subseteq$}\\
=
\formulaexplanation{\{Q\}\subseteq\{\textsf{post}^\sharp(S)P\}} {def.\ $\subseteq$}\\
=
\formulaexplanation{\{Q\}\subseteq\{\textsf{post}^\sharp(S)P'\mid P'\in\{P\}\}} {def.\ $\in$}\\
=
\formulaexplanation{\{Q\}\subseteq\textsf{Post}^\sharp\sqb{\texttt{S}}^\sharp\{P\} }{def.\ (\ref{eq:def:Post}) of $\textsf{Post}^\sharp$}\\
=
\lastformulaexplanation{\underline{\llbrace}\,\{P\}\,\underline{\rrbrace}\,\texttt{S}\,\underline{\llbrace}\,\{Q\}\,\underline{\rrbrace}}{def.\ (\ref{eq:def:abstract:logical:triples}) of logic triples}{\mbox{\qed}}
\end{calculus}
\let \qed\relax
\end{proof}
\end{toappendix}

\subsection{The Proof Systems of the Upper and Lower Abstract Logics}\label{sec:Upper-Lower-Abstract-Logics}
Since the definition (\ref{eq:Post:abstract:assignment})---(\ref{eq:Post:abstract:while}) of $\textsf{Post}^\sharp\sqb{\texttt{S}}^{\sharp}$ by a Hilbert proof system is structural, it is the same for the logics. Following \cite{DBLP:journals/pacmpl/Cousot24}, this is obtained by Aczel correspondance between
set-based fixpoints and proof rules \cite{Aczel:1977:inductive-definitions}. For iteration fixpoint, over-approximation is provided by \cite[th.\@ II.3.4]{DBLP:journals/pacmpl/Cousot24} generalizing Park fixpoint induction \cite{Park69-MI5}, whereas under-approximation can be handled by \cite[th.\@ II.3.6]{DBLP:journals/pacmpl/Cousot24} generalizing Scott's induction or \cite[th.\@ II.3.8]{DBLP:journals/pacmpl/Cousot24} generalizing Turing/Floyd variant functions. 

Therefore the sound and complete Hilbert deductive system can be designed calculationally to be the following (where $\mathcal{P},\mathcal{Q}\in\wp({\mathbb{L}^{\sharp}})$, $\bowtie$ and ${{\llbrace}\,\mathcal{P}\,{\rrbrace}\,\texttt{S}\,{\llbrace}\,\mathcal{Q}\,{\rrbrace}}$ are respectively $\subseteq$ and ${\overline{\llbrace}\,\mathcal{P}\,\overline{\rrbrace}\,\texttt{S}\,\overline{\llbrace}\,\mathcal{Q}\,\overline{\rrbrace}}$ for the Upper Abstract Logic and $\supseteq$ and ${\underline{\llbrace}\,\mathcal{P}\,\underline{\rrbrace}\,\texttt{S}\,\underline{\llbrace}\,\mathcal{Q}\,\underline{\rrbrace}}$ for the Lower Abstract Logic and the calculational design proving theorem \ref{th:upper-abstract-logic-proof-system} follows in sect.\@ \ref{sec:ProofSystemUpperAbstractLogic}).
\begin{theorem}[Upper abstract logic proof system]\label{th:upper-abstract-logic-proof-system}
If\/ $\mathbb{D}^{\sharp}$ is a well-defined increasing and decreasing chain-complete join semilattice with right upper continuous sequential composition $\mathbin{{\fatsemi}^{\sharp}}$ then
\bgroup\belowdisplayskip0pt\jot=7pt
\begin{eqntabular}{c}
\frac{\;\{\triple{e:P_{+}\mathbin{\fatsemi^{\sharp}}{\textsf{\textup{assign}}^{\sharp}\sqb{\texttt{x},\texttt{A}}}}{\bot:P_{\infty}}{br:P_{br}}\mid P\in\mathcal{P}\}\bowtie \mathcal{Q}\;}{{\llbrace}\,\mathcal{P}\,{\rrbrace}\,\texttt{x = A}\,{\llbrace}\,\mathcal{Q}\,{\rrbrace}}
\label{eq:logic:abstract:assignment}
\\
\frac{\;\{\triple{e:P_{+}\mathbin{\fatsemi^{\sharp}}{\textsf{\textup{rassign}}^{\sharp}\sqb{\texttt{x},a,b}}}{\bot:P_{\infty}}{br:P_{br}}\mid P\in\mathcal{P}\}\bowtie \mathcal{Q}\;}{{\llbrace}\,\mathcal{P}\,{\rrbrace}\,\texttt{x = [$a$, $b$]}\,{\llbrace}\,Q\,{\rrbrace}}
\label{eq:logic:abstract:random:assignment}
\\
\frac{\;\{\triple{e:P_{+}\mathbin{\fatsemi^{\sharp}}{\textsf{\textup{skip}}^{\sharp}}}{\bot:P_{\infty}}{br:P_{br}}\mid P\in\mathcal{P}\}\bowtie \mathcal{Q}\;}{{\llbrace}\,\mathcal{P}\,{\rrbrace}\,\texttt{skip}\,{\llbrace}\,\mathcal{Q}\,{\rrbrace}}
\label{eq:logic:abstract:skip}
\\
\frac{\;\{\triple{e:P_{+}\mathbin{\fatsemi^{\sharp}}{\textsf{\textup{test}}^{\sharp}}\sqb{\texttt{B}}}{\bot:P_{\infty}}{br:P_{br}}\mid P\in\mathcal{P}\}\bowtie \mathcal{Q}\;}{{\llbrace}\,\mathcal{P}\,{\rrbrace}\,\texttt{B}\,{\llbrace}\,\mathcal{Q}\,{\rrbrace}}
\label{eq:logic:abstract:B}
\\
\frac{\;
{\{
\triple{e:{\bot_{+}^{\sharp}}}{\bot:P_{\infty}}{br:P_{br}\mathbin{{\sqcup}^{\sharp}_{+}}(P_{e}\mathbin{\fatsemi^{\sharp}}{\textsf{\textup{break}}^{\sharp}})}\mid P\in\mathcal{P}\}}
\bowtie \mathcal{Q}\;}{{\llbrace}\,\mathcal{P}\,{\rrbrace}\,\texttt{break}\,{\llbrace}\,\mathcal{Q}\,{\rrbrace}}
\label{eq:logic:abstract:break}
\\
\frac{\:{\llbrace}\,\mathcal{P}\,{\rrbrace}\,\texttt{S}_1\,{\llbrace}\,\mathcal{Q}\,{\rrbrace},\quad{\llbrace}\,\mathcal{Q}\,{\rrbrace}\,\texttt{S}_2\,{\llbrace}\,\mathcal{R}\,{\rrbrace}\:}{{\llbrace}\,\mathcal{P}\,{\rrbrace}\,\texttt{S}_1\texttt{;}\texttt{S}_2\,{\llbrace}\,\mathcal{R}\,{\rrbrace}}\label{eq:logic:abstract:seq}
\\
\frac{\:{\forall P\in\mathcal{P},\quad(\overline{\llbrace}\,\{P\}\,\overline{\rrbrace}\,{\texttt{B;S}_1}\,\overline{\llbrace}\,\{Q_1\}\,\overline{\rrbrace}
\wedge
\overline{\llbrace}\,\{P\}\,\overline{\rrbrace}\,{\neg\texttt{B;S}_2}\,\overline{\llbrace}\,\{Q_2\}\,\overline{\rrbrace})\Rightarrow(Q_1\mathbin{\,\sqcup^{\sharp}\,}Q_2\in \mathcal{Q})
}\:}{\overline{\llbrace}\,\mathcal{P}\,\overline{\rrbrace}\,\texttt{if (B) S}_1\texttt{ else S}_2\,\overline{\llbrace}\,\mathcal{Q}\,\overline{\rrbrace}}\label{eq:logic:abstract:if:upper}
\\
\frac{\:\parbox{0.7\textwidth}{\centering$\bigl(P_e=\Lfp{{\sqsubseteq}_{+}^{\sharp}}{\vec{F}_{pe}^{\sharp}}(P'){}\wedge{}$
$\overline{\llbrace}\,\{P_e\}\,\overline{\rrbrace}\,\neg\texttt{B}\,\overline{\llbrace}\,\{Q_e\}\,\overline{\rrbrace}
{}\wedge{}$ 
$\overline{\llbrace}\,\{P_e\}\,\overline{\rrbrace}\,\texttt{B;S}\,\overline{\llbrace}\,\{Q_b\}\,\overline{\rrbrace}
{}\wedge{}$
$\overline{\llbrace}\,\{P_e\}\,\overline{\rrbrace}\,\texttt{{B;S}}\,\overline{\llbrace}\,\{Q_{\bot\ell}\}\,\overline{\rrbrace}
{}\wedge{}$
$Q_{\bot b}=\Gfp{{\sqsubseteq}_{\infty}^{\sharp}}{F_{p\bot}^{\sharp}}
{}\wedge{}$
$P'\in\mathcal{P}\bigr)$
${}\Rightarrow{}$\\
$\bigl(\triple{e:Q_e\mathbin{\,\sqcup^{\sharp}_{e}\,}Q_b}{\bot:Q_{\bot\ell}\mathbin{{\sqcup}_{\infty}^{\sharp}}{}Q_{\bot b}}{br:P_{br}}\in\mathcal{Q}\bigr)$}\:}{\overline{\llbrace}\,\mathcal{I}\,\overline{\rrbrace}\,\texttt{while (B) S}\,\overline{\llbrace}\,\mathcal{Q}\,\overline{\rrbrace}}\label{eq:logic:abstract:while:upper}
\end{eqntabular}\egroup
is sound and complete.
\end{theorem}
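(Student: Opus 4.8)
The plan is to derive the proof system of Theorem~\ref{th:upper-abstract-logic-proof-system} calculationally from the $\textsf{Post}^\sharp$ calculus of Theorem~\ref{th:upper-abstract-hyper-properties}, exploiting the abstraction $\maccent{\alpha}{\filledtriangleup}$ and the Aczel correspondence between set-based fixpoints and proof rules cited in the text. Concretely, for each syntactic construct I would start from the definition (\ref{eq:def:abstract:logical:triples}) of the triple ${\llbrace}\,\mathcal{P}\,{\rrbrace}\,\texttt{S}\,{\llbrace}\,\mathcal{Q}\,{\rrbrace}$ as $\textsf{Post}^\sharp\sqb{\texttt{S}}^\sharp\mathcal{P}\bowtie\mathcal{Q}$, substitute the corresponding equation (\ref{eq:Post:abstract:assignment})--(\ref{eq:Post:abstract:while}) for $\textsf{Post}^\sharp\sqb{\texttt{S}}^\sharp\mathcal{P}$, and read off the premise of the inference rule. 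For the atomic statements (\ref{eq:logic:abstract:assignment})--(\ref{eq:logic:abstract:break}) this is immediate: each $\textsf{Post}^\sharp$ equation gives a set-builder expression, and the rule is simply that this set stands in relation $\bowtie$ to $\mathcal{Q}$; soundness and completeness are inherited verbatim from Theorem~\ref{th:upper-abstract-hyper-properties} together with the equivalence $\textsf{Post}^\sharp\sqb{\texttt{S}}^\sharp\mathcal{P}\bowtie\mathcal{Q}\Leftrightarrow {\llbrace}\,\mathcal{P}\,{\rrbrace}\,\texttt{S}\,{\llbrace}\,\mathcal{Q}\,{\rrbrace}$ in (\ref{eq:def:abstract:logical:triples}).

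For the sequential composition rule (\ref{eq:logic:abstract:seq}) I would use the compositionality (\ref{eq:Post:abstract:seq}), namely $\textsf{Post}^\sharp\sqb{\texttt{S}_1\texttt{;}\texttt{S}_2}^\sharp\mathcal{P}=\textsf{Post}^\sharp\sqb{\texttt{S}_2}^\sharp(\textsf{Post}^\sharp\sqb{\texttt{S}_1}^\sharp\mathcal{P})$, introducing the intermediate assertion $\mathcal{Q}$ as the exact strongest postcondition $\textsf{Post}^\sharp\sqb{\texttt{S}_1}^\sharp\mathcal{P}$ for completeness and using monotonicity of $\textsf{Post}^\sharp$ for soundness. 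For the conditional (\ref{eq:logic:abstract:if:upper}) the key is that the $\textsf{Post}^\sharp$ characterization (\ref{eq:Post:abstract:if}) uses the synchronized join $\dot{\sqcup}^\sharp$ defined in (\ref{eq:Post:abstract:dot-sqcup-sharp}), which quantifies over a \emph{single} $P\in\mathcal{P}$ feeding both branches via singletons $\{P\}$; here lemma~\ref{lem:join:preserving:logics} is essential, since it lets me express the hyper-triple in terms of singleton triples ${\overline{\llbrace}\,\{P\}\,\overline{\rrbrace}\,\texttt{B;S}_1\,\overline{\llbrace}\,\{Q_1\}\,\overline{\rrbrace}}$ and recover the premise of (\ref{eq:logic:abstract:if:upper}) that for every $P\in\mathcal{P}$ the combined $Q_1\sqcup^\sharp Q_2$ lies in $\mathcal{Q}$.

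The iteration rule (\ref{eq:logic:abstract:while:upper}) is where the real work lies, and I expect it to be the main obstacle. I would start from the \texttt{while} equation (\ref{eq:Post:abstract:while}), which expresses the postcondition through the least fixpoint $\Lfp{\breve{\sqsubseteq}_+^\sharp}\breve{\vec{F}}_{pe}^\sharp(P)$ on singletons and the greatest fixpoint $\Gfp{\breve{\sqsubseteq}_\infty^\sharp}\breve{F}_{p\bot}^\sharp$. The difficulty is twofold: first, I must replace the two fixpoint computations by the local verification conditions $P_e=\Lfp{\sqsubseteq_+^\sharp}\vec{F}_{pe}^\sharp(P')$ and $Q_{\bot b}=\Gfp{\sqsubseteq_\infty^\sharp}F_{p\bot}^\sharp$ appearing in the premise, which requires the generalizations of Park induction \cite[th.\@ II.3.4]{DBLP:journals/pacmpl/Cousot24} for the over-approximating least fixpoint and of Scott/Turing-Floyd induction \cite[th.\@ II.3.6]{DBLP:journals/pacmpl/Cousot24}, \cite[th.\@ II.3.8]{DBLP:journals/pacmpl/Cousot24} for the greatest fixpoint, exactly as flagged in the text; second, I must maintain the singleton-versus-powerset bookkeeping of proposition~\ref{prop:prop:singletonization} so that the synchronized fixpoint on singletons corresponds correctly to the per-element premise quantified by $P'\in\mathcal{P}$. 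I would assemble the three components (finite exit $Q_e\sqcup_e^\sharp Q_b$, nonterminating body $Q_{\bot\ell}$, and infinite-iteration $Q_{\bot b}$) into the triple $\triple{e:Q_e\sqcup_e^\sharp Q_b}{\bot:Q_{\bot\ell}\sqcup_\infty^\sharp Q_{\bot b}}{br:P_{br}}$ and close by showing membership in $\mathcal{Q}$ is equivalent to the hyper-triple, again invoking lemma~\ref{lem:join:preserving:logics} to pass between the elementwise premise and the global postcondition. Soundness follows from the soundness direction of the fixpoint induction principles and completeness from taking the exact fixpoints as witnesses.
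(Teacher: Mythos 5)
Your overall strategy---unfolding the definition (\ref{eq:def:abstract:logical:triples}) of the triple, substituting the $\textsf{Post}^\sharp$ equations of Theorem~\ref{th:upper-abstract-hyper-properties}, and reading off the premises by a structural calculation with singleton bookkeeping---is exactly the paper's, and your treatment of the atomic statements, sequential composition, and the conditional is essentially correct (the paper reaches the singleton premises of (\ref{eq:logic:abstract:if:upper}) via corollary~\ref{cor:underapproximation:one:element} and lemma~\ref{singleton:under=over} rather than lemma~\ref{lem:join:preserving:logics}, but that is a cosmetic difference).

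The genuine gap is in your plan for the iteration rule (\ref{eq:logic:abstract:while:upper}). You propose to discharge the fixpoint conditions by Park induction for $\Lfp{{\sqsubseteq}_{+}^{\sharp}}{\vec{F}_{pe}^{\sharp}}(P')$ and Scott/Turing--Floyd induction for $\Gfp{{\sqsubseteq}_{\infty}^{\sharp}}{F_{p\bot}^{\sharp}}$. Those principles only yield over- or under-approximations ($\Lfp F\sqsubseteq P_e$, $Q_{\bot b}\sqsubseteq\Gfp F$), whereas the rule's premise requires the \emph{exact} equalities $P_e=\Lfp{{\sqsubseteq}_{+}^{\sharp}}{\vec{F}_{pe}^{\sharp}}(P')$ and $Q_{\bot b}=\Gfp{{\sqsubseteq}_{\infty}^{\sharp}}{F_{p\bot}^{\sharp}}$, and the paper explicitly remarks after the theorem that these must be exact fixpoints: membership of $\textsf{post}^\sharp\sqb{\texttt{while (B) S}}^\sharp P$ in an arbitrary $\mathcal{Q}\in\wp(\mathbb{L}^\sharp)$ is not monotone in an approximation of the semantics, so an approximated $P_e$ would make the rule unsound (and restricting to exact fixpoints is what keeps it complete). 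The paper's actual calculation never invokes fixpoint induction here; instead it introduces candidates $I_e\subseteq\Lfp{\breve{\sqsubseteq}_{+}^{\sharp}}{\breve{\vec{F}}_{pe}^{\sharp}}(P)$ and $I_\bot\subseteq\Gfp{{\breve{\sqsubseteq}}_{\infty}^{\sharp}}{\breve{F}_{p\bot}^{\sharp}}$, observes that $I_e$ cannot be empty (else $\textsf{Post}^\sharp(\cdots)I_e=\emptyset$ would contradict $Q_e$ belonging to it), and concludes that a nonempty subset of a singleton fixpoint must equal it, which forces the exact equalities in the premise; propositions~\ref{prop:Tarski} and~\ref{prop:Tarski:constructive} then serve only to \emph{characterize} these exact fixpoints, not to approximate them. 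You should replace the induction-principle step by this emptiness/singleton argument; the ability to approximate fixpoints is precisely what is deferred to the abstractions of Part~III.
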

Remarkably in (\ref{eq:logic:abstract:if:upper}) and (\ref{eq:logic:abstract:while:upper}), we have to consider all possible over approximations, and in (\ref{eq:logic:abstract:while:upper}) $P_e$ and $Q_{\bot b}$ must be exact fixpoints. This is because, for completeness and in full generality, hyperlogics cannot make any approximation of the program semantics defined by $\textsf{post}^\sharp$ in (\ref{eq:def:Post}) hence prohibiting approximations in (\ref{eq:def:abstract:logical:triples}).

Notice that no consequence rule is required for completeness, although they are sound\proofinapx.
\begin{eqntabular}{c@{\qquad}c}
\frac{\:\mathcal{P}\subseteq\mathcal{P}',\quad \overline{\llbrace}\,\mathcal{P}'\,\overline{\rrbrace}\,\texttt{S}\,\overline{\llbrace}\,\mathcal{Q}'\,\overline{\rrbrace},\quad \mathcal{Q}'\subseteq\mathcal{Q} \:}
{\overline{\llbrace}\,\mathcal{P}\,\overline{\rrbrace}\,\texttt{S}\,\overline{\llbrace}\,\mathcal{Q}\,\overline{\rrbrace}}
&
\frac{\:\mathcal{P}'\subseteq\mathcal{P},\quad \underline{\llbrace}\,\mathcal{P}'\,\underline{\rrbrace}\,\texttt{S}\,\underline{\llbrace}\,\mathcal{Q}'\,\underline{\rrbrace},\quad \mathcal{Q}\subseteq\mathcal{Q}'\:}
{\underline{\llbrace}\,\mathcal{P}\,\underline{\rrbrace}\,\texttt{S}\,\underline{\llbrace}\,\mathcal{Q}\,\underline{\rrbrace}}
\label{eq:logic:abstract:consequence}
\end{eqntabular}
\begin{toappendix}
\begin{proof}[Proof of (\ref{eq:logic:abstract:consequence})]
\begin{calculus}[$\Rightarrow$\ \ ]
\formula{\mathcal{P}\subseteq\mathcal{P}'\wedge \overline{\llbrace}\,\mathcal{P}'\,\overline{\rrbrace}\,\texttt{S}\,\overline{\llbrace}\,\mathcal{Q}'\,\overline{\rrbrace}\wedge \mathcal{Q}'\subseteq\mathcal{Q}}\\
$\Rightarrow$
\formulaexplanation{\mathcal{P}\subseteq\mathcal{P}'\wedge \textsf{Post}^\sharp\sqb{\texttt{S}}^\sharp\mathcal{P}'\subseteq \mathcal{Q}'\wedge\mathcal{Q}'\subseteq\mathcal{Q}}{def.\ (\ref{eq:def:abstract:logical:triples}) of the logic triples}\\
$\Rightarrow$
\formulaexplanation{\mathcal{P}\subseteq\mathcal{P}'\wedge \textsf{Post}^\sharp\sqb{\texttt{S}}^\sharp\mathcal{P}'\subseteq \mathcal{Q}}{$\subseteq$ transitive}\\
$\Rightarrow$
\formulaexplanation{\textsf{Post}^\sharp\sqb{\texttt{S}}^\sharp\mathcal{P}\subseteq\textsf{Post}^\sharp\sqb{\texttt{S}}^\sharp\mathcal{P}'\wedge \textsf{Post}^\sharp\sqb{\texttt{S}}^\sharp\mathcal{P}'\subseteq \mathcal{Q}}{$\textsf{Post}^\sharp\sqb{\texttt{S}}^\sharp$ increasing by (\ref{eq:GC:abstract:transformers})}\\
$\Rightarrow$
\formulaexplanation{\textsf{Post}^\sharp\sqb{\texttt{S}}^\sharp\mathcal{P}\subseteq \mathcal{Q}}{$\subseteq$ transitive}\\
$\Rightarrow$
\formulaexplanation{\overline{\llbrace}\,\mathcal{P}\,\overline{\rrbrace}\,\texttt{S}\,\overline{\llbrace}\,\mathcal{Q}\,\overline{\rrbrace}}{def.\ (\ref{eq:def:abstract:logical:triples}) of the logic triples}
\end{calculus}

\smallskip

\noindent The converse follows immediately by choosing $\mathcal{P}=\mathcal{P}'$ and $\mathcal{Q}'=\mathcal{Q}$ since $\subseteq$ is reflexive. The consequence rule for the lower abstract logic is $\subseteq$-dual.
\end{proof}
\end{toappendix}
\begin{example}[Choice]\label{ex:choice}
Let us define the choice $\texttt{S}_1+\texttt{S}_2\triangleq\texttt{c = [0,1]; if (c) S$_1$ else S$_2$}$ where auxiliary variable \texttt{c} does not appear in $\texttt{S}_1$ nor in $\texttt{S}_2$. The proof rule can be derived as follows
\begin{calculus}[$\Leftrightarrow$\ \ ]
\formula{\overline{\llbrace}\,\mathcal{P}\,\overline{\rrbrace}\,\texttt{S}_1+\texttt{S}_2\,\overline{\llbrace}\,\mathcal{Q}\,\overline{\rrbrace}}\\
$\Leftrightarrow$
\formulaexplanation{\overline{\llbrace}\,\mathcal{P}\,\overline{\rrbrace}\,\texttt{c = [0,1]; if (c) S$_1$ else S$_2$}\,\overline{\llbrace}\,\mathcal{Q}\,\overline{\rrbrace}}{def.\ choice $+$}\\
$\Leftrightarrow$
\formulaexplanation{\exists \mathcal{R}\mathrel{.}\overline{\llbrace}\,\mathcal{P}\,\overline{\rrbrace}\,\texttt{c = [0,1]}\,\overline{\llbrace}\,\mathcal{R}\,\overline{\rrbrace}\wedge \overline{\llbrace}\,\mathcal{R}\,\overline{\rrbrace}\,\texttt{if (c) S$_1$ else S$_2$}\,\overline{\llbrace}\,\mathcal{Q}\,\overline{\rrbrace}}{sequential composition (\ref{eq:logic:abstract:seq})}\\
$\Leftrightarrow$
\formulaexplanation{\exists \mathcal{R}\mathrel{.}\{P\mathbin{\fatsemi^{\sharp}}\textsf{\textup{rassign}}^{\sharp}\sqb{\texttt{c},$0$,$1$}\mid P\in\mathcal{P}\}\subseteq\mathcal{R}
\wedge \overline{\llbrace}\,\mathcal{R}\,\overline{\rrbrace}\,\texttt{if (c) S$_1$ else S$_2$}\,\overline{\llbrace}\,\mathcal{Q}\,\overline{\rrbrace}
}{(\ref{eq:logic:abstract:random:assignment})}\\
$\Leftrightarrow$
\formula{\overline{\llbrace}\,\{P\mathbin{\fatsemi^{\sharp}}\textsf{\textup{rassign}}^{\sharp}\sqb{\texttt{c},$0$,$1$}\mid P\in\mathcal{P}\}\,\overline{\rrbrace}\,\texttt{if (c) S$_1$ else S$_2$}\,\overline{\llbrace}\,\mathcal{Q}\,\overline{\rrbrace}
}\\[-0.5ex]\rightexplanation{taking $\mathcal{R} = \{P\mathbin{\fatsemi^{\sharp}}\textsf{\textup{rassign}}^{\sharp}\sqb{\texttt{c},$0$,$1$}\mid P\in\mathcal{P}\}$}\\
$\Leftrightarrow$
\formulaexplanation{\forall P\in\{P'\mathbin{\fatsemi^{\sharp}}\textsf{\textup{rassign}}^{\sharp}\sqb{\texttt{c},$0$,$1$}\mid P'\in\mathcal{P}\},Q_1,Q_2\mathrel{.}(\underline{\llbrace}\,\{P\}\,\underline{\rrbrace}\,{\texttt{B;S}_1}\,\underline{\llbrace}\,\{Q_1\}\,\underline{\rrbrace}
\wedge
\underline{\llbrace}\,\{P\}\,\underline{\rrbrace}\,{\neg\texttt{B;S}_2}\,\underline{\llbrace}\,\{Q_2\}\,\underline{\rrbrace})\Rightarrow(Q_1\mathbin{\,\sqcup^{\sharp}\,}Q_2\in \mathcal{Q})}{(\ref{eq:logic:abstract:if:upper})}\\[0.75ex]
$\Leftrightarrow$
\numberedformula{\forall P\in\mathcal{P},Q_1,Q_2\mathrel{.}(\underline{\llbrace}\,\{P\}\,\underline{\rrbrace}\,{\texttt{S}_1}\,\underline{\llbrace}\,\{Q_1\}\,\underline{\rrbrace}
\wedge
\underline{\llbrace}\,\{P\}\,\underline{\rrbrace}\,{\texttt{S}_2}\,\underline{\llbrace}\,\{Q_2\}\,\underline{\rrbrace})\Rightarrow(Q_1\mathbin{\,\sqcup^{\sharp}\,}Q_2\in \mathcal{Q})}\label{cal:choice:elementwise}\\
\explanation{assuming states where \texttt{c} is assigned $0$ or $1$, \texttt{B} is true for $0$ and $\neg\texttt{B}$ is true for $1$ (or conversely)}
\end{calculus}
\smallskip

\noindent so that we get the sound and complete rule
\begin{eqntabular}{c}
\frac{\:\forall P\in\mathcal{P},Q_1,Q_2\mathrel{.}(\underline{\llbrace}\,\{P\}\,\underline{\rrbrace}\,{\texttt{S}_1}\,\underline{\llbrace}\,\{Q_1\}\,\underline{\rrbrace}
\wedge
\underline{\llbrace}\,\{P\}\,\underline{\rrbrace}\,{\texttt{S}_2}\,\underline{\llbrace}\,\{Q_2\}\,\underline{\rrbrace})\Rightarrow(Q_1\mathbin{\,\sqcup^{\sharp}\,}Q_2\in \mathcal{Q})\:}{\overline{\llbrace}\,\mathcal{P}\,\overline{\rrbrace}\,\texttt{S}_1+\texttt{S}_2\,\overline{\llbrace}\,\mathcal{Q}\,\overline{\rrbrace}}
\label{eq:rule:choice:sqcup}
\end{eqntabular}
Let us now consider the particular case $\textsf{post}^\sharp(S)P$ $=$ $\{\pair{\sigma}{\sigma''}\mid\exists\sigma'\in\Sigma\mathrel{.}\pair{\sigma}{\sigma'}\in P\wedge\pair{\sigma'}{\sigma''}\in S\}$ as in 
example \ref{ex:powerset-deterministic-domain-post} (but this time with unbounded nondeterminism) so that $\sqcup^{\sharp}$ is $\cup$ in (\ref{eq:rule:choice:sqcup}). Then (\ref{eq:rule:choice:sqcup}) is implied, but not conversely, by the proof rule 
\begin{eqntabular*}{c}
\frac{\:{\overline{\llbrace}\,\mathcal{P}\,\overline{\rrbrace}\,\texttt{S}_1\,\overline{\llbrace}\,\mathcal{Q}_1\,\overline{\rrbrace}},\quad{\overline{\llbrace}\,\mathcal{P}\,\overline{\rrbrace}\,\texttt{S}_2\,\overline{\llbrace}\,\mathcal{Q}_2\,\overline{\rrbrace}}}{\:\overline{\llbrace}\,\mathcal{P}\,\overline{\rrbrace}\,\texttt{S}_1+\texttt{S}_2\,\overline{\llbrace}\,\{Q_1\cup Q_2\mid Q_1\in\mathcal{Q}_1\wedge Q_2\in\mathcal{Q}_2\} \overline{\rrbrace}\:}\quad(\textit{Choice})
\end{eqntabular*}
of \cite{DBLP:journals/afp/Dardinier23a}, which is sound but incomplete. For completeness,  \cite[p.\ 207:9]{DBLP:journals/afp/Dardinier23a} has to introduce an (\textit{Exist}) proof rule which amounts to the case by case analysis of rule (\ref{eq:rule:choice:sqcup}).
\end{example}
\begin{example}[Finitary powerset nondeterministic calculational domain]\label{ex:powerset-nondeterministic-domain-HL}Continuing example \ref{ex:powerset-nondeterministic-domain-post}, the iteration rule postulated in \cite[Fig.\ 2]{DBLP:journals/afp/Dardinier23a,DBLP:conf/pldi/DardinierM24} is (\ref{eq:logic:abstract:while:upper}), ignoring nontermination and breaks,
and applying proposition \ref{prop:Tarski:constructive} to reason on the fixpoint iterates.
\end{example}

\subsection{Calculational Design of the Proof System of the Upper Abstract Logic}\label{sec:ProofSystemUpperAbstractLogic}

\begin{proof}[Proof of (\ref{eq:logic:abstract:assignment}) --- (\ref{eq:logic:abstract:while:upper})] The proof of soundness and completeness  is by structural induction. We show the calculational design for the 
iteration (\ref{eq:logic:abstract:while:upper}). The other cases are in the appendix\proofinapx.
\medskip

\begin{toappendix}

\hyphen{5}\quad\textsc{Proof of (\ref{eq:logic:abstract:assignment}), (\ref{eq:logic:abstract:random:assignment}), (\ref{eq:logic:abstract:skip}), and (\ref{eq:logic:abstract:B})}.\quad The characterization (\ref{eq:Post:abstract:assignment}) of $\textsf{Post}^\sharp\sqb{\texttt{x = A}}^{\sharp}\mathcal{P}$ yields, by (\ref{eq:def:abstract:logical:triples}), the axiom (\ref{eq:logic:abstract:assignment}) for 
$\overline{\llbrace}\,\mathcal{P}\,\overline{\rrbrace}\,\texttt{x = A}\,\overline{\llbrace}\,\mathcal{Q}\,\overline{\rrbrace}$ (where the side condition is written as
a premiss). The rule for $\underline{\llbrace}\,\mathcal{P}\,\underline{\rrbrace}\,\texttt{x = A}\,\underline{\llbrace}\,\mathcal{Q}\,\underline{\rrbrace}$ is $\subseteq$-order dual. The rule (\ref{eq:logic:abstract:random:assignment}) for \texttt{x = [$a$, $b$]}, (\ref{eq:logic:abstract:skip}) for \texttt{x = skip}, and (\ref{eq:logic:abstract:B}) for \texttt{B} and their duals are similar.

\medskip

\hyphen{5}\quad\textsc{Proof of (\ref{eq:logic:abstract:break})}.\hfill The characterization (\ref{eq:def:abstract:logical:triples}) of $\textsf{Post}^\sharp\sqb{\texttt{break}}^{\sharp}\mathcal{P}$ yields the axiom (\ref{eq:logic:abstract:break}) for 
$\overline{\llbrace}\,\mathcal{P}\,\overline{\rrbrace}\,\texttt{break}\,\overline{\llbrace}\,\mathcal{Q}\,\overline{\rrbrace}$ (where the side condition is written as
a premiss). The rule for $\underline{\llbrace}\,\mathcal{P}\,\underline{\rrbrace}\,\texttt{break}\,\underline{\llbrace}\,\mathcal{Q}\,\underline{\rrbrace}$ is $\subseteq$-order dual. 

\medskip

\hyphen{5}\quad\textsc{Proof of (\ref{eq:logic:abstract:seq})}.\quad For sequential composition, we have
\begin{calculus}[$\leftrightarrow$\ \ ]
\formula{\overline{\llbrace}\,\mathcal{P}\,\overline{\rrbrace}\,\texttt{S}_1\texttt{;}\texttt{S}_2\,\overline{\llbrace}\,\mathcal{R}\,\overline{\rrbrace}}\\
$\Leftrightarrow$
\formulaexplanation{\textsf{Post}^\sharp\sqb{\texttt{S}_1\texttt{;}\texttt{S}_2}^{\sharp}\mathcal{P}\subseteq \mathcal{R}}{def.\ (\ref{eq:def:abstract:logical:triples}) of the logic triples}\\
$\Leftrightarrow$
\formulaexplanation{{\textsf{Post}^\sharp\sqb{\texttt{S$_2$}}^{\sharp}(\textsf{Post}^\sharp\sqb{\texttt{S$_1$}}^{\sharp}\mathcal{P})}
\subseteq \mathcal{R}}{(\ref{eq:Post:abstract:seq})}\\
$\Leftrightarrow$
\formula{\exists\mathcal{Q}\mathrel{.}\textsf{Post}^\sharp\sqb{\texttt{S$_1$}}^{\sharp}\mathcal{P}\subseteq \mathcal{Q}\wedge {\textsf{Post}^\sharp\sqb{\texttt{S$_2$}}^{\sharp}\mathcal{Q}}
\subseteq \mathcal{R}}\\
\explanation{(soundness, $\Rightarrow$) By (\ref{eq:def:Post}), $\textsf{Post}^\sharp\sqb{\texttt{S}}^\sharp$ is $\subseteq$-increasing so $\textsf{Post}^\sharp\sqb{\texttt{S$_1$}}^{\sharp}\mathcal{P}\subseteq \mathcal{Q}$ implies $\textsf{Post}^\sharp\sqb{\texttt{S$_2$}}^{\sharp}(\textsf{Post}^\sharp\sqb{\texttt{S$_1$}}^{\sharp}\mathcal{P})\subseteq \textsf{Post}^\sharp\sqb{\texttt{S$_2$}}^{\sharp}\mathcal{Q}$ and $\subseteq$ is transitive;
\\
(completeness, $\Leftarrow$) take $\mathcal{Q}=\textsf{Post}^\sharp\sqb{\texttt{S$_1$}}^{\sharp}\mathcal{P}$ and reflexivity}\\
$\Leftrightarrow$
\formula{\exists\mathcal{Q}\mathrel{.}\overline{\llbrace}\,\mathcal{P}\,\overline{\rrbrace}\,\texttt{S}_1\,\overline{\llbrace}\,\mathcal{Q}\,\overline{\rrbrace}\wedge\overline{\llbrace}\,\mathcal{Q}\,\overline{\rrbrace}\,\texttt{S}_2\,\overline{\llbrace}\,\mathcal{R}\,\overline{\rrbrace}}\\[-0.5ex]
\rightexplanation{def.\ (\ref{eq:def:abstract:logical:triples}) of the logic triples and dually for under approximation}
\end{calculus}

\medskip

\hyphen{5}\quad\textsc{Proof of (\ref{eq:logic:abstract:if:upper})}.\quad For the conditional, we have
\begin{calculus}[$\leftrightarrow$\ \ ]
\formula{\overline{\llbrace}\,\mathcal{P}\,\overline{\rrbrace}\,\texttt{if (B) S}_1\texttt{ else S}_2\,\overline{\llbrace}\,\mathcal{R}\,\overline{\rrbrace}}\\
$\Leftrightarrow$
\formulaexplanation{\textsf{Post}^\sharp\sqb{\texttt{if (B) S}_1\texttt{ else S}_2}^{\sharp}\mathcal{P}\subseteq \mathcal{R}}{def.\ (\ref{eq:def:abstract:logical:triples}) of the logic triples}\\
$\Leftrightarrow$
\formulaexplanation{{(\textsf{Post}^\sharp\sqb{\texttt{B;S}_1}^{\sharp}
\mathbin{\,\dot{\sqcup}^{\sharp}\,}
\textsf{Post}^\sharp\sqb{\neg\texttt{B;S}_2}^{\sharp})\mathcal{P}}\subseteq \mathcal{R}}{(\ref{eq:Post:abstract:if})}\\
$\Leftrightarrow$
\formulaexplanation{\{Q_1\mathbin{\,\sqcup^{\sharp}\,}Q_2\mid  
Q_1\in \textsf{Post}^\sharp\sqb{\texttt{B;S}_1}^{\sharp}\{P\}
\wedge
Q_2\in \textsf{Post}^\sharp\sqb{\neg\texttt{B;S}_2}^{\sharp}\{P\} 
\wedge
P\in\mathcal{P}\}\subseteq \mathcal{R}}{def.\ (\ref{eq:Post:abstract:dot-sqcup-sharp}) of $\mathbin{\,\dot{\sqcup}^{\sharp}\,}$}\\
$\Leftrightarrow$
\formula{\forall P,Q_1,Q_2\mathrel{.}(P\in\mathcal{P}\wedge Q_1\in \textsf{Post}^\sharp\sqb{\texttt{B;S}_1}^{\sharp}\{P\}
\wedge
Q_2\in \textsf{Post}^\sharp\sqb{\neg\texttt{B;S}_2}^{\sharp}\{P\} )\Rightarrow(Q_1\mathbin{\,\sqcup^{\sharp}\,}Q_2\in \mathcal{R})}\\[-0.5ex]\rightexplanation{def.\ $\subseteq$, $\wedge$ commutative}\\
$\Leftrightarrow$
\formula{\forall P,Q_1,Q_2\mathrel{.}(P\in\mathcal{P}\wedge \{Q_1\}\subseteq\textsf{Post}^\sharp\sqb{\texttt{B;S}_1}^{\sharp}\{P\}
\wedge
\{Q_2\}\subseteq\textsf{Post}^\sharp\sqb{\neg\texttt{B;S}_2}^{\sharp}\{P\} )\Rightarrow(Q_1\mathbin{\,\sqcup^{\sharp}\,}Q_2\in \mathcal{R})}\\[-0.5ex]\rightexplanation{def.\ $\subseteq$}\\
$\Leftrightarrow$
\formula{\forall P,Q_1,Q_2\mathrel{.}(P\in\mathcal{P}\wedge 
\underline{\llbrace}\,\{P\}\,\underline{\rrbrace}\,{\texttt{B;S}_1}\,\underline{\llbrace}\,\{Q_1\}\,\underline{\rrbrace}
\wedge
\underline{\llbrace}\,\{P\}\,\underline{\rrbrace}\,{\neg\texttt{B;S}_2}\,\underline{\llbrace}\,\{Q_2\}\,\underline{\rrbrace})\Rightarrow(Q_1\mathbin{\,\sqcup^{\sharp}\,}Q_2\in \mathcal{R})
}\\[-0.5ex]\rightexplanation{def.\ (\ref{eq:def:abstract:logical:triples}) of the lower abstract logic}
\end{calculus}

\medskip

\end{toappendix}
\begin{calculus}[$\leftrightarrow$\ \ ]
\formula{\overline{\llbrace}\,\mathcal{P}\,\overline{\rrbrace}\,\texttt{while (B) S}\,\overline{\llbrace}\,\mathcal{R}\,\overline{\rrbrace}}\\
$\Leftrightarrow$
\formulaexplanation{\textsf{Post}^\sharp\sqb{\texttt{while (B) S}}^{\sharp}\mathcal{P}\subseteq \mathcal{R}}{def.\ (\ref{eq:def:abstract:logical:triples}) of the logic triples}\\
$\Leftrightarrow$
\formula{\{\triple{e:Q_e}{\bot:Q_{\bot\ell}\mathbin{{\sqcup}_{\infty}^{\sharp}}{}Q_{\bot b}}{br:P_{br}}\mid 
Q_e\in\textsf{\textup{Post}}^\sharp(\sqb{\neg\texttt{B}}_{e}^{\sharp}\mathbin{\sqcup_{e}^{\sharp}}\sqb{\texttt{B;S}}_{b}^{\sharp})(\Lfp{\breve{\sqsubseteq}_{+}^{\sharp}}{\breve{\vec{F}}_{pe}^{\sharp}}(P))
{}\wedge{} 
Q_{\bot\ell}\in \textsf{\textup{Post}}^\sharp(\sqb{\texttt{B;S}}_{\bot}^{\sharp})(\Lfp{\breve{\sqsubseteq}_{+}^{\sharp}}{\breve{\vec{F}}_{pe}^{\sharp}}(P))
{}\wedge{}
\exists Q_{\bot b}\mathrel{.}Q_{\bot b}\in\textsf{\textup{Post}}^\sharp(Q_{p\bot})\{P\}{}\wedge{}
Q_{p\bot}\in\Gfp{{\breve{\sqsubseteq}}_{\infty}^{\sharp}}{\breve{F}_{p\bot}^{\sharp}}{}\wedge{}
P\in\mathcal{P}\}\subseteq \mathcal{R}}\\[-0.5ex]
\rightexplanation{(\ref{eq:Post:abstract:while})}\\
$\Leftrightarrow$
\formula{\{\triple{e:Q_e}{\bot:Q_{\bot\ell}\mathbin{{\sqcup}_{\infty}^{\sharp}}{}
Q_{\bot b}}{br:P_{br}}\mid {}
\exists I_e\mathrel{.}I_e\subseteq \Lfp{\breve{\sqsubseteq}_{+}^{\sharp}}{\breve{\vec{F}}_{pe}^{\sharp}}(P)\wedge
Q_e\in\textsf{Post}^\sharp(\sqb{\neg\texttt{B}}_{e}^{\sharp}\mathbin{\sqcup_{e}^{\sharp}}\sqb{\texttt{B;S}}_{b}^{\sharp})I_e\wedge{} 
Q_{\bot\ell}\in \textsf{Post}^\sharp(\sqb{\texttt{B;S}}_{\bot}^{\sharp})(I_e)
\wedge{}
\exists I_{\bot}\mathrel{.} I_{\bot}\subseteq\Gfp{{\breve{\sqsubseteq}}_{\infty}^{\sharp}}{\breve{F}_{p\bot}^{\sharp}}\wedge{}
Q_{\bot b}\in I_{\bot}\wedge
P\in\mathcal{P}\}\subseteq \mathcal{R}}\\
\explanation{($\Rightarrow$) Take $I_e=\Lfp{\breve{\sqsubseteq}_{+}^{\sharp}}{\breve{\vec{F}}_{pe}^{\sharp}}(P)$,  $I_{\bot}=\Gfp{{\breve{\sqsubseteq}}_{\infty}^{\sharp}}{\breve{F}_{p\bot}^{\sharp}}$, and $\subseteq$ reflexive\\
($\Leftarrow$) by (\ref{eq:GC:abstract:transformers}), $\textsf{Post}^\sharp(S)$ is $\subseteq$-increasing}\\
$\Leftrightarrow$
\formula{\{\triple{e:Q_e}{\bot:Q_{\bot\ell}\mathbin{{\sqcup}_{\infty}^{\sharp}}{}
Q_{\bot b}}{br:P_{br}}\mid {}
\exists P_e\mathrel{.}\{P_e\}=\Lfp{\breve{\sqsubseteq}_{+}^{\sharp}}{\breve{\vec{F}}_{pe}^{\sharp}}(P)\wedge
Q_e\in\textsf{Post}^\sharp(\sqb{\neg\texttt{B}}_{e}^{\sharp}\mathbin{\sqcup_{e}^{\sharp}}\sqb{\texttt{B;S}}_{b}^{\sharp})\{P_e\}\wedge{} 
Q_{\bot\ell}\in \textsf{Post}^\sharp(\sqb{\texttt{B;S}}_{\bot}^{\sharp})\{P_e\}
\wedge{}
\exists P_{\bot}\mathrel{.} \{P_{\bot}\}=\Gfp{{\breve{\sqsubseteq}}_{\infty}^{\sharp}}{\breve{F}_{p\bot}^{\sharp}}\wedge{}
Q_{\bot b}\in \{P_{\bot}\}\wedge
P\in\mathcal{P}\}\subseteq \mathcal{R}}\\
\explanation{If $I_e$ is empty then $\textsf{Post}^\sharp(\sqb{\neg\texttt{B}}_{e}^{\sharp}\mathbin{\sqcup_{e}^{\sharp}}\sqb{\texttt{B;S}}_{b}^{\sharp})I_e$ is empty by (\ref{eq:def:Post}), contrary to $Q_e\in\textsf{Post}^\sharp(\sqb{\neg\texttt{B}}_{e}^{\sharp}\mathbin{\sqcup_{e}^{\sharp}}\sqb{\texttt{B;S}}_{b}^{\sharp})I_e$ proving that $I_e$ cannot be empty. By (\ref{eq:def:F-pe-sharp-breve}), $\Lfp{\breve{\sqsubseteq}_{+}^{\sharp}}{\breve{\vec{F}}_{pe}^{\sharp}}(P)$ is a singleton, say $\{P_e\}$. For $I_e$ to be non-empty and included in a singleton, it must be equal to that singleton so $I_e=\{P_e\}$. The reasoning is the same for $I_{\bot}=\{P_{\bot}\}$}\\
$\Leftrightarrow$
\formula{\{\triple{e:Q_e}{\bot:Q_{\bot\ell}\mathbin{{\sqcup}_{\infty}^{\sharp}}{}
Q_{\bot b}}{br:P_{br}}\mid {}
\exists P_e\mathrel{.}\{P_e\}=\Lfp{\breve{\sqsubseteq}_{+}^{\sharp}}{\breve{\vec{F}}_{pe}^{\sharp}}(P)\wedge
Q_e\in\textsf{Post}^\sharp(\sqb{\neg\texttt{B}}_{e}^{\sharp}\mathbin{\sqcup_{e}^{\sharp}}\sqb{\texttt{B;S}}_{b}^{\sharp})\{P_e\}\wedge{} 
Q_{\bot\ell}\in \textsf{Post}^\sharp(\sqb{\texttt{B;S}}_{\bot}^{\sharp})\{P_e\}
\wedge{}
 \{Q_{\bot b}\}=\Gfp{{\breve{\sqsubseteq}}_{\infty}^{\sharp}}{\breve{F}_{p\bot}^{\sharp}}\wedge{}
P\in\mathcal{P}\}\subseteq \mathcal{R}}\\\rightexplanation{$Q_{\bot b}\in \{P_{\bot}\}$ if and only if $Q_{\bot b}=P_{\bot b}$}\\
$\Leftrightarrow$
\formula{\{\triple{e:Q_e}{\bot:Q_{\bot\ell}\mathbin{{\sqcup}_{\infty}^{\sharp}}{}
Q_{\bot b}}{br:P_{br}}\mid {}
\exists P_e\mathrel{.}\{P_e\}=\{\Lfp{{\sqsubseteq}_{+}^{\sharp}}{\vec{F}_{pe}^{\sharp}}(P)\}
\wedge
Q_e\in\textsf{Post}^\sharp(\sqb{\neg\texttt{B}}_{e}^{\sharp}\mathbin{\sqcup_{e}^{\sharp}}\sqb{\texttt{B;S}}_{b}^{\sharp})\{P_e\}\wedge{} 
Q_{\bot\ell}\in \textsf{Post}^\sharp(\sqb{\texttt{B;S}}_{\bot}^{\sharp})\{P_e\}
\wedge{}
 \{Q_{\bot b}\}=\{\Gfp{{\sqsubseteq}_{\infty}^{\sharp}}{F_{p\bot}^{\sharp}}\}
 \wedge{}
P\in\mathcal{P}\}\subseteq \mathcal{R}}\\
\explanation{since $\Lfp{\breve{\sqsubseteq}_{+}^{\sharp}}{\breve{\vec{F}}_{pe}^{\sharp}}(P)$
=
$\{\Lfp{{\sqsubseteq}_{+}^{\sharp}}{\vec{F}_{pe}^{\sharp}}(P)\}$
by (\ref{eq:def:F-pe-sharp-breve}), proposition \ref{prop:prop:singletonization},
and
$\Gfp{{\breve{\sqsubseteq}}_{\infty}^{\sharp}}({\breve{F}_{p\bot}^{\sharp}})$ 
=
$\{\Gfp{{\sqsubseteq}_{\infty}^{\sharp}}{F_{p\bot}^{\sharp}}\}$
by (\ref{eq:def:F-p-bot-sharp}) and proposition \ref{prop:prop:singletonization}}\\
$\Leftrightarrow$
\formulaexplanation{\{\triple{e:Q_e}{\bot:Q_{\bot\ell}\mathbin{{\sqcup}_{\infty}^{\sharp}}{}
Q_{\bot b}}{br:P_{br}}\mid {}
\exists P_e\mathrel{.}P_e=\Lfp{{\sqsubseteq}_{+}^{\sharp}}{\vec{F}_{pe}^{\sharp}}(P)
\wedge
Q_e\in\textsf{Post}^\sharp(\sqb{\neg\texttt{B}}_{e}^{\sharp}\mathbin{\sqcup_{e}^{\sharp}}\sqb{\texttt{B;S}}_{b}^{\sharp})\{P_e\}\wedge{} 
Q_{\bot\ell}\in \textsf{Post}^\sharp(\sqb{\texttt{B;S}}_{\bot}^{\sharp})\{P_e\}
\wedge{}
Q_{\bot b}=\Gfp{{\sqsubseteq}_{\infty}^{\sharp}}{F_{p\bot}^{\sharp}}
 \wedge{}
P\in\mathcal{P}\}\subseteq \mathcal{R}}{def.\ set equality}\\
$\Leftrightarrow$
\formulaexplanation{\{\triple{e:Q_e}{\bot:Q_{\bot\ell}\mathbin{{\sqcup}_{\infty}^{\sharp}}{}
Q_{\bot b}}{br:P_{br}}\mid {}
\exists P_e\mathrel{.}P_e=\Lfp{{\sqsubseteq}_{+}^{\sharp}}{\vec{F}_{pe}^{\sharp}}(P)
\wedge
\{Q_e\}\subseteq\textsf{Post}^\sharp(\sqb{\neg\texttt{B}}_{e}^{\sharp}\mathbin{\sqcup_{e}^{\sharp}}\sqb{\texttt{B;S}}_{b}^{\sharp})\{P_e\}\wedge{} 
\{Q_{\bot\ell}\}\subseteq\textsf{Post}^\sharp(\sqb{\texttt{B;S}}_{\bot}^{\sharp})\{P_e\}
\wedge{}
Q_{\bot b}=\Gfp{{\sqsubseteq}_{\infty}^{\sharp}}{F_{p\bot}^{\sharp}}
 \wedge{}
P\in\mathcal{P}\}\subseteq \mathcal{R}}{def.\ $\in$ and $\subseteq$}\\
$\Leftrightarrow$
\formula{\{\triple{e:Q_e}{\bot:Q_{\bot\ell}\mathbin{{\sqcup}_{\infty}^{\sharp}}{}
Q_{\bot b}}{br:P_{br}}\mid {}
\exists P_e\mathrel{.}P_e=\Lfp{{\sqsubseteq}_{+}^{\sharp}}{\vec{F}_{pe}^{\sharp}}(P)\wedge
\{Q_e\}\subseteq\textsf{Post}^\sharp(\sqb{\neg\texttt{B}}_{e}^{\sharp}\mathbin{\sqcup_{e}^{\sharp}}\sqb{\texttt{B;S}}_{b}^{\sharp})\{P_e\}\wedge{} 
\underline{\llbrace}\,\{P_e\}\,\underline{\rrbrace}\,\texttt{{B;S}}\,\underline{\llbrace}\,\{Q_{\bot\ell}\}\,\underline{\rrbrace}
\wedge{}
Q_{\bot b}=\Gfp{{\sqsubseteq}_{\infty}^{\sharp}}{F_{p\bot}^{\sharp}}\wedge{}
P\in\mathcal{P}\}\subseteq \mathcal{R}}\\[-0.5ex]
\rightexplanation{def.\ (\ref{eq:def:abstract:logical:triples}) of $\underline{\llbrace}\,\mathcal{P}\,\underline{\rrbrace}\,\texttt{S}\,\underline{\llbrace}\,\mathcal{Q}\,\underline{\rrbrace}$ $\triangleq$ ($\mathcal{Q}\subseteq\textsf{Post}^\sharp\sqb{\texttt{S}}^\sharp\mathcal{P}$)}\\
$\Leftrightarrow$
\formulaexplanation{\{\triple{e:Q_e}{\bot:Q_{\bot\ell}\mathbin{{\sqcup}_{\infty}^{\sharp}}{}
Q_{\bot b}}{br:P_{br}}\mid {}
\exists P_e\mathrel{.}P_e=\Lfp{{\sqsubseteq}_{+}^{\sharp}}{\vec{F}_{pe}^{\sharp}}(P)\wedge
\{Q_e\}\subseteq\textsf{Post}^\sharp(\sqb{\neg\texttt{B}}_{e}^{\sharp})\{P_e\}\mathbin{\dot{\sqcup}_{e}^{\sharp}}\textsf{Post}^\sharp(\sqb{\texttt{B;S}}_{b}^{\sharp})\{P_e\}\wedge{} 
\underline{\llbrace}\,\{P_e\}\,\underline{\rrbrace}\,\texttt{{B;S}}\,\underline{\llbrace}\,\{Q_{\bot\ell}\}\,\underline{\rrbrace}
\wedge{}
Q_{\bot b}=\Gfp{{\sqsubseteq}_{\infty}^{\sharp}}{F_{p\bot}^{\sharp}}\wedge{}
P\in\mathcal{P}\}\subseteq \mathcal{R}}{(\ref{eq:Post:abstract:dot-sqcup-sharp})}\\
$\Leftrightarrow$
\formulaexplanation{\{\triple{e:Q'_e}{\bot:Q_{\bot\ell}\mathbin{{\sqcup}_{\infty}^{\sharp}}{}
Q_{\bot b}}{br:P_{br}}\mid {}
\exists P_e\mathrel{.}P_e=\Lfp{{\sqsubseteq}_{+}^{\sharp}}{\vec{F}_{pe}^{\sharp}}(P)\wedge
\{Q'_e\}\subseteq\{Q_e\mathbin{\,\sqcup^{\sharp}_{e}\,}Q_b\mid \{Q_e\}\subseteq\textsf{Post}^\sharp(\sqb{\neg\texttt{B}}_{e}^{\sharp})\{P\}
\wedge \{Q_b\}\subseteq\textsf{Post}^\sharp(\sqb{\texttt{B;S}}_{b}^{\sharp})\{P\} \wedge P\in \{P_e\}\}\wedge{} 
\underline{\llbrace}\,\{P_e\}\,\underline{\rrbrace}\,\texttt{{B;S}}\,\underline{\llbrace}\,\{Q_{\bot\ell}\}\,\underline{\rrbrace}
\wedge{}
Q_{\bot b}=\Gfp{{\sqsubseteq}_{\infty}^{\sharp}}{F_{p\bot}^{\sharp}}\wedge{}
P\in\mathcal{P}\}\subseteq \mathcal{R}}{def.\ (\ref{eq:Post:abstract:dot-sqcup-sharp}) of $\mathbin{\,\dot{\sqcup}^{\sharp}_{e}\,}$}\\
$\Leftrightarrow$
\formulaexplanation{\{\triple{e:Q'_e}{\bot:Q_{\bot\ell}\mathbin{{\sqcup}_{\infty}^{\sharp}}{}
Q_{\bot b}}{br:P_{br}}\mid {}
\exists P_e\mathrel{.}P_e=\Lfp{{\sqsubseteq}_{+}^{\sharp}}{\vec{F}_{pe}^{\sharp}}(P')\wedge
\exists Q_e, Q_b, P\mathrel{.}
Q'_e=Q_e\mathbin{\,\sqcup^{\sharp}_{e}\,}Q_b\wedge \{Q_e\}\subseteq\textsf{Post}^\sharp(\sqb{\neg\texttt{B}}_{e}^{\sharp})\{P\}
\wedge \{Q_b\}\subseteq\textsf{Post}^\sharp(\sqb{\texttt{B;S}}_{b}^{\sharp})\{P\} \wedge P\in \{P_e\} \wedge{} 
\underline{\llbrace}\,\{P_e\}\,\underline{\rrbrace}\,\texttt{{B;S}}\,\underline{\llbrace}\,\{Q_{\bot\ell}\}\,\underline{\rrbrace}
\wedge{}
Q_{\bot b}=\Gfp{{\sqsubseteq}_{\infty}^{\sharp}}{F_{p\bot}^{\sharp}}\wedge{}
P'\in\mathcal{P}\}\subseteq \mathcal{R}}{def.\ singleton and $\subseteq$, renaming}\\
$\Leftrightarrow$
\formulaexplanation{\{\triple{e:Q_e\mathbin{\,\sqcup^{\sharp}_{e}\,}Q_b}{\bot:Q_{\bot\ell}\mathbin{{\sqcup}_{\infty}^{\sharp}}{}Q_{\bot b}}{br:P_{br}}\mid {}
\exists P_e\mathrel{.}P_e=\Lfp{{\sqsubseteq}_{+}^{\sharp}}{\vec{F}_{pe}^{\sharp}}(P')\wedge
\exists P\mathrel{.}
\{Q_e\}\subseteq\textsf{Post}^\sharp(\sqb{\neg\texttt{B}}_{e}^{\sharp})\{P\}
\wedge \{Q_b\}\subseteq\textsf{Post}^\sharp(\sqb{\texttt{B;S}}_{b}^{\sharp})\{P\} \wedge P\in \{P_e\} \wedge{} 
\underline{\llbrace}\,\{P_e\}\,\underline{\rrbrace}\,\texttt{{B;S}}\,\underline{\llbrace}\,\{Q_{\bot\ell}\}\,\underline{\rrbrace}
\wedge{}
Q_{\bot b}=\Gfp{{\sqsubseteq}_{\infty}^{\sharp}}{F_{p\bot}^{\sharp}}\wedge{}
P'\in\mathcal{P}\}\subseteq \mathcal{R}}{replacing $Q'_e$ by its value}\\
$\Leftrightarrow$
\formula{\{\triple{e:Q_e\mathbin{\,\sqcup^{\sharp}_{e}\,}Q_b}{\bot:Q_{\bot\ell}\mathbin{{\sqcup}_{\infty}^{\sharp}}{}Q_{\bot b}}{br:P_{br}}\mid {}
\exists P_e\mathrel{.}P_e=\Lfp{{\sqsubseteq}_{+}^{\sharp}}{\vec{F}_{pe}^{\sharp}}(P')\wedge
\{Q_e\}\subseteq\textsf{Post}^\sharp(\sqb{\neg\texttt{B}}_{e}^{\sharp})\{P_e\}
\wedge \{Q_b\}\subseteq\textsf{Post}^\sharp(\sqb{\texttt{B;S}}_{b}^{\sharp})\{P_e\} \wedge{} 
\underline{\llbrace}\,\{P_e\}\,\underline{\rrbrace}\,\texttt{{B;S}}\,\underline{\llbrace}\,\{Q_{\bot\ell}\}\,\underline{\rrbrace}
\wedge{}
Q_{\bot b}=\Gfp{{\sqsubseteq}_{\infty}^{\sharp}}{F_{p\bot}^{\sharp}}\wedge{}
P'\in\mathcal{P}\}\subseteq \mathcal{R}}\\[-0.5ex]\rightexplanation{corollary \ref{cor:underapproximation:one:element}}\\
$\Leftrightarrow$
\formula{\{\triple{e:Q_e\mathbin{\,\sqcup^{\sharp}_{e}\,}Q_b}{\bot:Q_{\bot\ell}\mathbin{{\sqcup}_{\infty}^{\sharp}}{}Q_{\bot b}}{br:P_{br}}\mid {}
\exists P_e\mathrel{.}P_e=\Lfp{{\sqsubseteq}_{+}^{\sharp}}{\vec{F}_{pe}^{\sharp}}(P')\wedge
\underline{\llbrace}\,\{P_e\}\,\underline{\rrbrace}\,\neg\texttt{B}\,\underline{\llbrace}\,\{Q_e\}\,\underline{\rrbrace}
\wedge 
\underline{\llbrace}\,\{P_e\}\,\underline{\rrbrace}\,\texttt{B;S}\,\underline{\llbrace}\,\{Q_b\}\,\underline{\rrbrace}
\wedge{} 
\underline{\llbrace}\,\{P_e\}\,\underline{\rrbrace}\,\texttt{{B;S}}\,\underline{\llbrace}\,\{Q_{\bot\ell}\}\,\underline{\rrbrace}
\wedge{}
Q_{\bot b}=\Gfp{{\sqsubseteq}_{\infty}^{\sharp}}{F_{p\bot}^{\sharp}}\wedge{}
P'\in\mathcal{P}\}\subseteq \mathcal{R}}\\[-0.25ex]
\rightexplanation{def.\ (\ref{eq:def:abstract:logical:triples}) of $\underline{\llbrace}\,\mathcal{P}\,\underline{\rrbrace}\,\texttt{S}\,\underline{\llbrace}\,\mathcal{Q}\,\underline{\rrbrace}$ $\triangleq$ ($\mathcal{Q}\subseteq\textsf{Post}^\sharp\sqb{\texttt{S}}^\sharp\mathcal{P}$)}\\
$\Leftrightarrow$
\formulaexplanation{(P_e=\Lfp{{\sqsubseteq}_{+}^{\sharp}}{\vec{F}_{pe}^{\sharp}}(P')\wedge
\underline{\llbrace}\,\{P_e\}\,\underline{\rrbrace}\,\neg\texttt{B}\,\underline{\llbrace}\,\{Q_e\}\,\underline{\rrbrace}
\wedge 
\underline{\llbrace}\,\{P_e\}\,\underline{\rrbrace}\,\texttt{B;S}\,\underline{\llbrace}\,\{Q_b\}\,\underline{\rrbrace}
\wedge{} 
\underline{\llbrace}\,\{P_e\}\,\underline{\rrbrace}\,\texttt{{B;S}}\,\underline{\llbrace}\,\{Q_{\bot\ell}\}\,\underline{\rrbrace}
\wedge{}
Q_{\bot b}=\Gfp{{\sqsubseteq}_{\infty}^{\sharp}}{F_{p\bot}^{\sharp}}\wedge{}
P'\in\mathcal{P})
\Rightarrow 
\triple{e:Q_e\mathbin{\,\sqcup^{\sharp}_{e}\,}Q_b}{\bot:Q_{\bot\ell}\mathbin{{\sqcup}_{\infty}^{\sharp}}{}Q_{\bot b}}{br:P_{br}}\in\mathcal{R}}{def.\ $\subseteq$}\\
$\Leftrightarrow$
\lastformulaexplanation{(P_e=\Lfp{{\sqsubseteq}_{+}^{\sharp}}{\vec{F}_{pe}^{\sharp}}(P')\wedge
\overline{\llbrace}\,\{P_e\}\,\overline{\rrbrace}\,\neg\texttt{B}\,\overline{\llbrace}\,\{Q_e\}\,\overline{\rrbrace}
\wedge 
\overline{\llbrace}\,\{P_e\}\,\overline{\rrbrace}\,\texttt{B;S}\,\overline{\llbrace}\,\{Q_b\}\,\overline{\rrbrace}
\wedge{} 
\overline{\llbrace}\,\{P_e\}\,\overline{\rrbrace}\,\texttt{{B;S}}\,\overline{\llbrace}\,\{Q_{\bot\ell}\}\,\overline{\rrbrace}
\wedge{}
Q_{\bot b}=\Gfp{{\sqsubseteq}_{\infty}^{\sharp}}{F_{p\bot}^{\sharp}}\wedge{}
P'\in\mathcal{P})
\Rightarrow 
\triple{e:Q_e\mathbin{\,\sqcup^{\sharp}_{e}\,}Q_b}{\bot:Q_{\bot\ell}\mathbin{{\sqcup}_{\infty}^{\sharp}}{}Q_{\bot b}}{br:P_{br}}\in\mathcal{R}}{lemma \ref{singleton:under=over}}{\mbox{\qed}}
\end{calculus}
\let\qed\relax
\end{proof}
\noindent Propositions \ref{prop:Tarski} and \ref{prop:Tarski:constructive} can be used to characterize the
fixpoints of increasing functions in (\ref{eq:logic:abstract:while:upper}).

\subsection{Calculational Design of the Proof System of the Lower Abstract Logic}

Apart from (\ref{eq:logic:abstract:assignment})---(\ref{eq:logic:abstract:seq}), the sound and complete induction rules for the lower abstract logic are constructed by calculational design as follows.
\begin{theorem}[Lower abstract logic proof system]\label{th:lower-abstract-logic-proof-system}\proofinapx\quad
If\/ $\mathbb{D}^{\sharp}$ is a well-defined increasing and decreasing chain-complete join semilattice with right upper continuous sequential composition $\mathbin{{\fatsemi}^{\sharp}}$ then
\begin{eqntabular}{c@{\qquad}}
\frac{\:\parbox{0.875\textwidth}{$\forall Q\in\mathcal{Q}\mathrel{.}
\exists P\in\mathcal{P}, Q_1, Q_2\mathrel{.}
\underline{\llbrace}\,\{P\}\,\underline{\rrbrace}\,{\texttt{B;S}_1}\,\underline{\llbrace}\,\{Q_1\}\,\underline{\rrbrace}
\wedge
\underline{\llbrace}\,\{P\}\,\underline{\rrbrace}\,{\neg\texttt{B;S}_2}\,\underline{\llbrace}\,\{Q_2\}\,\underline{\rrbrace}
\wedge
Q=Q_1\mathbin{\,\sqcup^{\sharp}\,}Q_2$}\:}{\underline{\llbrace}\,\mathcal{P}\,\underline{\rrbrace}\,\texttt{if (B) S}_1\texttt{ else S}_2\,\underline{\llbrace}\,\mathcal{Q}\,\underline{\rrbrace}}
\label{eq:logic:abstract:if:lower}
\\[1ex]
\frac{\:\parbox{0.875\textwidth}{\centering$
\forall \triple{e:Q'_e}{\bot:Q'_{\bot}}{br:{Q'_{br}}}\in\mathcal{Q}\mathrel{.}
\exists Q_e ,Q_b, Q_{\bot\ell}, Q_{\bot b}, P_e\mathrel{.}
{Q'_e}={Q_e\mathbin{\,\sqcup^{\sharp}_{e}\,}Q_b}
\wedge
{Q'_{\bot}}={Q_{\bot\ell}\mathbin{{\sqcup}_{\infty}^{\sharp}}{}Q_{\bot b}}
\wedge
{Q'_{br}}={P'_{br}}
\wedge
P_e=\Lfp{{\sqsubseteq}_{+}^{\sharp}}{\vec{F}_{pe}^{\sharp}}(P')
\wedge
\underline{\llbrace}\,\{P_e\}\,\underline{\rrbrace}\,\neg\texttt{B}\,\underline{\llbrace}\,\{Q_e\}\,\underline{\rrbrace}
\wedge 
\underline{\llbrace}\,\{P_e\}\,\underline{\rrbrace}\,\texttt{B;S}\,\underline{\llbrace}\,\{Q_b\}\,\underline{\rrbrace}
\wedge{} 
\underline{\llbrace}\,\{P_e\}\,\underline{\rrbrace}\,\texttt{{B;S}}\,\underline{\llbrace}\,\{Q_{\bot\ell}\}\,\underline{\rrbrace}
\wedge{}
Q_{\bot b}=\Gfp{{\sqsubseteq}_{\infty}^{\sharp}}{F_{p\bot}^{\sharp}}
\wedge{}
P'\in\mathcal{P}$}\:}{\underline{\llbrace}\,\mathcal{P}\,\underline{\rrbrace}\,\texttt{while (B) S}\,\underline{\llbrace}\,\mathcal{Q}\,\underline{\rrbrace}}
\label{eq:logic:abstract:while:lower}
\end{eqntabular}
\end{theorem}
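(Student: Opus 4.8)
The plan is to derive both rules of Theorem~\ref{th:lower-abstract-logic-proof-system} by the very same calculational method already used for the upper abstract logic in Theorem~\ref{th:upper-abstract-logic-proof-system}, simply dualizing the order. Where the upper triple $\overline{\llbrace}\,\mathcal{P}\,\overline{\rrbrace}\,\texttt{S}\,\overline{\llbrace}\,\mathcal{Q}\,\overline{\rrbrace}$ unfolds by (\ref{eq:def:abstract:logical:triples}) to $\textsf{Post}^\sharp\sqb{\texttt{S}}^\sharp\mathcal{P}\subseteq\mathcal{Q}$, the lower triple $\underline{\llbrace}\,\mathcal{P}\,\underline{\rrbrace}\,\texttt{S}\,\underline{\llbrace}\,\mathcal{Q}\,\underline{\rrbrace}$ unfolds to $\mathcal{Q}\subseteq\textsf{Post}^\sharp\sqb{\texttt{S}}^\sharp\mathcal{P}$. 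The cases (\ref{eq:logic:abstract:assignment})---(\ref{eq:logic:abstract:seq}) use only the $\textsf{Post}^\sharp$ characterizations of Theorem~\ref{th:upper-abstract-hyper-properties} together with $\subseteq$-monotonicity of $\textsf{Post}^\sharp(S)$ (from the Galois connection (\ref{eq:GC:abstract:transformers})); since these are insensitive to the direction of $\bowtie$, they are obtained verbatim as anticipated by the $\bowtie$ convention preceding Theorem~\ref{th:upper-abstract-logic-proof-system}, so only the conditional (\ref{eq:logic:abstract:if:lower}) and the iteration (\ref{eq:logic:abstract:while:lower}) require genuinely new calculations.

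For the conditional I would chain equivalences from $\mathcal{Q}\subseteq\textsf{Post}^\sharp\sqb{\texttt{if (B) S}_1\texttt{ else S}_2}^{\sharp}\mathcal{P}$, rewriting the right-hand side by (\ref{eq:Post:abstract:if}) as $(\textsf{Post}^\sharp\sqb{\texttt{B;S}_1}^{\sharp}\mathbin{\,\dot{\sqcup}^{\sharp}\,}\textsf{Post}^\sharp\sqb{\neg\texttt{B;S}_2}^{\sharp})\mathcal{P}$ and expanding the lifted join by its definition (\ref{eq:Post:abstract:dot-sqcup-sharp}) into $\{Q_1\mathbin{\,\sqcup^{\sharp}\,}Q_2\mid Q_1\in\textsf{Post}^\sharp\sqb{\texttt{B;S}_1}^{\sharp}\{P\}\wedge Q_2\in\textsf{Post}^\sharp\sqb{\neg\texttt{B;S}_2}^{\sharp}\{P\}\wedge P\in\mathcal{P}\}$. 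Reading $\mathcal{Q}\subseteq(\cdots)$ through the definitions of $\subseteq$ and $\in$ turns membership of each $Q\in\mathcal{Q}$ into the existence of a witness $P\in\mathcal{P}$ and of $Q_1,Q_2$ with $\{Q_i\}\subseteq\textsf{Post}^\sharp\sqb{\cdots}^{\sharp}\{P\}$ and $Q=Q_1\mathbin{\,\sqcup^{\sharp}\,}Q_2$; reinterpreting $\{Q_i\}\subseteq\textsf{Post}^\sharp\sqb{\cdots}^{\sharp}\{P\}$ via (\ref{eq:def:abstract:logical:triples}) as singleton lower triples yields exactly the premise of (\ref{eq:logic:abstract:if:lower}). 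This is the order-dual of the conditional step in Theorem~\ref{th:upper-abstract-logic-proof-system}, with the upper $\forall P\Rightarrow(\cdots\in\mathcal{Q})$ pattern replaced by the under-approximation pattern $\forall Q\,\exists P\,(\cdots\wedge Q=\cdots)$.

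For the iteration I would expand $\mathcal{Q}\subseteq\textsf{Post}^\sharp\sqb{\texttt{while (B) S}}^{\sharp}\mathcal{P}$ by the fixpoint characterization (\ref{eq:Post:abstract:while}), which presents the postcondition as a set of triples $\triple{e:Q_e}{\bot:Q_{\bot\ell}\mathbin{{\sqcup}_{\infty}^{\sharp}}Q_{\bot b}}{br:P_{br}}$ whose components are drawn from $\textsf{Post}^\sharp$ applied to the exact fixpoints $\Lfp{\breve{\sqsubseteq}_{+}^{\sharp}}{\breve{\vec{F}}_{pe}^{\sharp}}(P)$ and $\Gfp{{\breve{\sqsubseteq}}_{\infty}^{\sharp}}{\breve{F}_{p\bot}^{\sharp}}$. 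Proposition~\ref{prop:prop:singletonization} collapses these to the singletons $\{\Lfp{{\sqsubseteq}_{+}^{\sharp}}{\vec{F}_{pe}^{\sharp}}(P)\}$ and $\{\Gfp{{\sqsubseteq}_{\infty}^{\sharp}}{F_{p\bot}^{\sharp}}\}$; pushing $\mathcal{Q}\subseteq(\cdots)$ through $\subseteq$ and $\in$ then forces, for each target triple in $\mathcal{Q}$, an $e$-component decomposition $Q'_e=Q_e\mathbin{\,\sqcup^{\sharp}_{e}\,}Q_b$ and a $\bot$-component decomposition $Q'_\bot=Q_{\bot\ell}\mathbin{{\sqcup}_{\infty}^{\sharp}}Q_{\bot b}$ with singleton lower triples on $\neg\texttt{B}$ and $\texttt{B;S}$, while the exact side-conditions $P_e=\Lfp{{\sqsubseteq}_{+}^{\sharp}}{\vec{F}_{pe}^{\sharp}}(P')$ and $Q_{\bot b}=\Gfp{{\sqsubseteq}_{\infty}^{\sharp}}{F_{p\bot}^{\sharp}}$ survive unchanged. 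Corollary~\ref{cor:underapproximation:one:element} is then invoked to contract the inner under-approximation to the single precondition $P_e$, exactly as in the upper case, giving the premise of (\ref{eq:logic:abstract:while:lower}).

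The main obstacle will be the iteration: the simultaneous decomposition of each target triple in $\mathcal{Q}$ into the four witnesses $Q_e,Q_b,Q_{\bot\ell},Q_{\bot b}$ must be threaded through the $\forall Q\in\mathcal{Q}\,\exists P'\in\mathcal{P}$ structure while keeping the least and greatest fixpoints \emph{exact} rather than merely approximated. The delicate point is the nonterminating component $Q_{\bot\ell}\mathbin{{\sqcup}_{\infty}^{\sharp}}Q_{\bot b}$, which mixes a genuine under-approximation (the body-divergence part, expressed as a singleton lower triple on $\texttt{B;S}$) with an exact greatest fixpoint (the infinite-iteration part); one must check that no spurious approximation slips in on either side. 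Here lemma~\ref{singleton:under=over}, which identifies singleton upper and lower triples, and lemma~\ref{lem:join:preserving:logics}.b, which reduces set-level triples to quantified singleton triples, do the essential work of keeping every step at the level of singletons where upper and lower reasoning coincide.
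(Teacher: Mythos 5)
Your proposal is correct and follows essentially the same route as the paper: both unfold the lower triple to $\mathcal{Q}\subseteq\textsf{Post}^\sharp\sqb{\texttt{S}}^\sharp\mathcal{P}$, expand via (\ref{eq:Post:abstract:if}) and (\ref{eq:Post:abstract:while}), read the inclusion as a $\forall Q\in\mathcal{Q}\,\exists$-witness pattern, collapse the fixpoints to singletons by Proposition~\ref{prop:prop:singletonization}, and reduce to singleton triples via Corollary~\ref{cor:underapproximation:one:element} and Lemma~\ref{singleton:under=over}. The paper's own iteration case is literally "following the same development as for the previous proof of (\ref{eq:logic:abstract:while:upper})" followed by one application of the definition of $\subseteq$, which is exactly the dualization you describe.
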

\begin{toappendix}
\begin{proof}[Proof of theorem \ref{th:lower-abstract-logic-proof-system}]
\begin{calculus}[$\Leftrightarrow$\ \ ]
\hyphen{5}\formula{\underline{\llbrace}\,\mathcal{P}\,\underline{\rrbrace}\,\texttt{if (B) S}_1\texttt{ else S}_2\,\underline{\llbrace}\,\mathcal{Q}\,\underline{\rrbrace}}\\
$\Leftrightarrow$
\formulaexplanation{\mathcal{Q}\subseteq\textsf{Post}^\sharp\sqb{\texttt{\texttt{if (B) S}$_1$\texttt{ else S}$_2$}}^\sharp\mathcal{P}}{def.\ (\ref{eq:def:abstract:logical:triples}) of the logic triples}\\
$\Leftrightarrow$
\formulaexplanation{\mathcal{Q}\subseteq(\textsf{Post}^\sharp\sqb{\texttt{B;S}_1}^{\sharp}
\mathbin{\,\dot{\sqcup}^{\sharp}\,}
\textsf{Post}^\sharp\sqb{\neg\texttt{B;S}_2}^{\sharp})\mathcal{P}}{(\ref{eq:Post:abstract:if})}\\
$\Leftrightarrow$
\formulaexplanation{\mathcal{Q}\subseteq \{Q_1\mathbin{\,\sqcup^{\sharp}\,}Q_2\mid  
Q_1\in \textsf{Post}^\sharp\sqb{\texttt{B;S}_1}^{\sharp}\{P\}
\wedge
Q_2\in \textsf{Post}^\sharp\sqb{\neg\texttt{B;S}_2}^{\sharp}\{P\} 
\wedge
P\in\mathcal{P}\}}{def.\ (\ref{eq:Post:abstract:dot-sqcup-sharp}) of $\mathbin{\,\dot{\sqcup}^{\sharp}\,}$}\\
$\Leftrightarrow$
\formula{\forall Q\in\mathcal{Q}\mathrel{.}
\exists Q_1, Q_2,P\mathrel{.}
Q_1\in \textsf{Post}^\sharp\sqb{\texttt{B;S}_1}^{\sharp}\{P\}
\wedge
Q_2\in \textsf{Post}^\sharp\sqb{\neg\texttt{B;S}_2}^{\sharp}\{P\} 
\wedge
P\in\mathcal{P}
\wedge
Q=Q_1\mathbin{\,\sqcup^{\sharp}\,}Q_2}\\[-0.5ex]
\rightexplanation{def.\ $\subseteq$}\\
$\Leftrightarrow$
\formulaexplanation{\forall Q\in\mathcal{Q}\mathrel{.}
\exists Q_1, Q_2,P\mathrel{.}
\{Q_1\}\subseteq\textsf{Post}^\sharp\sqb{\texttt{B;S}_1}^{\sharp}\{P\}
\wedge
\{Q_2\}\subseteq\textsf{Post}^\sharp\sqb{\neg\texttt{B;S}_2}^{\sharp}\{P\} 
\wedge
P\in\mathcal{P}
\wedge
Q=Q_1\mathbin{\,\sqcup^{\sharp}\,}Q_2}{def.\ $\subseteq$ for singleton}\\
$\Leftrightarrow$
\formula{\forall Q\in\mathcal{Q}\mathrel{.}
\exists Q_1, Q_2,P\mathrel{.}
\underline{\llbrace}\,\{P\}\,\underline{\rrbrace}\,{\texttt{B;S}_1}\,\underline{\llbrace}\,\{Q_1\}\,\underline{\rrbrace}
\wedge
\underline{\llbrace}\,\{P\}\,\underline{\rrbrace}\,{\neg\texttt{B;S}_2}\,\underline{\llbrace}\,\{Q_2\}\,\underline{\rrbrace}
\wedge
P\in\mathcal{P}
\wedge
Q=Q_1\mathbin{\,\sqcup^{\sharp}\,}Q_2}\\[-0.5ex]
\rightexplanation{def.\ (\ref{eq:def:abstract:logical:triples}) of the logic triples}\\[1em]

\hyphen{5}\formula{\underline{\llbrace}\,\mathcal{P}\,\underline{\rrbrace}\,\texttt{while (B) S}\,\underline{\llbrace}\,\mathcal{Q}\,\underline{\rrbrace}}\\
$\Leftrightarrow$
\formulaexplanation{\mathcal{Q}\subseteq\textsf{Post}^\sharp\sqb{\texttt{while (B) S}}^\sharp\mathcal{P}}{def.\ (\ref{eq:def:abstract:logical:triples}) of the logic triples}\\
$\Leftrightarrow$
\formulaexplanation{\mathcal{Q}\subseteq\{\triple{e:Q_e}{\bot:Q_{\bot\ell}\mathbin{{\sqcup}_{\infty}^{\sharp}}{}
Q_{\bot b}}{br:P_{br}}\mid {}
Q_e\in\textsf{Post}^\sharp(\sqb{\neg\texttt{B}}_{e}^{\sharp}\mathbin{\sqcup_{e}^{\sharp}}\sqb{\texttt{B;S}}_{b}^{\sharp})(\Lfp{\breve{\sqsubseteq}_{+}^{\sharp}}{\breve{\vec{F}}_{pe}^{\sharp}}(P))\wedge{}
\ Q_{\bot\ell}\in \textsf{Post}^\sharp(\sqb{\texttt{B;S}}_{\bot}^{\sharp})(\Lfp{\breve{\sqsubseteq}_{+}^{\sharp}}{\breve{\vec{F}}_{pe}^{\sharp}}(P))\wedge{}
Q_{\bot b}\in\Gfp{{\breve{\sqsubseteq}}_{\infty}^{\sharp}}({\breve{F}_{p\bot}^{\sharp}})\wedge
P\in\mathcal{P}\}}{(\ref{eq:Post:abstract:while})}\\
$\Leftrightarrow$
\formula{\mathcal{Q}\subseteq\{\triple{e:Q_e\mathbin{\,\sqcup^{\sharp}_{e}\,}Q_b}{\bot:Q_{\bot\ell}\mathbin{{\sqcup}_{\infty}^{\sharp}}{}Q_{\bot b}}{br:P_{br}}\mid {}
\exists P_e\mathrel{.}P_e=\Lfp{{\sqsubseteq}_{+}^{\sharp}}{\vec{F}_{pe}^{\sharp}}(P')\wedge
\underline{\llbrace}\,\{P_e\}\,\underline{\rrbrace}\,\neg\texttt{B}\,\underline{\llbrace}\,\{Q_e\}\,\underline{\rrbrace}
\wedge 
\underline{\llbrace}\,\{P_e\}\,\underline{\rrbrace}\,\texttt{B;S}\,\underline{\llbrace}\,\{Q_b\}\,\underline{\rrbrace}
\wedge{} 
\underline{\llbrace}\,\{P_e\}\,\underline{\rrbrace}\,\texttt{{B;S}}\,\underline{\llbrace}\,\{Q_{\bot\ell}\}\,\underline{\rrbrace}
\wedge{}
Q_{\bot b}=\Gfp{{\sqsubseteq}_{\infty}^{\sharp}}{F_{p\bot}^{\sharp}}\wedge{}
P'\in\mathcal{P}\}}\\[-0.25ex]
\rightexplanation{following the same development as for the previous proof of (\ref{eq:logic:abstract:while:upper})}\\
$\Leftrightarrow$
\lastformulaexplanation{
\forall \triple{e:Q'_e}{\bot:Q'_{\bot}}{br:{Q'_{br}}}\in\mathcal{Q}\mathrel{.}
\exists Q_e ,Q_b, Q_{\bot\ell}, Q_{\bot b}, P_e\mathrel{.}
{Q'_e}={Q_e\mathbin{\,\sqcup^{\sharp}_{e}\,}Q_b}
\wedge
{Q'_{\bot}}={Q_{\bot\ell}\mathbin{{\sqcup}_{\infty}^{\sharp}}{}Q_{\bot b}}
\wedge
{Q'_{br}}={P'_{br}}
\wedge
P_e=\Lfp{{\sqsubseteq}_{+}^{\sharp}}{\vec{F}_{pe}^{\sharp}}(P')
\wedge
\underline{\llbrace}\,\{P_e\}\,\underline{\rrbrace}\,\neg\texttt{B}\,\underline{\llbrace}\,\{Q_e\}\,\underline{\rrbrace}
\wedge 
\underline{\llbrace}\,\{P_e\}\,\underline{\rrbrace}\,\texttt{B;S}\,\underline{\llbrace}\,\{Q_b\}\,\underline{\rrbrace}
\wedge{} 
\underline{\llbrace}\,\{P_e\}\,\underline{\rrbrace}\,\texttt{{B;S}}\,\underline{\llbrace}\,\{Q_{\bot\ell}\}\,\underline{\rrbrace}
\wedge{}
Q_{\bot b}=\Gfp{{\sqsubseteq}_{\infty}^{\sharp}}{F_{p\bot}^{\sharp}}
\wedge{}
P'\in\mathcal{P}}{def.\ $\subseteq$}{\mbox{\qed}}
\end{calculus}
\let \qed\relax
\end{proof}
\end{toappendix}

\begin{flushleft}
\bigskip
\hypertarget{PARTII}{\Large\textbf{\textsc{Part II:\ Abstraction of Semantics, Execution Properties, Semantic}}}\\
\phantom{\Large\textbf{\textsc{Part II:\ }}}{\Large\textbf{\textsc{(Hyper) Properties, Calculi, and Logics}}}
\end{flushleft}

Since hyperlogics deal with properties of semantics, there are four levels at which an abstraction can be applied.
\bgroup\abovedisplayskip0pt\belowdisplayskip0pt
\begin{eqntabular}[fl]{Pc{0.925\textwidth}}
\begin{enumerate}[topsep=0pt,partopsep=0pt,parsep=0pt,itemsep=3pt,leftmargin=1.5em,labelwidth=1.5em,label=(\arabic*),ref=(\ref{eq:semantic-abstraction}.\arabic*)]
\item \label{enum:semantic-abstraction} The first level is that of the program semantics considered in appendix sect.\@ \ref{sec:AbstractionAbstractSemantics} and illustrated by the relational semantics in example \ref{ex:relational-semantics} abstracting the trace semantics of sect.\@ \ref{sect:Trace-Semantics}. This abstraction is common in transformational logics \cite{DBLP:journals/pacmpl/Cousot24} such as Hoare logic \cite{DBLP:journals/cacm/Hoare69} but also in hyperlogics \cite{DBLP:journals/afp/Dardinier23a,DBLP:conf/pldi/DardinierM24};
\item \label{enum:property-abstraction}The second level is that of program properties of sect.\@ \ref{sect:Execution-Properties};
\item \label{enum:hyper-property-abstraction}The third level is that of program hyperproperties of sect.\@ \ref{sec:Calculus:Hyper:Properties};
\item \label{enum:hyper-logic-abstraction}The fourth level is that of the abstract logics of sect.\@ \ref{sec:Abstract-Logic-Semantic-Properties}.
\end{enumerate}\label{eq:semantic-abstraction}
\end{eqntabular}
\egroup
Because logics are required to be sound and complete, abstractions should be exact so that any proof of abstract properties in the concrete should be doable in the abstract. This relies on Galois retractions in sect.\@ \ref{sec:Galois:Connections:Retractions}. The main result is that the abstraction of a
logic of semantic (hyper) properties of sect.\@ \ref{sec:Abstract-Logic-Semantic-Properties} is a a
logic of semantic (hyper) properties.

\section{Abstraction of the Abstract Semantics}\label{sec:AbstractionAbstractSemantics}
We show that the abstraction of an instance of the abstract semantics is itself an instance of the abstract semantics.

\begin{definition}[Semantic abstraction]\label{def:exact:abstraction}We say that ${\bar{\mathbb{D}}^{{\sharp}}}\triangleq\pair{\bar{\mathbb{D}}^{\sharp}_{+}}{\bar{\mathbb{D}}^{\sharp}_{\infty}}$ is an exact (respectively approximate) 
abstraction of an abstract domain ${\mathbb{D}^{\sharp}}\triangleq\pair{\mathbb{D}^{\sharp}_{+}}{\mathbb{D}^{\sharp}_{\infty}}$ if and only if
\begin{enumerate}[leftmargin=*,itemsep=3pt,label={\Alph*}.,ref=\ref{def:exact:abstraction}.{\Alph*},labelsep=0.75em]

\item \label{def:abstraction:finite:GC}
There exists a Galois retraction $\pair{\mathbb{L}^{\sharp}_{+}}{\sqsubseteq_{+}^{\sharp}}\galoiS{\alpha_{+}}{\gamma_{+}}\pair{\bar{\mathbb{L}}^{\sharp}_{+}}{\bar{\sqsubseteq}_{+}^{\sharp}}$;

\item  \label{def:abstraction:operators}
$\alpha_{+}({\textsf{init}^{\sharp}})={\overline{\textsf{init}}^{\sharp}}$,
$\alpha_{+}\comp{\textsf{assign}^{\sharp}\sqb{\texttt{x},\texttt{A}}}={\overline{\textsf{assign}}^{\sharp}\sqb{\texttt{x},\texttt{A}}}\comp\alpha_{+}$,
$\alpha_{+}\comp{\textsf{rassign}^{\sharp}\sqb{\texttt{x},a,b}}={\overline{\textsf{rassign}}^{\sharp}\sqb{\texttt{x},a,b}}$ $\comp$ $\alpha_{+}$,
$\alpha_{+}\comp{\textsf{test}^{\sharp}\sqb{\texttt{B}}}={\overline{\textsf{test}}^{\sharp}\sqb{\texttt{B}}}\comp\alpha_{+}$, $\alpha_{+}({\textsf{break}^{\sharp}})={\overline{\textsf{break}}^{\sharp}}$, and
$\alpha_{+}({\textsf{skip}^{\sharp}})={\overline{\textsf{skip}}^{\sharp}}$;

\item \label{def:abstraction:infinite:GC}
There exists a Galois retraction $\pair{\mathbb{L}^{\sharp}_{\infty}}{\sqsupseteq_{\infty}^{\sharp}}\galoiS{\alpha_{{\infty}}}{\gamma_{{\infty}}}\pair{\bar{\mathbb{L}}^{\sharp}_{{\infty}}}{\bar{\sqsupseteq}_{{\infty}}^{\sharp}}$ (i.e. ${\alpha_{{\infty}}}$ preserves existing ${\sqcap_{\infty}^{\sharp}}$);

\item \label{def:abstraction:seq}
For $S\in{\mathbb{L}^{\sharp}_{+}}$,
$\alpha_{+}(S\mathbin{{\fatsemi}^{\sharp}}S')=
\alpha_{+}(S)\mathbin{\bar{\fatsemi}^{\sharp}}\alpha_{+}(S')$ when $S'\in{\mathbb{L}^{\sharp}_{+}}$ and 
$\alpha_{\infty}(S\mathbin{{\fatsemi}^{\sharp}}S')=
\alpha_{\infty}(S)\mathbin{\bar{\fatsemi}^{\sharp}}\alpha_{\infty}(S')$ when $S'\in{\mathbb{L}^{\sharp}_{\infty}}$.
\end{enumerate}
(respectively ``${\bar{\sqsubseteq}_{+}^{\sharp}}$'' or ``${\bar{\sqsupseteq}_{{\infty}}^{\sharp}}$'' instead of ``='' and $\galois{}{}$ instead of $\galoiS{}{}$ for approximate abstractions);
\end{definition}
Following (\ref{eq:def:Abstract-Semantic-Domain-Semantics}), the abstraction of the semantic domain and semantics are
\begin{eqntabular}{rcl}
{\bar{\mathbb{L}}^{\sharp}}&\triangleq&(e:\bar{{\mathbb{L}}}^{\sharp}_{+}\times\bot:\bar{{\mathbb{L}}}^{\sharp}_{\infty}\times{br:\bar{\mathbb{L}}^{\sharp}_{+}})\label{eq:exact:abstraction}\\
\alpha(\triple{e:S_{+}}{\bot:S_{\infty}}{br:S_{br}})&\triangleq&\triple{e:\alpha_{+}(S_{+})}{\bot:\alpha_{\infty}(S_{\infty})}{br:\alpha_{+}(S_{br})}\nonumber
\end{eqntabular}
are well-defined such that 
\bgroup\abovedisplayskip3pt\begin{eqntabular}{rcl}
\pair{\mathbb{L}^{\sharp}}{{\sqsubseteq}^{\sharp}}\galoiS{\alpha}{\gamma}\pair{\bar{\mathbb{L}}^{\sharp}}{\bar{\sqsubseteq}^{\sharp}}.
\label{eq:GC:semantic-abstraction}
\end{eqntabular}\egroup
\begin{lemma}\label{lem:exact:abstraction:well:defined}\proofinapx\quad An exact abstraction ${\bar{\mathbb{D}}^{{\sharp}}}\triangleq\pair{\bar{\mathbb{D}}^{\sharp}_{+}}{\bar{\mathbb{D}}^{\sharp}_{\infty}}$ of a well-defined concrete domain
${\mathbb{D}^{\sharp}}\triangleq\pair{\mathbb{D}^{\sharp}_{+}}{\mathbb{D}^{\sharp}_{\infty}}$ satisfying any one of the
hypotheses \ref{def:abstract:domain:well:def:init:neutral} to \ref{def:abstract:domain:well:def:increasing} to \ref{def:abstract:domain:well:limit-preserving} of definition \ref{def:abstract:domain:well:def} is a well-defined abstract  domain of the same nature.
\end{lemma}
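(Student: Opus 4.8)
The plan is to establish each of the four clauses \ref{def:abstract:domain:well:def:finite:domain}--\ref{def:abstract:domain:well:def:operators} of Definition \ref{def:abstract:domain:well:def} for $\bar{\mathbb{D}}^{\sharp}$, reducing every abstract statement to its concrete counterpart through the Galois retractions supplied by Definition \ref{def:exact:abstraction}. The structural observation I would record first is that, being a \emph{retraction}, $\pair{\alpha_{+}}{\gamma_{+}}$ satisfies $\alpha_{+}\comp\gamma_{+}=\mathbb{1}$, so $\rho_{+}\triangleq\gamma_{+}\comp\alpha_{+}$ is an upper closure operator on $\pair{\mathbb{L}^{\sharp}_{+}}{\sqsubseteq_{+}^{\sharp}}$ and $\gamma_{+}$ embeds $\pair{\bar{\mathbb{L}}^{\sharp}_{+}}{\bar{\sqsubseteq}_{+}^{\sharp}}$ order-isomorphically onto the image $\rho_{+}(\mathbb{L}^{\sharp}_{+})$; dually, \ref{def:abstraction:infinite:GC} makes $\gamma_{\infty}\comp\alpha_{\infty}$ an upper closure for $\sqsupseteq_{\infty}^{\sharp}$. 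I will also use repeatedly that a lower adjoint preserves every existing join, so $\alpha_{+}$ preserves existing $\sqcup_{+}^{\sharp}$ and, by order duality, $\alpha_{\infty}$ preserves existing $\sqcap_{\infty}^{\sharp}$.

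Clauses \ref{def:abstract:domain:well:def:finite:domain} and \ref{def:abstract:domain:well:def:infinite:domain} then follow from the closure results recalled in the Closures subsection (Morgan Ward's theorem \cite{Ward42} and its order dual): these guarantee that $\rho_{+}(\mathbb{L}^{\sharp}_{+})$ inherits the increasing chain-complete join-semilattice structure (respectively the complete-lattice structure) of $\mathbb{L}^{\sharp}_{+}$, with infimum $\rho_{+}(\bot_{+}^{\sharp})$ and join $\LAMBDA{X}\rho_{+}(\sqcup_{+}^{\sharp}X)$, and symmetrically for $\mathbb{L}^{\sharp}_{\infty}$; transporting this structure along the order-isomorphism $\gamma_{+}$ (resp.\ $\gamma_{\infty}$) yields the required structure on $\bar{\mathbb{L}}^{\sharp}_{+}$ (resp.\ $\bar{\mathbb{L}}^{\sharp}_{\infty}$), of exactly the same nature as the concrete one. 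Clause \ref{def:abstract:domain:well:def:abstract:operators} is immediate: by \ref{def:abstraction:operators} each abstract primitive is the $\alpha_{+}$-image of the corresponding concrete primitive, which lies in $\mathbb{L}^{\sharp}_{+}$ by the concrete \ref{def:abstract:domain:well:def:abstract:operators}, hence is a well-defined element of $\bar{\mathbb{L}}^{\sharp}_{+}$.

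The substance is clause \ref{def:abstract:domain:well:def:operators}. Associativity of $\bar{\fatsemi}^{\sharp}$ I would get from surjectivity of $\alpha$: writing abstract elements as $\alpha$-images, two applications of the commutation law \ref{def:abstraction:seq} collapse each grouping to $\alpha$ applied to the associated concrete product (the case of an infinite left operand being handled not by \ref{def:abstraction:seq} but by the absorption law \ref{def:abstract:domain:well:def:oo:absorbent}), after which concrete associativity closes the argument. Neutrality \ref{def:abstract:domain:well:def:init:neutral} and the absorption laws \ref{def:abstract:domain:well:def:bot:absorbent}, \ref{def:abstract:domain:well:def:oo:absorbent} transfer in the same style, using $\alpha_{+}(\bot_{+}^{\sharp})=\bar{\bot}_{+}^{\sharp}$ (the empty-join preservation of the lower adjoint). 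For the preservation hypotheses \ref{def:abstract:domain:well:def:increasing}--\ref{def:abstract:domain:well:limit-preserving} I would run one uniform computation: given the appropriate family in the abstract, replace each entry by an $\alpha$-image, commute the limit out through $\alpha$ (join preservation of $\alpha_{+}$ on the finitary side, meet preservation of $\alpha_{\infty}$ on the infinitary side), push $\fatsemi^{\sharp}$ inside using the concrete hypothesis, then commute back, closing the loop with \ref{def:abstraction:seq}; increasingness \ref{def:abstract:domain:well:def:increasing} needs only monotonicity of $\alpha$ and $\gamma$ and no limit preservation. Existence of the concrete limits at each step is guaranteed by the structure already established in \ref{def:abstract:domain:well:def:finite:domain}/\ref{def:abstract:domain:well:def:infinite:domain}.

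The main obstacle is the bookkeeping in clause \ref{def:abstract:domain:well:def:operators} around the four signatures of $\fatsemi^{\sharp}$ and the fact that the two retractions are oriented oppositely. The transfer is clean precisely because the fixpoint-relevant limits are the ones each adjoint preserves, namely existing $\sqcup_{+}^{\sharp}$ via $\alpha_{+}$ and existing $\sqcap_{\infty}^{\sharp}$ via $\alpha_{\infty}$, so I would be careful, whenever a hypothesis in \ref{def:abstract:domain:well:def:join:additive} names a limit of the ``other'' orientation, to match it against the limit that $\alpha$ actually preserves, and to route the mixed-type products $\mathbb{L}^{\sharp}_{+}\times\mathbb{L}^{\sharp}_{\infty}$ and $\mathbb{L}^{\sharp}_{\infty}\times\mathbb{L}^{\sharp}_{+}$ through the appropriate one of the two clauses of \ref{def:abstraction:seq} together with \ref{def:abstract:domain:well:def:oo:absorbent}. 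The approximate-abstraction variant would follow the identical skeleton, with each equality replaced by the relevant $\bar{\sqsubseteq}$-inequality furnished by the one-sided Galois connection.
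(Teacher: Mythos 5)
Your proposal is correct and follows essentially the same route as the paper, whose own proof is a two-sentence appeal to exactly the ingredients you develop: join preservation of the lower adjoint in a Galois connection and $\alpha\comp\gamma=\mathbb{1}$ in a retraction, combined with the commutation conditions of Definition~\ref{def:exact:abstraction}. Your version simply spells out clause by clause what the paper leaves implicit.
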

\begin{toappendix}
\begin{proof}[Proof of lemma \ref{lem:exact:abstraction:well:defined}]
Lemma \ref{lem:exact:abstraction:well:defined} follows from the fact that in Galois connections the abstraction preserves existing joins \cite[lemma 11.38]{Cousot-PAI-2021}. and in Galois retractions $\alpha\comp\gamma$ is the identity \cite[exercise 11.50]{Cousot-PAI-2021}.
\end{proof}
\end{toappendix}
\begin{theorem}\label{th:Abstraction:Abstract:Semantics}\proofinapx\quad 
If\/ $\bar{\mathbb{D}}^{\sharp}$ is an exact (respectively approximate)
abstraction of\/ $\mathbb{D}^{\sharp}$ then $\forall \texttt{S}\in\mathbb{S}\mathrel{.}\osqb{\texttt{S}}^\sharp=\alpha(\sqb{\texttt{S}}^{\sharp})$ (respectively ``\,${\bar{\sqsubseteq}^{\sharp}}\!$'' instead of ``='' for approximate abstractions).\ulstrut
\end{theorem}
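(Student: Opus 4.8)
The plan is to prove the statement by structural induction on the syntax of $\texttt{S}\in\mathbb{S}$, following exactly the inductive structure of the definition of the algebraic semantics in section~\ref{sec:definition:abstract:semantics}. The goal equation $\osqb{\texttt{S}}^\sharp=\alpha(\sqb{\texttt{S}}^{\sharp})$ decomposes componentwise by (\ref{eq:exact:abstraction}) into three obligations, namely $\osqb{\texttt{S}}_{e}^\sharp=\alpha_{+}(\sqb{\texttt{S}}_{e}^{\sharp})$, $\osqb{\texttt{S}}_{\bot}^\sharp=\alpha_{\infty}(\sqb{\texttt{S}}_{\bot}^{\sharp})$, and $\osqb{\texttt{S}}_{br}^\sharp=\alpha_{+}(\sqb{\texttt{S}}_{b}^{\sharp})$, so each case will be verified against these three projections. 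Throughout, I would rely on lemma~\ref{lem:exact:abstraction:well:defined} to guarantee that $\bar{\mathbb{D}}^{\sharp}$ is itself a well-defined abstract domain, so that the abstract semantics $\osqb{\texttt{S}}^\sharp$ is well-defined by theorem~\ref{th:abstract:well:defined} and the abstract transformers of section~\ref{sec:abstract:Iteration} exist on the abstract side.

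For the base cases (\ref{eq:def:sem:abstract:basis}), I would simply apply the commutation hypotheses of definition~\ref{def:abstraction:operators} together with the fact that, in a Galois retraction, $\alpha_{+}$ and $\alpha_{\infty}$ preserve least/greatest elements and existing joins/meets (so $\alpha_{+}({\bot_{+}^{\sharp}})={\bot_{+}^{\sharp}}$ abstractly, and dually for $\alpha_{\infty}({\bot_{\infty}^{\sharp}})$). For sequential composition and the conditional I would use the homomorphism property~\ref{def:abstraction:seq} of $\alpha$ with respect to $\mathbin{{\fatsemi}^{\sharp}}$, together with join-preservation of $\alpha_{+}$ (respectively meet-preservation of $\alpha_{\infty}$) coming from the Galois retraction~\ref{def:abstraction:finite:GC} and~\ref{def:abstraction:infinite:GC}, so that $\alpha$ distributes over the $\mathbin{\sqcup_{+}^{\sharp}}$ and $\mathbin{\sqcup_{\infty}^{\sharp}}$ appearing in the structural definitions (\ref{eq:def:abstract:sem:seq})--(\ref{eq:def:abstract:sem:if}). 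Each of these is a short calculation invoking the induction hypothesis on the immediate strict syntactic components $\texttt{S}'\lhd\texttt{S}$.

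The main obstacle is the iteration case \texttt{while (B) S}, where the semantics (\ref{eq:sem:abstract:finite})--(\ref{eq:sem:abstract:loop-nontermination}) is given by the least fixpoint $\Lfp{\sqsubseteq_{+}^{\sharp}}{\Cev{F}_{e}^{\sharp}}$ and the greatest fixpoint $\Gfp{\sqsubseteq_{\infty}^{\sharp}}{F_{\bot}^{\sharp}}$. Here I must show that $\alpha$ commutes with these fixpoints, i.e.\ that $\alpha_{+}(\Lfp{\sqsubseteq_{+}^{\sharp}}{\Cev{F}_{e}^{\sharp}})$ equals the least fixpoint of the corresponding abstract transformer $\Cev{\bar F}_{e}^{\sharp}$, and dually for the infinite greatest fixpoint. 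For the least fixpoint this follows from a fixpoint-abstraction theorem for Galois retractions (as in \cite[Chapter~18]{Cousot-PAI-2021}): since $\alpha_{+}$ preserves existing joins and commutes with the transformer through the semantic-commutation identity $\alpha_{+}\comp{\Cev{F}_{e}^{\sharp}}=\Cev{\bar F}_{e}^{\sharp}\comp\alpha_{+}$ (derived from~\ref{def:abstraction:operators} and~\ref{def:abstraction:seq}), the images of the transfinite iterates from ${\bot_{+}^{\sharp}}$ are exactly the abstract iterates, and passing to the limit yields commutation. The delicate point, flagged already in remark~\ref{rem:least-versus-greatest-in-semantics}, is the \emph{greatest} fixpoint: the dual abstraction theorem requires $\alpha_{\infty}$ to preserve existing meets, which is precisely what the Galois retraction on the reversed order~\ref{def:abstraction:infinite:GC} supplies, and it requires $F_{\bot}^{\sharp}$ to commute appropriately with $\alpha_{\infty}$; I expect verifying the meet-commutation of the greatest-fixpoint iterates, and confirming that exactness (the retraction condition $\alpha\comp\gamma=\mathbb{1}$) is what upgrades the inequality ``${\bar{\sqsubseteq}^{\sharp}}$'' of the approximate case to equality, to be the technically demanding step. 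Finally, the approximate case is handled uniformly by replacing each equality with the corresponding ${\bar{\sqsubseteq}^{\sharp}}$ (or ${\bar{\sqsupseteq}_{{\infty}}^{\sharp}}$) and invoking the soundness direction of the same fixpoint-abstraction theorems, where only join/meet preservation, not surjectivity, is needed.
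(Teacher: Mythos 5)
Your proposal is correct and follows essentially the same route as the paper: the paper's own proof is a one-line appeal to the structural-induction-plus-fixpoint-transfer argument of \cite[th.\@ 27.8 and cor.\@ 27.20]{Cousot-PAI-2021}, which is exactly the componentwise induction with commutation of $\alpha$ over $\mathbin{{\fatsemi}^{\sharp}}$, joins, and the least/greatest fixpoints (via the Galois retractions \ref{def:abstraction:finite:GC} and \ref{def:abstraction:infinite:GC}) that you spell out. The only nuance worth tightening is the base cases' infinite component: the retraction on the reversed order gives preservation of ${\sqcap_{\infty}^{\sharp}}$ hence of $\top_{\infty}^{\sharp}$, so the identity $\alpha_{\infty}({\bot_{\infty}^{\sharp}})=\bar{\bot}_{\infty}^{\sharp}$ is not literally the ``dual'' of $\alpha_{+}({\bot_{+}^{\sharp}})=\bar{\bot}_{+}^{\sharp}$ and should be justified separately (or taken as part of definition \ref{def:exact:abstraction}).
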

\begin{toappendix}
\begin{proof}[Proof of theorem \ref{th:Abstraction:Abstract:Semantics}]
The proof of theorem \ref{th:Abstraction:Abstract:Semantics} is an easy generalization of that of theorem 27.8 and corollary 27.20 of \cite{Cousot-PAI-2021}.
\end{proof}
\end{toappendix}
\begin{example}[Relational semantics]\label{ex:relational-semantics} The relational semantics $\sqb{\texttt{S}}^{\varrho}$ of \cite{DBLP:journals/pacmpl/Cousot24} is the following abstraction of the trace semantics $\sqb{\texttt{S}}^{\pi}$.
\begin{eqntabular}{rcl@{\quad\qquad}rcl@{\qquad}}
\alpha_{+}(S)&\triangleq&\{\pair{\sigma}{\sigma'}\mid\exists\pi\mathrel{.}{\sigma}\pi{\sigma'}\in S\cap\Sigma^{+}\}
&
\alpha_{\infty}(S)&\triangleq&\{\pair{\sigma}{\bot}\mid\exists\pi\mathrel{.}{\sigma}\pi\in S\cap\Sigma^{\infty}\}\nonumber
\end{eqntabular}
It follows, by theorem \ref{th:Abstraction:Abstract:Semantics}, that $\forall \texttt{S}\in\mathbb{S}\mathrel{.}\sqb{\texttt{S}}^\varrho=\alpha(\sqb{\texttt{S}}^{\pi})$ and by a classic calculational design, we would get the relational semantics of \cite[sect.\@ I.1]{DBLP:journals/pacmpl/Cousot24} (recalled in sect.\@ \ref{sect:RelationalSemantics} as a specific instance of the algebraic semantics of sect.\@ \ref{sect:Algebraic-Semantics}).
\end{example}

\section{Induced Abstraction of the Execution Transformer}\label{sec:AbstractionExecutionTransformer}
We have defined properties of program executions as
program semantics in ${\mathbb{L}^{\sharp}}$ (\ref{eq:def:Abstract-Semantic-Domain-Semantics}). This formalizes the observation that program semantics specify exactly the properties of all possible executions of any program of the language. An abstraction (\ref{eq:exact:abstraction}) of the semantics in definition \ref{def:exact:abstraction} induces an execution transformer $\overline{\textsf{post}}^\sharp\in{\bar{\mathbb{L}}^{\sharp}}\increasingfunctionto{\bar{\mathbb{L}}^{\sharp}}\increasingfunctionto{\bar{\mathbb{L}}^{\sharp}}$ (\ref{eq:def:abstract:transformer:post}) for this abstract semantics\proofinapx
\begin{eqntabular}{rclclcl}
\bar{\alpha}(\textsf{p})&\triangleq&\rlap{$\LAMBDA{\bar{S}}\LAMBDA{\bar{P}}{\alpha(\textsf{p}(\gamma(\bar{S}))\gamma(\bar{P}))}$}\nonumber
\\
\overline{\textsf{post}}^\sharp(\bar{S})\bar{P}
&\triangleq&
\bar{\alpha}({\textsf{post}}^\sharp)(\bar{S})\bar{P}
&=&
\alpha({\textsf{post}}^\sharp(\gamma(\bar{S}))\gamma(\bar{P}))
&=&
{\bar{P}\mathbin{\bar{\fatsemi}^{\sharp}}\bar{S}}
\label{eq:def:abstract:transformer:post:abstraction}
\end{eqntabular}
\begin{toappendix}
\begin{proof}[Proof of (\ref{eq:def:abstract:transformer:post:abstraction})]
\begin{calculus}[=\ \ ]
\formula{\alpha({\textsf{post}}^\sharp(\gamma(\bar{S}))\gamma(\bar{P}))}\\
=
\formulaexplanation{\alpha(\gamma(\bar{P})\mathbin{\fatsemi^{\sharp}}\gamma(\bar{S}))}{def.\ (\ref{eq:def:abstract:transformer:post}) of ${\textsf{post}}^\sharp$}\\
=
\formulaexplanation{\alpha(\gamma(\bar{P}))\mathbin{\bar{\fatsemi}^{\sharp}}\alpha(\gamma(\bar{S}))}{case \ref{def:abstraction:seq} of definition \ref{def:exact:abstraction} applied component wise}\\
=
\lastformulaexplanation{\bar{P}\mathbin{\bar{\fatsemi}^{\sharp}}\bar{S}}{component wise Galois retraction}{\mbox{\qed}}
\end{calculus}
\let\qed\relax
\end{proof}
\end{toappendix}
Notice that defining 
$\bar{\gamma}(\bar{\textsf{p}})\triangleq\LAMBDA{{S}}\LAMBDA{{P}}{\gamma(\bar{\textsf{p}}(\alpha({S}))\alpha({P}))}$, we have a Galois retraction\proofinapx
\begin{eqntabular}{c}
\pair{{{\mathbb{L}}^{\sharp}}\increasingfunctionto{{\mathbb{L}}^{\sharp}}\increasingfunctionto{{\mathbb{L}}^{\sharp}}}{\mathrel{\ddot{\sqsubseteq}^{\sharp}}}\galoiS{\bar{\alpha}}{\bar{\gamma}}\pair{{\bar{\mathbb{L}}^{\sharp}}\increasingfunctionto{\bar{\mathbb{L}}^{\sharp}}\increasingfunctionto{\bar{\mathbb{L}}^{\sharp}}}{\mathrel{\ddot{\bar{\sqsubseteq}}^{\sharp}}}
\label{eq:GC:post-post-bar}
\end{eqntabular}
\begin{toappendix}
\begin{proof}[Proof of (\ref{eq:GC:post-post-bar})]By \cite[th.\@ 11.78]{Cousot-PAI-2021}, we have a Galois connection. The retraction follows from
\begin{calculus}[=\ \ ]
\formula{\bar{\alpha}(\bar{\gamma}(\bar{\textsf{p}}))}\\
=
\formulaexplanation{\LAMBDA{\bar{S}}\LAMBDA{\bar{P}}{\alpha(\bar{\gamma}(\bar{\textsf{p}})(\gamma(\bar{S}))\gamma(\bar{P}))}}{def.\ (\ref{eq:def:abstract:transformer:post:abstraction}) of $\bar{\alpha}$}\\
=
\formulaexplanation{\LAMBDA{\bar{S}}\LAMBDA{\bar{P}}{\alpha({\gamma(\bar{\textsf{p}}(\alpha(\gamma(\bar{S})))\alpha(\gamma(\bar{P})))})}}{def.\ $\bar{\gamma}(\bar{\textsf{p}})\triangleq\LAMBDA{{S}}\LAMBDA{{P}}{\gamma(\bar{\textsf{p}}(\alpha({S}))\alpha({P}))}$}\\
=
\formulaexplanation{\LAMBDA{\bar{S}}\LAMBDA{\bar{P}}{{\bar{\textsf{p}}(\bar{S})\bar{P}}}}{Galois retraction (\ref{eq:GC:semantic-abstraction}) and \cite[exercise 11.50]{Cousot-PAI-2021}}\\
=
\lastformulaexplanation{\bar{\textsf{p}}}{def.\ lambda-notation and \cite[exercise 11.50]{Cousot-PAI-2021}}{\mbox{\qed}}
\end{calculus}\let\qed\relax
\end{proof}
\end{toappendix}
such that $\overline{\textsf{post}}^\sharp=\bar{\alpha}({\textsf{post}})$ in (\ref{eq:def:abstract:transformer:post:abstraction}). 
Observe that if an abstraction ${\bar{\mathbb{D}}^{{\sharp}}}\triangleq\pair{\bar{\mathbb{D}}^{\sharp}_{+}}{\bar{\mathbb{D}}^{\sharp}_{\infty}}$  of an abstract domain ${\mathbb{D}^{\sharp}}\triangleq\pair{\mathbb{D}^{\sharp}_{+}}{\mathbb{D}^{\sharp}_{\infty}}$ is commuting (\ref{eq:lem:commuting:abstraction}) then\proofinapx
\begin{eqntabular}{rcl}
\alpha({\textsf{post}}^\sharp(\gamma(\bar{S}))P)&=&\overline{\textsf{post}}^\sharp(\bar{S})(\alpha(P))
\label{eq:post:commuting:abstraction}
\end{eqntabular}
\begin{toappendix}
\begin{proof}[Proof of (\ref{eq:post:commuting:abstraction})]
\begin{calculus}[=\ \ ]
\formula{\alpha({\textsf{post}}^\sharp(\gamma(\bar{S}))P)}\\
=
\formulaexplanation{\alpha(P\mathbin{\fatsemi^{\sharp}}\gamma(\bar{S}))}{def.\ (\ref{eq:def:abstract:transformer:post}) of ${\textsf{post}}^\sharp$}\\
=
\formulaexplanation{\alpha(P)\mathbin{\bar{\fatsemi}^{\sharp}}\bar{S}}{commutation (\ref{eq:lem:commuting:abstraction})}\\
=
\lastformulaexplanation{\overline{\textsf{post}}^\sharp(\bar{S})(\alpha(P))}{characterization (\ref{eq:def:abstract:transformer:post:abstraction}) of $\overline{\textsf{post}}^\sharp$}{\mbox{\qed}}
\end{calculus}
\let\qed\relax
\end{proof}
\end{toappendix}
\begin{lemma}[Commutation]\label{lem:commuting:abstraction}\proofinapx\quad
If the abstraction ${\bar{\mathbb{D}}^{{\sharp}}}\triangleq\pair{\bar{\mathbb{D}}^{\sharp}_{+}}{\bar{\mathbb{D}}^{\sharp}_{\infty}}$  of an abstract domain ${\mathbb{D}^{\sharp}}\triangleq\pair{\mathbb{D}^{\sharp}_{+}}{\mathbb{D}^{\sharp}_{\infty}}$ is exact then
\begin{eqntabular}{rcl@{\qquad and \qquad}rcl}
\alpha(P\mathbin{{\fatsemi}^{\sharp}}\gamma(\bar{S}))&=&\alpha(P)\mathbin{\bar{\fatsemi}^{\sharp}}\bar{S}
&
\alpha(\textsf{\textup{post}}(\gamma(\bar{S}))P)
&=&
\overline{\textsf{\textup{post}}}(\bar{S})(\alpha(P))
\label{eq:lem:commuting:abstraction}
\end{eqntabular}
\end{lemma}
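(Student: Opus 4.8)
The final statement to prove is Lemma~\ref{lem:commuting:abstraction}, which asserts the two commutation identities in (\ref{eq:lem:commuting:abstraction}) under the hypothesis that the abstraction $\bar{\mathbb{D}}^{\sharp}$ of $\mathbb{D}^{\sharp}$ is \emph{exact}. The plan is to prove the two identities in sequence, since the second one is an immediate consequence of the first one together with the characterization (\ref{eq:def:abstract:transformer:post:abstraction}) of $\overline{\textsf{post}}^\sharp$, which was already established in the excerpt (indeed the proof of (\ref{eq:post:commuting:abstraction}) given earlier uses exactly the chain $\alpha(\textsf{post}(\gamma(\bar S))P) = \alpha(P \mathbin{{\fatsemi}^{\sharp}} \gamma(\bar S)) = \alpha(P)\mathbin{\bar{\fatsemi}^{\sharp}}\bar S = \overline{\textsf{post}}(\bar S)(\alpha(P))$). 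So the genuine content reduces to the left identity $\alpha(P\mathbin{{\fatsemi}^{\sharp}}\gamma(\bar S)) = \alpha(P)\mathbin{\bar{\fatsemi}^{\sharp}}\bar S$.

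First I would prove the left identity by a short calculational chain. The key algebraic fact to invoke is case~\ref{def:abstraction:seq} of Definition~\ref{def:exact:abstraction}, which states $\alpha(S\mathbin{{\fatsemi}^{\sharp}}S') = \alpha(S)\mathbin{\bar{\fatsemi}^{\sharp}}\alpha(S')$ for exact abstractions (with the appropriate component, $\alpha_+$ or $\alpha_\infty$, selected according to whether $S'$ lies in ${\mathbb{L}^{\sharp}_{+}}$ or ${\mathbb{L}^{\sharp}_{\infty}}$, and extended componentwise to $\mathbb{L}^{\sharp}$ via (\ref{eq:exact:abstraction})). Applying this with $S = P$ and $S' = \gamma(\bar S)$ gives
\begin{eqntabular*}{rcl}
\alpha(P\mathbin{{\fatsemi}^{\sharp}}\gamma(\bar S)) &=& \alpha(P)\mathbin{\bar{\fatsemi}^{\sharp}}\alpha(\gamma(\bar S)).
\end{eqntabular*}
It then remains to simplify $\alpha(\gamma(\bar S))$ to $\bar S$. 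This is precisely the defining property of a Galois \emph{retraction}: by Definition~\ref{def:exact:abstraction}, cases \ref{def:abstraction:finite:GC} and \ref{def:abstraction:infinite:GC}, both $\pair{\alpha_+}{\gamma_+}$ and $\pair{\alpha_\infty}{\gamma_\infty}$ are Galois retractions, so $\alpha_+\comp\gamma_+$ and $\alpha_\infty\comp\gamma_\infty$ are the identity (as noted via \cite[exercise 11.50]{Cousot-PAI-2021} in the proof of (\ref{eq:GC:post-post-bar})), and hence the componentwise $\alpha\comp\gamma$ on $\bar{\mathbb{L}}^{\sharp}$ is the identity by (\ref{eq:exact:abstraction}) and the componentwise retraction (\ref{eq:GC:semantic-abstraction}). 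Substituting $\alpha(\gamma(\bar S)) = \bar S$ yields the left identity.

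For the right identity I would simply reuse the already-proven characterization: by (\ref{eq:def:abstract:transformer:post}), $\textsf{post}(\gamma(\bar S))P = P\mathbin{{\fatsemi}^{\sharp}}\gamma(\bar S)$, so $\alpha(\textsf{post}(\gamma(\bar S))P) = \alpha(P\mathbin{{\fatsemi}^{\sharp}}\gamma(\bar S)) = \alpha(P)\mathbin{\bar{\fatsemi}^{\sharp}}\bar S$ by the left identity, and this equals $\overline{\textsf{post}}(\bar S)(\alpha(P))$ by the characterization (\ref{eq:def:abstract:transformer:post:abstraction}) of $\overline{\textsf{post}}^\sharp$ as $\LAMBDA{\bar S}\LAMBDA{\bar P}\bar P\mathbin{\bar{\fatsemi}^{\sharp}}\bar S$ applied to $\alpha(P)$.

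I do not anticipate a serious obstacle here, as the lemma is essentially a packaging of the sequential-composition commutation axiom \ref{def:abstraction:seq} with the retraction property $\alpha\comp\gamma = \mathbb{1}$. The one point requiring mild care is the \emph{componentwise} bookkeeping: because $\mathbin{{\fatsemi}^{\sharp}}$ is polymorphic over ${\mathbb{L}^{\sharp}_{+}}$ and ${\mathbb{L}^{\sharp}_{\infty}}$ and $\alpha$ splits as $\alpha_+$ on the $e$ and $br$ components and $\alpha_\infty$ on the $\bot$ component (via (\ref{eq:exact:abstraction})), one must check that case \ref{def:abstraction:seq} fires with the correct component abstraction on each of the three slots of the triple, using the componentwise definition (\ref{eq:def:abstract:sem:group:seq}) of $\mathbin{\fatsemi^{\sharp}}$ on $\mathbb{L}^{\sharp}$. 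This is routine but is where all the genuine verification lives; everything else is definitional unfolding.
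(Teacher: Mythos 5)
Your proof is correct and follows exactly the paper's own argument: the first identity via the commutation axiom \ref{def:abstraction:seq} followed by the Galois retraction $\alpha\comp\gamma=\mathbb{1}$, and the second by unfolding (\ref{eq:def:abstract:transformer:post}), applying the first identity, and invoking the characterization (\ref{eq:def:abstract:transformer:post:abstraction}) of $\overline{\textsf{post}}$. The remark on componentwise bookkeeping is a sensible (if implicit in the paper) precaution but does not change the argument.
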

Lemma \ref{lem:commuting:abstraction} shows that doing the computation in the concrete and then abstracting is equivalent to doing the computation in the abstract. Relative to the abstraction, no information is lost.
\begin{toappendix}
\begin{proof}[Proof of lemma \ref{lem:commuting:abstraction}]
\begin{calculus}[=\ \ ]
\hyphen{5}\formula{\alpha(P\mathbin{{\fatsemi}^{\sharp}}\gamma(\bar{S}))}\\
=
\formulaexplanation{\alpha(P)\mathbin{\bar{\fatsemi}^{\sharp}}\alpha(\gamma(\bar{S}))}{commutation \ref{def:abstraction:seq}}\\
=
\formulaexplanation{\alpha(P)\mathbin{\bar{\fatsemi}^{\sharp}}\bar{S}}{Galois retraction (\ref{eq:GC:semantic-abstraction}). Q.E.D.}\\[1ex]
\hyphen{5}\formula{\alpha(\textsf{\textup{post}}(\gamma(\bar{S}))P)}\\
=
\formulaexplanation{\alpha(P\mathbin{\fatsemi^{\sharp}}(\gamma(\bar{S})))}{def.\ (\ref{eq:def:abstract:transformer:post}) of $\textsf{post}^\sharp$}\\
=
\formulaexplanation{\alpha(P)\mathbin{\bar{\fatsemi}^{\sharp}}\bar{S}}{as previously shown}\\
=
\lastformulaexplanation{\overline{\textsf{\textup{post}}}(\bar{S})(\alpha(P))}{(\ref{eq:def:abstract:transformer:post:abstraction}}{\mbox{\qed}}
\end{calculus}
\end{proof}
\end{toappendix}
Moreover, instead of deriving the Galois connection (\ref{eq:GC:post-post-bar}) from that  (\ref{eq:GC:semantic-abstraction}), we can start directly from an abstraction of \textsf{post} given by (\ref{eq:GC:post-post-bar}).
The abstract semantics is then $\bar{S}=\overline{\textsf{post}}^\sharp(\bar{S})\textsf{skip}$ proving the equivalence of \ref{enum:semantic-abstraction} and \ref{enum:property-abstraction}.

\section{Induced Abstraction of the Semantic Transformer}\label{sec:AbstractionSemanticTransformer}
The semantics transformer $\overline{\textsf{Post}}^\sharp\in{\bar{\mathbb{L}}^{\sharp}}\functionto\wp(\bar{\mathbb{L}}^{\sharp})\functionto\wp(\bar{\mathbb{L}}^{\sharp})$ for this abstract semantics is\proofinapx
\begin{eqntabular}{rcl}
\bar{\bar{\alpha}}(\textsf{P})
&\triangleq&
\LAMBDA{\bar{S}}\LAMBDA{\bar{\mathcal{P}}}
\{\alpha(R)\mid R\in\textsf{P}(\gamma(\bar{S}))(\{\gamma(\bar{P})\mid\bar{P}\in{\bar{\mathcal{P}}}\})\}
\label{eq:def:alpha-bar-bar}
\\
\overline{\textsf{Post}}^\sharp(\bar{S})\bar{\mathcal{P}}
&\triangleq&
{\bar{\bar{\alpha}}(\textsf{Post}^{\sharp})({\bar{S}}){\bar{\mathcal{P}}}}
\colsep{=}
\{\overline{\textsf{post}}^\sharp(\bar{S})\bar{P}\mid \bar{P}\in\bar{\mathcal{P}}\}
\label{eq:def:Post:abstraction}
\end{eqntabular}
\begin{toappendix}
\begin{proof}[Proof of (\ref{eq:def:Post:abstraction})]
\begin{calculus}[=\ \ ]
\formula{\overline{\textsf{Post}}^\sharp(\bar{S})\bar{\mathcal{P}}}\\
$\triangleq$
\formulaexplanation{\bar{\bar{\alpha}}(\textsf{Post}^{\sharp})({\bar{S}}){\bar{\mathcal{P}}}}{def.\ (\ref{eq:def:Post:abstraction}) of $\overline{\textsf{Post}}^\sharp$}\\
=
\formulaexplanation{\{\alpha(R)\mid R\in\textsf{Post}^{\sharp}(\gamma(\bar{S}))(\{\gamma(\bar{P})\mid\bar{P}\in{\bar{\mathcal{P}}}\})\}}{def.\ (\ref{eq:def:alpha-bar-bar}) of $\bar{\bar{\alpha}}$}\\
=
\formulaexplanation{\{\alpha(R)\mid R\in\{\textsf{post}^\sharp(\gamma(\bar{S}))P\mid P\in(\{\gamma(\bar{P})\mid\bar{P}\in{\bar{\mathcal{P}}}\})\}}{def.\ (\ref{eq:def:Post}) of $\textsf{Post}^\sharp$}\\
=
\formulaexplanation{\{\alpha(\textsf{post}^\sharp(\gamma(\bar{S}))(\gamma(\bar{P})))\mid \bar{P}\in{\bar{\mathcal{P}}}\}}{def.\ $\in$}\\
=
\lastformulaexplanation{\{\overline{\textsf{post}}^\sharp(\bar{S})\bar{P}\mid \bar{P}\in\bar{\mathcal{P}}\}}{(\ref{eq:def:abstract:transformer:post:abstraction})}{\mbox{\qed}}
\end{calculus}
\let\qed\relax
\end{proof}
\end{toappendix}
\begin{example}[Transformers for the relational semantics]\label{ex:relational-semantics-transformers}
For the relational semantics of example \ref{ex:relational-semantics}, the composition is
$S\mathbin{\bar{\fatsemi}^{\varrho}}S'$ $=$ $(S\cap(\Sigma\times\{\bot\}))\cup (S\cap(\Sigma\times\Sigma)\comp S')$ (intuitively \texttt{S$_1$;S$_2$} does not terminate if \texttt{S$_1$} does not terminate or \texttt{S$_1$} terminates but \texttt{S$_2$} doesn't and terminates if both \texttt{S$_1$} and \texttt{S$_2$} terminate with the composition of their effects). Then $\overline{\textsf{Post}}^\varrho\sqb{\texttt{S}}^\varrho\mathcal{P}=\{P\mathbin{\bar{\fatsemi}^{\varrho}}\sqb{\texttt{S}}^\varrho\mid P\in\mathcal{P}\}$ so that if $\mathcal{P}$ is a precondition relating the initial states of the command \texttt{S} to those of the program then $\overline{\textsf{Post}}^\varrho\sqb{\texttt{S}}^\varrho$ relates the final states of the command \texttt{S} or nontermination to the initial states of the program.
\end{example}
We have the Galois retraction\proofinapx
\begin{eqntabular}{c}
\pair{\mathbb{L}^\sharp\functionto\wp(\mathbb{L}^\sharp)\increasingfunctionto\wp(\mathbb{L}^\sharp)}{\ddot{\subseteq}}\galoiS{\bar{\bar{\alpha}}}{\bar{\bar{\gamma}}}\pair{\bar{\mathbb{L}}^\sharp\functionto\wp(\bar{\mathbb{L}}^\sharp)\increasingfunctionto\wp(\bar{\mathbb{L}}^\sharp)}{\ddot{\subseteq}}
\label{eq:GC:alpha-bar-bar}
\end{eqntabular}
\begin{toappendix}
\begin{proof}[Proof of (\ref{eq:GC:alpha-bar-bar})]
$\bar{\bar{\alpha}}$ preserves arbitrary point wise union $\ddot{\cup}$.
\end{proof}
\end{toappendix}
Observe that instead of deriving (\ref{eq:GC:alpha-bar-bar}) from (\ref{eq:GC:post-post-bar}), it is equivalent to start from a Galois retraction (\ref{eq:GC:alpha-bar-bar}) since we can recover \textsf{post} from \textsf{Post}
by  (\ref{eq:Post::post}). 

\section{Induced Abstraction of the Abstract Logics}\label{sec:AbstractionAbstractLogics}
Writing $f(X)\triangleq{\{f(x)\mid x\in X\}}$, the abstract logic $\overline{\textsf{L}}^\sharp\in{{\bar{\mathbb{L}}^{\sharp}}\functionto(\wp({\bar{\mathbb{L}}^{\sharp}})\times\wp({\bar{\mathbb{L}}^{\sharp}}))}$ is
\begin{eqntabular}{rcl}
\bar{\bar{\bar{\alpha}}}(\textsf{L})
&\triangleq&
\LAMBDA{\bar{S}}\{\pair{\bar{\mathcal{P}}}{\bar{\mathcal{Q}}}\mid
\alpha(\bigcap\{\mathcal{Q}\mid\pair{\gamma(\bar{\mathcal{P}})}{{\mathcal{Q}}}\in\textsf{L}(\gamma(\bar{S}))\})\subseteq{\bar{\mathcal{Q}}}\}
\label{eq:def:alpha-bar-bar-bar}
\\
\overline{\overline{\textsf{L}}}^\sharp(\bar{S})
&\triangleq&
{\bar{\bar{\bar{\alpha}}}(\overline{\textsf{L}}^{\sharp})({\bar{S}})}
\qquad\qquad
\overline{\underline{\textsf{L}}}^\sharp(\bar{S})
\colsep{\triangleq}
{\bar{\bar{\bar{\alpha}}}(\underline{\textsf{L}}^{\sharp})({\bar{S}})}
\label{eq:def:upper-logic:abstraction}
\end{eqntabular}
\begin{theorem}\label{th:upper-logic:abstraction}\proofinapx\quad
If\/ $\bar{\mathbb{D}}^{\sharp}$ is an exact abstraction of\/ $\mathbb{D}^{\sharp}$ then\/ $\overline{\overline{\textsf{L}}}^\sharp(\bar{S})
=
\{\pair{\bar{\mathcal{P}}}{\bar{\mathcal{Q}}}\mid\overline{\textsf{Post}}^\sharp(\bar{S})\bar{\mathcal{P}}\subseteq\bar{\mathcal{Q}}\}$ \textup{(}and $\overline{\underline{\textsf{L}}}^\sharp(\bar{S})
=\{\pair{\bar{\mathcal{P}}}{\bar{\mathcal{Q}}}\mid\bar{\mathcal{Q}}\subseteq\overline{\textsf{Post}}^\sharp(\bar{S})\bar{\mathcal{P}}\}$\textup{)}.
\end{theorem}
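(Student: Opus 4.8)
The plan is to prove Theorem~\ref{th:upper-logic:abstraction} by unfolding the definition (\ref{eq:def:alpha-bar-bar-bar}) of $\bar{\bar{\bar{\alpha}}}$ applied to $\overline{\textsf{L}}^\sharp$, and then simplifying the inner expression using the definition (\ref{eq:def:upper-logic}) of $\overline{\textsf{L}}^\sharp$ together with the commutation property (\ref{eq:post:commuting:abstraction}) that holds for exact (commuting) abstractions. The target is to show that the set
\[
\alpha\Bigl(\bigcap\{\mathcal{Q}\mid\pair{\gamma(\bar{\mathcal{P}})}{\mathcal{Q}}\in\overline{\textsf{L}}^\sharp(\gamma(\bar{S}))\}\Bigr)\subseteq\bar{\mathcal{Q}}
\]
is equivalent to $\overline{\textsf{Post}}^\sharp(\bar{S})\bar{\mathcal{P}}\subseteq\bar{\mathcal{Q}}$. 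The first step would be to compute the inner intersection: by (\ref{eq:def:upper-logic}), $\pair{\mathcal{P}}{\mathcal{Q}}\in\overline{\textsf{L}}^\sharp(S)$ iff $\textsf{Post}^\sharp(S)\mathcal{P}\subseteq\mathcal{Q}$, so $\bigcap\{\mathcal{Q}\mid \textsf{Post}^\sharp(\gamma(\bar{S}))\gamma(\bar{\mathcal{P}})\subseteq\mathcal{Q}\}$ is simply the least such $\mathcal{Q}$, namely $\textsf{Post}^\sharp(\gamma(\bar{S}))\gamma(\bar{\mathcal{P}})$ itself.

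**Next I would** push $\alpha$ through the $\textsf{Post}^\sharp$ image using (\ref{eq:def:Post}) and the pointwise commutation (\ref{eq:post:commuting:abstraction}). Writing $\gamma(\bar{\mathcal{P}})=\{\gamma(\bar{P})\mid\bar{P}\in\bar{\mathcal{P}}\}$ and applying $\alpha$ componentwise to the image set, one gets
\[
\alpha\bigl(\textsf{Post}^\sharp(\gamma(\bar{S}))\gamma(\bar{\mathcal{P}})\bigr)
=\{\alpha(\textsf{post}^\sharp(\gamma(\bar{S}))\gamma(\bar{P}))\mid\bar{P}\in\bar{\mathcal{P}}\}
=\{\overline{\textsf{post}}^\sharp(\bar{S})\bar{P}\mid\bar{P}\in\bar{\mathcal{P}}\},
\]
where the last equality is (\ref{eq:post:commuting:abstraction}) applied to each $\bar{P}$ with $P=\gamma(\bar{P})$ (using the retraction $\alpha\comp\gamma=\mathbb{1}$). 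By the characterization (\ref{eq:def:Post:abstraction}) of $\overline{\textsf{Post}}^\sharp$, this set is exactly $\overline{\textsf{Post}}^\sharp(\bar{S})\bar{\mathcal{P}}$. Substituting back into (\ref{eq:def:alpha-bar-bar-bar}) yields $\overline{\overline{\textsf{L}}}^\sharp(\bar{S})=\{\pair{\bar{\mathcal{P}}}{\bar{\mathcal{Q}}}\mid\overline{\textsf{Post}}^\sharp(\bar{S})\bar{\mathcal{P}}\subseteq\bar{\mathcal{Q}}\}$, which is the upper statement.

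**The lower statement** $\overline{\underline{\textsf{L}}}^\sharp$ follows by the $\subseteq$-dual argument: here I would use that $\maccent{\alpha}{\filledtriangledown}$-style definitions reverse the inclusion, so the inner quantity becomes $\bigcap\{\mathcal{Q}\mid\mathcal{Q}\subseteq\textsf{Post}^\sharp(\gamma(\bar{S}))\gamma(\bar{\mathcal{P}})\}$ — but care is needed since for the lower logic the defining direction in (\ref{eq:def:alpha-bar-bar-bar}) must be read with the dual order, so I would instead verify directly that $\bar{\mathcal{Q}}\subseteq\overline{\textsf{Post}}^\sharp(\bar{S})\bar{\mathcal{P}}$ emerges after the same two computations, relying on the already-established identity $\alpha(\textsf{Post}^\sharp(\gamma(\bar{S}))\gamma(\bar{\mathcal{P}}))=\overline{\textsf{Post}}^\sharp(\bar{S})\bar{\mathcal{P}}$ and the fact that $\alpha$ preserves the relevant order structure by the Galois retraction (\ref{eq:GC:semantic-abstraction}).

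**The main obstacle** I anticipate is the careful handling of $\alpha$ interacting with the intersection $\bigcap$ inside (\ref{eq:def:alpha-bar-bar-bar}): one must justify that taking the least postcondition commutes with abstraction, which is precisely where exactness (the Galois \emph{retraction} $\alpha\comp\gamma=\mathbb{1}$, lemma~\ref{lem:exact:abstraction:well:defined}) rather than a mere Galois connection is essential — an approximate abstraction would only give an inequality. A secondary subtlety is ensuring that $\alpha$ distributes over the $\textsf{Post}^\sharp$ image \emph{element by element} (so that $\alpha$ of a set-image equals the set-image of $\overline{\textsf{post}}^\sharp$), which is valid because $\textsf{Post}^\sharp$ is defined pointwise in (\ref{eq:def:Post}) and $\alpha$ is applied pointwise to sets in (\ref{eq:def:alpha-bar-bar}); I would make this explicit to avoid conflating the set-level abstraction $\bar{\bar{\alpha}}$ with the element-level $\alpha$.
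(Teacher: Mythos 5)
Your proposal is correct and follows essentially the same route as the paper's proof: unfold $\bar{\bar{\bar{\alpha}}}$, collapse the intersection $\bigcap\{\mathcal{Q}\mid\textsf{Post}^\sharp(\gamma(\bar{S}))\gamma(\bar{\mathcal{P}})\subseteq\mathcal{Q}\}$ to $\textsf{Post}^\sharp(\gamma(\bar{S}))\gamma(\bar{\mathcal{P}})$, push $\alpha$ elementwise through the image to recognize $\overline{\textsf{post}}^\sharp$ and hence $\overline{\textsf{Post}}^\sharp$, and dualize for the lower logic. The only cosmetic difference is that you reach the identity $\alpha(\textsf{post}^\sharp(\gamma(\bar{S}))\gamma(\bar{P}))=\overline{\textsf{post}}^\sharp(\bar{S})\bar{P}$ via the commutation (\ref{eq:post:commuting:abstraction}) plus the retraction $\alpha\comp\gamma=\mathbb{1}$, whereas the paper reads it off directly from the definition (\ref{eq:def:abstract:transformer:post:abstraction}).
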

\begin{toappendix}
\begin{proof}[Proof of theorem \ref{th:upper-logic:abstraction}]
\begin{calculus}[=\ \ ]
\formulaexplanation{\overline{\overline{\textsf{L}}}^\sharp(\bar{S})\colsep{=}{\bar{\bar{\bar{\alpha}}}(\overline{\textsf{L}}^{\sharp})({\bar{S}})}}{(\ref{eq:def:upper-logic:abstraction})}\\
=
\formulaexplanation{\{\pair{\bar{\mathcal{P}}}{\bar{\mathcal{Q}}}\mid\alpha(\bigcap\{\mathcal{Q}\mid\pair{\gamma(\bar{\mathcal{P}})}{{\mathcal{Q}}}\in\textsf{L}(\gamma(\bar{S}))\})\subseteq{\bar{\mathcal{Q}}}\}}{(\ref{eq:def:alpha-bar-bar-bar})}\\
=
\formulaexplanation{\{\pair{\bar{\mathcal{P}}}{\bar{\mathcal{Q}}}\mid\alpha(\bigcap\{\mathcal{Q}\mid\pair{\gamma(\bar{\mathcal{P}})}{{\mathcal{Q}}}\in\{\pair{\mathcal{P}}{\mathcal{Q}}\mid\textsf{Post}^\sharp(\gamma(\bar{S}))\mathcal{P}\subseteq\mathcal{Q}\}\})\subseteq{\bar{\mathcal{Q}}}\}}{def.\ (\ref{eq:def:upper-logic}) of $\overline{\textsf{L}}^\sharp$}\\
=
\formulaexplanation{\{\pair{\bar{\mathcal{P}}}{\bar{\mathcal{Q}}}\mid\alpha(\bigcap\{\mathcal{Q}\mid\textsf{Post}^\sharp(\gamma(\bar{S}))\gamma(\bar{\mathcal{P}})\subseteq\mathcal{Q}\})\subseteq{\bar{\mathcal{Q}}}\}}{def.\ $\in$}\\
=
\formulaexplanation{\{\pair{\bar{\mathcal{P}}}{\bar{\mathcal{Q}}}\mid\alpha(\{\textsf{Post}^\sharp(\gamma(\bar{S}))\gamma(\bar{\mathcal{P}})\})\subseteq{\bar{\mathcal{Q}}}\}}{def.\ $\bigcap$ and $\subseteq$}\\
=
\formulaexplanation{\{\pair{\bar{\mathcal{P}}}{\bar{\mathcal{Q}}}\mid\alpha(\{\textsf{post}^\sharp(\gamma(\bar{S}))P\mid P\in\gamma(\bar{\mathcal{P}})\})\subseteq{\bar{\mathcal{Q}}}\}}{def.\ (\ref{eq:def:Post}) of $\textsf{Post}^\sharp$}\\
=
\formulaexplanation{\{\pair{\bar{\mathcal{P}}}{\bar{\mathcal{Q}}}\mid\{\alpha(\textsf{post}^\sharp(\gamma(\bar{S}))P)\mid P\in\gamma(\bar{\mathcal{P}})\}\subseteq{\bar{\mathcal{Q}}}\}}{def.\ image}\\
=
\formulaexplanation{\{\pair{\bar{\mathcal{P}}}{\bar{\mathcal{Q}}}\mid\forall\bar{P}\in\bar{\mathcal{P}}\mathrel{.}
\alpha({\textsf{post}}^\sharp(\gamma(\bar{S}))\gamma(\bar{P}))\in\bar{\mathcal{Q}}\}}{def.\ $\subseteq$}\\
=
\formulaexplanation{\{\pair{\bar{\mathcal{P}}}{\bar{\mathcal{Q}}}\mid\forall\bar{P}\in\bar{\mathcal{P}}\mathrel{.}
\overline{\textsf{post}}^\sharp(\bar{S})\bar{P}\in\bar{\mathcal{Q}}\}}{def.\ (\ref{eq:def:abstract:transformer:post:abstraction}) of $\overline{\textsf{post}}^\sharp$}\\
=
\formulaexplanation{\{\pair{\bar{\mathcal{P}}}{\bar{\mathcal{Q}}}\mid\{
\overline{\textsf{post}}^\sharp(\bar{S})\bar{P}\mid\bar{P}\in\bar{\mathcal{P}}\}\subseteq\bar{\mathcal{Q}}\}}{def.\ $\subseteq$}\\
=
\formulaexplanation{\{\pair{\bar{\mathcal{P}}}{\bar{\mathcal{Q}}}\mid 
\overline{\textsf{Post}}^\sharp(\bar{S})\bar{\mathcal{P}}\subseteq\bar{\mathcal{Q}}\}}{def.\ (\ref{eq:def:Post:abstraction}) of $\overline{\textsf{Post}}^\sharp$}
\end{calculus}
{\tiny\ustrut} The proof for $\overline{\underline{\textsf{L}}}^\sharp$ is $\subseteq$-dual.
\end{proof}
\end{toappendix}
It follows from theorem \ref{th:upper-logic:abstraction} that the logic proof system of theorem
\ref{th:upper-abstract-logic-proof-system} is applicable to the upper abstract logic $\overline{\overline{\textsf{L}}}^\sharp(\bar{S})$ (and dually theorem \ref{th:lower-abstract-logic-proof-system} for the lower abstract logic).

In conclusion of this \hyperlink{PARTII}{part II}, although the abstractions of the semantics, \textsf{post}, \textsf{Post}, and logics have been shown to be equally expressible for exact abstractions, they do not really solve the problem of the complexity of the resulting logic (although hyperproperties may be simpler). The logics  still have to handle exactly the (abstract) semantics occurring in the (hyper) properties. So our proposed proof system has rules (\ref{eq:logic:abstract:assignment})---(\ref{eq:logic:abstract:while:upper}) plus simplified rules applicable to less general classes of properties defined by the abstractions studied in the following \hyperlink{PARTIII}{part III}.

\newcommand{\uphyperlogic}[3] {{\overline{\llbrace}\,#1\,\overline{\rrbrace}\,\texttt{#2}\,\overline{\llbrace}\,#3\,\overline{\rrbrace}}}%
\newcommand{\uphoarelogic}[3] {{\overline{\{}\,#1\,\overline{\}}\,\texttt{#2}\,\overline{\{}\,#3\,\overline{\}}}}%
\newcommand{\downhoarelogic}[3] {{\overline{\{}\,#1\,\underline{\}}\,\texttt{#2}\,\underline{\{}\,#3\,\underline{\}}}}%

\begin{flushleft}
\bigskip
\hypertarget{PARTIII}{\Large\textbf{\textsc{Part III:\ Abstractions for Semantic (Hyper) Logics}}}
\end{flushleft}

The problem with (hyper) logics studied in \hyperlink{PARTI}{part I} (and their abstractions in \hyperlink{PARTII}{part II}) is that for a program to satisfy a semantic (hyper) property, its semantics must exactly occur in this (hyper) property and therefore the proof must exactly characterize the program semantics. So, contrary to Hoare logic or its dual, (hyper) proof rules cannot make over or under approximations of the program semantics in semantic properties. In this \hyperlink{PARTIII}{part III}, we study abstractions of semantic properties that yield simpler sound and complete proof rules for the less general semantic (hyper) properties defined by the abstraction. Such abstractions can also provide representations of abstract semantic  (hyper)  properties\footnote{Another example is the possible representation of semantic properties satisfying the decreasing chain condition by join irreducibles \cite[theorem 4.8]{Blyth-Lattices-2005}.}.

\section{Semantic to Execution Property Abstraction}\label{sec:SemanticExecutionPropertyAbstraction}
\subsection{Join Abstraction}
\ifshort The join abstraction $\alpha_{{\cup}}(\mathcal{P}) \triangleq{\bigcup}\mathcal{P}$ is classic to 
abstract set-based semantics (hyper) properties $\mathcal{P}$ into execution properties $\alpha_{{\cup}}(\mathcal{P})$
\cite[section 8.6]{Cousot-PAI-2021}. It is relegated to the appendix \proofinapx.\fi
\begin{toappendix}
\subsubsection{Definition of the Join Abstraction}
In a complete lattice, the abstraction $\alpha_{{\sqcup}}(\mathcal{P}) \triangleq{\bigsqcup}\mathcal{P}$ and 
$\gamma_{{\sqcup}}(Q)\triangleq \{P\mid P \sqsubseteq Q\}$ yields a Galois retraction.
\bgroup\abovedisplayskip3pt\belowdisplayskip3pt
\begin{eqntabular}{c@{\qquad and so\qquad}c}
\pair{\wp({\mathbb{L}})}{\subseteq}\galoiS{\alpha_{{\sqcup}}}{\gamma_{{\sqcup}}}\pair{{\mathbb{L}}}{{\sqsubseteq}}
\label{eq:def:Abstract:Execution:Properties}
&
\pair{\wp({\mathbb{L}})}{\subseteq}\galoiS{\gamma_{{\sqcup}}\;\comp\;\alpha_{{\sqcup}}}{\mathbb{1}}\pair{{\gamma_{{\sqcup}}\comp\alpha_{{\sqcup}}}(\wp({\mathbb{L}}))}{\subseteq}
\end{eqntabular}\egroup
\begin{proof}[Proof of (\ref{eq:def:Abstract:Execution:Properties})]
\begin{calculus}[$\Leftrightarrow$\ \ ]
\formula{\alpha_{{\sqcup}}(\mathcal{P}) \mathrel{{\sqsubseteq}} Q}\\
$\Leftrightarrow$
\formulaexplanation{\mathop{{\bigsqcup}}\mathcal{P}\mathrel{{\sqsubseteq}} Q}{def.\ $\alpha_{{\sqcup}}$}\\
$\Leftrightarrow$
\formulaexplanation{\forall P\in\mathcal{P}\mathrel{.} P\mathrel{{\sqsubseteq}} Q}{def.\ least upper bound}\\
$\Leftrightarrow$
\formulaexplanation{\mathcal{P}\subseteq\{P\mid P\mathrel{{\sqsubseteq}} Q\}}{def.\ $\subseteq$}\\
$\Leftrightarrow$
\formulaexplanation{\mathcal{P}\subseteq\gamma_{{\sqcup}}(Q)}{def.\ $\gamma_{{\sqcup}}$}
\end{calculus}
{\tiny\ustrut}It follows that ${\gamma_{{\sqcup}}\comp\alpha_{{\sqcup}}}$ is an upper closure operator hence the second Galois retraction.
\end{proof}
The properties in ${{\gamma_{{\sqcup}}\comp\alpha_{{\sqcup}}}(\wp({\mathbb{L}}))}$ are called execution properties as opposed to semantic (hyper) properties in $\wp({\mathbb{L}})$.
If the abstract domains ${\mathbb{D}^{\sharp}}$ of definition \ref{def:abstract:domain:well:def} or
their abstractions by definition \ref{def:exact:abstraction} are complete lattices, this abstraction approximates 
abstracts semantic properties in $\wp({\mathbb{L}^{\sharp}})$ into executions in ${\mathbb{L}^{\sharp}}$. 
\begin{example}[Trace property abstraction]The trace hyperproperties in $\wp(\wp(\Sigma^{+\infty}))$ can be abstracted to trace properties in $\wp(\Sigma^{+\infty})$ by $\uLstrut\pair{\wp(\wp(\Sigma^{+\infty})}{\subseteq}\galoiS{\alpha_{\cup}}{\gamma_{\cup}}\pair{\wp(\Sigma^{+\infty}))}{\subseteq}$ with $\alpha_{\cup}(P)=\bigcup P$ and $\gamma_{\cup}(Q)=\wp(Q)$ as done e.g.\ in \cite[section 5, p.\ 246]{DBLP:conf/popl/CousotC12} which is the starting point of \cite{DBLP:journals/pacmpl/Cousot24} to recover Hoare logic and its variants. ${\gamma_{\cup}}(P)$ is called the lift of trace property $P\in\wp(\Sigma^{+\infty})$ in \cite[page 1162]{DBLP:journals/jcs/ClarksonS10}.
\end{example}
\begin{example}[Hyperlogic to execution logic abstraction]
Applied to $\textsf{Post}^\sharp(S)$ in (\ref{eq:def:Post}) this join abstraction yields
$\textsf{post}^\sharp(S)$ in (\ref{eq:def:abstract:transformer:post}), so that the
hyperproperty calculus of theorem \ref{th:upper-abstract-hyper-properties} is 
abstracted into the execution property calculus of theorem \ref{th:Program:execution:properties:calculus} 
and therefore the hyperlogic of theorem \ref{th:upper-abstract-logic-proof-system} is abstracted in the 
classic program logic of execution properties (as considered in \cite{DBLP:journals/pacmpl/Cousot24}, after appropriate  generalization to the algebraic semantics of section \ref{sect:Algebraic-Semantics}).
\end{example}

\subsubsection{Proof Rule Simplification}By correspondence (\ref{eq:def:Abstract:Execution:Properties}), the abstract logical ordering (abstracting the implication $\subseteq$) is also the computational ordering in lemma \ref{lem:abstract-domain-sharp} whereas, in general, for the generic algebraic abstract semantics the computational ordering ${{\sqsubseteq}^{\sharp}}$ and the logical ordering and $\subseteq$ are not directly related, which is at the origin of complications in proofs. 
Therefore, the \texttt{while} rule (\ref{eq:logic:abstract:while:upper}) can be simplified since fixpoints can be over approximated (or under approximated) hence handled by fixpoint induction such as Park induction \cite[theorem II.3.1]{DBLP:journals/pacmpl/Cousot24}
or  Scott-Kleene induction \cite[theorem II.3.6]{DBLP:journals/pacmpl/Cousot24}.
\end{toappendix}

\section{Homomorphic Semantic Abstraction}\label{sec:HomomorphicSemanticAbstraction}
\ifshort The homomorphic abstraction $\alpha(S)\triangleq\{h(x)\mid x\in S\}$ is also well known \cite[exercise 11.6]{DBLP:journals/pacmpl/Cousot24} and can be used e.g.\ to define partial hypercorrectness, trace safety hyperproperties, etc.\ \proofinapx.\fi
\begin{toappendix}
Given an execution property abstraction $\alpha\in{\mathbb{L}^{\sharp}}\functionto{\mathbb{A}}$, it can be extended elementwise
to $\pair{\wp(\mathbb{L}^{\sharp})}{{\subseteq}}\galois{\dot{\alpha}}{\dot{\gamma}}\pair{\wp(\mathbb{A})}{\subseteq}$ by $\dot{\alpha}(\mathcal{P})\triangleq\{\alpha(P)\mid P\in\mathcal{P}\}$ and
 $\dot{\gamma}(\mathcal{Q})\triangleq\{P\mid \alpha(P)\in\mathcal{Q}\}$.

\begin{example}[Partial hypercorrectness]\label{ex:Partial-hyper-correctness}Partial hypercorrectness consists in ignoring one component $\mathbb{D}^{\sharp}_{+}$ or $\mathbb{D}^{\sharp}_{\infty}$ of the abstract domain and preserving only the other, that is $\alpha^+(\triple{e:P_{+}}{\bot:P_{\infty}}{{br:P_{b}}})\triangleq \pair{ok:P_{+}}{{br:P_{b}}}$ or $\alpha^\infty(\triple{e:P_{+}}{\bot:P_{\infty}}{{br:P_{b}}})\triangleq P_{\infty}$ in (\ref{eq:def:Abstract-Semantic-Domain-Semantics}). This execution property abstraction $\alpha$ is extended to semantic properties by the homomorphic abstraction $\alpha(\mathcal{P})\triangleq\{\alpha(P)\mid P\in \mathcal{P}\}$. This yields a Galois retraction $\pair{\wp(\mathbb{L}^{\sharp})}{\subseteq}\galoiS{\alpha}{\gamma}\pair{\alpha(\wp(\mathbb{L}^{\sharp}))}{\subseteq}$ hence a closure $\pair{\wp(\mathbb{L}^{\sharp})}{\subseteq}\galoiS{\gamma\;\comp\;\alpha}{\mathbb{1}}\pair{\gamma\comp\alpha(\wp(\mathbb{L}^{\sharp}))}{\subseteq}$. This is an extension of partial correctness or termination to semantic (hyper) properties. The \texttt{while} rule
(\ref{eq:logic:abstract:while:upper}) can be simplified by ignoring  one of the two fixpoints. However, the other fixpoint must still be calculated exactly. 
\end{example}
\begin{example}[Trace safety hyperproperties]\label{ex:Trace-Safety-semantic-properties}The safety abstraction $\alpha$ by prefix and limit abstraction of trace properties \cite[section 6.1]{DBLP:conf/popl/CousotC12} can be applied to the trace semantic (hyper) properties of section 
\ref{sect:Trace-Semantics} so that $\dot{\alpha}(\wp(ok:\wp(\Sigma^{+\infty})\times br:\wp(\Sigma^{+})))$ yields safety semantic (hyper) properties of \cite{DBLP:journals/jcs/ClarksonS10}. This consists in replacing 
each semantics in the semantic property by its safety approximation by prefixes (in $\mathbb{L}^{\sharp}_{+}$) and limits (in $\mathbb{L}^{\sharp}_{\infty}$).
\end{example}
\begin{example}[Algebraic safety hyperproperties]\label{ex:algebraic-Safety-semantic-properties}The trace safety hyperproperties of example \ref{ex:Trace-Safety-semantic-properties} can be generalized to the algebraic semantics by requiring that, under the hypotheses of lemmas \ref{lem:abstract:Lfp-subseteq-F-e} and \ref{lem:Fbotsharp-welldefined}, algebraic safety properties $\mathcal{P}$ do satisfy that $\sqb{\texttt{S$_1$;S$_2$}}^{\sharp}\in \mathcal{P}$ implies $\sqb{\texttt{S$_1$}}_{e}^{\sharp}\cup\sqb{\texttt{S$_1$}}_{br}^{\sharp}\in \mathcal{P}$ (prefix closure), that $\forall \delta\in\mathbb{O}\mathrel{.}(\sqb{\texttt{B;S}}_{e}^{\sharp})^{\delta}\in \mathcal{P}$ implies $\Lfp{\sqsubseteq_{+}^{\sharp}}{\Cev{F}_{e}^{\sharp}}\in \mathcal{P}$ (limit closure for finite executions), and that $\forall \delta\in\mathbb{O}\mathrel{.}((\sqb{\texttt{B;S}}_{e}^{\sharp})^{\delta} \mathbin{{\fatsemi}^{\sharp}} {\top_{\infty}^{\sharp}})\in \mathcal{P}$ implies 
$\Gfp{\sqsubseteq_{\infty}^{\sharp}}{F_{\bot}^{\sharp}}\in \mathcal{P}$ (limit closure for infinite executions). Then  $\dot{\alpha}_{\textsf{safety}}(\mathcal{P})\triangleq\{\alpha_{\textsf{safety}}(P)\mid P\in\mathcal{P}\}$ thus generalizing the classic definition of safety property $\alpha_{\textsf{safety}}(P) = P$ as prefix closed and limit closed sets of traces  \cite[Definition  14.11]{DBLP:journals/pacmpl/Cousot24}. Then the proof rule (\ref{eq:logic:abstract:while:upper}) can be simplified since the passage to the limit need not be checked since it is guaranteed by the safety hypothesis.
\end{example}
\end{toappendix}

\section{Execution Property Elimination}\label{sec:ExecutionPropertyElimination}
Given a set $\mathbb{I}\in\wp(\wp(\mathbb{L}^{\sharp}))$ of semantic properties of interest, the Galois retraction $$\pair{\wp(\mathbb{L}^{\sharp})}{\subseteq}\galoiS{\LAMBDA{\mathcal{P}}\mathcal{P}\,\cap\,\mathbb{I}}{\LAMBDA{\mathcal{Q}}\mathcal{Q}\,\cup\,\mathbb{I}}\pair{\mathbb{I}}{\subseteq}$$ \cite[exercise 11.5]{Cousot-PAI-2021} eliminates the semantics of no interest. \ifshort It can be used e.g.\ to handle $k$-semantic properties \proofinapx.\fi
\begin{toappendix}
 We have used this abstraction ${\LAMBDA{\mathcal{P}}\mathcal{P}\,\cap\,\mathbb{I}}$ implicitly in examples \ref{ex:powerset-deterministic-domain-post} and 
\ref{ex:powerset-deterministic-domain-post-continued} when saying that we ignored nontermination. The logics of section \ref{sec:Upper-Lower-Abstract-Logics} are simplified by intersection with $\mathbb{I}$ but this still requires the restricted fixpoints in the \texttt{while} rules (\ref{eq:logic:abstract:while:upper}) and (\ref{eq:logic:abstract:while:lower}) to be computed exactly, which, mechanically, does not scale up.
\begin{example}[$k$-semantic properties]If $\mathbb{L}=\wp(L)$ is a powerset (which is the case for the trace semantics of section \ref{sec:Biinductive-Trace-Semantic-Domain}), $\mathbb{I} \triangleq\{\mathcal{P}\in\wp(\wp(L))\mid |\mathcal{P}|\leqslant k\}$, $k\geqslant 1$, where $|S|$ is the cardinality of set $S$, restricts the trace properties to be considered in the semantic properties to those of cardinality at most $k$. An instance of this abstraction is the $k$-hypersafety of 
\cite[page 1170]{DBLP:journals/jcs/ClarksonS10}.
\end{example}
\end{toappendix}

\section{Principal Order Ideal Abstraction}\label{sec:Principal-Ideal-Abstraction}
\subsection{Definition of the Principal Order Ideal Abstraction}
 Subject to the existence of the least upper bound, the principal ideal abstraction is
\bgroup\abovedisplayskip3pt\belowdisplayskip1pt
\begin{eqntabular}{rcl}
\alpha^{\curlywedgedownarrow}(\mathcal{P}) &\triangleq&\{P\mid P\sqsubseteq \bigsqcup\mathcal{P}\}
\label{eq:def:alpha:principal}
\end{eqntabular}\egroup
\begin{lemma}\label{lem:alpha:principal}\proofinapx\quad
$\alpha^{\curlywedgedownarrow}$ is an upper closure operator and $\sextuple{\alpha^{\curlywedgedownarrow}(\wp(\mathbb{L}))}{\subseteq}{\{\bot\}}{\mathbb{L}}{\LAMBDA{X}\alpha^{\curlywedgedownarrow}(\cup X)}{\cap}$  is a complete lattice. 
\end{lemma}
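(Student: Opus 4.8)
```latex
The plan is to establish the two assertions of Lemma~\ref{lem:alpha:principal} in turn: first that $\alpha^{\curlywedgedownarrow}$ is an upper closure operator, and then that its image, ordered by $\subseteq$, is a complete lattice with the indicated structure. Throughout I would work under the standing hypothesis that the relevant joins $\bigsqcup\mathcal{P}$ exist, which is guaranteed when $\mathbb{L}$ is a complete lattice (or, more generally, in the chain-complete settings of Definition~\ref{def:abstract:domain:well:def}).

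For the closure part, I would verify the three defining properties from Section~\ref{sec:Galois:Connections:Retractions}. \emph{Extensivity}: every $P\in\mathcal{P}$ satisfies $P\sqsubseteq\bigsqcup\mathcal{P}$ by definition of the least upper bound, so $\mathcal{P}\subseteq\alpha^{\curlywedgedownarrow}(\mathcal{P})$. \emph{Monotonicity}: if $\mathcal{P}\subseteq\mathcal{Q}$ then $\bigsqcup\mathcal{P}\sqsubseteq\bigsqcup\mathcal{Q}$, whence $\{P\mid P\sqsubseteq\bigsqcup\mathcal{P}\}\subseteq\{P\mid P\sqsubseteq\bigsqcup\mathcal{Q}\}$. \emph{Idempotence}: the key observation is that $\bigsqcup\alpha^{\curlywedgedownarrow}(\mathcal{P})=\bigsqcup\mathcal{P}$, because $\bigsqcup\mathcal{P}$ is itself the greatest element of the set $\{P\mid P\sqsubseteq\bigsqcup\mathcal{P}\}$ and hence its join; applying $\alpha^{\curlywedgedownarrow}$ a second time therefore reproduces the same principal ideal. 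These three facts give the upper closure, and then the general theory recalled in Section~\ref{sec:Galois:Connections:Retractions} (Ward's theorem \cite{Ward42}) yields the Galois retraction $\pair{\wp(\mathbb{L})}{\subseteq}\galoiS{\alpha^{\curlywedgedownarrow}}{\mathbb{1}}\pair{\alpha^{\curlywedgedownarrow}(\wp(\mathbb{L}))}{\subseteq}$.

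For the lattice structure, I would invoke the closure-operator principle quoted just before Section~\ref{sect:Algebraic-Semantics}: the image of a complete lattice under an upper closure operator is again a complete lattice, with infimum $\alpha^{\curlywedgedownarrow}(\emptyset)$, join $\LAMBDA{X}\alpha^{\curlywedgedownarrow}(\bigcup X)$, meet the restriction of $\cap$, and top the full set $\mathbb{L}$. It then remains only to identify these named components with those in the statement. The infimum is $\alpha^{\curlywedgedownarrow}(\emptyset)=\{P\mid P\sqsubseteq\bigsqcup\emptyset\}=\{P\mid P\sqsubseteq\bot\}=\{\bot\}$; the supremum is $\mathbb{L}=\alpha^{\curlywedgedownarrow}(\{\top\})$ (the principal ideal of $\top$); and the closed sets are exactly the principal ideals, so that meets computed as intersections stay closed, which identifies the meet as $\cap$ and the join as $\LAMBDA{X}\alpha^{\curlywedgedownarrow}(\cup X)$.

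The main obstacle I anticipate is the careful bookkeeping of which least upper bounds are assumed to exist. In a general poset $\bigsqcup\mathcal{P}$ need not exist, so either I restrict to the complete-lattice case (where the closure principle applies verbatim and $\alpha^{\curlywedgedownarrow}$ is total) or I thread the partiality hypothesis through each step, checking that $\bot$, $\top=\bigsqcup\mathbb{L}$, and the binary and arbitrary joins used above are all defined in the ambient structure. The cleanest route is to state the lemma for a complete lattice $\mathbb{L}$, deduce the result from the closure-image theorem, and note that the chain-complete variants follow the same way once the needed limits are present; the only genuinely substantive point is the identity $\bigsqcup\alpha^{\curlywedgedownarrow}(\mathcal{P})=\bigsqcup\mathcal{P}$ driving idempotence, and that is immediate from the greatest-element remark.
```
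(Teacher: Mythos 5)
Your proposal is correct and follows essentially the same route as the paper: increasing and extensive are immediate from the definition, idempotence rests on the identity $\bigsqcup\{P\mid P\sqsubseteq\bigsqcup\mathcal{P}\}=\bigsqcup\mathcal{P}$ (since $\bigsqcup\mathcal{P}$ is the greatest element of its own principal ideal), and the complete-lattice structure of the image follows from Morgan Ward's theorem, exactly as in the paper. Your extra care about the existence of the joins and the explicit identification of the infimum $\{\bot\}$ and supremum $\mathbb{L}$ only spell out what the paper leaves implicit.
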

\begin{toappendix}
\begin{proof}[Proof of lemma \ref{lem:alpha:principal}]
By definition, $\alpha^{\curlywedgedownarrow}$ is increasing and extensive. For idempotence, we have
\begin{calculus}[=\ \ ]
\formula{\alpha^{\curlywedgedownarrow}(\alpha^{\curlywedgedownarrow}(\mathcal{P}))}\\
=
\formulaexplanation{\alpha^{\curlywedgedownarrow}(\{P\mid P\sqsubseteq \bigsqcup\mathcal{P}\})}{def.\ (\ref{eq:def:alpha:principal}) of $\alpha^{\curlywedgedownarrow}$}\\
=
\formulaexplanation{\{P\mid P\sqsubseteq \bigsqcup\{P'\mid P'\sqsubseteq \bigsqcup\mathcal{P}\}\}}{def.\ (\ref{eq:def:alpha:principal}) of $\alpha^{\curlywedgedownarrow}$}\\
=
\formulaexplanation{\{P\mid P\sqsubseteq \bigsqcup\mathcal{P}\}}{def.\ lub $\bigsqcup$}\\
=
\formulaexplanation{\alpha^{\curlywedgedownarrow}(\mathcal{P})}{def.\ (\ref{eq:def:alpha:principal}) of $\alpha^{\curlywedgedownarrow}$}
\end{calculus}

\smallskip

\noindent By Morgan Ward's \cite[theorem 4.1]{Ward42},  $\sextuple{\alpha^{\curlywedgedownarrow}(\wp(\mathbb{L}))}{\subseteq}{\{\bot\}}{\mathbb{L}}{\LAMBDA{X}\alpha^{\curlywedgedownarrow}(\cup X)}{\cap}$  is a complete lattice. 
\end{proof}
\end{toappendix}

\subsection{Proof Rule Simplification}If $\pair{\mathbb{L}}{\sqsubseteq}$ is a complete lattice and the composition  preserves arbitrary existing limits in definition \ref{def:abstract:domain:well:def:join:additive}  then proofs in the upper abstract semantic logic can be based on the classic upper abstract execution property logic of section \ref{sec:Algebraic-Logics-Program-Execution-Properties} for principal ideal closed properties and their dual~\proofinapx.
\begin{eqntabular}{c@{\qquad\quad}c}
\frac{\:{\overline{\{}\bigsqcup\mathcal{P}\overline{\}}\,\texttt{S}\,\overline{\{}\bigsqcup\mathcal{Q}\overline{\}}}\:}{{\overline{\llbrace}\,\mathcal{P}\,\overline{\rrbrace}\,\texttt{S}\,\overline{\llbrace}\,\mathcal{Q}\,\overline{\rrbrace}}}\raisebox{0.5ex}{,}\quad \alpha^{\curlywedgedownarrow}(\mathcal{Q})=\mathcal{Q}
&
\frac{\:\forall P\in\mathcal{P}\mathrel{.}{\underline{\{}P\underline{\}}\,\texttt{S}\,\underline{\{}\bigsqcap\mathcal{Q}\underline{\}}}\:}{{\overline{\llbrace}\,\mathcal{P}\,\overline{\rrbrace}\,\texttt{S}\,\overline{\llbrace}\,\mathcal{Q}\,\overline{\rrbrace}}}\raisebox{0.5ex}{,}\quad \alpha^{\curlyveeuparrow}(\mathcal{Q})=\mathcal{Q}
\label{eq:alpha:principal:rule}
\end{eqntabular}
\begin{toappendix}
\begin{proof}[Soundness and completeness proof of rule (\ref{eq:alpha:principal:rule})]
\begin{calculus}[$\Leftrightarrow$\ \ ]
\hyphen{5}\formula{{\overline{\llbrace}\,\mathcal{P}\,\overline{\rrbrace}\,\texttt{S}\,\overline{\llbrace}\,\mathcal{Q}\,\overline{\rrbrace}}}\\
$\Leftrightarrow$
\formulaexplanation{\textsf{Post}^\sharp(S)\mathcal{P}\subseteq\mathcal{Q}}{def.\ (\ref{eq:def:upper-logic}) of $\overline{\llbrace}\,\mathcal{P}\,\overline{\rrbrace}\,\texttt{S}\,\overline{\llbrace}\,\mathcal{Q}\,\overline{\rrbrace}$}\\
$\Leftrightarrow$
\formulaexplanation{\{\textsf{post}^\sharp(S)P\mid P\in\mathcal{P}\}\subseteq\mathcal{Q}}{def.\ (\ref{eq:def:Post}) of $\textsf{Post}^\sharp$}\\
$\Leftrightarrow$
\formulaexplanation{\{\textsf{post}^\sharp(S)P\mid P\in\mathcal{P}\}\subseteq\{P'\mid P'\sqsubseteq \bigsqcup\mathcal{Q}\}}{hypothesis $\alpha^{\curlywedgedownarrow}(\mathcal{Q})=\mathcal{Q}$ and def.\ (\ref{eq:def:alpha:principal}) of $\alpha^{\curlywedgedownarrow}$}\\
$\Leftrightarrow$
\formulaexplanation{\forall  P\in\mathcal{P}\mathrel{.} \textsf{post}^\sharp(S)P\sqsubseteq \bigsqcup\mathcal{Q}}{def.\ $\subseteq$}\\
$\Leftrightarrow$
\formulaexplanation{\bigsqcup_{P\in\mathcal{P}}\textsf{post}^\sharp(S)P\sqsubseteq \bigsqcup\mathcal{Q}}{def.\ lub $\sqcup$}\\
$\Leftrightarrow$
\formula{\textsf{post}^\sharp(S)\bigl(\bigsqcup_{P\in\mathcal{P}}P\bigr)\sqsubseteq \bigsqcup\mathcal{Q}}\\
\explanation{by hypothesis,  the composition  preserves arbitrary existing limits in definition \ref{def:abstract:domain:well:def:join:additive} and (\ref{eq:GC:post-S})}\\
$\Leftrightarrow$
\formulaexplanation{\textsf{post}^\sharp\sqb{\texttt{S}}^\sharp(\bigsqcup\mathcal{P})\mathrel{{\sqsubseteq^{\sharp}}} \bigsqcup\mathcal{Q}}{def.\ $\bigsqcup$}\\
=
\formulaexplanation{\overline{\{}\bigsqcup\mathcal{P}\overline{\}}\,\texttt{S}\,\overline{\{}\bigsqcup\mathcal{Q}\overline{\}}}{def.\ $\overline{\{}\,P\,\overline{\}}\,\texttt{S}\,\overline{\{}\,Q\,\overline{\}}$ in section \ref{sec:Algebraic-Logics-Program-Execution-Properties}}\\[1em]

\hyphen{5}\formula{{\overline{\llbrace}\,\mathcal{P}\,\overline{\rrbrace}\,\texttt{S}\,\overline{\llbrace}\,\mathcal{Q}\,\overline{\rrbrace}}}\\
$\Leftrightarrow$
\formulaexplanation{\textsf{Post}^\sharp(S)\mathcal{P}\subseteq\mathcal{Q}}{def.\ (\ref{eq:def:upper-logic}) of $\overline{\llbrace}\,\mathcal{P}\,\overline{\rrbrace}\,\texttt{S}\,\overline{\llbrace}\,\mathcal{Q}\,\overline{\rrbrace}$}\\
$\Leftrightarrow$
\formulaexplanation{\{\textsf{post}^\sharp(S)P\mid P\in\mathcal{P}\}\subseteq\mathcal{Q}}{def.\ (\ref{eq:def:Post}) of $\textsf{Post}^\sharp$}\\
$\Leftrightarrow$
\formulaexplanation{\{\textsf{post}^\sharp(S)P\mid P\in\mathcal{P}\}\subseteq\{P'\mid\bigsqcap\mathcal{Q}\sqsubseteq P'\}}{hypothesis $\alpha^{\curlyveeuparrow}(\mathcal{Q})=\mathcal{Q}$ and dual def.\ (\ref{eq:def:alpha:principal}) of $\alpha^{\curlyveeuparrow}$}\\
$\Leftrightarrow$
\formulaexplanation{\forall P\in\mathcal{P}\mathrel{.}\textsf{post}^\sharp(S)P \in\{P'\mid\bigsqcap\mathcal{Q}\sqsubseteq P'\}}{def.\ $\subseteq$}\\
$\Leftrightarrow$
\formulaexplanation{\forall P\in\mathcal{P}\mathrel{.}\bigsqcap\mathcal{Q}\sqsubseteq \textsf{post}^\sharp(S)P}{def.\ $\in$}\\
$\Leftrightarrow$
\lastformulaexplanation{\forall P\in\mathcal{P}\mathrel{.}\underline{\{} P\underline{\}}\,\texttt{S}\,\underline{\{}\bigsqcap\mathcal{Q}\underline{\}}}{def.\ $\underline{\{}\,P\,\underline{\}}\,\texttt{S}\,\underline{\{}\,Q\,\underline{\}}$ in section \ref{sec:Algebraic-Logics-Program-Execution-Properties}}{\mbox{\qed}}
\end{calculus}
\let\qed\relax
\end{proof}
\end{toappendix}

\begin{example}[Proof reduction for principal ideal hyperproperties] Consider the instantiation for the natural relational semantics in section $\ref{sect:RelationalSemantics}$ with no break. Define the assertional execution  postcondition $Q_1\triangleq\{\sigma\in\Sigma \mid \sigma(x) \leq 10\}$ with
relational equivalent $Q_2\triangleq\Sigma \times Q_1$ and hyperproperty 
$\mathcal{Q} \triangleq \alpha^{\curlywedgedownarrow} (Q_2)=\alpha^{\curlywedgedownarrow} (\Sigma \times \{\sigma\in\Sigma \mid \sigma(x) \leq 10\})$ and similarly $\mathcal{P} \triangleq \{ (\Sigma \times \{\sigma\in\Sigma \mid\ \sigma(x) =n  \}) \mid  n\in\mathbb{N}\wedge n > 10 \}$. To prove the following hyperlogic triple $\uphyperlogic{\mathcal{P}}{while(x>10) {x=x-1}}{\mathcal{Q}}$, it is equivalent to prove the following. 
\begin{calculus}[$\Leftrightarrow$ ]
    \phantom{$\Leftrightarrow$ } \formula{\uphyperlogic{\mathcal{P}}{while(x>10) {x=x-1}}{\mathcal{Q}}} \\
    $\Leftrightarrow$ \formulaexplanation{\uphoarelogic{\bigcup\mathcal{P}}{while(x>10) {x=x-1}}{\bigcup\mathcal{Q}}}{By rule of (${\ref{eq:alpha:principal:rule}}$)}\\
    $\Leftrightarrow$ \formula{\uphoarelogic{\Sigma \times \{\sigma\in \Sigma \mid \sigma(x) > 10\}  }{while(x>10) {x=x-1}}{\Sigma \times \{\sigma\in\Sigma \mid \sigma(x) \leq 10\}}}
\end{calculus}

\smallskip

\noindent Then one can use the over-approximation logic with termination proof in \cite{DBLP:journals/pacmpl/Cousot24Zenodo}.
\end{example}

\section{Order Ideal Abstraction}\label{sec:Order-Ideal-Abstraction}
\subsection{Definition of the Order Ideal Abstraction}
The order ideal abstraction on $\pair{\wp(\mathbb{L})}{\subseteq}$ is
\bgroup\abovedisplayskip4pt\belowdisplayskip2pt
\begin{eqntabular}{rcl@{\qquad}c}
\alpha^{\sqsubseteq}(\mathcal{P}) 
&\triangleq&
\{P'\in \mathbb{L}\mid \exists P\in\mathcal{P}\mathrel{.}P'\sqsubseteq P\}
&
\pair{\wp(\mathbb{L})}{\subseteq}\galoiS{\alpha^{\sqsubseteq}}{\mathbb{1}}\pair{\alpha^{\sqsubseteq}(\wp(\mathbb{L}))}{\subseteq}
\label{eq:GC:alpha:sqsubseteq}
\end{eqntabular}\egroup
$\alpha^{\sqsubseteq}$ is an upper closure operator and $\sextuple{\alpha^{\sqsubseteq}(\wp(\mathbb{L}))}{\subseteq}{\emptyset}{\mathbb{L}}{\LAMBDA{X}\alpha^{\sqsubseteq}(\cup X)}{\cap}$  is a complete lattice \cite[theorem 4.1]{Ward42}. The order filter abstraction $\alpha^{\sqsupseteq}$ is defined dually. Note that $\alpha^{\curlywedgedownarrow}(\mathcal{P}) =\alpha^{\sqsubseteq}(\{\bigsqcup\mathcal{P}\})$.
As  observed by \cite[page 239]{DBLP:conf/sas/MastroeniP17} for subset-closed hyperproperties, all execution properties are order-ideal closed for trace properties (where $\sqsubseteq$ is $\subseteq$), but not conversely, citing observational determinism \cite{DBLP:conf/csfw/ZdancewicM03} as a counterexample.

\subsection{Proof Rule Simplification}The main interest of the order ideal/filter abstraction is the substantial simplification of the \texttt{while} rules
(\ref{eq:logic:abstract:while:upper}) and (\ref{eq:logic:abstract:while:lower}). To show this consider
properties in $\alpha^{{\sqsupseteq}^{\sharp}}(\wp(\mathbb{L}^{\sharp})$ where ${\sqsupseteq}^{\sharp}$
is defined component wise on $\mathbb{L}^{\sharp}$ in (\ref{eq:def:Abstract-Semantic-Domain-Semantics})
with ${\sqsupseteq_{+}^{\sharp}}$ on the exit and break components and ${\sqsubseteq_{\infty}^{\sharp}}$
on the infinite component. We abstract $\textsf{Post}^\sharp$ in (\ref{eq:def:Post}) to $\textsf{Post}^{{\sqsupseteq}^{\sharp}}\in{\mathbb{L}^{\sharp}}\functionto\alpha^{{\sqsupseteq}^{\sharp}}(\wp({\mathbb{L}^{\sharp}}))\increasingfunctionto\alpha^{{\sqsupseteq}^{\sharp}}(\wp({\mathbb{L}^{\sharp}}))$ by $(\mathcal{P}\in\alpha^{{\sqsupseteq}^{\sharp}}(\wp({\mathbb{L}^{\sharp}}))$)
\begin{calculus}[$\textsf{Post}^{{\sqsupseteq}^{\sharp}}(S)\mathcal{P}$\ = \ ]
$\textsf{Post}^{{\sqsupseteq}^{\sharp}}(S)\mathcal{P}$\ $\triangleq$
\formulaexplanation{\alpha^{{\sqsupseteq}^{\sharp}}(\textsf{Post}^\sharp(S)\mathcal{P})\colsep{=}\{P'\in \mathbb{L}^{\sharp}\mid \exists P\in\textsf{Post}^\sharp(S)\mathcal{P}\mathrel{.}P'\mathrel{{\sqsupseteq}^{\sharp}} P\}}{def.\ (\ref{eq:GC:alpha:sqsubseteq}) of $\alpha^{{\sqsupseteq}^{\sharp}}$}\\
\phantom{$\textsf{Post}^{{\sqsupseteq}^{\sharp}}(S)\mathcal{P}$\ }=
\formulaexplanation{\{P'\in \mathbb{L}^{\sharp}\mid \exists P\in\{\textsf{post}^\sharp(S)P\mid P\in\mathcal{P}\}\mathrel{.}P\mathrel{{\sqsubseteq}^{\sharp}} P'\}}{def.\ (\ref{eq:def:Post}) of \textsf{Post} and inversion of ${{\sqsupseteq}^{\sharp}}$}\\
\phantom{$\textsf{Post}^{{\sqsupseteq}^{\sharp}}(S)\mathcal{P}$\ }=
\formulaexplanation{\{P'\in \mathbb{L}^{\sharp}\mid \exists P\in\mathcal{P}\mathrel{.}\textsf{post}^\sharp(S)P\in\mathcal{P}\mathrel{{\sqsubseteq}^{\sharp}} P'\}}{def.\ $\in$}
\end{calculus}
The consequence is that the \texttt{while} loop verification condition (\ref{eq:logic:abstract:while:upper}) simplifies to 
$\Lfp{{\sqsubseteq}_{+}^{\sharp}}{\vec{F}_{pe}^{\sharp}}(P'){}\mathrel{{\sqsubseteq}_{+}^{\sharp}}P_e$
and
$\Gfp{{\sqsubseteq}_{\infty}^{\sharp}}{F_{p\bot}^{\sharp}}\mathrel{{\sqsubseteq}_{\infty}^{\sharp}}Q_{\bot b}$ which can respectively be handled
by Park induction \cite[theorem II.3.1]{DBLP:journals/pacmpl/Cousot24} and greatest fixpoint over apppoximation by transfinite iterates using the dual of \cite[theorem II.3.6]{DBLP:journals/pacmpl/Cousot24} as is the case, for classic execution properties, in Hoare logic and termination proofs. The reasoning is dual for (\ref{eq:logic:abstract:while:lower}).

\begin{example}[Proof reduction for the order ideal abstraction: bounded nondeterminism]
\label{ex:order-ideal-proof-reduction}
Let us consider proofs of programs with bounded nondeterminism, assuming that the value of variables could only be integers. Consider the instantiation of relational natural semantics in section $\ref{sect:RelationalSemantics}$ with no break and no nontermination where  $\mathbb{V} = \mathbb{Z}$. Let $|S|$ be the cardinality of a set $S$ and consider the semantic (hyper) property $\mathcal{F}\triangleq\wp_{\textsf{fin}}(\mathbb{L})\triangleq\{P\in\wp(\mathbb{L})\mid |P| \in\mathbb{N}\}$ to be the set of finite execution semantics i.e.\ programs satisfying $\mathcal{F}$ cannot have infinitely many different executions although $\mathbb{L}$ has an infinite cardinality.

    Now, suppose we want prove that $\hyperlogicup{\mathcal{F}}{S}{\mathcal{F}}$, where $\texttt{S}\triangleq \texttt{x =  [$0$, $\infty$]; while(x>0) x=x-1}$. Since $\mathcal{F}$ is an order ideal abstraction (subset-closed), we need to find a function $\mathcal{I}\in \mathcal{F}\rightarrow \mathcal{F}$ such that for arbitrary $P\in\mathcal{P}$, we have $ \textsf{\textup{post}}\sqb{\texttt{S}}\subseteq \mathcal{I}(P)$, and, at the same time, the image of $\mathcal{I}$ is a subset of $\mathcal{F}$. Let $m$ and $n$ to be any integer such that $m<0<n$, we can set this $\mathcal{I}$ to be 
\begin{eqntabular}{c}
        \mathcal{I} = \LAMBDA{P}{\{\pair{\sigma}{\sigma'}\in \Sigma\times\Sigma \mid  m\,{<}\,\sigma'(x) \,{\leq}\, n \land \exists \pair{\sigma_1}{\sigma'_1}\in P \mathrel{.} (\sigma_1 \,{=}\, \sigma \land \forall v \in \mathbb{V}\mathrel{.} v\neq x \Rightarrow\sigma_1'(x) \,{=}\, \sigma'(x))\}} \nonumber
\end{eqntabular}
We notice that this program component eventually assigns the value $0$ to $x$ while keeping the value of the other variables unchanged. As a result, for arbitrary $P\in\mathcal{P}$
\begin{eqntabular*}{rcl}
\textsf{\textup{post}}\sqb{\texttt{S}}(P)&=&\{\pair{\sigma}{\sigma'}\in \Sigma\times\Sigma \mid\begin{array}[t]{l} \sigma'(x) = 0 \land \exists \pair{\sigma_1}{\sigma'_1}\in P \mathrel{.} (\sigma_1 = \sigma \land \forall v \in \mathbb{V}\mathrel{.} v\neq x \Rightarrow{}\\\sigma_1'(x) = \sigma'(x))\}\subseteq \mathcal{I}(P)\end{array}\end{eqntabular*}
For the cardinality of $\mathcal{I}(P)$, we let the sequence $\pair{X^i}{n<i\leq m}$ such that $X^i = \{\pair{\sigma}{\sigma'}\in\Sigma\times\Sigma \mid \sigma'(x) = i \land \exists \pair{\sigma_1}{\sigma'_1}\in P \mathrel{.} (\sigma_1 = \sigma \land \forall v \in \mathbb{V}\mathrel{.} v\neq x \Rightarrow\sigma_1'(x) = \sigma'(x))\}$. The cardinality of $X^i$ in this case will be smaller than that of $P$, meaning $|X^i|\in\mathbb{N}$. Thus, the finite union of $X^i$, \smash{$\displaystyle\bigcup_{m<i\leq n }X^i$} also has finite cardinality.
\end{example}

\section{Frontiers Abstractions}\label{sec:FrontierAbstraction}
Another solution to represent order ideal abstractions as proposed by \cite[proposition 1]{DBLP:conf/sas/MastroeniP17} is to consider the maximal elements of the order ideal closed semantic (hyper) property only. Unfortunately, this is not the same abstraction. 
\begin{counterexample}\label{cex:order-ideal-frontier}Consider the hyperproperty $\mathcal{F}\triangleq\wp_{\textsf{fin}}(\mathbb{L})\triangleq\{P\in\wp(\mathbb{L})\mid |P| \in\mathbb{N}\}$ in example \ref{ex:order-ideal-proof-reduction} i.e.\ programs satisfying $\mathcal{F}$ cannot have infinitely many different executions although $\mathbb{L}$ has an infinite cardinality. Then the order ideal abstraction is
$\alpha^{\sqsubseteq}(\mathcal{F})=\mathcal{F}$ which has no maximal elements so the maximal elements abstraction of this order ideal abstraction 
$\alpha^{\sqsubseteq}(\mathcal{F})=\mathcal{F}$ is the empty set which is definitely different from this order ideal abstraction 
$\alpha^{\sqsubseteq}(\mathcal{F})=\mathcal{F}$.
\end{counterexample}
Let us study this abstraction in more detail.

\subsection{Lower Frontier Abstraction}
The lower frontier abstraction abstracts a subset of a poset to its mimimal elements
\bgroup\abovedisplayskip3pt\belowdisplayskip3pt
\begin{eqntabular}{rcl}
{\alpha}^{\underline{F\!}\mskip3mu}(\mathcal{P}) &\triangleq&\{P\in\mathcal{P}\mid\forall P'\in\mathcal{P}\mathrel{.}P'\sqsubseteq P\Rightarrow P'=P\}
\label{eq:def:alpha:underline:F}
\end{eqntabular}\egroup
${\alpha}^{\underline{F\!}\mskip3mu}$ is reductive and idempotent by not necessarily increasing (and so does not necessarily preserve existing joins)
hence may not be the lower adjoint of a Galois connection.

\noindent \begin{minipage}[t]{0.85\textwidth}
\begin{counterexample}\label{cex:alpha:underline:F}Consider the complete lattice $\{\bot, 0, 1, \top\}$ with $\bot\sqsubseteq\bot\sqsubseteq 0\sqsubseteq 0\sqsubseteq\top\sqsubseteq\top$ and $\bot\sqsubseteq1\sqsubseteq1\sqsubseteq\top$. We have $\mathcal{P}_1=\{\top\}\subseteq \{0,1,\top\}=\mathcal{P}_2$ but ${\alpha}^{\underline{F\!}\mskip3mu}(\mathcal{P}_1)=\{\top\}\nsubseteq\{0,1\}={\alpha}^{\underline{F\!}\mskip3mu}(\mathcal{P}_2)$ proving that ${\alpha}^{\overline{F}}$ is not increasing hence does not preserve existing joins hence is not the lower adjoint of a Galois connection. By duality, neither is ${\alpha}^{\overline{F}}$.\let\qef\relax
\end{counterexample}
\end{minipage}\hskip1em
\raisebox{-15.25mm}[0pt][0pt]{\includegraphics[width=0.1\textwidth]{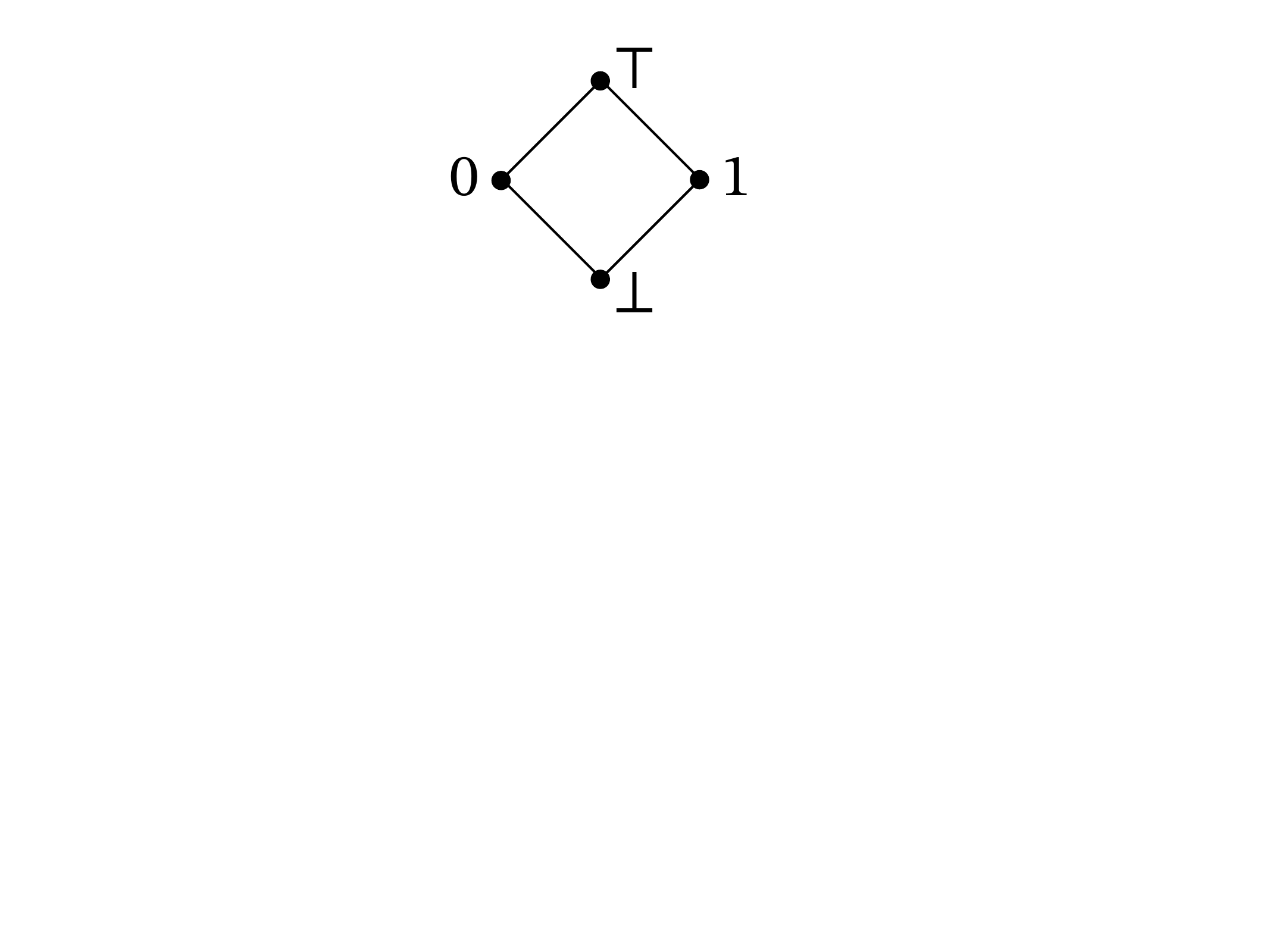}}\hskip0.35em\raisebox{-17.25mm}[0pt][0pt]{\mbox{\qef}}

\subsection{Frontier Order Ideal Abstraction}
The frontier order ideal abstraction
\bgroup\abovedisplayskip4pt\belowdisplayskip2pt
\begin{eqntabular}{rcl}
\alpha^{{\sqsupseteq}\underline{F}}&\triangleq&\alpha^{{\sqsupseteq}}\comp\alpha^{\underline{F}}
\label{eq:def:alpha:sqsubseteq:underline:F}
\end{eqntabular}\egroup
closes the frontier by its over approximations, as shown by the following
\begin{lemma}\label{lem:alpha:sqsubseteq:underline:F}\proofinapx\quad
$\alpha^{{\sqsupseteq}\underline{F}}(\mathcal{P})$ = $\{P\in \mathbb{L}\mid \exists F\in\alpha^{\underline{F}}(\mathcal{P})\mathrel{.}F\sqsubseteq P\}$ =
$\{P\in \mathbb{L}\mid \exists F\in\mathcal{P}\mathrel{.}\forall P'\in\mathcal{P}\mathrel{.}P'\sqsubseteq F\Rightarrow P'=F\wedge F\sqsubseteq P\}$.
\end{lemma}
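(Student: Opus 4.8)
The statement to prove is Lemma~\ref{lem:alpha:sqsubseteq:underline:F}, which asserts two successive equalities unwinding the composite abstraction $\alpha^{{\sqsupseteq}\underline{F}} = \alpha^{{\sqsupseteq}}\comp\alpha^{\underline{F}}$. The plan is to prove it by straightforward unfolding of the definitions (\ref{eq:def:alpha:sqsubseteq:underline:F}), (\ref{eq:GC:alpha:sqsubseteq}), and (\ref{eq:def:alpha:underline:F}), in that order, treating the two claimed equalities as two separate calculational steps.

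First I would establish the first equality $\alpha^{{\sqsupseteq}\underline{F}}(\mathcal{P}) = \{P\in\mathbb{L}\mid \exists F\in\alpha^{\underline{F}}(\mathcal{P})\mathrel{.}F\sqsubseteq P\}$. By definition (\ref{eq:def:alpha:sqsubseteq:underline:F}), $\alpha^{{\sqsupseteq}\underline{F}}(\mathcal{P}) = \alpha^{{\sqsupseteq}}(\alpha^{\underline{F}}(\mathcal{P}))$. Now $\alpha^{{\sqsupseteq}}$ is the order filter abstraction, defined dually to $\alpha^{\sqsubseteq}$ in (\ref{eq:GC:alpha:sqsubseteq}); so whereas $\alpha^{\sqsubseteq}(\mathcal{Q}) = \{P'\mid \exists P\in\mathcal{Q}\mathrel{.}P'\sqsubseteq P\}$, the dual reads $\alpha^{{\sqsupseteq}}(\mathcal{Q}) = \{P\mid \exists F\in\mathcal{Q}\mathrel{.}F\sqsubseteq P\}$. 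Instantiating $\mathcal{Q} = \alpha^{\underline{F}}(\mathcal{P})$ yields the first equality directly. This is essentially bookkeeping, the only subtle point being to get the direction of the inequality right: $\alpha^{{\sqsupseteq}}$ closes \emph{upward}, collecting all elements lying above some element of the argument set, so the witness $F$ satisfies $F\sqsubseteq P$.

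Next I would obtain the second equality by substituting the definition (\ref{eq:def:alpha:underline:F}) of $\alpha^{\underline{F}}$ into the set comprehension. Since $\alpha^{\underline{F}}(\mathcal{P}) = \{F\in\mathcal{P}\mid \forall P'\in\mathcal{P}\mathrel{.}P'\sqsubseteq F\Rightarrow P'=F\}$, the condition $\exists F\in\alpha^{\underline{F}}(\mathcal{P})\mathrel{.}F\sqsubseteq P$ expands to $\exists F\mathrel{.}F\in\mathcal{P}\wedge(\forall P'\in\mathcal{P}\mathrel{.}P'\sqsubseteq F\Rightarrow P'=F)\wedge F\sqsubseteq P$, which is exactly the predicate appearing in the final set. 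Care must be taken only with the scope of the quantifiers and conjunctions so that the parenthesization in the statement (where $F\in\mathcal{P}$ is folded into the existential binder and the remaining conjuncts sit under it) matches; this is purely a matter of rewriting the bound-variable notation.

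I do not anticipate a genuine mathematical obstacle here, since the lemma is a pure definitional unfolding with no appeal to closure-operator properties (indeed, by Counterexample~\ref{cex:alpha:underline:F}, $\alpha^{\underline{F}}$ need not even be increasing, and this lemma makes no such claim). The only place demanding attention is confirming the \emph{dual} form of the order filter abstraction $\alpha^{{\sqsupseteq}}$ is read off correctly from (\ref{eq:GC:alpha:sqsubseteq}) — the lemma's entire content is that upward-closing the minimal frontier recovers an order filter generated by the frontier — so I would state that dual explicitly at the outset to keep the inequality directions unambiguous throughout the two calculational steps.
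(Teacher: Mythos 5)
Your proposal is correct and follows essentially the same route as the paper's own proof: unfold the composition $\alpha^{{\sqsupseteq}}\comp\alpha^{\underline{F}}$ using the dual of (\ref{eq:GC:alpha:sqsubseteq}), substitute definition (\ref{eq:def:alpha:underline:F}), and simplify by the definition of $\in$. Your explicit remark about reading off the dual form of $\alpha^{{\sqsupseteq}}$ and about quantifier scoping matches the (implicit) care taken in the paper's calculational steps.
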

\begin{toappendix}
\begin{proof}[Proof of lemma \ref{lem:alpha:sqsubseteq:underline:F}]
\begin{calculus}[=\ \ ]
\formula{\alpha^{{\sqsupseteq}}\comp\alpha^{\underline{F}}(\mathcal{P})}\\
=
\formulaexplanation{\{P\in \mathbb{L}\mid \exists F\in\alpha^{\underline{F}}(\mathcal{P})\mathrel{.}F\sqsubseteq P\}}{def.\ function composition $\comp$ and (\ref{eq:GC:alpha:sqsubseteq}) of the dual $\alpha^{{\sqsubseteq}}$}\\
=
\formulaexplanation{\{P\in \mathbb{L}\mid \exists F\in\{P\in\mathcal{P}\mid\forall P'\in\mathcal{P}\mathrel{.}P'\sqsubseteq P\Rightarrow P'=P\}\mathrel{.}F\sqsubseteq P\}}{def.\ (\ref{eq:def:alpha:underline:F}) of ${\alpha}^{\underline{F\!}\mskip3mu}$}\\
=
\lastformulaexplanation{\{P\in \mathbb{L}\mid \exists F\in\mathcal{P}\mathrel{.}\forall P'\in\mathcal{P}\mathrel{.}P'\sqsubseteq F\Rightarrow P'=F\wedge F\sqsubseteq P\}}{def.\ $\in$}{\mbox{\qed}}
\end{calculus}
\let\qed\relax
\end{proof}
\end{toappendix}
Observe that $\alpha^{{\sqsupseteq}\underline{F}}$ is idempotent but not necessarily increasing or extensive.
\begin{counterexample}Consider $\mathbb{L}=\{\pair{a}{n}\mid n\in\mathbb{N}\}\cup \{\pair{b}{m}\mid m\in\mathbb{N}\}$ with $\pair{x}{n}\sqsubseteq\pair{y}{m}\triangleq x=y\wedge n\geqslant m$ be two incomparable infinite decreasing chains. $\mathbb{L}\not\subseteq\alpha^{{\sqsupseteq}\underline{F}}(\mathbb{L})=\emptyset$. Take $\mathcal{P}=\{\pair{a}{n}\mid n\in\mathbb{N}\}\cup \{\pair{b}{0}\}$ so that $\mathcal{P}\subseteq\mathbb{L}$
but $\alpha^{{\sqsupseteq}\underline{F}}(\mathcal{P})=\{\pair{b}{m}\mid m\in\mathbb{N}\}\not\subseteq\alpha^{{\sqsupseteq}\underline{F}}(\mathbb{L})=\emptyset$.
\end{counterexample}
${\alpha^{{\sqsupseteq}\underline{F}}(\wp(\mathbb{L}))}$ is not closed by intersection.

\noindent \begin{minipage}[t]{0.67\textwidth}
\begin{counterexample}\label{cex:alpha:sqsupseteq:underline:F:non:cap}Consider the lattice on the right.
Let $\mathcal{P}_1=\{Z^i\mid i\in\mathbb{N}_{\ast}\}\cup \{X^i\mid i\in\mathbb{N}_{\ast}\}$ with frontier $\mathcal{F}_1=\{X^i\mid i\in\mathbb{N}_{\ast}\}$ and $\mathcal{P}_2=\{Z^i\mid i\in\mathbb{N}_{\ast}\}\cup\{Y^i\mid i\in\mathbb{N}_{\ast}\}$ with frontier $\mathcal{F}_2=\{Y^i\mid i\in\mathbb{N}_{\ast}\}$. There is  no largest set smaller than $\mathcal{P}_1$ and $\mathcal{P}_2$ with an existing frontier.
\end{counterexample}
\end{minipage}\hfill
\raisebox{-20mm}[0pt][0pt]{\includegraphics[width=0.3\textwidth]{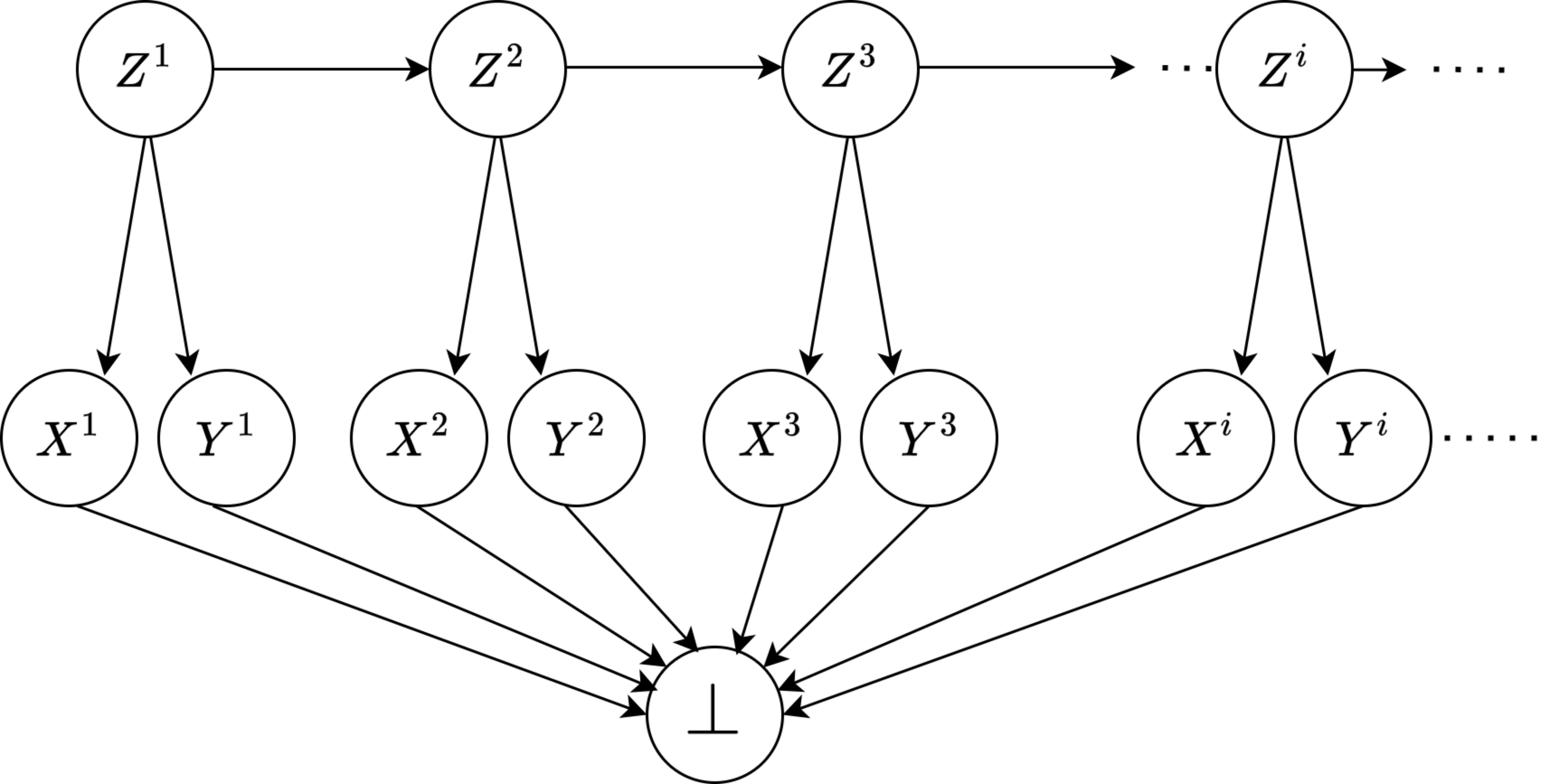}}

\begin{lemma}\label{lem:lower:frontier:closed:lattice}\proofinapx\quad$\quintuple{\alpha^{{\sqsupseteq}\underline{F}}(\wp(\mathbb{L}))}{\subseteq}{\emptyset}{\mathbb{L}}{\cup}$ is a join semilattice. 
\end{lemma}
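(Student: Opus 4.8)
The plan is to identify the image $\mathcal{I}\triangleq\alpha^{{\sqsupseteq}\underline{F}}(\wp(\mathbb{L}))$ with the collection of order filters that are \emph{generated by an antichain}, and then to check the data of definition \ref{def:Properties:semilattice}: that $\mathcal{I}$ is closed under binary (hence nonempty finite) union with $\cup$ furnishing the least upper bound for $\subseteq$, that $\emptyset$ is the infimum, and that $\mathbb{L}$ is the supremum.

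First I would establish the characterization $\mathcal{I}=\{\alpha^{\sqsupseteq}(A)\mid A\text{ an antichain of }\mathbb{L}\}$. The inclusion $\subseteq$ is immediate from lemma \ref{lem:alpha:sqsubseteq:underline:F}: every $\alpha^{{\sqsupseteq}\underline{F}}(\mathcal{P})$ is $\alpha^{\sqsupseteq}$ applied to the frontier $\alpha^{\underline{F}}(\mathcal{P})$, and by definition (\ref{eq:def:alpha:underline:F}) that frontier is an antichain, since two minimal elements of $\mathcal{P}$ that were comparable would have to coincide. For $\supseteq$, any antichain $A$ satisfies $\alpha^{\underline{F}}(A)=A$, whence $\alpha^{{\sqsupseteq}\underline{F}}(A)=\alpha^{\sqsupseteq}(A)$. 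In particular each element of $\mathcal{I}$ is an upward-closed set.

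The central step is closure under union. Given $\mathcal{Q}_1=\alpha^{\sqsupseteq}(A_1)$ and $\mathcal{Q}_2=\alpha^{\sqsupseteq}(A_2)$ with $A_1,A_2$ antichains, the operator $\alpha^{\sqsupseteq}$ distributes over union, so $\mathcal{Q}_1\cup\mathcal{Q}_2=\alpha^{\sqsupseteq}(A_1\cup A_2)$. The set $A_1\cup A_2$ need not be an antichain, so I would take its frontier $B\triangleq\alpha^{\underline{F}}(A_1\cup A_2)$ and prove $\alpha^{\sqsupseteq}(B)=\alpha^{\sqsupseteq}(A_1\cup A_2)$. The inclusion $\subseteq$ follows from $B\subseteq A_1\cup A_2$, and for $\supseteq$ it suffices to show that every $a\in A_1\cup A_2$ dominates some element of $B$. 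This is the one genuinely nontrivial point: any strictly $\sqsubseteq$-descending chain $a_0,a_1,a_2$ (each strictly below its predecessor) inside $A_1\cup A_2$ is impossible, because consecutive comparable elements must lie in different antichains, forcing $a_0$ and $a_2$ into the same antichain while $a_2\sqsubseteq a_0$ with $a_2\neq a_0$, contradicting (\ref{eq:def:alpha:underline:F}). Hence descending chains in $A_1\cup A_2$ have length at most two, so every element of $A_1\cup A_2$ lies above a minimal element of $A_1\cup A_2$, i.e.\ an element of $B$; moreover such minimal elements are minimal in $\mathcal{Q}_1\cup\mathcal{Q}_2$. Therefore $\mathcal{Q}_1\cup\mathcal{Q}_2=\alpha^{\sqsupseteq}(B)$ with $B$ an antichain, so $\mathcal{Q}_1\cup\mathcal{Q}_2\in\mathcal{I}$. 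Since $\cup$ is the least upper bound for $\subseteq$ in $\wp(\mathbb{L})$ and it lands in $\mathcal{I}$, it is also the least upper bound inside $\mathcal{I}$, and a routine induction extends this to every nonempty finite subset. Finally $\emptyset=\alpha^{\sqsupseteq}(\emptyset)\in\mathcal{I}$ is the least element; and using that $\mathbb{L}$ carries an infimum $\bot$ (the complete-lattice setting of definition \ref{def:abstract:domain:well:def}), the singleton $\{\bot\}$ is an antichain with $\alpha^{\sqsupseteq}(\{\bot\})=\mathbb{L}$, so $\mathbb{L}\in\mathcal{I}$ and, being above every upward-closed set, it is the greatest element.

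The main obstacle is exactly the descending-chain bound. Counterexample \ref{cex:alpha:sqsupseteq:underline:F:non:cap} and the failure of $\alpha^{{\sqsupseteq}\underline{F}}$ to be extensive show that a general order filter need not be regenerated from its frontier, so one cannot close the proof by a generic closure-operator argument; what saves closure under union is the purely combinatorial fact that a union of \emph{two} antichains admits no infinite descending chain. I would emphasize that this argument is specific to binary unions and does not survive for arbitrary families, which is consistent with the lemma asserting only a join semilattice (finite joins) rather than a complete lattice.
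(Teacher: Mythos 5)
Your proof is correct, and it builds exactly the same object as the paper's proof: given $\mathcal{P}_i=\alpha^{\sqsupseteq}(\mathcal{F}_i)$, both arguments take the new frontier to be the minimal elements $\alpha^{\underline{F}}(\mathcal{F}_1\cup\mathcal{F}_2)$ of the union of the two old frontiers and then verify the two inclusions. Where you diverge is in the justification of the nontrivial inclusion (that every element of $\mathcal{P}_1\cup\mathcal{P}_2$ lies above a minimal element of $\mathcal{F}_1\cup\mathcal{F}_2$): the paper runs a three-way case analysis on whether $X$ belongs to $\mathcal{P}_1$ only, $\mathcal{P}_2$ only, or both, with a further sub-case on whether some $F_2\in\mathcal{F}_2$ sits strictly below the witnessing $F_1$; you instead isolate the combinatorial fact that a union of two antichains admits no strictly descending chain of length three, from which the existence of a dominated minimal element is immediate. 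Your packaging is cleaner and makes it transparent why the argument is confined to \emph{binary} unions (hence a join semilattice and not a complete lattice), which the paper's case analysis obscures and which counterexample \ref{cex:alpha:sqsupseteq:underline:F:non:cap} confirms. You also go beyond the paper's proof in two respects it silently skips: you check that $\emptyset$ is in the image and is the infimum, and you correctly flag that $\mathbb{L}$ being the supremum requires $\mathbb{L}$ itself to lie in the image, which needs an extra hypothesis (a least element $\bot$ of $\mathbb{L}$, available in the complete-lattice setting) — without it, a poset with an infinite strictly descending chain and no minimum gives $\alpha^{{\sqsupseteq}\underline{F}}(\mathbb{L})=\emptyset\neq\mathbb{L}$. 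That caveat is worth keeping, since the lemma's quintuple notation asserts $\mathbb{L}$ as supremum.
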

\begin{toappendix}
\begin{proof}[Proof of lemma \ref{lem:lower:frontier:closed:lattice}]Given $\mathcal{P}_1$, $\mathcal{P}_2\in{\alpha^{{\sqsupseteq}\underline{F}}(\wp(\mathbb{L}))}$, we have to prove that $\mathcal{P}_1\cup\mathcal{P}_2\in{\alpha^{{\sqsupseteq}\underline{F}}(\wp(\mathbb{L}))}$ that is the existence of a frontier $\mathcal{F}\in\alpha^{\underline{F}}(\wp(\mathbb{L}))$ such that
$\mathcal{P}_1\cup\mathcal{P}_2=\alpha^{{\sqsupseteq}}(\mathcal{F})$. Let $\mathcal{F}_1$, $\mathcal{F}_2$ be the frontiers such that $\mathcal{P}_1=\alpha^{{\sqsupseteq}}(\mathcal{F}_1)$ and $\mathcal{P}_2=\alpha^{{\sqsupseteq}}(\mathcal{F}_2)$. Define the frontier
\begin{eqntabular}{rclcl}
\mathcal{F}&\triangleq&\alpha^{\underline{F}}(\mathcal{F}_1\cup\mathcal{F}_2)&=&\{P\in\mathcal{F}_1\cup\mathcal{F}_2\mid\forall P'\in\mathcal{F}_1\cup\mathcal{F}_2\mathrel{.}P'\sqsubseteq P\Rightarrow P'=P\}
\label{eq:def:alphaF:F1UF2}
\end{eqntabular}
\hyphen{5}\quad To prove $\mathcal{P}_1\cup\mathcal{P}_2\subseteq\alpha^{{\sqsupseteq}}(\mathcal{F})$, given any $X\in\mathcal{P}_1\cup\mathcal{P}_2$, let us show the existence of $F\in\mathcal{F}$ such that $X\in\alpha^{{\sqsupseteq}}(P)$ that is $F\sqsubseteq X$. There are two cases.
\begin{enumerate}[leftmargin=*]
\item \label{case:lower:frontier:closed:lattice:1} If $X\in\mathcal{P}_1$ and $X\not\in\mathcal{P}_2$ then $\exists F_1\in\mathcal{F}_1\mathrel{.}F_1\sqsubseteq X$ and $\forall F_2\in\mathcal{F}_2\mathrel{.}F_2\not\sqsubseteq X$ so taking $P=F_1$ in (\ref{eq:def:alphaF:F1UF2}), we have $P=F_1\in\mathcal{F}_1\cup\mathcal{F}_2$ and $\forall P'\in\mathcal{F}_1\cup\mathcal{F}_2$, if $P'\sqsubseteq P =F_1$ then $P'\sqsubseteq X$ by transitivity so $P'\not\in \mathcal{F}_2$ proving
$P'\in\mathcal{F}_1$ and so $P'=F_1=P$ by $F_1\in \alpha^{\underline{F}}(\mathcal{F}_1)$;
\item \label{case:lower:frontier:closed:lattice:2} The case $X\not\in\mathcal{P}_1$ and $X\in\mathcal{P}_2$ is symmetric;
\item \label{case:lower:frontier:closed:lattice:3} Otherwise $X\in\mathcal{P}_1\cap\mathcal{P}_2$. In that case $\exists F_1\in\mathcal{F}_1\mathrel{.}F_1\sqsubseteq X$. Let $\mathcal{M}=
\mathcal{F}_2\cap\alpha^{\sqsupseteq}(X)$. There are two subcases.
\begin{enumerate}[leftmargin=*]
\item $\forall F_2\in\mathcal{M}\mathrel{.}F_2\nsqsubset F_1$. This is similar to case \ref{case:lower:frontier:closed:lattice:1};
\item $\exists F_2\in\mathcal{M}\mathrel{.}F_2\sqsubset F_1$. No element $F'_1$ of $\mathcal{F}_1\setminus\{F_1\}$ is comparable to $F_2$ since
otherwise $F'_1\sqsubseteq F_2\sqsubset F_1$ would contradict that $F_1$ is in the frontier of $\mathcal{P}_1$. Therefore, taking
$P=F_2$, we have $P=F_2\in \mathcal{F}_1\cup\mathcal{F}_2$ and if $P'\in\mathcal{F}_1\cup\mathcal{F}_2$ then $P'\in \mathcal{F}_2$ is impossible so
$P'\in\mathcal{F}_1$ so $P'=F_1=P$ by $F_1\in \alpha^{\underline{F}}(\mathcal{F}_1)$;
\end{enumerate}
\end{enumerate}
\hyphen{5}\quad Conversely, to prove $\mathcal{P}_1\cup\mathcal{P}_2\supseteq\alpha^{{\sqsupseteq}}(\mathcal{F})$, assume $X\in\alpha^{{\sqsupseteq}}(\mathcal{F})$ so that there exists $F\in \alpha^{\underline{F}}(\mathcal{F}_1\cup\mathcal{F}_2)$ such that 
$F\sqsubseteq X$. By (\ref{eq:def:alphaF:F1UF2}), either $F\in \mathcal{F}_1$ and $X\in \mathcal{P}_1$ or $F\in \mathcal{F}_2$ and $X\in \mathcal{P}_2$ proving $X\in \mathcal{P}_1\cup\mathcal{P}_2$.
\end{proof}
\end{toappendix}

\subsection{A Frontier Characterization of the Order Ideal Abstraction}
\begin{lemma}\label{lem:def-alpha-upper-frontier-domain}\proofinapx\quad There is a Galois isomorphism
$\pair{\alpha^{{\sqsubseteq}\overline{F}}(\wp(\mathbb{L}))}{\subseteq}
\GaloiS{{\alpha}^{\overline{F}}}{\alpha^{\sqsubseteq}} 
\pair{{\alpha}^{\overline{F}}(\wp(\mathbb{L}))}{\mathrel{{\preceq}^{\overline{F}}}}$ and $\triple{{\alpha}^{\overline{F}}(\wp(\mathbb{L}))}{\mathrel{{\preceq}^{\overline{F}}}}{\mathbin{{\curlyvee}^{\overline{F}}}}$ is a join semi lattice with $P\mathrel{{\preceq}^{\overline{F}}}Q\triangleq (\alpha^{\sqsubseteq}(P)\subseteq \alpha^{\sqsubseteq}(Q))$ and $P\mathbin{{\curlyvee}^{\overline{F}}}Q\triangleq{{\alpha}^{\overline{F}}}(\alpha^{\sqsubseteq}(P)\cup\alpha^{\sqsubseteq}(Q))$.
\end{lemma}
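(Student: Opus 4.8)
The statement to prove, Lemma~\ref{lem:def-alpha-upper-frontier-domain}, asserts a Galois isomorphism between the order-ideal closed subsets of $\mathbb{L}$ and the upper frontiers, together with the fact that the latter form a join semilattice under the transported order and join. The plan is to exploit the structure already established for the dual (lower) frontier abstraction: by order-duality, everything proved for $\alpha^{\underline{F}}$, $\alpha^{\sqsupseteq}$, and $\alpha^{{\sqsupseteq}\underline{F}}$ in the preceding subsections dualizes to $\alpha^{\overline{F}}$, $\alpha^{\sqsubseteq}$, and $\alpha^{{\sqsubseteq}\overline{F}}$. In particular Lemma~\ref{lem:lower:frontier:closed:lattice} dualizes to give that $\quintuple{\alpha^{{\sqsubseteq}\overline{F}}(\wp(\mathbb{L}))}{\subseteq}{\emptyset}{\mathbb{L}}{\cup}$ is a join semilattice, which I will use to transport the semilattice structure across the isomorphism.

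First I would establish that $\alpha^{\overline{F}}$ and $\alpha^{\sqsubseteq}$ are mutually inverse bijections between $\alpha^{{\sqsubseteq}\overline{F}}(\wp(\mathbb{L}))$ and ${\alpha}^{\overline{F}}(\wp(\mathbb{L}))$. For one direction, given $\mathcal{P}\in\alpha^{{\sqsubseteq}\overline{F}}(\wp(\mathbb{L}))$, I must show $\alpha^{\sqsubseteq}(\alpha^{\overline{F}}(\mathcal{P}))=\mathcal{P}$; this is the dual of Lemma~\ref{lem:alpha:sqsubseteq:underline:F}, which characterizes $\alpha^{{\sqsupseteq}\underline{F}}(\mathcal{P})$ as the up-closure of its frontier, so dually $\alpha^{{\sqsubseteq}\overline{F}}(\mathcal{P})$ is the down-closure of its upper frontier, and since $\mathcal{P}$ is already $\alpha^{{\sqsubseteq}\overline{F}}$-closed this gives the identity. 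For the other direction, given a frontier $\mathcal{F}\in\alpha^{\overline{F}}(\wp(\mathbb{L}))$, I must show $\alpha^{\overline{F}}(\alpha^{\sqsubseteq}(\mathcal{F}))=\mathcal{F}$, i.e.\ that the maximal elements of the order ideal generated by $\mathcal{F}$ are exactly $\mathcal{F}$ itself; this follows because $\mathcal{F}$ is an antichain (being a frontier), so every element of $\mathcal{F}$ is maximal in $\alpha^{\sqsubseteq}(\mathcal{F})$ and no new maximal elements are introduced by downward closure.

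Having the bijection, I would define the transported order and join on ${\alpha}^{\overline{F}}(\wp(\mathbb{L}))$ exactly as in the statement, $P\mathrel{{\preceq}^{\overline{F}}}Q\triangleq(\alpha^{\sqsubseteq}(P)\subseteq\alpha^{\sqsubseteq}(Q))$ and $P\mathbin{{\curlyvee}^{\overline{F}}}Q\triangleq{\alpha}^{\overline{F}}(\alpha^{\sqsubseteq}(P)\cup\alpha^{\sqsubseteq}(Q))$, and verify that these make $\alpha^{\sqsubseteq}$ an order isomorphism: monotonicity in both directions is immediate from the definition of $\mathrel{{\preceq}^{\overline{F}}}$ as the pullback of $\subseteq$ along the bijection $\alpha^{\sqsubseteq}$, and a bijective order-embedding whose inverse is also monotone is a Galois isomorphism $\GaloiS{}{}$ in the sense of Section~\ref{sec:Galois:Connections:Retractions}. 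The join semilattice structure then transports directly: $P\mathbin{{\curlyvee}^{\overline{F}}}Q$ is well-defined precisely because $\alpha^{\sqsubseteq}(P)\cup\alpha^{\sqsubseteq}(Q)$ lies in $\alpha^{{\sqsubseteq}\overline{F}}(\wp(\mathbb{L}))$ by the dual of Lemma~\ref{lem:lower:frontier:closed:lattice}, and applying $\alpha^{\overline{F}}$ returns a genuine frontier; that it is the least upper bound for $\mathrel{{\preceq}^{\overline{F}}}$ follows because $\alpha^{\sqsubseteq}$ carries it to the $\subseteq$-join $\alpha^{\sqsubseteq}(P)\cup\alpha^{\sqsubseteq}(Q)$.

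The main obstacle I anticipate is the well-definedness of $\mathbin{{\curlyvee}^{\overline{F}}}$, that is, ensuring that the upper frontier of a union of two order ideals actually exists and is again a valid frontier. The dual Counterexample~\ref{cex:alpha:sqsubseteq:underline:F:non:cap} warns that frontier-closed sets are not closed under intersection, so one must be careful that the \emph{join} side genuinely behaves well; the content of the dualized Lemma~\ref{lem:lower:frontier:closed:lattice} is exactly that the union of two order-ideal-closed sets is order-ideal-closed, and its somewhat delicate case analysis (the argument handling elements lying in both $\mathcal{P}_1$ and $\mathcal{P}_2$) is what secures this. I would therefore lean on that lemma rather than reprove the frontier-existence argument, and the remaining verifications reduce to routine transport of the $\subseteq$-semilattice structure through the bijection.
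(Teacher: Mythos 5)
Your proposal is correct and follows essentially the same route as the paper's proof: establish that $\alpha^{\overline{F}}$ and $\alpha^{\sqsubseteq}$ are mutually inverse between $\alpha^{{\sqsubseteq}\overline{F}}(\wp(\mathbb{L}))$ and $\alpha^{\overline{F}}(\wp(\mathbb{L}))$, then transport the order and the join through the bijection, invoking the dual of Lemma~\ref{lem:lower:frontier:closed:lattice} for well-definedness of $\mathbin{{\curlyvee}^{\overline{F}}}$. The only difference is cosmetic: where the paper verifies the key identity $\alpha^{\overline{F}}\comp\alpha^{\sqsubseteq}\comp\alpha^{\overline{F}}=\alpha^{\overline{F}}$ by an explicit set-comprehension calculation, you argue it via the antichain property of frontiers, which is a valid and arguably cleaner justification of the same fact.
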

\begin{toappendix}
\begin{proof}[Proof of lemma \ref{lem:def-alpha-upper-frontier-domain}]
\hyphen{5}\quad We first show that ${\alpha}^{\overline{F}}\comp \alpha^{\sqsubseteq}\comp{\alpha}^{\overline{F}}= {\alpha}^{\overline{F}}$. Consider $\mathcal{P}\in{\alpha}^{\overline{F}}(\wp(\mathbb{L}))$. Then
\begin{calculus}[=\ \ ]
\formula{{\alpha}^{\overline{F}}\comp \alpha^{\sqsubseteq}(\mathcal{P})}\\
= 
\formulaexplanation{\{P\in\alpha^{\sqsubseteq}(\mathcal{P})\mid \forall P'\in \alpha^{\sqsubseteq}(\mathcal{P}) \mathrel{.}P\sqsubseteq P' \Rightarrow P= P'\}}{By def.\ (\ref{eq:def:alpha:underline:F}) of ${\alpha}^{\overline{F}}$}\\
= 
\formula{\{P\in\{P'\in \mathbb{L}\mid \exists F\in\mathcal{P}\mathrel{.}P'\sqsubseteq F\}\mid \forall P'\in \{P'\in \mathbb{L}\mid \exists F'\in\mathcal{P}\mathrel{.}P'\sqsubseteq F'\} \mathrel{.}P\sqsubseteq P' \Rightarrow P= P'\}}\\[-0.5ex]\rightexplanation{by def.\ (\ref{eq:GC:alpha:sqsubseteq}) of $\alpha^{\sqsubseteq}$}\\
=
\formulaexplanation{\{P\mid \exists F\in\mathcal{P}\mathrel{.}P\sqsubseteq F\wedge\forall P'\in \{P'\in \mathbb{L}\mid \exists F'\in\mathcal{P}\mathrel{.}P'\sqsubseteq F'\} \mathrel{.}P\sqsubseteq P' \Rightarrow P= P'\}}{def.\ $\in$}\\
=
\formulaexplanation{\{P\mid \exists F\in\mathcal{P}\mathrel{.}P\sqsubseteq F\wedge\forall P'\in\mathbb{L}\mathrel{.} (\exists F'\in\mathcal{P}\mathrel{.}P'\sqsubseteq F')\Rightarrow (P\sqsubseteq P' \Rightarrow P= P')\}}{def.\ $\in$}\\
= 
\formulaexplanation{\{P\in\mathbb{L}\mid \exists F\in\mathcal{P}\mathrel{.}P\sqsubseteq F \wedge\forall P'\in\mathbb{L} \mathrel{.} (\exists F'\in \mathcal{P} \mathrel{.} P\sqsubseteq P'\sqsubseteq F') \Rightarrow P= P'\}}{def.\ $\Rightarrow$ and transitivity}\\
= \formula{\{P\in\mathbb{L}\mid P\in\mathcal{P}\}}\\
    \explanation{$(\subseteq)$ Let $P' = F$, so that $\exists F'\in \mathcal{P} \mathrel{.} P\sqsubseteq P'\sqsubseteq F'$ holds by choosing $F'=F$ which implies $P=P'=F\in \mathcal{P}$ so   $P\in \mathcal{P}$ ;\\
    $(\supseteq)$ Let $P\in\mathcal{P}$ and choose $F=P$ so that $P\sqsubseteq F$. Consider any $P'\in\mathbb{L}$.
    Then, by choosing $F'=P'$, $(\exists F'\in \mathcal{P} \mathrel{.} P\sqsubseteq P'\sqsubseteq F')$ if and only if $P\sqsubseteq P'$. But $P=F\in\mathcal{P}$ and $\mathcal{P}\in{\alpha}^{\overline{F}}(\wp(\mathbb{L}))$ is a frontier so $P=P'$}\\
=
\formulaexplanation{\mathcal{P}}{def.\ set in extension}
\end{calculus}

\smallskip

\hyphen{5}\quad  If $\mathcal{P}\in{\alpha^{{\sqsubseteq}\overline{F}}(\wp(\mathbb{L}))}$ then there exists $\mathcal{P}'\in\wp(\mathbb{L}))$ such that $\mathcal{P}=\alpha^{{\sqsubseteq}\overline{F}}(\mathcal{P}')$ and then
\begin{calculus}[=\ \ ]
\formula{\alpha^{\sqsubseteq}\comp\alpha^{\overline{F}}(\mathcal{P})}\\
=
\formulaexplanation{\alpha^{\sqsubseteq}\comp\alpha^{\overline{F}}\comp\alpha^{{\sqsubseteq}\overline{F}}(\mathcal{P}')}{$\mathcal{P}=\alpha^{{\sqsubseteq}\overline{F}}(\mathcal{P}')$}\\
=
\formulaexplanation{\alpha^{\sqsubseteq}\comp\alpha^{\overline{F}}\comp\alpha^{\sqsubseteq}\comp\alpha^{\overline{F}}(\mathcal{P}')}{dual def.\ (\ref{eq:def:alpha:sqsubseteq:underline:F}) of $\alpha^{{\sqsubseteq}\overline{F}}$}\\
=
\formulaexplanation{\alpha^{\sqsubseteq}\comp\alpha^{\overline{F}}(\mathcal{P}')}{since ${\alpha}^{\overline{F}}\comp \alpha^{\sqsubseteq}\comp{\alpha}^{\overline{F}}= {\alpha}^{\overline{F}}$}\\
=
\formulaexplanation{\alpha^{{\sqsubseteq}\overline{F}}(\mathcal{P}')}{dual def.\ (\ref{eq:def:alpha:sqsubseteq:underline:F}) of $\alpha^{{\sqsubseteq}\overline{F}}$}\\
=
\formulaexplanation{\mathcal{P}}{by definition  $\mathcal{P}=\alpha^{{\sqsubseteq}\overline{F}}(\mathcal{P}')$}
\end{calculus}

\smallskip

\hyphen{5}\quad  If $\mathcal{Q}\in{\alpha^{\overline{F}}(\wp(\mathbb{L}))}$ then there exists $\mathcal{Q}'\in\wp(\mathbb{L}))$ such that $\mathcal{Q}=\alpha^{\overline{F}}(\mathcal{Q}')$ and then
\begin{calculus}[=\ \ ]
\formula{\alpha^{\overline{F}}\comp\alpha^{\sqsubseteq}(\mathcal{Q})}\\
=
\formulaexplanation{\alpha^{\overline{F}}\comp\alpha^{\sqsubseteq}\comp\alpha^{\overline{F}}(\mathcal{Q}')}{$\mathcal{Q}=\alpha^{\overline{F}}(\mathcal{Q}')$}\\
=
\formulaexplanation{\alpha^{\overline{F}}(\mathcal{Q}')}{since ${\alpha}^{\overline{F}}\comp \alpha^{\sqsubseteq}\comp{\alpha}^{\overline{F}}= {\alpha}^{\overline{F}}$}\\
=
\formulaexplanation{\mathcal{Q}}{$\mathcal{Q}=\alpha^{\overline{F}}(\mathcal{Q}')$}\
\end{calculus}

\smallskip

\hyphen{5}\quad  It follows that there is a bijection ${{\alpha}^{\overline{F}}}$ with inverse ${\alpha^{\sqsubseteq}}$ between ${\alpha^{{\sqsubseteq}\overline{F}}(\wp(\mathbb{L}))}$ and 
${{\alpha}^{\overline{F}}(\wp(\mathbb{L}))}$. 

\smallskip

\hyphen{5}\quad Defining $P\mathrel{{\preceq}^{\overline{F}}}Q\triangleq (\alpha^{\sqsubseteq}(P)\subseteq \alpha^{\sqsubseteq}(Q))$ this yields the Galois retraction
$\pair{\alpha^{{\sqsubseteq}\overline{F}}(\wp(\mathbb{L}))}{\subseteq}
\GaloiS{{\alpha}^{\overline{F}}}{\alpha^{\sqsubseteq}} 
\pair{{\alpha}^{\overline{F}}(\wp(\mathbb{L}))}{\mathrel{{\preceq}^{\overline{F}}}}$. By the dual of lemma 
\ref{lem:lower:frontier:closed:lattice}, $\quintuple{\alpha^{{\sqsubseteq}\overline{F\!}\mskip3mu}(\wp(\mathbb{L}))}{\subseteq}{\emptyset}{\mathbb{L}}{\cup}$ is a join semilattice. Therefore the finite joins are preserved by the Galois connection so that $\triple{{\alpha}^{\overline{F}}(\wp(\mathbb{L}))}{\mathrel{{\preceq}^{\overline{F}}}}{\mathbin{{\curlyvee}^{\overline{F}}}}$ is a join semilattice with $P\mathrel{{\preceq}^{\overline{F}}}Q\triangleq \alpha^{\sqsubseteq}(P)\subseteq \alpha^{\sqsubseteq}(Q)$ and $P\mathbin{{\curlyvee}^{\overline{F}}}Q\triangleq{{\alpha}^{\overline{F}}}(\alpha^{\sqsubseteq}(P)\cup\alpha^{\sqsubseteq}(Q))$.
\end{proof}
\end{toappendix}
Define the principal ideal ${\downarrow^{\sqsubseteq}}(P) \triangleq\{P'\in\mathbb{L}\mid P'\sqsubseteq P\}$. The following lemma \ref{lem:FrontierCharacterizationOrderIdealAbstraction} is a characterization of ${{\alpha}^{{\sqsubseteq}\overline{F}}(\wp(\mathbb{L}))}$ that corrects and generalizes \cite[Proposition 1]{DBLP:conf/sas/MastroeniP17}.
\begin{lemma}\label{lem:FrontierCharacterizationOrderIdealAbstraction}\proofinapx\quad If\/ $\mathcal{P}\in{{\alpha}^{{\sqsubseteq}\overline{F}}(\wp(\mathbb{L}))}$ then
$\displaystyle\mathcal{P}=\smash{\bigcup_{P\,\in\, {\alpha}^{\overline{F}}(\mathcal{P})}{\downarrow^{\sqsubseteq}}(P)}$.
\end{lemma}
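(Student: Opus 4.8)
The plan is to observe that this statement is essentially an unfolding of the definitions followed by an appeal to the idempotence/identity facts already established in the proof of Lemma~\ref{lem:def-alpha-upper-frontier-domain}. The right-hand side $\bigcup_{P\in{\alpha}^{\overline{F}}(\mathcal{P})}{\downarrow^{\sqsubseteq}}(P)$ is literally the order-ideal generated by the frontier ${\alpha}^{\overline{F}}(\mathcal{P})$, so the whole proof amounts to matching it with $\alpha^{\sqsubseteq}({\alpha}^{\overline{F}}(\mathcal{P}))$ and then using that $\alpha^{\sqsubseteq}\comp{\alpha}^{\overline{F}}$ restricts to the identity on $\alpha^{{\sqsubseteq}\overline{F}}(\wp(\mathbb{L}))$.

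Concretely, first I would rewrite the union by unfolding the principal-ideal notation ${\downarrow^{\sqsubseteq}}(P)=\{P'\in\mathbb{L}\mid P'\sqsubseteq P\}$, giving
\[
\bigcup_{P\in{\alpha}^{\overline{F}}(\mathcal{P})}{\downarrow^{\sqsubseteq}}(P)
=\{P'\in\mathbb{L}\mid \exists P\in{\alpha}^{\overline{F}}(\mathcal{P})\mathrel{.}P'\sqsubseteq P\}.
\]
Next I would recognise the right-hand set as exactly $\alpha^{\sqsubseteq}({\alpha}^{\overline{F}}(\mathcal{P}))$ by the definition (\ref{eq:GC:alpha:sqsubseteq}) of the order-ideal abstraction $\alpha^{\sqsubseteq}$ applied to the argument ${\alpha}^{\overline{F}}(\mathcal{P})$. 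This reduces the claim to showing $\alpha^{\sqsubseteq}({\alpha}^{\overline{F}}(\mathcal{P}))=\mathcal{P}$ for $\mathcal{P}\in{\alpha}^{{\sqsubseteq}\overline{F}}(\wp(\mathbb{L}))$.

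Finally I would invoke Lemma~\ref{lem:def-alpha-upper-frontier-domain}: its proof established that $\alpha^{\sqsubseteq}\comp{\alpha}^{\overline{F}}$ acts as the identity on $\alpha^{{\sqsubseteq}\overline{F}}(\wp(\mathbb{L}))$ (equivalently, that ${\alpha}^{\overline{F}}$ and $\alpha^{\sqsubseteq}$ are mutually inverse bijections between $\alpha^{{\sqsubseteq}\overline{F}}(\wp(\mathbb{L}))$ and ${\alpha}^{\overline{F}}(\wp(\mathbb{L}))$). Since by hypothesis $\mathcal{P}$ lies in $\alpha^{{\sqsubseteq}\overline{F}}(\wp(\mathbb{L}))$, this immediately gives $\alpha^{\sqsubseteq}({\alpha}^{\overline{F}}(\mathcal{P}))=\mathcal{P}$, completing the chain of equalities.

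There is no real obstacle here: all the substantive content — in particular the delicate fact that ${\alpha}^{\overline{F}}\comp\alpha^{\sqsubseteq}\comp{\alpha}^{\overline{F}}={\alpha}^{\overline{F}}$ and that $\alpha^{\sqsubseteq}$ inverts ${\alpha}^{\overline{F}}$ on the closed domain — was already carried out in Lemma~\ref{lem:def-alpha-upper-frontier-domain}. The only point demanding a little care is making sure the hypothesis $\mathcal{P}\in{\alpha}^{{\sqsubseteq}\overline{F}}(\wp(\mathbb{L}))$ is genuinely needed and used: without it the right-hand side would compute $\alpha^{\sqsubseteq}({\alpha}^{\overline{F}}(\mathcal{P}))$, which in general is a strict subset of $\mathcal{P}$ (indeed it can be empty when $\mathcal{P}$ has no maximal elements, as Counterexample~\ref{cex:order-ideal-frontier} shows). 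So I would emphasise that the equality holds precisely because $\mathcal{P}$ is frontier-order-ideal closed, which is what guarantees that its maximal elements regenerate it.
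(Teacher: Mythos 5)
Your proposal is correct and follows essentially the same route as the paper: the paper's proof also reduces the claim to $\mathcal{P}=\alpha^{{\sqsubseteq}\overline{F}}(\mathcal{P})=\alpha^{\sqsubseteq}({\alpha}^{\overline{F}}(\mathcal{P}))$ via Lemma~\ref{lem:def-alpha-upper-frontier-domain} and then unfolds $\alpha^{\sqsubseteq}$ into the union of principal ideals, just written in the opposite direction (left-to-right instead of right-to-left). Your closing remark on why the hypothesis $\mathcal{P}\in{\alpha}^{{\sqsubseteq}\overline{F}}(\wp(\mathbb{L}))$ is indispensable is a sensible addition but not needed for the proof itself.
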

\begin{toappendix}
\begin{proof}[Proof of lemma \ref{lem:FrontierCharacterizationOrderIdealAbstraction}]
\begin{calculus}[=\ \ ]
\formula{\mathcal{P}}\\
=
\formulaexplanation{{{\alpha}^{{\sqsubseteq}\overline{F}}(\mathcal{P})}}{$\mathcal{P}\in{{\alpha}^{{\sqsubseteq}\overline{F}}(\wp(\mathbb{L}))}$ and  lemma \ref{lem:def-alpha-upper-frontier-domain}}\\
=
\formulaexplanation{\alpha^{{\sqsubseteq}}(\alpha^{\underline{F}}(\mathcal{P}))}{dual def.\ (\ref{eq:def:alpha:sqsubseteq:underline:F}) of $\alpha^{{\sqsubseteq}\underline{F}}$ and composition $\comp$}\\
=
\formulaexplanation{\{P'\in \mathbb{L}\mid \exists P\in\alpha^{\underline{F}}(\mathcal{P})\mathrel{.}P'\sqsubseteq P\}}{def.\ (\ref{eq:GC:alpha:sqsubseteq}) of $\alpha^{\sqsubseteq}$}\\
=
\formulaexplanation{\bigcup_{P\,\in\,\alpha^{\underline{F}}(\mathcal{P})}\{P'\in \mathbb{L}\mid P'\sqsubseteq P\}}{def.\ $\cup$}\\
=
\lastformulaexplanation{\bigcup_{P\,\in\,\alpha^{\underline{F}}(\mathcal{P})}{\downarrow^{\sqsubseteq}}(P)}{def.\ ${\downarrow^{\sqsubseteq}}(P) \triangleq\{P'\in\mathbb{L}\mid P'\sqsubseteq P\}$}{\mbox{\qed}}
\end{calculus}
\let\qed\relax
\end{proof}
\end{toappendix}

\section{Chain Limit Abstraction}\label{ChainLimitAbstraction}
\subsection{Chain Limit Abstraction Definition and Properties}Another possible representation of order ideal abstractions would be by limits of chains.
Define
\bgroup\abovedisplayskip4pt\belowdisplayskip2pt
\begin{eqntabular}{rcl}
\alpha^{\downarrow}(\mathcal{P}) &\triangleq&\{\bigsqcap_{i\in\mathbb{N}}P_i\mid\pair{P_i}{i\in\mathbb{N}}\in\mathcal{P}\textrm{\ is a decreasing chain with existing glb}\}
\label{eq:def:alpha:downarrow}
\end{eqntabular}\egroup
$\alpha^{\downarrow}$ is $\subseteq$ increasing and extensive but not necessarily idempotent as shown by counter example \ref{cex:non-idempotent-closed-infinite-union} below. The iteration of $\alpha^{\downarrow}$ (possibly transfinitely)
\bgroup\abovedisplayskip4pt\belowdisplayskip2pt
\begin{eqntabular}{rcl}
{\maccent{\alpha}{\ast}}^{\downarrow}(\mathcal{P}) &\triangleq&\Lfp{\subseteq}\LAMBDA{X}\mathcal{P}\cup\alpha^{\downarrow}(X)
\label{eq:def:hat:alpha:downarrow}
\end{eqntabular}\egroup
yields an upper closure operator \cite[lemma 29.1]{Cousot-PAI-2021}.

\medskip

\noindent \begin{minipage}[t]{0.70\textwidth}
\begin{counterexample}\label{cex:non-idempotent-closed-infinite-union}Consider the complete lattice $\mathbb{L}$ on the right. Let $\mathcal{P}=\{X^{ij}\mid i,j>0\}$. 
We have $\alpha^{\downarrow}(\mathcal{P})=\{X^{ij}\mid i,j>0\}\cup\{Y^i\mid i>0\}$. We have $\bigsqcap\{Y^i\mid i>0\}=\bot$ so  ${\alpha^{\downarrow}}({\alpha^{\downarrow}}(\mathcal{P}))=\{X^{ij}\mid i,j>0\}\cup\{Y^i\mid i>0\}\cup\{\bot\}\neq \alpha^{\downarrow}(\mathcal{P})$. 

Moreover ${\maccent{\alpha}{\ast}}^{\downarrow}(\mathcal{Q}_i)\in  {\maccent{\alpha}{\ast}}^{\downarrow}(\wp(\mathbb{L}))$, $i>0$ but $\bigcup_{i>0}\maccent{\alpha}{\ast}^{\downarrow}(\mathcal{Q}_i)\not\in {\maccent{\alpha}{\ast}}^{\downarrow}(\wp(\mathbb{L}))$.
\end{counterexample}
\end{minipage}\hfill
\raisebox{-20mm}[0pt][0pt]{\includegraphics[width=0.25\textwidth]{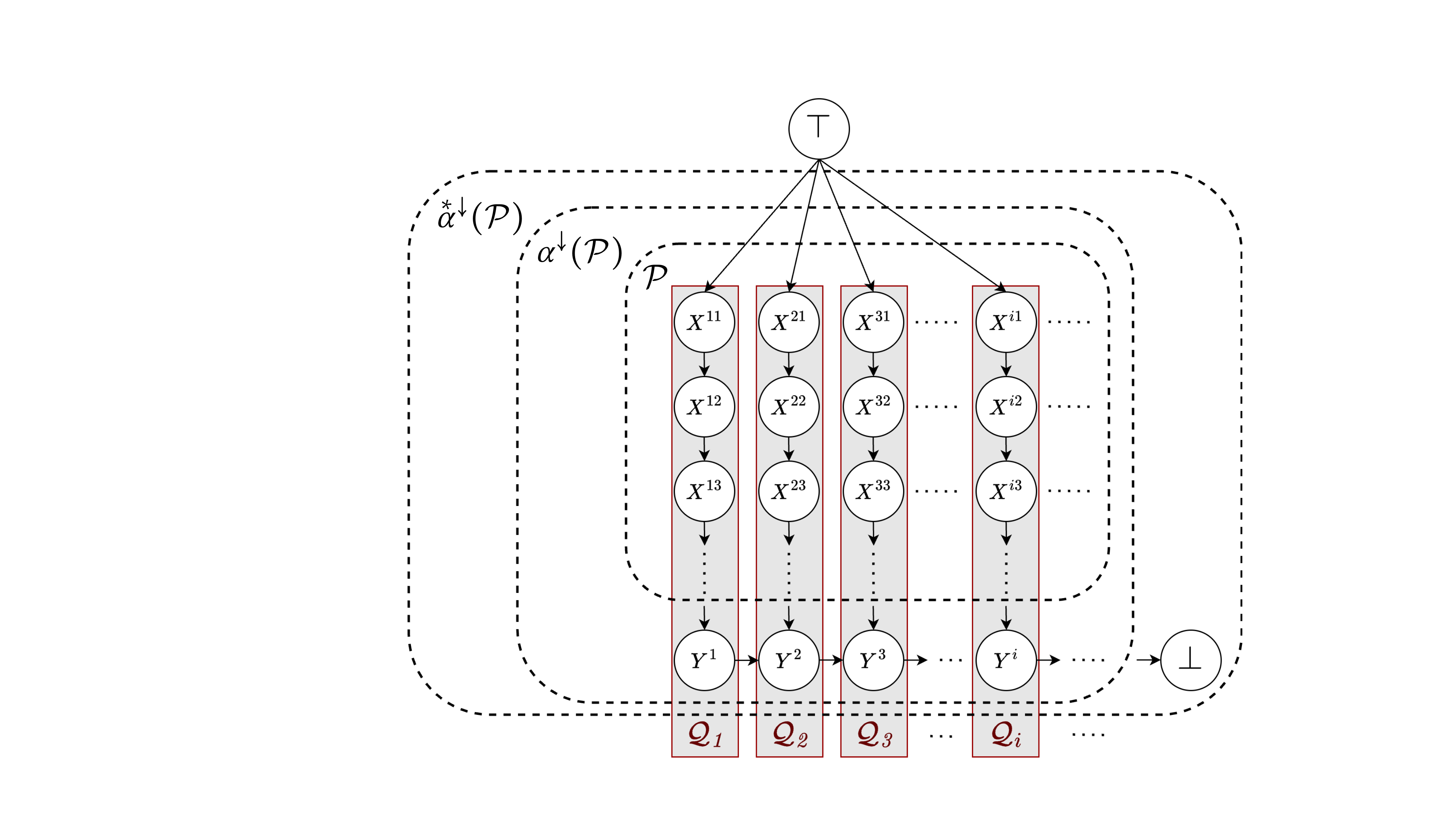}}
\begin{lemma}\label{lem:GC:hat:alpha:downarrow}\proofinapx\quad\smash{$\pair{\wp(\mathbb{L})}{\subseteq}\galoiS{{\maccent{\alpha}{\ast}}^{\downarrow}}{\mathbb{1}}\pair{{\maccent{\alpha}{\ast}}^{\downarrow}(\wp(\mathbb{L}))}{\subseteq}$} and
$\sextuple{{\maccent{\alpha}{\ast}}^{\downarrow}(\wp(\mathbb{L}))}{\subseteq}{\emptyset}{\mathbb{L}}{\LAMBDA{X}{\maccent{\alpha}{\ast}}^{\downarrow}(\bigcup X)}{\bigcap}$  is a complete lattice.
\end{lemma}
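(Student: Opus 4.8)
The plan is to derive the lemma directly from the closure-operator machinery recalled in the Closures subsection, treating $\pair{\wp(\mathbb{L})}{\subseteq}$ as the ambient complete lattice (with infimum $\emptyset$, supremum $\mathbb{L}$, join $\bigcup$, and meet $\bigcap$). The text has already established that $\alpha^{\downarrow}$ in (\ref{eq:def:alpha:downarrow}) is $\subseteq$-increasing and extensive, and that its iterate ${\maccent{\alpha}{\ast}}^{\downarrow}$ defined in (\ref{eq:def:hat:alpha:downarrow}) is an upper closure operator on $\pair{\wp(\mathbb{L})}{\subseteq}$ by \cite[lemma 29.1]{Cousot-PAI-2021}. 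Everything then follows by instantiating the general facts about upper closures to this situation, so the proof is essentially a corollary.

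First I would obtain the Galois retraction. Since ${\maccent{\alpha}{\ast}}^{\downarrow}$ is an upper closure operator, the Closures subsection gives immediately $\pair{\wp(\mathbb{L})}{\subseteq}\galoiS{{\maccent{\alpha}{\ast}}^{\downarrow}}{\mathbb{1}}\pair{{\maccent{\alpha}{\ast}}^{\downarrow}(\wp(\mathbb{L}))}{\subseteq}$, where $\mathbb{1}$ is the inclusion of the image into $\wp(\mathbb{L})$. The composition ${\maccent{\alpha}{\ast}}^{\downarrow}\comp\mathbb{1}$ is the identity on the image, because the elements of ${\maccent{\alpha}{\ast}}^{\downarrow}(\wp(\mathbb{L}))$ are exactly the fixpoints of ${\maccent{\alpha}{\ast}}^{\downarrow}$; this is precisely what makes the connection a retraction (double-headed arrow).

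For the complete-lattice claim I would apply Morgan Ward's theorem \cite[th.\@ 4.1]{Ward42}, as quoted in the Closures subsection, to the complete lattice $\pair{\wp(\mathbb{L})}{\subseteq}$: the image of an upper closure operator is a complete lattice whose meet is the ambient $\bigcap$, whose join is $\LAMBDA{X}{\maccent{\alpha}{\ast}}^{\downarrow}(\bigcup X)$, whose supremum is the ambient $\mathbb{L}$, and whose infimum is ${\maccent{\alpha}{\ast}}^{\downarrow}(\emptyset)$. This already yields the stated sextuple structure, leaving only the identification of top with $\mathbb{L}$ and of the infimum with $\emptyset$. The top is $\mathbb{L}$ because extensivity forces $\mathbb{L}\subseteq{\maccent{\alpha}{\ast}}^{\downarrow}(\mathbb{L})\subseteq\mathbb{L}$, so $\mathbb{L}$ is a fixpoint and hence lies in the image.

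The one genuine verification — and the only place any computation is needed — is that ${\maccent{\alpha}{\ast}}^{\downarrow}(\emptyset)=\emptyset$. For this I would note that $\alpha^{\downarrow}(\emptyset)=\emptyset$ by (\ref{eq:def:alpha:downarrow}), since a decreasing chain $\pair{P_i}{i\in\mathbb{N}}$ all of whose terms lie in $\emptyset$ cannot exist; consequently the least-fixpoint iteration $\Lfp{\subseteq}\LAMBDA{X}\emptyset\cup\alpha^{\downarrow}(X)$ never leaves $\emptyset$, whence ${\maccent{\alpha}{\ast}}^{\downarrow}(\emptyset)=\emptyset$. I do not expect a serious obstacle: the substance of the lemma is carried by the cited closure result and by Morgan Ward's theorem, and the remaining steps are routine identifications of bottom and top. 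The only point to watch is that the iterated closure ${\maccent{\alpha}{\ast}}^{\downarrow}$, and not $\alpha^{\downarrow}$ itself, must be used throughout, since $\alpha^{\downarrow}$ fails idempotence by counterexample~\ref{cex:non-idempotent-closed-infinite-union} and so would yield neither a closure, nor a retraction, nor a meet-closed image.
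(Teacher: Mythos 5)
Your proof is correct and follows essentially the same route as the paper's: invoke \cite[lemma 29.1]{Cousot-PAI-2021} to get that ${\maccent{\alpha}{\ast}}^{\downarrow}$ is an upper closure operator (hence the Galois retraction), then Morgan Ward's theorem for the complete-lattice structure of the image. You are in fact slightly more careful than the paper at the one non-automatic point, the verification that ${\maccent{\alpha}{\ast}}^{\downarrow}(\emptyset)=\emptyset$ (so that the infimum of the image really is $\emptyset$ as stated), which the paper's own proof passes over somewhat loosely.
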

\begin{toappendix}
\begin{proof}[Proof of lemma \ref{lem:GC:hat:alpha:downarrow}]
By  \cite[lemma 29.1]{Cousot-PAI-2021}, ${{\maccent{\alpha}{\ast}}^{\downarrow}}$
is the smallest upper closure operator pointwise greater than or equal to ${\alpha}^{\downarrow}$. By Morgan Ward's \cite[theorem 4.1]{Ward42}, 
$\pair{{\maccent{\alpha}{\ast}}^{\downarrow}(\wp(\mathbb{L}))}{\subseteq}$ is a complete lattice with infimum ${\maccent{\alpha}{\ast}}^{\downarrow}(\{\bot\}) =\{\bot\}$ and join
${\LAMBDA{X}{\maccent{\alpha}{\ast}}^{\downarrow}(\bigcup X)}$.
\end{proof}
\end{toappendix}
\begin{lemma}\label{lem:alpha:downarrow:fixpoint}\proofinapx\quad
$\forall \mathcal{P}\in\wp(\mathbb{L})\mathrel{.}{\alpha}^{\downarrow}({\maccent{\alpha}{\ast}}^{\downarrow}(\mathcal{P}))={\maccent{\alpha}{\ast}}^{\downarrow}(\mathcal{P})$.
\end{lemma}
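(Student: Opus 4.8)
The plan is to establish the equality by two inclusions, the key observation being that ${\maccent{\alpha}{\ast}}^{\downarrow}(\mathcal{P})$ is \emph{by definition} a fixpoint of the map it is built from. Write $F\triangleq\LAMBDA{X}\mathcal{P}\cup\alpha^{\downarrow}(X)$ and $\mathcal{C}\triangleq{\maccent{\alpha}{\ast}}^{\downarrow}(\mathcal{P})=\Lfp{\subseteq}F$, as in (\ref{eq:def:hat:alpha:downarrow}). Since $\pair{\wp(\mathbb{L})}{\subseteq}$ is a complete lattice and $\alpha^{\downarrow}$ is $\subseteq$-increasing (as already noted just after (\ref{eq:def:alpha:downarrow})), the map $F$ is $\subseteq$-increasing, so by proposition \ref{prop:Tarski} its least fixpoint exists and genuinely is a fixpoint; that is, $\mathcal{P}\cup\alpha^{\downarrow}(\mathcal{C})=\mathcal{C}$.

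First I would prove the inclusion $\alpha^{\downarrow}(\mathcal{C})\subseteq\mathcal{C}$. This is immediate from the fixpoint equation $\mathcal{P}\cup\alpha^{\downarrow}(\mathcal{C})=\mathcal{C}$, since $\alpha^{\downarrow}(\mathcal{C})\subseteq\mathcal{P}\cup\alpha^{\downarrow}(\mathcal{C})=\mathcal{C}$ by definition of $\cup$.

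Next I would prove the reverse inclusion $\mathcal{C}\subseteq\alpha^{\downarrow}(\mathcal{C})$, which follows from the extensivity of $\alpha^{\downarrow}$ (again recorded after (\ref{eq:def:alpha:downarrow})): concretely, for any $P\in\mathcal{C}$ the constant sequence $\pair{P}{i\in\mathbb{N}}$ is a decreasing chain in $\mathcal{C}$ whose glb is $P$, so $P=\bigsqcap_{i\in\mathbb{N}}P\in\alpha^{\downarrow}(\mathcal{C})$ by (\ref{eq:def:alpha:downarrow}). Combining the two inclusions yields $\alpha^{\downarrow}({\maccent{\alpha}{\ast}}^{\downarrow}(\mathcal{P}))={\maccent{\alpha}{\ast}}^{\downarrow}(\mathcal{P})$, as required.

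There is essentially no hard step here: the whole proof rests on recognizing that ${\maccent{\alpha}{\ast}}^{\downarrow}(\mathcal{P})$ is a fixpoint of $F$, which absorbs one application of $\alpha^{\downarrow}$, while extensivity supplies the trivial converse. The only point that deserves a word of justification is that $F$ is increasing so that proposition \ref{prop:Tarski} applies and $\mathcal{C}$ is a true fixpoint (rather than merely a post- or pre-fixpoint); this is why counterexample \ref{cex:non-idempotent-closed-infinite-union}, showing that $\alpha^{\downarrow}$ alone is not idempotent, does not contradict the statement — the transfinite closure ${\maccent{\alpha}{\ast}}^{\downarrow}$ is precisely what restores the fixpoint property that $\alpha^{\downarrow}$ lacks.
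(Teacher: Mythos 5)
Your proof is correct and follows essentially the same route as the paper's: both exploit the fixpoint equation ${\maccent{\alpha}{\ast}}^{\downarrow}(\mathcal{P})=\mathcal{P}\cup\alpha^{\downarrow}({\maccent{\alpha}{\ast}}^{\downarrow}(\mathcal{P}))$ together with extensivity of $\alpha^{\downarrow}$ (the paper phrases the conclusion as the absorption $\mathcal{P}\cup\alpha^{\downarrow}({\maccent{\alpha}{\ast}}^{\downarrow}(\mathcal{P}))=\alpha^{\downarrow}({\maccent{\alpha}{\ast}}^{\downarrow}(\mathcal{P}))$ rather than as two separate inclusions, but the content is identical). Your explicit constant-chain justification of extensivity and the remark about why counterexample \ref{cex:non-idempotent-closed-infinite-union} is not contradicted are welcome additions but do not change the argument.
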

\begin{toappendix}
\begin{proof}[Proof of lemma \ref{lem:alpha:downarrow:fixpoint}]
By the fixpoint definition (\ref{eq:def:hat:alpha:downarrow}) of ${\maccent{\alpha}{\ast}}^{\downarrow}$, we have
${\maccent{\alpha}{\ast}}^{\downarrow}(\mathcal{P})=\Lfp{\subseteq}\LAMBDA{X}\mathcal{P}\cup\alpha^{\downarrow}(X)$ so
${\maccent{\alpha}{\ast}}^{\downarrow}(\mathcal{P})=\mathcal{P}\cup\alpha^{\downarrow}({\maccent{\alpha}{\ast}}^{\downarrow}(\mathcal{P}))$. 
Since ${\alpha}^{\downarrow}$ and ${\maccent{\alpha}{\ast}}^{\downarrow}$ are extensive, we have $\mathcal{P}\subseteq{\maccent{\alpha}{\ast}}^{\downarrow}(\mathcal{P})\subseteq{\alpha}^{\downarrow}({\maccent{\alpha}{\ast}}^{\downarrow}(\mathcal{P}))$ so $\mathcal{P}\cup{\alpha}^{\downarrow}({\maccent{\alpha}{\ast}}^{\downarrow}(\mathcal{P}))$ = ${\alpha}^{\downarrow}({\maccent{\alpha}{\ast}}^{\downarrow}(\mathcal{P}))$ proving ${\maccent{\alpha}{\ast}}^{\downarrow}(\mathcal{P})$ = ${\alpha}^{\downarrow}({\maccent{\alpha}{\ast}}^{\downarrow}(\mathcal{P}))$ by transitivity.
\end{proof}
\end{toappendix}
\begin{lemma}\label{lem:alpha:downarrow=*downarrow}\proofinapx\quad
For all $\mathcal{P}\in\wp(\mathbb{L})$, ${\alpha}^{\downarrow}(\mathcal{P})=\mathcal{P}$ implies
${\maccent{\alpha}{\ast}}^{\downarrow}(\mathcal{P})=\mathcal{P}$.
\end{lemma}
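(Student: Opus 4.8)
The statement to prove is Lemma~\ref{lem:alpha:downarrow=*downarrow}: for all $\mathcal{P}\in\wp(\mathbb{L})$, if ${\alpha}^{\downarrow}(\mathcal{P})=\mathcal{P}$ then ${\maccent{\alpha}{\ast}}^{\downarrow}(\mathcal{P})=\mathcal{P}$.

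The plan is to unfold the fixpoint definition (\ref{eq:def:hat:alpha:downarrow}) of ${\maccent{\alpha}{\ast}}^{\downarrow}$ and show that under the hypothesis ${\alpha}^{\downarrow}(\mathcal{P})=\mathcal{P}$, the set $\mathcal{P}$ is already a post-fixpoint (indeed a fixpoint) of the operator $G\triangleq\LAMBDA{X}\mathcal{P}\cup\alpha^{\downarrow}(X)$ whose least fixpoint defines ${\maccent{\alpha}{\ast}}^{\downarrow}(\mathcal{P})$. First I would compute $G(\mathcal{P})=\mathcal{P}\cup\alpha^{\downarrow}(\mathcal{P})$, and then apply the hypothesis $\alpha^{\downarrow}(\mathcal{P})=\mathcal{P}$ to get $G(\mathcal{P})=\mathcal{P}\cup\mathcal{P}=\mathcal{P}$. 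Hence $\mathcal{P}$ is a fixpoint of $G$.

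Next I would invoke the Tarski characterization of least fixpoints. By Proposition~\ref{prop:Tarski}, $\Lfp{\subseteq}G=\bigcap\{X\mid G(X)\subseteq X\}$, so the least fixpoint is below every post-fixpoint; since $\mathcal{P}$ is a fixpoint (hence a post-fixpoint, $G(\mathcal{P})=\mathcal{P}\subseteq\mathcal{P}$), we get ${\maccent{\alpha}{\ast}}^{\downarrow}(\mathcal{P})=\Lfp{\subseteq}G\subseteq\mathcal{P}$. For the reverse inclusion, extensivity of ${\maccent{\alpha}{\ast}}^{\downarrow}$ (noted just after (\ref{eq:def:hat:alpha:downarrow}) and justified by \cite[lemma 29.1]{Cousot-PAI-2021}) gives $\mathcal{P}\subseteq{\maccent{\alpha}{\ast}}^{\downarrow}(\mathcal{P})$. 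Combining the two inclusions yields ${\maccent{\alpha}{\ast}}^{\downarrow}(\mathcal{P})=\mathcal{P}$.

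There is essentially no obstacle here: the lemma is a direct consequence of the fixpoint definition and the fact that a fixpoint of $G$ that lies below any candidate forces the least fixpoint to coincide with it. The only point requiring a little care is making explicit that $\mathcal{P}$ being a fixpoint of $G$ is equivalent to being a post-fixpoint for the purpose of applying the $\bigsqcap$-characterization of $\Lfp{\subseteq}$, and that the complete-lattice structure of $\pair{\wp(\mathbb{L})}{\subseteq}$ guarantees the least fixpoint exists so that Proposition~\ref{prop:Tarski} applies. An even shorter alternative route, which I would mention as a remark, is to appeal directly to Lemma~\ref{lem:alpha:downarrow:fixpoint}: from ${\alpha}^{\downarrow}({\maccent{\alpha}{\ast}}^{\downarrow}(\mathcal{P}))={\maccent{\alpha}{\ast}}^{\downarrow}(\mathcal{P})$ together with the fixpoint equation ${\maccent{\alpha}{\ast}}^{\downarrow}(\mathcal{P})=\mathcal{P}\cup{\alpha}^{\downarrow}({\maccent{\alpha}{\ast}}^{\downarrow}(\mathcal{P}))$, one sees ${\maccent{\alpha}{\ast}}^{\downarrow}(\mathcal{P})=\mathcal{P}\cup{\maccent{\alpha}{\ast}}^{\downarrow}(\mathcal{P})$, which combined with extensivity again pins down ${\maccent{\alpha}{\ast}}^{\downarrow}(\mathcal{P})=\mathcal{P}$ once one checks $\mathcal{P}$ absorbs under the iteration.
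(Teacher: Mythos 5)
Your proof is correct. It does, however, take a slightly different route from the paper's. The paper argues constructively: it computes the iterates of $\LAMBDA{X}\mathcal{P}\cup\alpha^{\downarrow}(X)$ from $X^0=\emptyset$, finds $X^1=\mathcal{P}\cup\alpha^{\downarrow}(\emptyset)=\mathcal{P}$ and $X^2=\mathcal{P}\cup\alpha^{\downarrow}(\mathcal{P})=\mathcal{P}=X^1$ by the hypothesis, so the iteration stabilizes after one step and the least fixpoint is $\mathcal{P}$ by definition (\ref{eq:def:hat:alpha:downarrow}). You instead show that $\mathcal{P}$ is a (post-)fixpoint of the operator and invoke the Tarski characterization $\Lfp{\subseteq}G=\bigcap\{X\mid G(X)\subseteq X\}$ of Proposition~\ref{prop:Tarski} for the inclusion $\subseteq$, plus extensivity of the closure ${\maccent{\alpha}{\ast}}^{\downarrow}$ for $\supseteq$. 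Both arguments are sound; the paper's is marginally more self-contained in that it never needs the extensivity of ${\maccent{\alpha}{\ast}}^{\downarrow}$ (which itself rests on the cited \cite[lemma 29.1]{Cousot-PAI-2021}), while yours avoids computing iterates altogether. A small simplification of your reverse inclusion: since $G(X)\supseteq\mathcal{P}$ for every $X$, any fixpoint of $G$ contains $\mathcal{P}$, so $\mathcal{P}\subseteq\Lfp{\subseteq}G$ follows without appealing to extensivity of the closure at all. Your alternative remark via Lemma~\ref{lem:alpha:downarrow:fixpoint} is also workable but, as you note, still needs the absorption step, so it is not actually shorter.
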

\begin{toappendix}
\begin{proof}[Proof of lemma \ref{lem:alpha:downarrow=*downarrow}]
Consider the iterates of $\Lfp{\subseteq}\LAMBDA{X}\mathcal{P}\cup\alpha^{\downarrow}(X)$ from $X^0=\emptyset$. $X^1=\mathcal{P}\cup\alpha^{\downarrow}(X^0)=\mathcal{P}\cup\alpha^{\downarrow}(\emptyset)=\mathcal{P}$ since $\alpha^{\downarrow}(\emptyset)=\emptyset$ by definition (\ref{eq:def:alpha:downarrow}). We have $X^2=\mathcal{P}\cup\alpha^{\downarrow}(X^1)=\mathcal{P}\cup\alpha^{\downarrow}(\mathcal{P})=\mathcal{P}\cup\mathcal{P}=\mathcal{P}=X^1$ by hypothesis ${\alpha}^{\downarrow}(\mathcal{P})=\mathcal{P}$. By (\ref{eq:def:hat:alpha:downarrow}), we conclude that
${\maccent{\alpha}{\ast}}^{\downarrow}(\mathcal{P})$ = $\Lfp{\subseteq}\LAMBDA{X}\mathcal{P}\cup\alpha^{\downarrow}(X)$ = $\mathcal{P}$.
\end{proof}
\end{toappendix}

$\alpha^{\uparrow}$ is defined $\sqsubseteq$ dually, and ${\maccent{\alpha}{\ast}}^{\uparrow}(\mathcal{P})\triangleq\Lfp{\subseteq}\LAMBDA{X}\mathcal{P}\cup\alpha^{\uparrow}(X)$ is an upper closure operator.

\subsection{Forall Exists Hyperproperties}\label{sec:GeneralizedForallExists}
Assuming that $\pair{\mathbb{L}}{\sqsubseteq}=\pair{\wp(\Pi)}{\subseteq}$ (where e.g.\ $\Pi=\Sigma^{+\infty}$ is a set of traces) $\forall\exists$ hyperproperties have the form 
\bgroup\abovedisplayskip4pt\belowdisplayskip2pt\begin{eqntabular}{rcl}
\mathcal{A\mskip1muEH}&\triangleq&
\{\{P\in\wp(\Pi)\mid\forall\pi_1\in P\mathrel{.}\exists \pi_2\in P\mathrel{.}\pair{\pi_1}{\pi_2}\in A\}\mid 
A\in\wp(\Pi\times\Pi)\}
\label{def:AEH}
\end{eqntabular}\egroup
 (this easily generalizes to $\forall\pi_1,\ldots,\pi_n\in P\mathrel{.}\exists \pi'_1,\ldots,\pi'_m\in P\mathrel{.}\sextuple{\pi_1}{\ldots}{\pi_n}{\pi'_1}{\ldots}{\pi'_m}\in A$ \cite{DBLP:conf/concur/FinkbeinerH16}).
\begin{example}[Generalized non-interference]A typical forall exists hyperproperty is generalized non interference \cite{DBLP:conf/sp/McCullough87,DBLP:journals/tse/McLean96,DBLP:conf/aplas/DickersonYZD22} for the trace semantics of appendix \ref{sect:Trace-Semantics}. Let $\mathtt{L}\in \mathbb{X}$ be a low variable and $\mathtt{H}\in \mathbb{X}$ be a high variable, we have
\bgroup\abovedisplayskip3pt\belowdisplayskip0pt\begin{eqntabular}{rcl@{\qquad}}
\textit{GNI}&\triangleq&\{P\in\wp(\Sigma^{+})\mid\forall \sigma_1\pi_1\sigma'_1,\sigma_2\pi_2\sigma'_2\in P\mathrel{.}\exists \sigma_3\pi_3\sigma'_3\in P\mathrel{.}(\sigma_1(\mathtt{L})=\sigma_2(\mathtt{L}))\Rightarrow{}\label{eq:def:GNI}\\
&&\qquad\qquad\qquad\qquad\qquad(\sigma_3(\mathtt{L})=\sigma_1(\mathtt{L})\wedge\sigma_3(\mathtt{H})=\sigma_2(\mathtt{H})\wedge\sigma'_3(\mathtt{L})=\sigma'_1(\mathtt{L}))\}\renumber{\qef}
\end{eqntabular}\egroup
\let\qef\relax
\end{example}
Assuming chain-complete lattices in \ref{def:abstract:domain:well:def:finite:domain} and \ref{def:abstract:domain:well:def:infinite:domain}, chain limit closed semantic properties in ${\maccent{\alpha}{\ast}}^{\uparrow}(\wp(\wp(\Pi)))$ subsume $\forall\exists$ hyperproperties in $\mathcal{A\mskip1muEH}$ in that\proofinapx
\bgroup\abovedisplayskip3pt\belowdisplayskip3pt\begin{eqntabular}{rcl}
\mathcal{A\mskip1muEH}&\subseteq&{{\maccent{\alpha}{\ast}}^{\uparrow}}(\wp(\wp(\Pi)))\label{eq:AEH:alpha:sqsubseteq:uparrow:closed}
\end{eqntabular}\egroup
\begin{toappendix}
\begin{proof}[Proof of (\ref{eq:AEH:alpha:sqsubseteq:uparrow:closed})]We must prove that $\forall\mathcal{P}\in\mathcal{A\mskip1muEH}\mathrel{.}{{\maccent{\alpha}{\ast}}^{\uparrow}}(\mathcal{P})\in\mathcal{A\mskip1muEH}$. By the dual of lemma \ref{lem:alpha:downarrow=*downarrow}, it is sufficient to assume that $\mathcal{P}\in\mathcal{A\mskip1muEH}$ and prove that
${\alpha}^{\uparrow}(\mathcal{P})\in\mathcal{A\mskip1muEH}$.
\begin{calculus}[=\ \ ]
\formula{{\alpha}^{\uparrow}(\mathcal{P})}\\
=
\formulaexplanation{\{\bigcup_{i\in\mathbb{N}}P_i\mid\pair{P_i}{i\in\mathbb{N}}\in\mathcal{P}\textrm{\ is an increasing chain with existing lub}\}}{dual def.\ (\ref{eq:def:alpha:downarrow}) of ${\alpha}^{\uparrow}$}\\
=
\formulaexplanation{\{\bigcup_{i\in\mathbb{N}}P_i\mid\pair{P_i}{i\in\mathbb{N}}\in\mathcal{P}\textrm{\ is an increasing chain}\}}{chain completeness hypothesis}\\
=
\formula{\{\bigcup_{i\in\mathbb{N}}P_i\mid\pair{P_i}{i\in\mathbb{N}}\in\mathcal{P}\textrm{\ is an increasing chain}\wedge\forall i\in\mathbb{N} \mathrel{.}\forall\pi_1\in P_i\mathrel{.}\exists \pi_2\in P_i\mathrel{.}\pair{\pi_1}{\pi_2}\in A\}}\\[-1ex]\rightexplanation{$\mathcal{P}\in\mathcal{A\mskip1muEH}$ and def.\ $\mathcal{A\mskip1muEH}$}\\[-2ex]
=\formula{\{P\in\mathcal{P}\mid\forall\pi_1\in P\mathrel{.}\exists \pi_2\in P\mathrel{.}\pair{\pi_1}{\pi_2}\in A\}}\\
\explanation{($\subseteq$)\quad if $\pi_1\in \bigcup_{i\in\mathbb{N}}P_i$ then there exists $i\in\mathbb{N}$ such that $\pi_1\in P_i$ so that, by hypothesis, $\exists \pi_2\in P_i\mathrel{.}\pair{\pi_1}{\pi_2}\in A$, proving $\exists \pi_2\in \bigcup_{i\in\mathbb{N}}P_i\mathrel{.}\pair{\pi_1}{\pi_2}\in A$; \\
($\supseteq$)\quad conversely, consider the chain $\pair{P}{i\in\mathbb{N}}$ so that $\bigcup_{i\in\mathbb{N}}P=P$.}\\
=
\lastformulaexplanation{\mathcal{P}}{since $\mathcal{P}\in\mathcal{A\mskip1muEH}$ so that by (\ref{def:AEH}) the condition holds for all elements of  $\mathcal{P}$}{\mbox{\qed}}
\end{calculus}
\let\qed\relax
\end{proof}
\end{toappendix}
 
 \section{Chain Limit Order Ideal Abstraction}\label{sec:ChainLimitOrderIdealAbstraction}
\subsection{Chain Limit Order Ideal Abstraction Definition and Properties}
Define
\bgroup\abovedisplayskip3pt\belowdisplayskip3pt\begin{eqntabular}{rcl@{\qquad\qquad and \qquad\qquad}rcl}
\alpha^{{\sqsubseteq}{\uparrow}}&\triangleq&\alpha^{{\sqsubseteq}}\comp\alpha^{\uparrow}
&
{\maccent{\alpha}{\ast}}^{{\sqsubseteq}{\uparrow}}(\mathcal{P})&\triangleq&\Lfp{\subseteq}\LAMBDA{X}\mathcal{P}\cup\alpha^{{\sqsubseteq}{\uparrow}}(X)
\label{eq:def:hat:alpha:sqsubseteq:uparrow}
\end{eqntabular}\egroup
to get an upper closure operator (since $\alpha^{{\sqsubseteq}{\uparrow}}$ is increasing and expansive although not idempotent).
\begin{counterexample}\label{cex:hat:alpha:sqsubseteq:uparrow}Define $\pair{\mathbb{L}}{\sqsubseteq}=\pair{\wp(\mathbb{N})}{\subseteq}$ and $\mathcal{N}\triangleq\{\mathbb{N}\setminus\{n\}\mid n\in\mathbb{N}\}\in\wp(\mathbb{N})$ to be the set of all sets $\mathbb{N}$ with one missing element. Since any two different elements of $\mathcal{N}$ are $\subseteq$- incomparable, $\mathcal{N}$ is both a lower and upper frontier so chains are reduced to one element. Therefore ${\maccent{\alpha}{\ast}}^{\downarrow}(\mathcal{N}) ={\maccent{\alpha}{\ast}}^{\uparrow}(\mathcal{N}) =\mathcal{N}$. By (\ref{eq:def:hat:alpha:sqsubseteq:uparrow}), it follows that ${{\alpha}}^{{\sqsubseteq}{\uparrow}}(\mathcal{N})=\alpha^{{\sqsubseteq}}(\mathcal{N})=\wp(\mathbb{N})\setminus\{\mathbb{N}\}$. Consider the increasing chain $\mathcal{C}=\pair{\{i\mid i<j\}}{j\in\mathbb{N}}$ of elements of ${\maccent{\alpha}{\ast}}^{\uparrow}(\mathcal{N})$. Its limit is $\bigcup_{j\in\mathbb{N}}{\{i\mid i<j\}}=\mathbb{N}\not\in{{\alpha}}^{{\sqsubseteq}{\uparrow}}(\mathcal{N})=\wp(\mathbb{N})\setminus\{\mathbb{N}\}$ proving that ${{\alpha}}^{{\sqsubseteq}{\uparrow}}$ is not idempotent.
\end{counterexample}
\begin{lemma}\label{lem:GC:hat:alpha:sqsubseteq:uparrow}\proofinapx\quad\smash{$\pair{\wp(\mathbb{L})}{\subseteq}\galoiS{{\maccent{\alpha}{\ast}}^{{\sqsubseteq}{\uparrow}}}{\mathbb{1}}\pair{{\maccent{\alpha}{\ast}}^{{\sqsubseteq}{\uparrow}}(\wp(\mathbb{L}))}{\subseteq}$} and
$\sextuple{{\maccent{\alpha}{\ast}}^{{\sqsubseteq}{\uparrow}}(\wp(\mathbb{L}))}{\subseteq}{\emptyset}{\mathbb{L}}{\LAMBDA{X}{\maccent{\alpha}{\ast}}^{{\sqsubseteq}{\uparrow}}(\bigcup X)}{\bigcap}$  is a complete lattice.
\end{lemma}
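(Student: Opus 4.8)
The plan is to reduce the whole statement to the generic theory of upper closure operators on complete lattices, exactly as was done for Lemma~\ref{lem:GC:hat:alpha:downarrow}. The only real content is that ${\maccent{\alpha}{\ast}}^{{\sqsubseteq}{\uparrow}}$ is an upper closure operator on the complete lattice $\pair{\wp(\mathbb{L})}{\subseteq}$; once this is established, both the Galois retraction and the complete-lattice structure of the image follow immediately from the facts recalled in the Closures subsection (the retraction $\pair{\mathbb{L}}{\sqsubseteq}\galoiS{\rho}{\mathbb{1}}\pair{\rho(\mathbb{L})}{\sqsubseteq}$ valid for any upper closure $\rho$, together with Morgan Ward's \cite[th.~4.1]{Ward42}).

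First I would record that $\alpha^{{\sqsubseteq}{\uparrow}}=\alpha^{\sqsubseteq}\comp\alpha^{\uparrow}$ is increasing and extensive: $\alpha^{\uparrow}$ is increasing and extensive (the $\sqsubseteq$-dual of $\alpha^{\downarrow}$, noted increasing and extensive just after (\ref{eq:def:alpha:downarrow})), $\alpha^{\sqsubseteq}$ is an upper closure operator by (\ref{eq:GC:alpha:sqsubseteq}) hence increasing and extensive, and both properties are preserved under composition. Since $\alpha^{{\sqsubseteq}{\uparrow}}$ is increasing and extensive, \cite[lemma~29.1]{Cousot-PAI-2021} applies to the iterate (\ref{eq:def:hat:alpha:sqsubseteq:uparrow}) and shows that ${\maccent{\alpha}{\ast}}^{{\sqsubseteq}{\uparrow}}$ is the smallest upper closure operator pointwise above $\alpha^{{\sqsubseteq}{\uparrow}}$. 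The hard part is exactly this passage to the fixpoint iterate, rather than a direct appeal to Morgan Ward for $\alpha^{{\sqsubseteq}{\uparrow}}$ itself: $\alpha^{{\sqsubseteq}{\uparrow}}$ fails to be idempotent, as witnessed by counterexample~\ref{cex:hat:alpha:sqsubseteq:uparrow}, so the iteration is genuinely needed. This is the only non-routine point, and it is the precise analogue of the obstruction already overcome in Lemma~\ref{lem:GC:hat:alpha:downarrow}.

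Finally I would invoke the closure machinery. Being an upper closure operator, ${\maccent{\alpha}{\ast}}^{{\sqsubseteq}{\uparrow}}$ yields the Galois retraction $\pair{\wp(\mathbb{L})}{\subseteq}\galoiS{{\maccent{\alpha}{\ast}}^{{\sqsubseteq}{\uparrow}}}{\mathbb{1}}\pair{{\maccent{\alpha}{\ast}}^{{\sqsubseteq}{\uparrow}}(\wp(\mathbb{L}))}{\subseteq}$, and by Morgan Ward's theorem its image is a complete lattice with infimum ${\maccent{\alpha}{\ast}}^{{\sqsubseteq}{\uparrow}}(\emptyset)$, join $\LAMBDA{X}{\maccent{\alpha}{\ast}}^{{\sqsubseteq}{\uparrow}}(\bigcup X)$, meet $\bigcap$, and top $\mathbb{L}$. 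It then remains to identify the infimum and top with the values claimed. For the infimum, $\alpha^{\uparrow}(\emptyset)=\emptyset$ (there is no chain in the empty set) and $\alpha^{\sqsubseteq}(\emptyset)=\emptyset$, so $\alpha^{{\sqsubseteq}{\uparrow}}(\emptyset)=\emptyset$ and the iterates of (\ref{eq:def:hat:alpha:sqsubseteq:uparrow}) from $\emptyset$ stay at $\emptyset$, giving ${\maccent{\alpha}{\ast}}^{{\sqsubseteq}{\uparrow}}(\emptyset)=\emptyset$; for the top, ${\maccent{\alpha}{\ast}}^{{\sqsubseteq}{\uparrow}}(\mathbb{L})\in\wp(\mathbb{L})$ is a subset of $\mathbb{L}$, while extensiveness gives $\mathbb{L}\subseteq{\maccent{\alpha}{\ast}}^{{\sqsubseteq}{\uparrow}}(\mathbb{L})$, whence ${\maccent{\alpha}{\ast}}^{{\sqsubseteq}{\uparrow}}(\mathbb{L})=\mathbb{L}$ lies in the image and is its greatest element. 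This produces exactly the sextuple asserted in the statement.
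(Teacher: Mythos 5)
Your proof is correct and follows essentially the same route as the paper's: establish that $\alpha^{{\sqsubseteq}{\uparrow}}$ is increasing and extensive, invoke \cite[lemma~29.1]{Cousot-PAI-2021} to obtain the smallest upper closure operator above it, and conclude by Morgan Ward's theorem. Your explicit verification that the infimum is $\emptyset$ (via $\alpha^{{\sqsubseteq}{\uparrow}}(\emptyset)=\emptyset$ and the fixpoint iteration) is a welcome addition, since it matches the sextuple claimed in the lemma statement, whereas the paper's own proof text names the infimum as $\{\bot\}$.
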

\begin{toappendix}
\begin{proof}[Proof of lemma \ref{lem:GC:hat:alpha:sqsubseteq:uparrow}]
By  \cite[lemma 29.1]{Cousot-PAI-2021}, ${\maccent{\alpha}{\ast}}^{{\sqsubseteq}{\uparrow}}$
is the smallest upper closure operator pointwise greater than or equal to $\alpha^{{\sqsubseteq}{\uparrow}}$. By Morgan Ward's \cite[theorem 4.1]{Ward42}, 
$\pair{{\maccent{\alpha}{\ast}}^{{\sqsubseteq}{\uparrow}}(\wp(\mathbb{L}))}{\subseteq}$ is a complete lattice with infimum ${\maccent{\alpha}{\ast}}^{\uparrow}(\{\bot\}) =\{\bot\}$ and join
${\LAMBDA{X}{\maccent{\alpha}{\ast}}^{{\sqsubseteq}{\uparrow}}(\bigcup X)}$.
\end{proof}
\end{toappendix}

\subsection{Forall Hyperproperties}\label{sec:generalized:forall:forall}
$\forall$ hyperproperties are usually
defined in the context of trace semantics of section \ref{sect:Trace-Semantics}, for which, in absence of \texttt{break}s, $\pair{\mathbb{L}}{\sqsubseteq}=\pair{\wp(\Sigma^{+\infty})}{\subseteq})$ as in section \ref{sec:Biinductive-Trace-Semantic-Domain}. In this case, by definition of $\subseteq$, we get
\bgroup\abovedisplayskip4pt\belowdisplayskip2pt\begin{eqntabular}{rcl}
\mathcal{AAH}&\triangleq&\{\{P\in \wp(\Sigma^{+\infty})\mid\forall \pi_1,\pi_2\in P\mathrel{.}\pair{\pi_1}{\pi_2}\in A\}\mid A\in\wp(\Sigma^{+\infty}\times\Sigma^{+\infty})\}
\label{eq:def:AAH:pi}
\end{eqntabular}\egroup

\begin{example}[Non-interference]
A typical forall hyperproperty is non interference $\textit{NI}\in\mathcal{AAH}$ for the trace semantics of section \ref{sect:Trace-Semantics} \cite{DBLP:conf/sosp/Cohen77,DBLP:conf/sp/GoguenM82a,DBLP:conf/sp/GoguenM84}. Let $\mathtt{L}\in \mathbb{X}$ be a low variable, we have
\begin{eqntabular}{rcl@{\qquad}}
\textit{NI}&\triangleq&\{P\in\wp(\Sigma^{+})\mid\forall {\sigma_1\pi_1\sigma'_1},{\sigma_2\pi_2\sigma'_2}\in P\mathrel{.}(\sigma_1(\mathtt{L})=\sigma_2(\mathtt{L}))\Rightarrow(\sigma'_1(\mathtt{L})=\sigma'_2(\mathtt{L}))\}\label{eq:def:NI}\
\end{eqntabular}
We have $\textit{NI}\in\mathcal{AAH}$ by defining
$A\triangleq\{\pair{\sigma_1\pi_1\sigma'_1}{\sigma_2\pi_2\sigma'_2}\mid(\sigma_1(\mathtt{L})=\sigma_2(\mathtt{L}))\Rightarrow(\sigma'_1(\mathtt{L})=\sigma'_2(\mathtt{L}))\}$.
\end{example}
\section{Logic Rule for Chain Limit Order Ideal Abstract Semantic Properties}\label{sec:LogicRuleChainLimitOrderIdealAbstractSemanticProperties}
\cite[sect.\ 5.3]{DBLP:conf/pldi/DardinierM24} have introduced a sound but incomplete logic for proving $\forall^{\ast}\exists^{\ast}$ hyperproperties. We generalize the rule in our algebraic lattice-theoretic framework for the chain limit abstract semantic properties in ${{\maccent{\alpha}{\ast}}^{{\uparrow}}(\wp(\mathbb{L}))}$.

\subsection{A Sound and Incomplete Rule}
\cite{DBLP:conf/pldi/DardinierM24} does not consider \texttt{break}s and nontermination so that the fields $\bot$ and $b$ of $\triple{e:F}{\bot:I}{b:B}$ in (\ref{eq:def:Abstract-Semantic-Domain-Semantics}) can be ignored and the tuple reduces to the value $F$ of the field $e$. In this section, \ref{def:abstract:domain:well:def:finite:domain} is a lattice which is increasing chain complete, \ref{def:abstract:domain:well:def:infinite:domain} and \ref{def:abstract:domain:well:def:oo:absorbent} are omitted, and limits of increasing chains are assumed to be preserved in \ref{def:abstract:domain:well:def:join:additive}. We also assume that 
$\sqb{\texttt{$\neg$B}}_{e}^{\sharp}\mathbin{\fatsemi^{\sharp}}\sqb{\texttt{$\neg$B}}_{e}^{\sharp}$
=
$\sqb{\texttt{$\neg$B}}_{e}^{\sharp}$,
$\sqb{\texttt{$\neg$B}}_{e}^{\sharp}\mathbin{\fatsemi^{\sharp}}\sqb{\texttt{B}}_{e}^{\sharp}$
=
$\sqb{\texttt{B}}_{e}^{\sharp}\mathbin{\fatsemi^{\sharp}}\sqb{\texttt{$\neg$B}}_{e}^{\sharp}$
=
${\bot}_{+}^{\sharp}$, and 
$\sqb{\texttt{skip}}_{e}^{\sharp}\triangleq{\textsf{skip}^{\sharp}}$
=
${\textsf{init}^{\sharp}}$ in (\ref{eq:def:sem:abstract:basis}), which does not hold for traces but holds e.g.\ for a relational semantics.

The rule of \cite{DBLP:conf/pldi/DardinierM24} generalizes to
\bgroup\abovedisplayskip5pt\belowdisplayskip5pt\begin{eqntabular}{c@{\qquad}}
\frac{\,\mathcal{P}\subseteq\mathcal{I},\quad\overline{\llbrace}\,\mathcal{I}\,\overline{\rrbrace}\,\texttt{if (B) else skip}\,\overline{\llbrace}\,\mathcal{I}\,\overline{\rrbrace},\quad\overline{\llbrace}\,\mathcal{I}\,\overline{\rrbrace}\,\texttt{$\neg$B}\,\overline{\llbrace}\,\mathcal{Q}\,\overline{\rrbrace}\,}{\,\overline{\llbrace}\,\mathcal{P}\,\overline{\rrbrace}\,\texttt{while (B) S}\,\overline{\llbrace}\,\mathcal{Q}\,\overline{\rrbrace}
\,},\quad\mathcal{Q}\in{\maccent{\alpha}{\ast}}^{{\uparrow}}(\wp(\mathbb{L}^{\sharp}))
\label{eq:rule:exists:forall:generalized}
\end{eqntabular}\egroup
The key idea to prove that for any $P\in\mathcal{P}\in\wp(\mathbb{L}^{\sharp}_{+})$, the exact postcondition $Q={\textsf{post}^\sharp\sqb{\texttt{while (B) S}}_{e}^{\sharp}\,P}$ will be in $\mathcal{Q}$ is to exhibit an increasing chain in $\mathcal{Q}$ with least upper bound $Q$, also in $\mathcal{Q}$ by the hypothesis that $\mathcal{Q}$ is a chain limit order ideal abstract semantic property. \ifshort Soundness follows from theorem \ref{th:rule:exists:forall:generalized:sound}  in the appendix\proofinapx.\fi
\begin{toappendix}
\subsection{A Soundness Proof of \textup{(\ref{eq:rule:exists:forall:generalized})}}
Let $P\in\wp(\mathbb{L}^{\sharp}_{+})$. The iterates $\pair{X^i}{i\in\mathbb{N}\cup\{\omega\}}$ of $\LAMBDA{X}P \mathbin{{\sqcup}^{\sharp}_{+}} \textsf{\textup{post}}^{\sharp}\sqb{\texttt{if(B) S else skip}}^{\sharp}_{e}(X)$ from ${\bot}^{\sharp}_{+}$ are defined as
\begin{eqntabular}{rcl}
X^0&\triangleq&P\nonumber\\[-0.5ex]
X^{n+1}&\triangleq&\textsf{post}^\sharp\sqb{\texttt{if (B) S else skip}}^{\sharp}_{e}\, X^{n}
\label{eq:def:ule:exists:forall:generalized:X:n}\\[-0.5ex]
X^{\omega}&\triangleq&\mathop{{\bigsqcup}^{\sharp}_{+}}\limits_{n\in\mathbb{N}}X^{n}\nonumber
\end{eqntabular}
Since the iterates are a function of $P$, we write $X^i(P)$, ${i\in\mathbb{N}\cup\{\omega\}}$ when this dependency must be made clear.
\begin{lemma}\label{lem:rule:exists:forall:value:X:n}
\begin{eqntabular}{rcl}
\forall n\in\mathbb{N}\mathrel{.}X^n&=&\bigl(\mathop{{\bigsqcup}^{\sharp}_{+}}\limits_{i=0}^{n-1}
\textsf{\textup{post}}^{\sharp}\sqb{\texttt{$\neg$B}}_{e}^{\sharp}((\textsf{\textup{post}}^{\sharp}\sqb{\texttt{B;S}}_{e}^{\sharp})^i P)
\mathbin{{\sqcup}^{\sharp}_{+}}
 \bigl((\textsf{\textup{post}}^{\sharp}\sqb{\texttt{B;S}}_{e}^{\sharp})^{n}P\bigr)
 \label{eq:rule:exists:forall:value:X:n}\\
 {X^{\omega}}
 &=&
 {\mathop{{\bigsqcup}^{\sharp}_{+}}\limits_{n\in\mathbb{N}}
\textsf{\textup{post}}^{\sharp}\sqb{\texttt{$\neg$B}}_{e}^{\sharp}((\textsf{\textup{post}}^{\sharp}\sqb{\texttt{B;S}}_{e}^{\sharp})^n P)
\mathbin{{\sqcup}^{\sharp}_{+}}
 \mathop{{\bigsqcup}^{\sharp}_{+}}\limits_{n\in\mathbb{N}}(\textsf{\textup{post}}^{\sharp}\sqb{\texttt{B;S}}_{e}^{\sharp})^{n}P}\nonumber
\end{eqntabular}
\end{lemma}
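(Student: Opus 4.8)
The plan is to establish the closed form for $X^n$ by induction on $n\in\mathbb{N}$, and then to obtain the formula for $X^\omega$ by a purely algebraic regrouping of the resulting join. Throughout I abbreviate $N\triangleq\textsf{\textup{post}}^\sharp\sqb{\texttt{$\neg$B}}_{e}^{\sharp}$ and $G\triangleq\textsf{\textup{post}}^\sharp\sqb{\texttt{B;S}}_{e}^{\sharp}$, so that the target identity (\ref{eq:rule:exists:forall:value:X:n}) reads $X^n=\bigl(\mathop{\bigsqcup^{\sharp}_{+}}_{i=0}^{n-1}N(G^{i}P)\bigr)\mathbin{\sqcup^{\sharp}_{+}}G^{n}P$.

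First I would rewrite the one-step recurrence into a convenient additive shape. By the conditional rule (\ref{eq:post:abstract:if}) with $\texttt{S}_1=\texttt{S}$ and $\texttt{S}_2=\texttt{skip}$, together with the section hypothesis $\sqb{\texttt{skip}}_{e}^{\sharp}=\textsf{init}^{\sharp}$ and the neutrality of $\textsf{init}^{\sharp}$ (\ref{def:abstract:domain:well:def:init:neutral}), the definition $X^{n+1}=\textsf{\textup{post}}^\sharp\sqb{\texttt{if (B) S else skip}}^{\sharp}_{e}\,X^{n}$ becomes $X^{n+1}=G(X^{n})\mathbin{\sqcup^{\sharp}_{+}}N(X^{n})$, which is the form I will iterate.

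The base case $n=0$ is immediate, since the empty join is $\bot^{\sharp}_{+}$ and $G^{0}P=P$, giving $X^{0}=P$. For the successor step I assume the closed form for $X^{n}$ and apply $G$ and $N$ to it; both transformers distribute over the finite join because $\textsf{\textup{post}}^\sharp$ is finite-join preserving by lemma \ref{lem:post-S}. The decisive simplifications come from the three test identities assumed in this section, combined with the composition law (\ref{eq:post:abstract:seq}) (equivalently, associativity of $\fatsemi^\sharp$): since $\sqb{\texttt{$\neg$B}}_{e}^{\sharp}\mathbin{\fatsemi^{\sharp}}\sqb{\texttt{B;S}}_{e}^{\sharp}=(\sqb{\texttt{$\neg$B}}_{e}^{\sharp}\mathbin{\fatsemi^{\sharp}}\sqb{\texttt{B}}_{e}^{\sharp})\mathbin{\fatsemi^{\sharp}}\sqb{\texttt{S}}_{e}^{\sharp}=\bot^{\sharp}_{+}$ and $\bot^{\sharp}_{+}$ is absorbent (\ref{def:abstract:domain:well:def:bot:absorbent}), every cross term $G(N(G^{i}P))$ collapses to $\bot^{\sharp}_{+}$; and since $\sqb{\texttt{$\neg$B}}_{e}^{\sharp}\mathbin{\fatsemi^{\sharp}}\sqb{\texttt{$\neg$B}}_{e}^{\sharp}=\sqb{\texttt{$\neg$B}}_{e}^{\sharp}$, each term $N(N(G^{i}P))$ collapses back to $N(G^{i}P)$. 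What survives from $G(X^{n})\mathbin{\sqcup^{\sharp}_{+}}N(X^{n})$ is thus $G^{n+1}P$ (from $G(G^{n}P)$), the terms $N(G^{i}P)$ for $i<n$, and the new term $N(G^{n}P)$; reassociating and reindexing these gives exactly the closed form at rank $n+1$.

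Finally, for $X^{\omega}=\mathop{\bigsqcup^{\sharp}_{+}}_{n\in\mathbb{N}}X^{n}$ I would substitute the closed form and use associativity and commutativity of the join to split it into $\mathop{\bigsqcup^{\sharp}_{+}}_{n}\mathop{\bigsqcup^{\sharp}_{+}}_{i=0}^{n-1}N(G^{i}P)$ and $\mathop{\bigsqcup^{\sharp}_{+}}_{n}G^{n}P$; the first double join telescopes to $\mathop{\bigsqcup^{\sharp}_{+}}_{n\in\mathbb{N}}N(G^{n}P)$, yielding the stated formula. The one point requiring care is the existence of these countable joins: in the increasing-chain-complete join semilattice of \ref{def:abstract:domain:well:def:finite:domain} they exist as limits of the increasing chain of finite partial joins $X^{0}\mathbin{\sqcup^{\sharp}_{+}}\cdots\mathbin{\sqcup^{\sharp}_{+}}X^{n}$, which is also what legitimizes the reassociation. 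I expect the main obstacle to be bookkeeping rather than conceptual: tracking the four families of terms produced by $G(X^{n})\mathbin{\sqcup^{\sharp}_{+}}N(X^{n})$ and confirming that only the boundary term $G^{n+1}P$ and the diagonal terms $N(G^{i}P)$ survive.
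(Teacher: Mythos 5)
Your proof is correct and follows essentially the same route as the paper's: rewrite the recurrence as $X^{n+1}=G(X^{n})\mathbin{\sqcup^{\sharp}_{+}}N(X^{n})$ via the conditional rule, induct on $n$ using finite-join preservation of $\textsf{\textup{post}}^{\sharp}$ together with the test identities, and regroup the countable join for $X^{\omega}$. If anything, your explicit accounting of the four families of terms — showing $G(N(G^{i}P))=\bot^{\sharp}_{+}$ and $N(N(G^{i}P))=N(G^{i}P)$ — is tidier than the paper's somewhat implicit treatment of the same cancellations.
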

\begin{proof}[Proof of lemma \ref{lem:rule:exists:forall:value:X:n}]
The proof is by recurrence on $n$.

\hyphen{5}\quad For the basis, this is ${\bot}^{\sharp}_{+}\mathbin{{\sqcup}^{\sharp}_{+}} \bigl(P\mathbin{\fatsemi^{\sharp}}\textsf{\textup{skip}}\bigr)$ = $P\mathbin{\fatsemi^{\sharp}}\textsf{\textup{init}}$ = $P$ = $X^0$ by hypothesis $\sqb{\texttt{skip}}_{e}^{\sharp}={\textsf{init}^{\sharp}}$, \ref{def:abstract:domain:well:def:finite:domain}, and \ref{def:abstract:domain:well:def:init:neutral}.

\smallskip

\hyphen{5}\quad For the induction step, we observe that $\pair{X^i}{i\leqslant n}$ is a $\mathrel{{\sqsubseteq}^{\sharp}_{+}}$-increasing chain, by definition of the lub $\mathbin{{\sqcup}^{\sharp}_{+}}$. Then 
\begin{calculus}[=\ \ ]
\formula{X^{n+1}}\\
=
\formulaexplanation{\textsf{post}^\sharp\sqb{\texttt{if (B) S else skip}}^{\sharp}_{e}\, X^{n}}{def.\ ${X^{n+1}}$}\\
=
\formulaexplanation{\textsf{\textup{post}}^\sharp\sqb{\texttt{B;S}}^{\sharp}_{e}\,X^{n}\mathbin{\,\sqcup^{\sharp}_{+}\,}\textsf{\textup{post}}^\sharp\sqb{\neg\texttt{B;skip}}^{\sharp}_{e}\,X^{n}}{(\ref{eq:post:abstract:if})}\\
=
\formulaexplanation{\Bigl(X^{n}\mathbin{\fatsemi^{\sharp}}\sqb{\texttt{B;S}}^{\sharp}_{e}\Bigr)\mathbin{\,\sqcup^{\sharp}_{+}\,}\Bigl(X^{n}\mathbin{\fatsemi^{\sharp}}\sqb{\neg\texttt{B}}^{\sharp}_{e}\Bigr)}{by def.\ (\ref{eq:def:abstract:transformer:post}) of \textsf{post}, $\sqb{\texttt{skip}}_{e}^{\sharp}={\textsf{init}^{\sharp}}$, and \ref{def:abstract:domain:well:def:init:neutral}}\\
=
\formulaexplanation{\Bigl(\bigl(\mathop{{\bigsqcup}^{\sharp}_{+}}\limits_{i=0}^{n-1}P\mathbin{\fatsemi^{\sharp}}(\sqb{\texttt{B;S}}_{e}^{\sharp})^{i}\mathbin{\fatsemi^{\sharp}}\sqb{\texttt{$\neg$B}}_{e}^{\sharp}\bigr)
\mathbin{{\sqcup}^{\sharp}_{+}}
 \bigl(P\mathbin{\fatsemi^{\sharp}}(\sqb{\texttt{B;S}}_{e}^{\sharp})^{n}\bigr)\mathbin{\fatsemi^{\sharp}}\sqb{\texttt{B;S}}^{\sharp}_{e}\Bigr)
 \mathbin{\,\sqcup^{\sharp}_{+}\,}
 \Bigl(\bigl(\mathop{{\bigsqcup}^{\sharp}_{+}}\limits_{i=0}^{n-1}P\mathbin{\fatsemi^{\sharp}}(\sqb{\texttt{B;S}}_{e}^{\sharp})^{i}\mathbin{\fatsemi^{\sharp}}\sqb{\texttt{$\neg$B}}_{e}^{\sharp}\bigr)
\mathbin{{\sqcup}^{\sharp}_{+}}
 \bigl(P\mathbin{\fatsemi^{\sharp}}(\sqb{\texttt{B;S}}_{e}^{\sharp})^{n}\bigr)\mathbin{\fatsemi^{\sharp}}\sqb{\neg\texttt{B}}^{\sharp}_{e}\Bigr)}{induction hypothesis}\\
=
\formula{\Bigl(\bigl(\mathop{{\bigsqcup}^{\sharp}_{+}}\limits_{i=0}^{n-1}P\mathbin{\fatsemi^{\sharp}}(\sqb{\texttt{B;S}}_{e}^{\sharp})^{i}\mathbin{\fatsemi^{\sharp}}\sqb{\texttt{$\neg$B}}_{e}^{\sharp}\bigr)
\mathbin{{\sqcup}^{\sharp}_{+}}
 \bigl(P\mathbin{\fatsemi^{\sharp}}(\sqb{\texttt{B;S}}_{e}^{\sharp})^{n+1}\bigr)\Bigr)
 \mathbin{\,\sqcup^{\sharp}_{+}\,}
\bigl(\mathop{{\bigsqcup}^{\sharp}_{+}}\limits_{i=0}^{n}P\mathbin{\fatsemi^{\sharp}}(\sqb{\texttt{B;S}}_{e}^{\sharp})^{i}\mathbin{\fatsemi^{\sharp}}\sqb{\texttt{$\neg$B}}_{e}^{\sharp}\bigr)
}\\
\rightexplanation{integrating the term $P\mathbin{\fatsemi^{\sharp}}(\sqb{\texttt{B;S}}_{e}^{\sharp})^{n}\bigr)\mathbin{\fatsemi^{\sharp}}\sqb{\neg\texttt{B}}^{\sharp}_{e}$ in the join and $\sqb{\texttt{B;S}}^{\sharp}_{e}$ in $P\mathbin{\fatsemi^{\sharp}}(\sqb{\texttt{B;S}}_{e}^{\sharp})^{n}\bigr)$}\\
 =
\formulaexplanation{\bigl(\mathop{{\bigsqcup}^{\sharp}_{+}}\limits_{i=0}^{n}P\mathbin{\fatsemi^{\sharp}}(\sqb{\texttt{B;S}}_{e}^{\sharp})^{i}\mathbin{\fatsemi^{\sharp}}\sqb{\texttt{$\neg$B}}_{e}^{\sharp}\bigr)
\mathbin{{\sqcup}^{\sharp}_{+}}
 \bigl(P\mathbin{\fatsemi^{\sharp}}(\sqb{\texttt{B;S}}_{e}^{\sharp})^{n+1}\bigr)
}{idempotence of $\mathbin{{\sqcup}^{\sharp}_{+}}$}\\
 =
\formulaexplanation{\bigl(\mathop{{\bigsqcup}^{\sharp}_{+}}\limits_{i=0}^{n}
\textsf{\textup{post}}^{\sharp}\sqb{\texttt{$\neg$B}}_{e}^{\sharp}(\textsf{\textup{post}}^{\sharp}\sqb{\texttt{B;S}}_{e}^{\sharp})^i P)
\mathbin{{\sqcup}^{\sharp}_{+}}
 \bigl((\textsf{\textup{post}}^{\sharp}\sqb{\texttt{B;S}}_{e}^{\sharp})^{n+1}P\bigr)}{def.\ (\ref{eq:def:abstract:transformer:post}) of \textsf{post}}
\end{calculus}

\smallskip

\hyphen{5}\quad For the limit, we have
\begin{calculus}[=\ \ ]
\formula{X^{\omega}}\\
=
\formulaexplanation{\mathop{{\bigsqcup}^{\sharp}_{+}}\limits_{n\in\mathbb{N}}\Bigl(\bigl(\mathop{{\bigsqcup}^{\sharp}_{+}}\limits_{i=0}^{n-1}
\textsf{\textup{post}}^{\sharp}\sqb{\texttt{$\neg$B}}_{e}^{\sharp}(\textsf{\textup{post}}^{\sharp}\sqb{\texttt{B;S}}_{e}^{\sharp})^i P\bigr)
\mathbin{{\sqcup}^{\sharp}_{+}}
 \bigl((\textsf{\textup{post}}^{\sharp}\sqb{\texttt{B;S}}_{e}^{\sharp})^{n}P\bigr)\Bigr)}{(\ref{eq:def:ule:exists:forall:generalized:X:n}) and (\ref{eq:rule:exists:forall:value:X:n})}\\
=
\lastformulaexplanation{\mathop{{\bigsqcup}^{\sharp}_{+}}\limits_{n\in\mathbb{N}}
\textsf{\textup{post}}^{\sharp}\sqb{\texttt{$\neg$B}}_{e}^{\sharp}((\textsf{\textup{post}}^{\sharp}\sqb{\texttt{B;S}}_{e}^{\sharp})^n P)
\mathbin{{\sqcup}^{\sharp}_{+}}
 \mathop{{\bigsqcup}^{\sharp}_{+}}\limits_{n\in\mathbb{N}}(\textsf{\textup{post}}^{\sharp}\sqb{\texttt{B;S}}_{e}^{\sharp})^{n}P}{$\mathbin{{\sqcup}^{\sharp}_{+}}$ associative}{\mbox{\qed}}
\end{calculus}
\let\qed\relax
\end{proof}
\begin{lemma}\label{lem:rule:while:lfp} For all  $P\in\wp(\mathbb{L}^{\sharp}_{+})$,
\begin{eqntabular}{rcl}
\Lfp{\sqsubseteq}\LAMBDA{X}{P \mathbin{{\sqcup}^{\sharp}_{+}} \textsf{\textup{post}}^{\sharp}\sqb{\texttt{if(B) S else skip}}^{\sharp}_{e}(X)} 
&=&X^{\omega} \label{eq:rule:while:lfp}
\end{eqntabular}
\end{lemma}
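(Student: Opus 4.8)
The plan is to read the right-hand side $X^{\omega}$ as the ordinary Kleene limit of the operator $G \triangleq \LAMBDA{X}{P \mathbin{{\sqcup}^{\sharp}_{+}} \textsf{post}^\sharp\sqb{\texttt{if (B) S else skip}}_{e}^{\sharp}(X)}$ iterated from ${\bot}^{\sharp}_{+}$, and to match that limit against the sequence $\pair{X^{n}}{n\in\mathbb{N}}$ of (\ref{eq:def:ule:exists:forall:generalized:X:n}). Abbreviate $H \triangleq \textsf{post}^\sharp\sqb{\texttt{if (B) S else skip}}_{e}^{\sharp}$, so that $X^{0}=P$, $X^{n+1}=H(X^{n})$, and $G = \LAMBDA{X}{P \mathbin{{\sqcup}^{\sharp}_{+}} H(X)}$. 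Note that the $X^{n}$ are \emph{not} the Kleene iterates of $G$ (they need not increase), so the equality is genuinely a claim relating a plain recursion to a least fixpoint.

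First I would record the regularity of $H$. By (\ref{eq:post:abstract:if}) and the standing assumption $\sqb{\texttt{skip}}_{e}^{\sharp}={\textsf{init}^{\sharp}}$, $H$ is the pointwise join of $\textsf{post}^\sharp(\sqb{\texttt{B;S}}_{e}^{\sharp})$ and $\textsf{post}^\sharp(\sqb{\neg\texttt{B}}_{e}^{\sharp})$; each summand preserves finite joins and increasing chain joins by lemma \ref{lem:post-S} (invoking the finite-join and increasing-chain-join preservation of $\mathbin{{\fatsemi}^{\sharp}}$ assumed in \ref{def:abstract:domain:well:def:join:additive} for this section), and a pointwise join of such maps is again finite-join preserving and upper-continuous. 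Hence $G$ is upper-continuous, and $H({\bot}^{\sharp}_{+})={\bot}^{\sharp}_{+}$ by \ref{def:abstract:domain:well:def:bot:absorbent}. Proposition \ref{prop:Tarski:constructive} then yields that $\Lfp{\sqsubseteq}G$ exists and that the iterates converge at $\omega$, i.e.\ $\Lfp{\sqsubseteq}G = \mathop{{\bigsqcup}^{\sharp}_{+}}\limits_{n\in\mathbb{N}}G^{n}({\bot}^{\sharp}_{+})$.

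Next I would prove by induction on $n$ that $G^{n+1}({\bot}^{\sharp}_{+}) = \mathop{{\bigsqcup}^{\sharp}_{+}}\limits_{i=0}^{n}X^{i}$. The base case is $G^{1}({\bot}^{\sharp}_{+}) = P \mathbin{{\sqcup}^{\sharp}_{+}} H({\bot}^{\sharp}_{+}) = P = X^{0}$. For the step, finite-join preservation of $H$ and $X^{i+1}=H(X^{i})$ give $G^{n+2}({\bot}^{\sharp}_{+}) = P \mathbin{{\sqcup}^{\sharp}_{+}} H(\mathop{{\bigsqcup}^{\sharp}_{+}}\limits_{i=0}^{n}X^{i}) = X^{0} \mathbin{{\sqcup}^{\sharp}_{+}} \mathop{{\bigsqcup}^{\sharp}_{+}}\limits_{i=1}^{n+1}X^{i} = \mathop{{\bigsqcup}^{\sharp}_{+}}\limits_{i=0}^{n+1}X^{i}$. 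Joining over all $n$ and reassociating gives $\Lfp{\sqsubseteq}G = \mathop{{\bigsqcup}^{\sharp}_{+}}\limits_{n\in\mathbb{N}}\mathop{{\bigsqcup}^{\sharp}_{+}}\limits_{i\leqslant n}X^{i} = \mathop{{\bigsqcup}^{\sharp}_{+}}\limits_{i\in\mathbb{N}}X^{i} = X^{\omega}$, which is (\ref{eq:rule:while:lfp}). The closed form for $X^{\omega}$ obtained in lemma \ref{lem:rule:exists:forall:value:X:n} can serve as an independent check on this final expression.

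The main obstacle is not computational but a well-definedness concern, because \ref{def:abstract:domain:well:def:finite:domain} is here only an \emph{increasing-chain-complete} lattice, not a complete lattice, while $X^{\omega} = \mathop{{\bigsqcup}^{\sharp}_{+}}\limits_{n}X^{n}$ is an unrestricted countable join of a possibly non-monotone family. The remedy, which must be threaded through every step, is to present $X^{\omega}$ as the limit of the \emph{increasing} chain of finite partial joins $\pair{\mathop{{\bigsqcup}^{\sharp}_{+}}\limits_{i\leqslant n}X^{i}}{n\in\mathbb{N}}$: their existence uses only finite joins, their limit uses only increasing-chain-completeness, and pushing $H$ inside them uses only finite-join preservation and upper-continuity. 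Thus the argument never appeals to an arbitrary join that the domain is not guaranteed to have, and the identification of these partial joins with the Kleene iterates $G^{n+1}({\bot}^{\sharp}_{+})$ closes the proof.
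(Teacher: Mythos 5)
Your proof is correct, but it takes a genuinely different route from the paper's. The paper first establishes the closed form of the iterates in lemma \ref{lem:rule:exists:forall:value:X:n}, then verifies by a long direct calculation that $X^{\omega}$ satisfies $X^{\omega}=P\mathbin{{\sqcup}^{\sharp}_{+}}\textsf{\textup{post}}^{\sharp}\sqb{\texttt{if (B) S else skip}}^{\sharp}_{e}(X^{\omega})$, and finally concludes leastness by asserting that the sequence $\pair{X^{i}}{i\in\mathbb{N}\cup\{\omega\}}$ is increasing and stationary at $\omega$. You instead bypass the closed form entirely: you identify the Kleene iterates $G^{n+1}({\bot}^{\sharp}_{+})$ of the operator $G=\LAMBDA{X}{P\mathbin{{\sqcup}^{\sharp}_{+}}H(X)}$ with the partial joins $\mathop{{\bigsqcup}^{\sharp}_{+}}_{i\leqslant n}X^{i}$ and invoke proposition \ref{prop:Tarski:constructive}. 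What your approach buys is twofold. First, leastness comes for free from Kleene iteration, whereas the paper's argument needs the raw sequence $\pair{X^{n}}{n\in\mathbb{N}}$ to be an increasing chain — a claim you correctly flag as not holding in general (the ``continue'' component $(\textsf{\textup{post}}^{\sharp}\sqb{\texttt{B;S}}_{e}^{\sharp})^{n}P$ of $X^{n}$ need not be dominated by anything in $X^{n+1}$), so your route repairs a soft spot rather than inheriting it. Second, your treatment of $X^{\omega}$ as the limit of the increasing chain of finite partial joins is the right way to make the definition (\ref{eq:def:ule:exists:forall:generalized:X:n}) meaningful in a lattice that is only increasing-chain complete; the paper writes the countable join without comment. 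The price is that you must carry the regularity of $H$ (finite-join preservation and upper-continuity, via lemma \ref{lem:post-S} and \ref{def:abstract:domain:well:def:join:additive}) explicitly through the induction, but these are exactly the hypotheses the paper also uses, so nothing extra is assumed. Both proofs ultimately rely on lemma \ref{lem:rule:exists:forall:value:X:n} only for the displayed closed form of $X^{\omega}$, which you rightly relegate to a consistency check.
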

\begin{proof}[Proof of lemma \ref{lem:rule:while:lfp}]
The iterates $\pair{X^i}{i\in\mathbb{N}\cup\{\omega\}}$ are characterized in lemma \ref{lem:rule:exists:forall:value:X:n}. Let use prove that $X^{\omega}=P \mathbin{{\sqcup}^{\sharp}_{+}} \textsf{post}^\sharp\sqb{\texttt{if (B) S else skip}}^{\sharp}_{e}\, X^{\omega}$ is a fixpoint.
\begin{calculus}[=\ \ ]
\formula{P \mathbin{{\sqcup}^{\sharp}_{+}} \textsf{post}^\sharp\sqb{\texttt{if (B) S else skip}}^{\sharp}_{e}\, X^{\omega}}\\
=
\formulaexplanation{P \mathbin{{\sqcup}^{\sharp}_{+}} \textsf{\textup{post}}^\sharp\sqb{\texttt{B;S}}^{\sharp}_{e}\,X^{\omega}\mathbin{\,\sqcup^{\sharp}_{+}\,}\textsf{\textup{post}}^\sharp\sqb{\neg\texttt{B;skip}}^{\sharp}_{e}\,X^{\omega}}{(\ref{eq:post:abstract:if})}\\
=
\formulaexplanation{P \mathbin{{\sqcup}^{\sharp}_{+}} \Bigl(X^{\omega}\mathbin{\fatsemi^{\sharp}}\sqb{\texttt{B;S}}^{\sharp}_{e}\Bigr)\mathbin{\,\sqcup^{\sharp}_{+}\,}\Bigl(X^{\omega}\mathbin{\fatsemi^{\sharp}}\sqb{\neg\texttt{B}}^{\sharp}_{e}\Bigr)}{by def.\ (\ref{eq:def:abstract:transformer:post}) of \textsf{post}, $\sqb{\texttt{skip}}_{e}^{\sharp}={\textsf{init}^{\sharp}}$, and \ref{def:abstract:domain:well:def:init:neutral}}\\
=
\formula{P \mathbin{{\sqcup}^{\sharp}_{+}} \Bigl(\Bigl({\mathop{{\bigsqcup}^{\sharp}_{+}}\limits_{n\in\mathbb{N}}
\textsf{\textup{post}}^{\sharp}\sqb{\texttt{$\neg$B}}_{e}^{\sharp}((\textsf{\textup{post}}^{\sharp}\sqb{\texttt{B;S}}_{e}^{\sharp})^n P)
\mathbin{{\sqcup}^{\sharp}_{+}}
 \mathop{{\bigsqcup}^{\sharp}_{+}}\limits_{n\in\mathbb{N}}(\textsf{\textup{post}}^{\sharp}\sqb{\texttt{B;S}}_{e}^{\sharp})^{n}P}\Bigr)\mathbin{\fatsemi^{\sharp}}\sqb{\texttt{B;S}}^{\sharp}_{e}\Bigr)\\
 \mathbin{\,\sqcup^{\sharp}_{+}\,}\\
 \Bigl(\Bigl({\mathop{{\bigsqcup}^{\sharp}_{+}}\limits_{n\in\mathbb{N}}
\textsf{\textup{post}}^{\sharp}\sqb{\texttt{$\neg$B}}_{e}^{\sharp}((\textsf{\textup{post}}^{\sharp}\sqb{\texttt{B;S}}_{e}^{\sharp})^n P)
\mathbin{{\sqcup}^{\sharp}_{+}}
 \mathop{{\bigsqcup}^{\sharp}_{+}}\limits_{n\in\mathbb{N}}(\textsf{\textup{post}}^{\sharp}\sqb{\texttt{B;S}}_{e}^{\sharp})^{n}P}\Bigr)\mathbin{\fatsemi^{\sharp}}\sqb{\neg\texttt{B}}^{\sharp}_{e}\Bigr)}\\[-1ex]
 \rightexplanation{characterization (\ref{eq:rule:exists:forall:value:X:n}) of $X^{\omega}$}\\
 =
\formulaexplanation{P \mathbin{{\sqcup}^{\sharp}_{+}} \Bigl(\textsf{\textup{post}}^{\sharp}\sqb{\texttt{B;S}}^{\sharp}_{e}\Bigl({\mathop{{\bigsqcup}^{\sharp}_{+}}\limits_{n\in\mathbb{N}}
\textsf{\textup{post}}^{\sharp}\sqb{\texttt{$\neg$B}}_{e}^{\sharp}((\textsf{\textup{post}}^{\sharp}\sqb{\texttt{B;S}}_{e}^{\sharp})^n P)
\mathbin{{\sqcup}^{\sharp}_{+}}
 \mathop{{\bigsqcup}^{\sharp}_{+}}\limits_{n\in\mathbb{N}}(\textsf{\textup{post}}^{\sharp}\sqb{\texttt{B;S}}_{e}^{\sharp})^{n}P}\Bigr)\Bigr)\\
 \mathbin{\,\sqcup^{\sharp}_{+}\,}\\
 \Bigl(\textsf{\textup{post}}^{\sharp}\sqb{\neg\texttt{B}}^{\sharp}_{e}\Bigl({\mathop{{\bigsqcup}^{\sharp}_{+}}\limits_{n\in\mathbb{N}}
\textsf{\textup{post}}^{\sharp}\sqb{\texttt{$\neg$B}}_{e}^{\sharp}((\textsf{\textup{post}}^{\sharp}\sqb{\texttt{B;S}}_{e}^{\sharp})^n P)
\mathbin{{\sqcup}^{\sharp}_{+}}
 \mathop{{\bigsqcup}^{\sharp}_{+}}\limits_{n\in\mathbb{N}}(\textsf{\textup{post}}^{\sharp}\sqb{\texttt{B;S}}_{e}^{\sharp})^{n}P}\Bigr)\Bigr)}{def.\ (\ref{eq:def:abstract:transformer:post}) of \textsf{post}}\\
 =
\formula{P \mathbin{{\sqcup}^{\sharp}_{+}} \Bigl({\mathop{{\bigsqcup}^{\sharp}_{+}}\limits_{n\in\mathbb{N}}
\textsf{\textup{post}}^{\sharp}\sqb{\texttt{B;S}}^{\sharp}_{e}(\textsf{\textup{post}}^{\sharp}\sqb{\texttt{$\neg$B}}_{e}^{\sharp}((\textsf{\textup{post}}^{\sharp}\sqb{\texttt{B;S}}_{e}^{\sharp})^n P))
\mathbin{{\sqcup}^{\sharp}_{+}}
 \mathop{{\bigsqcup}^{\sharp}_{+}}\limits_{n\in\mathbb{N}}\textsf{\textup{post}}^{\sharp}\sqb{\texttt{B;S}}^{\sharp}_{e}(\textsf{\textup{post}}^{\sharp}\sqb{\texttt{B;S}}_{e}^{\sharp})^{n})P}\Bigr)\\
 \mathbin{\,\sqcup^{\sharp}_{+}\,}\\
 \Bigl(={\mathop{{\bigsqcup}^{\sharp}_{+}}\limits_{n\in\mathbb{N}}
\textsf{\textup{post}}^{\sharp}\sqb{\neg\texttt{B}}^{\sharp}_{e}(\textsf{\textup{post}}^{\sharp}\sqb{\texttt{$\neg$B}}_{e}^{\sharp}((\textsf{\textup{post}}^{\sharp}\sqb{\texttt{B;S}}_{e}^{\sharp})^n P))
\mathbin{{\sqcup}^{\sharp}_{+}}
 \mathop{{\bigsqcup}^{\sharp}_{+}}\limits_{n\in\mathbb{N}}\textsf{\textup{post}}^{\sharp}\sqb{\neg\texttt{B}}^{\sharp}_{e}((\textsf{\textup{post}}^{\sharp}\sqb{\texttt{B;S}}_{e}^{\sharp})^{n}P)}\Bigr)}\\
 \rightexplanation{$\textsf{\textup{post}}^{\sharp}(S)$ preserve joins of increasing chains by (\ref{def:abstract:domain:well:def:join:additive})}\\
=
\formula{\Bigl(P \mathbin{{\sqcup}^{\sharp}_{+}} 
 \mathop{{\bigsqcup}^{\sharp}_{+}}\limits_{n\in\mathbb{N}}(\textsf{\textup{post}}^{\sharp}\sqb{\texttt{B;S}}_{e}^{\sharp})^{n+1}P\Bigr)
 \mathbin{\,\sqcup^{\sharp}_{+}\,}
\Bigl(\mathop{{\bigsqcup}^{\sharp}_{+}}\limits_{n\in\mathbb{N}}
\textsf{\textup{post}}^{\sharp}\sqb{\texttt{$\neg$B}}_{e}^{\sharp}((\textsf{\textup{post}}^{\sharp}\sqb{\texttt{B;S}}_{e}^{\sharp})^n P)\Bigr)}\\
\rightexplanation{def.\ (\ref{eq:def:abstract:transformer:post}) of \textsf{post} and hypotheses $\sqb{\texttt{$\neg$B}}_{e}^{\sharp}\mathbin{\fatsemi^{\sharp}}\sqb{\texttt{$\neg$B}}_{e}^{\sharp}$
=
$\sqb{\texttt{$\neg$B}}_{e}^{\sharp}$ and
$\sqb{\texttt{$\neg$B}}_{e}^{\sharp}\mathbin{\fatsemi^{\sharp}}\sqb{\texttt{B}}_{e}^{\sharp}$
=
$\sqb{\texttt{B}}_{e}^{\sharp}\mathbin{\fatsemi^{\sharp}}\sqb{\texttt{$\neg$B}}_{e}^{\sharp}$
}\\
=
\formulaexplanation{X^{\omega}}{integrating $P$ in the join with $\textsf{\textup{post}}^{\sharp}(S)^0=\mathbb{1}$ and commutativity}
\end{calculus}

\smallskip

\noindent
It follows that the sequence $\pair{X^i}{i\in\mathbb{N}\cup\{\omega\}}$ is increasing and stationary at $\omega$ which is therefore the least fixpoint (\ref{eq:rule:while:lfp}).
\end{proof}
\begin{lemma}\label{lem:charaterization:psot:while:e}
\begin{eqntabular}{rclcl}
\textsf{\textup{post}}^\sharp\sqb{\texttt{while (B) S}}^{\sharp}_{e}P
&=&{\textsf{\textup{post}}^\sharp\sqb{\texttt{$\neg$B}}^{\sharp}_{e}\, X^{\omega}}
&=&
\mathop{{\bigsqcup}^{\sharp}_{+}}\limits_{n\in\mathbb{N}}
\textsf{\textup{post}}^{\sharp}\sqb{\texttt{$\neg$B}}_{e}^{\sharp}((\textsf{\textup{post}}^{\sharp}\sqb{\texttt{B;S}}_{e}^{\sharp})^n P)
\label{eq:charaterization:psot:while:e}
\end{eqntabular}
\end{lemma}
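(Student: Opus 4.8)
The plan is to route both equalities of (\ref{eq:charaterization:psot:while:e}) through their common right-hand side $\mathop{\bigsqcup_{+}^{\sharp}}\limits_{n\in\mathbb{N}}\textsf{post}^{\sharp}\sqb{\texttt{$\neg$B}}_{e}^{\sharp}((\textsf{post}^{\sharp}\sqb{\texttt{B;S}}_{e}^{\sharp})^{n}P)$, exploiting the simplifications granted in this section: the $\bot$- and $b$-components are discarded, $\sqb{\texttt{skip}}_{e}^{\sharp}={\textsf{init}^{\sharp}}$, $\sqb{\texttt{$\neg$B}}_{e}^{\sharp}\mathbin{\fatsemi^{\sharp}}\sqb{\texttt{$\neg$B}}_{e}^{\sharp}=\sqb{\texttt{$\neg$B}}_{e}^{\sharp}$, and $\mathbin{\fatsemi^{\sharp}}$ preserves limits of increasing chains (hence so does $\textsf{post}^{\sharp}(S)$ by lemma \ref{lem:post-S}).

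First I would prove the left equality $\textsf{post}^{\sharp}\sqb{\texttt{while (B) S}}^{\sharp}_{e}P=\textsf{post}^{\sharp}\sqb{\texttt{$\neg$B}}^{\sharp}_{e}\,X^{\omega}$. Specializing the \texttt{while} clause (\ref{eq:post:abstract:while}) of theorem \ref{th:Program:execution:properties:calculus} to the present break-free, termination-only setting, where $\sqb{\texttt{B;S}}_{b}^{\sharp}=\bot_{+}^{\sharp}$ and $\sqb{\texttt{$\neg$B}}_{e}^{\sharp}\mathbin{\sqcup_{e}^{\sharp}}\bot_{+}^{\sharp}=\sqb{\texttt{$\neg$B}}_{e}^{\sharp}$, the $e$-component collapses to $\textsf{post}^{\sharp}\sqb{\texttt{while (B) S}}^{\sharp}_{e}P=\textsf{post}^{\sharp}(\sqb{\texttt{$\neg$B}}_{e}^{\sharp})(\Lfp{\sqsubseteq_{+}^{\sharp}}(\vec{F}_{pe}^{\sharp}(P)))$. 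Since $\textsf{post}^{\sharp}({\textsf{init}^{\sharp}})P=P$ by \ref{def:abstract:domain:well:def:init:neutral}, the functional $\vec{F}_{pe}^{\sharp}(P)$ of (\ref{eq:def:F-pe-sharp}) is $\LAMBDA{X}{P\mathbin{\sqcup_{+}^{\sharp}}\textsf{post}^{\sharp}(\sqb{\texttt{B;S}}_{e}^{\sharp})(X)}$; computing its iterates from $\bot_{+}^{\sharp}$ (equivalently, transporting lemma \ref{lem:abstract:Lfp-subseteq-F-e} through lemma \ref{lem:fixpoint:post:+}) gives $\Lfp{\sqsubseteq_{+}^{\sharp}}(\vec{F}_{pe}^{\sharp}(P))=\mathop{\bigsqcup_{+}^{\sharp}}\limits_{n\in\mathbb{N}}(\textsf{post}^{\sharp}\sqb{\texttt{B;S}}_{e}^{\sharp})^{n}P$. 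Applying $\textsf{post}^{\sharp}\sqb{\texttt{$\neg$B}}_{e}^{\sharp}$ and distributing it over this increasing chain join (licensed by the chain-limit-preservation hypothesis) produces exactly the right-hand side of (\ref{eq:charaterization:psot:while:e}), which settles the left equality.

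For the right equality I would start from the closed form of $X^{\omega}$ already established in lemma \ref{lem:rule:exists:forall:value:X:n}, apply $\textsf{post}^{\sharp}\sqb{\texttt{$\neg$B}}_{e}^{\sharp}$, and distribute it over both chain joins. On the first join the two stacked $\neg\texttt{B}$ tests collapse: by the definition (\ref{eq:def:abstract:transformer:post}) of $\textsf{post}^{\sharp}$ and associativity of $\mathbin{\fatsemi^{\sharp}}$ one has $\textsf{post}^{\sharp}\sqb{\texttt{$\neg$B}}_{e}^{\sharp}(\textsf{post}^{\sharp}\sqb{\texttt{$\neg$B}}_{e}^{\sharp}(Y))=Y\mathbin{\fatsemi^{\sharp}}(\sqb{\texttt{$\neg$B}}_{e}^{\sharp}\mathbin{\fatsemi^{\sharp}}\sqb{\texttt{$\neg$B}}_{e}^{\sharp})=\textsf{post}^{\sharp}\sqb{\texttt{$\neg$B}}_{e}^{\sharp}(Y)$ by the idempotence hypothesis, so this join becomes the common right-hand side; the second join maps to the same expression directly; idempotence of $\mathbin{\sqcup_{+}^{\sharp}}$ then identifies the two, yielding $\textsf{post}^{\sharp}\sqb{\texttt{$\neg$B}}^{\sharp}_{e}\,X^{\omega}=\mathop{\bigsqcup_{+}^{\sharp}}\limits_{n\in\mathbb{N}}\textsf{post}^{\sharp}\sqb{\texttt{$\neg$B}}_{e}^{\sharp}((\textsf{post}^{\sharp}\sqb{\texttt{B;S}}_{e}^{\sharp})^{n}P)$.

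The main obstacle is the bookkeeping of reconciling the two distinct fixpoints involved — the ``reaching-the-loop-head'' fixpoint $\Lfp{\sqsubseteq_{+}^{\sharp}}\vec{F}_{pe}^{\sharp}(P)$, which accumulates only $\sqb{\texttt{B;S}}_{e}^{\sharp}$-iterations, versus the combined iterate $X^{\omega}$, which additionally records the $\neg\texttt{B}$-exit at each step — and in carefully justifying each distribution of $\textsf{post}^{\sharp}\sqb{\texttt{$\neg$B}}_{e}^{\sharp}$ over an infinite join. Once the closed forms of lemmas \ref{lem:abstract:Lfp-subseteq-F-e} and \ref{lem:rule:exists:forall:value:X:n} are in hand, the remaining work is the purely algebraic collapse of the doubled $\neg\texttt{B}$ prefix, so no genuinely new estimate is required.
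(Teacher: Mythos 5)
Your proof is correct and rests on the same ingredients as the paper's: the $e$-component of the \texttt{while} clause (\ref{eq:post:abstract:while}) specialized to the break-free, termination-only setting, the closed forms of the iterates, distribution of $\textsf{post}^\sharp\sqb{\neg\texttt{B}}_{e}^{\sharp}$ over increasing-chain joins, and the collapse of the doubled $\neg\texttt{B}$ test by idempotence. The one organizational difference is worth spelling out, because yours is the more careful route. The paper argues linearly: it rewrites $\textsf{post}^\sharp\sqb{\texttt{while (B) S}}^{\sharp}_{e}P$ as $\textsf{post}^\sharp(\sqb{\neg\texttt{B}}_{e}^{\sharp})(\Lfp{{\sqsubseteq}_{+}^{\sharp}}(\LAMBDA{X}P\mathbin{\sqcup_{+}^{\sharp}}\textsf{post}^\sharp(\sqb{\texttt{B;S}}_{e}^{\sharp})X))$ and then replaces that least fixpoint by $X^{\omega}$, citing lemma \ref{lem:rule:while:lfp}. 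But that lemma characterizes $X^{\omega}$ as the least fixpoint of the \texttt{if (B) S else skip} functional, not of $\LAMBDA{X}P\mathbin{\sqcup_{+}^{\sharp}}\textsf{post}^\sharp(\sqb{\texttt{B;S}}_{e}^{\sharp})X$; by lemma \ref{lem:rule:exists:forall:value:X:n} the two fixpoints differ by exactly the $\neg\texttt{B}$-exit terms accumulated at each step, and under the stated hypotheses they need not coincide --- they only agree after $\textsf{post}^\sharp\sqb{\neg\texttt{B}}_{e}^{\sharp}$ has been applied to both. Your two-sided routing, proving that each of $\textsf{post}^\sharp\sqb{\texttt{while (B) S}}^{\sharp}_{e}P$ and $\textsf{post}^\sharp\sqb{\neg\texttt{B}}^{\sharp}_{e}\,X^{\omega}$ equals the common closed form $\mathop{{\bigsqcup}^{\sharp}_{+}}_{n\in\mathbb{N}}\textsf{post}^{\sharp}\sqb{\neg\texttt{B}}_{e}^{\sharp}((\textsf{post}^{\sharp}\sqb{\texttt{B;S}}_{e}^{\sharp})^{n}P)$, never needs to identify the two fixpoints themselves; this is precisely the ``reconciliation'' you flag as the main obstacle, and resolving it at the level of the common right-hand side rather than at the level of the fixpoints is the cleaner presentation. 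Everything else --- the computation $\Lfp{{\sqsubseteq}_{+}^{\sharp}}(\vec{F}_{pe}^{\sharp}(P))=\mathop{{\bigsqcup}^{\sharp}_{+}}_{n\in\mathbb{N}}(\textsf{post}^{\sharp}\sqb{\texttt{B;S}}_{e}^{\sharp})^{n}P$, the use of left chain-limit preservation to push $\textsf{post}^\sharp\sqb{\neg\texttt{B}}_{e}^{\sharp}$ through the infinite joins, and the final appeal to $\sqb{\neg\texttt{B}}_{e}^{\sharp}\mathbin{\fatsemi^{\sharp}}\sqb{\neg\texttt{B}}_{e}^{\sharp}=\sqb{\neg\texttt{B}}_{e}^{\sharp}$ and idempotence of $\mathbin{\sqcup_{+}^{\sharp}}$ --- matches the paper step for step.
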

\begin{proof}[Proof of lemma \ref{lem:charaterization:psot:while:e}]
\begin{calculus}[=\ \ ]
\formula{\textsf{\textup{post}}^\sharp\sqb{\texttt{while (B) S}}^{\sharp}_{e}P}\\
=
\formulaexplanation{\langle ok:\langle e:{\textsf{\textup{post}}^\sharp(\sqb{\neg\texttt{B}}_{e}^{\sharp}\mathbin{\sqcup_{e}^{\sharp}}\sqb{\texttt{B;S}}_{b}^{\sharp})(\Lfp{{\sqsubseteq}_{+}^{\sharp}}({\vec{F}_{pe}^{\sharp}}(P)))},
\bot:{\textsf{\textup{post}}^\sharp(\sqb{\texttt{B;S}}_{\bot}^{\sharp})(\Lfp{{\sqsubseteq}_{+}^{\sharp}}({\vec{F}_{pe}^{\sharp}}(P)))}\mathbin{{\sqcup}_{\infty}^{\sharp}}{}\textsf{\textup{post}}^\sharp(\Gfp{{\sqsubseteq}_{\infty}^{\sharp}}{F_{p\bot}^{\sharp}})P \rangle,
{br:P_{br}}\rangle_{e}}{by (\ref{eq:post:abstract:while})}\\
=
\formula{\textsf{\textup{post}}^\sharp(\sqb{\neg\texttt{B}}_{e}^{\sharp})(\Lfp{{\sqsubseteq}_{+}^{\sharp}}(\LAMBDA{X}{\textsf{\textup{post}}^\sharp({\textsf{\textup{init}}^{\sharp}})P \mathbin{\sqcup_{+}^{\sharp}} \textsf{\textup{post}}^\sharp(\sqb{\texttt{B;S}}_{e}^{\sharp})(X)}))}\\[-0.5ex]\rightexplanation{in absence of \texttt{break}s and ignoring non termination}\\
=
\formulaexplanation{\textsf{\textup{post}}^\sharp(\sqb{\neg\texttt{B}}_{e}^{\sharp})(\Lfp{{\sqsubseteq}_{+}^{\sharp}}(\LAMBDA{X}{\textsf{\textup{post}}^\sharp({\textsf{\textup{init}}^{\sharp}})P \mathbin{\sqcup_{+}^{\sharp}} \textsf{\textup{post}}^\sharp(\sqb{\texttt{B;S}}_{e}^{\sharp})(X)}))}{def.\ (\ref{eq:def:F-pe-sharp}) of ${\vec{F}_{pe}^{\sharp}}$}\\
=
\formulaexplanation{\textsf{\textup{post}}^\sharp(\sqb{\neg\texttt{B}}_{e}^{\sharp})(\Lfp{{\sqsubseteq}_{+}^{\sharp}}(\LAMBDA{X}{P \mathbin{\sqcup_{+}^{\sharp}} \textsf{\textup{post}}^\sharp(\sqb{\texttt{B;S}}_{e}^{\sharp})(X)}))}{def.\ (\ref{eq:def:abstract:transformer:post}) of $\textsf{\textup{post}}^\sharp$ and \ref{def:abstract:domain:well:def:init:neutral}}\\
=
\formulaexplanation{\textsf{\textup{post}}^\sharp(\sqb{\neg\texttt{B}}_{e}^{\sharp})(X^{\omega})}{lemma \ref{lem:rule:while:lfp}}\\
 =
\formulaexplanation{\textsf{\textup{post}}^\sharp(\sqb{\neg\texttt{B}}_{e}^{\sharp})(\mathop{{\bigsqcup}^{\sharp}_{+}}\limits_{n\in\mathbb{N}}
\textsf{\textup{post}}^{\sharp}\sqb{\texttt{$\neg$B}}_{e}^{\sharp}((\textsf{\textup{post}}^{\sharp}\sqb{\texttt{B;S}}_{e}^{\sharp})^n P)
\mathbin{{\sqcup}^{\sharp}_{+}}
 \mathop{{\bigsqcup}^{\sharp}_{+}}\limits_{n\in\mathbb{N}}(\textsf{\textup{post}}^{\sharp}\sqb{\texttt{B;S}}_{e}^{\sharp})^{n}P)}{lemma \ref{lem:rule:while:lfp}}\\
=
\formula{\mathop{{\bigsqcup}^{\sharp}_{+}}\limits_{n\in\mathbb{N}}
\textsf{\textup{post}}^\sharp\sqb{\neg\texttt{B}}_{e}^{\sharp}(\textsf{\textup{post}}^{\sharp}\sqb{\texttt{$\neg$B}}_{e}^{\sharp}((\textsf{\textup{post}}^{\sharp}\sqb{\texttt{B;S}}_{e}^{\sharp})^n P))
\mathbin{{\sqcup}^{\sharp}_{+}}
\mathop{{\bigsqcup}^{\sharp}_{+}}\limits_{n\in\mathbb{N}}\textsf{\textup{post}}^\sharp\sqb{\neg\texttt{B}}_{e}^{\sharp} (\textsf{\textup{post}}^{\sharp}\sqb{\texttt{B;S}}_{e}^{\sharp})^{n}P)}\\
 \rightexplanation{$\textsf{\textup{post}}^\sharp(S)$ preserves joins $\mathbin{{\sqcup}^{\sharp}_{+}}$ by  def.\ (\ref{eq:def:abstract:transformer:post}) of \textsf{post} and \ref{def:abstract:domain:well:def:join:additive}}\\
=
\formula{\mathop{{\bigsqcup}^{\sharp}_{+}}\limits_{n\in\mathbb{N}}
\textsf{\textup{post}}^{\sharp}\sqb{\texttt{$\neg$B}}_{e}^{\sharp}((\textsf{\textup{post}}^{\sharp}\sqb{\texttt{B;S}}_{e}^{\sharp})^n P)}\\
\lastexplanation{(\ref{eq:post:abstract:seq}), def.\ (\ref{eq:def:abstract:transformer:post}) of $\textsf{\textup{post}}^\sharp$, hypotheses $\sqb{\texttt{B}}_{e}^{\sharp}\mathbin{\fatsemi^{\sharp}}\sqb{\texttt{$\neg$B}}_{e}^{\sharp}$
=
${\bot}_{+}^{\sharp}$ and $\sqb{\texttt{$\neg$B}}_{e}^{\sharp}\mathbin{\fatsemi^{\sharp}}\sqb{\texttt{$\neg$B}}_{e}^{\sharp}$
=
$\sqb{\texttt{$\neg$B}}_{e}^{\sharp}$, def.\ function powers, and  def.\ lub}{\mbox{\qed}}
\end{calculus}
\let\qed\relax
\end{proof}

\begin{theorem}\label{th:rule:exists:forall:generalized:sound}Proof rule \textup{(\ref{eq:rule:exists:forall:generalized})} is sound.
\end{theorem}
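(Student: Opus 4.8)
The plan is to unfold the conclusion through the definition (\ref{eq:def:abstract:logical:triples}) of the upper triple and the definition (\ref{eq:def:Post}) of $\textsf{post}^\sharp$, so that soundness reduces to the element-wise claim: for every $P\in\mathcal{P}$, the exact loop postcondition $\textsf{post}^\sharp\sqb{\texttt{while (B) S}}^{\sharp}_{e}\,P$ lies in $\mathcal{Q}$ (the $\bot$ and $br$ components being dropped throughout this section). Indeed $\overline{\llbrace}\,\mathcal{P}\,\overline{\rrbrace}\,\texttt{while (B) S}\,\overline{\llbrace}\,\mathcal{Q}\,\overline{\rrbrace}$ holds iff $\textsf{post}^\sharp\sqb{\texttt{while (B) S}}^{\sharp}_{e}\,P\in\mathcal{Q}$ for all $P\in\mathcal{P}$.

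First I would fix $P\in\mathcal{P}$ and consider the powers $\pair{X^n}{n\in\mathbb{N}}$ of $\textsf{post}^\sharp\sqb{\texttt{if (B) S else skip}}^{\sharp}_{e}$ from $P$ defined in (\ref{eq:def:ule:exists:forall:generalized:X:n}). A straightforward induction on $n$ shows $X^n\in\mathcal{I}$: the base case $X^0=P\in\mathcal{P}\subseteq\mathcal{I}$ is the first premise, and the induction step holds because the second premise, read through (\ref{eq:def:abstract:logical:triples}) and (\ref{eq:def:Post}), says exactly that $\textsf{post}^\sharp\sqb{\texttt{if (B) S else skip}}^{\sharp}_{e}$ maps each element of $\mathcal{I}$ into $\mathcal{I}$, and $X^{n+1}$ is such an image of $X^n$. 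Reading the third premise the same way and applying it to each $X^n\in\mathcal{I}$ then yields $Y^n\triangleq\textsf{post}^\sharp\sqb{\texttt{$\neg$B}}^{\sharp}_{e}\,X^n\in\mathcal{Q}$ for every $n$.

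Next I would identify the lub of the $Y^n$ with the loop postcondition. Using the closed form of $X^n$ from lemma \ref{lem:rule:exists:forall:value:X:n}, the increasing-chain join preservation of \ref{def:abstract:domain:well:def:join:additive}, and the assumed test identities $\sqb{\texttt{$\neg$B}}^{\sharp}_{e}\mathbin{\fatsemi^{\sharp}}\sqb{\texttt{$\neg$B}}^{\sharp}_{e}=\sqb{\texttt{$\neg$B}}^{\sharp}_{e}$ and $\sqb{\texttt{$\neg$B}}^{\sharp}_{e}\mathbin{\fatsemi^{\sharp}}\sqb{\texttt{B}}^{\sharp}_{e}=\bot^{\sharp}_{+}$, I would compute that $Y^n$ collapses to the $n$-th partial join $\mathop{{\bigsqcup}^{\sharp}_{+}}\limits_{i=0}^{n}\textsf{post}^\sharp\sqb{\texttt{$\neg$B}}^{\sharp}_{e}((\textsf{post}^\sharp\sqb{\texttt{B;S}}^{\sharp}_{e})^{i}P)$. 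This form makes it manifest that $\pair{Y^n}{n\in\mathbb{N}}$ is a $\sqsubseteq^{\sharp}_{+}$-increasing chain, whose lub exists by the increasing chain completeness of \ref{def:abstract:domain:well:def:finite:domain} and equals $\textsf{post}^\sharp\sqb{\texttt{while (B) S}}^{\sharp}_{e}\,P$ by lemma \ref{lem:charaterization:psot:while:e}.

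Finally I would close the argument with the side condition: writing $\mathcal{Q}={\maccent{\alpha}{\ast}}^{\uparrow}(\mathcal{Q}')$ and invoking the $\sqsubseteq$-dual of lemma \ref{lem:alpha:downarrow:fixpoint} gives $\alpha^{\uparrow}(\mathcal{Q})=\mathcal{Q}$, i.e.\ $\mathcal{Q}$ is closed under lubs of increasing chains. Applying this to the chain $\pair{Y^n}{n}\subseteq\mathcal{Q}$ places its lub $\textsf{post}^\sharp\sqb{\texttt{while (B) S}}^{\sharp}_{e}\,P$ in $\mathcal{Q}$, which is the element-wise claim and hence soundness. The main obstacle I anticipate is the monotonicity in the third step: because the $X^n$ are the plain powers of the body transformer rather than the Kleene iterates of the loop functional (they lack the $P\mathbin{{\sqcup}^{\sharp}_{+}}(\cdot)$ term), the chain $\pair{X^n}{n}$ is in general \emph{not} $\sqsubseteq^{\sharp}_{+}$-increasing, so monotonicity of $\pair{Y^n}{n}$ cannot be read off from monotonicity of $\textsf{post}^\sharp$ and must be extracted from the explicit partial-join form — where the two test identities are precisely what make the non-monotone residual $(\textsf{post}^\sharp\sqb{\texttt{B;S}}^{\sharp}_{e})^{n}P$ vanish under $\textsf{post}^\sharp\sqb{\texttt{$\neg$B}}^{\sharp}_{e}$.
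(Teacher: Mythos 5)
Your proof is correct and follows essentially the same route as the paper's: the same induction showing $X^n\in\mathcal{I}$ from the first two premises, the same use of the third premise to place $Y^n\triangleq\textsf{post}^\sharp\sqb{\texttt{$\neg$B}}^{\sharp}_{e}X^n$ in $\mathcal{Q}$, the same appeal to lemma \ref{lem:charaterization:psot:while:e} to identify the lub of that chain with $\textsf{post}^\sharp\sqb{\texttt{while (B) S}}^{\sharp}_{e}P$, and the same closure of $\mathcal{Q}$ under lubs of increasing chains to conclude. The one point where you genuinely diverge is the monotonicity of $\pair{Y^n}{n\in\mathbb{N}}$, and there your version is the stronger one: the paper dispatches this step by saying that $\textsf{post}^\sharp\sqb{\texttt{$\neg$B}}^{\sharp}_{e}$ is increasing, which yields an increasing chain only if $\pair{X^n}{n\in\mathbb{N}}$ is itself increasing --- and it is not in general, precisely because, as you observe, the $X^n$ are plain powers of the body transformer without the $P\mathbin{{\sqcup}^{\sharp}_{+}}(\cdot)$ term of a Kleene iteration; reading monotonicity off the closed form $Y^n=\mathop{{\bigsqcup}^{\sharp}_{+}}_{i=0}^{n}\textsf{post}^{\sharp}\sqb{\texttt{$\neg$B}}_{e}^{\sharp}((\textsf{post}^{\sharp}\sqb{\texttt{B;S}}_{e}^{\sharp})^{i}P)$ is the correct repair. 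One small imprecision in your closing remark: the residual $(\textsf{post}^{\sharp}\sqb{\texttt{B;S}}_{e}^{\sharp})^{n}P$ does not vanish under $\textsf{post}^{\sharp}\sqb{\texttt{$\neg$B}}_{e}^{\sharp}$ --- it is absorbed as the top term of the partial join --- and only the idempotence $\sqb{\texttt{$\neg$B}}_{e}^{\sharp}\mathbin{\fatsemi^{\sharp}}\sqb{\texttt{$\neg$B}}_{e}^{\sharp}=\sqb{\texttt{$\neg$B}}_{e}^{\sharp}$ is needed for that collapse, the annihilation identity entering only through lemma \ref{lem:charaterization:psot:while:e}, which you invoke anyway.
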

\begin{proof}[Proof of theorem \ref{th:rule:exists:forall:generalized:sound}]
Observe that if $P\in\mathcal{P}$ then $X^0=P\in\mathcal{I}$ by $\mathcal{P}\subseteq\mathcal{I}$ by the premise of the rule (\ref{eq:rule:exists:forall:generalized}). Assume $X^n\in\mathcal{I}$ then, by (\ref{eq:def:abstract:logical:triples}),
$\overline{\llbrace}\,\mathcal{I}\,\overline{\rrbrace}\,\texttt{if (B) else skip}\,\overline{\llbrace}\,\mathcal{I}\,\overline{\rrbrace}$
if and only if
$\forall  P\in\mathcal{I}\!\mathrel{.}\textsf{post}^\sharp\sqb{\texttt{if (B) else skip}}^\sharp P\in\mathcal{I}$
if and only if
$\forall  P\in\mathcal{I}\!\mathrel{.}\textsf{post}^\sharp\sqb{\texttt{if (B) else skip}}^\sharp_{e} P\in\mathcal{I}$ since nontermination and \texttt{break}s are ignored. By (\ref {eq:def:ule:exists:forall:generalized:X:n}), this implies that $X^{n+1}\in\mathcal{I}$. By recurrence
$\forall n\in\mathbb{N}\mathrel{.}X^n\in \mathcal{I}$.

By (\ref{eq:def:abstract:transformer:post}) and \ref{def:abstract:domain:well:def:join:additive}, $\textsf{post}^\sharp\sqb{\texttt{$\neg$B}}^{\sharp}_{e}$ is increasing so that the sequence $\pair{\textsf{post}^\sharp\sqb{\texttt{$\neg$B}}^{\sharp}_{e}\, X^{n}}{n\in\mathbb{N}\cup\{\omega\}}$ is increasing.


By (\ref{eq:def:abstract:logical:triples}),
$\overline{\llbrace}\,\mathcal{I}\,\overline{\rrbrace}\,\texttt{$\neg$B}\,\overline{\llbrace}\,\mathcal{Q}\,\overline{\rrbrace}$
$\Leftrightarrow$
$\forall  P\in\mathcal{I}\!\mathrel{.}\textsf{post}^\sharp\sqb{\texttt{$\neg$B}}^\sharp P\in\mathcal{Q}$
$\Leftrightarrow$
$\forall  P\in\mathcal{I}\!\mathrel{.}\textsf{post}^\sharp\sqb{\texttt{$\neg$B}}^\sharp_{e} P\in\mathcal{Q}$ since nontermination and \texttt{break}s are ignored. Since $\forall n\in\mathbb{N}\mathrel{.}X^n\in\mathcal{I}$, this implies that $\forall n\in\mathbb{N}\mathrel{.}\textsf{post}^\sharp\sqb{\texttt{$\neg$B}}^{\sharp}_{e}\, X^{n}\in\mathcal{Q}$. By hypothesis, $\mathcal{Q}\in{\maccent{\alpha}{\ast}}^{{\sqsubseteq}{\uparrow}}(\wp(\mathbb{L}^{\sharp}_{+}))$ so that by the dual of (\ref{eq:def:alpha:downarrow}), ${\textsf{post}^\sharp\sqb{\texttt{$\neg$B}}^{\sharp}_{e}\, X^{\omega}}\in\mathcal{Q}$. It follows by (\ref{eq:charaterization:psot:while:e}) that $\textsf{\textup{post}}^\sharp\sqb{\texttt{while (B) S}}^{\sharp}_{e}P\in\mathcal{Q}$. 

\smallskip

 We conclude that $\forall P\in\mathcal{P}\mathrel{.}{\textsf{post}^\sharp\sqb{\texttt{while (B) S}}_{e}^{\sharp}\,P}={\textsf{post}^\sharp\sqb{\texttt{$\neg$B}}^{\sharp}_{e}\, X^{\omega}}\in\mathcal{Q}$ which, by (\ref{eq:def:abstract:logical:triples}), implies that $\overline{\llbrace}\,\mathcal{P}\,\overline{\rrbrace}\,\texttt{while (B) S}\,\overline{\llbrace}\,\mathcal{Q}\,\overline{\rrbrace}$, proving soundness of the rule (\ref{eq:rule:exists:forall:generalized}).
\end{proof}
\end{toappendix}
\ifshort A counter-example proving incompleteness is also provided by lemma \ref{lem:rule:exists:forall:generalized:incomplete} in the appendix\proofinapx. \fi
\begin{toappendix}
\begin{lemma}\label{lem:rule:exists:forall:generalized:incomplete}Proof rule \textup{(\ref{eq:rule:exists:forall:generalized})} is incomplete.
\end{lemma}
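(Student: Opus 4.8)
The plan is to exhibit a single concrete counterexample: a valid upper‑logic triple whose postcondition lies in ${\maccent{\alpha}{\ast}}^{\uparrow}(\wp(\mathbb{L}^{\sharp}))$, yet for which no invariant $\mathcal{I}$ can satisfy the three premises of \textup{(\ref{eq:rule:exists:forall:generalized})} simultaneously. I would work in the relational semantics of sect.\@ \ref{sect:RelationalSemantics}, which satisfies all the standing assumptions of this subsection (in particular $\sqb{\texttt{$\neg$B}}_{e}^{\sharp}\mathbin{\fatsemi^{\sharp}}\sqb{\texttt{$\neg$B}}_{e}^{\sharp}=\sqb{\texttt{$\neg$B}}_{e}^{\sharp}$, $\sqb{\texttt{B}}_{e}^{\sharp}\mathbin{\fatsemi^{\sharp}}\sqb{\texttt{$\neg$B}}_{e}^{\sharp}={\bot}_{+}^{\sharp}$, and $\sqb{\texttt{skip}}_{e}^{\sharp}={\textsf{init}^{\sharp}}=\mathbb{1}$). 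Concretely I take $\texttt{while (B) S}\triangleq\texttt{while (x>0) x=x-1}$, the singleton precondition $\mathcal{P}\triangleq\{\mathbb{1}\}$, and let $R_{\infty}\triangleq\textsf{post}^\sharp\sqb{\texttt{while (x>0) x=x-1}}_{e}\,\mathbb{1}$ be the exact loop relation, which by \ref{lem:charaterization:psot:while:e} equals $\mathop{{\bigsqcup}^{\sharp}_{+}}\limits_{n\in\mathbb{N}}\textsf{post}^{\sharp}\sqb{\texttt{$\neg$B}}_{e}^{\sharp}((\textsf{post}^{\sharp}\sqb{\texttt{B;S}}_{e}^{\sharp})^n \mathbb{1})$. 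Since this loop runs exactly $\sigma(\texttt{x})$ times from any state with $\sigma(\texttt{x})>0$, the number of iterations is unbounded. I then set $\mathcal{Q}\triangleq\{R_{\infty}\}$.

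First I would check the side condition, namely $\mathcal{Q}\in{\maccent{\alpha}{\ast}}^{\uparrow}(\wp(\mathbb{L}^{\sharp}))$. A singleton contains only the degenerate (constant) increasing chain, so by the dual of \textup{(\ref{eq:def:alpha:downarrow})} we have $\alpha^{\uparrow}(\{R_{\infty}\})=\{R_{\infty}\}$, and consequently ${\maccent{\alpha}{\ast}}^{\uparrow}(\{R_{\infty}\})=\Lfp{\subseteq}\LAMBDA{X}\{R_{\infty}\}\cup\alpha^{\uparrow}(X)=\{R_{\infty}\}$; thus $\mathcal{Q}$ is chain‑limit closed. Next I would confirm the conclusion is valid: ignoring \texttt{break}s and nontermination as in this subsection, $\textsf{Post}^\sharp\sqb{\texttt{while (B) S}}^{\sharp}\{\mathbb{1}\}=\{\textsf{post}^\sharp\sqb{\texttt{while (B) S}}_{e}^{\sharp}\,\mathbb{1}\}=\{R_{\infty}\}\subseteq\mathcal{Q}$, so $\overline{\llbrace}\,\mathcal{P}\,\overline{\rrbrace}\,\texttt{while (B) S}\,\overline{\llbrace}\,\mathcal{Q}\,\overline{\rrbrace}$ holds by \textup{(\ref{eq:def:abstract:logical:triples})}.

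The core of the argument is to show no invariant exists. By \textup{(\ref{eq:def:abstract:logical:triples})}, premise $2$ states $\textsf{Post}^\sharp\sqb{\texttt{if (B) S else skip}}^{\sharp}\mathcal{I}\subseteq\mathcal{I}$, i.e.\ $\mathcal{I}$ is closed under the elementwise body map $f\triangleq\LAMBDA{P}\textsf{post}^\sharp\sqb{\texttt{if (B) S else skip}}_{e}^{\sharp}P$; together with premise $1$ ($\{\mathbb{1}\}\subseteq\mathcal{I}$) this forces $\mathcal{I}$ to contain the least $f$‑closed set above $\{\mathbb{1}\}$, which by \textup{(\ref{eq:def:ule:exists:forall:generalized:X:n})} is exactly $\{X^{n}(\mathbb{1})\mid n\in\mathbb{N}\}$. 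Premise $3$ then demands $\textsf{post}^\sharp\sqb{\texttt{$\neg$B}}_{e}^{\sharp}X^{n}(\mathbb{1})\in\mathcal{Q}=\{R_{\infty}\}$ for every $n$. But filtering the value of $X^{n}$ from \ref{lem:rule:exists:forall:value:X:n} through $\sqb{\texttt{$\neg$B}}_{e}^{\sharp}$, and using $\sqb{\texttt{$\neg$B}}_{e}^{\sharp}\mathbin{\fatsemi^{\sharp}}\sqb{\texttt{$\neg$B}}_{e}^{\sharp}=\sqb{\texttt{$\neg$B}}_{e}^{\sharp}$ and $\sqb{\texttt{B}}_{e}^{\sharp}\mathbin{\fatsemi^{\sharp}}\sqb{\texttt{$\neg$B}}_{e}^{\sharp}={\bot}_{+}^{\sharp}$, gives the finite partial sum $\textsf{post}^\sharp\sqb{\texttt{$\neg$B}}_{e}^{\sharp}X^{n}(\mathbb{1})=\mathop{{\bigsqcup}^{\sharp}_{+}}\limits_{i=0}^{n}\textsf{post}^{\sharp}\sqb{\texttt{$\neg$B}}_{e}^{\sharp}((\textsf{post}^{\sharp}\sqb{\texttt{B;S}}_{e}^{\sharp})^i \mathbb{1})$, which only relates initial states with $\sigma(\texttt{x})\le n$ and is therefore a strict subset of $R_{\infty}$. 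Hence $\textsf{post}^\sharp\sqb{\texttt{$\neg$B}}_{e}^{\sharp}X^{n}(\mathbb{1})\neq R_{\infty}$ for every $n$, so premise $3$ fails for every admissible $\mathcal{I}$, and the valid triple is underivable.

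The only genuinely delicate point — and the one I would write out carefully — is the forcing step: showing that the two over‑approximation premises $1$ and $2$ leave no freedom to avoid the strictly growing filtered iterates, i.e.\ that every post‑fixpoint $\mathcal{I}$ of $\LAMBDA{X}\mathcal{P}\cup f(X)$ above $\mathcal{P}$ necessarily contains all $X^{n}(\mathbb{1})$, so that enlarging $\mathcal{I}$ can only make premise $3$ harder. Everything else (the semantics computation for $\texttt{while (x>0) x=x-1}$, chain‑limit closure of a singleton, and validity of the conclusion) is routine and reuses \ref{lem:rule:exists:forall:value:X:n} and \ref{lem:charaterization:psot:while:e}.
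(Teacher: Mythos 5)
Your proof is correct and follows essentially the same route as the paper's: premises 1 and 2 force the invariant $\mathcal{I}$ to contain every iterate $X^{n}(P)$, and premise 3 then demands $\textsf{post}^\sharp\sqb{\texttt{$\neg$B}}_{e}^{\sharp}X^{n}(P)=\textsf{post}^\sharp\sqb{\texttt{while (B) S}}_{e}^{\sharp}P$ for all $n$, which contradicts the characterization of lemma \ref{lem:charaterization:psot:while:e}. The only difference is that the paper stops at the generic observation that this forces the loop to never be entered ``in general,'' whereas you exhibit the concrete witness \texttt{while (x>0) x=x-1} with unbounded iteration count and verify the side condition for the singleton postcondition under ${\maccent{\alpha}{\ast}}^{\uparrow}$ as stated in the rule; this makes the existence claim underlying ``incomplete'' fully explicit and is, if anything, a tightening of the paper's argument.
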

\begin{proof}[Proof of lemma (\ref{lem:rule:exists:forall:generalized:incomplete})]
Consider $\overline{\llbrace}\,\{P\}\,\overline{\rrbrace}\,\texttt{while (B) S}\,\overline{\llbrace}\,\{\textsf{post}^\sharp\sqb{\texttt{while (B) S}}_{e}^{\sharp}P\}\,\overline{\rrbrace}$ which holds by (\ref{eq:def:abstract:logical:triples}). 
Since ${\maccent{\alpha}{\ast}}^{{\sqsubseteq}{\uparrow}}(\{\textsf{post}^\sharp\sqb{\texttt{while (B) S}}_{e}^{\sharp}P\})=\{\textsf{post}^\sharp\sqb{\texttt{while (B) S}}_{e}^{\sharp}P\},$ we can apply proof rule (\ref{eq:rule:exists:forall:generalized}). By $\mathcal{P}\subseteq\mathcal{I}$, we should have
$P\in\mathcal{I}$ so $X^0(P)\in\mathcal{I}$. The second condition $\overline{\llbrace}\,\mathcal{I}\,\overline{\rrbrace}\,\texttt{if (B) else skip}\,\overline{\llbrace}\,\mathcal{I}\,\overline{\rrbrace}$ of the premise implies, by (\ref{eq:def:abstract:logical:triples}), that $\forall P\in \mathcal{I}\mathrel{.}\textsf{post}^\sharp\sqb{\texttt{if (B) S else skip}}^{\sharp}_{e}\, P$. Therefore by (\ref{eq:def:ule:exists:forall:generalized:X:n}) and recurrence, 
$\forall n\in\mathbb{N}\mathrel{.}X^{n}(P)\in\mathcal{I}$. Then the third condition of the premiss, requires that $\overline{\llbrace}\,\mathcal{I}\,\overline{\rrbrace}\,\texttt{$\neg$B}\,\overline{\llbrace}\,\mathcal{Q}\,\overline{\rrbrace}$, equivalently, by (\ref{eq:def:abstract:logical:triples}), 
$\forall  P\in\mathcal{I}\!\mathrel{.}\textsf{post}^\sharp\sqb{\texttt{$\neg$B}}^\sharp P\in\{\textsf{post}^\sharp\sqb{\texttt{while (B) S}}_{e}^{\sharp}P\}$ and therefore $\forall  P\in\mathcal{I}\!\mathrel{.}\textsf{post}^\sharp\sqb{\texttt{$\neg$B}}^\sharp P$ = $\textsf{post}^\sharp\sqb{\texttt{while (B) S}}_{e}^{\sharp}P$. In particular, we must have $\forall n\in\mathbb{N}\mathrel{.}\textsf{post}^\sharp\sqb{\texttt{$\neg$B}}^\sharp X^n(P)$ = $\textsf{post}^\sharp\sqb{\texttt{while (B) S}}_{e}^{\sharp}P$.
By the characterization (\ref{eq:charaterization:psot:while:e}) of $\textsf{\textup{post}}^\sharp\sqb{\texttt{while (B) S}}^{\sharp}_{e}P$, this means that $\forall n\in\mathbb{N}\mathrel{.}\textsf{post}^\sharp\sqb{\texttt{$\neg$B}}^\sharp X^n(P)=\mathop{{\bigsqcup}^{\sharp}_{+}}\limits_{n\in\mathbb{N}}
\textsf{\textup{post}}^{\sharp}\sqb{\texttt{$\neg$B}}_{e}^{\sharp}((\textsf{\textup{post}}^{\sharp}\sqb{\texttt{B;S}}_{e}^{\sharp})^n P)$. Otherwise stated the loop is never entered, which, apart from the cas where \texttt{B} is false, is not the case in general.
\end{proof}
\end{toappendix}

\subsection{Completeness Relative to an Abstract Hypercollecting Semantics}\ifshort Proof \else
Lemma \ref{lem:rule:exists:forall:generalized:incomplete} shows that proof \fi rule \textup{(\ref{eq:rule:exists:forall:generalized})} is incomplete relative to the hypercollecting semantics (\ref{eq:Post:abstract:while}) of section \ref{sec:Calculus:Hyper:Properties}. We show that the rule is complete relative to the following abstraction of the hypercollecting semantics (\ref{eq:Post:abstract:while}).
\begin{definition}[Weak structural hypercollecting semantics for iteration]
\bgroup\abovedisplayskip5pt\belowdisplayskip0pt
   \begin{eqntabular}{r@{\ \ }c@{\ \ }l@{\qquad}}
       \adhocHyperSemantics\sqb{\texttt{while(B) S}}^{\sharp}_{e}\mathcal{P} 
&\triangleq& 
\Hpost\sqb{\texttt{$\neg$}B}^\sharp_e(\Lfp{\subseteq}\LAMBDA{\mathcal{X}}{\mathcal{P} \cup \adhocHyperSemantics\sqb{\texttt{if(B) S else skip}}^{\sharp}_{e}(\mathcal{X})}) 
       \label{eq:def:overline:Post:while}
   \end{eqntabular}\egroup
   \end{definition}
\noindent $\adhocHyperSemantics\sqb{\texttt{while(B) S}}^{\sharp}_{e}$ is an algebraic form of the hypercollecting semantics postulated by \cite[p.\ 877]{DBLP:conf/popl/AssafNSTT17}.
\ifshort 
We characterize by theorem \ref{th:characterization:adhocHyperSemantics} in the appendix the executions satisfying (\ref{eq:def:overline:Post:while}) \proofinapx. \fi
\begin{toappendix}
\begin{theorem}[Characterization of the executions satisfying (\ref{eq:def:overline:Post:while})]\label{th:characterization:adhocHyperSemantics}
\begin{eqntabular}{rcl}
\Lfp{\subseteq}\LAMBDA{\mathcal{X}}{\mathcal{P} \cup \adhocHyperSemantics\sqb{\texttt{if(B) S else skip}}^{\sharp}_{e}(\mathcal{X})}
&=&
\{X^{n}(P)\mid P\in\mathcal{P}\wedge n\in\mathbb{N}\}
\label{eq:characterization:adhocHyperSemantics}
\end{eqntabular}
\end{theorem}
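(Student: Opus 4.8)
The plan is to recognize the right-hand side as the Kleene iteration sequence of the functional defining the least fixpoint, and then to match it term by term with the execution-level iterates $X^{n}$ of (\ref{eq:def:ule:exists:forall:generalized:X:n}). Write $F \triangleq \LAMBDA{\mathcal{X}}{\mathcal{P} \cup \adhocHyperSemantics\sqb{\texttt{if(B) S else skip}}^{\sharp}_{e}(\mathcal{X})}$, a monotone operator on the complete lattice $\pair{\wp(\mathbb{L}^{\sharp}_{+})}{\subseteq}$. First I would observe that, since \texttt{if(B) S else skip} is not itself the iteration under consideration, the weak hypercollecting semantics acts on it exactly like $\textsf{Post}^\sharp$, i.e.\ by the elementwise image $\adhocHyperSemantics\sqb{\texttt{if(B) S else skip}}^{\sharp}_{e}(\mathcal{X}) = \{\textsf{post}^\sharp\sqb{\texttt{if(B) S else skip}}^{\sharp}_{e}\, X \mid X\in\mathcal{X}\}$, which follows from (\ref{eq:def:Post}) together with the structural characterization of $\textsf{Post}^\sharp$ on conditionals in theorem~\ref{th:upper-abstract-hyper-properties}. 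Because the image operation $\mathcal{X}\mapsto\{\textsf{post}^\sharp\sqb{\cdots}^{\sharp}_{e}X\mid X\in\mathcal{X}\}$ distributes over arbitrary unions and $\mathcal{P}\cup{}$ preserves nonempty unions, $F$ is upper-continuous in the sense of definition~\ref{def:Properties:continuous}; hence by proposition~\ref{prop:Tarski:constructive} (the Scott--Kleene case) its least fixpoint is reached at $\omega$ as $\Lfp{\subseteq}F = \bigcup_{k\in\mathbb{N}}\mathcal{X}^{k}$, where $\mathcal{X}^{0}=\emptyset$ and $\mathcal{X}^{k+1}=F(\mathcal{X}^{k})$.

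Next I would compute these iterates explicitly by induction on $k$, proving the invariant $\mathcal{X}^{k} = \{X^{n}(P)\mid P\in\mathcal{P}\wedge 0\leqslant n\leqslant k-1\}$ for $k\geqslant 1$. The base case is $\mathcal{X}^{1}=F(\emptyset)=\mathcal{P}=\{X^{0}(P)\mid P\in\mathcal{P}\}$, using $X^{0}(P)=P$ from (\ref{eq:def:ule:exists:forall:generalized:X:n}). For the induction step, the decisive identity is that applying the execution-level transformer advances the iterate index, namely $\textsf{post}^\sharp\sqb{\texttt{if(B) S else skip}}^{\sharp}_{e}\, X^{n}(P) = X^{n+1}(P)$, which is literally the defining recurrence of the $X^{n}$ in (\ref{eq:def:ule:exists:forall:generalized:X:n}). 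Thus the image part of $F(\mathcal{X}^{k})$ produces exactly $\{X^{m}(P)\mid P\in\mathcal{P}\wedge 1\leqslant m\leqslant k\}$, and adjoining $\mathcal{P}=\{X^{0}(P)\mid P\in\mathcal{P}\}$ gives $\mathcal{X}^{k+1}=\{X^{m}(P)\mid P\in\mathcal{P}\wedge 0\leqslant m\leqslant k\}$, completing the induction.

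Finally, passing to the limit yields $\Lfp{\subseteq}F = \bigcup_{k\in\mathbb{N}}\mathcal{X}^{k} = \{X^{n}(P)\mid P\in\mathcal{P}\wedge n\in\mathbb{N}\}$, which is the claimed equality (\ref{eq:characterization:adhocHyperSemantics}). I do not expect any deep obstacle here: the argument is a routine Kleene-iteration computation once the two preliminary facts are in place. The only point requiring genuine care is the first step, i.e.\ justifying that the weak hypercollecting semantics for the conditional body coincides with the elementwise image of $\textsf{post}^\sharp$ (so that $F$ really has the simple elementwise form), and checking that $F$ is upper-continuous so that the iteration stabilizes at $\omega$ rather than needing transfinite ranks; everything downstream is then a direct index-shifting induction.
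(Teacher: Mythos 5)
Your proposal is correct and follows essentially the same route as the paper's proof: identify $\adhocHyperSemantics\sqb{\texttt{if(B) S else skip}}^{\sharp}_{e}$ with the elementwise image of $\textsf{post}^\sharp$ on the conditional, compute the Kleene iterates of the functional (which stabilize at $\omega$), and match them to the $X^{n}(P)$ via the recurrence $X^{n+1}(P)=\textsf{post}^\sharp\sqb{\texttt{if (B) S else skip}}^{\sharp}_{e}\,X^{n}(P)$. The only difference is organizational: the paper first expresses $\mathcal{X}^{k}$ in terms of powers $(\adhocHyperSemantics\sqb{\cdots})^{i}(\mathcal{P})$ and then separately proves $(\adhocHyperSemantics\sqb{\cdots})^{n}(\mathcal{P})=\{X^{n}(P)\mid P\in\mathcal{P}\}$, whereas you merge the two inductions into one.
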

\begin{proof}[Proof of theorem \ref{th:characterization:adhocHyperSemantics}]
the iterates of $\LAMBDA{\mathcal{X}}{\mathcal{P} \cup \adhocHyperSemantics\sqb{\texttt{if(B) S else skip}}^{\sharp}_{e}(\mathcal{X})}$ are
   
\begin{calculus}[$\mathcal{X}^{n+1}$\quad]
   $\mathcal{X}^0$  = \formula{\emptyset}\\
   $\mathcal{X}^1$ = \formula{\mathcal{P}}\\
   $\mathcal{X}^2$ =  \formula{\mathcal{P} \cup\adhocHyperSemantics\sqb{\texttt{if (B) S else skip}}^{\sharp}_{e}( \mathcal{P})}\\
   ...\\
   $\mathcal{X}^n$ = \formulaexplanation{\bigcup\limits_{i=0}^{n} (                                                                              
           \adhocHyperSemantics\sqb{\texttt{if (B) S else skip}}^{\sharp}_{e})^i( \mathcal{P})}{induction hypothesis}\\
$\mathcal{X}^{n+1}$ = \formula{\mathcal{P} \cup \adhocHyperSemantics\sqb{\texttt{if(B) S else skip}}^{\sharp}_{e}(\mathcal{X}^n)}\\

\phantom{$\mathcal{X}^{n+1}$} = \formula{\mathcal{P} \cup \adhocHyperSemantics\sqb{\texttt{if(B) S else skip}}^{\sharp}_{e}({\bigcup\limits_{i=0}^{n} (                                                                              
           \adhocHyperSemantics\sqb{\texttt{if (B) S else skip}}^{\sharp}_{e})^i( \mathcal{P})})}\\
           
 \phantom{$\mathcal{X}^{n+1}$} = \formula{\mathcal{P} \cup \bigcup\limits_{i=1}^{n+1} (                                                                              
           \adhocHyperSemantics\sqb{\texttt{if (B) S else skip}}^{\sharp}_{e})^i( \mathcal{P})}\\
           
 \phantom{$\mathcal{X}^{n+1}$} = \formula{\bigcup\limits_{i=0}^{n+1} (                                                                              
           \adhocHyperSemantics\sqb{\texttt{if (B) S else skip}}^{\sharp}_{e})^i( \mathcal{P})}
\end{calculus}  

\smallskip

\noindent A similar calculation shows  that $\mathcal{X}^{\omega}$ = ${\bigcup\limits_{n\in\mathbb{N}} (                                                                              
           \adhocHyperSemantics\sqb{\texttt{if (B) S else skip}}^{\sharp}_{e})^n( \mathcal{P})}$ is stable so is the least fixpoint
$\Lfp{\sqsubseteq}\LAMBDA{\mathcal{X}}\mathcal{P} \cup \adhocHyperSemantics\llbracket\texttt{if (B) S }$ $\texttt{else skip}\rrbracket^{\sharp}_{e}(\mathcal{X})$
=
$\mathcal{X}^{\omega}$.

\smallskip

Observe that we have
\begin{calculus}[=\ \ ]
\hyphen{5}\quad
\formula{(\adhocHyperSemantics\llbracket\texttt{if (B) S }$ $\texttt{else skip}\rrbracket^{\sharp}_{e})^0( \mathcal{P})}\\ 
=
\formulaexplanation{\mathcal{P}}{def.\ function powers}\\
=
\formulaexplanation{\{X^0(P)\mid P\in\mathcal{P}\}}{def.\ function powers}\\
\hyphen{5}\quad\discussion{for induction}\\
\formula{(\adhocHyperSemantics\llbracket\texttt{if (B) S }$ $\texttt{else skip}\rrbracket^{\sharp}_{e})^{n+1}( \mathcal{P})}\\
=
\formulaexplanation{\adhocHyperSemantics\llbracket\texttt{if (B) S }$ $\texttt{else skip}\rrbracket^{\sharp}_{e}((\adhocHyperSemantics\llbracket\texttt{if (B) S }$ $\texttt{else skip}\rrbracket^{\sharp}_{e})^{n}( \mathcal{P}))}{def.\ powers}\\
=
\formulaexplanation{\adhocHyperSemantics\llbracket\texttt{if (B) S }$ $\texttt{else skip}\rrbracket^{\sharp}_{e}(\{X^n(P)\mid P\in\mathcal{P}\})}{induction hypothesis}\\
=
\formulaexplanation{\Hpost\llbracket\texttt{if (B) S }$ $\texttt{else skip}\rrbracket^{\sharp}_{e}(\{X^n(P)\mid P\in\mathcal{P}\})}{def.\ $\adhocHyperSemantics{}$ for conditional}\\
=
\formulaexplanation{\{\textsf{post}^\sharp\sqb{\texttt{if (B) S else skip}}^{\sharp}_{e}\,X^n(P)\mid P\in\mathcal{P}\}}{def.\ (\ref{eq:def:Post}) of $\textsf{Post}^\sharp$}\\
=
\formulaexplanation{\{X^{n+1}(P)\mid P\in\mathcal{P}\}}{def.\ (\ref{eq:def:ule:exists:forall:generalized:X:n}) of the iterates}
\end{calculus}

\smallskip

We conclude, by recurrence, that $\forall n\in\mathbb{N}\mathrel{.}(\adhocHyperSemantics\llbracket\texttt{if (B) S }$ $\texttt{else skip}\rrbracket^{\sharp}_{e})^{n}( \mathcal{P})$
=
$\{X^{n}(P)\mid P\in\mathcal{P}\}$. It follows that
\begin{calculus}[=\ \ ]
\formula{\Lfp{\sqsubseteq}\LAMBDA{\mathcal{X}}\mathcal{P} \cup \adhocHyperSemantics\llbracket\texttt{if (B) S }\texttt{else skip}\rrbracket^{\sharp}_{e}(\mathcal{X})}\\
=
\formula{\mathcal{X}^{\omega}}\\
= 
\formula{{\bigcup\limits_{n\in\mathbb{N}} (                                                                              
           \adhocHyperSemantics\sqb{\texttt{if (B) S else skip}}^{\sharp}_{e})^n( \mathcal{P})}}\\
= 
\formula{{\bigcup\limits_{n\in\mathbb{N}}\{X^{n}(P)\mid P\in\mathcal{P}\}}}\\
= 
\lastformula{\{X^{n}(P)\mid P\in\mathcal{P}\wedge n\in\mathbb{N}\}}{\mbox{\qed}}
\end{calculus}
\let\qed\relax
\end{proof}

The following lemma \ref{lem:hypercollecting-semantics-is-sound} shows the correspondance between $\adhocHyperSemantics\sqb{\texttt{while(B) S}}^{\sharp}_{e}$ and the hypercollecting semantics (\ref{eq:Post:abstract:while}). It shows that $\adhocHyperSemantics\sqb{\texttt{while(B) S}}^{\sharp}_{e}$ misses limits.
\begin{lemma}\label{lem:hypercollecting-semantics-is-sound}
     $   \forall\mathcal{P}\in\wp(\mathbb{L}^{\sharp})\mathrel{.}
     \Hpost\sqb{\texttt{while(B) S}}^{\sharp}_{e}\mathcal{P}\subseteq\alpha^{\uparrow}(\adhocHyperSemantics\sqb{\texttt{while(B) S}}^{\sharp}_{e}\mathcal{P})$.
   \end{lemma}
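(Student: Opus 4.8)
The plan is to compute both sides of the inclusion explicitly in terms of the iterated transformer $\textsf{post}^{\sharp}\sqb{\texttt{B;S}}_{e}^{\sharp}$ and then to exhibit each element of the left-hand side as the limit of an increasing chain of elements of the right-hand side. First I would unfold the hypercollecting semantics: by the definition (\ref{eq:def:Post}) of $\Hpost$ and by lemma \ref{lem:charaterization:psot:while:e}, every element of $\Hpost\sqb{\texttt{while(B) S}}^{\sharp}_{e}\mathcal{P}$ has the form $\textsf{post}^\sharp\sqb{\texttt{while (B) S}}^{\sharp}_{e}P = \mathop{{\bigsqcup}^{\sharp}_{+}}\limits_{i\in\mathbb{N}} \textsf{post}^{\sharp}\sqb{\texttt{$\neg$B}}_{e}^{\sharp}((\textsf{post}^{\sharp}\sqb{\texttt{B;S}}_{e}^{\sharp})^i P)$ for some $P\in\mathcal{P}$.

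Next I would unfold the weak hypercollecting semantics. By the definition (\ref{eq:def:overline:Post:while}), by theorem \ref{th:characterization:adhocHyperSemantics}, and by the definition (\ref{eq:def:Post}) of $\Hpost$, one has $\adhocHyperSemantics\sqb{\texttt{while(B) S}}^{\sharp}_{e}\mathcal{P} = \{\textsf{post}^\sharp\sqb{\texttt{$\neg$B}}^{\sharp}_{e}\,X^n(P)\mid P\in\mathcal{P}\wedge n\in\mathbb{N}\}$, where the iterates $X^n(P)$ are characterised in lemma \ref{lem:rule:exists:forall:value:X:n}. The key computation here is to simplify $\textsf{post}^\sharp\sqb{\texttt{$\neg$B}}^{\sharp}_{e}\,X^n(P)$: distributing $\textsf{post}^\sharp\sqb{\texttt{$\neg$B}}^{\sharp}_{e}$ over the finite join defining $X^n(P)$ (using finite-join preservation from \ref{def:abstract:domain:well:def:join:additive}), and applying the section hypothesis $\sqb{\texttt{$\neg$B}}_{e}^{\sharp}\mathbin{\fatsemi^{\sharp}}\sqb{\texttt{$\neg$B}}_{e}^{\sharp}=\sqb{\texttt{$\neg$B}}_{e}^{\sharp}$ together with the contravariant composition law $\textsf{post}^\sharp(S_1)\comp\textsf{post}^\sharp(S_2)=\textsf{post}^\sharp(S_2\mathbin{\fatsemi^{\sharp}}S_1)$ (which follows from associativity \ref{def:abstract:domain:well:def:operators} and definition (\ref{eq:def:abstract:transformer:post})), I obtain the finite join $Y^n(P)\triangleq\mathop{{\bigsqcup}^{\sharp}_{+}}\limits_{i=0}^{n}\textsf{post}^{\sharp}\sqb{\texttt{$\neg$B}}_{e}^{\sharp}((\textsf{post}^{\sharp}\sqb{\texttt{B;S}}_{e}^{\sharp})^i P)$.

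Then I would observe that for each fixed $P\in\mathcal{P}$ the sequence $\pair{Y^n(P)}{n\in\mathbb{N}}$ is an increasing chain of elements of $\adhocHyperSemantics\sqb{\texttt{while(B) S}}^{\sharp}_{e}\mathcal{P}$, since $Y^{n+1}(P)$ merely adjoins one further term to the join $Y^n(P)$, and that its limit is $\mathop{{\bigsqcup}^{\sharp}_{+}}\limits_{n\in\mathbb{N}}Y^n(P)=\mathop{{\bigsqcup}^{\sharp}_{+}}\limits_{i\in\mathbb{N}}\textsf{post}^{\sharp}\sqb{\texttt{$\neg$B}}_{e}^{\sharp}((\textsf{post}^{\sharp}\sqb{\texttt{B;S}}_{e}^{\sharp})^i P)$, which by lemma \ref{lem:charaterization:psot:while:e} equals $\textsf{post}^\sharp\sqb{\texttt{while (B) S}}^{\sharp}_{e}P$. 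By the definition of $\alpha^{\uparrow}$ (the $\sqsubseteq$-dual of the chain limit abstraction (\ref{eq:def:alpha:downarrow})), this limit therefore lies in $\alpha^{\uparrow}(\adhocHyperSemantics\sqb{\texttt{while(B) S}}^{\sharp}_{e}\mathcal{P})$. Since this holds for every $P\in\mathcal{P}$ and, by the first paragraph, every element of $\Hpost\sqb{\texttt{while(B) S}}^{\sharp}_{e}\mathcal{P}$ arises as such a $\textsf{post}^\sharp\sqb{\texttt{while (B) S}}^{\sharp}_{e}P$, the required inclusion follows.

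The step I expect to be the main obstacle is the simplification $\textsf{post}^\sharp\sqb{\texttt{$\neg$B}}^{\sharp}_{e}\,X^n(P)=Y^n(P)$ in the second paragraph: one must apply the idempotence and annihilation hypotheses on the tests consistently, confirming that composing $\textsf{post}^\sharp\sqb{\texttt{$\neg$B}}^{\sharp}_{e}$ with the trailing $\neg\texttt{B}$ tests in $X^n(P)$ collapses the duplicated test while leaving the chain of $\texttt{B;S}$ powers intact, and then to match the resulting limit exactly against the characterisation (\ref{eq:charaterization:psot:while:e}). The chain-increasing and limit-matching observations are thereafter routine, the only semantic content being that $\alpha^{\uparrow}$ adds precisely the missing limits of increasing chains that $\adhocHyperSemantics$ fails to collect.
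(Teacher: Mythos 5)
Your proof is correct and follows essentially the same route as the paper's: both sides are characterized via the iterates $X^{n}(P)$, each element $\textsf{post}^\sharp\sqb{\texttt{while (B) S}}^{\sharp}_{e}P$ of the left-hand side is exhibited as the least upper bound of the increasing chain $\pair{\textsf{post}^\sharp\sqb{\texttt{$\neg$B}}^{\sharp}_{e}X^{n}(P)}{n\in\mathbb{N}}$ of elements of the weak hypercollecting semantics (using lemma \ref{lem:charaterization:psot:while:e}, theorem \ref{th:characterization:adhocHyperSemantics}, and join preservation), and the conclusion follows from the dual of (\ref{eq:def:alpha:downarrow}). The only divergence is that you show this chain is increasing by explicitly computing $\textsf{post}^\sharp\sqb{\texttt{$\neg$B}}^{\sharp}_{e}X^{n}(P)$ as the partial join $Y^{n}(P)$ via the test-idempotence hypotheses, whereas the paper gets the same fact directly from the monotonicity of the iterates and of $\textsf{post}^\sharp$; your extra computation is sound but not needed.
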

\begin{proof}[Proof of lemma \ref{lem:hypercollecting-semantics-is-sound}]
\begin{calculus}[$\subseteq$\ \ ]
\formula{\Hpost\sqb{\texttt{while(B) S}}^{\sharp}_{e}\mathcal{P}}\\
=
\formulaexplanation{\{\textsf{\textup{post}}^\sharp\sqb{\texttt{while (B) S}}^{\sharp}_{e}P\mid P\in\mathcal{P}\}}{(\ref{eq:def:Post})}\\
=
\formulaexplanation{\{{\textsf{\textup{post}}^\sharp\sqb{\texttt{$\neg$B}}^{\sharp}_{e}\, X^{\omega}(P)}\mid  P\in\mathcal{P}\}}{(\ref{eq:charaterization:psot:while:e})}\\
=
\formulaexplanation{\{\textsf{\textup{post}}^\sharp\sqb{\texttt{$\neg$B}}^{\sharp}_{e}\,(\mathop{{\bigsqcup}^{\sharp}_{+}}\limits_{n\in\mathbb{N}}X^{n}(P))\mid P\in\mathcal{P}\}}{def.\ (\ref{eq:def:ule:exists:forall:generalized:X:n}) of $X^{\omega}$}\\
=
\formulaexplanation{\{\mathop{{\bigsqcup}^{\sharp}_{+}}\limits_{n\in\mathbb{N}}\textsf{\textup{post}}^\sharp\sqb{\texttt{$\neg$B}}^{\sharp}_{e}\,X^{n}(P)\mid P\in\mathcal{P}\}}{join preservation \ref{def:abstract:domain:well:def:join:additive}}\\
=
\formulaexplanation{\{\mathop{{\bigsqcup}^{\sharp}_{+}}\{\textsf{\textup{post}}^\sharp\sqb{\texttt{$\neg$B}}^{\sharp}_{e}\,X^{n}(P)\mid {n\in\mathbb{N}}\}\mid P\in\mathcal{P}\}}{def.\ of $\mathop{{\bigsqcup}^{\sharp}_{+}}$}\\
$\subseteq$
\formula{\alpha^{\uparrow}(\{\textsf{post}^\sharp\sqb{\texttt{$\neg$}B}^\sharp_e\,X^{n}(P)\mid P\in\mathcal{P}\wedge n\in\mathbb{N}\})}\\
\explanation{Any $\mathop{{\bigsqcup}^{\sharp}_{+}}\{\textsf{\textup{post}}^\sharp\sqb{\texttt{$\neg$B}}^{\sharp}_{e}\,X^{n}(P)\mid {n\in\mathbb{N}}\}$ is the least upper bound of the increasing chain 
$\pair{\textsf{\textup{post}}^\sharp\sqb{\texttt{$\neg$B}}^{\sharp}_{e}\,X^{n}(P)}{{n\in\mathbb{N}}}$ of
$\{\textsf{post}^\sharp\sqb{\texttt{$\neg$}B}^\sharp_e\,X^{n}(P)\mid P\in\mathcal{P}\wedge n\in\mathbb{N}\}$ which, by the
dual def.\ (\ref{eq:def:alpha:downarrow}) of $\alpha^{\uparrow}$, belongs to $\alpha^{\uparrow}(\{\textsf{post}^\sharp\sqb{\texttt{$\neg$}B}^\sharp_e\,X^{n}(P)\mid P\in\mathcal{P}\wedge n\in\mathbb{N}\})$}\\[0.5ex]
=
\formulaexplanation{\alpha^{\uparrow}(\Hpost\sqb{\texttt{$\neg$}B}^\sharp_e(\{X^{n}(P)\mid P\in\mathcal{P}\wedge n\in\mathbb{N}\})}{(\ref{eq:def:Post})}\\
=
\formulaexplanation{\alpha^{\uparrow}(\Hpost\sqb{\texttt{$\neg$}B}^\sharp_e(\Lfp{\subseteq}\LAMBDA{\mathcal{X}}{\mathcal{P} \cup \adhocHyperSemantics\sqb{\texttt{if(B) S else skip}}^{\sharp}_{e}(\mathcal{X})}))}{(\ref{eq:characterization:adhocHyperSemantics}}\\
=
\lastformulaexplanation{\alpha^{\uparrow}(\adhocHyperSemantics\sqb{\texttt{while(B) S}}^{\sharp}_{e}\mathcal{P})}{(\ref{eq:def:overline:Post:while}}{\mbox{\qed}}
\end{calculus}
\let\qed\relax
\end{proof}
\end{toappendix}
\ifshort 

Therefore \else Notice that \fi  $\adhocHyperSemantics\sqb{\texttt{while(B) S}}^{\sharp}_{e}\mathcal{P}$ may contain chains
$\textsf{post}^\sharp\sqb{\texttt{$\neg$}B}^\sharp_e\,X^{n}(P_0)$ $\mathrel{{\sqsubseteq}^{\sharp}_{+}}$ $
\textsf{post}^\sharp\sqb{\texttt{$\neg$}B}^\sharp_e\,X^{n}(P_1)$ $\mathrel{{\sqsubseteq}^{\sharp}_{+}}\ldots
$ $\mathrel{{\sqsubseteq}^{\sharp}_{+}}$ $\textsf{post}^\sharp\sqb{\texttt{$\neg$}B}^\sharp_e\,X^{n}(P_k)$ $\mathrel{{\sqsubseteq}^{\sharp}_{+}}$ $\ldots$ which limit will be in
$\alpha^{\uparrow}(\adhocHyperSemantics\sqb{\texttt{while(B) S}}^{\sharp}_{e}\mathcal{P})$ but not necessarily in $ \Hpost\sqb{\texttt{while(B) S}}^{\sharp}_{e}\mathcal{P}$. It follows that $\adhocHyperSemantics\sqb{\texttt{while(B) S}}^{\sharp}_{e}$ may miss limits but also may introduce chains with irrelevant limits of infeasible executions (which invalidates \cite[theorem 1]{DBLP:conf/popl/AssafNSTT17} soundness claim).

The following theorem shows the soundness and  completeness of rule (\ref{eq:rule:exists:forall:generalized})
for the abstract hypercollecting semantics $\adhocHyperSemantics\sqb{\texttt{while(B) S}}^{\sharp}_{e}$\ifshort\else, which by (\ref{eq:characterization:adhocHyperSemantics}),\fi\ requires the consequent $\mathcal{Q}$ to contain the post condition of any number of iterations for any element $P$ of the antecedent $\mathcal{P}$.
\begin{theorem}
\label{th:loop-invariants-sound-and-complete-hypercollecting-semantics}\proofinapx\quad The proof rule
\textup{(\ref{eq:rule:exists:forall:generalized})} is sound and complete relative to \textup{(\ref{eq:def:overline:Post:while})}.
\end{theorem}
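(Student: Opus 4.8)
The plan is to prove the two directions separately, exploiting the explicit characterization of the abstract hypercollecting semantics furnished by theorem \ref{th:characterization:adhocHyperSemantics}. The key observation is that, by (\ref{eq:def:overline:Post:while}) and (\ref{eq:characterization:adhocHyperSemantics}), we have
\[
\adhocHyperSemantics\sqb{\texttt{while(B) S}}^{\sharp}_{e}\mathcal{P}
=
\Hpost\sqb{\texttt{$\neg$}B}^\sharp_e(\{X^{n}(P)\mid P\in\mathcal{P}\wedge n\in\mathbb{N}\})
=
\{\textsf{post}^\sharp\sqb{\texttt{$\neg$B}}^{\sharp}_e\,X^{n}(P)\mid P\in\mathcal{P}\wedge n\in\mathbb{N}\},
\]
so that soundness and completeness relative to (\ref{eq:def:overline:Post:while}) amount to showing that the premise of rule (\ref{eq:rule:exists:forall:generalized}) holds if and only if $\adhocHyperSemantics\sqb{\texttt{while(B) S}}^{\sharp}_{e}\mathcal{P}\subseteq\mathcal{Q}$, i.e. $\forall P\in\mathcal{P}\mathrel{.}\forall n\in\mathbb{N}\mathrel{.}\textsf{post}^\sharp\sqb{\texttt{$\neg$B}}^{\sharp}_e\,X^{n}(P)\in\mathcal{Q}$. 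The definition of the abstract logic triples (\ref{eq:def:abstract:logical:triples}) is what lets me pass between the inclusion $\overline{\llbrace}\,\mathcal{P}\,\overline{\rrbrace}\,\texttt{while (B) S}\,\overline{\llbrace}\,\mathcal{Q}\,\overline{\rrbrace}$ and the elementwise membership statement.

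For soundness (the premise implies the triple relative to (\ref{eq:def:overline:Post:while})), I would essentially re-run the argument of theorem \ref{th:rule:exists:forall:generalized:sound}, but this time tracking the iterates $X^n(P)$ directly rather than passing to the limit $X^\omega(P)$. The premise $\mathcal{P}\subseteq\mathcal{I}$ together with $\overline{\llbrace}\,\mathcal{I}\,\overline{\rrbrace}\,\texttt{if (B) else skip}\,\overline{\llbrace}\,\mathcal{I}\,\overline{\rrbrace}$ gives, by (\ref{eq:def:abstract:logical:triples}) and the iterate definition (\ref{eq:def:ule:exists:forall:generalized:X:n}), that $\forall n\in\mathbb{N}\mathrel{.}X^n(P)\in\mathcal{I}$ by recurrence. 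Then the third premise $\overline{\llbrace}\,\mathcal{I}\,\overline{\rrbrace}\,\texttt{$\neg$B}\,\overline{\rrbrace}\,\texttt{}\,\overline{\llbrace}\,\mathcal{Q}\,\overline{\rrbrace}$ yields $\textsf{post}^\sharp\sqb{\texttt{$\neg$B}}^{\sharp}_e\,X^n(P)\in\mathcal{Q}$ for every $n$ and every $P\in\mathcal{P}$, which is exactly $\adhocHyperSemantics\sqb{\texttt{while(B) S}}^{\sharp}_{e}\mathcal{P}\subseteq\mathcal{Q}$. Crucially, no appeal to the chain-limit closure of $\mathcal{Q}$ is needed here, since we only require membership of the individual iterates, not of their limit.

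For completeness (the triple relative to (\ref{eq:def:overline:Post:while}) implies the premise), I would construct the witnessing invariant explicitly as
\[
\mathcal{I}\triangleq\{X^{n}(P)\mid P\in\mathcal{P}\wedge n\in\mathbb{N}\},
\]
i.e. the least fixpoint appearing inside (\ref{eq:def:overline:Post:while}). Then $\mathcal{P}\subseteq\mathcal{I}$ holds by taking $n=0$, since $X^0(P)=P$. The invariance condition $\overline{\llbrace}\,\mathcal{I}\,\overline{\rrbrace}\,\texttt{if (B) else skip}\,\overline{\llbrace}\,\mathcal{I}\,\overline{\rrbrace}$ follows because $\textsf{post}^\sharp\sqb{\texttt{if (B) S else skip}}^{\sharp}_e\,X^{n}(P)=X^{n+1}(P)\in\mathcal{I}$ by (\ref{eq:def:ule:exists:forall:generalized:X:n}). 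Finally, the exit condition $\overline{\llbrace}\,\mathcal{I}\,\overline{\rrbrace}\,\texttt{$\neg$B}\,\overline{\llbrace}\,\mathcal{Q}\,\overline{\rrbrace}$ is equivalent, by (\ref{eq:def:abstract:logical:triples}), to $\forall P\in\mathcal{P}\mathrel{.}\forall n\in\mathbb{N}\mathrel{.}\textsf{post}^\sharp\sqb{\texttt{$\neg$B}}^{\sharp}_e\,X^{n}(P)\in\mathcal{Q}$, which is precisely the hypothesis $\adhocHyperSemantics\sqb{\texttt{while(B) S}}^{\sharp}_{e}\mathcal{P}\subseteq\mathcal{Q}$ unfolded via the characterization above.

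The main obstacle I anticipate is bookkeeping the precise relationship between the three distinct ``invariant'' roles and the weak semantics: the invariant must simultaneously contain $\mathcal{P}$, be closed under the conditional body, and have its $\texttt{$\neg$B}$-image land in $\mathcal{Q}$, and one must verify that the \emph{canonical} choice $\mathcal{I}=\Lfp{\subseteq}\LAMBDA{\mathcal{X}}\mathcal{P}\cup\adhocHyperSemantics\sqb{\texttt{if(B) S else skip}}^{\sharp}_{e}(\mathcal{X})$ is not merely sufficient but necessary, so that completeness is genuine rather than circular. The subtlety, already flagged in the discussion following lemma \ref{lem:hypercollecting-semantics-is-sound}, is that the weak semantics (\ref{eq:def:overline:Post:while}) deliberately omits the chain limits present in the exact $\Hpost\sqb{\texttt{while(B) S}}^{\sharp}_{e}$; the completeness claim is therefore \emph{relative} to this weaker semantics, and I must be careful to phrase every equivalence against $\adhocHyperSemantics$ and never accidentally invoke the limit $X^\omega(P)$, whose membership in $\mathcal{Q}$ is exactly the information that rule (\ref{eq:rule:exists:forall:generalized}) cannot establish in general. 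Keeping the two semantics rigidly separated throughout the calculation is the crux.
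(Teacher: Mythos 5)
Your proposal is correct and follows essentially the same route as the paper: the completeness direction uses exactly the paper's witnessing invariant $\mathcal{I}=\{X^{n}(P)\mid P\in\mathcal{P}\wedge n\in\mathbb{N}\}$ together with the characterization of theorem \ref{th:characterization:adhocHyperSemantics}, and the soundness direction re-runs the argument of theorem \ref{th:rule:exists:forall:generalized:sound} at the level of the individual iterates. Your observation that soundness relative to (\ref{eq:def:overline:Post:while}) needs no appeal to the chain-limit closure of $\mathcal{Q}$ (since the weak semantics omits the limit $X^{\omega}(P)$) is accurate and slightly sharpens the paper's terse ``similar to'' remark.
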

\begin{toappendix}
\begin{proof}[Proof of theorem \ref{th:loop-invariants-sound-and-complete-hypercollecting-semantics}]
\hyphen{5}\quad The proof of soundness is similar to that of theorem \ref{th:rule:exists:forall:generalized:sound}.

\hyphen{5}\quad For completeness, let 
\begin{eqntabular}{rcl}
\mathcal{I}&\triangleq&\{X^n(P)\mid P\in\mathcal{P}\wedge n\in\mathbb{N}\}
\label{eq:def:completeness:invariant}
\end{eqntabular}
\hyphen{6}\quad The condition $\mathcal{P}\subseteq\mathcal{I}$ of the premise holds for $n=0$;

\smallskip

\hyphen{6}\quad The second condition $\overline{\llbrace}\,\mathcal{I}\,\overline{\rrbrace}\,\texttt{if (B) else skip}\,\overline{\llbrace}\,\mathcal{I}\,\overline{\rrbrace}$ of the premise is equivalent, by (\ref{eq:def:abstract:logical:triples}), to
$\forall  I\in\mathcal{I}\mathrel{.}\textsf{post}^\sharp\sqb{\texttt{if (B) else skip}}^\sharp I\in\mathcal{I}$, which holds by definition (\ref{eq:def:ule:exists:forall:generalized:X:n}) of the iterates.

\smallskip

\hyphen{6}\quad The last condition $\overline{\llbrace}\,\mathcal{I}\,\overline{\rrbrace}\,\texttt{$\neg$B}\,\overline{\llbrace}\,\mathcal{Q}\,\overline{\rrbrace}$ of the premise follows from the hypothesis provided by the conclusion of the rule (\ref{eq:rule:exists:forall:generalized}).
\begin{calculus}[$\Rightarrow$\ \ ]
\formula{\adhocHyperSemantics\sqb{\texttt{while(B) S}}^{\sharp}_{e}(\mathcal{P})\subseteq\mathcal{Q}}\\
$\Rightarrow$
\formula{\Hpost\sqb{\texttt{$\neg$}B}^\sharp_e(\Lfp{\subseteq}\LAMBDA{\mathcal{X}}{\mathcal{P} \cup \adhocHyperSemantics\sqb{\texttt{if(B) S else skip}}^{\sharp}_{e}(\mathcal{X})}) 
\subseteq\mathcal{Q}}\\\rightexplanation{def.\ (\ref{eq:def:overline:Post:while}) of $\adhocHyperSemantics\sqb{\texttt{while(B) S}}^{\sharp}_{e}\mathcal{P} $}\\
$\Rightarrow$
\formulaexplanation{\Hpost\sqb{\texttt{$\neg$}B}^\sharp_e(\{X^{n}(P)\mid P\in\mathcal{P}\wedge n\in\mathbb{N}\}) 
\subseteq\mathcal{Q}}{lemma \ref{eq:characterization:adhocHyperSemantics}}\\
$\Rightarrow$
\formulaexplanation{\{\textsf{post}^{\sharp}\sqb{\texttt{$\neg$}B}^\sharp_e(X^{n}(P))\mid P\in\mathcal{P}\wedge n\in\mathbb{N}\} 
\subseteq\mathcal{Q}}{(\ref{eq:def:Post})}\\
$\Rightarrow$
\formulaexplanation{\forall P\in\mathcal{P},n\in\mathbb{N}\mathrel{.} \textsf{post}^{\sharp}\sqb{\texttt{$\neg$}B}^\sharp_e(X^{n}(P))\in\mathcal{Q}}{def.\ $\subseteq$}\\
$\Rightarrow$
\formulaexplanation{\forall  P\in\mathcal{I}\!\mathrel{.}\textsf{post}^\sharp\sqb{\texttt{$\neg$B}}^\sharp P\in\mathcal{Q}}{def.\ (\ref{eq:def:completeness:invariant}) of $\mathcal{I}$}\\
$\Rightarrow$
\lastformulaexplanation{\overline{\llbrace}\,\mathcal{I}\,\overline{\rrbrace}\,\texttt{$\neg$B}\,\overline{\llbrace}\,\mathcal{Q}\,\overline{\rrbrace}}{(\ref{eq:def:abstract:logical:triples})}{\mbox{\qed}}
\end{calculus}
\let\qed\relax
\end{proof}
\end{toappendix}
Theorem \ref{th:loop-invariants-sound-and-complete-hypercollecting-semantics} illustrates the importance of the proper choice of the collecting semantics since proof rule (\ref{eq:rule:exists:forall:generalized}) is unsound if $\mathcal{Q}\not\in{\maccent{\alpha}{\ast}}^{{\uparrow}}(\wp(\mathbb{L}^{\sharp}))$ and is complete for collecting semantics (\ref{eq:def:overline:Post:while}) but not with respect to collecting semantics (\ref{eq:Post:abstract:while}) hence not for the algebraic semantics of section \ref{sect:Algebraic-Semantics}. 

By deriving the collecting semantics \textsf{post} for execution properties and hypercollecting semantics \textsf{Post} for semantic properties by systematic abstraction of the algebraic semantics of section \ref{sect:Algebraic-Semantics}, we guarantee, by composition of successive abstractions satisfying definition \ref{def:exact:abstraction}, that the proof rules for these abstractions are sound with respect to any instance of the algebraic semantics satisfying definition \ref{def:abstract:domain:well:def}. Moreover, the proof rules are guaranteed to be complete with respect to these abstract properties, by construction.

\section{Sound and Complete Proof Rules for Generalized Exists Forall Hyperproperties}\label{GeneralizedExistsForall}
 In section \ref{sec:ConjunctiveAbstraction} of the appendix\proofinapx, we furthermore introduce conjunctive abstractions (i.e.\ conjunctions in logics or reduced products in static analysis).
\begin{toappendix}
\renewcommand{\Hpost}[1]{{\textup{\textsf{Post}}\sqb{\texttt{#1}}^{\sharp}}}

\subsection{Conjunctive Abstraction}\label{sec:ConjunctiveAbstraction}
In this \hyperlink{PARTIII}{part III}, we have introduced abstractions and their compositions.
We now consider their conjunctions by intersection. In static analysis with two different abstract domains this would correspond to a reduced product.

\subsubsection{Conjunctive Abstractions of Dual Operators}
We define the conjunction of abstractions introduced in previous sections of \hyperlink{PARTIII}{part III}.
\begin{definition}[Dual abstractions]
    \begin{eqntabular}{rcl}
        \mathbb{Op}^\sqsubseteq &=& \{\alpha^\sqsubseteq,\alpha^{\sqsubseteq\overline{F}}, {\maccent{\alpha}{\ast}}^{\sqsubseteq\uparrow},\alpha^{\curlywedgedownarrow}\}\\
        \mathbb{Op}^\sqsupseteq &=& \{\alpha^\sqsupseteq,\alpha^{\sqsupseteq\underline{F}}, {\maccent{\alpha}{\ast}}^{\sqsupseteq\downarrow},\alpha^{\curlyveeuparrow}\}
    \end{eqntabular}
\end{definition}

The conjunctive abstraction operator $\mathbf{R}$ takes two idempotent abstraction $\alpha_1 \in \mathbb{Op}^\sqsubseteq$ and $\alpha_2 \in \mathbb{Op}^\sqsupseteq$ and returns a new abstraction function that abstracts property $\mathcal{P}$ to the intersection of $\alpha_1(\mathcal{P})$ and $\alpha_2(\mathcal{P})$.
\begin{eqntabular}{rcl}
    \textbf{R}_{\pair{\alpha_1}{\alpha_2}}\triangleq \LAMBDA{\mathcal{P}}{\alpha_1(\mathcal{P}) \cap \alpha_2(\mathcal{P})}
\end{eqntabular}
\begin{lemma}[Properties of the well-defined conjunctive abstraction]
    For any $\alpha_1 \in \mathbb{Op}^\sqsubseteq$, $\alpha_2 \in \mathbb{Op}^\sqsupseteq$, and $\mathcal{P}\in\textbf{R}_{\pair{\alpha_1}{\alpha_2}}(\wp(\mathbb{L}))$,  we have \textup{(1)} $\alpha_1(\wp(\mathbb{L}))\subseteq \alpha^\sqsubseteq(\wp(\mathbb{L}))$ and $\alpha_2(\wp(\mathbb{L}))\subseteq \alpha^\sqsupseteq(\wp(\mathbb{L}))$; \textup{(2)} $\alpha^\sqsubseteq(\mathcal{P})\cap \alpha^\sqsupseteq(\mathcal{P}) = \mathcal{P}$; and \textup{(3)} if both $\alpha_1$ and $\alpha_2$ are upper-closures, then $\textbf{R}_{\pair{\alpha_1}{\alpha_2}}$ is also an upper-closure.
\label{def:well-defined-conjunction-abstraction}
\end{lemma}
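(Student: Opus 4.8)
The plan is to establish the three assertions in the order stated, with assertion (1) carrying the structural content and (2)--(3) following from it together with the closure properties of the base abstractions $\alpha^{\sqsubseteq}$ and $\alpha^{\sqsupseteq}$. Throughout I would use that $\alpha^{\sqsubseteq}(\wp(\mathbb{L}))$ is exactly the collection of order-ideal (downward closed) sets, that $\alpha^{\sqsubseteq}$ is an upper closure operator (so it fixes every element of its own image), and dually for $\alpha^{\sqsupseteq}$.

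For (1) I would show that each member of $\mathbb{Op}^{\sqsubseteq}$ has image inside $\alpha^{\sqsubseteq}(\wp(\mathbb{L}))$, the dual statement following by order duality. Three of the four cases are immediate: $\alpha^{\sqsubseteq}$ trivially; $\alpha^{\curlywedgedownarrow}$ via the already-noted identity $\alpha^{\curlywedgedownarrow}(\mathcal{P})=\alpha^{\sqsubseteq}(\{\bigsqcup\mathcal{P}\})$; and $\alpha^{{\sqsubseteq}\overline{F}}=\alpha^{\sqsubseteq}\comp\alpha^{\overline{F}}$, whose outermost operation is $\alpha^{\sqsubseteq}$, so its value lies in the image by construction. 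The delicate case is ${\maccent{\alpha}{\ast}}^{{\sqsubseteq}{\uparrow}}$, where the value is the least fixpoint $Y=\Lfp{\subseteq}\LAMBDA{X}\mathcal{P}\cup\alpha^{{\sqsubseteq}{\uparrow}}(X)$ and $\alpha^{{\sqsubseteq}{\uparrow}}=\alpha^{\sqsubseteq}\comp\alpha^{\uparrow}$ is \emph{not} idempotent (counterexample \ref{cex:hat:alpha:sqsubseteq:uparrow}), so one cannot read off order-ideal-closedness from a single application. I would instead argue by absorption: $\alpha^{{\sqsubseteq}{\uparrow}}$ is extensive (both factors are, using constant chains for $\alpha^{\uparrow}$), hence $\mathcal{P}\subseteq Y\subseteq\alpha^{{\sqsubseteq}{\uparrow}}(Y)$, whence the union defining the fixpoint collapses to $Y=\alpha^{{\sqsubseteq}{\uparrow}}(Y)=\alpha^{\sqsubseteq}(\alpha^{\uparrow}(Y))$, which is order-ideal closed. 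This fixpoint-absorption step is the main obstacle; the rest is routine.

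For (2), since $\mathcal{P}\in\textbf{R}_{\pair{\alpha_1}{\alpha_2}}(\wp(\mathbb{L}))$ I would write $\mathcal{P}=\alpha_1(\mathcal{Q})\cap\alpha_2(\mathcal{Q})$ for some $\mathcal{Q}$. By (1), $\alpha_1(\mathcal{Q})$ is downward closed and $\alpha_2(\mathcal{Q})$ is upward closed. Because $\mathcal{P}\subseteq\alpha_1(\mathcal{Q})$ and $\alpha^{\sqsubseteq}$ is increasing and idempotent, $\alpha^{\sqsubseteq}(\mathcal{P})\subseteq\alpha^{\sqsubseteq}(\alpha_1(\mathcal{Q}))=\alpha_1(\mathcal{Q})$, and dually $\alpha^{\sqsupseteq}(\mathcal{P})\subseteq\alpha_2(\mathcal{Q})$; intersecting gives $\alpha^{\sqsubseteq}(\mathcal{P})\cap\alpha^{\sqsupseteq}(\mathcal{P})\subseteq\mathcal{P}$, while extensivity of $\alpha^{\sqsubseteq}$ and $\alpha^{\sqsupseteq}$ yields the reverse inclusion, so equality. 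This step is where assertion (1) is genuinely needed.

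For (3), assuming $\alpha_1,\alpha_2$ are upper closures, increasingness and extensivity of $\textbf{R}_{\pair{\alpha_1}{\alpha_2}}=\LAMBDA{\mathcal{P}}\alpha_1(\mathcal{P})\cap\alpha_2(\mathcal{P})$ are immediate from the corresponding properties of $\alpha_1,\alpha_2$ and monotonicity of $\cap$. For idempotence I would set $\mathcal{R}=\textbf{R}_{\pair{\alpha_1}{\alpha_2}}(\mathcal{P})\subseteq\alpha_i(\mathcal{P})$ for $i=1,2$, so that $\alpha_i(\mathcal{R})\subseteq\alpha_i(\alpha_i(\mathcal{P}))=\alpha_i(\mathcal{P})$ using increasingness and idempotence of each $\alpha_i$ (this is exactly where the upper-closure hypothesis is consumed, ruling out the non-idempotent-or-non-extensive operators $\alpha^{{\sqsubseteq}\overline{F}}$ and $\alpha^{{\sqsupseteq}\underline{F}}$); then $\textbf{R}_{\pair{\alpha_1}{\alpha_2}}(\mathcal{R})=\alpha_1(\mathcal{R})\cap\alpha_2(\mathcal{R})\subseteq\mathcal{R}$, and the reverse inclusion holds by extensivity, giving idempotence.
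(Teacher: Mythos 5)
Your proof is correct, and its overall skeleton matches the paper's: establish (1) per operator, use (1) to get (2), and prove (3) from increasingness, extensivity, and idempotence of $\alpha_1,\alpha_2$. The differences are local but worth noting. For (1) the paper merely asserts that it "directly follows from the definitions," whereas you actually verify the only non-obvious case, ${\maccent{\alpha}{\ast}}^{{\sqsubseteq}{\uparrow}}$, by the fixpoint-absorption argument $Y=\mathcal{P}\cup\alpha^{{\sqsubseteq}{\uparrow}}(Y)=\alpha^{{\sqsubseteq}{\uparrow}}(Y)=\alpha^{\sqsubseteq}(\alpha^{\uparrow}(Y))$; this is exactly the device the paper itself uses in lemma \ref{lem:alpha:downarrow:fixpoint}, and your caution about non-idempotence of $\alpha^{{\sqsubseteq}{\uparrow}}$ (counterexample \ref{cex:hat:alpha:sqsubseteq:uparrow}) is well placed. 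For (2) your computation is genuinely different and, I would say, cleaner: you bound $\alpha^{\sqsubseteq}(\mathcal{P})\subseteq\alpha^{\sqsubseteq}(\alpha_1(\mathcal{Q}))=\alpha_1(\mathcal{Q})$ using only monotonicity and the fact that images of $\alpha^{\sqsubseteq}$ are its fixpoints, whereas the paper distributes $\alpha^{\sqsubseteq}$ and $\alpha^{\sqsupseteq}$ over the intersection and then discards the cross terms via the claim $\alpha^{\sqsupseteq}\comp\alpha_1(\mathcal{P})=\mathbb{L}$, which requires $\bot\in\alpha_1(\mathcal{P})$, the dual requires a top element, and the empty case must be special-cased; your route avoids all of that. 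For (3) the two arguments are essentially identical (the paper's version again detours through the $=\mathbb{L}$ cross-term elimination, which your factored form $\alpha_i(\mathcal{R})\subseteq\alpha_i(\alpha_i(\mathcal{P}))=\alpha_i(\mathcal{P})$ renders unnecessary). One cosmetic remark: (3) does not actually depend on (1), only on the upper-closure hypothesis, so your opening framing slightly overstates the role of (1); your argument itself uses exactly the right hypotheses.
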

\begin{proof}[Proof of lemma \ref{def:well-defined-conjunction-abstraction}]
    (1) directly follows from the definitions. Let us then prove (2).
    For an arbitrary hyperproperty $\mathcal{Q}\in \textbf{R}_{\pair{\alpha_1}{\alpha_2}}$, we have $\mathcal{Q} =  \textbf{R}_{\pair{\alpha_1}{\alpha_2}}(\mathcal{P})$ for some $\mathcal{P}\in\mathbb{L}$. It follows that
\begin{calculus}[=\ \  ]
   \phantom{$=$} $\alpha^\sqsubseteq(\mathcal{Q})\cap \alpha^\sqsupseteq(\mathcal{Q})$\\
    $=$\formulaexplanation{\alpha^\sqsubseteq(\alpha_1(\mathcal{P}) \cap \alpha_2(\mathcal{P})) \cap\alpha^\sqsupseteq(\alpha_1(\mathcal{P}) \cap \alpha_2(\mathcal{P})) }{def.\ of $\textbf{R}_{\pair{\alpha_1}{\alpha_2}}$}\\
        $\subseteq$\formulaexplanation{\alpha^\sqsubseteq\circ\alpha_1(\mathcal{P})\cap \alpha^\sqsubseteq\circ\alpha_2(\mathcal{P}) \cap\alpha^\sqsupseteq\circ\alpha_2(\mathcal{P}) \cap \alpha^\sqsupseteq\circ\alpha_1(\mathcal{P})}{def.\ of $\alpha^\sqsubseteq$ and $\alpha^\sqsupseteq$ that are increasing}\\
        $=$\formula{\alpha^\sqsubseteq\circ\alpha_1(\mathcal{P}) \cap \alpha^\sqsupseteq \circ\alpha_2(\mathcal{P})}\\ \explanation{$\alpha^\sqsupseteq\circ\alpha_1 (\mathcal{P})= \mathbb{L}$ for non-empty $\alpha_1 (\mathcal{P})$ since $\{\bot\} \in\alpha_1(\mathcal{P}) $. The equation trivially holds when $\alpha_1(\mathcal{P}) = \emptyset$}\\
        $=$\formulaexplanation{\alpha_1(\mathcal{P}) \cap \alpha_2(\mathcal{P})}{def.\ of $\alpha_1(\mathcal{P}) \in \alpha^\sqsubseteq(\wp(\mathbb{L}))$ and $\alpha_2(\mathcal{P}) \in \alpha^\sqsupseteq(\wp(\mathbb{L}))$}\\
        $=$ \formulaexplanation{\textbf{R}_{\pair{\alpha_1}{\alpha_2}}(\mathcal{P}) = \mathcal{Q}}{def.\ of $\textbf{R}_{\pair{\alpha_1}{\alpha_2}}$}
\end{calculus}
The inverse holds because $\alpha^\sqsubseteq \mathbin{\dot{\cap}} \alpha^\sqsupseteq$ is extensive.
Then we have $\alpha^\sqsubseteq(\mathcal{P})\cap \alpha^\sqsupseteq(\mathcal{P}) = \mathcal{P}$.

Now let us now prove (3). $\mathbf{R}_{\pair{\alpha_1}{\alpha_2}}$ is increasing and extensive by definition when $\alpha_1$ and $\alpha_2$ are increasing and extensive. We now prove that it is idempotent, which amounts to showing that $\textbf{R}_{\pair{\alpha_1}{\alpha_2}}(\mathcal{P}) \subseteq \textbf{R}_{\pair{\alpha_1}{\alpha_2}}\circ \textbf{R}_{\pair{\alpha_1}{\alpha_2}}(\mathcal{P})$ for any $\mathcal{P}\in\wp(\mathbb{L})$.
\begin{calculus}[=\ \ ]
\formula{\textbf{R}_{\pair{\alpha_1}{\alpha_2}}\circ \textbf{R}_{\pair{\alpha_1}{\alpha_2}}(\mathcal{P})}\\
=
\formulaexplanation{\textbf{R}_{\pair{\alpha_1}{\alpha_2}}(\alpha_1(\mathcal{P}) \cap \alpha_2(\mathcal{P} ))}{def.\ of $\textbf{R}_{\pair{\alpha_1}{\alpha_2}}$}\\
=
\formulaexplanation{\alpha_1(\alpha_1(\mathcal{P}) \cap \alpha_2(\mathcal{P})) \cap\alpha_2(\alpha_1(\mathcal{P}) \cap \alpha_2(\mathcal{P})) }{def.\ of $\textbf{R}_{\pair{\alpha_1}{\alpha_2}}$}\\
$\subseteq$
\formulaexplanation{\alpha_1\circ\alpha_1(\mathcal{P})\cap \alpha_1\circ\alpha_2(\mathcal{P}) \cap\alpha_2\circ\alpha_2(\mathcal{P}) \cap \alpha_2\circ\alpha_1(\mathcal{P})}{def.\ of $\alpha_1$ and $\alpha_2$ that are increasing}\\
=
\formula{\alpha_1\circ\alpha_1(\mathcal{P}) \cap \alpha_2\circ\alpha_2(\mathcal{P})}\\ 
\explanation{$\alpha_1\circ\alpha_1(\emptyset) = \emptyset$. For non-empty $\mathcal{P}$, $\alpha_2\circ\alpha_1 (\mathcal{P})= \mathbb{L}$, since $\bot\in \alpha_1 (\mathcal{P})$. The equation trivially holds when $\alpha_1(\mathcal{P}) = \emptyset$}\\
=
\formulaexplanation{\alpha_1(\mathcal{P}) \cap \alpha_2(\mathcal{P})}{def.\ of $\alpha_1$ and $\alpha_2$ that are idempotent}\\
=
\formulaexplanation{\textbf{R}_{\pair{\alpha_1}{\alpha_2}}(\mathcal{P}) }{def.\ of $\textbf{R}_{\pair{\alpha_1}{\alpha_2}}$}
\end{calculus}

\medskip

\noindent As a result,  $\textbf{R}_{\pair{\alpha_1}{\alpha_2}}(\mathcal{P})$ is idempotent since   $\textbf{R}_{\pair{\alpha_1}{\alpha_2}}(\mathcal{P})$ is extensive that implies $\textbf{R}_{\pair{\alpha_1}{\alpha_2}}(\mathcal{P}) \supseteq \textbf{R}_{\pair{\alpha_1}{\alpha_2}}\circ \textbf{R}_{\pair{\alpha_1}{\alpha_2}}(\mathcal{P})$.
\end{proof}

The domain conjunctive abstraction $\mathbf{R}_{\pair{\alpha_1}{\alpha_2}}$ is more expressive than both both $\alpha_1$ and $\alpha_2$. 

\begin{lemma} 
\label{lem:conjunction-abstraction-more-general}
For a well-defined conjunctive abstraction ${\mathbf{R}_{\pair{\alpha_1}{\alpha_2}}}$, we have the Galois retractions
\begin{centering}
    $ \pair{\mathbf{R}_{\pair{\alpha_1}{\alpha_2}}(\wp(\mathbb{L}))}{\subseteq} \galoiS{\alpha^\sqsubseteq}{\mathbb{1}} \pair{\alpha_1(\wp(\mathbb{L}))}{\subseteq}$ and $\pair{\mathbf{R}_{\pair{\alpha_1}{\alpha_2}}(\wp(\mathbb{L}))}{\subseteq}\galoiS{\alpha^\sqsupseteq}{\mathbb{1}} \pair{\alpha_2(\wp(\mathbb{L}))}{\subseteq}$.
\end{centering}
\end{lemma}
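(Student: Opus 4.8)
The final statement, Lemma~\ref{lem:conjunction-abstraction-more-general}, asserts two Galois retractions connecting the conjunctive abstraction domain $\mathbf{R}_{\pair{\alpha_1}{\alpha_2}}(\wp(\mathbb{L}))$ to each of the component domains $\alpha_1(\wp(\mathbb{L}))$ and $\alpha_2(\wp(\mathbb{L}))$. The plan is to establish the first retraction $\pair{\mathbf{R}_{\pair{\alpha_1}{\alpha_2}}(\wp(\mathbb{L}))}{\subseteq} \galoiS{\alpha^\sqsubseteq}{\mathbb{1}} \pair{\alpha_1(\wp(\mathbb{L}))}{\subseteq}$ in full and then obtain the second by the evident $\sqsubseteq$/$\sqsupseteq$ symmetry (swapping the roles of $\alpha_1$ and $\alpha_2$, and of $\alpha^\sqsubseteq$ and $\alpha^\sqsupseteq$).

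First I would verify the Galois connection condition, namely that for $\mathcal{Q}\in\mathbf{R}_{\pair{\alpha_1}{\alpha_2}}(\wp(\mathbb{L}))$ and $\mathcal{R}\in\alpha_1(\wp(\mathbb{L}))$ we have $\alpha^\sqsubseteq(\mathcal{Q})\subseteq\mathcal{R} \Leftrightarrow \mathcal{Q}\subseteq\mathbb{1}(\mathcal{R})=\mathcal{R}$. The key ingredient is part~(2) of Lemma~\ref{def:well-defined-conjunction-abstraction}, which gives $\alpha^\sqsubseteq(\mathcal{Q})\cap\alpha^\sqsupseteq(\mathcal{Q})=\mathcal{Q}$ for $\mathcal{Q}\in\mathbf{R}_{\pair{\alpha_1}{\alpha_2}}(\wp(\mathbb{L}))$; in particular $\mathcal{Q}\subseteq\alpha^\sqsubseteq(\mathcal{Q})$, so $\alpha^\sqsubseteq$ restricted to this domain is extensive. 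For the forward direction, if $\alpha^\sqsubseteq(\mathcal{Q})\subseteq\mathcal{R}$ then $\mathcal{Q}\subseteq\alpha^\sqsubseteq(\mathcal{Q})\subseteq\mathcal{R}$ by extensivity and transitivity. For the backward direction, from $\mathcal{Q}\subseteq\mathcal{R}$ and the fact that $\alpha^\sqsubseteq$ is increasing and idempotent with $\mathcal{R}=\alpha^\sqsubseteq(\mathcal{R})$ (since $\mathcal{R}\in\alpha_1(\wp(\mathbb{L}))\subseteq\alpha^\sqsubseteq(\wp(\mathbb{L}))$ by part~(1) of Lemma~\ref{def:well-defined-conjunction-abstraction}), we get $\alpha^\sqsubseteq(\mathcal{Q})\subseteq\alpha^\sqsubseteq(\mathcal{R})=\mathcal{R}$. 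This establishes the adjunction.

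Next I would upgrade the Galois connection to a Galois \emph{retraction}, which requires that the upper adjoint $\mathbb{1}$ be injective, equivalently that $\alpha^\sqsubseteq\comp\mathbb{1}$ be the identity on $\alpha_1(\wp(\mathbb{L}))$. Since the upper adjoint is literally the identity inclusion, this reduces to checking $\alpha^\sqsubseteq(\mathcal{R})=\mathcal{R}$ for every $\mathcal{R}\in\alpha_1(\wp(\mathbb{L}))$, which again follows from part~(1) of Lemma~\ref{def:well-defined-conjunction-abstraction} (every element of $\alpha_1(\wp(\mathbb{L}))$ lies in $\alpha^\sqsubseteq(\wp(\mathbb{L}))$ and is hence $\alpha^\sqsubseteq$-closed) together with idempotence of $\alpha^\sqsubseteq$ established around~(\ref{eq:GC:alpha:sqsubseteq}). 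The second retraction is obtained verbatim by the order-dual argument, invoking the $\sqsupseteq$-analogue of part~(1) and the inclusion $\alpha_2(\wp(\mathbb{L}))\subseteq\alpha^\sqsupseteq(\wp(\mathbb{L}))$.

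The main obstacle, I expect, is a subtle directionality issue: $\alpha^\sqsubseteq$ on the full powerset is extensive, but I am applying it as a map \emph{out} of the conjunctive domain $\mathbf{R}_{\pair{\alpha_1}{\alpha_2}}(\wp(\mathbb{L}))$ \emph{into} $\alpha_1(\wp(\mathbb{L}))$, and I must confirm that $\alpha^\sqsubseteq$ actually lands inside $\alpha_1(\wp(\mathbb{L}))$ rather than merely inside $\alpha^\sqsubseteq(\wp(\mathbb{L}))$. This well-definedness of the lower adjoint is the delicate point; I would discharge it by showing $\alpha^\sqsubseteq(\mathbf{R}_{\pair{\alpha_1}{\alpha_2}}(\mathcal{P}))=\alpha_1(\mathcal{P})$, arguing as in the proof of part~(2) that $\alpha^\sqsubseteq\comp\alpha_2$ contributes the whole of $\mathbb{L}$ on the relevant component (because $\{\bot\}\in\alpha_2(\mathcal{P})$ for nonempty $\mathcal{P}$), so that $\alpha^\sqsubseteq(\alpha_1(\mathcal{P})\cap\alpha_2(\mathcal{P}))=\alpha^\sqsubseteq\comp\alpha_1(\mathcal{P})=\alpha_1(\mathcal{P})$ using idempotence and $\alpha_1(\mathcal{P})\in\alpha^\sqsubseteq(\wp(\mathbb{L}))$. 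Once this image computation is in hand, the two retractions follow cleanly.
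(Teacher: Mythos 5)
Your overall strategy is sound, and in one respect more careful than the paper's own proof: you explicitly verify the adjunction and you flag the question of whether $\alpha^\sqsubseteq$ actually maps the conjunctive domain into $\alpha_1(\wp(\mathbb{L}))$, which the paper passes over in silence. However, you omit the one step the paper's proof is entirely devoted to: showing that the upper adjoint $\mathbb{1}$ is well defined, i.e.\ that $\alpha_1(\wp(\mathbb{L}))\subseteq\mathbf{R}_{\pair{\alpha_1}{\alpha_2}}(\wp(\mathbb{L}))$. Your adjunction condition reads $\mathcal{Q}\subseteq\mathbb{1}(\mathcal{R})$ with $\mathcal{R}\in\alpha_1(\wp(\mathbb{L}))$, which only makes sense if $\mathcal{R}$ is itself an element of the concrete poset $\mathbf{R}_{\pair{\alpha_1}{\alpha_2}}(\wp(\mathbb{L}))$; this is not free. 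The paper establishes it by writing $\mathcal{P}=\alpha_1(\mathcal{Q})$ and computing $\mathcal{P}=\alpha_1\comp\alpha_1(\mathcal{Q})\cap\alpha_2\comp\alpha_1(\mathcal{Q})=\alpha_1(\mathcal{P})\cap\alpha_2(\mathcal{P})=\mathbf{R}_{\pair{\alpha_1}{\alpha_2}}(\mathcal{P})$, using idempotence of $\alpha_1$ and the fact that $\alpha_2\comp\alpha_1(\mathcal{Q})=\mathbb{L}$ for nonempty $\alpha_1(\mathcal{Q})$ because $\bot\in\alpha_1(\mathcal{Q})$. You need to add this argument.

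Second, the delicate point you correctly identified is not actually discharged as written. Your justification that $\alpha^\sqsubseteq\comp\alpha_2(\mathcal{P})=\mathbb{L}$ ``because $\{\bot\}\in\alpha_2(\mathcal{P})$'' has the dualization backwards: it is the \emph{top} element that must lie in $\alpha_2(\mathcal{P})$ for its downward closure to be all of $\mathbb{L}$ (dually to the paper's use of $\bot\in\alpha_1(\mathcal{P})$ to get $\alpha^\sqsupseteq\comp\alpha_1(\mathcal{P})=\mathbb{L}$). Moreover, monotonicity only yields $\alpha^\sqsubseteq(\alpha_1(\mathcal{P})\cap\alpha_2(\mathcal{P}))\subseteq\alpha^\sqsubseteq\comp\alpha_1(\mathcal{P})\cap\alpha^\sqsubseteq\comp\alpha_2(\mathcal{P})=\alpha_1(\mathcal{P})$, since $\alpha^\sqsubseteq$ preserves unions but not intersections; the reverse inclusion $\alpha_1(\mathcal{P})\subseteq\alpha^\sqsubseteq(\alpha_1(\mathcal{P})\cap\alpha_2(\mathcal{P}))$ requires a separate argument exhibiting, for each $P\in\alpha_1(\mathcal{P})$, an element of the intersection above $P$, and this depends on which operators $\alpha_1,\alpha_2$ are chosen. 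Both defects are repairable, but the image computation is not the one-liner your proposal suggests.
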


\begin{proof}[Proof of lemma \ref{lem:conjunction-abstraction-more-general}]
    Without any loss of generality, let us prove the first Galois connection. 

    We first show that for an arbitrary $\mathcal{P} \in\alpha_1(\wp(\mathbb{L}))$, $\mathbb{1}(\mathcal{P}) = \mathcal{P}$ is in $\mathbf{R}_{\pair{\alpha_1}{\alpha_2}}(\wp(\mathbb{L}))$. $\mathcal{P}$ can be express by $\mathcal{P} = \alpha_1(\mathcal{Q})$ for some $\mathcal{Q}\in\mathbb{L}$. If $\mathcal{P} = \emptyset$, then it's trivially in $\mathbf{R}_{\pair{\alpha_1}{\alpha_2}}$. If $\mathcal{P} \neq \emptyset$, then 
    \begin{calculus}[$\mathcal{P}$ $=$ ]
        $\mathcal{P}$ $=$ $\alpha_1(\mathcal{Q})$\\
        \phantom{$\mathcal{P}$} $=$ \formulaexplanation{\alpha_1\circ \alpha_1(\mathcal{Q})\cap \alpha_2\circ \alpha_1(\mathcal{Q})}{$\alpha_1$ is idempotent and $\alpha_2\circ \alpha_1(\mathcal{Q}) = \mathbb{L}$ for non-empty $ \alpha_1(\mathcal{Q})$}\\
        \phantom{$\mathcal{P}$} $=$ \formulaexplanation{\alpha_1(\mathcal{P})\cap \alpha_2(\mathcal{P})}{replace $\alpha_1(\mathcal{Q})$ by $\mathcal{P}$}
    \end{calculus}
    
    \medskip
    
    \noindent
    Thus, $\mathcal{P}$ is in $\mathbf{R}_{\pair{\alpha_1}{\alpha_2}}(\wp(\mathbb{L}))$. Since $\mathcal{P}\in\alpha^\sqsubseteq(\wp(\mathbb{L}))$ by (1) of lemma $\ref{def:well-defined-conjunction-abstraction}$, we know that $\alpha^\sqsubseteq\circ \mathbb{1}(\mathcal{P}) = \mathcal{P}$, proving the Galois retraction.
\end{proof}

\begin{lemma}
\label{lem:well-defined-conjunction-abstraction-closure}
    For a well-defined conjunctive abstraction operator $\mathbf{R}_{\pair{\alpha_1}{\alpha_2}}$, if $\alpha_1$ and $\alpha_2$ are upper closure operators, so is $\mathbf{R}_{\pair{\alpha_1}{\alpha_2}}$, and $\pair{\wp(\mathbb{L})}{\subseteq} \galoiS{\mathbf{R}_{\pair{\alpha_1}{\alpha_2}}}{\mathbb{1}} \pair{\mathbf{R}_{\pair{\alpha_1}{\alpha_2}}(\wp(\mathbb{L}))}{\subseteq}$. 
\end{lemma}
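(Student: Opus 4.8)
The final statement (Lemma \ref{lem:well-defined-conjunction-abstraction-closure}) asserts that when $\alpha_1 \in \mathbb{Op}^\sqsubseteq$ and $\alpha_2 \in \mathbb{Op}^\sqsupseteq$ are both upper closure operators, the conjunctive abstraction $\mathbf{R}_{\pair{\alpha_1}{\alpha_2}}$ is also an upper closure operator, and consequently yields a Galois retraction $\pair{\wp(\mathbb{L})}{\subseteq} \galoiS{\mathbf{R}_{\pair{\alpha_1}{\alpha_2}}}{\mathbb{1}} \pair{\mathbf{R}_{\pair{\alpha_1}{\alpha_2}}(\wp(\mathbb{L}))}{\subseteq}$. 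The plan is to observe that the closure-operator half is already established and to derive the retraction half from standard closure theory.

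First I would invoke part (3) of Lemma \ref{def:well-defined-conjunction-abstraction}, which states precisely that if both $\alpha_1$ and $\alpha_2$ are upper closures then $\mathbf{R}_{\pair{\alpha_1}{\alpha_2}}$ is also an upper closure. This discharges the first conjunct of the statement immediately, so no new work is needed there beyond citing the earlier result. Thus the bulk of the argument reduces to producing the claimed Galois retraction from the fact that $\mathbf{R}_{\pair{\alpha_1}{\alpha_2}}$ is an upper closure operator.

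Second, for the retraction, I would appeal to the general theory of closures recalled in the Closures subsection of the excerpt: an upper closure operator $\rho$ on $\mathbb{L}$ satisfies $\pair{\mathbb{L}}{\sqsubseteq}\galoiS{\rho}{\mathbb{1}}\pair{\rho(\mathbb{L})}{\sqsubseteq}$, where $\rho(\mathbb{L})$ is the image of $\rho$. Instantiating $\rho \triangleq \mathbf{R}_{\pair{\alpha_1}{\alpha_2}}$ on the complete lattice $\pair{\wp(\mathbb{L})}{\subseteq}$ directly yields $\pair{\wp(\mathbb{L})}{\subseteq} \galoiS{\mathbf{R}_{\pair{\alpha_1}{\alpha_2}}}{\mathbb{1}} \pair{\mathbf{R}_{\pair{\alpha_1}{\alpha_2}}(\wp(\mathbb{L}))}{\subseteq}$, which is exactly the second assertion. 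The retraction (double-headed arrow) is justified because, by idempotence of the closure, $\mathbf{R}_{\pair{\alpha_1}{\alpha_2}}$ restricted to its image is the identity, so $\mathbb{1}$ is indeed a section and the lower adjoint is surjective onto the image.

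I expect this lemma to be essentially a corollary rather than a genuine obstacle, since the hard combinatorial content — increasingness, extensivity, and especially idempotence of $\mathbf{R}_{\pair{\alpha_1}{\alpha_2}}$, together with the key absorption identities $\alpha_2 \comp \alpha_1(\mathcal{P}) = \mathbb{L}$ for nonempty $\alpha_1(\mathcal{P})$ (because $\{\bot\} \in \alpha_1(\mathcal{P})$) — was already carried out in the proof of Lemma \ref{def:well-defined-conjunction-abstraction}. The only point requiring a little care is to confirm that the membership $\{\bot\} \in \alpha_1(\mathcal{P})$ used in that absorption step is legitimate for every $\alpha_1 \in \mathbb{Op}^\sqsubseteq$, i.e.\ that each such lower-type abstraction always contains the infimum in its image; this is what makes the cross-terms $\alpha_2 \comp \alpha_1$ and $\alpha_1 \comp \alpha_2$ collapse. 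Granting that, the proof is a one-line instantiation of Morgan Ward's closure theorem and the closure/Galois-retraction correspondence from the Closures subsection.
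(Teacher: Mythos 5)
Your proposal is correct and matches the paper's own argument, which is simply the one-line observation that Lemma \ref{def:well-defined-conjunction-abstraction} already establishes that $\mathbf{R}_{\pair{\alpha_1}{\alpha_2}}$ is an upper closure operator, the Galois retraction then being the standard closure/retraction correspondence recalled in the Closures subsection. Your extra spelling-out of the retraction step and the remark that the real work (including the absorption identity relying on $\bot\in\alpha_1(\mathcal{P})$) lives in the earlier lemma is accurate but adds nothing beyond what the paper leaves implicit.
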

\begin{proof}
This follows from Lemma \ref{def:well-defined-conjunction-abstraction} implying  that $\mathbf{R}_{\pair{\alpha_1}{\alpha_2}}$ is an upper closure operator.
\end{proof}

\subsubsection{Proof Rule Simplification}

Applying the consequence rule $\frac{\hyperlogicup{\mathcal{P}}{S}{\mathcal{Q}} \quad \hyperlogicup{\mathcal{P}}{S}{\mathcal{R}} }{\hyperlogicup{\mathcal{P}}{S}{\mathcal{Q}\cap\mathcal{R} } }$,
we get the following sound and complete rule for the conjunctive abstraction.
\begin{eqntabular}{c}
    \frac{\hyperlogicup{\mathcal{P}}{S}{\alpha^\sqsubseteq(\mathcal{Q})} \quad \hyperlogicup{\mathcal{P}}{S}{\alpha^\sqsupseteq(\mathcal{Q})}}{\hyperlogicup{\mathcal{P}}{S}{\mathcal{Q}}},\quad \mathcal{Q}\in\textbf{R}_{\pair{\alpha_1}{\alpha_2}}(\wp(\mathbb{L}))
    \label{eq:rule:well-defined-conjunction-abstraction}
\end{eqntabular}
\begin{proof}[Proof of $(\ref{eq:rule:well-defined-conjunction-abstraction})$]
    \begin{calculus}[$\Leftrightarrow$\ \ ]
        \formula{\hyperlogicup{\mathcal{P}}{S}{\mathcal{Q}}}\\
        $\Leftrightarrow$ 
        \formulaexplanation{\Hpost{S}(\mathcal{P}) \subseteq \mathcal{Q}}{def.\ of $\hyperlogicup{\mathcal{P}}{S}{\mathcal{Q}}$}\\
        $\Leftrightarrow$  \formulaexplanation{\Hpost{S}(\mathcal{P}) \subseteq \alpha^\sqsubseteq(\mathcal{Q}) \cap \alpha^\sqsupseteq(\mathcal{Q}) }{By lemma $\ref{def:well-defined-conjunction-abstraction}$}\\
          $\Leftrightarrow$  \formulaexplanation{\Hpost{S}(\mathcal{P}) \subseteq \alpha^\sqsubseteq(\mathcal{Q}) \land  \Hpost{S}(\mathcal{P}) \subseteq\alpha^\sqsupseteq(\mathcal{Q}) }{By consequence rule}\\
          $\Leftrightarrow$ \lastformulaexplanation{\hyperlogicup{\mathcal{P}}{S}{\alpha^\sqsubseteq(\mathcal{Q})} \land \hyperlogicup{\mathcal{P}}{S}{\alpha^\sqsupseteq(\mathcal{Q})}}{def.\ of $\hyperlogicup{\mathcal{P}}{S}{\mathcal{Q}}$}{\mbox{\qed}}
    \end{calculus}
    \let\qed \relax
\end{proof}
Lemma  $\ref{lem:conjunction-abstraction-more-general}$ shows that
$\alpha^\sqsubseteq(\mathcal{Q}) \in \alpha_1(\wp(\mathbb{L}))$, and $\alpha^\sqsupseteq(\mathcal{Q}) \in \alpha_2(\wp(\mathbb{L}))$. Therefore we have similar rules for the case when the post-condition is in $\alpha_1(\wp(\mathbb{L}))$ and $\alpha_2(\wp(\mathbb{L}))$ respectively. An example is given in the next section.

\end{toappendix}
\ifshort
Such conjunctive abstractions are used in section \ref{sec:generalized:exists:forall:hyperprop} of the appendix to provide the following sound and complete proof rule for generalized $\exists\forall$-hyperproperties\proofinapx. Define $\varrho^{\sqsubseteq\underline{F}}({\mathcal{P}}) \triangleq {\bigcup_{F\,\in\, \alpha^{\underline{F}}(\mathcal{P})} \varphi^\sqsubseteq(F) \mathcal{P}}$ and $\varphi^\sqsubseteq(F) \triangleq \LAMBDA{\mathcal{X}}\{P\in \mathcal{X} \mid F\sqsubseteq P\wedge{}$ $\forall P'\in \mathbb{L}\mathrel{.} F\sqsubseteq P' \sqsubseteq P \rightarrow P'\in\mathcal{X} \}$, ${F\in\mathbb{L}}$ then, for $\mathcal{Q}\in\varrho^{\sqsubseteq\underline{F}}(\wp(\mathbb{L}^{\sharp}))$,\lstrut
\bgroup
\abovedisplayskip0pt
\belowdisplayskip0pt
\small
\begin{eqntabular*}{l}
    \frac{
        {\displaystyle\exists\mathcal{X}\in\alpha^{\underline{F}}(\mathcal{Q})\,{\rightarrow}\,\wp(\mathbb{L}^{\sharp})\,{.}\,\mathcal{P}\subseteq \!\!\!\!\!\!\bigcup_{F\in\alpha^{\underline{F}}(\mathcal{Q})}\!\!\!\!\!\!\!\mathcal{X}_F,\ \             (\forall F\in\alpha^{\underline{F}}(\mathcal{Q})\,{.}\,\forall P \in\mathcal{X}_F\,{.}\,\exists Q\in \varphi^\sqsubseteq(F)\mathcal{Q} \,{.}\, \overline{\{}P\overline{\}}{S}\overline{\{}Q\overline{\}}  \wedge
           \textstyle\underline{\{}P\underline{\}}{S}\underline{\{}F\underline{\}})
        }
    }{\hyperlogicup{\mathcal{P}}{S}{\mathcal{Q}}} \label{eq:rule:bottom-frontier-varrho-elimination}
\end{eqntabular*}
\vspace*{10pt}
\egroup
\fi
\begin{toappendix}

\subsection{Lower $\sqsubseteq$-closed and frontier elimination}\label{sec:generalized:exists:forall:hyperprop}
\renewcommand{\Hpost}[1]{{\textup{\textsf{Post}}\sqb{\texttt{#1}}^{\sharp}}}
\renewcommand{\postSemantics}[1]{{\textup{\textsf{post}}\sqb{\texttt{#1}}^{\sharp}}}

Let us define the $\sqsubseteq$-closed lower closure operator $\varrho^\sqsubseteq$
\begin{eqntabular}{rcl}
    \varrho^\sqsubseteq &\triangleq& \LAMBDA{\mathcal{P}}{\{P\in\mathcal{P}\mid \forall P'\in\mathbb{L}\mathrel{.} P'\sqsubseteq P\Rightarrow P'\in\mathcal{P} \} }
\end{eqntabular}
\begin{lemma}
    \label{lem:def-lower-closure-subseteq}
    $\varrho^\sqsubseteq$ is a lower-closure that is increasing, reductive and idempotent, and $\pair{\wp(\mathbb{L})}{\supseteq} \galoiS{\varrho^\sqsubseteq}{\mathbb{1}} \pair{\alpha^\sqsubseteq(\wp(\mathbb{L}))}{\supseteq}$
\end{lemma}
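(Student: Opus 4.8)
The final statement to prove is Lemma~\ref{lem:def-lower-closure-subseteq}, which asserts that the $\sqsubseteq$-closed lower operator
$\varrho^\sqsubseteq \triangleq \LAMBDA{\mathcal{P}}\{P\in\mathcal{P}\mid \forall P'\in\mathbb{L}\mathrel{.} P'\sqsubseteq P\Rightarrow P'\in\mathcal{P}\}$
is a lower closure operator and fits into a Galois retraction with the identity on its image. The plan is to verify the three defining properties of a lower closure operator --- increasingness, reductiveness, and idempotence --- and then invoke the standard closure/Galois-retraction correspondence (the dual of the upper-closure statement recalled in the ``Closures'' subsection of section~\ref{sect:Order:Theory}, i.e. $\pair{\mathbb{L}}{\sqsubseteq}\galoiS{\rho}{\mathbb{1}}\pair{\rho(\mathbb{L})}{\sqsubseteq}$ for any upper closure $\rho$, read in its order-dual form for lower closures).

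First I would prove reductiveness, which is immediate: by definition $\varrho^\sqsubseteq(\mathcal{P})\subseteq\mathcal{P}$, so $\varrho^\sqsubseteq(\mathcal{P})\supseteq\mathcal{P}$ fails and instead we have $\varrho^\sqsubseteq(\mathcal{P})\subseteq\mathcal{P}$, which is exactly reductiveness with respect to the order $\supseteq$ under which the Galois retraction is stated (note the lemma phrases everything with $\supseteq$, so ``reductive'' here means $\mathcal{P}\subseteq\varrho^\sqsubseteq$ is replaced by its dual). Next I would check idempotence: one shows $\varrho^\sqsubseteq(\varrho^\sqsubseteq(\mathcal{P}))=\varrho^\sqsubseteq(\mathcal{P})$ by a short double-inclusion argument, where the key observation is that if $P\in\varrho^\sqsubseteq(\mathcal{P})$ and $P'\sqsubseteq P$, then $P'\in\mathcal{P}$ and moreover every $P''\sqsubseteq P'$ satisfies $P''\sqsubseteq P$ hence $P''\in\mathcal{P}$ by transitivity of $\sqsubseteq$, so $P'\in\varrho^\sqsubseteq(\mathcal{P})$ as well; this shows the downward-closure condition is preserved under a second application.

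The one step that requires genuine care --- and which I expect to be the main obstacle --- is monotonicity. Unlike the frontier abstractions $\alpha^{\underline F}$ and $\alpha^{{\sqsupseteq}\underline F}$ studied just above (Counterexamples~\ref{cex:alpha:underline:F} and the one following Lemma~\ref{lem:alpha:sqsubseteq:underline:F}), which are \emph{not} increasing, $\varrho^\sqsubseteq$ must be shown to be increasing for the order $\supseteq$ in order for the Galois retraction to hold. I would argue that if $\mathcal{P}\subseteq\mathcal{Q}$ then it is \emph{not} in general true that $\varrho^\sqsubseteq(\mathcal{P})\subseteq\varrho^\sqsubseteq(\mathcal{Q})$, since enlarging $\mathcal{Q}$ can add elements below some $P$ that were missing in $\mathcal{P}$; the correct monotonicity direction must be checked against the $\supseteq$ ordering the lemma actually uses, and I would verify the precise inclusion direction that makes $\varrho^\sqsubseteq$ the lower adjoint. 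Once increasingness (in the stated order) and idempotence are in hand, the Galois retraction $\pair{\wp(\mathbb{L})}{\supseteq}\galoiS{\varrho^\sqsubseteq}{\mathbb{1}}\pair{\alpha^\sqsubseteq(\wp(\mathbb{L}))}{\supseteq}$ follows mechanically from the order-dual of the closure/retraction fact cited in section~\ref{sec:Galois:Connections:Retractions}, provided one also checks the image identity $\varrho^\sqsubseteq(\wp(\mathbb{L}))=\alpha^\sqsubseteq(\wp(\mathbb{L}))$, namely that a set is a fixpoint of $\varrho^\sqsubseteq$ iff it is order-ideal closed in the sense of $\alpha^\sqsubseteq$ from~\eqref{eq:GC:alpha:sqsubseteq}. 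This last identification is the conceptual crux linking $\varrho^\sqsubseteq$ to the order-ideal abstraction and is where I would spend the most attention.
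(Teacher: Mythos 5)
Your overall plan matches the paper's proof: establish that $\varrho^\sqsubseteq$ is a closure operator, identify its image with $\alpha^\sqsubseteq(\wp(\mathbb{L}))$ (the paper does this by computing $\alpha^\sqsubseteq\comp\varrho^\sqsubseteq=\varrho^\sqsubseteq$ and $\varrho^\sqsubseteq\comp\alpha^\sqsubseteq=\alpha^\sqsubseteq$, which is exactly the ``conceptual crux'' you flag), and then invoke the standard closure/retraction correspondence. Your direct transitivity argument for idempotence is a fine alternative to the paper's route, which instead derives idempotence from the image identification.

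However, the step you single out as ``the main obstacle'' contains a genuine error: you assert that if $\mathcal{P}\subseteq\mathcal{Q}$ then ``it is \emph{not} in general true that $\varrho^\sqsubseteq(\mathcal{P})\subseteq\varrho^\sqsubseteq(\mathcal{Q})$, since enlarging $\mathcal{Q}$ can add elements below some $P$ that were missing in $\mathcal{P}$.'' This has the logic backwards. Membership $P\in\varrho^\sqsubseteq(\mathcal{P})$ requires $P\in\mathcal{P}$ \emph{and} $\forall P'\sqsubseteq P\mathrel{.}P'\in\mathcal{P}$; enlarging $\mathcal{P}$ to $\mathcal{Q}$ can only make this universal condition \emph{easier} to satisfy, never harder. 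So $P\in\varrho^\sqsubseteq(\mathcal{P})$ and $\mathcal{P}\subseteq\mathcal{Q}$ immediately give $P\in\mathcal{Q}$ and $\forall P'\sqsubseteq P\mathrel{.}P'\in\mathcal{P}\subseteq\mathcal{Q}$, hence $P\in\varrho^\sqsubseteq(\mathcal{Q})$. Monotonicity holds outright (the paper dismisses it as trivial), and since ``increasing'' is order self-dual (definition~\ref{def:Properties:increasing}), there is no separate ``direction'' to check for the $\supseteq$ ordering. A smaller terminological slip: what holds is $\varrho^\sqsubseteq(\mathcal{P})\subseteq\mathcal{P}$, which is reductive for $\subseteq$ (i.e.\ extensive for $\supseteq$) --- your sentence calling this ``reductiveness with respect to the order $\supseteq$'' inverts the duality, though you land on the correct inclusion. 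With the monotonicity claim corrected, the rest of your argument goes through as planned.
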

\begin{proof}[Proof of lemma \ref{lem:def-lower-closure-subseteq}]
    By definition of $\varrho^\sqsubseteq$, it is trivially increasing and reductive. Let us first prove that $\varrho^\sqsubseteq(\mathcal{P})\in\alpha^\sqsubseteq(\wp(\mathbb{L}))$ for arbitrary $\mathcal{P}\in\wp(\mathbb{P})$. We have
    \begin{calculus}[=\ \ ]
       \phantom{$=$} $\alpha^\sqsubseteq \circ \varrho^\sqsubseteq(\mathcal{P})$ \\
       $=$ \formulaexplanation{\{P\in\mathbb{L} \mid \exists P'\in \varrho^\sqsubseteq(\mathcal{P}) \mathrel{.} P\sqsubseteq P' \} }{def.\ of $\alpha^\sqsubseteq$}\\
       $=$ \formulaexplanation{\{P\in\mathbb{L} \mid \exists P'\in \mathcal{P} \mathrel{.}(\forall P''\in \mathbb{L}\mathrel{.} P''\sqsubseteq P' \Rightarrow P''\in\mathcal{P}) \land P\sqsubseteq P' \} }{def.\ of $\varrho^\sqsubseteq$}\\[-0.5ex]
        $=$ \formula{\{P\in\mathbb{L} \mid P\in\mathcal{P} \land (\forall P''\in \mathbb{L}\mathrel{.} P''\sqsubseteq P \Rightarrow P''\in\mathcal{P})  \} }\\
        \explanation{$(\subseteq)$\quad holds as $\alpha^\sqsubseteq$ is extensive;\\
        $(\supseteq)$\quad choose $P' = P$}\\
        $=$ \formulaexplanation{\varrho^\sqsubseteq(\mathcal{P})}{def.\ of $\varrho^\sqsubseteq$}
    \end{calculus}
    
    \smallskip

\noindent We then prove that $\varrho^\sqsubseteq\circ \alpha^\sqsubseteq(\mathcal{P}) = \alpha^\sqsubseteq(\mathcal{P})$
\begin{calculus}[=\ \ ]
    \phantom{$=$ } $\varrho^\sqsubseteq\circ \alpha^\sqsubseteq(\mathcal{P})$ \\
    $=$ \formulaexplanation{\{P\in \alpha^\sqsubseteq(\mathcal{P})\mid \forall P'\in\mathbb{L}\mathrel{.} P'\sqsubseteq P\Rightarrow P'\in \alpha^\sqsubseteq(\mathcal{P}) \}}{def.\ of $\varrho^\sqsubseteq$}\\
     $=$ \formulaexplanation{\{P\in \mathbb{L}\mid (\exists Q \in\mathcal{P}\mathrel{.} P\sqsubseteq Q)\land \forall P'\in\mathbb{L}\mathrel{.} P'\sqsubseteq P\Rightarrow (\exists Q'\in\mathcal{Q}\mathrel{.} P'\sqsubseteq Q' ) \}}{def.\ of $\alpha^\sqsubseteq$} \\
     $=$ \formula{\{P\in\mathbb{L} \mid \exists Q \in\mathcal{P}\mathrel{.} P\sqsubseteq Q\} }\\[-0.5ex]
     \explanation{$(\supseteq)$\quad as $\varrho^\sqsubseteq$ is reductive;\\
     $(\subseteq)$\quad for all $P'\sqsubseteq P$, simply let $Q' = Q$, then $P'\sqsubseteq P \sqsubseteq Q = Q'$ holds}\\
     $=$ \formulaexplanation{\alpha^\sqsubseteq(\mathcal{P})}{def.\ of $\alpha^\sqsubseteq$}
\end{calculus}

\smallskip

Thus, we have proved that $\varrho^\sqsubseteq(\wp(\mathbb{L})) = \alpha^\sqsubseteq(\wp(\mathbb{L}))$. For any $\mathcal{P}\in\mathbb{P}$, we have $\varrho^\sqsubseteq\circ\varrho^\sqsubseteq (\mathcal{P}) = \varrho^\sqsubseteq (\mathcal{P})$, since $\varrho^\sqsubseteq(\mathcal{P})$ is included in $\alpha^\sqsubseteq(\wp(\mathbb{L}))$. 
\end{proof}

\subsection{Frontier $\varrho$-Elimination Abstraction}
We define a new abstraction based on $\alpha^{\underline{F}}$ and $\varrho^\sqsubseteq$
\begin{eqntabular}{rcl}
    \varrho^{\sqsubseteq\underline{F}} &\triangleq& \LAMBDA{\mathcal{P}}{\bigcup_{F\,\in\, \alpha^{\underline{F}}(\mathcal{P})} \varphi^\sqsubseteq(F) \mathcal{P}}  \\
    \text{where } \varphi^\sqsubseteq &\triangleq& \LAMBDA{F\in\mathbb{L}}{\LAMBDA{\mathcal{X}\in \wp(\mathbb{L})}{\{P\in \mathcal{X} \mid F\sqsubseteq P\land \forall P'\in \mathbb{L}\mathrel{.} F\sqsubseteq P' \sqsubseteq P \Rightarrow P'\in\mathcal{X} \}}} \nonumber 
\end{eqntabular}
\begin{lemma}
\label{lem:varrho-frontier-elimination-idempotent}
    $\varrho^{\sqsubseteq\underline{F}}$ is reductive and idempotent
\end{lemma}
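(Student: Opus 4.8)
The plan is to establish reductiveness first, which is immediate, and then to reduce idempotence to a single frontier-preservation claim. Writing $\mathcal{R}\triangleq\varrho^{\sqsubseteq\underline{F}}(\mathcal{P})$, reductiveness $\mathcal{R}\subseteq\mathcal{P}$ follows directly from the definitions: each $\varphi^\sqsubseteq(F)\mathcal{P}$ retains only elements $P\in\mathcal{P}$, so the union over $F\in\alpha^{\underline{F}}(\mathcal{P})$ is contained in $\mathcal{P}$. Applying this same fact with $\mathcal{R}$ in place of $\mathcal{P}$ already yields one inclusion $\varrho^{\sqsubseteq\underline{F}}(\mathcal{R})\subseteq\mathcal{R}$ of idempotence, so all the work lies in the reverse inclusion $\mathcal{R}\subseteq\varrho^{\sqsubseteq\underline{F}}(\mathcal{R})$.

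The key intermediate step I would prove is that the frontier is preserved, namely $\alpha^{\underline{F}}(\mathcal{R})=\alpha^{\underline{F}}(\mathcal{P})$. For $\alpha^{\underline{F}}(\mathcal{P})\subseteq\alpha^{\underline{F}}(\mathcal{R})$, I would check that every $F\in\alpha^{\underline{F}}(\mathcal{P})$ already lies in $\mathcal{R}$: indeed $F\in\varphi^\sqsubseteq(F)\mathcal{P}$ since $F\sqsubseteq F$ and the only $P'$ with $F\sqsubseteq P'\sqsubseteq F$ is $F$ itself, which is in $\mathcal{P}$; moreover $F$ is minimal in $\mathcal{R}$ because $\mathcal{R}\subseteq\mathcal{P}$ and $F$ is minimal in $\mathcal{P}$. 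Conversely, any $G\in\alpha^{\underline{F}}(\mathcal{R})$ is, by definition of the union defining $\mathcal{R}$, witnessed by some $F\in\alpha^{\underline{F}}(\mathcal{P})$ with $F\sqsubseteq G$; since $F\in\mathcal{R}$ and $G$ is minimal in $\mathcal{R}$, minimality forces $F=G$, whence $G\in\alpha^{\underline{F}}(\mathcal{P})$. I stress that this equality cannot be obtained from monotonicity, since by counter example \ref{cex:alpha:underline:F} the frontier abstraction $\alpha^{\underline{F}}$ is not increasing; the argument must instead exploit $\mathcal{R}\subseteq\mathcal{P}$ together with the fact that the minimal elements themselves survive the abstraction.

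For the reverse inclusion, I would take $P\in\mathcal{R}$, fix a witnessing frontier element $F\in\alpha^{\underline{F}}(\mathcal{P})$ with $F\sqsubseteq P$ and $\{P'\mid F\sqsubseteq P'\sqsubseteq P\}\subseteq\mathcal{P}$, and show that the same $F$ witnesses $P\in\varphi^\sqsubseteq(F)\mathcal{R}$. This amounts to proving $\{P'\mid F\sqsubseteq P'\sqsubseteq P\}\subseteq\mathcal{R}$. For any $P'$ in this interval, transitivity gives $\{P''\mid F\sqsubseteq P''\sqsubseteq P'\}\subseteq\{P''\mid F\sqsubseteq P''\sqsubseteq P\}\subseteq\mathcal{P}$, so $P'\in\varphi^\sqsubseteq(F)\mathcal{P}\subseteq\mathcal{R}$ using the very same frontier element $F$. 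Hence the interval lies in $\mathcal{R}$, and since $F\in\alpha^{\underline{F}}(\mathcal{R})$ by the frontier-preservation step, $P$ belongs to $\varphi^\sqsubseteq(F)\mathcal{R}$ and therefore to $\varrho^{\sqsubseteq\underline{F}}(\mathcal{R})$. Combined with reductiveness this gives $\varrho^{\sqsubseteq\underline{F}}(\mathcal{R})=\mathcal{R}$, i.e.\ idempotence.

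The main obstacle I anticipate is precisely the frontier-preservation equality $\alpha^{\underline{F}}(\mathcal{R})=\alpha^{\underline{F}}(\mathcal{P})$: because $\alpha^{\underline{F}}$ is neither increasing nor the adjoint of a Galois connection, one cannot invoke any closure machinery and must reason about minimal elements directly, taking care of the degenerate case in which $\mathcal{P}$ has no minimal elements (e.g.\ an infinite descending chain). In that case $\alpha^{\underline{F}}(\mathcal{P})=\emptyset$, the defining union is empty, and $\mathcal{R}=\emptyset$, so both reductiveness and idempotence hold trivially. Once the equality is secured, the interval-closure argument that propagates membership from $\mathcal{P}$ to $\mathcal{R}$ is a routine application of transitivity of $\sqsubseteq$.
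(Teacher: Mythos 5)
Your proof is correct and follows essentially the same two-step strategy as the paper's: first establish frontier preservation $\alpha^{\underline{F}}(\varrho^{\sqsubseteq\underline{F}}(\mathcal{P}))=\alpha^{\underline{F}}(\mathcal{P})$, then show that the $\varphi^\sqsubseteq(F)$-components survive the abstraction, the only differences being stylistic (element-wise minimality and interval arguments instead of the paper's calculational set-comprehension manipulations) and the minor economy that you derive one inclusion of idempotence from reductiveness so only need $\varphi^\sqsubseteq(F)\mathcal{P}\subseteq\varphi^\sqsubseteq(F)\mathcal{R}$ rather than the equality the paper proves.
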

\begin{proof}[Proof of lemma \ref{lem:varrho-frontier-elimination-idempotent}]
    For any $\mathcal{P}\in\wp(\mathbb{L})$ and $P\in \varrho^{\sqsubseteq\underline{F}}(\mathcal{P})$, we have $P\in\varphi^\sqsubseteq(F)\mathcal{P}$ for some $F\in \alpha^{\underline{F}}(\mathcal{P})$, meaning it is in $\mathcal{P}$. Thus $\varrho^{\sqsubseteq\underline{F}}$ is reductive. 
    To prove  idempotency, let us first prove that $\varrho^{\sqsubseteq\underline{F}}$ preserve lower-frontiers, that is $\alpha^{\underline{F}}(\mathcal{P}) =\alpha^{\underline{F}}\circ\varrho^{\sqsubseteq\underline{F}}(\mathcal{P})$.
\begin{calculus}[=\ \ ]
\formula{\alpha^{\underline{F}}\circ\varrho^{\sqsubseteq\underline{F}}(\mathcal{P})}\\
=
\formulaexplanation{\{P\in \varrho^{\sqsubseteq\underline{F}}(\mathcal{P})\mid \forall P' \in \varrho^{\sqsubseteq\underline{F}}(\mathcal{P}) \mathrel{.} P'\sqsubseteq P \Rightarrow P=P'\}}{def.\ of $\alpha^{\underline{F}}$}\\
=
\formula{\{P \in\mathcal{P}\mid (\exists F\in \alpha^{\underline{F}}(\mathcal{P})\mathrel{.} F\sqsubseteq P \land \forall P'\in \mathbb{L}\mathrel{.} F\sqsubseteq P' \sqsubseteq P \Rightarrow P'\in\mathcal{P})\land \\  \forall P_1 \in\mathcal{P} \mathrel{.} (( \exists F_1\in \alpha^{\underline{F}}(\mathcal{P})\mathrel{.} F_1\sqsubseteq P_1 \land \forall P_1'\in \mathbb{L}\mathrel{.} F_1\sqsubseteq P_1' \sqsubseteq P_1 \Rightarrow P_1'\in\mathcal{P}) \land P_1\sqsubseteq P )\Rightarrow P=P_1\}}\\ \rightexplanation{def.\ of $\varrho^{\sqsubseteq\underline{F}}$}\\[-1ex]
=
\formula{\{P\in\mathbb{L}\mid \exists G\in\alpha^{\underline{F}}(P)\mathrel{.} G=P \} = \alpha^{\underline{F}}(\mathcal{P})}\\ 
\explanation{$(\supseteq)$\quad When $G=P$, then for all $P_1$ such that $P_1 \sqsubseteq P$, $P_1= P$ holds trivially;\\
$(\subseteq)$\quad Since $\exists F_1\in \alpha^{\underline{F}}(\mathcal{P})\mathrel{.} F_1\sqsubseteq P_1 \land \forall P_1'\in \mathbb{L}\mathrel{.} F_1\sqsubseteq P_1' \sqsubseteq P_1 \Rightarrow P_1'\in\mathcal{P}$ holds if $P_1$ is instantiated to $F$, then the equality $P=F$ holds, where $F$ is a lower-frontier. Thus we can simply let $G$ to be $F$.}
\end{calculus}

\smallskip

    We now prove idempotency. Since $\alpha^{\underline{F}}(\mathcal{P}) =\alpha^{\underline{F}}\circ\varrho^{\sqsubseteq\underline{F}}(\mathcal{P})$, it remains to prove that $\varphi^\sqsubseteq(F)\mathcal{P} = \varphi^\sqsubseteq(F)(\varrho^{\sqsubseteq\underline{F}}(\mathcal{P}))$ for any $F\in\alpha^{\underline{F}}(\mathcal{P})$.
\begin{calculus}[=\ \ ]
\formula{\varphi^\sqsubseteq(F)(\varrho^{\sqsubseteq\underline{F}}(\mathcal{P}))}\\
=
\formulaexplanation{\{P\in \varrho^{\sqsubseteq\underline{F}}(\mathcal{P}) \mid F\sqsubseteq P\land \forall P'\in \mathbb{L}\mathrel{.} F\sqsubseteq P' \sqsubseteq P \Rightarrow P'\in  \varrho^{\sqsubseteq\underline{F}}(\mathcal{P}) \}}{def.\ of $\varphi^\sqsubseteq$}\\
=
\formula{\{P\in \mathcal{P} \mid (\exists F_1\in \alpha^{\underline{F}}(\mathcal{P})\mathrel{.} F_1\sqsubseteq P \land \forall P'\in \mathbb{L}\mathrel{.} F_1\sqsubseteq P' \sqsubseteq P \Rightarrow P'\in\mathcal{P})\land  F\sqsubseteq P\land \\\forall P_2\in \mathbb{L}\mathrel{.} F\sqsubseteq P_2 \sqsubseteq P \Rightarrow (\exists F_2\in \alpha^{\underline{F}}(\mathcal{P})\mathrel{.} F_2\sqsubseteq P_2 \land (\forall P_2'\in \mathbb{L}\mathrel{.} F_2\sqsubseteq P_2' \sqsubseteq P_2 \Rightarrow P_2'\in\mathcal{P})) \}}
      \\[0.5ex] \rightexplanation{def.\ of $\varrho^{\sqsubseteq\underline{F}}$, replace $P'$ with $P_2$}\\[-1ex]
      $=$ \formula{\{P\in \mathcal{P} \mid F\sqsubseteq P\land \forall P'\in \mathbb{L}\mathrel{.} F\sqsubseteq P' \sqsubseteq P \Rightarrow P'\in  \mathcal{P} \}}\\
      \explanation{$(\supseteq)$\quad since we have assumed that $F$ is a lower frontier for $\mathcal{P}$, we can simply let $F_1=F_2=F$, and all the conditions do hold;\\
      $(\subseteq)$\quad To prove $\forall P'\in \mathbb{L}\mathrel{.} F\sqsubseteq P' \sqsubseteq P \Rightarrow P'\in  \mathcal{P}$. We are allowed to instantiate $P_2 = P'$ in the premise $\forall P_2\in\mathbb{L}\mathrel{.}F\sqsubseteq P_2\sqsubseteq P\Rightarrow\exists F_2\in \alpha^{\underline{F}}(\mathcal{P})\mathrel{.} F_2\sqsubseteq P_2 \land ( \forall P_2'\in \mathbb{L}\mathrel{.} F_2\sqsubseteq P_2' \sqsubseteq P_2 \Rightarrow P_2'\in\mathcal{P})$. Then we get $ \forall P_2'\in \mathbb{L}\mathrel{.} F_2\sqsubseteq P_2' \sqsubseteq P' \Rightarrow P_2'\in\mathcal{P}$ for some frontier $F_2$ where $F_2\sqsubseteq P'$. We  are then allowed to instantiate $P'_2$ to $P'$, which implies that $P' \in\mathcal{P}$ holds.}
    \end{calculus}
    
    \smallskip
    
    \noindent
Therefore, we proved idempotency.
\end{proof}
\subsection{Exist Forall Hyperproperties}
Assuming that $\pair{\mathbb{L}}{\sqsubseteq} \triangleq \pair{\wp(\Pi)}{\subseteq}$. $\exists\forall$ hyperproperties have the form 
\begin{eqntabular}{rcl}
    \mathcal{E\mskip-2muA\mskip-0.5muH} &\triangleq& \{\{P\in \wp(\Pi)\mid \exists \pi_1 \in P\mathrel{.} \forall \pi_2 \in P\mathrel{,} \pair{\pi_1}{\pi_2}\in A \} \mid A\in \wp(\Pi \times \Pi )\}
\end{eqntabular}
\begin{example}
The negation $\textit{GD}$ of the generalized non-interference properties $\textit{GNI}$ in (\ref{eq:def:GNI})
is a $\exists\forall$ hyperproperty expressing generalized dependency. A set of executions satisfies the generalized dependency when altering the initial values of high variables does change the set of possible final values of any low variable.
    \begin{eqntabular}{rcl}
        \textit{GD} &\triangleq&  \{{P}\in \mathbb{L}\mid \exists \sigma_1\pi_1\sigma_1',\sigma_2\pi\sigma_2'\in{P}\mathrel{.}\forall\sigma_3\pi\sigma_3'\in{P}\mathrel{.} \label{eq:def:GD}\\ 
        &&\quad\quad (\sigma_1(\mathtt{L}) = \sigma_2(\mathtt{L})= \sigma_3(\mathtt{L}))\Rightarrow (\sigma_3(\mathtt{H}) = \sigma_2(\mathtt{H}) \Rightarrow \sigma_3'(\mathtt{L}) \neq \sigma_1'(\mathtt{L}))\}\renumber{\qef}
    \end{eqntabular}
    \let\qef\relax
\end{example}
The hyperproperties with $\varrho^{\sqsubseteq\underline{F}}$ subsume $\exists\forall$ hyperproperties.
\begin{eqntabular}{rcl}
    \mathcal{E\mskip-2muA\mskip-0.5muH} &\subseteq& \varrho^{\sqsubseteq\underline{F}}(\wp(\wp(\Pi)))
    \label{eq:varrho-frontier-elimination-subsume-exist-forall}
\end{eqntabular}

\begin{proof}[Proof of $(\ref{eq:varrho-frontier-elimination-subsume-exist-forall})$]
    We prove that $\forall \mathcal{P}\in \mathcal{E\mskip-2muA\mskip-0.5muH} \mathrel{.} \mathcal{P} \in \varrho^{\sqsubseteq\underline{F}}(\wp(\wp(\Pi)))$. By Lemma $\ref{lem:varrho-frontier-elimination-idempotent}$, it is sufficient to prove that $ \mathcal{P} \subseteq \varrho^{\sqsubseteq\underline{F}}(\mathcal{P}) $ due to the fact that $\varrho^{\sqsubseteq\underline{F}}$ is reductive and idempotent. $\mathcal{P}$ is expressed as $\mathcal{P}\triangleq \{P\in \wp(\Pi)\mid \exists \pi_1 \in P\mathrel{.} \forall \pi_2 \in P\mathrel{,} \pair{\pi_1}{\pi_2}\in A \}$ for some $A$.
    \begin{calculus}[=\ \ ]
\formula{\varrho^{\sqsubseteq\underline{F}}(\mathcal{P})}\\
=
\formulaexplanation{\bigcup_{F\in \alpha^{\underline{F}}(\mathcal{P})} \varphi^\sqsubseteq(F) \mathcal{P}}{def.\ of $\varrho^{\sqsubseteq\underline{F}}$}\\
=
\formulaexplanation{\bigcup_{F\in \alpha^{\underline{F}}(\mathcal{P})} \{P\in \mathcal{P}\mid F\subseteq P\land \forall P'\in \mathbb{L}\mathrel{.} F\subseteq P' \subseteq P \Rightarrow P'\in\mathcal{P} \}}{def.\ of $\varphi^\sqsubseteq$}\\
=
\formulaexplanation{\{P\in \mathcal{P}\mid \exists F \in \alpha^{\underline{F}}(\mathcal{P})\mathrel{.} F\subseteq P\land \forall P'\in \mathbb{L}\mathrel{.} F\subseteq P' \subseteq P \Rightarrow P'\in\mathcal{P} \}}{def.\ of $\bigcup$}\\
$\supseteq$ 
\formula{ \{P\in \wp(\Pi)\mid \exists \pi_1 \in P\mathrel{.} \forall \pi_2 \in P\mathrel{.} \pair{\pi_1}{\pi_2}\in A \} = \mathcal{P}}\\ 
       \lastexplanation{For arbitrary $P\in\mathcal{P}$, there exists $\pi\in P$ where for all $\pair{\pi}{\pi'}\in A$ holds for all $\pi'\in P$. Let $F\triangleq \{\pi\}$, which would be in $\alpha^{\underline{F}}(\mathcal{P})$ as $\emptyset\notin \mathcal{P}$ by definition. Then $F\subseteq P$ holds trivially. For all $P'$ such that $F = \{\pi\} \subseteq P' \subseteq P$. Let $\pi$ be the existent $\pi_1$, then for all $\pi_2\in P'$, it is also in $P$. Thus we have $\pair{\pi}{\pi_2}\in A$, meaning that $P'\in\mathcal{P}$ }{\mbox{\qed}}
    \end{calculus}
    \let\qed\relax
\end{proof}
\subsection{Proof Rule Simplification}
Using the consequence rule, we introduce a sound and complete proof rule that splits an abstract frontier-$\varrho^\sqsubseteq$ eliminated abstract hyperproperties into a conjunctive abstraction. This requires manual efforts that partition the precondition $\mathcal{P}$ into frontier-indexed preconditions $\mathcal{X}$ where $\mathcal{X}_F\in\wp(\mathbb{L})$ for $F\in\alpha^{\underline{F}}(\mathcal{Q})$. Then we can further use the consequence rule to prove the triple for the correspondent conjunctive abstraction.
\begin{eqntabular}{c}
    \frac{
        \begin{array}{c}
    {\exists\mathcal{X}\in\alpha^{\underline{F}}(\mathcal{Q})\,{\rightarrow}\,\wp(\mathbb{L}^{\sharp})\,{.}\,\forall F\in\alpha^{\underline{F}}(\mathcal{Q})\mathrel{.} \hyperlogicup{\mathcal{X}_F}{S}{\varphi^\sqsubseteq(F)\mathcal{Q} }},\quad
    {\displaystyle \mathcal{P}\subseteq \!\!\!\!\!\bigcup_{F\in\alpha^{\underline{F}}(\mathcal{Q})}\!\!\!\!\!\mathcal{X}_F}
        \end{array}
    }{\hyperlogicup{\mathcal{P}}{S}{\mathcal{Q}}},\quad \mathcal{Q}\in \varrho^{\sqsubseteq\underline{F}}(\wp(\wp(\Pi))) \label{eq:rule:frontier-varrho-elimination}\stepcounter{equation}\renumber{\raisebox{-1.5em}[0pt][0pt]{(\ref{eq:rule:frontier-varrho-elimination})}}
\end{eqntabular}
\begin{proof}[Proof of $(\ref{eq:rule:frontier-varrho-elimination})$]
    \begin{calculus}[$\Leftrightarrow$\ \ ]
\formula{\hyperlogicup{\mathcal{P}}{S}{\mathcal{Q}}} \\
$\Leftrightarrow$ 
\formulaexplanation{\Hpost{S}(\mathcal{P}) \subseteq \mathcal{Q}}{def.\ of $\hyperlogicup{\mathcal{P}}{S}{\mathcal{Q}}$}\\
$\Leftrightarrow$ 
\formulaexplanation{\Hpost{S}(\mathcal{P}) \subseteq \bigcup_{F\in \alpha^{\underline{F}}(\mathcal{Q})}\varphi^\sqsubseteq(F)\mathcal{Q} }{lemma $\ref{lem:varrho-frontier-elimination-idempotent}$}\\
$\Leftrightarrow$ 
\formula{\exists\mathcal{X}\in\alpha^{\underline{F}}(\mathcal{Q})\,{\rightarrow}\,\wp(\mathbb{L}^{\sharp})\mathrel{.}\Hpost{S}(\bigcup_{F\in\alpha^{\underline{F}}(\mathcal{Q})} \mathcal{X}_F) \subseteq \bigcup_{F\in \alpha^{\underline{F}}(\mathcal{Q})}\varphi^\sqsubseteq(F)\mathcal{Q} \quad \land \quad  \mathcal{P}\subseteq \bigcup_{F\in\alpha^{\underline{F}}(\mathcal{Q})}\mathcal{X}_F }
        \\ 
        \rightexplanation{$(\Rightarrow )$\quad let $\mathcal{X}_F = \mathcal{P}$ for all $F$.\quad $(\Leftarrow )$\quad $\Hpost{S}(\mathcal{P})$ is increasing.}\\
$\Leftrightarrow$
\formula{\exists\mathcal{X}\in\alpha^{\underline{F}}(\mathcal{Q})\,{\rightarrow}\,\wp(\mathbb{L}^{\sharp})\mathrel{.}\bigcup_{F\in\alpha^{\underline{F}}(\mathcal{Q})}\Hpost{S}( \mathcal{X}_F) \subseteq \bigcup_{F\in \alpha^{\underline{F}}(\mathcal{Q})}\varphi^\sqsubseteq(F)\mathcal{Q} \quad \land \quad  \mathcal{P}\subseteq \bigcup_{F\in\alpha^{\underline{F}}(\mathcal{Q})}\mathcal{X}_F }\\
\rightexplanation{$\Hpost{S}{}$ is join preserving} \\
$\Leftrightarrow$
\formula{\exists\mathcal{X}\in\alpha^{\underline{F}}(\mathcal{Q})\,{\rightarrow}\,\wp(\mathbb{L}^{\sharp})\mathrel{.}\forall F\in\alpha^{\underline{F}}(\mathcal{Q}) \mathrel{.} \Hpost{S}( \mathcal{X}_F) \subseteq ^\sqsubseteq(F)\mathcal{Q} \quad \land \quad  \mathcal{P}\subseteq \bigcup_{F\in\alpha^{\underline{F}}(\mathcal{Q})}\mathcal{X}_F}\\
\rightexplanation{consequence rule}\\
         $\Leftrightarrow$ 
         \formula{\exists\mathcal{X}\in\alpha^{\underline{F}}(\mathcal{Q})\,{\rightarrow}\,\wp(\mathbb{L}^{\sharp})\mathrel{.}\forall F\in\alpha^{\underline{F}}(\mathcal{Q}) \mathrel{.} \hyperlogicup{\mathcal{X}_F}{S}{\varphi^\sqsubseteq(F)\mathcal{Q}} \quad \land \quad  \mathcal{P}\subseteq \bigcup_{F\in\alpha^{\underline{F}}(\mathcal{Q})}\mathcal{X}_F}\\
         \lastrightexplanation{def.\ of $\hyperlogicup{\mathcal{P}}{S}{\mathcal{Q}}$}{\mbox{\qed}}
    \end{calculus}
    \let\qed\relax
\end{proof}

Now the problem is reduced to proving the premise $\hyperlogicup{\mathcal{X}_F}{S}{\varphi^\sqsubseteq(F)\mathcal{Q}}$. Interestingly, we are able to apply the rule for conjunctive abstraction to $\varphi^\sqsubseteq(F)\mathcal{Q}$.
\begin{lemma}
\label{lem:frontier-varrho-elimination-conjunction-abstraction}
    For arbitrary $\mathcal{P} \in \wp(\mathbb{L})$, and $F\in\alpha^{\underline{F}}(\mathcal{P})$, $\varphi^\sqsubseteq(F)\mathcal{P}\in \textbf{R}_{\pair{\alpha^\sqsubseteq}{\alpha^\curlyveeuparrow}}(\wp(\mathbb{L}))$.
\end{lemma}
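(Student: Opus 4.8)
The goal is to show that for any $\mathcal{P}\in\wp(\mathbb{L})$ and any lower frontier element $F\in\alpha^{\underline{F}}(\mathcal{P})$, the set $\varphi^\sqsubseteq(F)\mathcal{P}$ lands in the conjunctive abstraction domain $\textbf{R}_{\pair{\alpha^\sqsubseteq}{\alpha^\curlyveeuparrow}}(\wp(\mathbb{L}))$. By the characterization of the conjunctive domain in part (2) of lemma \ref{def:well-defined-conjunction-abstraction}, membership $\mathcal{Q}\in\textbf{R}_{\pair{\alpha^\sqsubseteq}{\alpha^\curlyveeuparrow}}(\wp(\mathbb{L}))$ is equivalent to the fixpoint-style identity $\alpha^\sqsubseteq(\mathcal{Q})\cap\alpha^\curlyveeuparrow(\mathcal{Q})=\mathcal{Q}$. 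So the plan is to set $\mathcal{Q}=\varphi^\sqsubseteq(F)\mathcal{P}$ and verify exactly this equality, with the nontrivial inclusion being $\alpha^\sqsubseteq(\mathcal{Q})\cap\alpha^\curlyveeuparrow(\mathcal{Q})\subseteq\mathcal{Q}$ (the reverse inclusion is free since both $\alpha^\sqsubseteq$ and $\alpha^\curlyveeuparrow$ are extensive, so their intersection is too).

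First I would unfold the definition $\varphi^\sqsubseteq(F)\mathcal{P}=\{P\in\mathcal{P}\mid F\sqsubseteq P\wedge \forall P'\in\mathbb{L}.\,F\sqsubseteq P'\sqsubseteq P\Rightarrow P'\in\mathcal{P}\}$. The key structural observation is that the constraint ``$F\sqsubseteq P$ together with the full interval $[F,P]$ lying in $\mathcal{P}$'' is simultaneously an order-ideal-type condition (downward from $P$ but only down to $F$) and an order-filter condition (everything is above $F$). This is precisely why the intersection of a $\sqsubseteq$-closed (order ideal) abstraction and a $\sqsupseteq$-closed/principal-filter ($\alpha^\curlyveeuparrow$) abstraction recovers the set. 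Concretely, I would compute $\alpha^\sqsubseteq(\mathcal{Q})$, which adds all elements below some $Q\in\mathcal{Q}$, and $\alpha^\curlyveeuparrow(\mathcal{Q})$, the principal filter above $\bigsqcap\mathcal{Q}$; since every element of $\mathcal{Q}$ dominates $F$, we have $F\sqsubseteq\bigsqcap\mathcal{Q}$, so the filter constrains candidates to lie above $\bigsqcap\mathcal{Q}$.

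The heart of the argument, and the main obstacle, is the $\subseteq$ direction: taking an arbitrary $R\in\alpha^\sqsubseteq(\mathcal{Q})\cap\alpha^\curlyveeuparrow(\mathcal{Q})$, I must produce a witness showing $R\in\mathcal{Q}=\varphi^\sqsubseteq(F)\mathcal{P}$. From $R\in\alpha^\sqsubseteq(\mathcal{Q})$ there is some $Q\in\mathcal{Q}$ with $R\sqsubseteq Q$; since $Q\in\varphi^\sqsubseteq(F)\mathcal{P}$ we know $F\sqsubseteq Q$ and the whole interval $[F,Q]$ sits inside $\mathcal{P}$. From $R\in\alpha^\curlyveeuparrow(\mathcal{Q})$ I would extract the lower-bound information $\bigsqcap\mathcal{Q}\sqsubseteq R$, and combine it with $F\sqsubseteq\bigsqcap\mathcal{Q}$ to get $F\sqsubseteq R$. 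Then $F\sqsubseteq R\sqsubseteq Q$ places $R$ inside the interval $[F,Q]\subseteq\mathcal{P}$, giving $R\in\mathcal{P}$. It remains to check the interval condition for $R$ itself: for any $P'$ with $F\sqsubseteq P'\sqsubseteq R$, transitivity gives $F\sqsubseteq P'\sqsubseteq R\sqsubseteq Q$, so $P'\in[F,Q]\subseteq\mathcal{P}$, confirming $R\in\varphi^\sqsubseteq(F)\mathcal{P}$.

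The delicate point I would scrutinize is whether the filter abstraction $\alpha^\curlyveeuparrow$ genuinely supplies the bound $F\sqsubseteq R$ rather than merely $\bigsqcap\mathcal{Q}\sqsubseteq R$ with a possibly strictly larger meet; here I would use that $F\in\mathcal{Q}$ would force $\bigsqcap\mathcal{Q}=F$, or, if $F$ itself need not belong to $\mathcal{Q}$, fall back on the uniform lower bound $F\sqsubseteq Q$ for every $Q\in\mathcal{Q}$ to conclude $F\sqsubseteq\bigsqcap\mathcal{Q}$. I would also handle the degenerate case $\mathcal{Q}=\emptyset$ separately, where both abstractions return $\emptyset$ (or require the convention fixing $\alpha^\curlyveeuparrow(\emptyset)$) so the identity holds trivially. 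Once the equality $\alpha^\sqsubseteq(\mathcal{Q})\cap\alpha^\curlyveeuparrow(\mathcal{Q})=\mathcal{Q}$ is established, lemma \ref{def:well-defined-conjunction-abstraction}(2) immediately yields $\varphi^\sqsubseteq(F)\mathcal{P}\in\textbf{R}_{\pair{\alpha^\sqsubseteq}{\alpha^\curlyveeuparrow}}(\wp(\mathbb{L}))$, completing the proof.
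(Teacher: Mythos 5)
Your proposal is correct and follows essentially the same route as the paper: both reduce the claim to the fixpoint equation $\alpha^\sqsubseteq(\mathcal{Q})\cap\alpha^\curlyveeuparrow(\mathcal{Q})=\mathcal{Q}$ for $\mathcal{Q}=\varphi^\sqsubseteq(F)\mathcal{P}$, exploit that $F\in\mathcal{Q}$ forces $\bigsqcap\mathcal{Q}=F$, and then use transitivity of $\sqsubseteq$ together with the interval condition on $[F,Q]$ to pull any $R$ with $F\sqsubseteq R\sqsubseteq Q$ back into $\mathcal{Q}$ (the paper writes this as a chain of set-comprehension equalities, you do it element-wise). The only cosmetic wrinkle is the citation: the reduction to the fixpoint equation is justified by idempotence of $\textbf{R}_{\pair{\alpha^\sqsubseteq}{\alpha^\curlyveeuparrow}}$ (lemma \ref{lem:well-defined-conjunction-abstraction-closure}) rather than by part (2) of lemma \ref{def:well-defined-conjunction-abstraction}, though for the membership direction the implication $\textbf{R}(\mathcal{Q})=\mathcal{Q}\Rightarrow\mathcal{Q}\in\textbf{R}(\wp(\mathbb{L}))$ is immediate anyway.
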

\begin{proof}
    By lemma $\ref{lem:well-defined-conjunction-abstraction-closure}$, it's  sufficient to prove that $\textbf{R}_{\pair{\alpha^\sqsubseteq}{\alpha^\curlyveeuparrow}}\circ \varphi^\sqsubseteq(F)\mathcal{P} = \varphi^\sqsubseteq(F)\mathcal{P}$
    \begin{calculus}[=\ \ ]
\formula{\textbf{R}_{\pair{\alpha^\sqsubseteq}{\alpha^\curlyveeuparrow}}\circ \varphi^\sqsubseteq(F)\mathcal{P}}\\
=
\formulaexplanation{\alpha^\sqsubseteq\circ \varphi^\sqsubseteq(F)\mathcal{P} \cap \alpha^\curlyveeuparrow\circ \varphi^\sqsubseteq(F)\mathcal{P}}{def.\ of $\textbf{R}_{\pair{\alpha^\sqsubseteq}{\alpha^\curlyveeuparrow}}$}\\
=
\formulaexplanation{\{P\in\mathbb{L}\mid \exists P'\in \varphi^\sqsubseteq(F)\mathcal{P}\mathrel{.} P\sqsubseteq P'\} \cap  \{P\in\mathbb{L}\mid P\sqsupseteq \bigsqcap \varphi^\sqsubseteq(F)\mathcal{P}\}}{def.\ of $\alpha^\sqsubseteq$ and $\alpha^\curlyveeuparrow$}\\
=
\formula{\{P\in\mathbb{L}\mid \exists P'\in\mathcal{P}\mathrel{.}(F\sqsubseteq P' \land \forall P''\in \mathbb{L} \mathrel{.}F\sqsubseteq P''\sqsubseteq P' \Rightarrow P''\in\mathcal{P}) \land P\sqsubseteq P'\} \cap  \{P\in\mathbb{L}\mid P\sqsupseteq \bigsqcap \varphi^\sqsubseteq(F)\mathcal{P}\}} \\ \rightexplanation{def.\ of $\varphi^\sqsubseteq$}\\
=
\formulaexplanation{\{P\in\mathbb{L}\mid \exists P'\in\mathcal{P}\mathrel{.}(F\sqsubseteq P' \land \forall P''\in \mathbb{L} \mathrel{.}F\sqsubseteq P''\sqsubseteq P' \Rightarrow P''\in\mathcal{P}) \land P\sqsubseteq P' \land F\sqsubseteq P\}}{$\bigsqcap \varphi^\sqsubseteq(F)\mathcal{P} = F$}\\
=
\formula{\{P\in\mathbb{L}\mid F\sqsubseteq P \land \forall P_1\in \mathbb{L} \mathrel{.}F\sqsubseteq P_1\sqsubseteq P \Rightarrow P_1\in\mathcal{P} \} = \varphi^\sqsubseteq(F)\mathcal{P} }\\
        \lastexplanation{$(\supseteq)$\quad let $P' = P$, then $F\sqsubseteq P' \land \forall P''\in \mathbb{L} \mathrel{.}F\sqsubseteq P''\sqsubseteq P' \Rightarrow P''\in\mathcal{P}$ holds by replacing $P''$ with $P_1$;\\
        $(\subseteq)$\quad For any $P_1$ such that $F\sqsubseteq P_1 \sqsubseteq P$ holds, we have $F\sqsubseteq P\sqsubseteq P'$ for some $P'$ so that  $F\sqsubseteq P_1 \sqsubseteq P \sqsubseteq P'$ also holds, which implies that $P_1\in \mathcal{P}$. By the premise, we have  $\forall P''\in \mathbb{L} \mathrel{.}F\sqsubseteq P''\sqsubseteq P' \Rightarrow P''\in\mathcal{P}$, we are allowed to instantiate $P''$ to $P_1$ and have $P_1\in\mathcal{P}$}{\mbox{\qed}}
    \end{calculus}
    \let\qed\relax
\end{proof}

\begin{lemma}
\label{eq:equivalent-order-ideal-rule}
We can equivalently rewrite the rule in $(\ref{eq:alpha:principal:rule})$ and $(\ref{sec:Order-Ideal-Abstraction})$ by the following.

\begin{eqntabular}{c}
    \frac{\forall P\in\mathcal{P}\mathrel{.} \hoarelogicdown{P}{S}{\bigsqcap\mathcal{Q}}}{\hyperlogicup{\mathcal{P}}{S}{\alpha^\sqsupseteq(\mathcal{Q})}},\; \alpha^\sqsubseteq(\mathcal{Q}) \in \alpha^\curlywedgedownarrow(\wp(\mathbb{L})) \quad
    \frac{\forall P\in\mathcal{P}\mathrel{.}\exists Q\in\mathcal{Q} \mathrel{.}\hoarelogicup{P}{S}{\mathcal{Q}}}{\hyperlogicup{\mathcal{P}}{S}{\alpha^\sqsubseteq(\mathcal{Q})}} \nonumber
\end{eqntabular}
\end{lemma}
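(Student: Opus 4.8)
The final statement (lemma \ref{eq:equivalent-order-ideal-rule}) asserts that the proof rules established earlier for the principal ideal abstraction (\ref{eq:alpha:principal:rule}) and the order ideal abstraction (section \ref{sec:Order-Ideal-Abstraction}) can be recast in two equivalent forms, one producing an upper logic triple with postcondition $\alpha^\sqsupseteq(\mathcal{Q})$ under a lower-execution-logic premise, the other producing $\alpha^\sqsubseteq(\mathcal{Q})$ under an upper-execution-logic premise. The plan is to prove each of the two rules by unfolding the definition (\ref{eq:def:abstract:logical:triples}) of the hyperlogic triple $\hyperlogicup{\mathcal{P}}{S}{\cdot}$ into $\textsf{Post}^\sharp(S)\mathcal{P}\subseteq(\cdot)$, expanding $\textsf{Post}^\sharp$ via (\ref{eq:def:Post}) to $\{\textsf{post}^\sharp(S)P\mid P\in\mathcal{P}\}$, and then matching the inclusion against the order-ideal or order-filter structure of the postcondition. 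This is exactly the pattern already carried out in the soundness/completeness proof of (\ref{eq:alpha:principal:rule}); the work here is a reformulation rather than a genuinely new argument.

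First I would treat the right-hand rule, whose postcondition is $\alpha^\sqsubseteq(\mathcal{Q})$. Starting from $\textsf{Post}^\sharp(S)\mathcal{P}\subseteq\alpha^\sqsubseteq(\mathcal{Q})$ and using (\ref{eq:GC:alpha:sqsubseteq}), this reads $\{\textsf{post}^\sharp(S)P\mid P\in\mathcal{P}\}\subseteq\{P'\mid\exists Q\in\mathcal{Q}\mathrel{.}P'\sqsubseteq Q\}$, i.e.\ $\forall P\in\mathcal{P}\mathrel{.}\exists Q\in\mathcal{Q}\mathrel{.}\textsf{post}^\sharp(S)P\sqsubseteq Q$. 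By lemma \ref{singleton:under=over} and the definition of the execution-logic triple $\overline{\{}P\overline{\}}\texttt{S}\overline{\{}Q\overline{\}}$ in section \ref{sec:Algebraic-Logics-Program-Execution-Properties}, this is precisely $\forall P\in\mathcal{P}\mathrel{.}\exists Q\in\mathcal{Q}\mathrel{.}\uphoarelogic{P}{S}{Q}$, the stated premise. Each step is a biconditional, so the equivalence with the rule follows directly, establishing both soundness and completeness of this reformulation.

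The left-hand rule is the $\subseteq$-dual and needs the side condition $\alpha^\sqsubseteq(\mathcal{Q})\in\alpha^{\curlywedgedownarrow}(\wp(\mathbb{L}))$, which (by lemma \ref{lem:alpha:principal} and the remark $\alpha^{\curlywedgedownarrow}(\mathcal{P})=\alpha^\sqsubseteq(\{\bigsqcup\mathcal{P}\})$ in section \ref{sec:Order-Ideal-Abstraction}) forces the filter $\alpha^\sqsupseteq(\mathcal{Q})$ to be a principal filter generated by $\bigsqcap\mathcal{Q}$. I would unfold $\textsf{Post}^\sharp(S)\mathcal{P}\subseteq\alpha^\sqsupseteq(\mathcal{Q})$ to $\forall P\in\mathcal{P}\mathrel{.}\bigsqcap\mathcal{Q}\sqsubseteq\textsf{post}^\sharp(S)P$, which by the definition of the lower/incorrectness execution triple $\underline{\{}P\underline{\}}\texttt{S}\underline{\{}\bigsqcap\mathcal{Q}\underline{\}}$ is exactly the premise $\forall P\in\mathcal{P}\mathrel{.}\downhoarelogic{P}{S}{\bigsqcap\mathcal{Q}}$. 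This mirrors the second half of the proof of (\ref{eq:alpha:principal:rule}) almost verbatim.

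The main obstacle is not any deep calculation but getting the side conditions and the dualities exactly right: I must verify that $\alpha^\sqsubseteq(\mathcal{Q})$ being a principal ideal is the correct hypothesis ensuring $\alpha^\sqsupseteq(\mathcal{Q})=\{P\mid\bigsqcap\mathcal{Q}\sqsubseteq P\}$ so that the collapse of the universally quantified inclusion into a single glb comparison is valid, and that the order-filter abstraction genuinely produces a principal filter rather than an arbitrary one. I also need to confirm that no monotonicity or limit-preservation hypothesis on $\mathbin{{\fatsemi}^{\sharp}}$ beyond what is already assumed for these abstractions is silently required; the earlier rule (\ref{eq:alpha:principal:rule}) used arbitrary-limit preservation of composition, and I expect the same standing assumption to carry these reformulations. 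Once those hypotheses are pinned down, the equivalences are chains of biconditionals and the lemma follows immediately.
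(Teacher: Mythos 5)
Your overall route is the same as the paper's: both rules are obtained by unfolding $\hyperlogicup{\mathcal{P}}{S}{\cdot}$ via (\ref{eq:def:abstract:logical:triples}) and (\ref{eq:def:Post}) into an elementwise condition on $\textsf{post}^\sharp(S)P$ and matching it against the order-ideal/filter structure of the postcondition. Your treatment of the right-hand rule coincides step for step with the paper's calculation ($\textsf{post}^\sharp(S)P\in\alpha^{\sqsubseteq}(\mathcal{Q})$ iff $\exists Q\in\mathcal{Q}\mathrel{.}\textsf{post}^\sharp(S)P\sqsubseteq Q$ iff $\exists Q\in\mathcal{Q}\mathrel{.}\hoarelogicup{P}{S}{Q}$); the appeal to lemma \ref{singleton:under=over} is superfluous, since the execution triple is defined directly by $\textsf{post}^\sharp(S)P\sqsubseteq Q$, and no limit-preservation hypothesis on $\fatsemi^{\sharp}$ is needed for either rule because no join over $\mathcal{P}$ is ever formed.

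The one genuine gap is in your justification of the left-hand rule's side condition. Collapsing $\forall P\in\mathcal{P}\mathrel{.}\exists Q\in\mathcal{Q}\mathrel{.}Q\sqsubseteq\textsf{post}^\sharp(S)P$ (which is what $\textsf{Post}^\sharp(S)\mathcal{P}\subseteq\alpha^{\sqsupseteq}(\mathcal{Q})$ literally says) into $\forall P\in\mathcal{P}\mathrel{.}\bigsqcap\mathcal{Q}\sqsubseteq\textsf{post}^\sharp(S)P$ requires $\alpha^{\sqsupseteq}(\mathcal{Q})$ to equal the \emph{principal filter} $\alpha^{\curlyveeuparrow}(\mathcal{Q})=\{P\mid\bigsqcap\mathcal{Q}\sqsubseteq P\}$; only the implication from $\exists Q$ to $\bigsqcap\mathcal{Q}$ is free. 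You claim this follows from the printed side condition $\alpha^{\sqsubseteq}(\mathcal{Q})\in\alpha^{\curlywedgedownarrow}(\wp(\mathbb{L}))$ via the identity $\alpha^{\curlywedgedownarrow}(\mathcal{Q})=\alpha^{\sqsubseteq}(\{\bigsqcup\mathcal{Q}\})$, but that identity constrains the \emph{join} of $\mathcal{Q}$, not its meet. Concretely, in the four-element lattice $\{\bot,a,b,\top\}$ with $a,b$ incomparable, take $\mathcal{Q}=\{a,b,\top\}$: then $\alpha^{\sqsubseteq}(\mathcal{Q})$ is the principal ideal of $\top$ (so the printed side condition holds), yet $\bigsqcap\mathcal{Q}=\bot$ and $\alpha^{\sqsupseteq}(\mathcal{Q})=\{a,b,\top\}$ is not the principal filter of $\bot$, so the premise $\hoarelogicdown{P}{S}{\bot}$ holds vacuously while the conclusion can fail — the rule under your reading of the side condition would be unsound. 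The hypothesis you actually need is $\alpha^{\sqsupseteq}(\mathcal{Q})\in\alpha^{\curlyveeuparrow}(\wp(\mathbb{L}))$; the condition as printed appears to be a dualization slip in the statement. The paper's own proof also glosses over this point: it reduces to the second rule of (\ref{eq:alpha:principal:rule}) applied to the postcondition $\alpha^{\sqsupseteq}(\mathcal{Q})$ and checks only $\bigsqcap\mathcal{Q}=\bigsqcap\alpha^{\sqsupseteq}(\mathcal{Q})$, silently assuming $\alpha^{\curlyveeuparrow}(\alpha^{\sqsupseteq}(\mathcal{Q}))=\alpha^{\sqsupseteq}(\mathcal{Q})$. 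So you are in good company, but you should state the correct hypothesis rather than derive it from the dual one.
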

\begin{proof}[Proof of lemma \ref{eq:equivalent-order-ideal-rule}]
    Let us prove the first one: by rule $(\ref{eq:alpha:principal:rule})$, it is sufficient to show that $\bigsqcap\mathcal{Q} = \bigsqcap\alpha^\sqsupseteq(\mathcal{Q})$. Since $\alpha^\sqsupseteq$ is extensive, then $\bigsqcap\mathcal{Q} \sqsupseteq \bigsqcap\alpha^\sqsupseteq(\mathcal{Q})$ holds trivially. For arbitrary $P$ in $\alpha^\sqsubseteq(\mathcal{Q})$, there exists $Q\in\mathcal{Q}$ such that $Q\sqsubseteq P$ and then $\bigsqcap\mathcal{Q}\sqsubseteq P$. Thus $\bigsqcap\mathcal{Q}$ is a lower bound of $\alpha^\sqsubseteq(\mathcal{Q})$ and is smaller than the greatest lower bound of it.
    Now let us prove the second one:
    \begin{calculus}[$\Leftrightarrow$\ \ ]
        \phantom{$\Leftrightarrow$} $\hyperlogicup{\mathcal{P}}{S}{\alpha^\sqsubseteq(\mathcal{Q})}$\\
        $\Leftrightarrow$ \formulaexplanation{\Hpost{S}(\mathcal{P})\subseteq \alpha^\sqsupseteq(\mathcal{Q}) }{def.\ of $\hyperlogicup{\mathcal{P}}{S}{\mathcal{Q}}$}\\
        $\Leftrightarrow$ \formulaexplanation{\forall P \in \mathcal{P}\mathrel{.} \postSemantics{S}(P) \in \alpha^\sqsupseteq(\mathcal{Q})}{def.\ of $\subseteq$}\\
        $\Leftrightarrow$ \formulaexplanation{\forall P \in \mathcal{P}\mathrel{.} \exists Q\in\mathcal{Q}\mathrel{.} \postSemantics{S}(P)\sqsubseteq Q }{def.\ of $\alpha^\sqsupseteq$}\\
        $\Leftrightarrow$ \formulaexplanation{\forall P \in \mathcal{P}\mathrel{.} \exists Q\in\mathcal{Q}\mathrel{.} \hoarelogicup{P}{S}{Q} }{def.\ of $\hoarelogicup{P}{S}{Q}$}
    \end{calculus}
\end{proof}

Lemma ${\ref{lem:frontier-varrho-elimination-conjunction-abstraction}}$  implies that we can simplify the proof rule $(\ref{eq:rule:frontier-varrho-elimination})$ by further applying $(\ref{eq:rule:well-defined-conjunction-abstraction})$, and then Lemma \ref{eq:equivalent-order-ideal-rule}, where its hypothesis is implied by ${\ref{lem:conjunction-abstraction-more-general}}$. Since we have proved that all the intermediate rules are sound and complete, rule $(\ref{eq:rule:bottom-frontier-varrho-elimination})$ is sound and complete for all postconditions $\mathcal{Q} \in \varrho^{\sqsubseteq\underline{F}}(\wp(\mathbb{L})$.

\begin{eqntabular*}{l}
\fontsize{8pt}{11pt}\selectfont
    \dfrac{
    \begin{array}{l}
    {\displaystyle\exists\mathcal{X}\in\alpha^{\underline{F}}(\mathcal{Q})\,{\rightarrow}\,\wp(\mathbb{L}^{\sharp})\,{.}\,\mathcal{P}\subseteq \bigcup_{F\in\alpha^{\underline{F}}(\mathcal{Q})}\mathcal{X}_F}\\
        {\forall F\in\alpha^{\underline{F}}(\mathcal{Q})\mathrel{.}  
        \dfrac{ 
            \dfrac{\forall P \in\mathcal{X}_F\mathrel{.}\exists Q\in \varphi^\sqsubseteq(F)\mathcal{Q} \mathrel{.} \hoarelogicup{P}{S}{Q} }{\hyperlogicup{\mathcal{X}_F}{S}{ \alpha^{\sqsubseteq}\comp\varphi^\sqsubseteq(F)\mathcal{Q}}}
            (\ref{eq:equivalent-order-ideal-rule}) \quad 
            \dfrac{\forall P\in\mathcal{X}_F\mathrel{.}\hoarelogicdown{P}{S}{\bigsqcap  \varphi^\sqsubseteq(F)\mathcal{Q}} }{\hyperlogicup{\mathcal{X}_F}{S}{\alpha^\sqsupseteq\circ \varphi^\sqsubseteq(F)\mathcal{Q}}}(\ref{eq:equivalent-order-ideal-rule}) }
        {\hyperlogicup{\mathcal{X}_F}{S} {\varphi^\sqsubseteq(F)\mathcal{Q}}} }(\ref{eq:rule:well-defined-conjunction-abstraction})
    \end{array}
    }{\hyperlogicup{\mathcal{P}}{S}{\mathcal{Q}}}\raisebox{-0.75em}[0pt][0pt]{\llap{(\ref{eq:rule:frontier-varrho-elimination})}}  \nonumber
\end{eqntabular*}

Removing the intermediate steps, the rule becomes
\bgroup\small\begin{eqntabular}{l}
    \frac{
        {\displaystyle\exists\mathcal{X}\in\alpha^{\underline{F}}(\mathcal{Q})\rightarrow\wp(\mathbb{L}^{\sharp})\,{.}\,\mathcal{P}\subseteq \!\!\!\!\!\!\bigcup_{F\in\alpha^{\underline{F}}(\mathcal{Q})}\!\!\!\!\!\!\mathcal{X}_F
        \wedge
            (\forall F\in\alpha^{\underline{F}}(\mathcal{Q})\,{.}\,\forall P \in\mathcal{X}_F\,{.}\,\exists Q\in \varphi^\sqsubseteq(F)\mathcal{Q} \,{.}\, \overline{\{}P\overline{\}}{S}\overline{\{}Q\overline{\}},\ \ 
           \textstyle\underline{\{}P\underline{\}}{S}\underline{\{}F\underline{\}})
        }
    }{\hyperlogicup{\mathcal{P}}{S}{\mathcal{Q}}}  \nonumber \\ \renumber{\llap{\raisebox{1em}[0pt][0pt]{(\ref{eq:rule:bottom-frontier-varrho-elimination})\ }}}
\end{eqntabular}\egroup
\end{toappendix}
\ifshort An example \proofinapx\ is provided in the appendix.\vspace*{-10pt}\fi
\begin{toappendix}
\begin{example}[Proof reduction for frontier $\varrho$-elimination abstraction: bounded output] Consider the reachability without break and nontermination. Let the hyperproperties $\mathcal{P}\triangleq \{P\in\wp(\Sigma^{\*}) \mid  \exists \sigma_{max}, \sigma_{min}\in P\mathrel{.} \forall \sigma\in P\mathrel{.} \sigma_{min}(x) \leq \sigma(x) \leq \sigma_{max}(x)\}$, and $\mathcal{Q}\triangleq \{P\in\wp(\Sigma^{\*}) \mid  \exists \sigma_{max}\in P\mathrel{.} \forall \sigma\in P\mathrel{.}\sigma(x) \leq \sigma_{max}(x)\}$, and we want to prove $\hyperlogicup{\mathcal{P}}{S}{\mathcal{Q}}$ where $\texttt{S}\triangleq \texttt{if(x>0) x=x else x=-x}$ using the rule ($\ref{eq:rule:bottom-frontier-varrho-elimination}$). In this case $\alpha^{\underline{F}}(\mathcal{Q})= \{\{\sigma\}\mid  \sigma\in \Sigma\}$ is a set of singleton states.
We let the partition variant $\mathcal{X}$ be
\bgroup\abovedisplayskip0pt\belowdisplayskip3pt\begin{eqntabular}{rcl}
    \mathcal{X} &\triangleq& \LAMBDA{\{\sigma\}}{\mathcal{X}_{\{\sigma\}} \cup \bar{\mathcal{X}}_{\{\sigma\}}}\nonumber\\ 
    \text{where } 
     \mathcal{X}_{\{\sigma\}} &\triangleq&\{ P\in\wp(\Sigma^{\*})\mid \sigma\in P\land \forall \sigma'\in P \mathrel{.} \sigma'(x) \leq \sigma(x) \land -\sigma'(x) \leq \sigma(x) \} 
     \nonumber\\
    \text{and } \bar{\mathcal{X}}_{\{\sigma\}} &\triangleq&\{ P\in\wp(\Sigma^{\*})\mid \bar{\sigma}\in P\land \forall \sigma'\in P \mathrel{.} \sigma'(x) \leq \sigma(x) \land -\sigma'(x)\leq \sigma(x) \} \nonumber
\end{eqntabular}\egroup
where $\bar{\sigma}$ is a shorthand for $\sigma[\texttt{x}\leftarrow-\sigma(\texttt{x})]$. Now let us prove the case of $\hyperlogicup{\mathcal{X}_{\{\sigma\}}}{S}{\varphi^\sqsubseteq(F)\mathcal{Q}}$ for arbitrary $\sigma$, as the case for $\bar{\mathcal{X}}_{\{\sigma\}}$ is symmetrical and they can be combined by the consequence rules. Then the rule application proof steps are the following (for an arbitrary $P\in\mathcal{X}_{\{\sigma\}}$)
\bgroup\small\begin{eqntabular}{c}
    \fontsize{9pt}{11pt}\selectfont\everymath{\scriptstyle}\everydisplay{\scriptstyle}
  \dfrac{\text{let }Q = \{\sigma'\in \Sigma\mid\sigma'(x) \leq \sigma(x) \} \mathrel{.}  
    \dfrac{
        \dfrac{\text{by def of $Q$}}{\{\sigma\}\in Q } \: \dfrac{\sigma''\in Q' \text{ implies } \sigma''\in Q}{\forall Q'\mathrel{.} \{\sigma\}\subseteq Q'\subseteq Q \Rightarrow Q'\in \mathcal{X}_{\{\sigma\}}}}{Q\in\varphi^\sqsubseteq(F)\mathcal{Q}}\quad
    \dfrac{ \dfrac{\text{by def of $\mathcal{X}_{\sigma}$ and Q}}{\forall \sigma''\in P\mathrel{.} \sigma''(x)\leq\sigma(\sigma)} }
    {\hoarelogicup{P}{S}{Q}} }
    {\exists Q\in \varphi^\sqsubseteq(F)\mathcal{Q} \quad  \mathrel{.} \hoarelogicup{P}{S}{Q}} 
    \nonumber \\ \textup{and} \nonumber\\ 
    \dfrac{\dfrac{\text{by def of of $\mathcal{X}_{\sigma}$ where $\sigma'=\sigma$}}{\forall \sigma'\in F=\{\sigma\}\mathrel{.}\sigma'\in \mathcal{X}_{\{\sigma\}}}}{\hoarelogicdown{P}{S}{F}}\nonumber
\end{eqntabular}\egroup
 
 Now it only remains to show that $\displaystyle\mathcal{P}\subseteq\bigcup_{\sigma\in\Sigma}\mathcal{X}_{\{\sigma\}}\cup \bar{\mathcal{X}}_{\{\sigma\}}$. For arbitrary $P\in\mathcal{P}$, there exists $\sigma_{min}$ and $\sigma_{max}$ in $P$ where $\sigma_{min}(x)\leq \sigma'(x) \sigma_{max}(x)$ for all $\sigma'$ in P with two possible cases: 
 \begin{enumerate}[leftmargin=*]
 \item $|\sigma_{min}(x)|\leq|\sigma_{max}(x)|$: then we know that $P$ is in $\mathcal{X}_{\{\sigma_{max}\}}$ by definition.
  \item $|\sigma_{min}(x)|>|\sigma_{max}(x)|$: then $\sigma_{min}(x)$ must be negative and $\sigma_{max}(x)< - \sigma_{min}(x)$. In this case, $P$ would be in $\bar{\mathcal{X}}_{\bar{\sigma}_{min}}$ because of the following: $\bar{\bar{\sigma}} = \sigma$ has implied that $\bar{\bar{\sigma}} \in P$. Moreover, for arbitrary $\sigma'$ in $P$, $\sigma'(x)\leq \sigma_{max}(x)< -\sigma_{min}(x) = \bar{\sigma}_{min}(x)$, so as $-\sigma'(x)\leq\sigma_{min}(x)$ holds as $\sigma_{min}(x)$ is the lower bound.\qef
\end{enumerate}\let\qef\relax
\end{example}
\end{toappendix}

\section{Hierarchy of hyperproperties abstractions}\label{sect:ComparingAbstractions}
To compare these abstractions, we first show that chain limit order ideal abstract properties have an equivalent frontier order ideal representation\proofinapx.
\bgroup\abovedisplayskip3pt\belowdisplayskip3pt\begin{eqntabular}{c}
\pair{\alpha^{{\sqsubseteq}\overline{F}}(\wp(\mathbb{L}))}{\subseteq}\galoiS{{\maccent{\alpha}{\ast}}^{{\sqsubseteq}{\uparrow}}}{\mathbb{1}}\pair{{\maccent{\alpha}{\ast}}^{{\sqsubseteq}{\uparrow}}(\wp(\mathbb{L}))}{\subseteq}
\label{eq:GC:alpha:sqsubseteq:oF:alpha:ast:sqsubseteq:uparrow}
\end{eqntabular}\egroup
\begin{toappendix}
\begin{proof}[Proof of (\ref{eq:GC:alpha:sqsubseteq:oF:alpha:ast:sqsubseteq:uparrow})]
Let $\mathcal{P}\in{\alpha^{{\sqsubseteq}\overline{F}}(\wp(\mathbb{L}))}$ so that there exists 
 $\mathcal{P}'$ such that $\mathcal{P}={\alpha^{{\sqsubseteq}\overline{F}}(\mathcal{P}')}$. Let us consider
\begin{calculus}[=\ \ ]
\formula{\alpha^{{\sqsubseteq}{\uparrow}}(\mathcal{P})}\\
=
\formulaexplanation{{\alpha^{{\sqsubseteq}{\uparrow}}}({\alpha^{{\sqsubseteq}\overline{F}}(\mathcal{P}')})}{def.\ $\mathcal{P}={\alpha^{{\sqsubseteq}\overline{F}}(\mathcal{P}')}$}\\
=
\formulaexplanation{\alpha^{{\sqsubseteq}}(\alpha^{\uparrow}(\alpha^{{\sqsubseteq}}(\alpha^{\overline{F}}(\mathcal{P}'))))}{def.\ (\ref{eq:def:hat:alpha:sqsubseteq:uparrow}
) of $\alpha^{{\sqsubseteq}{\uparrow}}$ and dual def.\ (\ref{eq:def:alpha:sqsubseteq:underline:F}) of $\alpha^{{\sqsubseteq}\overline{F}}$ and composition $\comp$}\\
=
\formulaexplanation{\{P'\in \mathbb{L}\mid \exists P\in\alpha^{\uparrow}(\alpha^{{\sqsubseteq}}(\alpha^{\overline{F}}(\mathcal{P}')))\mathrel{.}P'\sqsubseteq P\}}{def.\ (\ref{eq:GC:alpha:sqsubseteq}) of $\alpha^{\sqsubseteq}$}\\
=
\formulaexplanation{\{P'\in \mathbb{L}\mid \exists P\in\{\bigsqcup_{i\in\mathbb{N}}P_i\mid\pair{P_i}{i\in\mathbb{N}}\in\alpha^{{\sqsubseteq}}(\alpha^{\overline{F}}(\mathcal{P}'))\textrm{\ is an increasing chain with existing lub}\}\mathrel{.}P'\sqsubseteq P\}}{dual def.\ (\ref{eq:def:alpha:downarrow}) of $\alpha^{\uparrow}$}\\
=
\formulaexplanation{\{P'\in \mathbb{L}\mid \exists \textrm{\ an increasing chain $\pair{P_i}{i\in\mathbb{N}}$ with existing lub}\mathrel{.}\forall i\in\mathbb{N}\mathrel{.}{P_i}\in\alpha^{{\sqsubseteq}}(\alpha^{\overline{F}}(\mathcal{P}'))
\wedge P'\sqsubseteq\bigsqcup_{i\in\mathbb{N}}P_i\}}{def.\ $\in$}\\
=
\formulaexplanation{\{P'\in \mathbb{L}\mid \exists \textrm{\ an increasing chain $\pair{P_i}{i\in\mathbb{N}}$ with existing lub}\mathrel{.}\forall i\in\mathbb{N}\mathrel{.}{P_i}\in\{P'\in \mathbb{L}\mid \exists P''\in\alpha^{\overline{F}}(\mathcal{P}')\mathrel{.}P'\sqsubseteq P''\}
\wedge P'\sqsubseteq\bigsqcup_{i\in\mathbb{N}}P_i\}}{def.\ (\ref{eq:GC:alpha:sqsubseteq}) of $\alpha^{\sqsubseteq}$}\\
=
\formulaexplanation{\{P'\in \mathbb{L}\mid \exists \textrm{\ an increasing chain $\pair{P_i}{i\in\mathbb{N}}$ with existing lub}\mathrel{.}\forall i\in\mathbb{N}\mathrel{.}\exists P''\in\alpha^{\overline{F}}(\mathcal{P}')\mathrel{.}P_i\sqsubseteq P''
\wedge P'\sqsubseteq\bigsqcup_{i\in\mathbb{N}}P_i\}}{def.\ $\in$}\\
=
\formula{\{P'\in \mathbb{L}\mid \exists P''\in\alpha^{\overline{F}}(\mathcal{P}')\mathrel{.} P'\sqsubseteq P''\}}\\
\explanation{($\Rightarrow$) $\forall i\in\mathbb{N}\mathrel{.}P_i\sqsubseteq P''$ implies $\bigsqcup_{i\in\mathbb{N}}P_i\sqsubseteq P''$ by def.\ existing lub, so that $P'\sqsubseteq P''$ by transitivity;
\\
($\Leftarrow$) Conversely choose the constant hence increasing chain $\pair{P'}{i\in\mathbb{N}}$ with existing lub $P'$ so that $\forall i\in\mathbb{N}\mathrel{.}P_i=P\sqsubseteq P''
\wedge P'\sqsubseteq\bigsqcup_{i\in\mathbb{N}}P_i=P$}\\
=
\formulaexplanation{\alpha^{{\sqsubseteq}}(\alpha^{\overline{F}}(\mathcal{P}'))}{def.\ (\ref{eq:GC:alpha:sqsubseteq}) of $\alpha^{\sqsubseteq}$}\\
=
\formulaexplanation{\mathcal{P}}{def.\ $\mathcal{P}$}
\end{calculus}

\smallskip

\noindent It follows by the fixpoint definition (\ref{eq:def:hat:alpha:sqsubseteq:uparrow}) of ${\maccent{\alpha}{\ast}}^{{\sqsubseteq}{\uparrow}}(\mathcal{P})\triangleq\Lfp{\sqsubseteq}\LAMBDA{X}\mathcal{P}\cup\alpha^{{\sqsubseteq}{\uparrow}}(X)$ that ${\maccent{\alpha}{\ast}}^{{\sqsubseteq}{\uparrow}}(\mathcal{P})=\mathcal{P}$ so that the Galois retraction (\ref{eq:GC:alpha:sqsubseteq:oF:alpha:ast:sqsubseteq:uparrow}) follows immediately.
\end{proof}
\end{toappendix}
\begin{figure}[h]
     \centering
     \includegraphics[width=0.90\textwidth]{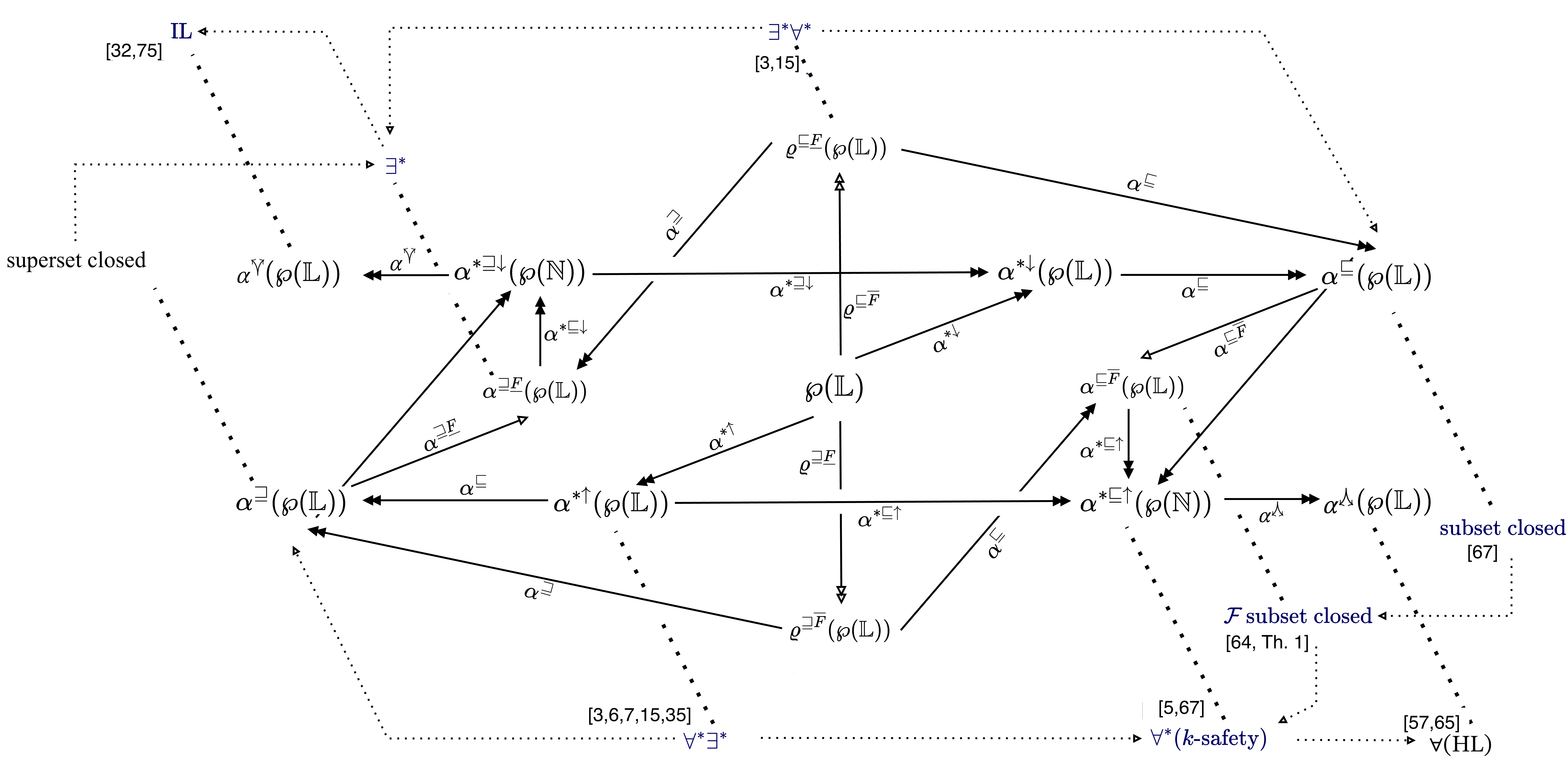}\vspace*{-1em}
     \caption{The hierarchy of hyperproperties by abstraction. The arrow is interpreted as ``more general than'' where the double arrow represents Galois surjection. Dotted line  indicated the hyperproperties subsumed by our abstract in the related works. \protect\proofinapx\label{fig:abstraction}} 
\vspace*{-15pt}
\end{figure}
Figure \ref{fig:abstraction} shows a lattice of hyperproperties derived by our abstractions as well as the related hyperproperties that they subsume.

\begin{toappendix}
\begin{proof}[Proof of figure.\ref{fig:abstraction}]
By (\ref{eq:GC:alpha:sqsubseteq:oF:alpha:ast:sqsubseteq:uparrow}), 
if $\mathcal{P}\in{{\maccent{\alpha}{\ast}}^{{\sqsubseteq}{\uparrow}}(\wp(\mathbb{L}))}$ then
$\mathbb{1}(\mathcal{P})=\mathcal{P}\in{\alpha^{{\sqsubseteq}\overline{F}}(\wp(\mathbb{L}))}$ proving
${\maccent{\alpha}{\ast}}^{{\sqsubseteq}{\uparrow}}(\wp(\mathbb{L}))$ $\subseteq$ ${\alpha^{{\sqsubseteq}\overline{F}}(\wp(\mathbb{L}))}$. 

If $\mathcal{P}\in{\alpha^{{\sqsubseteq}\overline{F}}(\wp(\mathbb{L}))}$ then $\exists\mathcal{Q}\in\wp(\mathbb{L})\mathrel{.}\mathcal{P}=\alpha^{{\sqsubseteq}\overline{F}}(\mathcal{Q})$ so that, by idempotency in (\ref{eq:GC:alpha:sqsubseteq}), $\alpha^{{\sqsubseteq}}(\alpha^{{\sqsubseteq}\overline{F}}(\mathcal{Q}))=\alpha^{{\sqsubseteq}\overline{F}}(\mathcal{Q})=\mathcal{P}$,
proving ${\alpha^{{\sqsubseteq}\overline{F}}(\wp(\mathbb{L}))}\subseteq \alpha^{\sqsubseteq}(\wp(\mathbb{L}))$.

For arbitrary non-empty $\mathcal{P}$ in $\alpha^{\curlywedgedownarrow}(\wp(\mathbb{L}))$ and consider then $\pair{\alpha^{\curlywedgedownarrow}(\wp(\mathbb{L}))}{\sqsubseteq}$ is a sublattice which is complete, meaning that it is chain-closed. Thus, $\alpha^{{\sqsubseteq}\overline{F}}(\wp(\mathbb{L}))(\mathcal{P}) = \mathcal{P}$, and so $\alpha^{{\sqsubseteq}\overline{F}}(\wp(\mathbb{L}))\subseteq \alpha^{\curlywedgedownarrow}(\wp(\mathbb{L}))$

For arbitrary non-empty $\mathcal{P}$ in $\alpha^{\sqsupseteq\underline{F}}$, $\mathcal{P} = \bigcup_{F\in {\alpha}^{\underline{F}}(\mathcal{P})}\{P\in\mathbb{L}\mathrel{.}F\sqsubseteq P\}$ by lemma $\ref{lem:FrontierCharacterizationOrderIdealAbstraction}$, then for arbitrary $P$ in $\mathcal{P}$, $F\sqsubseteq P$ for some $F$ in $\alpha^{\underline{F}}(\mathcal{P})$. $P$ would be in $\varphi^\sqsubseteq(F)\mathcal{P}$ as for all $P'$ such that $F\sqsubseteq P'\sqsubseteq P$, $P'\in \{P'\in\mathbb{L}\mid F\sqsubseteq P'\}$ trivially. This implies that $\mathcal{P}\sqsubseteq\varrho^{\sqsubseteq\underline{F}}(\mathcal{P})$, meaning that $\mathcal{P}=\varrho^{\sqsubseteq\underline{F}}(\mathcal{P})$ as the inverse holds by the fact that $\varrho^{\sqsubseteq\underline{F}}$ is reducive. This proves that $\alpha^{\sqsupseteq\underline{F}}(\wp(\mathbb{L}))\subseteq \varrho^{\sqsubseteq\underline{F}}(\wp(\mathbb{L}))$.

For arbitrary non-empty $\mathcal{P}$ in $\alpha^\sqsubseteq(\wp(\mathbb{L}))$, $\alpha^{\underline{F}} = \{\bot\}$. Then $\varrho^{\sqsupseteq\underline{F}}(\mathcal{P}) = \varrho^{\sqsupseteq}(\mathcal{P}) = \mathcal{P}$. The last equation holds because $\varrho^{\sqsupseteq}$ is closure operator on  $\alpha^\sqsubseteq(\wp(\mathbb{L}))$. This proves that $\alpha^{\subseteq}(\wp(\mathbb{L}))\sqsubseteq \varrho^{\sqsubseteq\underline{F}}(\wp(\mathbb{L}))$.
\end{proof}
\end{toappendix}
     
\ifshort\vspace*{-5pt}\fi
\section{Related Work}\label{sect:RelatedWork}
Algebraic semantics \cite{DBLP:journals/jacm/GoguenTWW77,DBLP:books/daglib/0084874,DBLP:conf/concur/Hoare14,DBLP:conf/RelMiCS/MollerOH21} is rooted in the previous concept of program schemes \cite{DBLP:journals/jcss/Goguen74,DBLP:conf/mfcs/GoguenM77,DBLP:conf/ershov/Ershov79,DBLP:conf/ifip/Nivat80,DBLP:journals/toplas/BroyWP87}. The idea of handling logics algebraically using an abstract domain goes back to \cite[section 5]{DBLP:conf/oopsla/CousotCLB12}. It requires a distinction between computational and logical orderings which first appeared in strictness analysis (using Scott partial order for computational ordering and inclusion for logical ordering \cite{DBLP:phd/ethos/Mycroft82}). It is not uncommon in abstract interpretation since then. The calculational methodology that we have used is based on \cite{DBLP:journals/pacmpl/Cousot24}. Following the introduction of trace hyperproperties \cite{DBLP:journals/jcs/ClarksonS10}, most semantics \cite{DBLP:conf/popl/AssafNSTT17,DBLP:conf/sas/MastroeniP17} and verification methods for semantic (hyper) properties have been on subclasses of hyperproperties \cite{DBLP:conf/atal/BeutnerFF024,DBLP:conf/tacas/Beutner24,DBLP:journals/lmcs/BeutnerF23,DBLP:conf/cav/BeutnerFFM23,DBLP:conf/post/ClarksonFKMRS14,DBLP:conf/cav/BeutnerF22,DBLP:conf/cav/CoenenFST19,DBLP:journals/afp/Dardinier23a,DBLP:conf/sas/MastroeniP18,DBLP:conf/pldi/DardinierM24}, further reviewed in extreme great detail in \cite[section\ 6]{DBLP:conf/pldi/DardinierM24}.

\section{Conclusion and Future Work}\label{sect:Conclusion}
Transformational (hyper) logics have traditionally been based on transformers themselves equivalent to an
operational semantics. When considering nontermination, other semantics like denotational semantics are
relevant, but the corresponding logics are in a separate world \cite{DBLP:journals/apal/Abramsky91,DBLP:journals/tcs/Heckmann93}. 

In an attempt to design (hyper) logics valid for various (abstract) semantics, we have defined an algebraic semantics (which can be instantiated to operational, denotational, or relational semantics, and is also useful for  deductive methods and static analysis). 

We have designed, by calculus, a structural fixpoint collecting semantics \textsf{post} for execution properties (e.g.\ sets of execution traces), its hypercollecting semantics \textsf{Post} for semantic properties (e.g sets of sets of traces), and the various over or under approximation logics corresponding to these transformers for correctness and incorrectness (\hyperlink{PARTIII}{part III} is for over approximation only, but the main reason to use the under approximation logic is to disprove over approximations which is expressible as $\neg{\,\overline{\llbrace}\,\mathcal{P}\,\overline{\rrbrace}\,\texttt{S}\,\overline{\llbrace}\,\mathcal{Q}\,\overline{\rrbrace}}$ $\Leftrightarrow$ $ \exists \emptyset\subsetneq\mathcal{P}'\subseteq\mathcal{P}\mathrel{.}{\,\overline{\llbrace}\,\mathcal{P}'\,\overline{\rrbrace}\,\texttt{S}\,\overline{\llbrace}\,\neg\mathcal{Q}\,\overline{\rrbrace}}$ \proofinapx).
\begin{toappendix}
\begin{proof}[Proof of $\neg{\,\overline{\llbrace}\,\mathcal{P}\,\overline{\rrbrace}\,\texttt{S}\,\overline{\llbrace}\,\mathcal{Q}\,\overline{\rrbrace}}$ $\Leftrightarrow$ $ \exists \emptyset\subsetneq\mathcal{P}'\subseteq\mathcal{P}\mathrel{.}{\,\overline{\llbrace}\,\mathcal{P}'\,\overline{\rrbrace}\,\texttt{S}\,\overline{\llbrace}\,\neg\mathcal{Q}\,\overline{\rrbrace}}$]
\begin{calculus}[$\Leftrightarrow$\ \ ]
\formula{\neg{\,\overline{\llbrace}\,\mathcal{P}\,\overline{\rrbrace}\,\texttt{S}\,\overline{\llbrace}\,\mathcal{Q}\,\overline{\rrbrace}}}\\
$\Leftrightarrow$
\formulaexplanation{\neg(\textsf{Post}^\sharp\sqb{\texttt{S}}^\sharp\mathcal{P}\subseteq \mathcal{Q})}{(\ref{eq:def:abstract:logical:triples})}\\
$\Leftrightarrow$
\formulaexplanation{\neg(\{\textsf{post}^\sharp(S)P\mid P\in\mathcal{P}\}\subseteq \mathcal{Q})}{(\ref{eq:def:Post})}\\
$\Leftrightarrow$
\formulaexplanation{\neg(\forall P\in\mathcal{P}\mathrel{.}\textsf{post}^\sharp(S)P\in\mathcal{Q})}{def.\ $\subseteq$}\\
$\Leftrightarrow$
\formulaexplanation{\exists P\in\mathcal{P}\mathrel{.}\textsf{post}^\sharp(S)P\in\neg\mathcal{Q}}{def.\ negation}\\
$\Leftrightarrow$
\formula{\exists \emptyset\subsetneq\mathcal{P}'\subseteq\mathcal{P}\mathrel{.}\{\textsf{post}^\sharp(S)P'\mid P'\in\mathcal{P}'\}\subseteq\neg\mathcal{Q}}\\
\explanation{($\Rightarrow$)\quad choose $\mathcal{P}'=\{P\}$ and def.\ $\subseteq$;\\
($\Leftarrow$)\quad since $\emptyset\subsetneq\mathcal{P}'\subseteq\mathcal{P}$ there exists a $P\in\mathcal{P}'$ such that $P\in \mathcal{P}$ and $\{\textsf{post}^\sharp(S)P'\mid P'\in\{P\}\}$
$\subseteq$
$\{\textsf{post}^\sharp(S)P'\mid P'\in\mathcal{P}'\}$
$\subseteq$
$\neg\mathcal{Q}$ proving $\textsf{post}^\sharp(S)P\in\neg\mathcal{Q}$.
}\\
$\Leftrightarrow$
\formulaexplanation{\exists \emptyset\subsetneq\mathcal{P}'\subseteq\mathcal{P}\mathrel{.}\textsf{Post}^\sharp\sqb{\texttt{S}}^\sharp\mathcal{P}'\subseteq \mathcal{Q}}{(\ref{eq:def:Post})}\\
$\Leftrightarrow$
\lastformulaexplanation{\exists \emptyset\subsetneq\mathcal{P}'\subseteq\mathcal{P}\mathrel{.}\overline{\llbrace}\,\mathcal{P}'\,\overline{\rrbrace}\,\texttt{S}\,\overline{\llbrace}\,\neg\mathcal{Q}\,\overline{\rrbrace}}{(\ref{eq:def:abstract:logical:triples})}{\mbox{\qed}}
\end{calculus}
\let\qed\relax
\end{proof}
\end{toappendix}

Since, and contrary to classic logics, proofs of general semantic (hyper) properties relative to a program semantics requires the exact characterization of this semantics in the proof, an extreme complication, we have considered abstractions of the semantic properties for which this constraint can be relaxed. This has yielded to new sound and complete simplified proof rules, including for algebraic generalizations of forall-forall, forall-exists, and exists-forall  semantic (hyper) properties.

The verification of semantic (hyper) properties is still in its infancy and far from reaching the simplicity observed in the verification of execution properties. Several compromises will be needed maybe by relaxing implication (e.g.\ using Egli-Milner order instead of inclusion), considering abstract properties (for classes of properties of practical interest), and possibly by preserving soundness but renouncing to completeness. However, in full generality, the sound and complete proof methods introduced in this paper, will ultimately be, up to equivalence, the only one applicable.
%

\ifshort 
\section*{Data Availability Statement}
\hypertarget{DataAvailabilityStatement}{The full version of this article is available with its appendix as auxiliary material and in a single file on Zenodo with clickable hyper references to the appendix \href{https://doi.org/10.5281/zenodo.14173477}{https://doi.org/10.5281/zenodo.14173477}.}
\fi
\bibliographystyle{ACM-Reference-Format}
\bibliography{bib}
\end{document}
\endinput